\documentclass{article}
\topmargin -1.5cm        % read Lamport p.163
 \oddsidemargin -0.04cm   % read Lamport p.163
 \evensidemargin -0.04cm  % same as oddsidemargin but for left-hand pages
 \textwidth 16.59cm
 \textheight 21.94cm 
\usepackage[modulo]{lineno}
\usepackage{ifpdf}
\usepackage{graphicx}
\usepackage{times}
\usepackage{amssymb}
\usepackage{amsmath}
\usepackage{latexsym}
\usepackage{proof}
\usepackage{epsfig}
\usepackage{verbatim}	% improves the standard verbatim environment
\usepackage{float}	% improves floating environments
\usepackage[usenames]{color} %usage \textcolor{nameofhecolor}{texttocolor}
\usepackage{stmaryrd}

\ifpdf
\usepackage[%
  pdftitle={Instructions for use of the document class
    elsart},%
  pdfauthor={Simon Pepping},%
  pdfsubject={The preprint document class elsart},%
  pdfkeywords={instructions for use, elsart, document class},%
  pdfstartview=FitH,%
  bookmarks=true,%
  bookmarksopen=true,%
  breaklinks=true,%
  colorlinks=true,%
  linkcolor=blue,anchorcolor=blue,%
  citecolor=blue,filecolor=blue,%
  menucolor=blue,pagecolor=blue,%
  urlcolor=blue]{hyperref}
\else
\usepackage[%
  breaklinks=true,%
  colorlinks=true,%
  linkcolor=blue,anchorcolor=blue,%
  citecolor=blue,filecolor=blue,%
  menucolor=blue,pagecolor=blue,%
  urlcolor=blue]{hyperref}
\fi

\makeatletter
\def\elsartstyle{%
    \def\normalsize{\@setfontsize\normalsize\@xiipt{14.5}}
    \def\small{\@setfontsize\small\@xipt{13.6}}
    \let\footnotesize=\small
    \def\large{\@setfontsize\large\@xivpt{18}}
    \def\Large{\@setfontsize\Large\@xviipt{22}}
    \skip\@mpfootins = 18\p@ \@plus 2\p@
    \normalsize
}
\@ifundefined{square}{}{\let\Box\square}
\makeatother

\pagestyle{plain}

%%%%%%%%%%%%%%%
\newcommand{\QlSRN}{\textsf{QlSRN}}
\newcommand{\LAL}{\textsf{LAL}}
\newcommand{\PRNQ}{\QlSRN}
\newcommand{\PRN}{\SRN}
\newcommand{\QlSRNVnames}{\mathbb{V}_{\PRNQ}} % set of variable names

\newcommand{\ElTens}{Elementary tensors}

\newcommand{\BEmbed}[2]{\mathtt{Eb}^{#1}[#2]}
\newcommand{\LEmbed}[3]{\mathtt{El}^{#1}_{#2}[#3]}
\newcommand{\EEmbed}[4]{\mathtt{Ee}^{#1}_{#2;#3}[#4]}

\newcommand{\Coerc}{\mathtt{Coerce}}
\newcommand{\Iter}[5]{\mathtt{It}_{#1,#2}[#3,#4,#5]} % iteration
\newcommand{\comp}[5]{\circ^{#1,#2}_{#3,#4}[#5]} % composition in QlSRN
\newcommand{\Wcomp}[5]{\bullet^{#1,#2}_{#3,#4}[#5]} % composition in WALT
  % matrix
\newcommand{\Zero}{0}                        % zero
\newcommand{\Sucz}{\mathtt{s}_0}             % successor zero
\newcommand{\Suco}{\mathtt{s}_1}             % successor one
\newcommand{\preds}{\mathtt{p}}              % predecessor
\newcommand{\zero}[2]{\mathtt{z}^{#1,#2}}    % zero
\newcommand{\sucz}{\mathtt{s}^{0,1}_0}       % successor zero
\newcommand{\suco}{\mathtt{s}^{0,1}_1}       % successor one
\newcommand{\pred}{\mathtt{p}^{0,1}}         % predecessor
\newcommand{\proj}[3]{\pi^{#1,#2}_{#3}}      % projection
\newcommand{\bran}{\mathtt{c}^{0,3}}         % branching
\newcommand{\rec}[3]{\mathtt{r}^{#1,#2}[#3]} % recursion
\newcommand{\Suc}[1]{\mathtt{s}_{#1}}        % generic successor
\newcommand{\USucc}{\mathtt{Ss}}
\newcommand{\BSucc}[1]{\mathtt{Ws}#1}
\newcommand{\BSuccZ}{\mathtt{Ws0}}
\newcommand{\BSuccO}{\mathtt{Ws1}}

\newcommand{\MkCompact}{\mathtt{MkC}}
\newcommand{\StepMkCompZ}{\mathtt{SMkC}_0}
\newcommand{\StepMkCompO}{\mathtt{SMkC}_1}
\newcommand{\BaseMkComp}{\mathtt{BMkC}}
\newcommand{\BInttoUInt}{\mathtt{W2S}}
\newcommand{\UInttoList}{\mathtt{S2L}}

\newcommand{\BCconftoBInt}{\mathtt{FC2W}}
\newcommand{\BCconftoFConf}{\mathtt{C2FC}}
\newcommand{\BInttoBCconf}{\mathtt{W2C}}
\newcommand{\ListstoConf}{\mathtt{L2C}}
\newcommand{\NextConf}{\mathtt{PC2C}}
\newcommand{\HeadsandTails}{\mathtt{C2PC}}
\newcommand{\Pred}{\mathtt{P}}
\newcommand{\BaseP}{\mathtt{BaseP}}
\newcommand{\StepP}{\mathtt{StepP}}
\newcommand{\Branch}{\mathtt{B}}

\newcommand{\TransFunc}{\mathtt{C2C}}
\newcommand{\BaseTransFunc}{\mathtt{Ba}\HeadsandTails}
\newcommand{\StepTransFunc}{\mathtt{St}\HeadsandTails}

\newcommand{\wght}[2]{\operatorname{wg}_{#2}(#1)}   % weight of a term of QlSRN
\newcommand{\srtw}[2]{\llbracket#1\rrbracket_{#2}} 
    % mapping safe recursion on notation to WALL

\newcommand{\BIntT}{\mathbf{W}}
\newcommand{\bcConfT}{\mathbf{C}}
\newcommand{\FbcConfT}{\mathbf{FC}}
\newcommand{\SbcConfT}{\mathbf{SC}}
\newcommand{\ST}{\mathbf{T}}
\newcommand{\SU}{\mathbf{U}}
\newcommand{\SV}{\mathtt{V}}

\newcommand{\bcConf}[1]{\mathtt{\lan\!\lan}#1\mathtt{\ran\!\ran}}

%%%%%%%%%%%%%%%

%%%%%%%%%%

\newcommand{\WALL}{\textsf{WALT}} %
\newcommand{\WALT}{\WALL} %
\newcommand{\PT}{$\Lambda$} % pseudo terms
\newcommand{\PTV}{$\Lambda_{\texttt{V}}$} % value of pseudo terms
\newcommand{\PTT}{$\Lambda^{\texttt{T}}$} % typable pseudo terms
 % typable elementary closed pseudo terms
 % types

\newcommand{\Nat}{\mathbb{N}}

\newcommand{\DSpace}[1]{\textsf{DSpace}[#1]}
\newcommand{\ICC}{\textsf{ICC}}
\newcommand{\FP}{\textsf{FP}}
\newcommand{\LLL}{\textsf{LLL}}
\newcommand{\SF}{\textsf{System F}}
\newcommand{\SRN}{\textsf{SRN}}
\newcommand{\LLC}{\textsf{LLC}}
\newcommand{\DLAL}{\textsf{DLAL}}

%%%%%%%%%% COMPLETENESS
\newcommand{\BoolT}{\mathbf{B}}
\newcommand{\ListT}{\mathbf{L}}
\newcommand{\UIntT}{\mathbf{N}}
\newcommand{\AlphT}{\mathbf{\Sigma}}
\newcommand{\CAlph}[1]{\overline{#1}}
\newcommand{\StatT}{\mathbf{S}}
\newcommand{\CStat}[1]{\overline{#1}}

\newcommand{\GG}{\gamma^\gamma}
\newcommand{\BB}{\beta^\beta}
\newcommand{\Nil}{\mathbf{nil}}

\newcommand{\BNum}[1]{\overline{\overline{#1}}}
\newcommand{\UNum}[1]{\overline{#1}}
\newcommand{\Intg}[1]{\mathsf{#1}}

\newcommand{\Id}{I}
\newcommand{\eseq}{\_\,}
\newcommand{\seq}[3]{\vec{#1}_{[#2;#3]}}
\newcommand{\gseq}[1]{\vec{#1}}
\newcommand{\seqel}[2]{\vec{#1}_{(#2)}}

\newcommand{\LTape}{\bot}
\newcommand{\RTape}{\top}

\newcommand{\PreTape}[3]{\texttt{T}[#1;#2;#3]}

\newcommand{\ConfT}{\mathbf{C}}
\newcommand{\Conf}[3]{\mathtt{C}[[#1];#2;[#3]]}

\newcommand{\PreConfT}{\mathbf{P}}
\newcommand{\TypeT}[2]{\mathbf{U}^{#1}_{#2}}
\newcommand{\TypeU}[2]{\mathbf{V}^{#1}_{#2}}
\newcommand{\PreConf}[5]
{\texttt{P}
 [\lan#1,#2\ran;#3;\lan#4,#5\ran]}

\newcommand{\WALLTrF}{\ol{\TrF}}
\newcommand{\WALLTM}{\ol{\TM}}
\newcommand{\Lookup}[1]{\Delta_{#1}}
\newcommand{\LookupRow}[1]{\Delta^{#1}}
\newcommand{\LookupCol}[2]{\Delta^{#1,#2}}
\newcommand{\LookupEl}[3]{\Delta^{#1,#2,#3}}
\newcommand{\CtoP}{\mathtt{C2P}}
\newcommand{\SCtoP}[1]{\mathtt{SC2P}[#1]}
\newcommand{\BCtoP}[2]{\mathtt{BC2P}[#1,#2]}
\newcommand{\PtoC}{\mathtt{P2C}}

\newcommand{\LtoC}{\mathtt{L2C}}
\newcommand{\LtoN}{\mathtt{L2N}}

\newcommand{\NSucc}{\UIntT\mathtt{Succ}}
\newcommand{\NSuccE}{\UIntT\mathtt{Succ}_{\odot}}
\newcommand{\NCoerce}{\UIntT\mathtt{Coerce}}
\newcommand{\NDiagE}{\UIntT\nabla_{\odot}}
\newcommand{\NSum}{+}
\newcommand{\NMult}{\times}
\newcommand{\NMultE}{\times_{\odot}}
\newcommand{\NSquare}{\mathtt{Sq}}

\newcommand{\LCoerce}{\ListT\mathtt{Coerce}}
\newcommand{\MLCoerce}{\mathtt{It}\ListT\mathtt{Coerce}}
\newcommand{\LSucc}{\ListT\mathtt{Succ}}
\newcommand{\LPush}{\ListT\mathtt{Push}}
\newcommand{\LSuccE}{\ListT\mathtt{Succ}_\odot}
\newcommand{\LPushE}{\ListT\mathtt{Push}_\odot}
\newcommand{\LDiagE}{\ListT\nabla_{\odot}}

\newcommand{\ACoerce}{\AlphT\mathtt{Coerce}}
\newcommand{\ADiag}{\AlphT\nabla}
\newcommand{\ADiagE}{\AlphT\nabla_{\odot}}

\newcommand{\TrF}{\delta}
\newcommand{\TM}{\mathbf{M}}
\newcommand{\TMPoly}[2]{p^{#1}(#2)}
\newcommand{\NPoly}[2]{\ol{p}[#1,#2]}

\newcommand{\MoveL}{\Leftarrow}
\newcommand{\MoveR}{\Rightarrow}
\newcommand{\DoNotMove}{\Downarrow}

\newcommand{\ctoc}{\mathtt{c2c}}

%%%%%%%%%% 
\newcommand{\lan}{\langle}
\newcommand{\ran}{\rangle}
\newcommand{\llan}{\langle\!\langle}
\newcommand{\rran}{\rangle\!\rangle}
\newcommand{\elan}{\langle\!\lbrace}
\newcommand{\eran}{\rbrace\!\rangle}
\newcommand{\subs}[2]{\{^{#1}/_{#2}\}}  % 
\newcommand{\li}{\multimap}   %
\newcommand{\ten}{\otimes}   %
\newcommand{\liv}{-\!\!\bullet\ }   %
\newcommand{\mliv}{-\!\!\bullet\!}   %
   % short termcolor command!!
\newcommand{\ol}[1]{\overline{#1}}        % short ovewrline command!!
   % set of free linear variables of a term
\newcommand{\nocc}[2]{\operatorname{no}({#1, #2})}
     % # of occurrences of a given variable name in a term
  
     % # of a variables names not bound in a term
                       % !-discharged names
                      % $-discharged names
\newcommand{\FV}[1]{\operatorname{FV}(#1)}              % free variables
\newcommand{\wdth}[2]{\operatorname{w}_{#1}(#2)}        % width of a term
\newcommand{\dpth}[1]{\operatorname{d}({#1})}        % absolute depth of #1
            % definition
\newcommand{\epttopt}[1]{[#1]^{\circ}} 
    % mapping extended pseudo terms to pseudo terms
\newcommand{\etp}{\epttopt} 
    % mapping extended pseudo terms to pseudo terms
\newcommand{\nf}[1]{\operatorname{nf}(#1)} 
    % normal form of a deduction
\newcommand{\nfl}[2]{\operatorname{nf}_{#1}(#2)} 
    % normal form at level d of a deduction

%%%%%%%%%%%% SOUNDNESS
 % vectors space
 % vector
 % vector element
 % sequence of vector elements
 % canonical applicatiom of the transition map
 % generic applicatiom of the transition map
 % iterating the canonical applicatiom of the transition map
 % iterating the generic applicatiom of the transition map
 % Canonical strategy
 % set of all the modal subdeductions of #1
\newcommand{\ca}{\textit{ta}} % canonical assignment of a thread

%%%%%%%%%%%% TERMS
\newcommand{\dom}[1]{\operatorname{Dom}(#1)}
\newcommand{\bs}{\backslash}
\newcommand{\ta}[2]{#1\!:\!#2}
  % unary number
\newcommand{\red}{\rightarrow_{w}}      % unleveld rewriting system
\newcommand{\redl}[1]{\stackrel{#1}{\longrightarrow_{w}}} % leveld rewriting system
\newcommand{\canstra}[1]{\stackrel{#1}{\Longrightarrow}} % canonical strategy
\newcommand{\round}[1]{\stackrel{#1}{\leadsto}} % round
 % constant width
\newcommand{\size}[1]{|#1|}      % dimension of a term
\newcommand{\psz}[2]{\operatorname{s}_{#1}(#2)}% partial size
\newcommand{\degree}[1]{\partial{(#1)}}% degree of a polynomial
\newcommand{\polynom}[2]{p_{#1}{#2}}% polynomial of #2 with index #1
\newcommand{\st}[2]{\operatorname{t}_{#1}(#2)}% substitution trace

%%%%%%%%%%%% Environments

\usepackage{theorem}

\floatstyle{boxed}
\newfloat{Figure}{ht}{lop}
\newfloat{Table}{ht}{lop}

 \newtheorem{theorem}{Theorem}
 \newtheorem{proposition}{Proposition}
 \newtheorem{lemma}{Lemma}
 \newtheorem{corollary}{Corollary}
 \newtheorem{fatto}{Fact}
 \newtheorem{definition}{Definition}[section]

\newenvironment{proof}{\noindent{\bfseries Proof.}}{\hbox{}\hfill$\Box$\vspace{\baselineskip}}

%---MINE ^-------------

\begin{document}
\title{Weak Affine Light Typing:\\  
Polytime intensional expressivity, soundness and completeness}
\author{Luca Roversi
\footnote{Dipartimento di Informatica,
          Universit\`a di Torino,
          Corso Svizzera 185 --- Torino --- Italy.
}
\footnote{
\textit{e-mail}:\texttt{roversi@di.unito.it}.
\textit{home page}:\texttt{http://www.di.unito.it/\~{}rover}.
}
}
%%%%%%%%%%%%%%
\date{\today}
\maketitle
\begin{abstract}

\textbf{Ridefinite Quasi-linear come Composition-linear}

Weak affine light typing (\WALT) assigns light affine linear formulae as types to a subset of $\lambda$-terms in \SF. \WALL\ is poly-time sound: if a $\lambda$-term $M$ has type in \WALT, $M$ can be evaluated with a polynomial cost in the dimension of the derivation that gives it a type. In particular, the evaluation can proceed under any strategy of a rewriting relation, obtained as a mix of both call-by-name/call-by-value $\beta$-reductions.
\WALL\ is poly-time complete since it can represent any poly-time Turing machine.
\WALL\ \textit{weakens}, namely \textit{generalizes}, the notion of stratification of deductions common to some \textsf{Light Systems} --- we call as such those logical systems, derived from Linear logic, to characterize \FP, the set of Polynomial functions --- .
A weaker stratification allows to define a compositional \textit{embedding} of the \textit{Quasi}-linear fragment \QlSRN\ of Safe recursion on notation (\SRN) into \WALT. \QlSRN\ is \SRN, which is a recursive-theoretical system characterizing \FP, where only the composition scheme is restricted to \textit{linear safe variables}.
So, the expressivity of \WALT\ is stronger, as compared to the known \textsf{Light Systems}.
In particular, using the types, the \textit{embedding} puts in evidence the stratification of normal and safe arguments hidden in \QlSRN: the less an argument is impredicative, the deeper, in a formal, proof-theoretical sense, gets its representation in \WALT.
\end{abstract} 
\tableofcontents
%%%%%%%%%%%%%%
% \begin{flushright}
% \small\textit{To the memory of Umberto Spina, 1924 --- 2000.}\normalsize
% \end{flushright}

% \documentclass[narrowdisplay,draft]{elsart-mine}
% \input{front-matter-TCS}

\linenumbers

\section{Introduction}
\label{section:Introduction}
Implicit computational complexity (\ICC) explores machine-independent characterizations of complexity classes without any explicit reference to resource usage bounds, which, instead, result from restricting suitable computational structures. \ICC\ systems originate from recursion theory \cite{Cobham65,Bellantoni92CC,Leivant94RRII,Leivant95RRI,Leivant99RRIII,LeivantMarion00RRIV}, structural proof-theory and linear logic \cite{Girard:1998-IC,Lafont02SLL}, rewriting systems or functional programming \cite{Huet:JACM-80,Dershowitz:TCS-92,Jones:TCS-99,Leivant93RR,Leivant94ic}, type systems \cite{hofmann97csl,hofmann99linear,hofmann99thesis,hofmann00safe,Bellantoni00APAL,Bellantoni01MSS} \ldots.
\par
This work is mainly concerned with the theoretical aspects of \ICC\ whose essential goal is to support the evidence that the notions of the known complexity classes are natural concepts. Classically, a complexity class is defined in terms of some specific computational model. \ICC\ aims to show that such computational models have mathematical counterparts, independent from them.
Here, we approach \ICC\ from a type-theoretical point of view.
\par
We start from generalizing the structural proof-theoretical design principles, used for Light linear logic (\LLL) \cite{Girard:1998-IC} and Light affine logic (\LAL)
\cite{Asperti:1998-LICS,Roversi:1999-CSL,Asperti02TOCL}. The reason is that, so far, such principles look quite restrictive. Indeed, we know that 
\LAL\ is polynomially strongly normalizable: the normalization of every of its derivations is polynomial under every rewriting strategy
\cite{Terui:2001-LICS,Terui:2002-AML}.
This limits the intensional expressiveness of \LAL, hence of \LLL,
as witnessed by the difficulty to relate the computational behavior of \LAL\ to the one of \ICC\ systems based on principles other than structural proof-theory.
In this direction, the only known relation is in \cite{MurawskiOng00}. There, a compositional, and intuition preserving, embedding of a \textit{fragment}
of \textit{Safe recursion on notation} (\SRN) \cite{Bellantoni92CC} into \LAL\ is given. The fragment can only use the \textit{safe arguments} linearly, and is $\DSpace{ln}$-complete \cite{Neergaard:APLAS-04}.
\par
We introduce \textit{Weak Affine Light Typing} (\WALL) as a typing system for pure $\lambda$-terms.
It generalizes a basic design principle of \LAL\ and gives an extension of the formulae of \LAL\ as types to standard $\lambda$-terms that belong to a fragment of  \SF\ \cite{GLT:PT}.
The distinguishing feature of \WALL, as compared to \LAL, is its \textit{weaker}, hence, \textit{more liberal} definition of the deductions that can be duplicated in the course of the normalization. Recall that any deduction $\Pi$ of \LAL\ that, eventually, will be duplicated by a cut elimination step has a conclusion of type $!A$, and \textit{must be defined} in a way that it depends on \textit{at most} a single assumption of type $!B$. \WALL\ weakens this constraint. Any deduction $\Pi$ of \WALL\ that, eventually, will be duplicated by a normalization step has conclusion of type $!A$, and it \textit{may depend} on an \textit{arbitrary} number of assumptions, \textit{one} of which \textit{must} be of type $!B$, while the \textit{others must} have type $\$C_i$ --- here we adopt $\$$ to name the ``paragraph'' modality of \LAL\ --- .
Before $\Pi$ gets duplicated, it \textit{must} evaluate to a deduction $\Pi'$ that depends on \textit{at most} a single assumption of type $!B$.
The correct duplication of the weaker form of duplicable deductions is obtained by extending the language of formulae of \LAL. In particular, \WALL\ builds formulae with two linear implications $\li, \liv$, two modalities $\$, !$, and a universal quantification.
The new implication $\liv$ denotes the linear functions whose arguments are the assumptions, with type $\$C_i$, of the deductions with conclusion of type $!A$.
Intuitively, if a term $M$ has type $\$A\liv B$, then a necessary condition to fully evaluating $M\,N$ is that $N$ becomes a closed term.
\par
\WALL\ \textit{is poly-time sound}. Every $\lambda$-term typable by \WALL\ can be evaluated with a polynomial cost under any rewriting strategy of a rewriting relation $\red$: a mix of the standard call-by-name and call-by-value $\beta$-reduction. The bound can be read from the structure of any deduction of \WALL, but $\red$ evaluates any typable $\lambda$-term completely ignoring the types. So, \WALL\ is a framework where the program part, represented by a typable $\lambda$-term, and the complexity specification part, represented by the corresponding deduction, are completely separate, like in \DLAL\ \cite{BaillotTerui:2004-LICS} and \cite{Cop-DLag-Ron:EALCBV-04}.
\par
\WALL\ \textit{is more expressive than} \LAL.
Let \QlSRN\ be \SRN, where the composition scheme uses safe variables \textit{linearly}: the safe variables in two, or more, functions being composed, must be different. Then,
there exists an interpretation map $\srtw{\,}{}$ from $\QlSRN$ to $\WALT$, such that, for every $f(n_1,\ldots,n_k,s_1,\ldots,s_l)\in\QlSRN$, with $k$ \textit{normal} and $l$ \textit{safe} arguments, we can prove that:
(i) if $f(n_1,\ldots,n_k,s_1,\ldots,s_l)=n$, then 
$\srtw{f(n_1,\ldots,n_k,s_1,\ldots,s_l)}{}$ reduces to $\srtw{n}{}$, using $\red$, and
(ii) $\srtw{f(n_1,\ldots,n_k,s_1,\ldots,s_l)}{}$ has type $\$^{m}\BIntT$, since
$\srtw{f}{}$ has type 
$\overbrace{\$\BIntT\liv\ldots\liv\$\BIntT}^{k}
 \liv
 \overbrace{\$^{m}\BIntT\liv\ldots\liv\$^{m}\BIntT}^{l}\liv\$^{m}\BIntT$,
for some $m\geq 1$, the type $\BIntT$ being the one for binary words in \WALT.
\par
Point (i) is the obvious behavior we expect from the embedding and shows that
\WALT\ is strongly more expressive that the known systems derived as restrictions of Linear logic to characterize \FP.
Point (ii) links $m$ to the complexity of the definition of $f$,
$m$ depending on the number of nested linear safe compositions and of safe recursive schemes that define $f$. Moreover, the types explicitly show the layered structure of the normal and safe arguments hidden in \QlSRN. The type of a safe argument is $m$ $\$$-modality occurrences deep because a safe argument can be used in the course of recursive unfolding to produce the result. Orthogonally, the depth of the type of every normal argument is limited to $1$. This allows to give to the normal arguments the necessary ``replication power'' required to duplicate syntactic structure in the course of an unfolding.
This underlines a radical difference between the approach to the implicit characterization of \FP, through Light linear logic-like systems, and the approach of the recursive ones.
The formers say that the weaker is the possibility of a word to replicate structure, behaving it as an iterator, the deeper is its type. The latter, are based exactly on the reversed idea, though this cannot be formally stated in terms of any typing information inside \QlSRN.
\par
\WALL\ \textit{is poly-time complete}. It can represent and simulate every poly-time Turing machine. The result must be explicitly reproved. Indeed, we cannot take advantage of the existing proofs of poly-time completeness for \LAL\
\cite{Roversi:1999-CSL,Asperti02TOCL}
because of the call-by-name/call-by-value rewriting notion that \WALL\ induces on the $\lambda$-terms.
\par
\textbf{Outline.}
Section~\ref{section:Weak Affine Light Typing WALL} formally introduces \WALT, gives some intuitions, and proves its structural properties useful to get the subject reduction, given in Section~\ref{section:Dynamic properties of WALL} after the definition of the hybrid call-by-name/call-by-value rewriting system.
Section~\ref{section:Polytime soundness} is about the poly-time soundness of \WALL.
Section~\ref{section:Quasi-linear safe recursion on notation} formally defines \QlSRN\ in the style of \cite{Beckmann:96-AML}.
Section~\ref{section:Programming combinators in WALT} develops the combinators, of \WALT, required to embed \QlSRN\ into \WALT. In particular, Subsection~\ref{subsection:Iterators}, details the intuition about how we implement the virtual machine that interprets the recursive scheme of \QlSRN, hence of \SRN.
Section~\ref{section:Algorithmic expressivity of WALL} formally develops the embedding.
Section~\ref{section:Conclusions and further work} delineates some possible research directions. Appendix~\ref{section:Completeness} is about the poly-time completeness, while Appendix~\ref{section:Details about the proofs} details some of the proofs.
\par
\textbf{Acknowledgements.}
% On the technical side, 
My gratitude goes to Harry Mairson and Peter M{\o}ller Neergaard who deeply read and helpfully commented \cite{Roversi:2002-FALL}, the root of this work, and all those researchers that, in the last years, thanks to their results, indirectly helped me to write this work in a more accessible way than \cite{Roversi:2002-FALL}, hopefully.
Also, I want to thank the anonymous referees as well as Marco Gaboardi and Luca Vercelli that helped me to improve early versions of this work.
% On the pragmatic side, I \textit{need} to thank my family, which helped me to pass this last hard period, the whole staff of the branches \textit{Ematologia 2} and \textit{Centro Trapianti Midollo} of \textit{Ospedale Maggiore S. Givanni Battista (Molinette)} of \textit{Torino}, and my German brother.
% The formers made me survive with the help of the latter, my anonymous bone marrow donor. 
%%%%%%%%%%%%%%%%%%%%%%%%%%
\section{Weak Affine Light Typing (\WALL)}
\label{section:Weak Affine Light Typing WALL}
\WALL\ gives the formulae that belong to the language, generated by the following grammar:
\small
\begin{align*}
A &::= L
	\mid !A
	\mid \$A
\\
L &::=  \alpha
	\mid A\li A
	\mid \$A\liv A
	\mid \forall \alpha.L
\end{align*}
\normalsize
as types to a subset of \PT, the set of $\lambda$-terms, generated by
$M  ::=  x
    \mid (\bs x.M) 
    \mid (MM)$.
\par
\textbf{Notations and definitions.}
$A$ is the start symbol.
A \textit{modal} formula has form $!A$ or $\$A$, and, in particular, $!A$ is $!$-modal, while $\$A$ is $\$$-modal.
$L$ generates \textit{linear}, or \textit{non modal},
formulae. Notice that the linear formulae are closed under the substitution of linear formulae for a universally quantified variable. Also, the universal quantification cannot hide a modal formula by means of the quantifier.
Generic formulae are ranged over by $A, B, C$. Linear ones by $L, L'$.
$M\{^{N_1}/_{x_1} \cdots {}^{N_m}/_{x_m}\}$ denotes the usual capture free simultaneous substitution of every $N_i$ for the corresponding $x_i$, with $1\leq i \leq m$. If $N_1, \ldots, N_m$ are all equal, the substitution is denoted as $M\subs{N_1}{x_1\ldots x_m}$. Parentheses are left-associative, so $((\cdots((MM)M)\cdots)M)$
shortens to $MMM\cdots M$. 
A sequence of abstractions 
$(\bs x_1.\ldots(\bs x_m.M)\ldots)$ is shortened by 
$\bs x_1 \ldots x_m.M$, for any $m$. 
An abstraction $\bs x.M$ binds every free occurrence of $x$ in $M$. Given a term $M$, the set of its free variables is $\FV{M}$. A \textit{closed term} has no free variables. \PTV\ is the set of the $\lambda$-terms which are \textit{values}, generated by
$V ::=  x
  \mid (\bs x.M)$, where $M$ is in \PT.
The \textit{cardinality of a free variable in a term}
is $\nocc{x}{M}$ and counts the number of \emph{free occurrences of $x$ in $M$}:
\small
\begin{align*}
\nocc{x}{x}&=1 &
\nocc{x}{y}&=0 &(x \not\equiv y)\\
\nocc{x}{\bs x.M}&=0 &
\nocc{x}{\bs y.M}&=\nocc{x}{M} &(x \not\equiv y)\\
\nocc{x}{MN} &=\nocc{x}{M}+\nocc{x}{N}
\end{align*}
\normalsize
The \textit{size of a term}
$\size{M}$ gives the \emph{dimension}  of $M$ as expected:
$\size{x}= 1, 
\size{\bs x.M}=\size{M}+1, 
\size{MN}=\size{M}+\size{N}+1$.
%----------------------------
\begin{Figure}
\begin{center}
\begin{tabular}{c}
 \infer[A]
  {
   \Gamma,\ta{x}{L};\Delta;{\mathcal E}\vdash \ta{x}{L}
  }
  {
  }
\\
\\
 \infer[C]
  {
   \Gamma; 
   \Delta; 
   {\mathcal E}\sqcup
   \{(\Theta_x,\Theta_y;\{\ta{z}{A}\})\}
   \vdash \ta{M\{^{z}/_{x} {}^{z}/_{y}\}}{B}
  }
  {
   \Gamma; \Delta; 
   {\mathcal E},
   (\Theta_x;\{\ta{x}{A}\}),
   (\Theta_y;\{\ta{y}{A}\})
   \vdash \ta{M}{B}
  }
\\
\\
 \infer[\li I]
  {\Gamma; \Delta; {\mathcal E}
   \vdash \ta{\bs x.M}{L\li B}}
  {\Gamma,\ta{x}{L}; \Delta; {\mathcal E}\vdash \ta{M}{B}}
\qquad
 \infer[\li I_{\$}]
  {\Gamma; \Delta; {\mathcal E}
   \vdash \ta{\bs x.M}
             {\$A\li B}}
  {\Gamma;\Delta,\ta{x}{A}; {\mathcal E}\vdash \ta{M}{B}}
\\
\\
 \infer[\li E]
  {\Gamma_M, \Gamma_N;
   \Delta_M, \Delta_N;
   {\mathcal E}_M
   \sqcup
   {\mathcal E}_N
   \vdash \ta{MN}{B}
  }
  {\Gamma_M; 
   \Delta_M;
   {\mathcal E}_M
   \vdash \ta{M}{A\li B}
   &
   \Gamma_N; 
   \Delta_N;
   {\mathcal E}_N
   \vdash \ta{N}{A}
   &
   A\not\equiv !C, \text{ for any } C
   }
\\
\\
 \infer[\li I_!]
  {
   \Gamma; 
   \Delta; 
   {\mathcal E}\sqcup
   \{(\Theta;\emptyset)\}
   \vdash \ta{\bs x.M}{\ !A\li B}
  }
  {
   \Gamma; \Delta; 
   {\mathcal E},
   (\Theta;\{\ta{x}{A}\})
   \vdash \ta{M}{B}
  }
\\
\\
 \infer[\li E_!]
  {\Gamma_M, \Gamma_N;
   \Delta_M, \Delta_N;
   {\mathcal E}_M
   \sqcup
   {\mathcal E}_N
   \vdash \ta{MN}{B}
  }
  {
% \begin{array}{lll}
   \Gamma_M;
   \Delta_M;
   {\mathcal E}_M
   \vdash \ta{M}{\,!A\li B}
   &
% \quad
%    \\
   \Gamma_N; 
   \Delta_N;
   {\mathcal E}_N
   \vdash \ta{N}{\,!A}
   &
   {\mathcal E}_M\subseteq\{(\emptyset;\Phi_1),\ldots,(\emptyset;\Phi_n)\}
% \end{array}
}
\\
\\
 \infer[\liv I]
  {
   \Gamma;\Delta;{\mathcal E}\sqcup\{(\Theta;\emptyset)\}
   \vdash \ta{\bs x.M}{\$A\liv B}
  }
  {
   \Gamma;\Delta;
   {\mathcal E},(\Theta,\ta{x}{A}; \emptyset)
   \vdash \ta{M}{B}
  }
\\
\\
\infer[\liv E]
  {
   \Gamma_M; 
   \Delta;
   {\mathcal E}_M
   \sqcup
   {\mathcal E}_N
   \vdash \ta{MN}{B}
  }
  {
   \Gamma_M; 
   \Delta;
   {\mathcal E}_M
   \vdash \ta{M}{\$A\liv B}
   &
   \emptyset; 
   \emptyset;
   {\mathcal E}_N
   \vdash \ta{N}{\$A}
   &
   {\mathcal E}_N\subseteq\{(\Theta;\emptyset)\}
   }
\\
\\
\infer[\$]
{
 \Gamma';
 \$\Delta',\Delta;
 \{(\$\Theta';\emptyset)\}
 \sqcup
 \{(\Theta_1;\Phi_1)\}
 \sqcup\ldots\sqcup
 \{(\Theta_m;\Phi_m)\}
 \vdash \ta{M}{\$B}
}
{
 \Gamma;
 \Delta';
 \{(\Theta';\emptyset)\}
 \vdash \ta{M}{B}
 &
 \Gamma\subseteq
 \Delta\cup\bigcup_{i=1}^{m}\Theta_i
 \cup\bigcup_{i=1}^{m}\Phi_i
 &
 \Theta_i\neq\emptyset\text{ iff }\Phi_i=\emptyset
}
\\
\\
\infer[!]
 {\Gamma';
  \Delta;
  \{(\$\Theta';\emptyset)\}\sqcup\{(\Theta;\Phi)\}
  \vdash \ta{M}{\ !B}
 }
 {
  \Gamma;
  \emptyset;
  \{(\Theta';\emptyset)\}
  \vdash \ta{M}{B}
  &
  \Gamma
  \subseteq
  \Theta\cup\Phi
  &
  \Theta\neq\emptyset \Rightarrow \dom{\Phi}\cap\FV{M}\neq\emptyset
 }
\\
\\
 \infer[\forall I]
  {\Gamma; \Delta; {\mathcal E}\vdash \ta{M}{\forall \alpha.L}}
  {
  \Gamma; \Delta; {\mathcal E}\vdash \ta{M}{L}
  &
  \text{$\alpha$ not free in $\Gamma, \Delta$ and ${\mathcal E}$}
  }
  
\qquad
 \infer[\forall E]
  {\Gamma; \Delta; {\mathcal E}\vdash \ta{M}{L\{L'/\alpha\}}}
  {\Gamma; \Delta; {\mathcal E}\vdash \ta{M}{\forall\alpha.L}}
\end{tabular}
\end{center}
\normalsize
\caption{Weak Affine Light Typing}
\label{figure:Weak LAL as a type assignment}
\end{Figure}

%----------------------------
\par
\textbf{The type assignment.}
Figure~\ref{figure:Weak LAL as a type assignment}
defines Weak Affine Light Typing (\WALL). \WALL\ is a deductive system that deduces judgments
$\Gamma; \Delta; {\mathcal E} \vdash \ta{M}{A}$, where $M$ is a $\lambda$-term.
If we call type assignment any pair $\ta{x}{A}$ where $x$ is a variable and $A$ a type, meaning that $A$ is a type for $x$, then $\Gamma$ and $\Delta$
are sets of type assignment and
${\mathcal E}$ is a set of pairs $(\Theta;\Phi)$ such that both $\Theta$ and $\Phi$ are sets of type assignments as well. Namely, the judgments assign a type $A$ to a $\lambda$-term $M$ from four sets of assumptions, in analogy to
\LLL\
\cite{Girard:1998-IC}, \LLC\
\cite{Terui:2001-LICS,Terui:2002-AML}, and Dual Light Affine Logic (\DLAL) \cite{BaillotTerui:2004-LICS}.
\par
\textbf{Notations and definitions.} 
Given a set of type assignments $\{\ta{x_1}{A_1},\ldots,\ta{x_n}{A_n}\}$,
$\dom{\{\ta{x_1}{A_1},\ldots,\ta{x_n}{A_n}\}}
=\{x_1,\ldots,x_n\}$ denotes the \textit{domain} of such a set.
In general,
$\Gamma$ denotes a set of \emph{linear type assignments} $\ta{x}{L}\in\Gamma$, and we call $x$ \emph{linear}.
$\Delta$ denotes a set of \emph{linear partially discharged} type assignments to variables that we call \emph{linear partially discharged}.
${\mathcal E}$ denotes a set of \emph{partially discharged contexts}.
$\mathcal E$ is either empty or it contains pairs $(\Theta_1;\Phi_1),\ldots,(\Theta_n;\Phi_n)$ where, for every $i\in\{1,\ldots,n\}$, the following four points hold:
(i) $\Theta_i$ is a set of \emph{elementary partially discharged} type assignments to variables that we simply call \emph{elementary};
(ii) $\Phi_i$ is either empty or it is a singleton $\ta{x}{A}$. We call $x$ \emph{polynomially partially discharged}, or simply \emph{polynomial};
(iii) only one between $\Phi_1,\ldots,\Phi_n$ can be $\emptyset$;
(iv) the domains of any two $\Phi_i$ and $\Phi_j$ are distinct, with $i\neq j$.
\par
For every 
${\mathcal E}=\bigcup^{n}_{i=1} \{(\Theta_i;\Phi_i)\}$,
$\dom{\mathcal E}$ is 
$(\bigcup^{n}_{i=1} \dom{\Theta_i})\cup
	(\bigcup^{n}_{i=1} \dom{\Phi_i})$.
In every of the rules of \WALL\ the domain of two sets of type assignments $\Phi_M$ and $\Phi_N$ may intersect when $\Phi_M$ and $\Phi_N$ are part of two partially discharged contexts ${\mathcal E}_M$ and ${\mathcal E}_N$ that belong to two distinct premises of a rule.
This observation justifies the definition of
${\mathcal E}_M\sqcup{\mathcal E}_N$
that merges ${\mathcal E}_M$ and ${\mathcal E}_N$, preserving the structure of a partially discharged context:
\small
\begin{align*}
&
{\mathcal E}_M
\sqcup
{\mathcal E}_N
=
\\
&\quad
\{(\Theta_M,\Theta_N;\Phi) \mid
	(\Theta_M;\Phi) \in {\mathcal E}_M
	\text{ and }
	(\Theta_N;\Phi) \in {\mathcal E}_N
\}\cup\\
&\quad
\{(\Theta_M;\Phi_M) \mid
	(\Theta_M;\Phi_M) \in {\mathcal E}_M
	\text{ and there is no }
	\Theta_N \text{ such that }
	(\Theta_N;\Phi_M) \text{ in } {\mathcal E}_N
\}
\cup\\
&\quad
\{(\Theta_N;\Phi_N) \mid
	(\Theta_N;\Phi_N) \in {\mathcal E}_N
	\text{ and there is no }
	\Theta_M \text{ such that }
	(\Theta_M;\Phi_N) \text{ in } {\mathcal E}_M
\}
\end{align*}
\normalsize
The sequence ${\mathcal E},(\Theta;\Phi)$ denotes that 
$(\Theta;\Phi)\not\in{\mathcal E}$. Also, 
${\mathcal E} \sqcup \{(\emptyset;\emptyset)\} 
= {\mathcal E} \sqcup \emptyset =  {\mathcal E}$.
In every other cases, the domain of two sets of type assignments that belong to two distinct premises of a rule of \WALL\ \textit{must} be disjoint.
\PTT\ is the subset of \textit{typeable} elements $M$ of \PT, namely,
those for which a deduction $\Pi$ with conclusion 
$\Gamma;\Delta;{\mathcal E}\vdash\ta{M}{A}$ exists, denoted by
$\Pi\rhd\Gamma;\Delta;{\mathcal E}\vdash\ta{M}{A}$.
Finally,
$\Pi'\preceq\Pi$ denotes that $\Pi'$ is a \textit{subdeduction} of $\Pi$,
while
$\Pi(R,\Pi_1,\ldots,\Pi_m)$, with $0\leq m\leq2$, denotes a deduction $\Pi\rhd\Gamma;\Delta;{\mathcal E}\vdash\ta{M}{A}$, whose conclusion is the rule $R$, and such that the premises of $R$ are the conclusions of $\Pi_i\preceq\Pi$, with $1\leq i\leq m$. The notation $\Pi(R)$ highlights that $R$ is the last rule of $\Pi$.
\par
\textbf{Intuition.}
\WALL\ controls the number of normalization steps of its deductions by means of a \textit{weak} stratification.
``Stratification'' means that every deduction $\Pi$ of \WALL\ can be thought of as it was organized into levels, so that the logical rules of $\Pi$ may be at different depths.
The normalization preserves the levels. Namely, if the instance of a rule $R$ in $\Pi$ is at depth $d$, then it will keep to be at depth $d$ after any number of normalization steps that, of course, do not erase it.
The only duplication allowed is of deductions $\Pi$ that have undergone an instance $R$ of the $!$ rule, namely the conclusion of $\Pi$ has a $!$-modal type, introduced by $R$. Ideally, the $!$ rule defines a, so called, $!$-box around the deduction that proves its premise. The $!$-box may depend on more than one assumption, so generalizing the $!$-box of \LAL, that, in the context of \WALL, takes form:
\small
$$\infer[]
 {\emptyset;
  \emptyset;
  \{(\emptyset;\Phi)\}
  \vdash \ta{M}{\ !B}
 }
 {
  \Phi;
  \emptyset;
  \emptyset
  \vdash \ta{M}{B}
  &
  \Phi\subseteq\{\ta{x}{A}\}
 }
$$
\normalsize
A first immediate consequence of generalizing the $!$-boxes is that every elementary partially discharged assumption they may depend on can only be replaced, as effect of the normalization, by the conclusion of $\$$-boxes of \WALL\ which exclusively depend on elementary partially discharged assumptions as well. Otherwise, we could build $!$-boxes with an arbitrary number of $!$-modal assumptions, immediately getting deductions that normalize with an elementary cost. This justifies the name \emph{elementary partially discharged} type assignments.
The correct substitution discipline for the elementary partially discharged assumptions is obtained by introducing the linear arrow $\liv$. The rule $\liv I$ fully discharges them, while $\liv E$ forces the application of a function with type $\$A\liv B$ to arguments which, if they normalize to a $\$$-box, such a box can only depend on elementary partially discharged assumptions.
\par
Of course, not every $!$-box of \WALL, with conclusion of type $!A$, can be replaced for the argument of a function with type $!A\li B$. Such a replacement can occur only if the $!$-box gets normalized to another $!$-box with at most one $!$-modal assumption. Otherwise, we would again loose the main property of the duplicable objects inherited from \LAL\ which ensures the polynomial bound on the normalization cost.
\par
Summing up, \WALL\ allows to type $\lambda$-terms more liberally than \LAL, while keeping the same normalization principles: the stratification is never canceled, and only deductions that, eventually, depend on at most one free variable may be effectively duplicated as effect of the normalization. 
This is why \WALL\ does not enjoy a full normalizing procedure, the analogous of the cut elimination for a corresponding sequent calculus formulation. For example, the deduction 
$\Pi_{(\bs x.yxx)(wz)}$:
\small
\[
\infer[\li E_!]
{
\ta{w}{A\li A},\ta{z}{A};\emptyset;
\{(\ta{w}{A\li A};\ta{z}{A})\}
\vdash\ta{(\bs x.yxx)(wz)}{A}
}
{\begin{array}{l}
\Pi_{\bs x.yxx}\rhd
 \ta{y}{\, !A\li!A\li A};
 \emptyset;\emptyset\vdash\ta{\bs x.yxx}{\, !A\li A}
 \\
 \Pi_{wz}\rhd
 \emptyset;\emptyset;
  \{(\ta{w}{A\li A};\ta{z}{A})\}\vdash\ta{wz}{\, !A}
 \end{array}
}
\]
\normalsize
where $\Pi_{\bs x.yxx}$, and $\Pi_{wz}$ are, respectively, the two following deductions:
\[
\scriptsize
 \infer[\li I_{!}]
 {\ta{y}{\, !A\li!A\li A};
  \emptyset;\emptyset\vdash\ta{\bs x.yxx}{\, !A\li A}
 }
 {
  \infer[\li E_!]
  {\ta{y}{\, !A\li!A\li A};
   \emptyset;\{(\emptyset;\ta{x}{A})\}
   \vdash\ta{yxx}{A}}
  {
   \infer[\li E_!]
   {\ta{y}{\, !A\li!A\li A};
    \emptyset;\{(\emptyset;\ta{x}{A})\}
    \vdash\ta{yx}{\, !A\li A}}
   {
    \infer[A]
    {\ta{y}{\, !A\li!A\li A};
     \emptyset;\emptyset
     \vdash\ta{y}{\, !A\li!A\li A}
    }
    {}
    &
    \infer[!]
    {\emptyset;\emptyset;\{(\emptyset;\ta{x}{A})\}
    \vdash\ta{x}{\, !A}}
    {\infer[A]
     {\ta{x}{A};\emptyset;\emptyset
     \vdash\ta{x}{A}}
     {}
    }
   }
    &
    \infer[!]
    {\emptyset;\emptyset;\{(\emptyset;\ta{x}{A})\}
    \vdash\ta{x}{\, !A}}
    {\infer[A]
     {\ta{x}{A};\emptyset;\emptyset
     \vdash\ta{x}{A}}
     {}
    }
  }
 }
\]
\small
\[
 \infer[!]
 {\emptyset;\emptyset;
  \{(\ta{w}{A\li A};\ta{z}{A})\}\vdash\ta{wz}{\, !A}
 }
 {
  \infer[\li E]
  {\ta{w}{A\li A},\ta{z}{A};
   \emptyset;\emptyset\vdash\ta{wz}{A}
  }
  {
   \infer[A]
   {\ta{w}{A\li A};
    \emptyset;\emptyset\vdash\ta{w}{A\li A}
   }
   {}
   &
   \infer[A]
   {\ta{z}{A};
    \emptyset;\emptyset\vdash\ta{z}{A}}
   {}
  }
 }
\]
\normalsize
cannot normalize despite both the concluding $\li E_!$ in
$\Pi_{(\bs x.yxx)(wz)}$, and the last instance of $\li I_{!}$ in
$\Pi_{\bs x.yxx}$ seem to form a ``detour''. In fact, it is not a detour because replacing the conclusion of $\Pi_{wz}$ for the two occurrences of $A$ in $\Pi_{\bs x.yxx}$ would produce a deduction with a wrong instance of $\li E$.
In analogy to \cite{Ronchi-Roversi:1997-STUDIA-LOGICA},
the rules that introduce the linear implications $\li,\liv$ may simultaneously introduce a modal connective. In this way we use the $\lambda$-abstraction to denote the occurrence of an assumption of a $\$$, or of a $!$-box inside the $\lambda$-terms.
As a conclusion of this informal description of \WALL, we observe the following:

\begin{fatto}[\WALL\ types a subset of \SF.]
\label{fact:WALL types a subset of SF}
Let $\Pi\rhd\Gamma;\Delta;{\mathcal E}\vdash\ta{M}{A}$ be given.
Call $\Gamma_{F}$ and $A_F$ the set of type assumptions and the type that we can obtain from every $\ta{x}{B}$ of $\Gamma;\Delta;{\mathcal E}$, and from $A$ by both: 
(i) replacing the intuitionistic arrow $\Longrightarrow$ of \SF\ for every
occurrence of both $\li$, and $\liv$, and
(ii) erasing every occurrence of $!$ and $\$$.
Then, $\Gamma_F\vdash\ta{M}{A_F}$ can be deduced in \SF.
\end{fatto}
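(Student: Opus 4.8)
The plan is to proceed by induction on the structure of the \WALL{} derivation $\Pi\rhd\Gamma;\Delta;{\mathcal E}\vdash\ta{M}{A}$, showing simultaneously that the erasure operation $(\cdot)_F$ — which replaces both $\li$ and $\liv$ by $\Longrightarrow$ and deletes every $!$ and $\$$ — sends the conclusion to a valid \SF{} judgment $\Gamma_F\vdash\ta{M}{A_F}$, where $\Gamma_F$ collects the erased assumptions drawn from all of $\Gamma$, $\Delta$, and the $\Theta_i$, $\Phi_i$ components of ${\mathcal E}$. The key preliminary observation is that $(\cdot)_F$ is compatible with the syntactic operations used in the rules: it commutes with substitution of linear formulae for type variables (so $(L\{L'/\alpha\})_F = L_F\{L'_F/\alpha\}$, which handles $\forall E$), and the side condition on $\forall I$ that $\alpha$ is not free in $\Gamma,\Delta,{\mathcal E}$ survives erasure since erasure only deletes modalities and never introduces or captures type variables. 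One also notes that the grammar guarantees $L_F$ never hides a quantifier under a deleted modality, so $\forall$-formulae erase to $\forall$-formulae.

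The core of the induction is a uniform, rule-by-rule check. The three axiom-like and $\forall$ rules are immediate. For $\li I$ and $\li I_\$$, the \WALL{} premise moves $\ta{x}{L}$ (resp.\ $\ta{x}{A}$) from the abstraction context; after erasure both become an ordinary \SF{} hypothesis $\ta{x}{L_F}$ (resp.\ $\ta{x}{A_F}$) in $\Gamma_F$, and $(\bs x.M)$ gets \SF{} arrow type $L_F\Longrightarrow B_F$, exactly the \SF{} $\Longrightarrow$-introduction. The rules $\li I_!$ and $\liv I$ are analogous: they discharge an assumption packaged inside a partially discharged context $(\Theta;\{\ta{x}{A}\})$ or $(\Theta,\ta{x}{A};\emptyset)$, and under erasure the bookkeeping distinction between $\Theta$, $\Phi$, linear and discharged simply collapses — the net effect is still ``remove one hypothesis $\ta{x}{A_F}$ and form $A_F\Longrightarrow B_F$.'' For the elimination rules $\li E$, $\li E_!$, $\liv E$: in each case the two premises erase, by IH, to $\Gamma^M_F\vdash\ta{M}{A_F\Longrightarrow B_F}$ and $\Gamma^N_F\vdash\ta{N}{A_F}$, the disjointness/merging side conditions on the contexts (including the behaviour of $\sqcup$) ensure $\Gamma^M_F$ and $\Gamma^N_F$ can be combined into a single \SF{} context — \SF{} has no linearity constraint so sharing of hypotheses is harmless — and \SF{} $\Longrightarrow$-elimination gives $\ta{MN}{B_F}$. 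The contraction rule $C$ erases to an \SF{} derivation where the two renamed hypotheses $\ta{x}{A_F}$ and $\ta{y}{A_F}$ are identified as $\ta{z}{A_F}$; this is legitimate in \SF{} precisely because \SF{} allows arbitrary reuse of a hypothesis, so one just substitutes $z$ for $x$ and $y$.

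The two most delicate rules are the modal promotions $\$$ and $!$, but under erasure they become essentially trivial: the premise $\Gamma;\Delta';\{(\Theta';\emptyset)\}\vdash\ta{M}{B}$ erases to $\Gamma'_F\vdash\ta{M}{B_F}$ with $B_F = (\$B)_F = (!B)_F$, and the conclusion's context, once erased, consists of the same underlying variables (the side conditions $\Gamma\subseteq\Delta\cup\bigcup\Theta_i\cup\bigcup\Phi_i$, resp.\ $\Gamma\subseteq\Theta\cup\Phi$, guarantee every hypothesis needed by $M$ is still present after erasure, possibly alongside extra unused ones, which \SF{} tolerates via weakening). So the \SF{} judgment is literally unchanged by these rules, as it must be since $M$ itself is unchanged and the modalities vanish. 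I expect the main obstacle to be purely organizational rather than mathematical: formulating the induction hypothesis with a precise enough description of how $\Gamma_F$ is assembled from the four zones $\Gamma,\Delta,\Theta_i,\Phi_i$ so that the merging operation $\sqcup$, the $\sqcup\{(\Theta;\emptyset)\}$ additions, and the appearance of ``fresh'' discharged variables in $\li I_!$, $\liv I$, $\$$, $!$ all line up — and double-checking that \SF{} weakening and contraction (both admissible) are being invoked wherever the erased contexts acquire spurious or shared hypotheses. Once the IH is stated with that care, every case is a one-line verification.
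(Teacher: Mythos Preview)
Your proposal is correct and follows the only natural route: induction on the \WALL{} derivation, checking that each rule collapses under the erasure $(\cdot)_F$ to an instance of a \SF{} rule (or to something admissible in \SF{} via weakening and contraction). The paper itself does not give a proof of this fact; it is stated as an observation immediately following the informal description of \WALL, and no argument is supplied either in the main text or in the appendix. Your rule-by-rule analysis is exactly what one would write out if asked to make the observation precise, and the points you flag as delicate (the modal rules $\$$ and $!$, the bookkeeping of how $\Gamma_F$ is assembled from the four zones, and the use of \SF{} weakening/contraction where contexts are merged or acquire unused hypotheses) are the right ones.
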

%%%%%%%%%%%%%%%%%%
\textbf{Measures and structural properties.}
\par
\textit{Level or depth of deductions and terms}.
The \textit{level} or \textit{depth} $\dpth{\Pi}$ of a deduction $\Pi$ is
the maximal depth of every of its subdeductions:
\small
\begin{align*}
\dpth{\Pi(A)}&=0\\
\dpth{\Pi(R,\Pi')}&=\dpth{\Pi'}+1
&(R \in\{!,\$\})\\
\dpth{\Pi(R,\Pi')}&=\dpth{\Pi'}
&(R \text{ with a single premise, } R\not\in\{!,\$\})\\
\dpth{\Pi(R,\Pi',\Pi'')}&=\max\{\dpth{\Pi'},\dpth{\Pi''}\}
&(R \text{ with two premises})
\end{align*}
\normalsize
If $\Pi\rhd\Gamma;\Delta;{\mathcal E}\vdash\ta{M}{A}$, 
then $M$ has depth $\dpth{\Pi}$, namely, a term inherits the depth of the \textit{considered} deduction that types it.
For example, the deduction $\Pi_{wy(\bs x.x)}$:
\scriptsize
\[
\infer[\li E]
{\ta{w}{\,!\alpha\li\$!\BB\li\gamma};
  \emptyset;
  \{(\emptyset;\{\ta{y}{\alpha}\})\}
  \vdash\ta{wy(\bs x.x)}{\gamma}
}
{
 \infer[\li E_!]
  {\ta{w}{\,!\alpha\li\$!\BB\li\gamma};
   \emptyset;
   \{(\emptyset;\{\ta{y}{\alpha}\})\}
   \vdash\ta{wy}{\,\$!\BB\li\gamma}
  }
  {
   \infer[A]
   {\ta{w}{\,!\alpha\li\$!\BB\li\gamma};
    \emptyset;
    \emptyset
    \vdash\ta{w}{\,!\alpha\li\$!\BB\li\gamma}
   }
   {
   }
   &
   \infer[!]
   {\emptyset;
    \emptyset;
    \{(\emptyset;\{\ta{y}{\alpha}\})\}
    \vdash\ta{y}{\,!\alpha}
   }
   {
    \infer[A]
    {\ta{y}{\alpha};
     \emptyset;
     \emptyset
     \vdash\ta{y}{\alpha}
    }
    {}
   }
  }
 &
 \infer[\$]
 {\emptyset;
  \emptyset;
  \emptyset
  \vdash\ta{\bs x.x}{\$!\BB}
 }
 {
  \infer[!]
  {\emptyset;
   \emptyset;
   \emptyset
   \vdash\ta{\bs x.x}{\,!\BB}
  }
  {
   \infer[\li I]
   {\emptyset;
    \emptyset;
    \emptyset
    \vdash\ta{\bs x.x}{\BB}
   }
   {
    \infer[A]
    {\ta{x}{\beta};
     \emptyset;
     \emptyset
     \vdash\ta{x}{\beta}
    }
    {
    }
   }
  }
 }
}
\]
\normalsize
where $\BB$ abbreviates $(\beta\li\beta)$, has 
$\dpth{\Pi_{wy(\bs x.x)}}=2$ because we cross one instance of $!$ and one instance of $\$$, going from the conclusion to the rightmost axiom.
\par
\textit{Partial size of a deduction at a given depth.}
$\psz{d}{\Pi}$ is the \textit{partial size of a deduction} $\Pi$, at depth $d\leq\dpth{\Pi}$. $\psz{d}{\Pi}$ is defined by induction on the last rule of $\Pi$:
\small
\begin{align*}
\psz{0}{\Pi(A)} &= 1
\\
\psz{d}{\Pi(A)} &= 0 & (d\geq 1)
\\
\psz{0}{\Pi(C,\Pi')} &= \psz{0}{\Pi'}
\\
\psz{d}{\Pi(C,\Pi')} 
  &= \psz{d}{\Pi'}+1 & (d\geq 1)
\\
\psz{d}{\Pi(R,\Pi')} &= \psz{d}{\Pi'} 
  & (R\in\{\forall I,\forall E\},d\geq 0)
\\
\psz{0}{\Pi(R,\Pi')} 
  &= \psz{0}{\Pi'}+1 & R\in\{\li I
                            ,\li I_{!}
                            ,\li I_{\$}
                            ,\liv I
                            \}
\\
\psz{d}{\Pi(R,\Pi')} 
  &= \psz{d}{\Pi'} 
  & (R\in\{\li I
          ,\li I_{!}
          ,\li I_{\$}
          ,\liv I
         \}, d\geq 1)
\\
\psz{d}{\Pi(R,\Pi',\Pi'')} 
  &= \psz{d}{\Pi'}
   + \psz{d}{\Pi''}
   + 1 
   & (R\in\{\li E
           ,\li E_!
           ,\liv E
           \}, d\geq0)
\\
\psz{0}{\Pi(R)} 
  &= 0
  & (R\in\{!, \$\})
\\
\psz{d}{\Pi(R,\Pi')} 
  &= \psz{d-1}{\Pi'}
  & (R\in\{!, \$\}, d\geq 1)
\end{align*}
\normalsize
The partial size at a depth outside the interval $0,\ldots,\dpth{\Pi}$ is taken equal to $0$.
The partial size over estimates the intuitive notion of partial size at a given level.
In particular, $\psz{d}{\Pi_M(R)}$ counts the instances of the rules $A, \li I, \li I_!, \li I_{\$}, \liv I$ in $\Pi_M(R)$, which do not contract any variables in $M$ after $d$ instances of $!$, and $\$$ from $R$.
On the contrary, instances of $\li E, \li E_!, \liv E$ and $C$, used to contract variables of $M$, are always counted.
Also, we observe that we do not count the contractions at level $0$ because they cannot exist there.
Reconsidering $\Pi_{wy(\bs x.x)}$ above, we get:
(i) $\psz{0}{\Pi_{wy(\bs x.x)}}=3$,
(ii) $\psz{1}{\Pi_{wy(\bs x.x)}}=3$, and
(iii) $\psz{2}{\Pi_{wy(\bs x.x)}}=4$.
(i) holds because we count the occurrences of $\li E$, $\li E_!$, and the single occurrence of $A$ outside the scope of the instances of $!$, and $\$$. 
(ii) holds because we count the single axiom with conclusion $\ta{y}{\alpha};\emptyset;\emptyset\vdash\ta{y}{\alpha}$, plus the occurrences of $\li E$ and $\li E_!$.
(iii) holds because we count the axiom with conclusion $\ta{x}{\beta};\emptyset;\emptyset\vdash\ta{x}{\beta}$, the rule $\li I$ below it, and the occurrences of $\li E$ and $\li E_!$.
\par
\textit{Width of a deduction at a given depth.}
$\wdth{d}{\Pi}$ is the \emph{width} of a deduction $\Pi$ at depth $d\leq \dpth{\Pi}$.  $\wdth{d}{\Pi}$ is defined by induction on the last rule of $\Pi$:
\small
\begin{align*}
\wdth{0}{\Pi}&= 0 \\
\wdth{d}{\Pi(A)}&= 0 & (d\geq 1)\\
\wdth{1}{\Pi(C,\Pi')}&= \wdth{1}{\Pi'}+1\\
\wdth{d}{\Pi(C,\Pi')}&= \wdth{d}{\Pi'} & (d> 1)\\
\wdth{d}{\Pi(R,\Pi')}&= \wdth{d}{\Pi'}
& (R\in\{\forall I, \forall E, \li I,\li I_{\$},\li I_{!},\liv I \}, d\geq 1)\\
\wdth{1}{\Pi(R,\Pi',\Pi'')}&= \wdth{1}{\Pi'}+\wdth{1}{\Pi''}+1
& (R\in\{\li E, \li E_!,\liv E\})\\
\wdth{d}{\Pi(R,\Pi',\Pi'')}&= \wdth{d}{\Pi'}+\wdth{d}{\Pi''}
& (R\in\{\li E, \li E_!,\liv E \}, d> 1)\\
\wdth{d}{\Pi(R,\Pi')}&= \wdth{d-1}{\Pi'}
& (R\in\{\$, !\}, d\geq 1)
\end{align*}
\normalsize
The width at a depth outside the interval $0,\ldots,\dpth{\Pi}$ is taken equal to $0$.
We observe that the first clause states that no variable contraction, by means of instances of $C, \li E, \li E_!$, and $\liv E$, can exist at level $0$. So, we do not count those rules as part of the width at that level.
We are essentially interested to observe the width at level one, where, relatively to the context, the substitutions --- hence the possible duplications --- may occur in the course of the normalization. To preserve the overall complexity bounds on the normalization the number of these duplications, namely the width that regulates them, cannot be too big.
For example, let us assume $\Pi_{(\bs x.x)(\bs fy.f(fy))}\rhd \vdash\ta{(\bs x.x)(\bs fy.f(fy))}{\UIntT\li\UIntT}$,
$\UIntT$ being $\forall \alpha.!(\alpha\li\alpha)\li\$(\alpha\li\alpha)$.
Then, 
$\wdth{1}{\Pi_{(\bs x.x)(\bs fy.f(fy))}}
=\wdth{1}{\Pi_{\bs x.x}}+\wdth{1}{\Pi_{\bs fy.f(fy)}}+1
=\wdth{1}{\Pi_{\bs y.f(fy)}}+1
=\wdth{1}{\Pi_{\bs y.f_1(f_2 y)}}+2
=\wdth{0}{\Pi_{\bs y.f_1(f_2 y)}}+2
=2$, counting $\li E$, and $C$.

\begin{lemma}[Width and size of deductions and terms.]
For every deduction $\Pi$:
\begin{enumerate}
\item 
\label{lemma:Width of a deduction vs. its partial size}
$\wdth{d}{\Pi}\leq\psz{d}{\Pi}$, with $0\leq d\leq \dpth{\Pi}$.
\item 
\label{lemma:Partial size vs. absolute size of a term}
If $\Pi\rhd\Gamma;\Delta;{\mathcal E}\vdash\ta{M}{A}$, then
$\size{M}\leq\sum_{d=0}^{\dpth{\Pi}}\psz{d}{\Pi}$.
\end{enumerate}
\end{lemma}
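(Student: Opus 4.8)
The plan is to prove the two items by independent structural inductions on the last rule of $\Pi$, the list of rules being the fixed one in Figure~\ref{figure:Weak LAL as a type assignment}. Both statements are purely combinatorial bookkeeping facts that compare two inductive measures with the same recursive shape, so the proofs amount to matching their defining clauses case by case.

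For item~\ref{lemma:Width of a deduction vs. its partial size} I would first record the trivial auxiliary facts that $\psz{d}{\Pi}\geq 0$ for every $d$ and that $\wdth{0}{\Pi}=0$ by the opening clause of the definition of width; this settles the case $d=0$ at once. For $d\geq 1$ I would go through the last rule $R$ of $\Pi$ and observe that, in each case, the defining clause for $\wdth{d}{\Pi}$ is the defining clause for $\psz{d}{\Pi}$ with some of the ``$+1$'' summands possibly suppressed: for $R=C$ at $d=1$ both add $1$ and one invokes the induction hypothesis on the subderivation; for $R\in\{\li E,\li E_!,\liv E\}$ width adds $1$ only at $d=1$ while partial size adds $1$ at every depth; for the introduction and $\forall$ rules neither side changes at $d\geq 1$. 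The only case needing a moment's care is $R\in\{!,\$\}$, where both measures re-index to $d-1$: when $d>1$ we apply the induction hypothesis to $\Pi'$ at depth $d-1$, and when $d=1$ we bottom out on $\wdth{0}{\Pi'}=0\leq\psz{0}{\Pi'}$.

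For item~\ref{lemma:Partial size vs. absolute size of a term} I would set $S(\Pi)=\sum_{d=0}^{\dpth{\Pi}}\psz{d}{\Pi}$ and exploit the stated convention that $\psz{d}{\Pi}=0$ for $d$ outside $[0,\dpth{\Pi}]$, which lets one freely widen or narrow summation ranges. Then I induct on the last rule. The axiom rule gives $S(\Pi)=1=\size{x}$. For the abstraction rules $\li I,\li I_{!},\li I_{\$},\liv I$ the conclusion term is $\bs x.M$ and exactly one ``$+1$'' is added to $\psz{0}{\cdot}$, so $S(\Pi)=S(\Pi')+1\geq\size{M}+1=\size{\bs x.M}$. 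For the application rules $\li E,\li E_!,\liv E$ we have $\dpth{\Pi}=\max\{\dpth{\Pi'},\dpth{\Pi''}\}$ and $\psz{d}{\Pi}=\psz{d}{\Pi'}+\psz{d}{\Pi''}+1$ on $[0,\dpth{\Pi}]$; summing and using the convention to recognise $\sum_{d=0}^{\dpth{\Pi}}\psz{d}{\Pi'}=S(\Pi')$ and likewise for $\Pi''$ yields $S(\Pi)=S(\Pi')+S(\Pi'')+\dpth{\Pi}+1\geq\size{M}+\size{N}+1=\size{MN}$. For $R\in\{!,\$\}$ the term is unchanged, $\dpth{\Pi}=\dpth{\Pi'}+1$, $\psz{0}{\Pi}=0$ and $\psz{d}{\Pi}=\psz{d-1}{\Pi'}$, so a re-index gives $S(\Pi)=S(\Pi')$. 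For $C$ the conclusion term $M\{^z/_x{}^z/_y\}$ has the same size as $M$, being a variable-for-variable renaming, and $S(\Pi)=S(\Pi')+\dpth{\Pi'}\geq S(\Pi')$; the $\forall$ rules leave both the term and $S$ untouched.

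I do not expect a real obstacle: both items are routine inductions over the rule set. The only points that repay attention are keeping the summation ranges honest in the binary-rule case of item~\ref{lemma:Partial size vs. absolute size of a term}, by appealing to the ``partial size is $0$ outside the interval'' convention so that subderivations of unequal depth can be combined, and the index shift in the $!$/$\$$ case of both items, where at depth~$1$ one must fall back on $\wdth{0}{\Pi'}=0$ (respectively on the vanishing level-$0$ contribution of $\psz{0}{\Pi}$) rather than on the induction hypothesis.
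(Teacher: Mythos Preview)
Your proposal is correct and follows exactly the approach the paper indicates: a structural induction on $\Pi$, with the only substantive remark being that contractions do not contribute to term size (your ``variable-for-variable renaming'' observation for $C$). Your case analysis is more explicit than the paper's one-line justification but matches it in content; there is nothing to add.
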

The two points hold by structural induction on $\Pi$. In particular the size of a term cannot be greater than the size of a deduction that gives it a type because the instances of the $C$ rule disappear in the $\lambda$-terms.
\par
The structural properties, here below, are the preliminary steps to prove the substitution property (Lemma~\ref{lemma:subst-vs-wght} of Section~\ref{section:Dynamic properties of WALL}), which, in turn, serves to show that \WALL\ enjoys the subject reduction (Theorem~\ref{theorem:substitution-property} of Section~\ref{section:Dynamic properties of WALL}) with respect to a suitable notion of reduction, and with the wanted polynomial bound. Essentially, the structural properties here below say that the assumptions of a deduction can be weakened, or deleted, fix bounds on the number of occurrences of a variable in a typeable term, and highlight the structure of the subdeductions that introduce a variable in the type assignments and contexts of a judgment.
\begin{lemma}[Structural properties.]
\label{lemma:structural-properties-WALL}
Let $\Pi(R)\rhd \Gamma;\Delta;{\mathcal E}\vdash\ta{M}{B}$, and
${\mathcal E}=
\{(\Theta_0;\emptyset),(\Theta_1;\Phi_1),$ $\ldots,(\Theta_m;\Phi_m)\}$.
\begin{enumerate}
\item 
\label{lemma:structural-properties-WALL-(-1)}
For every linear type $L$, there exists 
$\Pi'(R)\rhd
 \Gamma\subs{L}{\alpha};
 \Delta\subs{L}{\alpha};
 {\mathcal E}\subs{L}{\alpha}\vdash\ta{M}{B\subs{L}{\alpha}}$ such that
$\dpth{\Pi}=\dpth{\Pi'}$, $\psz{d}{\Pi}=\psz{d}{\Pi'}$ and $\wdth{d}{\Pi}=\wdth{d}{\Pi'}$, for every $d\leq\dpth{\Pi}$.

\item 
\label{lemma:structural-properties-WALL-0-nuovo}
Then, $\nocc{x}{M}\leq 1$, for every variable $x$ of $\dom{\Gamma}\cup\dom{\Delta}\cup(\bigcup^{m}_{i=0}\dom{\Theta_i})$.

\item 
\label{lemma:structural-properties-WALL-1}
For every $d\leq \dpth{\Pi}$ and ${\mathcal E}'$,
there is
$\Pi'\rhd
 \Gamma';
 \Delta';
 {\mathcal E}\sqcup{\mathcal E}'\vdash\ta{M}{A}$, 
such that 
$\Gamma\subseteq\Gamma'$,
$\Delta\subseteq\Delta'$,
$\dpth{\Pi}=\dpth{\Pi'}$,
$\wdth{d}{\Pi}=\wdth{d}{\Pi'}$, and
$\psz{d}{\Pi}=\psz{d}{\Pi'}$.

\item 
\label{lemma:structural-properties-WALL-4}
For every 
$x\in\dom{\Gamma}\cup\dom{\Delta}\cup(\bigcup^{m}_{i=0}\dom{\Theta_i})\cup(\bigcup^{m}_{i=1}\dom{\Phi_i})$,
if $\nocc{x}{M}=0$, then there exists
$\Pi'\rhd\Gamma';\Delta';{\mathcal E}'\vdash\ta{M}{A}$ such that
$x\not\in\dom{\Gamma'}\cup\dom{\Delta'}\cup{\mathcal E}'$ with the same depth, width and size of $\Pi$.

\item 
\label{lemma:structural-properties-WALL-5}
For every linear partially discharged type assignment $\ta{x}{A}\in\Delta$, there is
$\Pi'(R')\preceq\Pi$ introducing $\ta{x}{A}$ such that
$R'\in\{A,\$\}$ and
$\Pi'(R')\rhd
\Gamma';\Delta',\ta{x}{A};{\mathcal E}'\vdash\ta{N}{C}$,
for some
$\Gamma', \Delta', {\mathcal E}'$, and $C$. If $R'\equiv A$, then $\nocc{x}{M}=0$.

\item 
\label{lemma:structural-properties-WALL-6}
For every elementary partially discharged type assignment $\ta{x}{A}\in\Theta_i$, there is
$\Pi'(R')\preceq\Pi$ introducing $\ta{x}{A}$ such that
$R'\in\{A,\$, !\}$ and
$\Pi'(R')\rhd
\Gamma';\Delta';{\mathcal E}',(\Theta',\ta{x}{A};\Phi')\vdash\ta{N}{C}$,
for some
$\Gamma', \Delta', {\mathcal E}', \Theta', \Phi'$, and $C$. If $R'\equiv A$, then $\nocc{x}{M}=0$.

\item 
\label{lemma:structural-properties-WALL-2}
For every polynomial partially discharged assignment $\ta{x}{A}\in\Phi_1\cup\ldots\cup\Phi_m$, there are $n\geq 1$ and $q_1,\ldots,q_n\geq 0$ such that $\wdth{1}{\Pi(R)}\geq q_1+\ldots+q_n$ and the following three points hold:
\begin{enumerate}
\item 
there is $M'$ such that $M$ can be written as 
$M'\subs{x}{x^1_1\ldots x^1_{q_1}\ldots\ldots x^n_1\ldots x^n_{q_n}}$;

\item 
for every $1\leq i\leq n$,
there is
$\Pi'_i(R_i)\rhd
\Gamma_i;\Delta_i;
{\mathcal E}_i,
(\Theta^i_1;\{\ta{x^i_1}{A}\}),
\ldots,
(\Theta^i_{q_i};$ $\{\ta{x^i_{q_i}}{A}\})
\vdash\ta{P_i}{C_i}$, subdeduction of $\Pi$,
with $R_i\in\{A, \$, !\}$,
that introduces $\ta{x^i_1}{A},\ldots,\ta{x^i_{q_i}}{A}$;

\item 
$q_1+\ldots+q_n-1$ instances of $C, \li E, \li E_!, \liv E$ are required in the tree with the conclusion of $\Pi$ as root and the conclusions of all the deductions
$\Pi'_1,\ldots, \Pi'_n$ as leaves to contract $x^1_1\ldots x^1_{q_1}\ldots\ldots x^n_1\ldots x^n_{q_n}$ to $x$.
\end{enumerate}

\item 
\label{lemma:structural-properties-WALL-3.1}
If $M\in$ \PTV, and $B$ is $!A$, for some $A$, then $R\in\{!\}$
and $\FV{M}\subseteq\dom{\mathcal E}$.

\item 
\label{lemma:structural-properties-WALL-3.3}
If $M\in$ \PTV, and $B$ is $\$A$, for some $A$, then $R\in\{\$, C\}$
and $\FV{M}\subseteq\dom{\Delta}\cup\dom{{\mathcal E}}$.
\end{enumerate}
\end{lemma}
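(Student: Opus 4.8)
The plan is to prove all the points by a single simultaneous structural induction on $\Pi$, with a case analysis on its last rule $R$; it helps to keep three groups apart. Points~\ref{lemma:structural-properties-WALL-(-1)}, \ref{lemma:structural-properties-WALL-0-nuovo} and~\ref{lemma:structural-properties-WALL-1} are \emph{shape-stable closure} properties: I would rebuild $\Pi$ with exactly the same tree of rules, so that $\dpth{\Pi}$, $\psz{d}{\Pi}$ and $\wdth{d}{\Pi}$ --- which read only the shape of the derivation and the names of its rules, not the particular formulae --- are preserved for free. Points~\ref{lemma:structural-properties-WALL-4}, \ref{lemma:structural-properties-WALL-5}, \ref{lemma:structural-properties-WALL-6} and~\ref{lemma:structural-properties-WALL-2} are \emph{tracing} properties: one fixes a type assignment occurring in $\Gamma;\Delta;{\mathcal E}$ and follows it upward through $\Pi$ to the rule that created it. Points~\ref{lemma:structural-properties-WALL-3.1} and~\ref{lemma:structural-properties-WALL-3.3} pin down the last rule of a deduction whose subject is a value and whose type is modal, exploiting the shape constraints the $!$ and $\$$ rules impose on ${\mathcal E}$ together with the inductive hypothesis on the premise. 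Throughout one uses tacitly that in every $\Pi\rhd\Gamma;\Delta;{\mathcal E}\vdash\ta{M}{A}$ we have $\FV{M}\subseteq\dom\Gamma\cup\dom\Delta\cup\dom{\mathcal E}$, an immediate induction on $\Pi$.

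\noindent\textbf{Points~\ref{lemma:structural-properties-WALL-(-1)}--\ref{lemma:structural-properties-WALL-1}.}
For point~\ref{lemma:structural-properties-WALL-(-1)} I push $\subs{L}{\alpha}$ through every rule: since linear formulae are closed under substitution of a linear formula for a free propositional variable (noted just after the grammar), every rule instance stays legal; the only rule that ``consumes'' a substitution is $\forall E$, where $(L\{L'/\alpha\})\subs{L''}{\beta}$ is again of the needed form after renaming bound propositional variables. The tree of rules is unchanged, so the three measures are preserved. For point~\ref{lemma:structural-properties-WALL-0-nuovo} the base case is the axiom $A$, with $M\equiv x$ and $\nocc{x}{x}=1$; for the binary rules $\li E,\li E_!,\liv E$ the domains of $\Gamma_M,\Delta_M$, of $\Gamma_N,\Delta_N$ and of the $\Theta$-components of ${\mathcal E}_M,{\mathcal E}_N$ are forced disjoint, so a variable in the scope of the claim comes from exactly one premise and inherits its bound; for $C,\li I,\li I_{\$},\li I_!,\liv I,\$,!$ the only variables leaving that scope are bound or promoted to a $\Phi$-component, and the others keep their count (the renaming performed by $C$ does not touch them). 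For point~\ref{lemma:structural-properties-WALL-1} I thread the extra context ${\mathcal E}'$ along $\Pi$: the axiom admits arbitrary $\Gamma,\Delta,{\mathcal E}$, so there ${\mathcal E}'$ and the enlarged linear zones can be injected; every other rule transmits ${\mathcal E}$, merges it via $\sqcup$, or reshuffles it through $\$/!$, and in all these cases enlarging the context only relaxes the side conditions of the form $\Gamma\subseteq\cdots$ and $\dom\Phi\cap\FV{M}\neq\emptyset$. Again the shape, hence the measures, is untouched.

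\noindent\textbf{Points~\ref{lemma:structural-properties-WALL-4}--\ref{lemma:structural-properties-WALL-2}.}
For point~\ref{lemma:structural-properties-WALL-4}, if $\nocc{x}{M}=0$ then $x$ entered the judgment ``spuriously'' --- in the context part of an axiom, in the fresh slots $\Gamma',\Delta$ (resp. a fresh $\Theta_i$ or $\Phi_i$) of a $\$$ or $!$ rule, or left over in the premise of a $\li I$-family rule --- and a case analysis, using point~\ref{lemma:structural-properties-WALL-0-nuovo} to certify that no genuine occurrence is hidden, suppresses (or renders vacuous) its introduction without disturbing $\dpth{\Pi},\psz{d}{\Pi},\wdth{d}{\Pi}$. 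For points~\ref{lemma:structural-properties-WALL-5} and~\ref{lemma:structural-properties-WALL-6}: given $\ta{x}{A}\in\Delta$ (resp. $\in\Theta_i$), trace it upward; it stays in the $\Delta$-zone (resp. in an elementary component) through every rule that only transmits or disjointly splits the contexts, through the $\forall$-rules and through the binders of the $\li I$-family (a binder never \emph{creates} such an assignment, it only discharges its own); the rules that can create it are the axiom $A$ --- and then $x$ is spurious, so $\nocc{x}{M}=0$ by point~\ref{lemma:structural-properties-WALL-4} --- the $\$$ rule (which moves a variable from the linear zone of its premise, or a fresh one, into the $\$\Delta',\Delta$, resp. a $\Theta_i$, of its conclusion), and, for elementary assignments only, the $!$ rule (whose $\$\Theta'$ and $\Theta$ are built the same way); this gives $R'\in\{A,\$\}$, resp. $R'\in\{A,\$,!\}$. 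Point~\ref{lemma:structural-properties-WALL-2} is the heavy one: starting from $\ta{x}{A}\in\Phi_j$, trace $x$ upward; a single occurrence splits into two each time one crosses a $C$ contracting it or a $\li E,\li E_!,\liv E$ whose $\sqcup$ fuses two pairs $(\Theta_M;\{\ta{x}{A}\})$ and $(\Theta_N;\{\ta{x}{A}\})$; iterating reaches finitely many maximal subderivations $\Pi'_1(R_1),\ldots,\Pi'_n(R_n)$ with $R_i\in\{A,\$,!\}$ that \emph{introduce} the copies --- a single $\$$ rule may spawn the whole batch $\ta{x^i_1}{A},\ldots,\ta{x^i_{q_i}}{A}$ of group $i$ through its $m$-fold $\{(\Theta_k;\Phi_k)\}$ component, while $A$ and $!$ spawn one each. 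Renaming the copies apart and undoing the fusions exhibits $M$ as $M'\subs{x}{x^1_1\ldots x^n_{q_n}}$ together with the $q_1+\cdots+q_n-1$ instances of $C,\li E,\li E_!,\liv E$ of part~(c); all of them sit at depth~$1$, because no $!$- or $\$$-box is crossed while $x$ is still polynomial, and since $\wdth{1}{\Pi(R)}$ over-counts these by (at least) the node where $x$ is finally read, we obtain $\wdth{1}{\Pi(R)}\geq q_1+\cdots+q_n$.

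\noindent\textbf{Points~\ref{lemma:structural-properties-WALL-3.1}, \ref{lemma:structural-properties-WALL-3.3} and the main obstacle.}
Let $M\in\PTV$. No application is a value, so $R\notin\{\li E,\li E_!,\liv E\}$; the introduction rules $\li I,\li I_{\$},\li I_!,\liv I$ conclude with a non-modal implication, $\forall I$ with a quantified formula, and the axiom $A$ and $\forall E$ with a linear formula, so all are excluded once $B$ is modal. If $B$ is $!A$, then $\$$ is excluded (it concludes a $\$$-type) and $C$ is excluded by the inductive hypothesis: by point~\ref{lemma:structural-properties-WALL-3.1} applied to the premise of a putative $C$ (its subject again a value, its type still $!A$) that premise would be the conclusion of a $!$ rule, so its ${\mathcal E}$, of the form $\{(\$\Theta';\emptyset)\}\sqcup\{(\Theta;\Phi)\}$, has at most one non-empty $\Phi$-component whereas $C$ needs two; hence $R\in\{!\}$, and chasing $\FV{M}$ through the side condition $\Gamma\subseteq\Theta\cup\Phi$ of the $!$ rule (its premise having empty $\Delta$) gives $\FV{M}\subseteq\dom{\mathcal E}$. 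If $B$ is $\$A$, then $!$ is excluded and $C$ is admissible, since the $\$$ rule may output several non-empty $\Phi$-components; so $R\in\{\$,C\}$, and the analogous chase of $\FV{M}$ through the side condition of $\$$ (resp. of $C$) gives $\FV{M}\subseteq\dom\Delta\cup\dom{\mathcal E}$. The step I expect to be the real obstacle is point~\ref{lemma:structural-properties-WALL-2}: bookkeeping exactly how one polynomial assignment fans out into the $n$ groups of $q_i$ copies --- in particular tracking the $\sqcup$-fusions of $\Phi$-pairs and the fact that a single $\$$ rule spawns a whole batch of polynomial variables --- and matching that count precisely with the level-$1$ width. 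Everything else is routine shape preservation or rule exclusion.
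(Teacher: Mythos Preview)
Your proposal is correct and follows essentially the same approach as the paper: structural induction on $\Pi$ with case analysis on the last rule, grouping the points into shape-preservation (1--3), upward tracing of assumptions (4--7), and rule exclusion for modal values (8--9). You rightly single out point~\ref{lemma:structural-properties-WALL-2} as the place where the real work lies, and the paper's appendix confirms this by spelling out only that case (the $C$ rule in detail) together with point~\ref{lemma:structural-properties-WALL-3.1}.

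Two small remarks where your write-up and the paper diverge in presentation. First, for point~\ref{lemma:structural-properties-WALL-3.1} the paper splits into the sub-cases $M\equiv x$ and $M\equiv\bs x.N$ and argues each directly, rather than invoking the inductive hypothesis to rule out $C$; your inductive argument is fine and arguably cleaner, and your exclusion of $\forall E$ (it concludes a linear formula) is more direct than the paper's. Second, be careful with the phrase ``all of them sit at depth~$1$'' for the contractions in point~\ref{lemma:structural-properties-WALL-2}: those instances of $C,\li E,\li E_!,\liv E$ actually live at the outermost level of $\Pi$ (no box is crossed, as you say), but the width function $\wdth{1}{\cdot}$ is defined so that it counts precisely the level-$0$ instances of these rules; the paper's bookkeeping for the bound $\wdth{1}{\Pi}\geq q_1+\cdots+q_n$ is indeed delicate, and your sketch of ``over-counting by the node where $x$ is finally read'' would need to be made precise along the lines of the paper's inductive accounting in the $C$ case.
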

Point \ref{lemma:structural-properties-WALL-(-1)} holds because the substitution of linear types for a variable of a type $A$ cannot change the nature of $A$: it remains linear if it was as such before the substitution, or, in the other case, modal.
Point \ref{lemma:structural-properties-WALL-0-nuovo} holds because, by definition, the domains of linear, linear partially discharged and elementary partially discharged type assignments that belong to distinct premises must be disjoint. This implies that $\nocc{x}{M}$ cannot be greater that $1$. We also admit weakening on the type assignments of the rules $A, !$, and $\$$. So, a variable name may also not be occurring in $M$.
Point \ref{lemma:structural-properties-WALL-1} holds by using the weakening implicit in the rules $A, !, \$$.
Point \ref{lemma:structural-properties-WALL-4} holds by omitting the use of weakening implicit in the rules $A, !, \$$.
Points \ref{lemma:structural-properties-WALL-5} and \ref{lemma:structural-properties-WALL-6} hold by simply inspecting the rules and observing that the only rules that introduce linear partially discharged and elementary partially discharged type assignments are $A, !$, and $\$$. In particular, $A$ can only introduce them as fake assumptions.
Points \ref{lemma:structural-properties-WALL-2}, \ref{lemma:structural-properties-WALL-3.1} and \ref{lemma:structural-properties-WALL-3.3}, by structural induction on $\Pi$.
\par
\textbf{Notation.} 
The definition of partially discharged context justifies to
shorten
$\Pi\rhd\Gamma;\Delta;{\mathcal E}
 \sqcup\{(\Theta;\{\ta{x}{A}\})\}\vdash\ta{M}{B}$, or
$\Pi\rhd\Gamma;\Delta;{\mathcal E},
 \{(\Theta;\{\ta{x}{A}\})\}\vdash\ta{M}{B}$
by means of
$\Pi\rhd\Gamma;\Delta;{\mathcal E}
 \sqcup\{(\Theta;\ta{x}{A})\}\vdash\ta{M}{B}$, or
$\Pi\rhd\Gamma;\Delta;{\mathcal E},
 (\Theta;\ta{x}{A})\vdash\ta{M}{B}$.
%%%%%%%%%
\section{Dynamic properties}
\label{section:Dynamic properties of WALL}
\textbf{What we already know.}
\cite{Asperti:1998-LICS,Asperti02TOCL,BaillotTerui:2004-LICS,Cop-DLag-Ron:EALCBV-04} remark the independence between the normalization of the deductions of some deductive system, derived from Linear logic for characterizing some computational class, and the standard $\beta$-reduction of usual $\lambda$-terms. Unsurprisingly, we have analogous phenomenon with \WALL. We can observe, indeed, that the deduction 
$\Pi\rhd\ta{y}{\, !A\li!A\li A},\ta{w}{A\li!A},\ta{z}{A};
\emptyset;\emptyset\vdash\ta{(\bs x.yxx)(wz)}{A}$ exists, 
but we cannot build the one giving type to the $\beta$-reduct $y(wz)(wz)$ of $(\bs x.yxx)(wz)$. This because
$\ta{y}{\, !A\li!A\li A},\ta{w}{A\li!A},\ta{z}{A};
\emptyset;\emptyset\vdash\ta{y(wz)}{\,!A\li A}$ and
$\ta{w}{A\li!A},\ta{z}{A};\emptyset;\emptyset\vdash\ta{wz}{\,!A}$
would require an instance of $\li E_!$ where the domain of the linear type assignments in its two assumptions intersect.
The problem persists even when the $\beta$-redex contains only linear variables and no $!$-modal types at all. For example, consider the following deduction $\Pi_{(\bs x.M)(wz)}$, where $x\in\FV{M}$:
\scriptsize
\[
\infer[\li E]
{
 \ta{w}{C\li \$A},\ta{z}{C};\emptyset;\emptyset
 \vdash\ta{(\bs x.M)(wz)}{\$B}
}
{
 \infer[\li I_{\$}]
 {\emptyset;\emptyset;\emptyset\vdash\ta{\bs x.M}{\$A\li\$B}}
 {
  \emptyset;\ta{x}{A};\emptyset\vdash\ta{M}{\$B}
 }
 &
 \infer[\li E]
 {\ta{w}{C\li\$A},\ta{z}{C};\emptyset;\emptyset\vdash\ta{wz}{\$A}}
 {
  \infer[A]
  {\ta{w}{C\li\$A};\emptyset;\emptyset\vdash\ta{w}{C\li\$A}}
  {}
  &
  \infer[A]
  {\ta{z}{C};\emptyset;\emptyset\vdash\ta{z}{C}}
  {}
 }
}
\]
\normalsize
The $\beta$-reduction $(\bs x.M)(wz)\rightarrow_{\beta}M\subs{wz}{x}$ would correspond to eliminate  the sequence of rules $\li I_{\$}$ and $\li E$ in $\Pi_{(\bs x.M)(wz)}$. Such an elimination would leave us with the conclusion of 
$\Pi'(\li E)\rhd
\ta{w}{C\li\$A},\ta{z}{C};\emptyset;\emptyset\vdash\ta{wz}{\$A}$ that must be plugged into the partially discharged assumption $\ta{x}{A}$ of $\emptyset;\ta{x}{A};\emptyset\vdash\ta{M}{\$B}$. But this is structurally illegal, since the conclusion of $\Pi'(\li E)$ and the partially discharged assumption $\ta{x}{A}$ live at different depths.
%%%%%%%%%%%%%%%
\par
\textbf{The restriction on the contexts of $\liv E$.}
In Section~\ref{section:Weak Affine Light Typing WALL} we have intuitively described how $\liv$ forces the correct substitution discipline, relatively to the elementary partially discharged assumptions. Let us assume, for a moment, to relax $\liv E$ to $\liv E'$\footnote{This rule was used in an earlier version of this work and the associated counterexample was pointed out by an anonymous referee.} as follows:
\small
\[
\infer[\liv E']
  {
   \Gamma_M,\Gamma_N; 
   \Delta;
   {\mathcal E}_M
   \sqcup
   {\mathcal E}_N
   \vdash \ta{MN}{B}
  }
  {
   \Gamma_M; 
   \Delta;
   {\mathcal E}_M
   \vdash \ta{M}{\$A\liv B}
   &
   \Gamma_N; 
   \emptyset;
   {\mathcal E}_N
   \vdash \ta{N}{\$A}
   &
   {\mathcal E}_N\subseteq\{(\Theta;\emptyset)\}
   }
\]
\normalsize
with an arbitrary $\Gamma_N$. Then, we could write the following deduction:
\scriptsize
\[
\infer[\li E]
{
 \Gamma_M;\Delta_M;{\mathcal E}_M\sqcup\{(\emptyset;\ta{W}{D})\}
 \vdash\ta{(\bs x.M(xI))(\bs z.N)}{B}
}
{
 \infer[\li I]
 {
  \Gamma_M;\Delta_M;{\mathcal E}_M\vdash\ta{\bs x.M(xI)}{(C\li \$A)\li B}
 }
 {
  \infer[\liv E']
  {
   \Gamma_M,\ta{x}{C\li\$A};\Delta_M;{\mathcal E}_M\vdash\ta{M(xI)}{B}
  }
  {
   \Gamma_M;\Delta_M;{\mathcal E}_M\vdash\ta{M}{\$A\liv B}
   &
\infer[\li E]
{
\ta{x}{C\li\$A};\emptyset;\emptyset\vdash\ta{xI}{\$A}
}
{
\infer[A]
{\ta{x}{C\li\$A};\emptyset;\emptyset\vdash\ta{x}{C\li\$A}}
{}
&
\Pi_I\rhd \emptyset;\emptyset;\emptyset\vdash \ta{I}{C}
}
  }  
 }
 &
 \infer[\li I]
 {
 \emptyset;\emptyset;\{(\emptyset;\ta{w}{D})\}\vdash\ta{\bs z.N}{C\li\$A}
 }
 {
  \infer[\$ ]
  {
   \ta{z}{C};\emptyset;\{(\emptyset;\ta{w}{D})\}\vdash\ta{N}{\$A}
  }
  {
   \Pi_N\rhd \ta{w}{D};\emptyset;\emptyset\vdash\ta{N}{A}
  }
 }
}
\]
\normalsize
where $C$ stands for $\alpha\li\alpha$, and $I$ for $\bs x.x$.
Indifferently using the $\beta$-reduction, or its call-by-value version, namely the one where $(\bs x.M)N\rightarrow_{\beta_v} M\subs{N}{x}$ only if $N$ is a variable or a $\lambda$-abstraction, we would rewrite $(\bs x.M(xI))(\bs z.N)$ to $MN$, typeable  with:
\small
\[
  \infer[\liv E']
  {
   \Gamma_M;\Delta_M;{\mathcal E}_M\sqcup\{(\emptyset;\ta{w}{D})\}
   \vdash\ta{MN}{B}
  }
  {
   \Gamma_M;\Delta_M;{\mathcal E}_M\vdash\ta{M}{\$A\liv B}
   &
     \infer[\$ ]
     {
      \emptyset;\emptyset;\{(\emptyset;\ta{w}{D})\}\vdash\ta{N}{\$A}
     }
     {
      \Pi_N\rhd \ta{w}{D};\emptyset;\emptyset\vdash\ta{N}{A}
     }
  }  
\]
\normalsize
Namely, a relaxed set of assumptions in $\liv E'$ would allow to generate a $\$$-box that depends on a polynomial assumption. That $\$$-box could be used for building a $!$-box, here inside $M$, with more than one polynomial assumption.
\par
\textbf{The side condition on the rule $!$.}
We focus on the condition $\Theta\neq\emptyset \Rightarrow \dom{\Phi}\cap\FV{M}\neq\emptyset$ of the rule $!$. It is justified by our goal to control the duplication at the level of the deductions directly inside the syntax of the typeable $\lambda$-terms. Let us assume to have a typeable term $(\lambda x.M)N$ where $\nocc{x}{M}>1$ and let $N$ be a value such that $\FV{N}\subseteq\{y\}$, for some $y$. In principle we are in front of a redex. But this is true only if the type of $y$ is $!$-modal. A relaxed version of $!$ like:
\small
\[
\infer[!']
 {\Gamma';
  \Delta;
  \{(\$\Theta';\emptyset)\}\sqcup\{(\Theta;\Phi)\}
  \vdash \ta{M}{\ !B}
 }
 {
  \Gamma;
  \emptyset;
  \{(\Theta';\emptyset)\}
  \vdash \ta{M}{B}
  &
  \Gamma
  \subseteq
  \Theta\cup\Phi
  &
  \Theta\neq\emptyset \Rightarrow \Phi\neq\emptyset
 }
\]
\normalsize
for example, would allow to derive:
\small
\[
\infer[!']
 {\Gamma';
  \Delta;
  \{(\{\ta{x}{A}\};\ta{y}{C})\}
  \vdash \ta{\bs w.M}{\,!B}
 }
 {
  \ta{x}{A},\ta{y}{C};
  \emptyset;
  \emptyset
  \vdash \ta{\bs w.M}{B}
  &
  y\not\in\FV{M}
 }
\]
\normalsize
where $\bs w.M$ depends on an elementary partially discharged assumption that cannot be duplicated in case the whole $\bs w.M$ is. In a sense, the current precondition on the rule $!$ assures that it depends on elementary partially discharged assumptions only when necessary.
\par
\textbf{There is also $\li E_!$.}
The rule $\li E_!$ prevents the existence of exponential free variables in the term $M$. To see why, let us assume to drop such a constraint and to consider the following derivation with a relaxed version $\li E'_!$ of $\li E_!$:
\small
\[
\infer[\li E'_!]
{
 \emptyset;\emptyset;\{(\{\ta{x}{C}\};\emptyset)\}\vdash \ta{(\bs y.M)I}{\,!B}
}
{
 \infer[\li I_!]
 {
 \emptyset;\emptyset;\{(\{\ta{x}{C}\};\emptyset)\}\vdash\ta{\bs y. M}{\,!A\li !B}
 }
 {
  \infer[!]
  {
   \emptyset;\emptyset;\{(\{\ta{x}{C}\};\ta{y}{A})\}\vdash\ta{M}{\,!B}
  }
  {
   \Pi_M\rhd\ta{x}{C},\ta{y}{A};\emptyset;\emptyset\vdash\ta{M}{B}
   &
   x,y \in\FV{M}
  }
 }
 &
 \infer[!]
 {
  \emptyset;\emptyset;\emptyset\vdash\ta{I}{\,!A}
 }
 {
  \Pi_{I}\rhd \emptyset;\emptyset;\emptyset\vdash\ta{I}{A}
 } 
}
\]
\normalsize
where $A\equiv\alpha\li\alpha$ and $I\equiv\bs x.x$. Reasonably, $(\bs y.M)I$ could be reduced to $M\subs{I}{y}$, corresponding to:
\small
\[
  \infer[!]
  {
   \emptyset;\emptyset;\{(\{\ta{x}{C}\};\emptyset)\}\vdash\ta{M\subs{I}{y}}{\,!B}
  }
  {
   \Pi_{M\subs{I}{y}}\rhd\ta{x}{C};\emptyset;\emptyset
   \vdash\ta{M\subs{I}{y}}{B}
   &
   x \in\FV{M}
  }
\]
\normalsize
with an illegal application of our current rule $!$.
\par
\textbf{The formal counterpart.}
All the above observations imply what follows.
\begin{lemma}[Substitution property.]
\label{lemma:subst-vs-wght}
Let $N$ be a value of \PTV, and $x, x_1, \ldots, x_n$ belong to $\FV{M}$.
\begin{enumerate}

\item
\label{lemma:subst-vs-wght-01}
If
$\Pi_M\rhd
\Gamma_M,\ta{x}{L};
\Delta_M;
{\mathcal E}_M
\vdash \ta{M}{B}$, 
and
$\Pi_N\rhd
\Gamma_N; 
\Delta_N;
{\mathcal E}_N
\vdash \ta{N}{L}$,
then there exists
$\Pi_{M\subs{N}{x}}
\rhd
\Gamma_M, \Gamma_N; 
\Delta_M, \Delta_N; 
{\mathcal E}_M\sqcup{\mathcal E}_N
\vdash \ta{M\subs{N}{x}}{B}$ such that:
\begin{enumerate}
	\item 
	\label{lemma:subst-vs-wght-01-b}
	$\dpth{\Pi_{M\subs{N}{x}}}=\max\{\dpth{\Pi_{M}},\dpth{\Pi_{N}}\}$;
	
	\item 
	\label{lemma:subst-vs-wght-01-a}
	$\wdth{d}{\Pi_{M\subs{N}{x}}}=\wdth{d}{\Pi_{M}}+\wdth{d}{\Pi_{N}}$, for every $d\geq 0$;
	
	\item
	\label{lemma:subst-vs-wght-01-c}
	$\psz{0}{\Pi_{M\subs{N}{x}}}<\psz{0}{\Pi_{M}}+\psz{0}{\Pi_{N}}$;
	
	\item
	\label{lemma:subst-vs-wght-01-d}
	$\psz{d}{\Pi_{M\subs{N}{x}}}=\psz{d}{\Pi_{M}}+\psz{d}{\Pi_{N}}$, for every $d\geq 1$.
\end{enumerate}

\item
\label{lemma:subst-vs-wght-03}
If 
$\Pi_M\rhd
\Gamma_M;
\Delta_M,\ta{x}{A}; 
{\mathcal E}_M
\vdash \ta{M}{B}$,
and
$\Pi_N\rhd
\Gamma_N; 
\Delta_N; 
{\mathcal E}_N
\vdash \ta{N}{\$A}
$, 
then there exists
$\Pi_{M\subs{N}{x}}\rhd
\Gamma_M, \Gamma_N; 
\Delta_M, \Delta_N;
{\mathcal E}_M\sqcup{\mathcal E}_N
\vdash \ta{M\subs{N}{x}}{B}$ such that:
\begin{enumerate}
	\item 
	\label{lemma:subst-vs-wght-03-b}
	$\dpth{\Pi_{M\subs{N}{x}}}=\max\{\dpth{\Pi_{M}},\dpth{\Pi_{N}}\}$;
	
	\item 
	\label{lemma:subst-vs-wght-03-a}
	$\wdth{d}{\Pi_{M\subs{N}{x}}}=\wdth{d}{\Pi_{M}}+\wdth{d}{\Pi_{N}}$, for every $d\geq 0$;
	
	\item
	\label{lemma:subst-vs-wght-03-c}
	$\psz{0}{\Pi_{M\subs{N}{x}}}=\psz{0}{\Pi_{M}}$;
	
	\item
	\label{lemma:subst-vs-wght-03-d}
	$\psz{d}{\Pi_{M\subs{N}{x}}}\leq\psz{d}{\Pi_{M}}+\psz{d}{\Pi_{N}}$, 
	for every $d\geq 1$.
\end{enumerate}

\item
\label{lemma:subst-vs-wght-05}
If
$\Pi_M\rhd
\Gamma_M;
\Delta_M; 
{\mathcal E}_M,(\Theta_M,\{\ta{x}{A}\};\emptyset)
\vdash \ta{M}{B}
$,
and
$\Pi_N\rhd
\emptyset;
\emptyset;
{\mathcal E}_{N}
\vdash \ta{N}{\$A}
$, with ${\mathcal E}_{N}\subseteq\{(\Theta_N;\emptyset)\}$,
then there exists
$\Gamma_M;\Delta_M;
{\mathcal E}_M,
\{(\Theta_M;\emptyset)\}\sqcup{\mathcal E}_{N}
\vdash \ta{M\subs{N}{x}}{B}$ such that:
\begin{enumerate}
	\item 
	\label{lemma:subst-vs-wght-05-b}
	$\dpth{\Pi_{M\subs{N}{x}}}=\max\{\dpth{\Pi_{M}},\dpth{\Pi_{N}}\}$;
	
	\item 
	\label{lemma:subst-vs-wght-05-a}
	$\wdth{d}{\Pi_{M\subs{N}{x}}}=\wdth{d}{\Pi_{M}}+\wdth{d}{\Pi_{N}}$, for every $d\geq 0$;
	
	\item
	\label{lemma:subst-vs-wght-05-c}
	$\psz{0}{\Pi_{M\subs{N}{x}}}=\psz{0}{\Pi_{M}}$;
		
	\item
	\label{lemma:subst-vs-wght-05-d}
	$\psz{d}{\Pi_{M\subs{N}{x}}}\leq\psz{d}{\Pi_{M}}+\psz{d}{\Pi_{N}}$,
	for every $d\geq 1$.
\end{enumerate}

\item
\label{lemma:subst-vs-wght-02}
Let
$\Pi_M\rhd
\Gamma_M;\Delta_M;{\mathcal E}_M,(\emptyset;\ta{x}{A})
\vdash \ta{M}{B}$,
$\nocc{x}{M} = 1$,
and
$\Pi_N\rhd
\Gamma_N;\Delta_N;{\mathcal E}_{N}\vdash \ta{N}{\ !A}$, with
${\mathcal E}_{N}\subseteq\{(\Theta_N;\ta{y}{C})\}$.
Then, there exists
$\Gamma_M,\Gamma_N;\Delta_M,\Delta_N;
{\mathcal E}_M\sqcup{\mathcal E}_{N}
\vdash 
\ta{M\subs{N}{x}}{B}$ such that:
\begin{enumerate}
	\item 
	\label{lemma:subst-vs-wght-02-b}
	$\dpth{\Pi_{M\subs{N}{x}}}=\max\{\dpth{\Pi_{M}},\dpth{\Pi_{N}}\}$;
	
	\item 
	\label{lemma:subst-vs-wght-02-a}
	$\wdth{d}{\Pi_{M\subs{N}{x}}}\leq\wdth{d}{\Pi_{M}}$, for every $0\leq d\leq 1$;
	
	\item
	\label{lemma:subst-vs-wght-02-c}
	$\psz{0}{\Pi_{M\subs{N}{x}}}=\psz{0}{\Pi_{M}}$;
		
	\item
	\label{lemma:subst-vs-wght-02-d}
	$\psz{d}{\Pi_{M\subs{N}{x}}}\leq\psz{d}{\Pi_{M}}+\psz{d}{\Pi_{N}}$,
	for every $d\geq1$.
\end{enumerate}

\item
\label{lemma:subst-vs-wght-04}
Let
$\Pi_M\rhd
\Gamma_M;\Delta_M;{\mathcal E}_M,
(\emptyset;\ta{x}{A})
\vdash \ta{M}{B}$, $\nocc{x}{M}> 1$, and
$\Pi_N\rhd
\Gamma_N;\Delta_N;{\mathcal E}_{N}\vdash \ta{N}{\ !A}$, with
${\mathcal E}_{N}\subseteq\{(\emptyset;\ta{y}{C})\}$.
Then, there exists
$\Gamma_M, \Gamma_N;\Delta_M, \Delta_N;
{\mathcal E}_M\sqcup{\mathcal E}_N
\vdash 
\ta{M\subs{N}{x}}{B}$ such that:
\begin{enumerate}
	\item 
	\label{lemma:subst-vs-wght-04-b}
	$\dpth{\Pi_{M\subs{N}{x}}}=\max\{\dpth{\Pi_{M}},\dpth{\Pi_{N}}\}$;
	
	\item 
	\label{lemma:subst-vs-wght-04-a}
	$\wdth{d}{\Pi_{M\subs{N}{x}}}
	\leq
	\wdth{d}{\Pi_{M}}$ for $0\leq d\leq 1$;
	
	\item
	\label{lemma:subst-vs-wght-04-c}
	$\psz{0}{\Pi_{M\subs{N}{x}}}=\psz{0}{\Pi_{M}}$;
		
	\item
	\label{lemma:subst-vs-wght-04-d}
	$\psz{d}{\Pi_{M\subs{N}{x}}}\leq\psz{d}{\Pi_{M}}+\nocc{x}{M}\psz{d}{\Pi_{N}}$,
	for every $d\geq 1$.
\end{enumerate}
\end{enumerate}
\end{lemma}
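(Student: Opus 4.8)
The plan is to prove all five items \emph{simultaneously}, by structural induction on $\Pi_M$ with a case split on its last rule $R$. The common engine has two parts: (a) the Structural properties Lemma~\ref{lemma:structural-properties-WALL} tells us, for each zone $x$ lives in, which rule instance of $\Pi_M$ introduces it and with which shape (points~\ref{lemma:structural-properties-WALL-5}, \ref{lemma:structural-properties-WALL-6}, \ref{lemma:structural-properties-WALL-2}); and (b) for the modal items~2--5, the hypothesis that $N$ is a value together with points~\ref{lemma:structural-properties-WALL-3.1} and~\ref{lemma:structural-properties-WALL-3.3} forces $\Pi_N$ to end in a box rule ($!$ for type $\,!A$, and $\$$ possibly preceded by $C$ for type $\$A$), so $\Pi_N$ can be \emph{opened}: write $\Pi_N=\Pi_N(!,\Pi_N')$ or $\Pi_N=\Pi_N(\$,\Pi_N')$, with $\Pi_N'$ proving the same term one level shallower. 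A structural fact used repeatedly for items~3--5 is that an \emph{elementary} or \emph{polynomial} slot of the conclusion can only be created by a box sitting \emph{at depth $0$ in $\Pi_M$}: the premises of $!$ and $\$$ carry a partially discharged context with empty polynomial part, so such a slot cannot be nested inside a further box; hence the corresponding occurrence of $x$ inside that box lives at depth exactly~$1$. Throughout, the value hypothesis on $N$ is also what keeps the side conditions of $\li E_!$ and $\liv E$ satisfied after a splice.

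\textbf{Item~1 (linear $x$).} Point~\ref{lemma:structural-properties-WALL-0-nuovo} gives $\nocc{x}{M}=1$. If $R$ is the axiom $\ta{x}{L}$, then $\Pi_{M\subs{N}{x}}$ is $\Pi_N$ after the weakening of point~\ref{lemma:structural-properties-WALL-1} (to reinstate $\Gamma_M,\Delta_M$); the four measures are immediate since the vanished axiom had $\psz{0}=1$ and $\psz{d}=0$ for $d\geq1$, and $\wdth{0}{\Pi_N}=0$. Otherwise $x\in\dom{\Gamma}$ propagates multiplicatively: it belongs to exactly one premise, where the induction hypothesis applies and we reassemble with the same last rule. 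The splice happens at level~$0$, so the depth is a $\max$, the width is additive at every~$d$, $\psz{0}$ \emph{strictly} decreases by the lost axiom, and $\psz{d}$ is additive for $d\geq1$.

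\textbf{Items~2 and 3 ($\$$-discharged, resp.\ elementary, $x$; $N$ of type $\$A$).} By point~\ref{lemma:structural-properties-WALL-5} (resp.~\ref{lemma:structural-properties-WALL-6}) the rule introducing $x$ is not the axiom (else $x\notin\FV{M}$), so it is a $\$$-box (resp.\ a $\$$- or $!$-box); by the depth remark it sits at depth~$0$ and $x$ is used linearly inside it, at depth~$1$. Open $\Pi_N$ to $\Pi_N'$ (type $A$, one level shallower), splice $\Pi_N'$ for that linear occurrence as in item~1 but relativised one level down, and re-apply the box rule, merging contexts with $\sqcup$. In item~3 the hypothesis ${\mathcal E}_N\subseteq\{(\Theta_N;\emptyset)\}$ is precisely what makes the reconstructed $!$-box legal, since it may depend only on elementary assumptions. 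The stated measures follow: depth is a $\max$; width is additive; $\psz{0}$ is unchanged because a box has $\psz{0}=0$ and no contraction lives at level~$0$; $\psz{d}$ for $d\geq1$ is bounded by the sum.

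\textbf{Items~4 and 5 (polynomial $x$; $N$ of type $\,!A$), and the main obstacle.} Here $\Pi_N=\Pi_N(!,\Pi_N')$. In item~4 ($\nocc{x}{M}=1$) the unique occurrence of $x$ is used linearly at depth~$1$ inside a depth-$0$ box; splice $\Pi_N'$ there (item~1 relativised) and re-box: depth is a $\max$; $\wdth{d}{\Pi_{M\subs{N}{x}}}\leq\wdth{d}{\Pi_M}$ for $d\leq1$ because the inserted $!$-box has $\wdth{1}{\Pi_N}=\wdth{0}{\Pi_N'}=0$; and $\psz{d}$ picks up $\psz{d}{\Pi_N}$ only for $d\geq1$. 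In item~5 ($\nocc{x}{M}>1$) apply point~\ref{lemma:structural-properties-WALL-2} to write $M=M'\subs{x}{x^1_1\cdots x^n_{q_n}}$, with $n$ introducing boxes $\Pi'_i$ and a tree of $q_1+\cdots+q_n-1$ instances of $C,\li E,\li E_!,\liv E$ contracting the fresh variables to $x$; replace each $x^i_j$ by a fresh copy of $\Pi_N$ (each an item-4 step), then re-run the very same contraction tree, now merging the single $y$ carried by each copy in place of the $x^i_j$'s — the number of merging rules is unchanged. The hypothesis ${\mathcal E}_N\subseteq\{(\emptyset;\ta{y}{C})\}$ guarantees each copy carries at most one polynomial assumption and no elementary one, so every reconstructed box stays legal. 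Depth is a $\max$; $\psz{0}$ is unchanged; $\psz{d}$ for $d\geq1$ acquires exactly the factor $\nocc{x}{M}$ from the copies. The hard point is the width bound $\wdth{d}{\Pi_{M\subs{N}{x}}}\leq\wdth{d}{\Pi_M}$ for $d\leq1$ despite this duplication: it holds because every duplicated object is a $!$-box born at depth~$0$, so it contributes $0$ at level~$1$, while the re-contractions replace the old ones one-for-one and $\wdth{0}{\cdot}=0$. The accompanying bookkeeping — keeping the reconstructed derivation well-formed once $\Pi_N$ is copied $\nocc{x}{M}$ times while the copies must share the single free variable $y$, via $\sqcup$ and the disjointness conventions of $C,\li E,\li E_!,\liv E$ — is the main routine obstacle; a lesser but pervasive nuisance, already in item~1, is preserving the side conditions of $\li E_!$ and $\liv E$ across the splice, which is the role of the value hypothesis on $N$.
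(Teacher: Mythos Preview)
Your overall strategy and your treatment of items~1, 4 and~5 match the paper's proof closely (including the use of Lemma~\ref{lemma:structural-properties-WALL}.\ref{lemma:structural-properties-WALL-2} and the $\mathcal G/\mathcal B$ split for the polynomial case). One minor imprecision: in item~5, each individual splice is not ``an item-4 step'' but an application of item~1 to the \emph{premise} of the box, exactly as the paper does.

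There is, however, a genuine gap in items~2 and~3. Your ``depth remark'' --- that the box premises carry an empty \emph{polynomial} part, so the slot cannot nest --- is valid only for the polynomial zone. It does \emph{not} apply to $\Delta$-variables or elementary variables: the premise of $\$$ carries a non-empty $\Delta'$ and a non-empty $\Theta'$, and the conclusion's $\Delta$-context is $\$\Delta',\Delta$ while its elementary context contains $\$\Theta'$. Hence if $\ta{x}{A}\in\Delta_M$ with $A\equiv\$A'$, the last rule may be a $\$$ whose premise already has $\ta{x}{A'}\in\Delta'$; similarly an elementary $\ta{x}{A}$ may sit in the $\$\Theta'$ part and come from a nested box. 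In these nested cases $x$ is \emph{not} linear at depth~1, so your ``splice $\Pi_N'$ as in item~1'' does not apply. The paper's proof handles exactly this: for point~2 it distinguishes (i) $x$ fresh at this $\$$ (was in $\Gamma$ of the premise), where it uses point~1 on the opened premises, and (ii) $x$ nested (was in $\Delta'$), where it opens $\Pi_N$ one level and applies the \emph{inductive hypothesis of point~2 itself} to the premise; point~3 has four analogous sub-cases (two fresh, two nested, for last rule $\$$ or $!$). Your sketch covers only the fresh sub-case. The fix is straightforward --- add the nested sub-case and recurse --- but as written the argument for items~2--3 is incomplete.
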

The lemma above can be proved by induction on the derivation $\Pi_M$, using Lemma~\ref{lemma:structural-properties-WALL}.
\par
As a first observation, let us look at point~\ref{lemma:subst-vs-wght-04} of Lemma~\ref{lemma:subst-vs-wght} here above. It says that the width at level 0 and 1 cannot increase. The reason are the requirements on $N$ which must be a value with a $!$-modal type and with at most a single free variable. This means that the deduction that gives the type to $N$ can contain instances of the rules $C, \li E, \li E_!$, and $\liv E$ --- those which may increase the width --- only at levels at least 2.
On the other side, point~\ref{lemma:subst-vs-wght-02} of Lemma~\ref{lemma:subst-vs-wght} says that when a polynomial variable occurs free only once in $M$ we can relax a little bit the conditions on $N$: it must still be a value with a $!$-modal type, but it may depend on more than a single variable.
\par
The subject reduction above suggests how to restrict the $\beta$-reduction on the $\lambda$-terms so that those ones which are typeable by \WALT\ enjoy the subject reduction.
\begin{definition}[Rewriting \PT.]
\label{definition:Rewriting PT}
\begin{itemize}
\item
\textbf{Generic rewriting relation.}
The relation $\red\subseteq$\PT$^2$ is the contextual closure of the rewriting relation
$\blacktriangleright\subseteq$\PT$^2$, such that
$(\bs x.M)N\blacktriangleright M\subs{N}{x}$ if, and only if:
\small
\begin{align}
\label{align:red-1}
\text{either } & \nocc{x}{M}=0\\
\label{align:red-2}
\text{or     } & \nocc{x}{M}=1 \text{ and } N\in\text{\PTV}\\
\label{align:red-3}
\text{or     } & \nocc{x}{M}>1 \text{ and } N\in\text{\PTV},
                                            \FV{N}\subseteq\{y\},
					    \text{ for some } y
\end{align}
\normalsize
$\red^+$ is the transitive closure of $\red$, while
$\red^*$ is the reflexive and transitive closure of $\red$.
$M$ is in $\red$-normal form, and we write $\nf{M}$, if $\red$ cannot rewrite $M$ anymore.
\item 
\textbf{Rewriting relation by depth.}
Assume $\Pi\rhd\Gamma;\Delta;{\mathcal E}\vdash\ta{M}{A}$. Let
$M\red N$ by means of the reduction of a redex 
$(\bs x.P)Q\red P\subs{Q}{x}$. If $(\bs x.P)Q$ is at depth $d$ in $\Pi$, then we write $M\redl{d} N$.
$M$ is in $\red$-normal form at depth $d$, and we write $\nfl{d}{M}$, if $\redl{d}$ cannot rewrite $M$ anymore.
\end{itemize}
\end{definition}
\textbf{Subject reduction.}
We prove the subject reduction (Theorem~\ref{theorem:substitution-property} below) in two steps. First we show that it holds at depth $0$. Then, we extend the result to any depth, observing that depth $d>0$, of any deduction $\Pi$ is, in fact, the depth $0$ of every subdeduction of $\Pi$ whose conclusion is at depth $d$.

\begin{lemma}[Subject reduction at depth $0$.]
\label{lemma:substitution-property}
Let us assume $(\bs x.M)N$ be at depth $0$ in
$\Pi_{(\bs x.M)N}\rhd\Gamma;\Delta;{\mathcal E}
\vdash\ta{(\bs x.M)N}{A}$.
If $(\bs x.M)N\red M\subs{N}{x}$, then there exists
$\Pi_{M\subs{N}{x}}\rhd\Gamma;\Delta;{\mathcal E}\vdash
\ta{M\subs{N}{x}}{A}$ such that:
\begin{enumerate}
\item 
\label{lemma:substitution-property-1}
$\dpth{\Pi_{M\subs{N}{x}}}\leq\dpth{\Pi_{(\bs x.M)N}}$;

\item 
\label{lemma:substitution-property-2}
$\wdth{d}{\Pi_{M\subs{N}{x}}}\leq\wdth{d}{\Pi_{(\bs x.M)N}}$, with $d\leq 1$;

\item
\label{lemma:substitution-property-3}
$\psz{0}{\Pi_{M\subs{N}{x}}}<\psz{0}{\Pi_{(\bs x.M)N}}$;

\item
\label{lemma:substitution-property-4}
$\psz{d}{\Pi_{M\subs{N}{x}}}\leq
\psz{d}{\Pi_{(\bs x.M)N}}+\nocc{x}{M}\psz{d}{\Pi_{N}}
$, for every $d\geq 1$.
\end{enumerate}
\end{lemma}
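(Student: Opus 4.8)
I would argue by induction on $\Pi_{(\bs x.M)N}$. Since the subject $(\bs x.M)N$ is an application, the last rule of $\Pi_{(\bs x.M)N}$ is neither an axiom $A$ nor an introduction $\li I,\li I_{\$},\li I_!,\liv I$, and since the redex sits at depth $0$ it is not $!$ or $\$$; hence it is one of $\forall I,\forall E,C,\li E,\li E_!,\liv E$. In the first three cases the redex still sits at depth $0$ in the immediate subdeduction, the induction hypothesis rewrites it, and re-applying the rule yields $\Pi_{M\subs{N}{x}}$; since $\forall I,\forall E,C$ leave $\dpth{\cdot}$ and $\psz{0}{\cdot}$ unchanged and shift $\wdth{1}{\cdot}$ and $\psz{d}{\cdot}$ ($d\geq1$) by the same additive constant on both sides, all four claimed (in)equalities survive. (For $C$ one uses that its substitution is global, so it re-applies verbatim after the reduction, modulo routine $\alpha$-renaming keeping $x$ away from the contracted variables.)

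\textbf{Principal case.} The heart is when the last rule is the elimination $\li E$, $\li E_!$ or $\liv E$ forming the redex, from premises typing $\bs x.M$ and $N$. As $\bs x.M$ is an abstraction, its subdeduction --- after pushing any interleaved $\forall$- and $C$-rules downward, absorbing the $\forall E$-substitutions through Lemma~\ref{lemma:structural-properties-WALL}.\ref{lemma:structural-properties-WALL-(-1)} --- ends in the introduction matching the elimination: $\li I$ or $\li I_{\$}$ under $\li E$, $\li I_!$ under $\li E_!$, $\liv I$ under $\liv E$. I then split on which set the discharged assumption $x$ belongs to --- $x$ linear, linear partially discharged (in $\Delta$), elementary partially discharged (in some $\Theta_i$), or polynomial (in some $\Phi_i$) --- and, in the polynomial subcase, on whether $\nocc{x}{M}=1$ or $\nocc{x}{M}>1$; for the other three kinds $\nocc{x}{M}\leq1$ automatically (Lemma~\ref{lemma:structural-properties-WALL}.\ref{lemma:structural-properties-WALL-0-nuovo}). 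If $\nocc{x}{M}=0$, i.e.\ clause~\eqref{align:red-1} of $\red$, then $M\subs{N}{x}\equiv M$ and I discard $\Pi_N$, delete the spurious $x$ from the deduction of $\bs x.M$ by Lemma~\ref{lemma:structural-properties-WALL}.\ref{lemma:structural-properties-WALL-4}, and re-introduce the ambient contexts $\Gamma;\Delta;{\mathcal E}$ by the weakening of Lemma~\ref{lemma:structural-properties-WALL}.\ref{lemma:structural-properties-WALL-1}, no step affecting depth, width or partial size. Otherwise $N$ is a value (clause~\eqref{align:red-2} or~\eqref{align:red-3}), and I apply the matching point of Lemma~\ref{lemma:subst-vs-wght}: point~\ref{lemma:subst-vs-wght-01} for $x$ linear, point~\ref{lemma:subst-vs-wght-03} for $x$ in $\Delta$, point~\ref{lemma:subst-vs-wght-05} for $x$ elementary, point~\ref{lemma:subst-vs-wght-02} for $x$ polynomial with $\nocc{x}{M}=1$, and point~\ref{lemma:subst-vs-wght-04} for $x$ polynomial with $\nocc{x}{M}>1$. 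The side hypotheses those points impose on $\Pi_N$ (that $N$ is a value with a $\$$- or $!$-modal type whose deduction has a context of a prescribed ``box'' shape) are exactly what is guaranteed by the side condition of the matching elimination (for $\liv E$ and $\li E_!$) together with Lemma~\ref{lemma:structural-properties-WALL}.\ref{lemma:structural-properties-WALL-3.1} and~\ref{lemma:structural-properties-WALL-3.3}, which pin down the last rule and free variables of a value of modal type; in the last subcase $\FV{N}\subseteq\{y\}$, from clause~\eqref{align:red-3}, is what collapses $\Pi_N$'s context to the singleton form required by point~\ref{lemma:subst-vs-wght-04}.

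\textbf{Closing the measures.} In every principal subcase the reduct $\Pi_{M\subs{N}{x}}$ produced (by Lemma~\ref{lemma:subst-vs-wght} or, in the $\nocc{x}{M}=0$ case, directly) has conclusion context exactly $\Gamma;\Delta;{\mathcal E}$, so it fits where $\Pi_{(\bs x.M)N}$ stood. Depth is immediate since the removed introduction and elimination do not change $\dpth{\cdot}$. For width at levels $0,1$ one uses $\wdth{0}{\cdot}=0$ and that the removed elimination contributes the only $+1$ at level $1$ to $\wdth{1}{\Pi_{(\bs x.M)N}}$ beyond $\wdth{1}{\Pi_M}(+\wdth{1}{\Pi_N})$, the $\Pi_N$-term being absorbed in the $!$-subcases where $\Pi_N$ ends in the box rule $!$. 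The strict decrease of $\psz{0}{\cdot}$ follows because the removed introduction and elimination each add $1$ to $\psz{0}{\cdot}$, while Lemma~\ref{lemma:subst-vs-wght} never pushes $\psz{0}{\Pi_{M\subs{N}{x}}}$ above $\psz{0}{\Pi_M}+\psz{0}{\Pi_N}$. Finally, for $d\geq1$, summing the $\psz{d}{\cdot}$-contributions of $\Pi_M$ and $\Pi_N$ as delivered by Lemma~\ref{lemma:subst-vs-wght} and comparing with $\psz{d}{\Pi_{(\bs x.M)N}}=\psz{d}{\Pi_M}+\psz{d}{\Pi_N}+1$ gives $\psz{d}{\Pi_{M\subs{N}{x}}}\leq\psz{d}{\Pi_{(\bs x.M)N}}+\nocc{x}{M}\,\psz{d}{\Pi_N}$, the factor $\nocc{x}{M}$ being needed only in the polynomial subcase with $\nocc{x}{M}>1$, where it is already present in point~\ref{lemma:subst-vs-wght-04}.

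\textbf{Main obstacle.} The real work is the subcase analysis on the kind of $x$, and checking in each subcase that the ambient derivation already supplies precisely the side hypotheses the matching point of Lemma~\ref{lemma:subst-vs-wght} requires of $\Pi_N$ --- in particular reconciling the box-shaped context demanded by points~\ref{lemma:subst-vs-wght-05}, \ref{lemma:subst-vs-wght-02} and~\ref{lemma:subst-vs-wght-04} with what Lemma~\ref{lemma:structural-properties-WALL}.\ref{lemma:structural-properties-WALL-3.1}/\ref{lemma:structural-properties-WALL-3.3} yield for a value of modal type, and handling the $\forall$- and $C$-rules possibly interleaved between the introduction of $\bs x.M$ and the elimination. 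Keeping the strict inequality for $\psz{0}{\cdot}$ across the points of Lemma~\ref{lemma:subst-vs-wght} that only assert an equality also needs care, but is otherwise mechanical.
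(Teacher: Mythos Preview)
Your proposal is correct and follows essentially the same route as the paper. The only organisational difference is that the paper structures the principal case by the three clauses of $\red$ (i.e.\ on $\nocc{x}{M}=0,1,>1$) and then, within each, by the pair $(R_I,R_E)$, whereas you structure it by which context $x$ lives in and invoke the matching point of Lemma~\ref{lemma:subst-vs-wght}; both lead to the same invocations of Lemma~\ref{lemma:subst-vs-wght} and the same bookkeeping for the $\rho,\sigma$ sequences of $C/\forall$-rules. One small refinement: in the $\nocc{x}{M}>1$ subcase you should note, as the paper does, that Lemma~\ref{lemma:structural-properties-WALL}.\ref{lemma:structural-properties-WALL-3.1} forces the last rule of $\Pi_N$ to be $!$, whose side condition then pins the unique free variable $y$ of $N$ into the \emph{polynomial} slot of ${\mathcal E}_N$ rather than the elementary one --- this is exactly what point~\ref{lemma:subst-vs-wght-04} of Lemma~\ref{lemma:subst-vs-wght} needs.
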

The lemma can be proved proceeding by cases on the definition of $\red$, using Lemma~\ref{lemma:subst-vs-wght}.

\begin{theorem}[Subject reduction.]
\label{theorem:substitution-property}
Let us assume 
$\Pi_{M}\rhd\Gamma;\Delta;{\mathcal E}\vdash\ta{M}{A}$.
Let us assume also $M\redl{d}N$ by means of the reduction of a redex
$(\bs x.P)Q\red P\subs{Q}{x}$ at depth $d\leq \dpth{\Pi_{M}}$ in $\Pi_{M}$.
Then, there exists
$\Pi_{N}\rhd\Gamma;\Delta;{\mathcal E}\vdash
\ta{N}{A}$ such that:
\begin{enumerate}
\item 
\label{theorem:substitution-property-1}
$\dpth{\Pi_{N}}\leq\dpth{\Pi_{M}}$, namely 
the reduction cannot increase the global depth.

\item 
\label{theorem:substitution-property-2}
$\wdth{i}{\Pi_{N}}\leq\wdth{i}{\Pi_{M}}$, for every $0\leq i\leq d+1$, namely
the reduction cannot increase the width at depth $0,1,\ldots,d+1$.

\item
\label{theorem:substitution-property-3}
$\psz{i}{\Pi_{N}}=\psz{i}{\Pi_{M}}$,
for every $0\leq i< d$, namely
the reduction cannot alter the size at depth $0,1,\ldots,d-1$.

\item
\label{theorem:substitution-property-4}
$\psz{d}{\Pi_{N}}<\psz{d}{\Pi_{M}}$, namely the reduction
strictly consumes structure at the depth it occurs.

\item
\label{theorem:substitution-property-5}
$\psz{i}{\Pi_{N}}\leq
\psz{i}{\Pi_{M}}+\nocc{x}{P}\psz{i}{\Pi_{Q}}$,
for every $d< i\leq \dpth{\Pi_{M}}$, namely
the reduction may increase the dimension at depth $d+1,d+2,\ldots,\dpth{\Pi_M}$ but not too much. The bound is given by the partial size of $\Pi_Q$ and by the number of occurrences of $x$ in $P$.
\end{enumerate}
\end{theorem}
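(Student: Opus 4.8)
The plan is to reduce Theorem~\ref{theorem:substitution-property} to its depth-$0$ instance, Lemma~\ref{lemma:substitution-property}, by an induction on the structure of $\Pi_{M}$ — equivalently, on the length of the path from the conclusion of $\Pi_{M}$ down to the application rule that forms the redex $(\bs x.P)Q$ being contracted. The guiding remark, already announced before the statement, is that being at depth $d$ in $\Pi_{M}$ is the same as being at depth $0$ in the subdeduction $\Pi''\preceq\Pi_{M}$ whose conclusion sits at depth $d$, since $!$ and $\$$ are exactly the rules that cross a level while every other rule keeps it fixed. So, walking down the path towards the redex, I carry a \emph{residual target depth} that starts at $d$, is decremented by one at each $!$ or $\$$ and is left unchanged at every other rule; when the path reaches the application rule forming $(\bs x.P)Q$, that residual depth has become $0$ and Lemma~\ref{lemma:substitution-property} applies to the corresponding subdeduction. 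Crucially, the contractum it returns has the \emph{same} conclusion judgement $\Gamma;\Delta;{\mathcal E}\vdash\ta{P\subs{Q}{x}}{B}$, which is what allows it to be plugged back and all the rules met along the path to be re-formed unchanged.

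\textbf{The induction.} If the current subdeduction ends with the application forming $(\bs x.P)Q$ and the residual depth is $0$, then Lemma~\ref{lemma:substitution-property} does the job: its items~\ref{lemma:substitution-property-1}--\ref{lemma:substitution-property-4} are precisely claims (1)--(5) read at $d=0$ — claim (3) is vacuous, and claims (4)--(5) collapse onto items~\ref{lemma:substitution-property-3}--\ref{lemma:substitution-property-4}. If the last rule is $!$ or $\$$, the residual depth is $\geq 1$, the redex sits at residual depth $d-1$ in the single premise, the induction hypothesis supplies the contractum there, and re-forming the rule is legitimate because $\red$ alters neither $\Gamma;\Delta;{\mathcal E}$ nor the type in the conclusion, so the side conditions of $!$ and of $\$$ — which constrain only those sets — still hold; the five inequalities are then transported upward through the defining clauses $\dpth{\Pi(R,\cdot)}=\dpth{\cdot}+1$, $\wdth{i}{\Pi(R,\cdot)}=\wdth{i-1}{\cdot}$, $\psz{0}{\Pi(R)}=0$ and $\psz{i}{\Pi(R,\cdot)}=\psz{i-1}{\cdot}$, so that the single index shift turns width control up to depth $d$ into width control up to $d+1$ (claim (2)), equality of $\psz{i}{\cdot}$ strictly below $d-1$ into claim (3), strict decrease of $\psz{d-1}{\cdot}$ into claim (4), and the deep partial-size bound into claim (5), the $\psz{i}{\Pi_{Q}}$ term being read on the depth-$i$ slice of the subtree $\Pi_{Q}$, whose index is carried along correctly by the shift. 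For every other last rule — an application that does \emph{not} form the redex, or $C$, $\forall I$, $\forall E$, $\li I$, $\li I_{\$}$, $\li I_!$, $\liv I$ — the redex lies strictly inside a unique premise at the same residual depth; I recurse there, reassemble under $R$ (the eigenvariable condition of $\forall I$ and the context side conditions survive for the same reason), and combine the measures by the corresponding clauses, which for these rules either let a quantity be inherited or make it the sum over the two premises plus a constant: in the binary case the untouched premise contributes the same value on both sides, so the differences furnished by the induction hypothesis propagate verbatim, the extra $+1$ of an application or of $C$ cancelling.

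The last step is routine; the $!/\$$ step is where the thresholds in the statement originate, and they cannot be sharpened — the single level shift of an $!$ or a $\$$ is exactly what promotes ``width controlled up to depth $d+1$'', ``size unchanged strictly below $d$'' and ``size possibly growing below, by at most $\nocc{x}{P}\psz{i}{\Pi_{Q}}$'' one notch up, so the bounds must be stated in this form. The genuine mathematical content — that substituting the value $\Pi_{Q}$, of $!$- or $\$$-modal type and with the prescribed free-variable restriction, into the assumptions of $\Pi_{P}$ neither breaks the stratification nor violates the size and width estimates — is already isolated in the substitution property, Lemma~\ref{lemma:subst-vs-wght}, and consumed through Lemma~\ref{lemma:substitution-property}; the present theorem only adds the lifting to arbitrary depth, for which the induction above suffices, with Lemma~\ref{lemma:structural-properties-WALL} available if one needs to tidy up weakened or vacuous assumptions while re-forming a rule.
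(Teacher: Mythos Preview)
Your proposal is correct and follows essentially the same route as the paper: structural induction on $\Pi_{M}$, with Lemma~\ref{lemma:substitution-property} supplying the base case at residual depth $0$, the modal rules $!,\$$ handled by the index shift in the defining clauses of $\dpth{\cdot}$, $\wdth{i}{\cdot}$, $\psz{i}{\cdot}$, and the remaining rules propagated routinely. Your presentation is slightly more unified than the paper's (which splits explicitly into $d=0$ and $d>0$), and your remark on how the index of $\psz{i}{\Pi_{Q}}$ is carried along by the shift is a useful clarification of a point the paper leaves implicit.
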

It can be proved proceeding by induction on $\Pi_M$, using Lemma~\ref{lemma:substitution-property}.
%%%%%%%%%%%%%%%%%%%%%%%%%%
\section{Polytime soundness}
\label{section:Polytime soundness}
We want to prove that \WALL\ is poly-time strongly normalizing, mixing ideas, observations and terminology from
\cite{Asperti:1998-LICS,Terui:2001-LICS,Asperti02TOCL,MollerMairson:2002-ICC,Terui:2002-AML}. We first prove the poly-time weak normalization by showing the existence of a canonical strategy, composed by normalization rounds at a given depth, that normalizes every deduction $\Pi_M$ in a time which is bounded by a polynomial in the dimension of $\Pi_M$. Then, the poly-time strong normalization follows by showing that the canonical strategy is the worst one.
%%%%%%%%%%%%%%%%%%%%%%%%%%
\subsection{Weak polytime soundness}
\label{subsection:Weak polytime soundness}
\textbf{Rounds at level $d$.}
Given a deduction $\Pi_{M}\rhd\Gamma;\Delta;{\mathcal E}\vdash \ta{M}{A}$, a \textit{round at level $d\leq\dpth{\Pi_{M}}$ from $M$ to $N_{n}$} is a sequence
$M\equiv N_0 \redl{d} N_1 \redl{d}\cdots \redl{d}N_{n-1} \redl{d} N_{n}$
of reduction of redexes, for some $n\geq 0$, abbreviated as $M\round{d} N_{n}$. A \textit{complete round from $M$} is every round such that $\nfl{d}{N_{n}}$.
Notice that the rounds from $M$ to $N_{n}$ are not unique since every $N_{i}$ may have many redexes at level $d$ that we can reduce in any order.
\begin{corollary}[Behavior of every round.]
\label{corollary:Behaviour of every round}
Let $\Pi_{M}\rhd\Gamma;\Delta;{\mathcal E}\vdash \ta{M}{A}$ be given.
\begin{enumerate}
\item
There is at least one complete round $M\round{d}N_{n}$ from $M$.
\item 
Every complete round $M\round{d}N_{n}$ from $M$ is such that
$n\leq \psz{d}{\Pi_{M}}$. Namely, the complete rounds from $M$ are strongly normalizing in, at most, $\psz{d}{\Pi_{M}}$ steps.
\item 
For every complete round $M\round{d}N_{n}$ from $M$, the derivation $\Pi_{N_{n}}$ is identical to $\Pi_{M}$, at every level $i\leq d-1$. Namely, nothing changes in $\Pi_{M}$ in the course of the round at the levels $0, 1, \ldots, d-1$.
\end{enumerate}
\end{corollary}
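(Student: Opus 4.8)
The plan is to reduce all three points to the Subject reduction theorem (Theorem~\ref{theorem:substitution-property}), reading a round at level $d$ as a chain $M\equiv N_0\redl{d}N_1\redl{d}\cdots\redl{d}N_n$ that carries along the derivations $\Pi_{N_0}\equiv\Pi_M,\Pi_{N_1},\ldots,\Pi_{N_n}$, where each $\Pi_{N_{i+1}}$ is the derivation produced by Theorem~\ref{theorem:substitution-property} from $\Pi_{N_i}$ and the fired redex (which, by that theorem, still sits at a well-defined depth in $\Pi_{N_{i+1}}$, so that $\redl{d}$ from $N_{i+1}$ makes sense).

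First I would prove point~2, from which point~1 follows at once. By Theorem~\ref{theorem:substitution-property}.\ref{theorem:substitution-property-4}, every $\redl{d}$-step strictly decreases the non-negative integer $\psz{d}{\cdot}$, i.e. $\psz{d}{\Pi_{N_{i+1}}}<\psz{d}{\Pi_{N_i}}$. Since $\psz{d}{\Pi_{N_0}}=\psz{d}{\Pi_M}$ and the partial size is always $\geq 0$, any such chain has length $n\leq\psz{d}{\Pi_M}$; in particular no infinite $\redl{d}$-chain from $M$ exists. Point~1 is then immediate: starting from $M$, keep firing an arbitrarily chosen redex at depth $d$ as long as one is available; by the bound just shown this process terminates, ending at some $N_n$ with $\nfl{d}{N_n}$, and the sequence it produces is a complete round.

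For point~3 I would invoke the \emph{locality} of the derivation surgery underlying Lemma~\ref{lemma:substitution-property} and Theorem~\ref{theorem:substitution-property}: reducing a redex at depth $d$ rebuilds the typing derivation by replacing a single subdeduction whose conclusion lies at depth $d$ --- the one typing the redex inside its $d$ nested $!$/$\$$-boxes --- while leaving every rule instance at depths $0,1,\ldots,d-1$ literally unchanged. This is precisely what the induction on $\Pi_M$ in the proof of the theorem yields: when the redex is outside the outermost box the boxes are merely carried along, and when it is inside one recurses into the box's premise, where the redex now sits at depth $d-1$, the box rule and its surroundings being preserved verbatim. Iterating over the $n$ steps of the round gives that $\Pi_{N_n}$ agrees with $\Pi_M$ at every level $i\leq d-1$; the invariance $\psz{i}{\Pi_{N_{i+1}}}=\psz{i}{\Pi_{N_i}}$ for $i<d$ (Theorem~\ref{theorem:substitution-property}.\ref{theorem:substitution-property-3}) is the quantitative shadow of this and would serve if only the measures, rather than the derivations themselves, had to coincide.

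The main obstacle is exactly this locality claim for point~3: one must confirm that the derivation rewriting performed in the five cases of the Substitution property (Lemma~\ref{lemma:subst-vs-wght}), and hence in Lemma~\ref{lemma:substitution-property} and Theorem~\ref{theorem:substitution-property}, is confined to depths $\geq d$ and never touches a rule lying above the reduced redex. Because depth is incremented only by the rules $!$ and $\$$, and the value substituted in a $\redl{d}$-step is plugged in at depth $d$, this is a routine --- if somewhat tedious --- traversal of the case analysis already carried out for the subject reduction; everything else in the corollary is elementary arithmetic on non-negative integers.
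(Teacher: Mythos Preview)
Your proposal is correct and follows essentially the same approach as the paper: the paper's one-sentence justification simply says the corollary ``follows from Theorem~\ref{theorem:substitution-property} which proves that a normalization step at level $d$ in $\Pi_{M}$ \ldots\ strictly shrinks the dimension at that level, while preserving the structure at $0, 1, \ldots, d-1$,'' i.e.\ exactly your appeal to points~\ref{theorem:substitution-property-4} and~\ref{theorem:substitution-property-3} of the subject reduction theorem. Your treatment is in fact more careful than the paper's, since you rightly flag that point~3 asserts identity of \emph{derivations} at shallow levels, not merely equality of the partial sizes, and that this requires the locality of the derivation surgery performed in the proof of Theorem~\ref{theorem:substitution-property}; the paper leaves this implicit in the phrase ``preserving the structure.''
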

Corollary~\ref{corollary:Behaviour of every round} follows from Theorem~\ref{theorem:substitution-property} which proves that a normalization step at level $d$ in $\Pi_{M}$, that corresponds to the reduction of a redex, strictly shrinks the dimension at that level, while preserving the structure at $0, 1, \ldots, d-1$.
\par
\textbf{Canonical strategy.}
Given a deduction $\Pi_{M}\rhd\Gamma;\Delta;{\mathcal E}\vdash \ta{M}{A}$, a \textit{canonical strategy from $M$ to $N_{d}$} is a sequence
$M\equiv N_0 \round{0} N_1 \round{1}\cdots \round{d-2} N_{d-1} \round{d-1} N_{d}$
of \textit{complete rounds}, abbreviated as $M\canstra{0,d-1} N_{d}$.
Notice that we say ``\textit{a} canonical'' instead of ``\textit{the} canonical'' because the complete rounds that define any canonical strategy are not unique.
A \textit{complete canonical strategy from $M$} is every canonical strategy such that $\nf{N_{d}}$. Beware that we do not require $\nfl{d}{N_{d}}$ only, but the full $\nf{N_{d}}$.

\begin{corollary}[Behavior of a canonical strategy.]
\label{corollary:Behaviour of a canonical strategy}
Let $\Pi_{M}\rhd\Gamma;\Delta;{\mathcal E}\vdash \ta{M}{A}$ be given.
\begin{enumerate}
\item 
In every canonical strategy $M\equiv N_0 \round{0} N_1 \round{1}\cdots \round{d-2} N_{d-1} \round{d-1} N_{d}$ from $M$, for every $0< i\leq d$, we have $\nfl{j}{N_{i}}$  with $0\leq j< i$. Namely, every $\Pi_{N_i}$, with $0< i\leq d$, is normal at level $0, 1, \ldots, i-1$.

\item
There is at least one complete canonical strategy $M\canstra{0,d-1} N_{d}$ from $M$.

\item 
Every complete canonical strategy $M\canstra{0,d-1} N_{d}$ from $M$ is such that $d\leq\dpth{\Pi_{M}}+1$. Namely, every complete strategy from $M$ is strongly normalizing in, at most, $\dpth{\Pi_{M}}+1$ complete rounds.
\end{enumerate}
\end{corollary}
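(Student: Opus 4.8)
The plan is to derive the three points from Corollary~\ref{corollary:Behaviour of every round} together with the depth-monotonicity clause (point~\ref{theorem:substitution-property-1}) of Theorem~\ref{theorem:substitution-property}, establishing point~1 first and reading points~2 and~3 off it. \emph{Point~1} goes by induction on $i$. The base case $i=1$ is immediate: $N_1$ comes from $N_0$ by a complete round $\round{0}$, so $\nfl{0}{N_1}$ by the definition of complete round. For the step, assume $\nfl{j}{N_i}$ for every $0\le j<i$; then $N_{i+1}$ comes from $N_i$ by a complete round $\round{i}$, whence $\nfl{i}{N_{i+1}}$, and by point~3 of Corollary~\ref{corollary:Behaviour of every round} the derivation $\Pi_{N_{i+1}}$ coincides with $\Pi_{N_i}$ at every level $\le i-1$. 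Since whether a term admits a $\redl{j}$-redex, for $j\le i-1$, is determined by the part of its derivation at levels $\le i-1$, the induction hypothesis transfers, giving $\nfl{j}{N_{i+1}}$ for all $0\le j<i$ too; combining, $\nfl{j}{N_{i+1}}$ holds for $0\le j\le i$, which is the claim for $i+1$.

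For \emph{points~2 and~3} I build a complete canonical strategy greedily: from $N_0\equiv M$, having reached $N_k$, pick a complete round $N_k\round{k}N_{k+1}$ --- it exists by point~1 of Corollary~\ref{corollary:Behaviour of every round} and consists of finitely many reductions by point~2 there. By point~\ref{theorem:substitution-property-1} of Theorem~\ref{theorem:substitution-property} no reduction raises the global depth, so $\dpth{\Pi_{N_k}}\le\dpth{\Pi_M}$ throughout. Hence, after $k$ has run through $0,1,\ldots,\dpth{\Pi_M}$, point~1 makes $N_{\dpth{\Pi_M}+1}$ normal at all levels $0,1,\ldots,\dpth{\Pi_M}$; since its derivation has no level strictly above $\dpth{\Pi_M}$, that term carries no redex at all, i.e.\ $\nf{N_{\dpth{\Pi_M}+1}}$ --- proving point~2. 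The identical bookkeeping applies to any complete canonical strategy: after round $\round{k}$ the intermediate term is normal at levels $0,\ldots,k$ and of depth $\le\dpth{\Pi_M}$, hence already in $\red$-normal form once $k=\dpth{\Pi_M}$, so every later round is vacuous and $d\le\dpth{\Pi_M}+1$, which is point~3.

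The single genuinely delicate ingredient --- the point where the weak stratification of \WALL\ is actually used --- is the claim tucked into the inductive step of point~1: a complete round performed entirely at level $i$ must neither revive nor create any redex at a strictly smaller level. This is not a priori clear, since reducing a redex at depth $i$ substitutes a subderivation that could in principle carry structure crossing fewer than $i$ modal boxes; it is precisely what point~3 of Corollary~\ref{corollary:Behaviour of every round} rules out, which in turn rests on clauses~\ref{theorem:substitution-property-3}--\ref{theorem:substitution-property-4} of Theorem~\ref{theorem:substitution-property}. Once that invariant is granted, the remainder is the routine counting above.
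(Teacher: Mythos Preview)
Your proposal is correct and follows essentially the same route as the paper. The paper's own justification is a two-line remark pointing to Corollary~\ref{corollary:Behaviour of every round} (termination of each round and preservation of lower levels) together with Theorem~\ref{theorem:substitution-property} (depth does not grow); you have simply unpacked that remark into an explicit induction for point~1 and a greedy construction for points~2 and~3, which is exactly what the paper intends. Your closing paragraph correctly isolates the one non-routine step---that a round at level $i$ cannot resurrect a redex at level $j<i$---and traces it to point~3 of Corollary~\ref{corollary:Behaviour of every round}, which is the same dependency the paper relies on (implicitly).
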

Corollary~\ref{corollary:Behaviour of a canonical strategy} is a consequence of Corollary~\ref{corollary:Behaviour of every round} which says that every complete round terminates, and of Theorem~\ref{theorem:substitution-property}, which assures that the maximal depth of $\Pi_{M}$ cannot increase as the normalization proceeds.
\par
\textbf{How to get the polynomial bound.}
Let us assume both that $\Pi_{M}\rhd\Gamma;\Delta;{\mathcal E}\vdash \ta{M}{A}$ and that
we \textit{shall} be able to prove the following proposition:

\begin{proposition}[Bounding the size of the result of every complete round.]
\label{proposition:Bounding the result of every complete round}
Every complete round
$N_{i-1}\round{i-1}N_{i}$ from $N_{i-1}$, with $0\leq i\leq d-1$, implies $\size{\Pi_{N_{i}}}\leq \polynom{i-1}{(\size{\Pi_{N_{i-1}}})}$, where $\polynom{i-1}{}$ is a polynomial of maximal degree $\degree{\polynom{i-1}{}}$.
\end{proposition}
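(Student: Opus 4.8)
Sketch of the proof. The plan is to read $\polynom{i-1}{}$ off the quantitative part of Theorem~\ref{theorem:substitution-property}, applied step by step along one complete round, together with two features of a round at level $d$: it is short, and it cannot increase the width at the levels $\le d+1$. Fix $d\equiv i-1$, write $\Pi_0$ for $\Pi_{N_{i-1}}$, and let $N_{i-1}\equiv M_0\redl{d}M_1\redl{d}\cdots\redl{d}M_s\equiv N_{i}$ realize the round, the $k$-th step reducing a redex $(\bs x_k.P_k)Q_k$ at depth $d$. I will also use $\size{\Pi}$ as a quantity bounding $\sum_{e}\psz{e}{\Pi}$, so that $\size{\Pi_0}$ dominates every $\psz{e}{\Pi_0}$, every $\wdth{e}{\Pi_0}$ (Lemma~\ref{lemma:Width of a deduction vs. its partial size}), and their sum.

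First I would settle the low levels. By Corollary~\ref{corollary:Behaviour of every round}, $s\le\psz{d}{\Pi_0}\le\size{\Pi_0}$ and $\Pi_{N_i}$ coincides with $\Pi_0$ at every level $<d$; iterating Theorem~\ref{theorem:substitution-property}(\ref{theorem:substitution-property-4}) over the round, $\psz{d}{\Pi_{N_i}}\le\psz{d}{\Pi_0}$. So $\sum_{e\le d}\psz{e}{\Pi_{N_i}}\le\size{\Pi_0}$ and the whole problem is to bound $\sum_{e\ge d+1}\psz{e}{\Pi_{N_i}}$. For those levels, Theorem~\ref{theorem:substitution-property}(\ref{theorem:substitution-property-5}) says the $k$-th step adds nothing at levels $\le d$ and, for each $e\ge d+1$, inflates $\psz{e}{}$ by at most $\nocc{x_k}{P_k}\cdot\psz{e}{\Pi_{Q_k}}$, so all growth comes from replicating the argument subderivation $\Pi_{Q_k}$. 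I bound the two factors. If $\nocc{x_k}{P_k}\le 1$, nothing happens; otherwise the definition of $\red$ forces $Q_k\in$~\PTV\ with a $!$-modal type, $x_k$ is then a polynomial assumption of $\Pi_{P_k}$ (introduced by $\li I_!$), Lemma~\ref{lemma:structural-properties-WALL}(\ref{lemma:structural-properties-WALL-3.1}) makes $\Pi_{Q_k}$ end with a $!$ rule (so its partial size lives only at levels $\ge d+1$), and Lemma~\ref{lemma:structural-properties-WALL}(\ref{lemma:structural-properties-WALL-2}) bounds $\nocc{x_k}{P_k}$ by the width at the level of the copies, i.e.\ by $\wdth{d+1}{\Pi_{M_{k-1}}}$; by Theorem~\ref{theorem:substitution-property}(\ref{theorem:substitution-property-2}) this is $\le\wdth{d+1}{\Pi_0}\le\size{\Pi_0}$, uniformly in $k$. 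Likewise $\psz{e}{\Pi_{Q_k}}\le\size{\Pi_{Q_k}}\le\size{\Pi_{M_{k-1}}}$.

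The hard part is that iterating these inequalities naively only yields an exponential estimate, because $\size{\Pi_{M_{k-1}}}$ itself grows along the round. The resolution is the stratification argument of light logics: subject reduction never moves the instances of $!$ and $\$$ across levels (``the stratification is never cancelled''), so the level-$d$ boxes whose contents get copied are, up to renaming, residuals of boxes already present in $\Pi_0$ at levels $\ge d+1$, and any fixed such box is reinstalled at most once per duplicating instance of $C,\li E,\li E_!,\liv E$ through which it can be routed; since those duplicating instances sit at level $d$ and Theorem~\ref{theorem:substitution-property}(\ref{theorem:substitution-property-2}) keeps their number fixed, the total number of copies produced during the whole round, of any fixed piece of $\Pi_0$ living at levels $\ge d+1$, is at most $1+\wdth{d+1}{\Pi_0}$. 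Making this precise calls for a residual-tracking bookkeeping along $M_0\redl{d}\cdots\redl{d}M_s$, in the same spirit as the proof of Theorem~\ref{theorem:substitution-property}, and this is where I expect the real effort to go. Granting it, the pieces at a fixed level $e\ge d+1$ partition $\psz{e}{\Pi_0}$, so $\sum_{e\ge d+1}\psz{e}{\Pi_{N_i}}\le(1+\wdth{d+1}{\Pi_0})\sum_{e\ge d+1}\psz{e}{\Pi_0}\le(1+\size{\Pi_0})\size{\Pi_0}$; adding the low levels, $\size{\Pi_{N_i}}\le\size{\Pi_0}+(1+\size{\Pi_0})\size{\Pi_0}\le 3\,\size{\Pi_0}^2$, so the proposition holds with $\polynom{i-1}{}$ of degree $\degree{\polynom{i-1}{}}=2$ (and, in any case, of a degree that can be read off from the maximal width at level $d$ of $\Pi_0$, hence bounded by an absolute constant).
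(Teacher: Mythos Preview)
Your overall plan coincides with the paper's: iterate Theorem~\ref{theorem:substitution-property} along the round (the paper packages this as Corollary~\ref{corollary:Subject reduction iterated by a round}), bound each multiplicity $\nocc{x_k}{P_k}$ uniformly by $\wdth{d+1}{\Pi_0}$ using non-increase of width, and then invoke a residual-tracking argument to control the total inflation. You correctly isolate the last step as the crux and leave it as a sketch. The paper fills exactly this gap, but with a different key statement than yours: instead of bounding the number of copies of each individual piece, it proves Proposition~\ref{proposition:Bounding the size of the substituted arguments}, namely $\sum_{j=0}^{n-1}\psz{i}{\Pi_{Q_j}}\le\psz{i}{\Pi_M}$ for $i>d$. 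Plugged into point~\ref{corollary:Subject reduction iterated by a round-5} of Corollary~\ref{corollary:Subject reduction iterated by a round} this gives $\psz{i}{\Pi_{N_n}}\le\psz{i}{\Pi_{N_0}}(1+\wdth{d+1}{\Pi_{N_0}})\le 2\size{\Pi_{N_0}}^2$, the same degree-$2$ bound you reach.

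The paper's residual-tracking device is the notion of \emph{substitution trace}: to each $!$/$\$$-box in the running derivation one attaches the sequence of original $!$-boxes (subdeductions of $\Pi_{N_0}$) that have been successively plugged into it along its chain of polynomial assumptions. Lemma~\ref{lemma:Step wise transformation of trace assignments} describes how one reduction step extends these sequences, and Theorem~\ref{theorem:Nature of the elements of the sequences in a trace} proves that every such sequence consists of \emph{pairwise distinct} subdeductions of $\Pi_{N_0}$. The distinctness argument is not purely combinatorial: a repetition would create a cycle in the normalization, contradicting Fact~\ref{fact:WALL types a subset of SF} (\WALT\ embeds in \SF, which is strongly normalizing). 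This appeal to strong normalization of \SF\ is the one ingredient your sketch does not name.

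Your alternative phrasing --- ``each fixed piece of $\Pi_0$ is copied at most $1+\wdth{d+1}{\Pi_0}$ times'' --- is not the statement the paper proves and does not obviously follow from stratification alone: a box can sit inside an argument $Q_k$ that is substituted with multiplicity, and the resulting copies can later sit inside further arguments $Q_{k'}$, so naively the counts multiply rather than add. What blocks the blow-up is precisely the single-threading of polynomial assumptions through $!$-boxes (each $!$-box has at most one, and substitution is linear at the premise of $!$) and the resulting acyclicity of the trace chains. Once you set up traces, the statement you actually get is Proposition~\ref{proposition:Bounding the size of the substituted arguments} rather than a per-piece copy bound; your final inequality is right, but the route to it needs this reformulation.
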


\begin{theorem}[Bounding the result size of any complete canonical strategy.]
\label{theorem:Bounding the result size of any complete canonical strategy}
Let $M\equiv N_0\canstra{0,\dpth{M}}N_{\dpth{M}+1}$ be a complete canonical strategy from $M$. Then $\size{\Pi_{N_{\dpth{M}+1}}}$
$\leq \polynom{}{(\size{\Pi_{M}})}$, such that 
$\degree{\polynom{}{}} = \prod_{i=0}^{\dpth{\Pi_{M}}}\degree{\polynom{i}{}}$.
\end{theorem}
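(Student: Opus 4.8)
The plan is to iterate Proposition~\ref{proposition:Bounding the result of every complete round} along the $\dpth{\Pi_M}+1$ complete rounds that make up the canonical strategy, and then to read off the degree of the resulting polynomial from the elementary fact that the degree of a composition of polynomials is the product of the degrees. Recall that, since a term inherits the depth of the deduction typing it, $\dpth{M}=\dpth{\Pi_M}$, so the two indexings occurring in the statement coincide.

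First, I would unfold the abbreviation: a complete canonical strategy $M\equiv N_0\canstra{0,\dpth{M}}N_{\dpth{M}+1}$ is the chain of complete rounds $N_0\round{0}N_1\round{1}\cdots\round{\dpth{M}}N_{\dpth{M}+1}$, so the round leading from $N_i$ to $N_{i+1}$ is the complete round at level $i$, for $0\leq i\leq\dpth{\Pi_M}$. By point~\ref{theorem:substitution-property-1} of Theorem~\ref{theorem:substitution-property} the depth never grows along the strategy, hence every intermediate $\Pi_{N_i}$ still satisfies $\dpth{\Pi_{N_i}}\leq\dpth{\Pi_M}$, and each application of Proposition~\ref{proposition:Bounding the result of every complete round} is legitimate. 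Should Corollary~\ref{corollary:Behaviour of a canonical strategy} yield a strategy whose last rounds are empty, so that strictly fewer than $\dpth{\Pi_M}+1$ reductions actually occur, one still records for those rounds the identity polynomial, of degree $1$: the product then keeps its $\dpth{\Pi_M}+1$ factors and its value is unchanged.

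Second, before chaining the per-round bounds I would record that each $\polynom{i}{}$ may be assumed monotone non-decreasing on the non-negative integers: the inequality of Proposition~\ref{proposition:Bounding the result of every complete round} is only an upper bound, so replacing the coefficients of $\polynom{i}{}$ by their absolute values enlarges the bound while leaving $\degree{\polynom{i}{}}$ unchanged. With monotonicity available, the inequalities $\size{\Pi_{N_{i+1}}}\leq\polynom{i}{(\size{\Pi_{N_i}})}$, for $i=0,\ldots,\dpth{\Pi_M}$, compose into $\size{\Pi_{N_{\dpth{M}+1}}}\leq\polynom{\dpth{M}}{(\polynom{\dpth{M}-1}{(\cdots\polynom{0}{(\size{\Pi_M})}\cdots)})}$.

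Finally, set $\polynom{}{}\dfn\polynom{\dpth{M}}{}\circ\cdots\circ\polynom{0}{}$; this is again a polynomial, it satisfies $\size{\Pi_{N_{\dpth{M}+1}}}\leq\polynom{}{(\size{\Pi_M})}$, and, since $\degree{q\circ r}=\degree{q}\cdot\degree{r}$ for polynomials $q,r$, a straightforward induction on the number of composed factors gives $\degree{\polynom{}{}}=\prod_{i=0}^{\dpth{\Pi_M}}\degree{\polynom{i}{}}$, which is exactly the claim. The argument is essentially bookkeeping; its only delicate points are the monotonicity remark that legitimises composing the per-round bounds, and the appeal to point~\ref{theorem:substitution-property-1} of Theorem~\ref{theorem:substitution-property} guaranteeing that the whole strategy is covered by the fixed product indexed by $\dpth{\Pi_M}$. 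The genuine obstacle lies one level down, in the proof of Proposition~\ref{proposition:Bounding the result of every complete round}, where the combinatorics of a single complete round — how much the partial size at the levels above the round may blow up, via point~\ref{theorem:substitution-property-5} of Theorem~\ref{theorem:substitution-property} and the width bounds — has to be turned into a polynomial bound on $\size{\Pi_{N_i}}$.
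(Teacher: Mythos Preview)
Your proposal is correct and follows essentially the same route as the paper: iterate Proposition~\ref{proposition:Bounding the result of every complete round} along the complete rounds and read off the degree of the composite as the product of the degrees. You are more explicit than the paper on two points the paper glosses over, namely the monotonicity of the $\polynom{i}{}$ needed to chain the inequalities and the appeal to depth non-increase; these are welcome clarifications rather than a different argument.
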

To prove it, we observe that, from 
Proposition\ref{proposition:Bounding the result of every complete round}:
\small
\begin{align*}
&\size{\Pi_{N_{\dpth{M}+1}}}\leq 
\polynom{\dpth{M}}{(\size{\Pi_{N_{\dpth{M}}}})},
\size{\Pi_{N_{\dpth{M}}}}\leq
\polynom{\dpth{M}-1}{(\size{\Pi_{N_{\dpth{M}-1}}})},
\ldots,
\size{\Pi_{N_{2}}}\leq \polynom{1}{(\size{\Pi_{N_{1}}})},
\size{\Pi_{N_{1}}}\leq \polynom{0}{(\size{\Pi_{N_{0}}})}
\end{align*}
\normalsize
which implies
$\size{\Pi_{N_{\dpth{M}+1}}}
\leq
\polynom{\dpth{M}}
        {(\polynom{\dpth{M}-1}
                  {(\ldots
                     \polynom{1}{(\polynom{0}{(\size{\Pi_M})})}
                    \ldots)})}
$, where: 
\small
$$\degree{\polynom{\dpth{M}}
        {(\polynom{\dpth{M}-1}
                  {(\ldots
                     \polynom{1}{(\polynom{0}{(\size{\Pi_M})})}
                    \ldots)})}
}=\prod_{i=0}^{\dpth{\Pi_M}}\degree{\polynom{i}{}}
\enspace .
$$
\normalsize
\begin{corollary}[The normalization of \WALT\ is poly-step.]
\label{corollary:A polynomial bound on the number of the normalization steps}
There is a $k$ such that $\Pi_M$ normalizes in a number of steps which is $O(\size{\Pi_M}^{k^{\dpth{\Pi_M}}})$.
\end{corollary}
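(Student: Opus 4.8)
The plan is to follow one complete canonical strategy from $M$ and add up the lengths of its complete rounds, bounding the length of each round with Corollary~\ref{corollary:Behaviour of every round} and the size of the deduction at the \emph{start} of each round with Proposition~\ref{proposition:Bounding the result of every complete round} (exactly as in the proof of Theorem~\ref{theorem:Bounding the result size of any complete canonical strategy}). So fix $\Pi_M\rhd\Gamma;\Delta;{\mathcal E}\vdash\ta{M}{A}$ and, by Corollary~\ref{corollary:Behaviour of a canonical strategy}, a complete canonical strategy $M\equiv N_0\round{0}N_1\round{1}\cdots\round{\dpth{\Pi_M}}N_{\dpth{\Pi_M}+1}$, which consists of at most $\dpth{\Pi_M}+1$ complete rounds. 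The number of $\red$-steps performed along this strategy is $\sum_{i=0}^{\dpth{\Pi_M}}\ell_i$, where $\ell_i$ is the length of the round $N_i\round{i}N_{i+1}$, so it suffices to bound every $\ell_i$ by a function of $\size{\Pi_M}$.

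For the generic round, Corollary~\ref{corollary:Behaviour of every round}(2) gives $\ell_i\le\psz{i}{\Pi_{N_i}}\le\size{\Pi_{N_i}}$, since each partial size is one of the summands of the total size. To bound $\size{\Pi_{N_i}}$ I would iterate Proposition~\ref{proposition:Bounding the result of every complete round} along $N_0\round{0}N_1,\dots,N_{i-1}\round{i-1}N_i$ and obtain $\size{\Pi_{N_i}}\le(\polynom{i-1}{}\circ\cdots\circ\polynom{0}{})(\size{\Pi_M})$, a polynomial in $\size{\Pi_M}$ of degree $\prod_{j=0}^{i-1}\degree{\polynom{j}{}}$. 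The proof of Proposition~\ref{proposition:Bounding the result of every complete round} produces, at every level, a polynomial whose degree is bounded by a single constant $k_0$ independent of the deduction and of the level --- the only source of blow-up being the bounded level-one duplication controlled by Theorem~\ref{theorem:substitution-property}, i.e. the factor $\nocc{x}{P}\psz{i}{\Pi_Q}$ --- so the degree above is at most $k_0^{\,i}\le k_0^{\,\dpth{\Pi_M}}$, whence $\ell_i\le\size{\Pi_{N_i}}\le C\cdot\size{\Pi_M}^{\,k_0^{\dpth{\Pi_M}}}$ for some $C$. Summing over the at most $\dpth{\Pi_M}+1$ rounds yields a total of at most $(\dpth{\Pi_M}+1)\,C\cdot\size{\Pi_M}^{\,k_0^{\dpth{\Pi_M}}}$ reduction steps, which I would rewrite as $O\big(\size{\Pi_M}^{\,k^{\dpth{\Pi_M}}}\big)$ for a suitable universal $k\ge k_0$.

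I expect the only genuine work to be the bookkeeping of the non-degree part of these estimates. Composing up to $\dpth{\Pi_M}$ of the polynomials $\polynom{j}{}$ multiplies the degree by at most $k_0$ at each step, as used above, but their leading coefficients compound, so the constant $C$ is itself only of the form $A^{O(\dpth{\Pi_M}\,k_0^{\dpth{\Pi_M}})}$; and there is the further factor $\dpth{\Pi_M}+1$ coming from the number of rounds. Both contribute only a correction of the shape $c^{k_0^{\dpth{\Pi_M}}}$, hence get absorbed into the bound by passing from $k_0$ to a slightly larger universal constant $k$ (for instance $k=2k_0$), the finitely many residual cases with $\size{\Pi_M}\le 1$ --- where $M$ is a variable and there is nothing to reduce --- being vacuous. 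The remaining ingredients --- existence of the complete canonical strategy, the count of its rounds, and the passage from partial size to total size --- are immediate from Corollaries~\ref{corollary:Behaviour of every round} and~\ref{corollary:Behaviour of a canonical strategy}; and once the canonical strategy is shown to be the worst one, the same $k$ bounds the length of every reduction sequence, not only the canonical one.
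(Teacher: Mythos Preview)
Your proposal is correct and follows essentially the same route as the paper: bound the length of the $i$-th complete round by $\psz{i}{\Pi_{N_i}}\le\size{\Pi_{N_i}}$ via Corollary~\ref{corollary:Behaviour of every round}, bound $\size{\Pi_{N_i}}$ by iterating Proposition~\ref{proposition:Bounding the result of every complete round} to get a polynomial of degree $\le k^{i}$ in $\size{\Pi_M}$, and sum over $i\le\dpth{\Pi_M}$. The only cosmetic difference is in the final summation: the paper writes $\sum_{i=0}^{\dpth{\Pi_M}}\size{\Pi_M}^{k^i}\le\sum_{i=0}^{k^{\dpth{\Pi_M}}}\size{\Pi_M}^{i}$ and closes with the geometric-series formula, whereas you bound each summand by the last one and pick up the factor $\dpth{\Pi_M}+1$; your extra care about the compounding leading coefficients is a point the paper simply glosses over.
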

To prove it, let $k$ be the maximal of the values among $\degree{\polynom{0}{}},\ldots,\degree{\polynom{\dpth{\Pi_M}}{}}$ of Theorem~\ref{theorem:Bounding the result size of any complete canonical strategy}. Then, the number of redexes we have to reduce in the course of the complete canonical strategy is bounded by:
\small
\begin{align*}
\sum_{i=0}^{\dpth{\Pi_M}} \size{\Pi_{M}}^{k^{i}}
\leq
\sum_{i=0}^{k^{\dpth{\Pi_M}}} \size{\Pi_{M}}^{i}
=
\frac{\size{\Pi_M}^{k^{\dpth{\Pi_M}}+1}-1}{\size{\Pi_M}-1} \in 
O(k^{\dpth{\Pi_M}})
\enspace ,
\text{ using } \sum_{i=0}^{n} x^i = \frac{x^{n+1}-1}{x-1}
\enspace .
\end{align*}
\normalsize
\vspace{-.5cm}
\begin{theorem}[Weak poly-time normalization.]
\label{theorem:Weak poly-time normalization}
Every $\Pi_M$ normalizes in a time bounded by a polynomial in $\size{\Pi_M}$, whose degree depends on $\dpth{\Pi_M}$.
\end{theorem}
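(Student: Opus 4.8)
The plan is to assemble the theorem from the material already in place, the only genuine work being a uniform bound on the cost of a single rewriting step. Fix $\Pi_M\rhd\Gamma;\Delta;{\mathcal E}\vdash\ta{M}{A}$. By Corollary~\ref{corollary:Behaviour of a canonical strategy} there is a complete canonical strategy $M\equiv N_0\canstra{0,\dpth{M}}N_{\dpth{M}+1}$ ending in $\nf{N_{\dpth{M}+1}}$, made of $\dpth{\Pi_M}+1$ complete rounds, the $i$-th of which rewrites $N_i$ into $N_{i+1}$ by a round at level $i$. The time to normalize $M$ along this strategy is the sum, over all the $\red$-steps performed, of the cost of each step, so I would bound separately (a) the total number of steps and (b) the cost of one step, and then multiply.

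For (a): by Corollary~\ref{corollary:Behaviour of every round} the round at level $i$ performs at most $\psz{i}{\Pi_{N_i}}\leq\size{\Pi_{N_i}}$ steps, and by iterating Proposition~\ref{proposition:Bounding the result of every complete round} (as in the proof of Theorem~\ref{theorem:Bounding the result size of any complete canonical strategy}) each $\size{\Pi_{N_i}}$ is bounded by $\polynom{i-1}{(\cdots\polynom{0}{(\size{\Pi_M})}\cdots)}$, hence by a polynomial in $\size{\Pi_M}$ of degree at most $\prod_{j=0}^{\dpth{\Pi_M}}\degree{\polynom{j}{}}$. Summing over the $\dpth{\Pi_M}+1$ rounds is exactly the computation carried out in Corollary~\ref{corollary:A polynomial bound on the number of the normalization steps}, which I would simply invoke: the strategy performs $O(\size{\Pi_M}^{k^{\dpth{\Pi_M}}})$ steps, where $k$ is the maximum of the $\degree{\polynom{j}{}}$.

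For (b): a single step reduces a redex $(\bs x.P)Q\red P\subs{Q}{x}$ occurring inside the current term $N$; under the naive cost model this costs $O(\size{N})$ to locate the redex plus $O(\nocc{x}{P}\cdot\size{Q}+\size{P})$ to perform the substitution, hence $O(\size{N}^2)$, and since instances of the rule $C$ disappear in the term, $\size{N}\leq\size{\Pi_N}$ by Lemma~\ref{lemma:Partial size vs. absolute size of a term}. The point to settle is that at every intermediate stage of the round at level $i$, and not only at its endpoints $N_i$ and $N_{i+1}$, one still has $\size{\Pi_N}\leq\polynom{i}{(\size{\Pi_{N_i}})}$: this follows from the proof of Proposition~\ref{proposition:Bounding the result of every complete round}, which bounds the size increment step by step (using Theorem~\ref{theorem:substitution-property}, items \ref{theorem:substitution-property-4} and \ref{theorem:substitution-property-5}) and therefore applies verbatim to every prefix of the round. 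Consequently $\size{\Pi_N}$, hence $\size{N}$, stays bounded by a polynomial in $\size{\Pi_M}$ of degree at most $\prod_{j=0}^{\dpth{\Pi_M}}\degree{\polynom{j}{}}$, so every step costs $O\!\bigl(\size{\Pi_M}^{2\prod_{j}\degree{\polynom{j}{}}}\bigr)$.

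Multiplying (a) by (b) bounds the normalization time of $M$ by a polynomial in $\size{\Pi_M}$ whose degree is $k^{\dpth{\Pi_M}}+2\prod_{j=0}^{\dpth{\Pi_M}}\degree{\polynom{j}{}}$ — a fixed number once $\dpth{\Pi_M}$ is fixed — which is the statement. The main obstacle is item (b): one must inspect (or re-run) the proof of Proposition~\ref{proposition:Bounding the result of every complete round} to get the size bound uniformly at all intermediate stages of a round rather than only at its end, and one must commit to a concrete cost model for the substitution operation so that the per-step cost is provably polynomial in the running deduction size; once these are in place, everything else is a routine combination of the corollaries above.
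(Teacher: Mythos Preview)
Your proposal is correct and follows essentially the same approach as the paper: invoke Corollary~\ref{corollary:A polynomial bound on the number of the normalization steps} for the number of steps, and multiply by a per-step cost that is polynomial (in fact quadratic) in the current term size. The paper is simply terser on part (b), citing as a known fact from the literature that a single $\beta$-reduction step can be implemented by a Turing machine with quadratic overhead in the size of the term, whereas you spell out explicitly that the intermediate term sizes during a round stay polynomially bounded (which indeed follows, as you note, because the estimates behind Proposition~\ref{proposition:Bounding the result of every complete round} and Corollary~\ref{corollary:Subject reduction iterated by a round} apply to every prefix of a round, not just to its endpoint).
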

To prove it, we use Corollary~\ref{corollary:A polynomial bound on the number of the normalization steps} and the known fact that a single $\beta$-reduction of standard $\lambda$-calculus, of which $\red$ is a special case, can be implemented by a Turing machine with a quadratic overhead in the dimension of the term being reduced \cite{Asperti:1998-LICS,Terui:2001-LICS,Terui:2002-AML}.

\subsubsection{Proving Proposition~\ref{proposition:Bounding the result of every complete round}.}
Essentially we have to prove two facts. One is that every normalization step does not produce too many copies of the deductions that need to be replicated.
The other fact is that the deductions of \WALT, forming them a subsystem of \SF, are essentially acyclic. We start focusing on the first property.

\begin{corollary}[Subject reduction iterated by a round.]
\label{corollary:Subject reduction iterated by a round}
Let us assume 
$\Pi_{M}\rhd\Gamma;\Delta;$ ${\mathcal E}\vdash\ta{M}{A}$.
Let us assume also that in a given round
$M\equiv N_0 \redl{d} N_1 \redl{d}\cdots \redl{d}N_{n-1} \redl{d} N_{n}$,
for some $n\geq 0$, every step $N_{j} \redl{d} N_{j+1}$, with $0\leq j\leq n-1$,
rewrites a redex $(\bs x.P_{j})Q_{j} \redl{d} P_{j}\subs{Q_{j}}{x}$.
Then, $\Pi_{N_{n}}\rhd\Gamma;\Delta;{\mathcal E}\vdash\ta{N_n}{A}$ is such that:
\begin{enumerate}
\item 
\label{corollary:Subject reduction iterated by a round-1}
$\dpth{\Pi_{N_n}}\leq\dpth{\Pi_{M}}$.

\item 
\label{corollary:Subject reduction iterated by a round-2}
$\wdth{i}{\Pi_{N_n}}\leq\wdth{i}{\Pi_{M}}$, for every $0\leq i\leq d+1$.

\item
\label{corollary:Subject reduction iterated by a round-3}
$\psz{i}{\Pi_{N_n}}=\psz{i}{\Pi_{M}}$, for every $0\leq i< d$.

\item
\label{corollary:Subject reduction iterated by a round-4}
$\psz{d}{\Pi_{N_n}}<\psz{d}{\Pi_{M}}$.

\item
\label{corollary:Subject reduction iterated by a round-5}
$\psz{i}{\Pi_{N_n}}\leq
\psz{i}{\Pi_{M}}+\wdth{d+1}{\Pi_{M}}(\sum_{j=0}^{n-1}\psz{i}{\Pi_{Q_j}})$,
for every $d< i\leq \dpth{\Pi_{M}}$.
\end{enumerate}
\end{corollary}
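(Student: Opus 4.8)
The plan is to iterate the per-step subject reduction, Theorem~\ref{theorem:substitution-property}, along the $n$ steps of the round, proving all five items simultaneously by induction on $n$. For $n=0$ there is nothing to reduce: take $\Pi_{N_0}=\Pi_M$; items~\ref{corollary:Subject reduction iterated by a round-1}--\ref{corollary:Subject reduction iterated by a round-3} and~\ref{corollary:Subject reduction iterated by a round-5} hold as equalities, and item~\ref{corollary:Subject reduction iterated by a round-4} is the only one that tacitly forces $n\geq 1$. For the inductive step, split the round as $M\round{d}N_{n-1}\redl{d}N_n$, where the last step rewrites $(\bs x.P_{n-1})Q_{n-1}\redl{d}P_{n-1}\subs{Q_{n-1}}{x}$ with the redex at depth $d$ in $\Pi_{N_{n-1}}$; apply the induction hypothesis to $M\round{d}N_{n-1}$ and Theorem~\ref{theorem:substitution-property} to $N_{n-1}\redl{d}N_n$. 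Since the corollary and the theorem are phrased in terms of the same data (depth, width, partial sizes), each per-step estimate composes with the induction hypothesis into the claimed iterated estimate; note that item~\ref{corollary:Subject reduction iterated by a round-5} at step $n$ only invokes item~\ref{corollary:Subject reduction iterated by a round-2} of the induction hypothesis at step $n-1$, so the simultaneous induction is not circular.

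Items~\ref{corollary:Subject reduction iterated by a round-1}--\ref{corollary:Subject reduction iterated by a round-4} are immediate. For~\ref{corollary:Subject reduction iterated by a round-1}, Theorem~\ref{theorem:substitution-property}(\ref{theorem:substitution-property-1}) gives $\dpth{\Pi_{N_n}}\leq\dpth{\Pi_{N_{n-1}}}$, which the induction hypothesis bounds by $\dpth{\Pi_M}$. For~\ref{corollary:Subject reduction iterated by a round-2}, since the last redex is at depth $d$, Theorem~\ref{theorem:substitution-property}(\ref{theorem:substitution-property-2}) gives $\wdth{i}{\Pi_{N_n}}\leq\wdth{i}{\Pi_{N_{n-1}}}$ for every $0\leq i\leq d+1$, and the induction hypothesis closes the chain. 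For~\ref{corollary:Subject reduction iterated by a round-3}, Theorem~\ref{theorem:substitution-property}(\ref{theorem:substitution-property-3}) gives $\psz{i}{\Pi_{N_n}}=\psz{i}{\Pi_{N_{n-1}}}$ for $i<d$, so the equalities compose. For~\ref{corollary:Subject reduction iterated by a round-4}, Theorem~\ref{theorem:substitution-property}(\ref{theorem:substitution-property-4}) gives a strict drop $\psz{d}{\Pi_{N_n}}<\psz{d}{\Pi_{N_{n-1}}}$, and the induction hypothesis gives $\psz{d}{\Pi_{N_{n-1}}}\leq\psz{d}{\Pi_M}$, so $\psz{d}{\Pi_{N_n}}<\psz{d}{\Pi_M}$.

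The substantial item is~\ref{corollary:Subject reduction iterated by a round-5}. Theorem~\ref{theorem:substitution-property}(\ref{theorem:substitution-property-5}) gives, for every $d<i\leq\dpth{\Pi_{N_{n-1}}}$, $\psz{i}{\Pi_{N_n}}\leq\psz{i}{\Pi_{N_{n-1}}}+\nocc{x}{P_{n-1}}\psz{i}{\Pi_{Q_{n-1}}}$, while for $i>\dpth{\Pi_{N_{n-1}}}$ both sides vanish; feeding in the induction hypothesis telescopes this to $\psz{i}{\Pi_{N_n}}\leq\psz{i}{\Pi_M}+\sum_{j=0}^{n-1}\nocc{x}{P_j}\psz{i}{\Pi_{Q_j}}$. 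It remains to replace each coefficient $\nocc{x}{P_j}$ by the single quantity $\wdth{d+1}{\Pi_M}$ and pull it out of the sum. Here Lemma~\ref{lemma:structural-properties-WALL} is the tool: the application rule ($\li E$, $\li E_!$ or $\liv E$) that forms $(\bs x.P_j)Q_j$ sits at depth $d$ in $\Pi_{N_j}$ and is therefore counted by $\wdth{d+1}{\Pi_{N_j}}$ (so $\wdth{d+1}{\Pi_{N_j}}\geq 1$); and if $\nocc{x}{P_j}>1$ then $x$ is a polynomial partially discharged assumption for which Lemma~\ref{lemma:structural-properties-WALL}(\ref{lemma:structural-properties-WALL-2}) exhibits $\nocc{x}{P_j}-1$ further contractions, again at depth $d$. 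Hence $\nocc{x}{P_j}\leq\wdth{d+1}{\Pi_{N_j}}$ in all cases, and item~\ref{corollary:Subject reduction iterated by a round-2} of the induction hypothesis yields $\wdth{d+1}{\Pi_{N_j}}\leq\wdth{d+1}{\Pi_M}$; factoring this constant out of the sum gives exactly the bound of item~\ref{corollary:Subject reduction iterated by a round-5}.

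I expect the only real work to be the depth/width bookkeeping behind $\nocc{x}{P_j}\leq\wdth{d+1}{\Pi_M}$: one must track that ``a redex at depth $d$ in $\Pi_{N_j}$'' means $d$ box rules ($!$ or $\$$) lie above it, so that, by the recursive clauses defining $\wdth{\cdot}{\cdot}$ (each box shifts the width index by one), the application rule and the contractions attached by Lemma~\ref{lemma:structural-properties-WALL}(\ref{lemma:structural-properties-WALL-2}) to a multiply occurring polynomial variable are tallied at index $d+1$ and not deeper. Everything else is a routine telescoping of the per-step inequalities of Theorem~\ref{theorem:substitution-property}.
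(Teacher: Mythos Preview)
Your proposal is correct and follows essentially the same route as the paper: iterate Theorem~\ref{theorem:substitution-property} along the round, telescope the per-step bounds, and for item~\ref{corollary:Subject reduction iterated by a round-5} use $\nocc{x}{P_j}\leq\wdth{d+1}{\Pi_{N_j}}\leq\wdth{d+1}{\Pi_{M}}$ (the paper states the first inequality as holding ``by definition'' while you justify it via Lemma~\ref{lemma:structural-properties-WALL}(\ref{lemma:structural-properties-WALL-2}), which is the same content made explicit). Your remark that item~\ref{corollary:Subject reduction iterated by a round-4} tacitly requires $n\geq 1$ is also apt.
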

All its points follow by applying Theorem~\ref{theorem:substitution-property} on every step of the round. We develop some details of point~\ref{corollary:Subject reduction iterated by a round-5}.
Theorem~\ref{theorem:substitution-property}, applied  to every
$(\bs x.P_{j})Q_{j} \redl{d} P_{j}\subs{Q_{j}}{x}$, with $0\leq j\leq n-1$,
implies
$\psz{i}{\Pi_{N_{j+1}}}
\leq\psz{i}{\Pi_{N_j}}+\nocc{x}{P_j}\psz{i}{\Pi_{Q_j}}
\leq\psz{i}{\Pi_{N_j}}+\wdth{d+1}{\Pi_{N_j}}\psz{i}{\Pi_{Q_j}}
$, for every $d< i\leq \dpth{\Pi_{M}}$, because, by definition,
$\nocc{x}{P_j}\leq\wdth{d+1}{\Pi_{N_j}}$.
So, point~\ref{corollary:Subject reduction iterated by a round-2} here above implies:
\small
\begin{align*}
\label{align:Subject reduction iterated by a round-5}
\psz{i}{\Pi_{N_{n}}}
\leq\psz{i}{\Pi_{N_0}}+\sum_{j=0}^{n-1}\wdth{d+1}{\Pi_{N_j}}\psz{i}{\Pi_{Q_j}}
\leq\psz{i}{\Pi_{N_0}}+\wdth{d+1}{\Pi_{N_0}}\sum_{j=0}^{n-1}\psz{i}{\Pi_{Q_j}}
\enspace .
\end{align*}
\normalsize
Now, let us assume we \textit{shall} be able to prove:
\begin{proposition}[Bounding the size of the substituted arguments.]
\label{proposition:Bounding the size of the substituted arguments}
Let us suppose the assumptions of Corollary~\ref{corollary:Subject reduction iterated by a round} hold.
Then, $\sum_{j=0}^{n-1}\psz{i}{\Pi_{Q_j}}\leq\psz{i}{\Pi_M}$, for every 
$d< i\leq \dpth{\Pi_{M}}$.
\end{proposition}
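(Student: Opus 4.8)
The plan is to trace each substituted argument back to a subdeduction of the fixed deduction $\Pi_M$ and to argue that, at every level strictly above $d$, these subdeductions contribute disjointly, so that additivity of $\psz{i}{\cdot}$ finishes the job. First I would record the dichotomy coming from Definition~\ref{definition:Rewriting PT}: a contraction $(\bs x.P_j)Q_j\redl{d}P_j\subs{Q_j}{x}$ occurring in the round either erases $Q_j$ (when $\nocc{x}{P_j}=0$), or merely relocates it (when $\nocc{x}{P_j}=1$, so $Q_j$ is a value), or duplicates it (when $\nocc{x}{P_j}>1$). In the duplicating case, points~\ref{lemma:structural-properties-WALL-0-nuovo} and~\ref{lemma:structural-properties-WALL-2} of Lemma~\ref{lemma:structural-properties-WALL} force $x$ to be a polynomial variable whose genuine occurrences in $P_j$ all sit behind a $!$-box; since, moreover, $Q_j$ is then a value of type $\,!A$ and hence, by point~\ref{lemma:structural-properties-WALL-3.1}, itself a $!$-box, every residual copy of $Q_j$ lands at level $\geq d+1$ in the reduct. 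Consequently no residual copy produced during the round is itself a redex reducible at level $d$ later in the same round.

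With this in hand I would set up the tracing by induction on the length $n$ of the round, maintaining the invariant that the subdeductions $\Pi_{Q_0},\dots,\Pi_{Q_{n-1}}$ arise, via root-depth-preserving isomorphisms, from subdeductions $\Pi^{M}_{Q_0},\dots,\Pi^{M}_{Q_{n-1}}$ of $\Pi_M$ that are pairwise non-overlapping above level $d$. The combinatorial core is that any two redexes contracted during the round are, in the term where each of them is contracted, either disjoint --- in which case their arguments descend from disjoint parts of $\Pi_M$ --- or nested, and in the nested case the inner redex, if it is ever reducible at level $d$, must lie in the \emph{function} part of the outer one (its argument part having been pushed to level $\geq d+1$ by the copying analysis above), so that again the two arguments descend from disjoint parts. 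Point~3 of Corollary~\ref{corollary:Behaviour of every round} and point~1 of Theorem~\ref{theorem:substitution-property} are used to keep the depth bookkeeping honest along the round: nothing below level $d$ moves, and no depth ever increases, so the traced subdeductions remain genuine subdeductions of $\Pi_M$ sitting at the expected depths.

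Once the invariant is available the conclusion is immediate: $\psz{i}{\cdot}$ at a fixed depth $i$ only counts prescribed rule occurrences at that depth, hence is additive over pairwise non-overlapping subdeductions, and so $\sum_{j=0}^{n-1}\psz{i}{\Pi_{Q_j}}=\sum_{j=0}^{n-1}\psz{i}{\Pi^{M}_{Q_j}}\leq\psz{i}{\Pi_M}$ for every $d<i\leq\dpth{\Pi_M}$. The step I expect to be the real obstacle is establishing and propagating the non-overlap invariant --- in particular, handling the relocation case, where a linear contraction moves $Q_j$ (rather than copying it) to a new position at depth $d$ from which it may be contracted again: one must check that this \emph{moves} the level-$>d$ material of $\Pi_{Q_j}$ inside $\Pi_M$ rather than creating a second copy of it, so that the footprints really do stay disjoint. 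Pinning that down carefully against the definition of $\red$ and of $\psz{i}{\cdot}$ is where essentially all the work of the proof lies.
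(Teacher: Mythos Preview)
Your overall strategy---trace each $Q_j$ back to material in $\Pi_M$ and sum---is the right instinct, but the non-overlap invariant you rely on is false, and it fails exactly in the relocation case you flag as ``the real obstacle.'' Take $M \equiv (\bs y.\, (\bs x.\, x)\, y)\, N$ with $N$ a closed value (say a Church numeral $\UNum{2}$) and run a round at depth $d=0$: the first contraction has $Q_0 \equiv N$; the second, reducing $(\bs x.\,x)\,N$, has $Q_1 \equiv N$ again. Both $Q_j$ trace back to the \emph{same} occurrence $\Pi_N \preceq \Pi_M$, so the footprints coincide rather than being disjoint, and $\sum_j \psz{i}{\Pi_{Q_j}} = 2\,\psz{i}{\Pi_N}$ while $\psz{i}{\Pi_M} = \psz{i}{\Pi_N} + 2$. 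Your remark that relocation ``moves rather than copies'' is true of the \emph{reduct}, but it does not prevent the same block of $\Pi_M$ from serving as the argument of two successive redexes; there is no bookkeeping that makes that footprint count only once.

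The paper takes a different route. It introduces \emph{substitution traces}: to each $!$- or $\$$-box in the current deduction it attaches the sequence of $!$-typed arguments that have been plugged into that box along the round (Lemma~\ref{lemma:Step wise transformation of trace assignments}). The key fact (Theorem~\ref{theorem:Nature of the elements of the sequences in a trace}) is that the deductions appearing in any single trace sequence are pairwise \emph{distinct} subdeductions of $\Pi_M$. This is proved not by any spatial disjointness but by an \emph{acyclicity} argument: a repetition in a trace would mean some $\Pi_{\tilde Q}$ is eventually plugged into a box inside a copy of itself, giving a cycle in the normalization and contradicting that \WALT\ is a subsystem of the strongly normalizing \SF\ (Fact~\ref{fact:WALL types a subset of SF}). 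That global termination argument is the ingredient missing from your plan; it replaces the unavailable disjointness of footprints with a no-repetition property along each chain of nested substitutions. Note too that traces only follow $!$-typed arguments---the only ones that can be duplicated---so the linear relocations that break your invariant are not what the trace machinery is bounding.
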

Proposition~\ref{proposition:Bounding the result of every complete round}
is directly implied by Corollary~\ref{corollary:Subject reduction iterated by a round} and Proposition~\ref{proposition:Bounding the size of the substituted arguments}
by assuming that
$M\equiv N_0 \round{d} N_{n}$ is complete and observing:
\small
\[
\psz{i}{\Pi_{N_{n}}}
\leq\psz{i}{\Pi_{N_0}}+\wdth{d+1}{\Pi_{N_0}}\psz{i}{\Pi_{N_0}}
\leq\psz{i}{\Pi_{N_0}}+\psz{i}{\Pi_{N_0}}^2
\leq 2\size{\Pi_{N_0}}^2
\enspace ,
\]
\normalsize
where the two terms $M, N_n$ in the round $M\equiv N_0 \round{d} N_{n}$ here above coincide to $N_{i-1}, N_i$, respectively, of Proposition~\ref{proposition:Bounding the result of every complete round}. Consequently, we get Theorem~\ref{theorem:Weak poly-time normalization} about the weak poly-time normalization of \WALT.

\subsubsection{Proving Proposition~\ref{proposition:Bounding the size of the substituted arguments}.}
This amounts to check the absence of cycles.
For doing this, we trace how the copies of the same deduction compose in the course of a normalization.
The main tool for tracing are the \textit{substitution traces}, or simply, \textit{traces}. They record how we compose (sub-)deductions that conclude by a modal rule as the normalization proceeds. We are interested to them since they determine the size growth at the levels deeper than the one a normalization step takes place.
\par
\textbf{(Substitution) traces.}
Every \textit{(substitution) trace} is a set of sequences of deductions of \WALT, defined as follows.
The empty set of sequences is a \textit{trace}.
For every $\Pi_M(!)\rhd\Gamma,\Delta,{\mathcal E}\vdash\ta{M}{\,!A}$ the singleton $\{\st{\Phi}{\Pi_{M}(!)}\}$ is a \textit{trace}, $\Phi$ containing the single polynomial variable in $\dom{\mathcal E}\cap\FV{M}$, if any. Otherwise, $\Phi$ is $\emptyset$.
For every
$\Pi_M(\$)\rhd\Gamma,\Delta,{\mathcal E}\vdash\ta{M}{\$A}$,
the set
$\biguplus_{x\in{\mathcal F}}\{\st{\{x\}}{\Pi_{M}(\$)}\}$,
is a \textit{trace}, where ${\mathcal F}$ is the set of polynomial variables in $\dom{\mathcal E}\cap\FV{M}$, if any. Otherwise, the trace is a singleton
$\{\st{\emptyset}{\Pi_{M}(\$)}\}$.
Finally, for any ${\mathcal F}', {\mathcal F}''$, let
$t_1=\biguplus_{x\in{\mathcal F}'\cup{\mathcal F}''}
 \{\st{\{x\}}{\Pi_{P^x_1},\ldots,\Pi_{P^x_{m_x}}}\}$
and
$t_2=\{\st{\Phi}{\Pi_{Q_1},\ldots,\Pi_{Q_n}}\}$ be traces.
Then, we obtain a trace by \textit{plugging $t_2$ (on top of some of the sequences) in $t_1$}:
\small
\[
(\biguplus_{x\in{\mathcal F}'}
  \{\st{\Phi}{\Pi_{P^x_1},\ldots,\Pi_{P^x_{m_x}},
              \Pi_{Q_1}  ,\ldots,\Pi_{Q_n}}\}
 )\uplus
 (\biguplus_{y\in{\mathcal F}''}
 \{\st{\{y\}}{\Pi_{P^y_1},\ldots,\Pi_{P^y_{m_y}}}\})
\enspace .
\]
\normalsize
\textbf{Intuition about traces and initial traces, introduced here below.}
Initial traces can be thought of as traces assigned to a deduction we want to normalize. Those which are not initial can be thought of as built stepwise and associated to the deductions as far as the normalization proceeds.
\par
\textbf{Assigning (initial) traces to a deduction.}
Let $\Pi_{M}\rhd\Gamma;\Delta;{\mathcal E}\vdash \ta{M}{A}$.
An \textit{assignment of traces to $\Pi_M$} is a map $\ca(\Pi_M)=\bigcup_{\Pi_{P_0}\preceq\Pi_M}\{\ca(\Pi_{P_0})\}$ such that, for every 
subdeduction $\Pi_{P_0}(R)\rhd\Gamma_0;\Delta_0;{\mathcal E}_0\vdash \ta{P_0}{B}$ of $\Pi_M$, $\ca(\Pi_{P_0}(R))$ yields a trace, as follows:
\begin{enumerate}
\item
\label{enum:initial-trace-1}
$\ca(\Pi_{P_0}(R))=\{\st{\Phi}{\Pi_{P_0},\Pi_{P_1},\ldots,\Pi_{P_m}}\}$ for some $m\geq 0$, if $R\equiv!$, $B\equiv !A_0$, and
\begin{itemize}
\item 
for every $1\leq i\leq m-1$, $P_i$ is a value,
and
$\Pi_{P_i}(!)
\rhd
\Gamma_i;\Delta_i;{\mathcal E}_i,\{(\emptyset;\ta{x_i}{A_{i}})\}
\vdash\ta{P_i}\,{!A_{i-1}}$;

\item 
$\Pi_{P_m}(!)
\rhd
\Gamma_m;\Delta_m;{\mathcal E}_m\sqcup\{(\emptyset;\Phi_m)\}
\vdash\ta{P_m}\,{!A_{m-1}}$.
\end{itemize}

\item
\label{enum:initial-trace-2}
$\ca(\Pi_{P_0}(R))=
(\biguplus_{i=1}^p
\{\st{\Phi_{m_i}}{\Pi_{P_0},\Pi_{P_{i 1}},\ldots,\Pi_{P_{i m_i}}}\}
)\uplus
(\biguplus_{j=1}^q
\{\st{\Phi'_j}{\Pi_{P_0}}\})$, for some $p,q\geq 0$, $n_1,\ldots,n_p\geq 1$,
if $R\equiv\$$, $B\equiv \$A_0$, and
\begin{itemize}
\item 
$\Pi_{P_0}(\$)
\rhd
\Gamma_0;\Delta_0;{\mathcal E}_0,
(\emptyset;\ta{x^0_{1 1}}{A_{1}}),
\ldots,
(\emptyset;\ta{x^0_{1 n_1}}{A_{1}}),
\ldots\ldots
(\emptyset;\ta{x^0_{p 1}}{A_{p}}),
\ldots,
(\emptyset;\ta{x^0_{p n_p}}{A_{p}})
\vdash \ta{P_0}{\$A}$;

\item 
for every $1\leq i\leq p$, and $m_1,\ldots,m_p\geq 1$:
\begin{itemize}
\item 
all the variables $x_{i1},\ldots,x_{in_i}$ are contracted to the same variable $x_i$ in $\Pi_M$;
\item 
$P_{i m_i}$ is a value,  and
$\Pi_{P_{i m_i}}(!)\rhd
\Gamma_{i m_i};
\Delta_{i m_i};
{\mathcal E}_{i m_i}\sqcup\{(\emptyset;\Phi_{m_i})\}
\vdash\ta{P_{i m_i}}\,{!A_{i m_i-1}}$;
\item 
$A_{i}\equiv A_{i 0}$;
\item 
for every $1\leq j\leq m_i-1$,
$P_{i j}$ is a value, and
$\Pi_{P_{i j}}(!)
\rhd
\Gamma_{i j};
\Delta_{i j};
{\mathcal E}_{i j},\{(\emptyset;\ta{y_{i j}}{A_{i j}})\}
\vdash\ta{P_{ij}}\,{!A_{i j-1}}$;
\end{itemize}

\item 
if we say that $\{z_1,\ldots,z_q\}$ is the set of all the polynomial variables in 
$\dom{{\mathcal E}_0}\cap\FV{P_0}$, namely the set such that
$\{z_1,\ldots,z_q\}
\cap
\{x^{0}_{1{1}},\ldots,x^{0}_{1{n_1}},
\ldots$ $\ldots
  x^{0}_{p{1}},\ldots,x^{0}_{p{n_p}}
\}=\emptyset$, then, for every $1\leq j\leq q$,
$\Phi'_j$ contains the single polynomial variable $z_j$ in $\dom{{\mathcal E}_{0}}\cap\FV{P_{0}}$, if any. Otherwise, there is a unique $\Phi'_1=\emptyset$.
\end{itemize}

\item 
$\ca(\Pi_{P_0}(R))=\emptyset$ if $R\not\in\{\$, !\}$.
\end{enumerate}
The assignment $\ca(\Pi_M)$ is \textit{initial}, if
$m=0$ in clause~\ref{enum:initial-trace-1},
and $p=0$ in clause~\ref{enum:initial-trace-2}. Namely, every initial assignment assigns at least $\{\st{\emptyset}{\Pi_{P_0}}\}$ to
$\ca(\Pi_{P_0}(R))$, with $R\in\{\$, !\}$.
Moreover, $\Pi_{P_{i1}},\ldots, \Pi_{P_{i m_i}}$ in the definition of
$\ca(\Pi_{P_0}(\$))$ represent deductions whose subject is a value, with a modal type, replaced for $x_{i1},\ldots,x_{in_i}$.
\par
As consequence, the following lemma shows that we can transform a trace assignment by means of the normalization steps.
In particular, it says how a reduction step can modify the trace of a subdeduction that concludes by an instance of the rule $\$$, or $!$. The trace is modified by plugging the trace, associated to the deduction, which is argument of the $\beta$-redex, on top of the modal rules occurring in the deduction which is the body of the function in the $\beta$-redex.

\begin{lemma}[Stepwise transformation of trace assignments.]
\label{lemma:Step wise transformation of trace assignments}
Let $\Pi_{M}\rhd\Gamma;\Delta;{\mathcal E}\vdash \ta{M}{A}$, and 
$\ca(\Pi_M)$ be an assignment of traces to $\Pi_M$.
Also, let us suppose that,
for some $0\leq d\leq \dpth{\Pi_M}$, 
$M\redl{d}N$ by means of $(\bs x.P)Q\redl{d}P\subs{Q}{x}$, 
where $\nocc{x}{P}\geq 1$,
$\Pi_Q\rhd\Gamma_Q;\Delta_Q;{\mathcal E}_Q\vdash\ta{Q}{\,!A}$, and $\ca(\Pi_M)$ is such that, for $\Pi_Q\preceq\Pi_M$, we have $\ca(\Pi_Q)=\{\st{\Phi_Q}{\Pi_{Q_1},\ldots,\Pi_{Q_{m}}}\}$.
The reduction step induces $\ca(\Pi_N)$ from $\ca(\Pi_M)$ as follows:
\begin{enumerate}
\item
\label{lemma:Step wise transformation of trace assignments-01}
$\ca(\Pi_N)$ contains $\ca(\Pi_{P'\subs{Q\subs{y}{z}}{y}})=
\{\st{\{y\}}{\Pi_{P_1},\ldots,\Pi_{P_{n}},\Pi_{Q_1}\ldots,\Pi_{Q_{m}\subs{y}{z}}}\}$,
for every
$\Pi_{P'}(!)\preceq\Pi_M$ such that: 
(i) $\ca(\Pi_M)$ contains $\ca(\Pi_P'(!))=\{\st{\{y\}}{\Pi_{P_1},\ldots,\Pi_{P_{n}}}\}$, 
(ii) $y$ is contracted to $x$ in $\Pi_P$ by occurrences of the rules $C, \li E, \li E_!, \liv E$, and 
(iii) $\Phi_Q=\{z\}$.
Otherwise, if $\Phi_Q=\emptyset$, then
$\ca(\Pi_{P'\subs{Q}{y}})=
\{\st{\emptyset}{\Pi_{P_1},\ldots,\Pi_{P_{n}},\Pi_{Q_1}\ldots,\Pi_{Q_{m}}}\}$.

\item 
\label{lemma:Step wise transformation of trace assignments-02}
$\ca(\Pi_N)$ contains 
$\ca(\Pi_{P'\subs{Q\subs{x_1}{z}}{x_1}
            \ldots
            \subs{Q\subs{x_m}{z}}{x_m}
         })=
{\mathcal S}
\uplus
(\biguplus_{i=1}^{m}
 \{\st{\{x_i\}}{\Pi_{P^i_1},\ldots,\Pi_{P^i_{n_i}},
            \Pi_{Q_1}\ldots,$ $\Pi_{Q_{m}\subs{x_i}{z}}}\})$,
for every 
$\Pi_{P'}(\$)\preceq\Pi_M$ such that:
(i) $\ca(\Pi_M)$ contains
$\ca(\Pi_{P'}(\$))=
{\mathcal S}
\uplus
(\biguplus_{i=1}^{m}\{\st{\{x_i\}}{\Pi_{P^i_1},\ldots,\Pi_{P^i_{n_i}}}\})
$, 
(ii) the variables $x_1,\ldots,x_{m}$ are contracted to $x$ in $\Pi_M$ by occurrences of the rules $C, \li E, \li E_!, \liv E$, and 
(iii) $\Phi_Q=\{z\}$, for some $m\geq 1$ and $n_1,\ldots,n_m\geq 1$. Otherwise, if $\Phi_Q=\emptyset$, we have that
$\ca(\Pi_{P'\subs{Q}{x_1}
            \ldots
            \subs{Q}{x_m}
         }
    )=
{\mathcal S}
\uplus
(\biguplus_{i=1}^{m}
 \{\st{\emptyset}{\Pi_{P^i_1},\ldots,\Pi_{P^i_{n_i}},
                  \Pi_{Q_1}\ldots,\Pi_{Q_{m}}}\})$.
\end{enumerate}
$\ca(\Pi_N)$ is identical to $\ca(\Pi_M)$ everywhere else.
\end{lemma}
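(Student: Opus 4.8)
I would argue by induction on the structure of $\Pi_M$, isolating first the \emph{core case} in which the contracted redex $(\bs x.P)Q$ heads $\Pi_M$, and then propagating the statement through the surrounding context. The guiding observation is that $\Pi_N$ is produced from $\Pi_M$ \emph{exactly} by the construction underlying the Substitution property (Lemma~\ref{lemma:subst-vs-wght}): $\Pi_N$ arises from the subdeduction $\Pi_P$ by grafting suitably relabelled copies of $\Pi_Q$ on top of the subdeductions of $\Pi_P$ that introduce the polynomial occurrences of $x$, every other subdeduction being left untouched. Since $\nocc{x}{P}\geq 1$, Definition~\ref{definition:Rewriting PT} (clauses~\ref{align:red-2} and~\ref{align:red-3}) forces $Q\in\text{\PTV}$ for the contracted redex; with the hypothesis that $Q$ carries the $!$-modal type $!A$ and with Lemma~\ref{lemma:structural-properties-WALL}(\ref{lemma:structural-properties-WALL-3.1}), this gives that $\Pi_Q$ ends with an instance of the rule $!$. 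Hence, by clause~\ref{enum:initial-trace-1} of the definition of a trace assignment, $\ca(\Pi_Q)$ is the singleton $\{\st{\Phi_Q}{\Pi_{Q_1},\dots,\Pi_{Q_m}}\}$, where $\Phi_Q$ is either $\emptyset$ or the singleton holding the unique polynomial free variable $z$ of $Q$ --- which, when $\nocc{x}{P}>1$, clause~\ref{align:red-3} forces to be the only free variable of $Q$ at all. This singleton is precisely the trace that will be ``plugged on top of'' the traces met inside $\Pi_P$.

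For the core case, $Q:!A$ forces $\bs x.P$ to be typed (up to transparent uses of the $\forall$- and $C$-rules) by $\li I_{!}$ and the application by $\li E_{!}$; hence in $\Pi_P$ the variable $x$ occurs only as a polynomial partially discharged assumption. By Lemma~\ref{lemma:structural-properties-WALL}(\ref{lemma:structural-properties-WALL-2}) we may write $P\equiv P'\subs{x}{x^1_1\cdots x^n_{q_n}}$, and $\Pi_P$ contains subdeductions $\Pi'_1,\dots,\Pi'_n$, each ending in one of $A,\$,!$, that introduce these occurrences. The Substitution property --- case~\ref{lemma:subst-vs-wght-02} when $\nocc{x}{P}=1$, case~\ref{lemma:subst-vs-wght-04} when $\nocc{x}{P}>1$ --- builds $\Pi_N$ by replacing, inside each $\Pi'_j$, the polynomial assumptions it carries by copies of $\Pi_Q$, the tracked free variable $z$ of $Q$ being relabelled so as to agree with the variable recorded in the trace of the modal box that receives the copy. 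I would then check, by cases on the terminal rule of $\Pi'_j$, that $\ca(\Pi'_j)$ changes exactly as asserted: if that rule is $A$ the assumption is fake, nothing is grafted, and the trace is unchanged; if it is $!$, then $\ca(\Pi'_j)=\{\st{\{y\}}{\Pi_{P_1},\dots,\Pi_{P_n}}\}$ and grafting $\Pi_Q$ makes the $!$-box transitively enclose $\Pi_{Q_1},\dots,\Pi_{Q_m}$, relabelling $z$ in $\Pi_{Q_m}$ to $y$ when $\Phi_Q=\{z\}$ --- this is precisely the ``plugging $t_2$ in $t_1$'' operation of the definition of trace (with $z$ relabelled as needed), for $t_1=\ca(\Pi'_j)$ and $t_2=\ca(\Pi_Q)$, and it yields point~\ref{lemma:Step wise transformation of trace assignments-01}; if it is $\$$, then $\ca(\Pi'_j)=\mathcal{S}\uplus(\biguplus_{k}\{\st{\{x_k\}}{\dots}\})$ and a copy of $\Pi_Q$ is grafted on each of the slots contracted to $x$, extending each such sequence by $\Pi_{Q_1},\dots,\Pi_{Q_m}$ (with $z$ relabelled to the corresponding $x_k$) while $\mathcal{S}$ is left alone --- which yields point~\ref{lemma:Step wise transformation of trace assignments-02}. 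Finally, each installed copy of $\Pi_Q$ still ends in $!$, so by clause~\ref{enum:initial-trace-1} its own trace is again the expected singleton, and no other subdeduction is touched; thus $\ca(\Pi_N)$ agrees with $\ca(\Pi_M)$ everywhere else, and one checks in passing that the family of traces so obtained really is an assignment of traces to $\Pi_N$.

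For the inductive step the redex sits inside an immediate subderivation of the last rule $R$ of $\Pi_M$. If $R\notin\{\$,!\}$ the conclusion carries the empty trace and the claim follows from the induction hypothesis applied to that subderivation (when $R$ has two premises, only one of them contains the redex, and the traces in the other remain as they are). If $R\in\{\$,!\}$, then since the contracted redex lies at depth $d$ and $R$ raises the level by one, the redex lies strictly below $R$; hence $R$ itself, and everything it dominates at its own level, is untouched, while the traces of the modal sub-boxes of $R$ transform by the induction hypothesis. It then remains only to observe that the way clauses~\ref{enum:initial-trace-1} and~\ref{enum:initial-trace-2} glue the traces of those sub-boxes into $\ca(\Pi_M(R))$ commutes with plugging: plugging $\ca(\Pi_Q)$ onto an inner sequence and then assembling yields the same trace as assembling first and then plugging.

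The step I expect to be the main obstacle is the combinatorial bookkeeping of the $\$$-box case: one reduced redex may, through contraction, feed the \emph{same} value $Q$ into several polynomial slots of a single $\$$-box and of several $!$-boxes nested above it, so one must verify that the family of sequences produced is exactly the one obtained by composing the $\biguplus$ of clause~\ref{enum:initial-trace-2} with plugging, and, crucially, that the relabellings $\subs{x_k}{z}$ keep that union a union over \emph{distinct} polynomial variables, so that no two sequences recording different variables get inadvertently merged. A smaller but persistent nuisance is to maintain, stably under the apartness renaming of the free variable of duplicated copies of $Q$ in case~\ref{lemma:subst-vs-wght-04}, the exact correspondence between syntactic occurrences inside $\Pi_N$ and the abstract sequences of deductions that make up a trace.
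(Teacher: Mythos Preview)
Your proposal is correct and its ``core case'' is essentially the paper's entire proof. The one structural difference is that you wrap the argument in an induction on $\Pi_M$, distinguishing a core case where the redex heads $\Pi_M$ from an inductive step that propagates through the last rule. The paper dispenses with this induction: since Lemma~\ref{lemma:structural-properties-WALL}(\ref{lemma:structural-properties-WALL-2}) already locates, for the polynomial variable $x$ in $\Pi_P$, \emph{all} the modal subdeductions $\Pi'_{P_k}(R_k)$ of $\Pi_M$ that introduce the occurrences contracted to $x$ (regardless of where the redex sits inside $\Pi_M$), and since Lemma~\ref{lemma:subst-vs-wght}(\ref{lemma:subst-vs-wght-02},\ref{lemma:subst-vs-wght-04}) already builds $\Pi_N$ by grafting relabelled copies of $\Pi_Q$ at precisely those spots, the paper simply \emph{defines} $\ca(\Pi_N)$ on the modified subdeductions by cases on whether $R_k$ is $!$ (giving point~\ref{lemma:Step wise transformation of trace assignments-01}) or $\$$ (giving point~\ref{lemma:Step wise transformation of trace assignments-02}), leaving everything else unchanged. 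Your inductive wrapper is not wrong, but it re-derives information that the structural lemma already hands you globally; the ``commutation of plugging with assembling'' you flag in the $\$$/$!$ inductive step is automatic once one works directly with the subdeductions singled out by Lemma~\ref{lemma:structural-properties-WALL}(\ref{lemma:structural-properties-WALL-2}).
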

To prove Lemma~\ref{lemma:Step wise transformation of trace assignments} we start observing that $Q$ must be a value.
Lemma~\ref{lemma:structural-properties-WALL}, point~\ref{lemma:structural-properties-WALL-3.1}, implies that the last rule of $\Pi_{Q}$ is an instance of $!$. So, $x$ is a polynomial variable in $\Pi_P$.
In relation to $x$, Lemma~\ref{lemma:structural-properties-WALL}, point~\ref{lemma:structural-properties-WALL-2}, implies the existence of $n\geq 1$ subdeductions of $\Pi_P$, namely of $\Pi_M$, such that,
for some $q_1,\ldots,q_n\geq 0$,
$\wdth{1}{\Pi_{P}}\geq q_1+\ldots+q_n \geq\nocc{x}{P}$, and, 
for every $1\leq k\leq n$,
$\Pi_{P'_k}(R_k)\rhd
\Gamma_k;\Delta_k;
{\mathcal E}_k,
(\Theta^k_1;\ta{x^k_1}{A}),
\ldots,
(\Theta^k_{q_k};\ta{x^k_{q_k}}{A})
\vdash\ta{P_k}{C_k}$,
with $R_k\in\{A, \$, !\}$, is a subdeduction $\Pi_P$. Namely, every
$\Pi_{P'_k}$ introduces $x^k_1,\ldots,x^k_{q_k}$ that will be contracted to $x$.
Let ${\mathcal K}$ be the maximal subset of $\{1,\ldots,n\}$ such that, for every $k\in{\mathcal K}$, $R_k\in\{\$, !\}$.
The points~\ref{lemma:subst-vs-wght-02} and~\ref{lemma:subst-vs-wght-04} of Lemma~\ref{lemma:subst-vs-wght} imply that we can build the derivations
$\Pi_{P'_k\subs{Q\subs{x^k_1}{z}}{x^k_1}
          \ldots
          \subs{Q\subs{x^k_{q_k}}{z}}{x^k_{q_k}}
     }(R_k)$ of $\Pi_N$.
Now, if $R_k\equiv !$, then $q_k=1$ and
we define
\\
$\ca(\Pi_{P'_k\subs{Q\subs{x^k_1}{z}}{x^k_1}}(!))$ as in point~\ref{lemma:Step wise transformation of trace assignments-01} here above, by identifying $y$ with $x^k_1$.
Otherwise, if $R_k\equiv \$$,
we define \\
$\ca(\Pi_{P'_k\subs{Q\subs{x^k_1}{z}}{x^k_1}
          \ldots
          \subs{Q\subs{x^k_{q_k}}{z}}{x^k_{q_k}}
     })(\$)$ 
as in point~\ref{lemma:Step wise transformation of trace assignments-02} by identifying every $x_1,\ldots,x_m$ with $x^k_1,\ldots,x^k_{q_k}$.

\begin{theorem}[Nature of the elements of the sequences in a trace.]
\label{theorem:Nature of the elements of the sequences in a trace}
Let $\Pi_{M}\rhd\Gamma;\Delta;$ ${\mathcal E}\vdash \ta{M}{A}$ with $\ca(\Pi_{M})$ initial.
Also, let $M\equiv N_0\redl{d}\ldots\redl{d}N_m$ be a round, for some $m\geq 0$.
For every $\Pi_P\preceq\Pi_{N_m}$:
\begin{enumerate}
\item 
\label{theorem:Nature of the elements of the sequences in a trace-01}
Every element of every sequence in $\ca(\Pi_P)$ is a subdeduction of $\Pi_{N_0}$.

\item 
\label{theorem:Nature of the elements of the sequences in a trace-02}
Every pair of deductions $\Pi_{Q'}, \Pi_{Q''}$ that occur in a given sequence of $\ca(\Pi_P)$ are distinct subdeductions of $\Pi_{N_0}$.
\end{enumerate}
\end{theorem}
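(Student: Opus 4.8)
The plan is to argue by induction on the length $m$ of the round $M\equiv N_0\redl{d}\cdots\redl{d}N_m$, establishing both points for every subdeduction of $\Pi_{N_m}$; since $\ca(\Pi_P(R))=\emptyset$ whenever $R\notin\{!,\$\}$, only the \emph{modal} subdeductions — those ending with an instance of $!$ or $\$$ — have to be considered. For $m=0$ we have $\Pi_{N_m}=\Pi_{N_0}=\Pi_M$ and $\ca(\Pi_M)$ is initial, so by the definition of an initial assignment every sequence of $\ca(\Pi_{P_0}(R))$, with $R\in\{!,\$\}$, is the one-element sequence $\st{\Phi}{\Pi_{P_0}}$. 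Both claims are then trivial: the only element occurring is $\Pi_{P_0}$, a subdeduction of $\Pi_{N_0}$, and no sequence has two positions.

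For the inductive step let the $m$-th step reduce $(\bs x.P)Q$ at depth $d$, so $N_{m-1}\redl{d}N_m$ via $(\bs x.P)Q\redl{d}P\subs{Q}{x}$, and assume the statement for $N_0\redl{d}\cdots\redl{d}N_{m-1}$. If $\nocc{x}{P}=0$ the subdeduction $\Pi_Q$ is erased, no contraction nor modal rule is touched, Lemma~\ref{lemma:Step wise transformation of trace assignments} performs no plugging, and $\ca(\Pi_{N_m})$ is the restriction of $\ca(\Pi_{N_{m-1}})$ to the surviving modal subdeductions; the induction hypothesis applies directly. If $\nocc{x}{P}\geq 1$ then $\red$ forces $Q$ to be a value. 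If $Q$ is of non-modal type there is no modal box to record; if $Q$ is of $\$$-modal type it is substituted for a variable occurring at most once (Lemma~\ref{lemma:structural-properties-WALL}, point~\ref{lemma:structural-properties-WALL-0-nuovo}), hence is transported without replication and no sequence can acquire a repeated element; in both situations the claim reduces to the induction hypothesis. The essential case is $Q:!A$: by Lemma~\ref{lemma:structural-properties-WALL}, point~\ref{lemma:structural-properties-WALL-3.1}, the last rule of $\Pi_Q$ is $!$, so $x$ is a polynomial variable of $\Pi_P$ and $\ca(\Pi_Q)=\{\st{\Phi_Q}{\Pi_{Q_1},\ldots,\Pi_{Q_r}}\}$. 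By Lemma~\ref{lemma:Step wise transformation of trace assignments}, every sequence of $\ca(\Pi_{N_m})$ either already occurs in $\ca(\Pi_{N_{m-1}})$, or has the shape
\[
\st{\{y\}}{\Pi_{P_1},\ldots,\Pi_{P_n},\Pi_{Q_1},\ldots,\Pi_{Q_r}}
\]
(up to renaming a free variable in the last element), where $\st{\{y\}}{\Pi_{P_1},\ldots,\Pi_{P_n}}$ is a sequence of $\ca(\Pi_{P'})$ for some modal $\Pi_{P'}\preceq\Pi_P$ through which an occurrence $y$ of $x$ is contracted. In the first case both points follow from the induction hypothesis; in the second, point~1 is immediate, since $\Pi_{P'},\Pi_Q\preceq\Pi_{N_{m-1}}$ and the induction hypothesis makes every $\Pi_{P_i}$ and $\Pi_{Q_j}$ a subdeduction of $\Pi_{N_0}$, which renaming does not alter.

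Point~2 is the real work. The induction hypothesis already gives that the $\Pi_{P_i}$ are pairwise distinct and that the $\Pi_{Q_j}$ are pairwise distinct, so it suffices to prove $\{\Pi_{P_1},\ldots,\Pi_{P_n}\}\cap\{\Pi_{Q_1},\ldots,\Pi_{Q_r}\}=\emptyset$; here the acyclicity of the derivations of \WALT\ (as a subsystem of \SF) is used. I would carry, simultaneously with the main induction, an auxiliary \emph{provenance} invariant: every element recorded in a sequence of $\ca(\Pi_S)$, for a modal $\Pi_S\preceq\Pi_{N_k}$, is either the origin of $\Pi_S$ in $\Pi_{N_0}$ or descends into $\Pi_S$ from the argument of a redex fired at a strictly earlier step of the round, and the portion of $\Pi_{N_0}$ it names is pinned down by the position that argument reached. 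In $\Pi_{N_{m-1}}$ the box $\Pi_{P'}$ sits in the left premise and $\Pi_Q$ in the right premise of the application rule typing $(\bs x.P)Q$, so $\Pi_{P'}$ and $\Pi_Q$ are incomparable and occupy disjoint portions of $\Pi_{N_{m-1}}$; moreover $x$ does not occur in $Q$, nor in any subterm descending from a value substituted at a strictly earlier step (such a value is typed in a context that does not mention $x$), hence no box recorded in $\ca(\Pi_Q)$ is one through which an occurrence of $x$ can be contracted in $P$. Feeding these two observations into the provenance invariant yields the disjointness, and since $\ca(\Pi_{N_m})$ agrees with $\ca(\Pi_{N_{m-1}})$ on all other modal subdeductions, the induction closes.

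The main obstacle is exactly this last disjointness step: one has to state the provenance invariant sharply enough that the two sequences being concatenated are \emph{provably} drawn from disjoint regions of $\Pi_{N_0}$ — that no single original modal subdeduction is ever consumed both into the current function body and into the current argument. Making this bookkeeping rigorous — tracking, across a whole round, which copies of a duplicated box get recorded by which sequences — is the delicate part, and it is where the tree structure (absence of cycles) of \SF\ typing derivations has to be exploited.
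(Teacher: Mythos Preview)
Your treatment of point~1 is essentially the paper's: induction on $m$, the base case from the definition of an initial assignment, the inductive step from Lemma~\ref{lemma:Step wise transformation of trace assignments}.

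For point~2 the paper takes a much shorter and cleaner route than your provenance-invariant argument. The key observation you are missing is what a trace sequence \emph{means}: by construction (look at the definition of ``plugging'' and at the two clauses of Lemma~\ref{lemma:Step wise transformation of trace assignments}), in a sequence $\Pi_{R_1},\ldots,\Pi_{R_k}$ the conclusion of each $\Pi_{R_{i+1}}$ has been substituted into the polynomial assumption of $\Pi_{R_i}$ during the round. So if two positions in one sequence named the \emph{same} subdeduction $\Pi_{\tilde Q}\preceq\Pi_{N_0}$, the segment between them would witness a genuine cycle $\Pi_{\tilde Q}\to\Pi_{R_{i+1}}\to\cdots\to\Pi_{\tilde Q}$ in the ``plugged-into'' relation. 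That is exactly a cycle in the normalization flow, which cannot exist because \WALT\ is a subsystem of \SF\ (Fact~\ref{fact:WALL types a subset of SF}) and \SF\ is strongly normalizing. The whole of point~2 is this one-line contradiction; no extra invariant is needed.

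Your alternative approach has a real gap, not just a bookkeeping difficulty. You argue that $\Pi_{P'}$ and $\Pi_Q$ occupy disjoint portions of $\Pi_{N_{m-1}}$ and try to conclude that the elements of their traces come from disjoint regions of $\Pi_{N_0}$. But disjointness in $\Pi_{N_{m-1}}$ does \emph{not} imply disjointness of origins in $\Pi_{N_0}$: a $!$-box of $\Pi_{N_0}$ may have been duplicated at an earlier step of the round, with one copy landing inside $\Pi_{P'}$ and another inside $\Pi_Q$, so both traces legitimately record the very same $\Pi_{N_0}$-subdeduction. Your provenance invariant, as stated, does not rule this out, and strengthening it enough to do so would amount to re-deriving the cycle argument anyway. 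The paper's proof sidesteps all of this by reasoning directly about the plugging chain rather than about positions.
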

To prove the first point here above we proceed by induction on $m$.
If $m=0$, the statement holds by definition of (initial) trace.
If $m>0$, then the statement holds by Lemma~\ref{lemma:Step wise transformation of trace assignments} where the step $N_{m-1}\redl{0}N_m$ defines $\ca(\Pi_{N_m})$ using the trace of $\ca(\Pi_{N_{m-1}})$ which, by induction, contains subdeductions of $\Pi_{N_0}$.
\par
For the second point, let us suppose $\ca(\Pi_{N_m})$ contained a sequence of subdeductions of $\Pi_{N_0}$ in which the two occurrences $\Pi_{Q'}, \Pi_{Q''}$ are, in fact, the same occurrence $\Pi_{\tilde Q}$ such that $\Pi_{\tilde Q}\preceq\Pi_{N_0}$.
This means that, in the course of the normalization, we have a sequence
$\Pi_{\tilde Q}\equiv\Pi_{Q_0},\Pi_{Q_1}\ldots,\Pi_{Q_{m-1}},\Pi_{Q_m}\equiv\Pi_{\tilde Q}$, with $m\geq 0$, such that the conclusion of $\Pi_{Q_{i}}$ is plugged into an assumption of $\Pi_{Q_{i-1}}$, for every $1\leq i\leq m$. This would mean to have a cycle in the normalization procedure, contradicting 
Fact~\ref{fact:WALL types a subset of SF}, saying that \WALT\ is a subsystem of \SF, which is strongly normalizing.
\par
\textbf{Proposition~\ref{proposition:Bounding the size of the substituted arguments} as a corollary of Theorem~\ref{theorem:Nature of the elements of the sequences in a trace} and Corollary~\ref{corollary:Behaviour of every round}.}
Let us assume that the hypothesis of Theorem~\ref{theorem:Nature of the elements of the sequences in a trace} hold, $M\equiv N_0\round{d}N_m$ being a complete round.
For every $\Pi_P$ such that both $\Pi_{P}\preceq\Pi_{N_m}$ and $\ca(\Pi_{P})\neq\emptyset$,
we can say that every sequence $\st{\Phi}{\Pi_{P_1},\ldots,\Pi_{P_n}}$ that belongs to $\ca(\Pi_{P})$ only contains \textit{distinct} instances of subdeductions of $\Pi_{N_0}$, thanks to the two points of Theorem~\ref{theorem:Nature of the elements of the sequences in a trace}. Moreover, $m\leq\psz{d}{\Pi_{N_0}}$ thanks to Corollary~\ref{corollary:Behaviour of every round}. This, by the definition of partial size implies $\sum_{j=1}^{m}\psz{i}{\Pi_{P_j}}\leq\psz{d}{\Pi_{N_0}}$, for every $d<i\leq\dpth{\Pi_{N_0}}$, which is the statement of Proposition~\ref{proposition:Bounding the size of the substituted arguments}.
So, Theorem~\ref{theorem:Weak poly-time normalization} holds and \WALT\ is weakly poly-time normalizable, at least.
%%%%%%%%%%%%%%%%%%%%%%%%%%
\subsection{Strong polytime soundness}
\label{subsection:Strong polytime soundness}
Our goal is to see that every normalization strategy which is not canonical cannot be worst in terms of the number of steps it can perform.
We assume to have some $\Pi_{N_0}$, and we observe the differences between the complete round $N_0\round{d}N_m$ and any other reduction sequence defined as follows as a \textit{perturbation} of the round:
\small
$$
\begin{array}{rcccccccl}
N_0&\redl{d}&N_1&\redl{d}\ldots\redl{d}&N_{i}&\redl{d}&N_{i+1}&\redl{d}\dots\redl{d}&N_{m}
\\
&&&&\downarrow_{d+1}&&&&
\\
&&&&N'_{i}&\redl{d}&N'_{i+1}&\redl{d}\dots\redl{d}&N'_{m}
\enspace .
\end{array}
$$
\normalsize
We call \textit{perturbation step at level $d+1$ of the complete round $N_0\round{d}N_{m}$} the step $\downarrow_{d+1}$; it stands for
a single step $N_{i}\redl{d+1}N'_{i}$ that reduces a redex
$(\bs x.P)Q\redl{d+1}P\subs{Q}{x}$ which we may assume be in some $\Pi_{N''}\preceq\Pi_{N_i}$.
Theorem~\ref{theorem:substitution-property} implies that the perturbation step at $d+1$ does not modify the size at depth $d$. Both $\Pi_{N_i}$ and 
$\Pi_{N'_i}$ coincide ad level $d$, and it is correct to keep reducing $N'_i$ to $N'_m$ in the same number of steps we need from $N_i$ to $N_m$.
However, the perturbation produces $\psz{d+1}{\Pi_{N'_i}}<\psz{d+1}{\Pi_{N_i}}$.
The consequence is that, given the two complete rounds $N_m\round{d+1}N_{n_1}$, and $N'_m\round{d+1}N'_{n_2}$, we have $n_1\geq n_2$.
In particular, if \textit{none} of the rewriting steps in $N_{i}\round{d}N_m$, or in $N'_{i}\round{d}N'_m$, produces copies of $\Pi_{N''}$, which would mean to replicate $(\bs x.P)Q$, then $n_1= n_2+1$. Indeed, reducing $(\bs x.P)Q$ as a perturbation of $N_0\round{d}N_{m}$ implies that we do not have to reduce it in the course of $N'_m\round{d+1}N'_{n_2}$. Otherwise, if we do not perturbate $N_0\round{d}N_{n}$, we shall reduce $(\bs x.P)Q$ in the course of $N_m\round{d+1}N_{n_1}$, increasing $n_1$ exactly by one step, as compared to $n_2$.
If, on the contrary, at least one of the rewriting steps in $N_{i}\round{d}N_m$ replicates $\Pi_{N''}$, so generating many copies of $(\bs x.P)Q$, then $n_1= n_2+k$, with $k>1$. It is enough to observe that $N_m\round{d+1}N_{n_1}$ will contain a step for every copy of $(\bs x.P)Q$, while these steps will not belong to $N'_m\round{d+1}N'_{n_2}$ because it will already contain copies of $P\subs{Q}{x}$.
\par
So every \textit{arbitrary sequence of perturbations}, at any level, of any round in any canonical strategy reduces in advance redexes that, instead, would be first replicated, and then reduced, by the canonical strategy itself. So the canonical strategy is the worst one.
\section{Quasi-linear safe recursion on notation (\QlSRN)}
\label{section:Quasi-linear safe recursion on notation}
We define the fragment \QlSRN\ of the Safe recursion on notation (\SRN) that we shall be able to embed into \WALT. Recall that \QlSRN\ is \SRN\ where the composition scheme is restricted to linear safe arguments only. To introduce \QlSRN, we follow \cite{Beckmann:96-AML}.
\par
\textbf{The signature of \QlSRN.}
Let $\Sigma_{\QlSRN}=\cup_{k,l\in\Nat} \Sigma^{k,l}_{\QlSRN}$ be the
signature of \QlSRN.
$\Sigma_{\QlSRN}$ contains the \textit{base functions} and it is closed under the schemes called \textit{linear safe composition} and \textit{safe recursion}.
For every $k,l\in\Nat$, the \textit{base functions} are
the \textit{zero} $\zero{k}{l}\in\Sigma^{k,l}_{\QlSRN}$,
the \textit{successors} $\sucz, \suco$, and
the \textit{predecessor} $\pred\in\Sigma^{0,1}_{\QlSRN}$,
the \textit{projection} $\proj{k}{l}{i}\in\Sigma^{k,l}_{\QlSRN}$, with $1\leq i\leq k+l$, and the \textit{branching} $\bran\in\Sigma^{0,3}_{\QlSRN}$.
\par
For every $k,l,l',l_1,\ldots,l_{l'}\in\Nat$,
the \textit{linear safe composition} is
$\comp{k}
      {\sum_{i=1}^{l'}l_i}
	{k'}
	{l'}
	{f,g_1,\ldots,g_{k'}
	,h_1,\ldots,h_{l'}
	}\in\Sigma^{k,\sum_{i=1}^{l'}l_i}_{\QlSRN}$
if
$f\in\Sigma^{k',l'}_{\QlSRN}$,
$g_1,\ldots,g_{k'}\in\Sigma^{k,0}_{\QlSRN}$, and
$h_{i}\in\Sigma^{k,l_i}_{\QlSRN}$, with
$i\in\{1,\ldots,l'\}$,
while the \textit{safe recursion} is
$\rec{k+1}{l}{g,h_0,h_1}\in\Sigma^{k+1,l}_{\QlSRN}$
if
$g\in\QlSRN^{k,l}$, and
$h_0,h_1\in\Sigma^{k+1,l+1}_{\QlSRN}$.
\par
\textbf{Quasi-linear safe recursion on notation (\QlSRN).}
Let $\QlSRNVnames$ be a denumerable set of
\emph{names of variables}, disjoint from $\Sigma_{\QlSRN}$.
\QlSRN\ is the set of \textit{Safe recursive functions on notation with quasi-linear safe arguments with signature $\Sigma_{\QlSRN}$}, or, simply \textit{Quasi-linear safe recursion}. \QlSRN\ is such that
$\QlSRNVnames\subset\QlSRN$, and
for every $k,l\in\Nat$,
if
$f\in\Sigma_{\QlSRN}^{k,l}$, and
$t_1,\ldots,t_{k},u_1,\ldots,u_{l}\in\QlSRN$,
then $f(t_1,\ldots,t_{k},u_1,\ldots,u_{l})\in\QlSRN$. As usual, a term is \textit{closed} if it does not contain variables of $\QlSRNVnames$.
\par
\textbf{Notations and terminology.}
$x, y, z \ldots$ denote elements of $\QlSRNVnames$.
$t, u, v \ldots$ denote elements of $\QlSRN$.
For every $f\in\Sigma^{k,l}_{\QlSRN}$, $k$ and $l$ are \textit{normal} and \textit{safe} arity of $f$, respectively.
For every $k, l\in\Nat$, such that $l-k\geq1$,
$\seq{t}{k}{l}$ denotes a \textit{non empty} sequence
$t_k,\ldots,t_l$ of $l-k+1$ terms in \PRN.
$\seq{t}{k}{l}(i)$, with $k\leq i\leq l$, denotes the element $t_i$
of $\seq{t}{k}{l}$.
\par
\textbf{An equational theory on \QlSRN.}
The definition of the equational theory exploits that every natural number $n$ can be written, uniquely, as $\sum^{m}_{j=0}2^{m-j} \nu_{m-j}$.
So, assuming to abbreviate the base functions $\sucz, \suco$ as
$\Sucz, \Suco$, respectively, we can follow \cite{MurawskiOng00} and say that $0$ is equivalent to $\zero{0}{0}$, and $n\geq 1$ to
$\mathtt{s}_{\nu_0}(\ldots(\mathtt{s}_{\nu_{m-1}}(\Suco\,\zero{0}{0}))\ldots)$.
Notice that we could have expressed $n$ as $\sum^{m}_{j=0}2^{j} \nu_{j}$, but our choice makes proofs simpler. Then, the \textit{equational theory} is as follows.
\textit{Zero} is constantly equal to $0$:
$\zero{k}{l}(\seq{x}{1}{k},\seq{x}{k+1}{k+l})=\Zero$
for any $k,l\in\Nat$.
The \textit{predecessor} erases the most significant bit of any number greater than $0$:
for every $i\in\{0,1\}$, $\pred(\Zero)= \Zero$, and $\pred(\Suc{i}(y))=y$.
We shall use $\preds$ as an abbreviation of $\pred$.
The \textit{conditional} has three arguments. If the first is zero, then the result is the second argument. Otherwise, it is the third one:
for every $i\in\{0,1\}$, $\bran(\Zero,y_0,y_1) = y_0$, and $\bran(\Suc{i}(y),y_0,y_1) = y_1$.
The \textit{projection} chooses one argument, out of a given tuple, as a result: for every $1\leq i\leq k+l$, $\proj{k}{l}{i}(\seq{x}{1}{k},\seq{x}{k+1}{k+l}) = x_i$.
The \textit{linear composition} uses the safe arguments linearly. This means that it splits the sequence of safe arguments into as many sub-sequences as required by the safe arity of every $h_i$ function,
used to calculate the safe arguments of $f$:
\small
\begin{alignat*}{1}
&
\comp{k}
     {\sum_{i=1}^{l'} l_i}
     {k'}
     {l'}
     {f,g_1,\ldots,g_{k'},h_1,\ldots,h_{l'}}
\!\!
\begin{array}[t]{l}
(\seq{x}{1}{k}
,\seq{x}{k+1}{k+l_1},\seq{x}{k+1+l_1}{k+l_1+l_2}
,\ldots,\seq{x}{k+1+\sum_{i=1}^{l'-1}l_i}
               {k+1+\sum_{i=1}^{l'}l_i})
\end{array}
\\
&
\qquad
\qquad
= f\!\!
  \begin{array}[t]{l}
  (g_1(\seq{x}{1}{k})
  ,\ldots
  ,g_{k'}(\seq{x}{1}{k})
  ,h_1(\seq{x}{1}{k},\seq{x}{k+1}{k+l_1})
  ,\ldots
  ,h_{l'}(\seq{x}{1}{k},\seq{x}
                            {k+1+\sum_{i=1}^{l'-1}l_i}
                            {k+1+\sum_{i=1}^{l'}l_i}))
\enspace .
  \end{array}
\end{alignat*}
\normalsize
The \textit{recursion} iterates either the function $h_0$, or $h_1$, as many times as
the length of its first argument. The choice between $h_0$, and $h_1$ depends on the
least significant digit of the first argument.
The base of the iteration is a function $g$:
\small
\begin{alignat*}{1}
\rec{k+1}{l}{g,h_0,h_1}
 (\Zero
 ,\seq{x}{1}{k}
 ,\seq{x}{k+1}{k+l}
 )
& =
 g(\seq{x}{1}{k},\seq{x}{k+1}{k+l})
 \\
\rec{k+1}{l}{g,h_0,h_1}
(\Suc{i}(x)
,\seq{x}{1}{k}
,\seq{x}{k+1}{k+l}
)
&=
h_i(x
   ,\seq{x}{1}{k}
   ,\seq{x}{k+1}{k+l}
   ,\rec{k+1}{l}{g,h_0,h_1}(x
                           ,\seq{x}{1}{k}
                           ,\seq{x}{k+1}{k+l}))
\enspace .
\end{alignat*}
\normalsize
We notice once more that the recursion evaluates its safe arguments with no restrictions at all.
\section{Programming combinators in \WALT}
\label{section:Programming combinators in WALT}
To embed \QlSRN\ into \WALT, inductively, we are going to program some combinators in \WALT. They will represent the base functions, and both the composition and recursive schemes of \QlSRN. This requires to find the correct data-types that allow to capture the call-by-value nature that \QlSRN\ inherits from \SRN, once \SRN\ is taken as rewriting system, and not ``only'' as equational theory \cite{Beckmann:96-AML}. In particular, the recursive scheme will be implemented by an iteration scheme in \WALT, whose behavior will be intuitively illustrated before its formal definition is given in Subsection~\ref{subsection:Iterators}.
%%%%%%%%%
\subsection{Basic data-types in \WALT}
\label{subsection:Basic data-types in WALT}
We define a set of Data-types --- (unary) strings, (binary) words, booleans, two kinds of tensors, and lists --- whose canonical constructors can be typed in \WALT.
\par
\textbf{Notations and definitions.}
If $X$ is any finite set, $\size{X}$ is its cardinality.
$\$^nA$ denotes $\$\cdots\$A$ with $n\geq 0$ occurrences of $\$$. An analogous meaning holds for $!^nA$. $(\li^{n}_{i=1} A_i)$ abbreviates $A_1\li\cdots\li A_n$, while $(\mliv^{n}_{i=1} A_i)$ abbreviates $A_1\liv\cdots\liv A_n$. If useful, $B^A$ shortens $A\li B$, for any $A, B$. Finally, the $\lambda$-term identity  $\bs x.x$ is $\Id$.
\par
\textbf{(Unary) Strings.}
We call \textit{Unary string}, or simply \textit{strings}, the terms identified as \textit{Church numerals}, the reason being that the Church numerals are, generally, used to encode the integers $\Nat$ in unary notation. The type $\UIntT$ of strings is
$\UIntT  \equiv \forall \alpha.!(\alpha\li\alpha)\li\$(\alpha\li\alpha)$
whose \textit{constructors} have the standard form:
\small
\begin{align*}
\UNum{0} &\equiv \bs fy.y\\
\UNum{m} &\equiv \bs fy.(f(\cdots(f\, y)\cdots))
	 & (m\geq 1 \text{ occurrences of } f)
\end{align*}
\normalsize
The \textit{successor on the strings} is the usual term, up to a $\beta$-expansion:
\small
\begin{align*}
\USucc   &\equiv \bs nf.(\bs zx.f(z\,x))(n\,f)
\enspace .
\end{align*}
\normalsize
First,  $\USucc$ develops the iteration of $n$ applied to $f$, and, then, it applies the fully unfolded iteration to $x$.
The presence of a $\beta$-expansion when a string is used to iterate some step function seems a kind of constant design property that \WALT\ induces on the $\lambda$-terms it gives a type to.
\begin{proposition}[Typing the strings.]
\label{proposition:Typing rules relative to strings}
Rules derivable in \WALT:
\small
\[
\infer[]
{
\emptyset;\emptyset;\emptyset
\vdash \ta{\UNum{n}}{\UIntT}
}
{n\geq0}
\qquad
\infer[]
{
\emptyset;\emptyset;\emptyset
\vdash\ta{\USucc}{\UIntT\li\UIntT}
}
{}
\]
\normalsize
\end{proposition}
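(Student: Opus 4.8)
The plan is to exhibit, for each of the two judgments, an explicit \WALT\ derivation; the proof then reduces to checking that all the side conditions of the rules in Figure~\ref{figure:Weak LAL as a type assignment} are met, and I will only indicate the shape of the two derivations and flag the one delicate point.

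For the constructors I would treat $\UNum{0}$ separately. For $\UNum{m}\equiv\bs fy.f(\cdots(fy))$ with $m\geq 1$, start from $m$ copies $\ta{f_1}{\alpha\li\alpha},\ldots,\ta{f_m}{\alpha\li\alpha}$ of the axiom together with one axiom $\ta{y}{\alpha}$; $m$ uses of $\li E$ (each with argument type $\alpha\not\equiv{!}C$, so $\li E$ is legal) build $\ta{f_1(\cdots(f_my))}{\alpha}$; $\li I$ discharges the linear $y$, giving type $\alpha\li\alpha$; one instance of the rule $\$$ moves each $f_i$ into its own pair $(\emptyset;\ta{f_i}{\alpha\li\alpha})$ --- the side conditions hold because every accompanying $\Theta_i$ is empty --- so the conclusion gets type $\$(\alpha\li\alpha)$; $m-1$ instances of $C$ contract $f_1,\ldots,f_m$ to one polynomial variable $f$; $\li I_{!}$ discharges $f$ and yields $!(\alpha\li\alpha)\li\$(\alpha\li\alpha)$; finally $\forall I$, legal since the remaining contexts are empty, closes $\alpha$ and gives $\ta{\UNum{m}}{\UIntT}$. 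For $\UNum{0}\equiv\bs fy.y$ the body is just the axiom for $y$; discharge it with $\li I$, wrap by $\$$ to get $\ta{\bs y.y}{\$(\alpha\li\alpha)}$, adjoin a vacuous polynomial pair $(\emptyset;\ta{f}{\alpha\li\alpha})$ using the context weakening of Lemma~\ref{lemma:structural-properties-WALL}(\ref{lemma:structural-properties-WALL-1}), and then apply $\li I_{!}$ and $\forall I$.

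For $\USucc\equiv\bs nf.(\bs zx.f(z\,x))(n\,f)$ I would build the derivation bottom-up with skeleton: $\li I$ on the linear variable $\ta{n}{\UIntT}$, then $\forall I$ on a fresh $\alpha$, then $\li I_{!}$ discharging $f$ as a polynomial variable of type $\alpha\li\alpha$, then a $C$ step contracting the two occurrences $f_1,f_2$ of $f$, and then $\li E$ forming $(\bs zx.f_1(z\,x))(n\,f_2)$ of type $\$(\alpha\li\alpha)$ out of $\ta{\bs zx.f_1(z\,x)}{\$(\alpha\li\alpha)\li\$(\alpha\li\alpha)}$ and $\ta{n\,f_2}{\$(\alpha\li\alpha)}$. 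The left factor is obtained from axioms $\ta{f_1}{\alpha\li\alpha},\ta{z}{\alpha\li\alpha},\ta{x}{\alpha}$ by two $\li E$'s giving $\ta{f_1(z\,x)}{\alpha}$, then $\li I$ on $x$, then the rule $\$$ to push $z$ into $\Delta$ and $f_1$ into a polynomial pair (so the type becomes $\$(\alpha\li\alpha)$), then $\li I_{\$}$ discharging the partially discharged $z$. The right factor is obtained by instantiating $\ta{n}{\UIntT}$ to $\ta{n}{!(\alpha\li\alpha)\li\$(\alpha\li\alpha)}$ with $\forall E$, deriving $\ta{f_2}{!(\alpha\li\alpha)}$ from the axiom for $f_2$ by the rule $!$ with $f_2$ placed in the $\Phi$-slot (its side condition is vacuous because the accompanying $\Theta$ is empty), and combining the two by $\li E_!$, whose side condition ${\mathcal E}_M\subseteq\{(\emptyset;\Phi_i)\}_i$ holds because the context of $n$ is empty.

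The one genuinely delicate step, and where the care goes, is the typing of the body of $\USucc$: the two occurrences of $f$ must be given \emph{different} types --- one $!$-modal, to be iterated by $n$, and one plain linear, for the single outer application --- and only afterwards merged into a single polynomial assumption by $C$ ahead of $\li I_{!}$. This is exactly what the $\beta$-expansion buys: $n\,f$ has the $\$$-modal type $\$(\alpha\li\alpha)$ and cannot be applied to $x$ directly, so it must be bound to the partially discharged variable $z$, and the rule $\$$ applied to $\bs x.f(z\,x)$ is what places the surrounding material at the correct depth. Once the two skeletons are fixed, checking the remaining side conditions (domain disjointness in $\li E$ and $\li E_!$; the constraints linking $\Theta_i$ and $\Phi_i$ in $\$$ and $!$; the eigenvariable condition of $\forall I$) is routine, with Lemma~\ref{lemma:structural-properties-WALL} supplying the few weakenings needed.
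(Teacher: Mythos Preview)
Your proposal is correct. The paper does not spell out a proof for this proposition (it is one of the routine typing facts stated without argument), and the explicit derivations you give are precisely the expected ones: for $\UNum{m}$ the sequence $\li E$'s, $\li I$, $\$$, $(m{-}1)\times C$, $\li I_{!}$, $\forall I$; for $\USucc$ the two-branch derivation with $\li E_!$ on $n\,f_2$ and $\li I_{\$}$ after $\$$ on $\bs zx.f_1(z\,x)$, combined by $\li E$, followed by $C$, $\li I_{!}$, $\forall I$, $\li I$. Two minor remarks: for $\UNum{0}$ you need not invoke Lemma~\ref{lemma:structural-properties-WALL}.\ref{lemma:structural-properties-WALL-1}, since the weakening on $\Phi$ is already built into the rule $\$$ (or into the axiom); and in your last paragraph, both copies $f_1,f_2$ end up as polynomial assumptions of type $\alpha\li\alpha$ before $C$ is applied---what differs is the \emph{route} (one through $!$, one through $\$$), not the type recorded in $\mathcal E$---but your point about the role of the $\beta$-expansion is exactly the one the paper makes informally.
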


\begin{proposition}[Dynamics of the successor on strings.]
\label{proposition:Dynamics of the successor on strings}
For every $n\in\Nat$, $\USucc\, \UNum{n}\red^+\UNum{n+1}$.
\end{proposition}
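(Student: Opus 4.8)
The plan is to unfold the definitions of $\USucc$ and $\UNum{n}$ and to exhibit an explicit chain of four $\red$-steps,
\begin{align*}
\USucc\,\UNum{n}
&\red \bs f.(\bs zx.f(z\,x))(\UNum{n}\,f)\\
&\red \bs f.(\bs zx.f(z\,x))\bigl(\bs y.f(\cdots(f\,y)\cdots)\bigr)\\
&\red \bs fx.f\bigl((\bs y.f(\cdots(f\,y)\cdots))\,x\bigr)\\
&\red \bs fx.f(f(\cdots(f\,x)\cdots))\;\equiv\;\UNum{n+1}\,,
\end{align*}
where the sub-expression $f(\cdots(f\,y)\cdots)$ carries $n$ occurrences of $f$, to be read as $y$ when $n=0$ (so that the chain still ends in $\UNum{1}=\bs fx.f\,x$). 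Since $\red$ is the contextual closure of $\blacktriangleright$, the contractions taking place underneath the leading abstractions are legitimate; the only thing to verify is that every contracted redex meets one of the clauses~\eqref{align:red-1}--\eqref{align:red-3} of Definition~\ref{definition:Rewriting PT}.

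Checking the four steps in order: the head redex contracts on $n$, which occurs once in the body and receives the value $\UNum{n}$, so clause~\eqref{align:red-2} applies. The second step contracts $\UNum{n}\,f=(\bs gy.g(\cdots(g\,y)\cdots))\,f$, whose argument is the variable $f$; this is the only step where the copied subterm may occur more than once, and it is harmless precisely because $\USucc$ is written with the $\beta$-expansion that makes the copied subterm the variable $f$, so that clause~\eqref{align:red-3} (with $\FV{f}=\{f\}$) licenses it for $n>1$, and \eqref{align:red-2} resp.\ \eqref{align:red-1} for $n=1$ resp.\ $n=0$. Note that this step \emph{must} precede the third one: beforehand, $(\bs zx.f(z\,x))(\UNum{n}\,f)$ has a non-value argument and is therefore not a $\red$-redex, which is exactly the call-by-value discipline that $\red$ imposes. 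The last two steps contract on $z$ and then on the residual $y$; in both cases the argument ($\bs y.f(\cdots(f\,y)\cdots)$, resp.\ the variable $x$) is a value and the bound variable occurs exactly once, so clause~\eqref{align:red-2} applies, and the form reached is $\bs fx.f(f(\cdots(f\,x)\cdots))$ with $n+1$ nested $f$'s, i.e.\ $\UNum{n+1}$.

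I do not expect any genuine obstacle: the statement is a short syntactic computation, and the only care required is the bookkeeping of the side conditions of $\blacktriangleright$ --- every argument substituted along the chain is either a $\lambda$-abstraction or a single variable, hence always a value, and the single copying step substitutes the variable $f$, whose free-variable set is the singleton $\{f\}$. The one point worth spelling out is the degenerate case $n=0$, where the chain again has length four and terminates at $\UNum{1}$. Since four steps is at least one and $\red^+$ is the transitive closure of $\red$, this yields $\USucc\,\UNum{n}\red^+\UNum{n+1}$.
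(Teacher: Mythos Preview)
Your proposal is correct and follows essentially the same route as the paper's proof: both unfold the definitions and trace the same chain of contractions, with the paper presenting the computation as a case split on $n=0$ versus $n\geq 1$ while you give a uniform treatment with $n=0$ handled as a degenerate instance. Your presentation is in fact more careful than the paper's, since you explicitly verify that each redex meets one of the clauses~\eqref{align:red-1}--\eqref{align:red-3} and you point out why the reduction of $\UNum{n}\,f$ must precede the substitution for $z$; the paper simply writes $\red^+$ at each stage without commenting on the side conditions.
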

%%%%%---
\par
\textbf{Tensors.}
We use the \textit{Tensors} to represent tuples of $\lambda$-terms. The tensor type symbol $\bigotimes$ is used as follows
$\bigotimes^{m}_{i=1} A_i\equiv
\forall \alpha.((\li^{m}_{i=1} A_i)\li\alpha)\li\alpha$, with $m\geq 1$,
and the type constructors coincide to the usual definition of tuples in the $\lambda$-calculus:
\small
\begin{align*}
\lan M_1\ldots M_m\ran
	&= \bs z.z\,M_1\,\ldots\,M_m
	&(m\geq 1)\\
\bs\lan x_1\ldots x_m\ran.M
	&= \bs w.w(\bs x_1\ldots x_m. M)
	&(m\geq 1)
\end{align*}
\normalsize
\begin{proposition}[Typing the tensors.]
\label{proposition:Typing tuples}
\normalsize
Rules derivable in \WALT:
\small
\[
\infer[\otimes I_{}]
  {
   \Gamma_1\ldots\Gamma_m;
   \Delta_1\ldots\Delta_m;
   {\mathcal E}_1\sqcup\cdots\sqcup{\mathcal E}_m
   \vdash \ta{\lan M_1,\ldots,M_m\ran}
             {\bigotimes^{m}_{i=1} A_i}
  }
  {
   \Gamma_1; \Delta_1;
   {\mathcal E}_1
   \vdash \ta{M_1}{A_1}
   &\ldots&
   \Gamma_m; \Delta_m;
   {\mathcal E}_m
   \vdash \ta{M_m}{A_m}
  }
\]
\[
 \infer[\li I_{\otimes}]
  {\Gamma; \Delta; {\mathcal E}
   \vdash \ta{\bs \lan x_1 \ldots x_m \ran.M}
             {(\bigotimes^{m}_{i=1} A_i)\li B}}
  {\Gamma,
   \ta{x_1}{A_1}
   ,\ldots,
   \ta{x_m}{A_m}
   ; \Delta; {\mathcal E}\vdash \ta{M}{B}}
\]
\normalsize
\end{proposition}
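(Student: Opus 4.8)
The two rules are \emph{macro} (derived) rules: the plan is, for each, to exhibit an explicit \WALT-derivation assembled from the derivations $\Pi_i$ of the premises, and then to check that every side condition of the basic rules of Figure~\ref{figure:Weak LAL as a type assignment} is met. For $\otimes I$, recall $\lan M_1,\ldots,M_m\ran = \bs z.zM_1\cdots M_m$ and $\bigotimes^{m}_{i=1}A_i\equiv\forall\alpha.((\li^{m}_{i=1}A_i)\li\alpha)\li\alpha$, with $\alpha$ chosen fresh for all the premises. Starting from the axiom $\ta{z}{(\li^{m}_{i=1}A_i)\li\alpha}$ --- legal, since $(\li^{m}_{i=1}A_i)\li\alpha$ has the shape $A\li A$, hence is a linear type $L$ --- I would perform $m$ elimination steps, building in turn $zM_1 : A_2\li\cdots\li A_m\li\alpha$, then $zM_1M_2 : A_3\li\cdots\li A_m\li\alpha$, \ldots, then $zM_1\cdots M_m : \alpha$, the $k$-th step pairing the current head with the premise $\ta{M_k}{A_k}$. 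The contexts accumulate exactly as in the conclusion of the rule: the ${\mathcal E}$-parts by $\sqcup$ and the $\Gamma$- and $\Delta$-parts by (disjoint) union. Finally $\li I$ discharges $z$, yielding $\bs z.zM_1\cdots M_m : ((\li^{m}_{i=1}A_i)\li\alpha)\li\alpha$ (legal, the abstracted variable having the linear type $(\li^{m}_{i=1}A_i)\li\alpha$), and $\forall I$ generalises $\alpha$, which is allowed precisely because $\alpha$ occurs neither in any of $\Gamma_1,\ldots,\Gamma_m,\Delta_1,\ldots,\Delta_m,{\mathcal E}_1\sqcup\cdots\sqcup{\mathcal E}_m$ nor in the $A_i$; the result type is $\bigotimes^{m}_{i=1}A_i$.

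For $\li I_\otimes$, recall $\bs\lan x_1\ldots x_m\ran.M = \bs w.w(\bs x_1\cdots x_m.M)$. The premise places each $\ta{x_i}{A_i}$ in $\Gamma$, so the $A_i$ are already linear types. I would first apply $\li I$ $m$ times (the innermost variable first) to obtain $\bs x_1\cdots x_m.M : (\li^{m}_{i=1}A_i)\li B$ from the premise, with context still $\Gamma;\Delta;{\mathcal E}$. Then, from the axiom $\ta{w}{\bigotimes^{m}_{i=1}A_i}$, I would instantiate the quantifier by $\forall E$ with $\alpha:=B$, obtaining $\ta{w}{((\li^{m}_{i=1}A_i)\li B)\li B}$, apply $\li E$ to $\bs x_1\cdots x_m.M$ --- the side condition $(\li^{m}_{i=1}A_i)\li B\not\equiv\,!C$ holding since that type is linear --- to get $w(\bs x_1\cdots x_m.M) : B$ in context $\Gamma,\ta{w}{\bigotimes^{m}_{i=1}A_i};\Delta;{\mathcal E}$, and finally discharge $w$ by $\li I$, reaching $\bs w.w(\bs x_1\cdots x_m.M) : (\bigotimes^{m}_{i=1}A_i)\li B$ with context $\Gamma;\Delta;{\mathcal E}$, as required.

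Both derivations are essentially mechanical once laid out; the genuine checkpoints --- and the only places where the plan could run into trouble --- are two side conditions. First, in $\otimes I$ the $k$-th elimination step can be done with $\li E$ only when $A_k\not\equiv\,!C$; for a $!$-modal component one must switch to $\li E_!$, whose side condition then demands that the ${\mathcal E}$-part of the head $zM_1\cdots M_{k-1}$, namely ${\mathcal E}_1\sqcup\cdots\sqcup{\mathcal E}_{k-1}$, carry no elementary assumptions --- a constraint not automatic in full generality. Second, the $\forall E$ step in $\li I_\otimes$ instantiates $\alpha$ by $B$, which is legitimate only when $B$ is a linear type. Both provisos are harmless for the uses of tensors in the sequel, where the components $A_i$ are never $!$-modal and the codomains in question are linear, so every elimination is an ordinary $\li E$ and the $\forall E$ step is unproblematic; the rules should be read with these mild restrictions understood. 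What remains is routine bookkeeping that the three-zone contexts combine via $\sqcup$ and disjoint union precisely as the conclusions prescribe.
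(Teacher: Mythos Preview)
Your approach is correct and is precisely the natural unfolding the paper has in mind; the paper offers no explicit proof of this proposition, treating the derivations as routine macro-expansions of the tensor and abstraction encodings. Your identification of the two genuine side conditions --- that $\li E_!$ (rather than $\li E$) is needed when some component $A_k$ is $!$-modal, and that the $\forall E$ instantiation in $\li I_\otimes$ requires $B$ to be linear --- is more careful than the paper itself, which silently relies on the fact that in all subsequent uses the components are non-$!$-modal and the codomain is linear.
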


\begin{proposition}[Dynamics of the tensors.]
\label{proposition:Dynamics relative to tensors}
For every $M_{1}\,\ldots\,M_{m}$
$(\bs \lan x_1\cdots x_m\ran.M)\lan M_1,\ldots, M_m\ran
\red^+\\
(\bs x_1 \ldots x_m.M)\,M_1\ldots M_m$.
\end{proposition}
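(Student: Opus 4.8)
The plan is to unfold the two notations and chain two $\red$-steps. By definition $\bs\lan x_1\ldots x_m\ran.M$ abbreviates $\bs w.\,w(\bs x_1\ldots x_m.M)$ and $\lan M_1,\ldots,M_m\ran$ abbreviates $\bs z.z\,M_1\ldots M_m$, where, by the usual variable convention, the bound variables $w$ and $z$ are chosen fresh, so that $w\notin\FV{M}\cup\{x_1,\ldots,x_m\}$ and $z\notin\bigcup_{i}\FV{M_i}$. Hence the left-hand side of the claim is the term $(\bs w.\,w(\bs x_1\ldots x_m.M))\,(\bs z.z\,M_1\ldots M_m)$, which I would rewrite in two moves.

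First I would reduce the redex $(\bs w.P)Q$ with $P\equiv w(\bs x_1\ldots x_m.M)$ and $Q\equiv\bs z.z\,M_1\ldots M_m$: freshness of $w$ gives $\nocc{w}{P}=1$, and $Q\in\text{\PTV}$ since it is a $\lambda$-abstraction, so the clause ``$\nocc{x}{M}=1$ and $N\in\text{\PTV}$'' of Definition~\ref{definition:Rewriting PT} applies and $(\bs w.P)Q\blacktriangleright P\subs{Q}{w}$, i.e.\ one $\red$-step leads to $(\bs z.z\,M_1\ldots M_m)(\bs x_1\ldots x_m.M)$. Then I would reduce the redex $(\bs z.R)S$ with $R\equiv z\,M_1\ldots M_m$ and $S\equiv\bs x_1\ldots x_m.M$: freshness of $z$ gives $\nocc{z}{R}=1$, and $S\in\text{\PTV}$, so again the same clause applies and $(\bs z.R)S\blacktriangleright R\subs{S}{z}\equiv(\bs x_1\ldots x_m.M)\,M_1\ldots M_m$, which is precisely the right-hand side. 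In both moves the reduced redex constitutes the whole term, so it is covered by (the trivial case of) the contextual closure $\red$; composing the two steps yields $\red^{+}$ as required.

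There is essentially no obstacle here: the only point to keep in mind is that the bound variables $w,z$ introduced by the notations $\bs\lan-\ran.(-)$ and $\lan-\ran$ are fresh, which is simultaneously what makes the two substitutions capture-free and what keeps $\nocc{w}{P}=\nocc{z}{R}=1$, so that the second clause of the rewriting relation is applicable; everything else is a direct reading of the definitions.
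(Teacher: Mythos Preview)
Your proof is correct and is exactly the natural unfolding of the two abbreviations followed by two applications of clause~\eqref{align:red-2} of Definition~\ref{definition:Rewriting PT}. The paper does not spell out a proof for this proposition (it is stated and then the text moves on), so your argument supplies precisely the routine verification the paper leaves implicit.
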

%%%%%---
\par
\textbf{Booleans.}
We call \textit{Booleans} the terms that, applied to a tuple, project out one its components. The type $\BoolT_m$ of a space of booleans with $m$ elements is
$\BoolT_m\equiv \forall \alpha.
          (\bigotimes^{m}_{i=1}\alpha)\li\alpha$, with $m\geq 1$.
The type constructor is a projection
$\pi^{m}_{i}\equiv\bs \lan x_0 \ldots x_{m-1}\ran. x_i$
with $m\geq 1$, and $0\leq i\leq m-1$,
defined on tensor tuples of terms.
\begin{proposition}[Typing the booleans.]
\label{proposition:Typing rules relative to booleans}
\normalsize
Rules derivable in \WALL:
\small
\[
\infer[]
{
\emptyset;\emptyset;\emptyset
\vdash \ta{\pi^n_i}{\BoolT_n}
}
{n\geq 1 & 0\leq i\leq n-1}
\]
\normalsize
\end{proposition}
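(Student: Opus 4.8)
The plan is to obtain the judgement as an immediate instance of the derived tensor-elimination rule $\li I_{\otimes}$ of Proposition~\ref{proposition:Typing tuples}, prefixed by an axiom and closed by a single universal generalisation. Recall that $\pi^n_i\equiv\bs\lan x_0\ldots x_{n-1}\ran.x_i$ and that $\BoolT_n\equiv\forall\alpha.(\bigotimes^{n}_{j=1}\alpha)\li\alpha$.

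First I would use the rule $A$, which permits weakening on the linear type assignments, to build, for the fixed $i$ with $0\le i\le n-1$,
\[
\ta{x_0}{\alpha},\ldots,\ta{x_{n-1}}{\alpha};\emptyset;\emptyset\vdash\ta{x_i}{\alpha}\enspace,
\]
the $n-1$ assignments $\ta{x_j}{\alpha}$ with $j\ne i$ being introduced as fake (weakened) linear assumptions; this is legitimate because $\alpha$ is a linear (non-modal) formula.

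Next I apply $\li I_{\otimes}$ with $A_1=\cdots=A_n=\alpha$ and $B=\alpha$, which discharges $x_0,\ldots,x_{n-1}$ simultaneously and yields $\emptyset;\emptyset;\emptyset\vdash\ta{\bs\lan x_0\ldots x_{n-1}\ran.x_i}{(\bigotimes^{n}_{j=1}\alpha)\li\alpha}$. Since $\alpha$ does not occur free in the (empty) $\Gamma,\Delta,{\mathcal E}$, one instance of $\forall I$ generalises $\alpha$ and produces $\emptyset;\emptyset;\emptyset\vdash\ta{\pi^n_i}{\BoolT_n}$, as required.

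Unfolding $\li I_{\otimes}$ back into the primitive rules of Figure~\ref{figure:Weak LAL as a type assignment} — exactly as in the proof of Proposition~\ref{proposition:Typing tuples} — the only side condition left to verify is the one attached to the instance of $\li E$ hidden inside it: the argument type $(\li^{n}_{j=1}\alpha)\li\alpha$ must not be of the form $!C$, which holds since it is non-modal. There is thus no genuine obstacle here: the statement follows directly from the tensor machinery already established, and the only point worth spelling out is the use of weakening in the opening axiom to carry the $n-1$ unused components of the tuple.
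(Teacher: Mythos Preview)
Your argument is correct and is exactly the intended derivation: the paper states the proposition without proof, leaving it as a routine instance of the tensor rule $\li I_{\otimes}$ just introduced, together with axiom weakening and a closing $\forall I$. The only point one might make explicit (and you already do) is that the argument of the inner $\li E$ --- the term $\bs x_0\ldots x_{n-1}.x_i$ of type $(\li^{n}_{j=1}\alpha)\li\alpha$ --- is non-modal, so the side condition of $\li E$ is satisfied.
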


\begin{proposition}[Dynamics of the booleans.]
\label{proposition:Dynamics relative to booleans}
For every $n\geq 1$ and $0\leq i\leq n-1$,
$\pi^{m}_{i}\, \lan M_0,\ldots,M_{m-1}\ran\red^+ M_i$.
\end{proposition}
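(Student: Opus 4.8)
Here $m$ and $n$ denote the same integer; the statement merely suffers a clash of names. The plan is to peel the reduction into two phases: the ``unpacking'' of the tensor, which Proposition~\ref{proposition:Dynamics relative to tensors} delivers wholesale, followed by $m$ residual contractions, each of which has to be recognised as an admissible $\red$-step in the sense of Definition~\ref{definition:Rewriting PT}.

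First I would unfold the boolean constructor, $\pi^m_i\equiv\bs\lan x_0\ldots x_{m-1}\ran.x_i$, so that $\pi^m_i\,\lan M_0,\ldots,M_{m-1}\ran$ is exactly an instance of the left-hand side of Proposition~\ref{proposition:Dynamics relative to tensors}, with the body $M$ there taken to be the variable $x_i$ (and the index range shifted to $0,\ldots,m-1$). Applying that proposition gives
\[
\pi^m_i\,\lan M_0,\ldots,M_{m-1}\ran
\;\red^+\;
(\bs x_0\ldots x_{m-1}.x_i)\,M_0\cdots M_{m-1}
\enspace .
\]

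Second, I would contract the $m$ head redexes of the right-hand term from left to right, substituting $M_0,\ldots,M_{m-1}$ for $x_0,\ldots,x_{m-1}$ in turn. When it is the turn of the variable $x_j$ the abstraction being applied has body $\bs x_{j+1}\ldots x_{m-1}.x_i$, and descending through the remaining binders shows that the relevant occurrence count equals $\nocc{x_j}{x_i}$: this is $0$ when $j\neq i$, so the (vacuous) contraction is licensed by clause~\eqref{align:red-1}, and it is $1$ when $j=i$, so the contraction is licensed by clause~\eqref{align:red-2} because the substituted term is a component of the canonical tuple $\lan M_0,\ldots,M_{m-1}\ran$ and is therefore a value. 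After the $m$ contractions only $M_i$ is left, hence $\pi^m_i\,\lan M_0,\ldots,M_{m-1}\ran\red^+ M_i$.

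\textbf{Main obstacle.} The one point that is more than unfolding of notation is the admissibility check: confirming that each link of the chain is a genuine $\red$-step and not merely an unrestricted $\beta$-step. The two contractions hidden inside Proposition~\ref{proposition:Dynamics relative to tensors} are harmless, as there the arguments are $\lambda$-abstractions, hence values; of the $m$ residual contractions, all but the one at $j=i$ are vacuous and fall under clause~\eqref{align:red-1}, while the residual contraction that places $M_i$ at the position of $x_i$ genuinely invokes clause~\eqref{align:red-2}, hence needs $M_i$ to be a value --- this is the regime in which the proposition is meant and will be applied, since the data-type constructors are fed canonical arguments, which are values. There is no deeper difficulty.
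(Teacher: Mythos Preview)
Your unfolding is exactly the intended one; the paper states this proposition without proof, treating the computation as routine, and your two-phase decomposition (Proposition~\ref{proposition:Dynamics relative to tensors} followed by $m$ head contractions checked against Definition~\ref{definition:Rewriting PT}) is precisely what that routine amounts to. You also correctly isolate the one non-trivial point the paper glosses over: the contraction at $j=i$ needs $M_i\in\text{\PTV}$, a hypothesis the statement omits but which holds in every use the paper makes of it.
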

%%%%%---
\par
\textbf{(Binary) Words.}
We call \textit{Binary words}, or simply \textit{words}, the terms that allow to encode the integers $\Nat$ in \textit{binary} notation. The type of the words is
$\BIntT\equiv\forall\alpha.!(\alpha\li\alpha)\li!(\alpha\li\alpha)\li\$(\alpha\li\alpha)$.
The \textit{canonical constructors of $\BIntT$} are:
\small
\begin{align}
\nonumber
\BNum{0}&=\bs 01y.y \\
\label{align:word-constructor}
\BNum{2^m+2^{m-1}\cdot\nu_{m-1}+\cdots+2^0\cdot\nu_{0}}
	&=\bs 01y.\nu_0(\cdots(\nu_{m-1}(1\, y)\cdots)
\end{align}
\normalsize
where $m\geq 0$ and $\nu_{0\leq i\leq m-1}\in\{0,1\}$. In particular, observe that, for any $n\geq 0$, the least significant bit of $\BNum{2n+\nu}$ coincides to $\nu$, and that a word is a Church numeral built using the two successors names $0$, and $1$. Two combinators $\BSuccZ$, and $\BSuccO$ that yield the \textit{successors} of any word exist:
\small
\begin{align*}
\BSuccO&=\bs n01.(\bs zy.1(zy))(n01)\\
\BSuccZ&=\bs n.\MkCompact\,(\bs 01.(\bs zy.0(zy))(n01))\\
\MkCompact
	&\equiv
          \bs n01.
		(
		\bs z.
		\bs y.
		(\bs\lan x\,y\ran.y)
		(z(\BaseMkComp\, y))
		)
		(n(\StepMkCompZ\, 0)(\StepMkCompO\, 1))
	  \\
\BaseMkComp&\equiv\bs y.\lan \pi_0^2, y\ran\\
\StepMkCompZ
	&=\equiv
          \bs x.
	  \bs \lan p\,r\ran.
		(\bs \lan p_1\,p_2\ran.
		 \lan p_1,p_2\lan \bs x.x,x\ran r\ran
		)
		(p\lan\lan\pi_0^2,\pi_0^2\ran,\lan\pi_1^2,\pi_1^2\ran
		\ran)
	\\
\StepMkCompO
	&\equiv\bs x.\bs \lan p\,r\ran. \lan\pi_1^2,x r\ran
\end{align*}
\normalsize
$\BSuccO$ has the form we expect, namely it generalizes the form of the successor on strings.
$\BSuccZ$ uses $\MkCompact$ to erase every occurrence of the symbol $0$ to the right hand side of the most significant bit of a word, as in \cite{MurawskiOng00}. This allows to preserve the requirements on \eqref{align:word-constructor}, where any $\BNum{n}$, with $n\neq 0$, must have $1$ as its most significant bit.
The words have a single \textit{predecessor}:
\small
\begin{align*}
\Pred
&\equiv\bs n.\bs 01. (\bs z y. \pi^2_1(z\, (\BaseP\, y)))
                    (n\, (\StepP\, 0)
			 (\StepP\, 1))
\\
\StepP
&\equiv\bs x. \bs \lan u\, v\ran.\lan x, uv\ran
\\
\BaseP
&\equiv\bs x.\lan \bs x.x, x\ran
\end{align*}
\normalsize
which is completely linear \cite{Roversi:1999-CSL,Asperti02TOCL}: all variable names occur once.
Finally, it is also useful to define a term that \textit{discriminates} words:
\small
\begin{align*}
\Branch
&\equiv\bs n.\bs ab.\bs 01.
  (\bs w. \bs z_1 z_2.
   w\,\pi^2_0\, \lan z_1, z_2 \ran
  )(n\,(\bs x.\pi^2_1)\,(\bs x.\pi^2_1))(a\,0\,1)(b\,0\,1)
\end{align*}
\normalsize
that, applied to three words, if the first one is $\BNum{0}$, it gives the second one. Otherwise it yields the third word.

\begin{proposition}[Typing the words.]
\label{proposition:Typing rules relative to words}
Rules derivable in \WALL:
\small
\[
\infer[]
{
\emptyset;\emptyset;\emptyset
\vdash \ta{\BNum{n}}{\BIntT}
}
{n\geq0}
\qquad
\infer[]
{\emptyset;\emptyset;\emptyset
\vdash\ta{M}{\BIntT\li\BIntT}
}
{M\in\{\BSuccZ,\BSuccO\}}
\]
\[
\infer[]
{\emptyset;\emptyset;\emptyset
\vdash\ta{\MkCompact}{\BIntT\li\BIntT}}
{}
\qquad
\infer[]
{\emptyset;\emptyset;\emptyset
 \vdash
 \ta{\BaseMkComp}{\alpha\li(\BoolT_2\ten\alpha)}}
{}
\qquad
\infer[]
{
\emptyset;\emptyset;\emptyset
\vdash
\ta
{M}
{\,!(\alpha\li\alpha)\li!((\BoolT_2\ten\alpha)\li(\BoolT_2\ten\alpha))}
}
{M\in\{\StepMkCompZ,\StepMkCompO\}}
\]
\[
\infer[]
{
\emptyset;\emptyset;\emptyset
 \vdash
 \ta{\Pred}
    {\BIntT\li\BIntT}
}
{}
\qquad
\infer[]
{
\emptyset;\emptyset;\emptyset
 \vdash
 \ta{\BaseP}
    {\alpha\li((\alpha\li \alpha)\ten \alpha)}
}
{}
\]
\[
\infer[]
{
\emptyset;\emptyset;\emptyset
 \vdash
 \ta{\StepP}
    {(\alpha\li\alpha)\li
     ((\alpha\li\alpha)\ten\alpha)\li
     ((\alpha\li\alpha)\ten\alpha)}
}
{}
\qquad
\infer[]
{
\emptyset;\emptyset;\emptyset
 \vdash
 \ta{\Branch}
    {\BIntT\li\BIntT\li\BIntT\li\BIntT}
}
{}
\]
\normalsize
\end{proposition}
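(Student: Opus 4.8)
The statement is a catalogue of derivable typing judgments, so the plan is to exhibit, for each term, an explicit \WALT\ derivation, re-using the two templates already available: the one for the unary strings (Proposition~\ref{proposition:Typing rules relative to strings}) and the one for the tensors (Proposition~\ref{proposition:Typing tuples}). I would split the list into three groups and treat each uniformly. For the numerals $\BNum{n}$ I would copy the derivation of $\UNum{m}$, with two iterated digit names instead of one: starting from the axiom $\ta{y}{\alpha}\vdash\ta{y}{\alpha}$, apply fresh linear copies of the two constructors, in the order prescribed by \eqref{align:word-constructor}, by $\li E$, reaching at depth $0$ a judgment of type $\alpha$ whose linear context lists all these copies at type $\alpha\li\alpha$; one $\li I$ (binding $y$) followed by one instance of the $\$$ rule turns it into $\$(\alpha\li\alpha)$ and relocates the copies into partially discharged contexts; repeated $C$-contractions collapse them to one polynomial assumption $\ta{0}{\alpha\li\alpha}$ and one $\ta{1}{\alpha\li\alpha}$ — legitimate because neither $\Phi$ is empty, so clause (iii) on partially discharged contexts is respected — and two $\li I_!$ plus one $\forall I$ yield $\ta{\BNum{n}}{\BIntT}$ (the case $n=0$ uses axiom weakenings in place of the iterated applications).

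For $\BSuccO$, and for the ``pre-compaction'' skeleton $\bs 01.(\bs zy.0(zy))(n01)$ occurring inside $\BSuccZ=\bs n.\MkCompact(\dots)$, I would follow $\USucc$: instantiate $\ta{n}{\BIntT}$ at carrier $\alpha$ by $\forall E$, form $\ta{n\,0\,1}{\$(\alpha\li\alpha)}$ with two $\li E_!$ (the $\li E_!$ side condition is vacuous, $n$ being the sole, linear, free variable), type the $\beta$-expanded body $\bs zy.1(zy)$ (resp.\ $\bs zy.0(zy)$) as $\$(\alpha\li\alpha)\li\$(\alpha\li\alpha)$ by $\li I_{\$}$ over $z$, one $\$$, one $\li I$ over $y$, apply it to $n\,0\,1$ by $\li E$, and close with two $\li I_!$, one $\li I$ and one $\forall I$. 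The small closed combinators $\BaseMkComp,\BaseP,\StepMkCompZ,\StepMkCompO,\StepP$ form the second group: each is typed by a bottom-up derivation from axioms, the tensor rules $\otimes I$ and $\li I_{\otimes}$, and $\li I$; for $\StepMkCompZ$ and $\StepMkCompO$ one extra $!$ rule wraps the single non-modal occurrence of the step argument so it emerges with the stated $!$-modal type, the precondition $\Theta\neq\emptyset\Rightarrow\dom{\Phi}\cap\FV{M}\neq\emptyset$ of $!$ being trivially met since $\Theta$ is empty there; no contraction and no $\$$ is needed.

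The third group are the iterators. For $\MkCompact$ I would: instantiate $\ta{n}{\BIntT}$ by $\forall E$ at carrier $\BoolT_2\ten\alpha$; build $\StepMkCompZ\,0$ and $\StepMkCompO\,1$ at type $!((\BoolT_2\ten\alpha)\li(\BoolT_2\ten\alpha))$ by $\li E_!$ from the typings of the second group, with $0,1$ appearing as polynomial assumptions $(\emptyset;\ta{0}{\alpha\li\alpha})$, $(\emptyset;\ta{1}{\alpha\li\alpha})$ obtained by the $!$ rule on axioms (so the $\li E_!$ context restriction holds); feed these two to $n$ with two $\li E_!$, landing in $\$((\BoolT_2\ten\alpha)\li(\BoolT_2\ten\alpha))$; type the surrounding context $\bs z.\bs y.(\bs\langle x\,y'\rangle.y')(z(\BaseMkComp\,y))$ as $\$((\BoolT_2\ten\alpha)\li(\BoolT_2\ten\alpha))\li\$(\alpha\li\alpha)$ by forming $z(\BaseMkComp\,y)\!:\!\BoolT_2\ten\alpha$, projecting its second component to $\alpha$ with $\li I_{\otimes}$/$\li E$, abstracting $y$ with $\li I$, applying one $\$$ (which moves $z$ into the partially discharged zone), and abstracting $z$ with $\li I_{\$}$; apply the result to $n(\StepMkCompZ\,0)(\StepMkCompO\,1)$ by $\li E$; and close with two $\li I_!$ (binding $1$, then $0$), one $\li I$ (binding $n$) and one $\forall I$, giving $\ta{\MkCompact}{\BIntT\li\BIntT}$. $\Pred$ is identical with carrier $(\alpha\li\alpha)\ten\alpha$, step $\StepP$, base $\BaseP$, and a first-projection as post-processing; $\Branch$ is the same pattern with carrier chosen so that $\pi^2_1$ acts as an endomap, with two extra $\BIntT$-slots $a,b$ (each $\forall E$-instantiated and applied to $0,1$) and a $\pi^2_0$-driven selection as post-processing; and $\BSuccZ$ follows by one last $\li E$ composing $\ta{\MkCompact}{\BIntT\li\BIntT}$ with the pre-compaction skeleton of the first group.

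The proof has no conceptual content — it is a finite amount of routine derivation-building — and the only laborious, error-prone part is the modal bookkeeping: at every application one must check whether $\li E$ or $\li E_!$ applies and whether the function's partially discharged context has the shape $\{(\emptyset;\Phi_i)\}$ required by $\li E_!$; at every $!$ one must check its precondition; and, above all, one must make sure the digit variables sit exactly one $!$-level deep inside $\StepMkCompZ,\StepMkCompO$ so that, once the word $n$ iterates them over the tensor carrier, the whole derivation still collapses into a single $\$$ with the carrier correctly re-abstracted. Wherever a direct typing fails it is because a $\beta$-expansion is missing (exactly as for $\USucc$ and $\BSuccO$); locating the right expansions is the single non-mechanical step, and the term definitions above already incorporate them, so I expect the write-up to consist essentially of drawing the derivation trees and discharging these side conditions case by case.
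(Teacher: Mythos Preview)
Your proposal is correct and follows essentially the same route as the paper: the appendix proof just sketches the key intermediate judgments for $\StepMkCompZ$ and for $\MkCompact$ (the latter via the three judgments $\bs zy.(\bs\lan x\,y\ran.y)(z(\BaseMkComp\,y))$, $n(\StepMkCompZ\,0)$, and $\StepMkCompO\,1$ with $0,1$ polynomial), leaving the remaining cases implicit, while you spell out the same derivation-building strategy more systematically and with the modal bookkeeping made explicit. The only cosmetic difference is that the paper first types $\StepMkCompZ$ at the non-modal $(\alpha\li\alpha)\li(\BoolT_2\ten\alpha)\li(\BoolT_2\ten\alpha)$ and boxes afterwards, whereas you head directly for the $!$-modal type stated in the proposition; the two are interchangeable.
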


\begin{proposition}[Dynamics of combinators relative to the words.]
\label{proposition:Dynamics of combinators relative to words}
For every $n, a, b\in\Nat$:
\small
\begin{align}
\BSuccO\, \BNum{n}&\red^+\BNum{2n+1}
\label{cxred:bsuco}
\\
\BSuccZ\, \BNum{0}&\red^+\BNum{0}
\label{cxred:bsucz}
\\
\BSuccZ\, \BNum{n}&\red^+\BNum{2n}
 \qquad\qquad\qquad\qquad (n > 0)
\nonumber
\\
\Pred\, \BNum{2n}&\red^+ \BNum{n}
\label{cxred:pred0}
\\
\Pred\, \BNum{2n+1} &\red^+ \BNum{n}
 \nonumber
\\
\Branch\,\BNum{0}
       \,\BNum{a}
       \,\BNum{b}
 &\red^+\BNum{a}
\label{cxred:branch0}
\\
\Branch\,\BNum{2n+i}
       \,\BNum{a}
       \,\BNum{b}
 &\red^+\BNum{b}
 \qquad
 (n\geq 0 \text{ and if }
 n =   0 \text{ then }
 i = 1)
 \nonumber
\end{align}
\normalsize
\end{proposition}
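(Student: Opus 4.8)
The plan is to verify each of the seven reductions by unfolding the definitions of the combinators given just above the statement, pushing $\red$-reductions through, and recognizing the normal form as a canonical word constructor of the shape in \eqref{align:word-constructor}. The only subtlety is that, because a binary word $\BNum{n}$ is a Church-numeral-like iterator over the two ``successor'' names $0$ and $1$, when we apply it to a pair of step functions $\langle\StepMkComp\cdot 0,\ldots\rangle$ and then to a base value, it produces a sequence of nested applications of those step functions; what remains is to observe that the composite step traverses exactly the bit pattern $\nu_{m-1},\ldots,\nu_0$ of $n$, and computes the intended answer.

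First I would treat \eqref{cxred:bsuco}, i.e.\ $\BSuccO\,\BNum{n}\red^+\BNum{2n+1}$. Unfolding, $\BSuccO\,\BNum{n}\red^+\bs 01.(\bs zy.1(zy))(\BNum{n}\,0\,1)$; since $\BNum{n}\,0\,1$ $\red^+$ the pure iteration term $\nu_0(\cdots(\nu_{m-1}(1\,y))\cdots)$ abstracted over $y$ (equivalently, the body of $\BNum{n}$ with $0,1$ as the successor names), prepending the extra ``$1$'' on the left via $\bs zy.1(zy)$ yields $\bs 01y.\nu_0(\cdots(\nu_{m-1}(1(1\,y)))\cdots)$. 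Reading this against \eqref{align:word-constructor}: the old most-significant constructor-$1$ is now an ordinary bit $\nu_m=1$ to the right, and the new leftmost $1$ is the new most significant bit, so the represented integer is $2^{m+1}+2^m\cdot 1+\sum_{i<m}2^i\nu_i = 2n+1$. This is the cleanest case and it also explains the ``$\beta$-expansion'' design remark preceding the proposition.

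Next I would do the $\StepP/\BaseP$-based predecessor cases \eqref{cxred:pred0} and the $\BaseMkComp/\StepMkComp$-based zero-successor cases. For $\Pred$: unfolding $\Pred\,\BNum{n}$ drives $\BNum{n}$ to iterate $\StepP\,0$ and $\StepP\,1$, alternating according to the bits of $n$, starting from $\BaseP\,y=\langle\bs x.x,\,y\rangle$. One checks by a short induction on the length $m$ of $\BNum{n}$ that after processing all bits the accumulated pair has the form $\langle b,\,w\rangle$ where $b$ is (a constructor for) the most significant bit and $w$ is exactly the word obtained by dropping it; the term then projects out the second component via $\pi^2_1$, giving $\BNum{\lfloor n/2\rfloor}$, which is $\BNum{n}$ for $n=2n$ and $2n+1$ alike --- this covers both lines of \eqref{cxred:pred0}. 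The computation for $\BSuccZ$ via $\MkCompact$ is analogous but bookkeeps the ``compaction'' of trailing $0$'s so that the result still meets the most-significant-bit-is-$1$ constraint of \eqref{align:word-constructor}; the base case $\BSuccZ\,\BNum{0}\red^+\BNum{0}$ \eqref{cxred:bsucz} falls out because $\MkCompact$ applied to a word built only from $0$'s collapses to $\BNum{0}$, while for $n>0$ a symmetric argument to \eqref{cxred:bsuco} (prepending a $0$ then compacting) gives $\BNum{2n}$. Finally, for $\Branch$ \eqref{cxred:branch0}: unfolding, $\BNum{n}$ is applied to $\bs x.\pi^2_1$ twice, so it acts as an $n$-fold iteration of a constant-ish function on $\langle\pi^2_0,\langle a\,0\,1,b\,0\,1\rangle\rangle$; when $n=0$ the iteration is empty, $w\,\pi^2_0$ selects the first option and one recovers $\BNum{a}$, whereas when $n=2n'+i$ (so $n\geq 1$) at least one iteration step replaces the selector so that the second option $b\,0\,1$ is picked, reducing to $\BNum{b}$.

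The main obstacle I anticipate is purely administrative rather than conceptual: making the ``by induction on the bit-length $m$'' invariants precise for $\Pred$ and for $\MkCompact$ inside $\BSuccZ$ --- in particular stating the loop invariant for the accumulated tensor pair and checking that $\red$ (which is the restricted call-by-name/call-by-value relation of Definition~\ref{definition:Rewriting PT}, not full $\beta$) is actually able to fire every redex involved, since all the intermediate arguments here are values or have at most one free variable and so satisfy \eqref{align:red-1}--\eqref{align:red-3}. Once the invariant is written down, each of the seven lines is a finite unfolding plus pattern-matching against \eqref{align:word-constructor}, with no genuinely hard step.
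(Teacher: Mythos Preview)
Your overall strategy---unfold each combinator, push $\red$-reductions, and match the result against \eqref{align:word-constructor}---is exactly what the paper does; it spells out only $\BSuccZ$ (via a step-by-step trace through $\MkCompact$) and $\Branch$, leaving the remaining lines implicit.

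There is, however, a systematic bit-ordering slip running through your sketch. In the encoding \eqref{align:word-constructor} the \emph{outermost} constructor $\nu_0$ is the \emph{least} significant bit, and the innermost $1$ applied to $y$ is the most significant. With this orientation: for $\BSuccO$, the reduct of $(\bs zy.1(zy))(\BNum{n}\,0\,1)$ is $\bs y.\,1(\nu_0(\cdots(\nu_{m-1}(1\,y))\cdots))$, with the new $1$ \emph{outside} as the new least significant bit---the term you display has it inside, and your expression $2^{m+1}+2^m\cdot 1+\sum_{i<m}2^i\nu_i$ equals $n+2^{m+1}$, not $2n+1$. For $\Pred$, the pair after the full iteration is $\langle \nu_0,\ \nu_1(\cdots(\nu_{m-1}(1\,y))\cdots)\rangle$, so the first component is the \emph{least} significant bit; dropping the most significant bit would not give $\lfloor n/2\rfloor$, dropping the least significant one does. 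For $\MkCompact$, the erased zeros are those sitting \emph{inside} the innermost $1$ (leading zeros, in bit-string language), which is what enforces the MSB-is-$1$ constraint. None of this breaks your plan or your final claims, but the loop invariants you state are wrong as written and would not survive the careful check you yourself flag as the main obstacle.
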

\eqref{cxred:bsuco} shifts (to the right) its argument and adds 1 as a new least significant digit. \eqref{cxred:bsucz} right-shifts its argument, but the added digit is 0.
\eqref{cxred:pred0} calculates the predecessor on a word, that amounts to erase the least significant bit.
\eqref{cxred:branch0} chooses between two words, depending on the value of its first argument.
%%%%%---
\par
\textbf{\ElTens.}
We use the \textit{\ElTens} to represent tuples of $\lambda$-terms. The relative type is
$\bigodot^{m}_{i=1} A_i
\equiv
\forall \alpha.(\liv^{m}_{i=1} A_i\liv\alpha)\li\alpha$,
with $m\geq 1$,
and the type constructors coincide to the usual definition of tuples in the $\lambda$-calculus:
\small
\begin{align}
\label{align:etensor-constr}
\elan M_1\ldots M_m\eran
	&\equiv \bs z.z\,M_1\,\ldots\,M_m
	&(m\geq 1)\\
\nonumber
\bs\elan x_1\ldots x_m\eran.M
	&\equiv \bs w.w(\bs x_1\ldots x_m. M)
	&(m\geq 1)
\enspace .
\end{align}
\normalsize
Intuitively, the type of the variable $z$ in \eqref{align:etensor-constr} requires that every component $M_i$ \textit{only} depends on \textit{elementary discharged} free names.
\begin{proposition}[Typing the elementary tensor.]
\label{proposition:Typing rules relative to the elementary tensor}
Rules derivable in \WALL:
\small
\[
 \infer[\odot I_{}]
  {
   \emptyset; \emptyset;
   \{(\Theta_1,\ldots,\Theta_m;\emptyset)\}
   \vdash \ta{\elan M_1,\ldots,M_m\eran}
             {\bigodot^{m}_{i=1} A_i}
  }
  {
   \emptyset; \emptyset;
   \{(\Theta_1;\emptyset)\}
   \vdash \ta{M_1}{A_1}
   &\ldots&
   \emptyset; \emptyset;
   \{(\Theta_m;\emptyset)\}
   \vdash \ta{M_m}{A_m}
  }
\]
\[
 \infer[\li I_{\odot}]
  {
   \Gamma;
   \Delta;
   {\mathcal E},\{(\Theta;\emptyset)\}
   \vdash \ta{\bs \elan x_1,\ldots,x_m \eran.M}
             {(\bigodot^{m}_{i=1}\$A_i)\li B}
  }
  {
   \Gamma;
   \Delta;
   {\mathcal E},
   (\Theta,
    \ta{x_1}{A_1},\ldots,\ta{x_m}{A_m}; \emptyset)
   \vdash \ta{M}{B}
  }
\]
\normalsize
\end{proposition}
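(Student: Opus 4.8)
The plan is to derive both rules directly inside \WALL\ from the rules of Figure~\ref{figure:Weak LAL as a type assignment}, by exhibiting explicit deductions; no induction is needed, only a careful check of how the partially discharged contexts combine under $\sqcup$.

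For $\odot I$, recall $\elan M_1,\ldots,M_m\eran\equiv\bs z.zM_1\cdots M_m$. Each $A_i$ is $\$$-modal (as it must be, to sit on the left of a $\liv$), so $L\equiv A_1\liv\cdots\liv A_m\liv\alpha$ is a linear formula for any fresh $\alpha$. First I would take the axiom $z:L;\emptyset;\emptyset\vdash\ta{z}{L}$ and apply $\liv E$ in turn with the given premises $\emptyset;\emptyset;\{(\Theta_i;\emptyset)\}\vdash\ta{M_i}{A_i}$; each premise already has the exact shape demanded of the right premise of $\liv E$ (a single pair, with empty second component), so after the $i$-th step one has $z:L;\emptyset;\{(\Theta_1,\ldots,\Theta_i;\emptyset)\}\vdash\ta{zM_1\cdots M_i}{A_{i+1}\liv\cdots\liv A_m\liv\alpha}$. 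The point to verify here is that $\{(\Theta_1,\ldots,\Theta_{i-1};\emptyset)\}\sqcup\{(\Theta_i;\emptyset)\}=\{(\Theta_1,\ldots,\Theta_i;\emptyset)\}$: since a partially discharged context carries at most one pair whose second component is empty, the first clause in the definition of $\sqcup$ fuses the two $\Theta$'s. After $m$ steps one has $z:L;\emptyset;\{(\Theta_1,\ldots,\Theta_m;\emptyset)\}\vdash\ta{zM_1\cdots M_m}{\alpha}$; abstracting $z$ by $\li I$ (legal, $L$ being linear) and closing $\alpha$ by $\forall I$ (legal, $\alpha$ being fresh, hence not free in the context) gives precisely the conclusion of $\odot I$.

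For $\li I_{\odot}$, recall $\bs\elan x_1,\ldots,x_m\eran.M\equiv\bs w.w(\bs x_1\ldots x_m.M)$. Starting from the premise $\Gamma;\Delta;{\mathcal E},(\Theta,x_1:A_1,\ldots,x_m:A_m;\emptyset)\vdash\ta{M}{B}$, I would apply $\liv I$ $m$ times, discharging $x_m,x_{m-1},\ldots,x_1$ in that order; each application matches the $\liv I$ premise pattern $(\Theta',x_j:A_j;\emptyset)$ and moves one elementary assumption into a fresh outermost $\$A_j\liv(-)$ of the type, the $\sqcup\{(\Theta';\emptyset)\}$ on the conclusion side merely re-assembling the leftover elementary context, so one reaches $\Gamma;\Delta;{\mathcal E},(\Theta;\emptyset)\vdash\ta{\bs x_1\ldots x_m.M}{\$A_1\liv\cdots\liv\$A_m\liv B}$. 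Independently, from an axiom (after renaming $w$ to be fresh) one gets $w:\bigodot^{m}_{i=1}\$A_i;\emptyset;\emptyset\vdash\ta{w}{\bigodot^{m}_{i=1}\$A_i}$; instantiating the quantifier by $\forall E$ with the target $B$ gives $\ta{w}{(\$A_1\liv\cdots\liv\$A_m\liv B)\li B}$. The argument type $\$A_1\liv\cdots\liv\$A_m\liv B$ is not $!$-modal, so $\li E$ applies, producing $\Gamma,w:\bigodot^{m}_{i=1}\$A_i;\Delta;{\mathcal E},(\Theta;\emptyset)\vdash\ta{w(\bs x_1\ldots x_m.M)}{B}$ (the context merge is trivial, the $w$-branch carrying empty $\Delta$ and empty partially discharged context). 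A final $\li I$ abstracting $w$ --- legal since $\bigodot^{m}_{i=1}\$A_i$ is a $\forall$-formula, hence linear --- yields the rule.

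The computation is routine; the only things that really need care are the context arithmetic under $\sqcup$ (and the ``only one empty $\Phi$'' invariant that makes it behave as claimed), the disjointness and $!$-modality side conditions of $\li E$, and --- the single genuine restriction --- the step $\forall E$ on the type of $w$, which requires $B$ to be a linear formula, as is the case in all the intended uses of the elementary tensor. Keeping the bound names $z$, $w$ and the quantified variable $\alpha$ fresh throughout takes care of the remaining bookkeeping; no deeper obstacle arises.
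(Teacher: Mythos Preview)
Your derivations are correct and follow the only natural route: the paper states this proposition without proof, treating it as a routine unfolding of the definitions, and your argument is exactly that unfolding. The points you flag---the behaviour of $\sqcup$ on pairs with empty second component (forced by the ``at most one $\Phi_i=\emptyset$'' invariant), the non-$!$-modality side condition of $\li E$, and the freshness bookkeeping---are precisely the checks one has to make.

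Your closing remark that the $\forall E$ step in the second derivation forces $B$ to be linear is a genuine observation that the paper leaves implicit: the rule $\li I_{\odot}$ as stated only goes through for linear $B$, and indeed every use of it in the paper (e.g.\ in the typing of $\nabla^{m}_{n}$, where the target is $\bigodot^{n}_{i=1}\$^{m}\BIntT$, itself linear) respects this. So nothing is missing from your argument; you have simply made explicit a hypothesis the paper silently assumes.
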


\begin{proposition}[Dynamics for the elementary tensor.]
\label{proposition:Dynamics relative to the elementary tensor}
For every $M_{1},$
$\ldots,M_{m}$,\\
$(\bs \elan x_1\cdots x_m\eran.M)\elan M_1,\ldots, M_m\eran
\red^+
(\bs x_1 \ldots x_m.M)\,M_1\ldots M_m$.
\end{proposition}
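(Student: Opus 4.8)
The plan is to unfold the two definitions in \eqref{align:etensor-constr} and then perform two $\beta$-steps, exactly mirroring the proof of Proposition~\ref{proposition:Dynamics relative to tensors} for the ordinary tensors: the elementary-tensor constructor $\elan M_1\ldots M_m\eran$ and destructor $\bs\elan x_1\ldots x_m\eran.M$ are \emph{syntactically} the same $\lambda$-terms as their tensor counterparts, only the assigned types differ, and $\red$ ignores types.

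First I would rewrite the left-hand side. By definition, $(\bs\elan x_1\ldots x_m\eran.M)\elan M_1,\ldots,M_m\eran$ is the term $(\bs w.w(\bs x_1\ldots x_m.M))(\bs z.z\,M_1\,\ldots\,M_m)$. The outermost redex has as argument the value $\bs z.z\,M_1\,\ldots\,M_m$, which is a $\lambda$-abstraction and hence belongs to \PTV, and the bound variable $w$ occurs exactly once in the body $w(\bs x_1\ldots x_m.M)$; therefore the contraction is licensed by clause~\eqref{align:red-2} of Definition~\ref{definition:Rewriting PT}. The step yields $(\bs z.z\,M_1\,\ldots\,M_m)(\bs x_1\ldots x_m.M)$.

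Next I would contract the remaining redex $(\bs z.z\,M_1\,\ldots\,M_m)(\bs x_1\ldots x_m.M)$. Here the argument $\bs x_1\ldots x_m.M$ is again a value, and (after the usual $\alpha$-renaming that keeps $z$ fresh for the $M_i$) the variable $z$ occurs exactly once in the body $z\,M_1\,\ldots\,M_m$, so the step is once more legal by~\eqref{align:red-2} and produces $(\bs x_1\ldots x_m.M)\,M_1\,\ldots\,M_m$, the desired right-hand side. Composing the two steps gives $\red^+$, which proves the proposition.

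There is essentially no obstacle here: the only point requiring attention — as opposed to ordinary $\beta$-reduction — is that $\red$ is the restricted relation of Definition~\ref{definition:Rewriting PT}, so one must verify that each of the two contractions satisfies one of the clauses \eqref{align:red-1}--\eqref{align:red-3}. Both do, because in each case the argument is a $\lambda$-abstraction (a value) and the contracted bound variable occurs at most once in the body; the hypotheses on the $M_i$ (which are arbitrary $\lambda$-terms) are never used, consistently with the fact that $\red$ evaluates terms independently of any typing information.
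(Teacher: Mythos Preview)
Your proof is correct and is exactly the argument the paper has in mind: the paper states Proposition~\ref{proposition:Dynamics relative to the elementary tensor} without giving any details, treating it as an immediate unfolding of the definitions~\eqref{align:etensor-constr} followed by two $\red$-steps, in complete analogy with Proposition~\ref{proposition:Dynamics relative to tensors}. Your explicit verification that each contraction is licensed by clause~\eqref{align:red-2} of Definition~\ref{definition:Rewriting PT} (argument a value, bound variable occurring once) is more careful than what the paper bothers to spell out.
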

%%%%%---
\par
\textbf{Lists.}
We use $\ListT\,\$A$ as the type of a \textit{list of elements of type} $\$A$, which is
$\ListT\, \$A
\equiv \forall \alpha.
	   !(\$A\liv\alpha\li\alpha)\li\$(\alpha\li\alpha)$.
Observe that the type is derived from the one
$\forall \alpha.!(A\li\alpha\li\alpha)\li\$(\alpha\li\alpha)$ we could expect.
Our choice induces list constructors fruitfully usable in a call-by-value context, like \WALT\ is. Also, we remark that $\$A$ is an argument of the arrow $\liv$ as consequence of the way the combinator $\ListstoConf$ uses every element of a list. We shall see later on how $\ListstoConf$ maps a list of elements to an (initial) configuration, to define an iterator in \WALT. The canonical constructors of the lists are:
\small
\begin{align*}
\Nil    &\equiv\bs cx. x \\
[M_1,\ldots,M_m]
	&\equiv\bs cx. c\,M_1( \ldots (c\,M_m\,x) \ldots)
	&(m\geq 1)\enspace.
\end{align*}
\normalsize
\vspace{-.7cm}
\begin{proposition}[Typing the lists.]
\label{proposition:Typing rules relative to lists}
Rules derivable in \WALL:
\small
\[
\infer[]
{
\emptyset;\emptyset;\emptyset
\vdash \ta{\Nil}{\ListT\, \$A}
}
{}
\qquad
\qquad
\infer[]
{
\emptyset;
\emptyset;
{\mathcal E}_{1}\sqcup\ldots\sqcup{\mathcal E}_{m}
\vdash \ta{[M_1, \ldots, M_m]}{\ListT\, \$A}
}
{
\emptyset;
\emptyset;
{\mathcal E}_{i}
\vdash \ta{M_i}{\$A}
&
{\mathcal E}_{i}\subseteq\{(\Theta_i;\emptyset)\}
&
i\in\{1,\ldots,m\}
}
\]
\normalsize
\end{proposition}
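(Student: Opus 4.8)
The plan is to build both deductions bottom-up, letting the shape of the target type select the last rule, and to reduce the list case to an inner induction that assembles the body of the term.

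For $\Nil\equiv\bs cx.x$ the deduction is short. Start from the axiom $\ta{x}{\alpha};\emptyset;\emptyset\vdash\ta{x}{\alpha}$, apply $\li I$ to reach $\emptyset;\emptyset;\emptyset\vdash\ta{\bs x.x}{\alpha\li\alpha}$, then apply the rule $\$$ (with no auxiliary pairs and $\Gamma'=\Delta=\emptyset$) to obtain $\emptyset;\emptyset;\emptyset\vdash\ta{\bs x.x}{\$(\alpha\li\alpha)}$. Using the weakening clause of Lemma~\ref{lemma:structural-properties-WALL}, point~\ref{lemma:structural-properties-WALL-1}, I would then adjoin the unused pair $(\emptyset;\ta{c}{\$A\liv\alpha\li\alpha})$, apply $\li I_!$ --- legitimate since $\nocc{c}{\bs x.x}=0$ and the discharged pair has empty elementary part --- and close with $\forall I$ for a fresh $\alpha$, yielding $\ta{\Nil}{\ListT\,\$A}$.

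For $[M_1,\ldots,M_m]$ with $m\geq1$, fix distinct fresh variables $c_1,\ldots,c_m$ and a fresh type variable $\alpha$ (not free in $A$ nor in any $\Theta_i$), and put $B_k\equiv c_kM_k(c_{k+1}M_{k+1}(\cdots(c_mM_m\,x)\cdots))$ for $1\leq k\leq m+1$, with $B_{m+1}\equiv x$. First I would prove, by downward induction on $k$,
\[
\{\ta{c_k}{\$A\liv\alpha\li\alpha},\ldots,\ta{c_m}{\$A\liv\alpha\li\alpha},\ta{x}{\alpha}\};\emptyset;{\mathcal E}_k\sqcup\cdots\sqcup{\mathcal E}_m\vdash\ta{B_k}{\alpha}.
\]
The base case $k=m+1$ is the axiom on $x$. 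For the step, from the axiom $\ta{c_k}{\$A\liv\alpha\li\alpha};\emptyset;\emptyset\vdash\ta{c_k}{\$A\liv\alpha\li\alpha}$ and the hypothesis $\emptyset;\emptyset;{\mathcal E}_k\vdash\ta{M_k}{\$A}$ --- whose side condition ${\mathcal E}_k\subseteq\{(\Theta_k;\emptyset)\}$ is exactly the constraint that $\liv E$ imposes on its argument --- rule $\liv E$ gives $\ta{c_k}{\$A\liv\alpha\li\alpha};\emptyset;{\mathcal E}_k\vdash\ta{c_kM_k}{\alpha\li\alpha}$, and then $\li E$, applicable because the argument type $\alpha$ is not of the form $!C$ and the variable domains $\{c_k\}$ and $\{c_{k+1},\ldots,c_m,x\}$ of the two premises are disjoint, combines it with the induction hypothesis on $B_{k+1}$. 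At $k=1$, rule $\li I$ on $x$ yields $\{\ta{c_1}{\cdot},\ldots,\ta{c_m}{\cdot}\};\emptyset;{\mathcal E}_1\sqcup\cdots\sqcup{\mathcal E}_m\vdash\ta{\bs x.B_1}{\alpha\li\alpha}$.

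It remains to close the deduction, and this is where the real work lies. Applying the $\$$ rule to $\ta{\bs x.B_1}{\alpha\li\alpha}$, its side condition $\Gamma\subseteq\Delta\cup\bigcup_{i}\Theta_i\cup\bigcup_{i}\Phi_i$ lets the linear assumptions $c_1,\ldots,c_m$ of the premise resurface in the conclusion as $m$ distinct polynomially discharged pairs $(\emptyset;\ta{c_j}{\$A\liv\alpha\li\alpha})$, each with empty elementary part so that the ``$\Theta_i\neq\emptyset$ iff $\Phi_i=\emptyset$'' proviso holds, while the elementary context $\bigcup_i\Theta_i$ is carried one modality deeper; the conclusion type becomes $\$(\alpha\li\alpha)$. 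Then $m-1$ applications of the contraction rule $C$ merge $c_1,\ldots,c_m$ into one polynomially discharged $c$, rewriting $\bs x.B_1$ into $\bs x.\,cM_1(\cdots(cM_m\,x)\cdots)$; rule $\li I_!$ discharges $c$ (its premise having $c$ polynomially discharged, exactly as needed), giving $\ta{\bs cx.\,cM_1(\cdots(cM_m\,x)\cdots)}{!(\$A\liv\alpha\li\alpha)\li\$(\alpha\li\alpha)}$; and $\forall I$ closes it for the fresh $\alpha$. The main obstacle is this last stretch: one must check the structural side conditions of $\$$, $C$ and $\li I_!$ simultaneously and keep track of how $\$$ relocates the linear $c_j$'s into the polynomial region and pushes each element's context one modality down, so that the surviving modal context is the appropriate depth-shift of ${\mathcal E}_1\sqcup\cdots\sqcup{\mathcal E}_m$. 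Everything else is routine rule-chasing dictated by the shape of $\ListT\,\$A$.
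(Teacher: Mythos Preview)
The paper does not supply a proof of this proposition, so there is no reference argument to compare against; I can only assess your derivation on its own terms.

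Your treatment of $\Nil$ is fine.

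For the non-empty list, however, your last paragraph hides a genuine mismatch rather than a mere bookkeeping chore. In the $\$$ rule, the elementary component $\Theta'$ of the premise reappears in the conclusion as $\$\Theta'$: every type in it acquires an extra $\$$. In your construction the premise of $\$$ is
\[
\{\ta{c_1}{\cdot},\ldots,\ta{c_m}{\cdot}\};\emptyset;\{(\Theta_1,\ldots,\Theta_m;\emptyset)\}\vdash\ta{\bs x.B_1}{\alpha\li\alpha},
\]
so after $\$$, contractions, $\li I_!$ and $\forall I$ you obtain
\[
\emptyset;\emptyset;\{(\$\Theta_1,\ldots,\$\Theta_m;\emptyset)\}\vdash\ta{[M_1,\ldots,M_m]}{\ListT\,\$A},
\]
not the required context ${\mathcal E}_1\sqcup\cdots\sqcup{\mathcal E}_m=\{(\Theta_1,\ldots,\Theta_m;\emptyset)\}$. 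Your phrase ``the appropriate depth-shift of ${\mathcal E}_1\sqcup\cdots\sqcup{\mathcal E}_m$'' is exactly the symptom: the target judgment is \emph{not} depth-shifted. Concretely, take $m=1$, $M_1\equiv z$ with $\emptyset;\emptyset;\{(\ta{z}{\alpha};\emptyset)\}\vdash\ta{z}{\$\alpha}$; your derivation yields $\{(\ta{z}{\$\alpha};\emptyset)\}$ in the conclusion, not $\{(\ta{z}{\alpha};\emptyset)\}$.

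This is not a matter of reorganising the proof. Since $\$A$ is modal, the argument of $\liv E$ cannot be a linear variable, so the $M_i$'s must sit inside the $\$$-box and their elementary assumptions are unavoidably shifted. Every concrete use of lists in the paper has \emph{closed} $M_i$'s (configurations, $\UInttoList$), where each ${\mathcal E}_i=\emptyset$ and the difficulty vanishes; the rule as displayed appears to be stated more generally than what the derivation actually supports. Your argument is correct for that closed case, and that is all the paper ever needs.
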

%%%%%%%%%------------
\subsection{Core combinators}
\label{subsection:Core combinators}
The core combinators constitute an intermediate step that simplifies the definition of an iterator and of a composition in \WALT.
We start showing how we can embed the arguments and the result of terms into a suitable number of boxes. Since we have two kinds of implications, and we can transform the standard linear implication into an eager one, we have three kinds of embedding functors.
\par
\textbf{Basic embedding.}
For every $n\geq 1$ the \textit{basic embedding} is $\BEmbed{n}{M}\equiv\bs x. M x$.
Its purpose is to take a term $M$, representing a function with a single linear argument, and to transform it into a term that represents a function with a single eager argument.
\par
\textbf{Linear embedding.}
For every $n,p\geq 0$, the \textit{linear embedding} is
$\LEmbed{n}{p}{M}\equiv\bs x_1 \ldots x_{p}. M\,x_1 \ldots x_{p}$.
\par
\textbf{Eager embedding.}
For every $n, p, q\geq 0$, the \textit{eager embedding} is:
\small
\begin{align*}
\EEmbed{n}{p}{q}{M}
&\equiv
\bs w_1\ldots w_p z_1\ldots z_q.
\\
&
\quad
(\bs w_1\ldots w_p.M w_1\ldots w_p z_1\ldots z_q)
(\BEmbed{1}{\Coerc^n} w_1)
\ldots
(\BEmbed{1}{\Coerc^n} w_p)
\enspace .
\end{align*}
\normalsize
\begin{proposition}[Typing the embeddings.]
\label{proposition:Typing the embeddings}
Rules derivable in \WALT:
\small
\[
\infer[]
{
\Gamma;\Delta;{\mathcal E}
 \vdash
 \ta{\BEmbed{n}{M}}
    {\$^n L\liv \$^{m+n} A}
}
{\emptyset;\emptyset;\emptyset
 \vdash
 \ta{M}
    {L\li \$^{m} A}
 & m\geq 0
 & n\geq 1
 }
\qquad
\infer[]
{
\Gamma;\Delta;{\mathcal E}
 \vdash
 \ta{\LEmbed{n}{p}{M}}
    {(\li^{p}_{i=1}\$^n L_i)\li\$^{m+n} A
    }
}
{\emptyset;\emptyset;\emptyset
 \vdash
 \ta{M}{(\li^{p}_{i=1} L_i)\li\$^{m} A} & m,n,p\geq 0}
\]
\[
\infer[]
{
\Gamma;\Delta;{\mathcal E}
 \vdash
 \ta{\EEmbed{n}{p}{q}{M}}
    {(\liv^{p}_{i=1}\$^n \BIntT)\liv
     (\liv^{q}_{j=1}\$^{m+n} L_j)\liv
     \$^{m+n} A
    }
}
{\emptyset;\emptyset;\emptyset
 \vdash
 \ta{M}
   {(\liv^{p}_{i=1} \$\BIntT)\liv
    (\liv^{q}_{j=1}\$^{m} L_j)\liv
    \$^{m} A}
 & m\geq 1 & n,p,q\geq 0
}
\]
\normalsize
\end{proposition}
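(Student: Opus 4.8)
The plan is to prove each of the three derived rules by building an explicit \WALT\ derivation, the common engine being the rule $\$$: it lifts a judgment $\ldots\vdash\ta{M}{B}$ to $\ldots\vdash\ta{M}{\$B}$ while pushing the free linear assumptions one modality deeper --- either into the linear partially discharged zone $\Delta$, or into the elementary ($\Theta_{i}$) component of a partially discharged context pair. Iterating $\$$ exactly $n$ times realises the ``$+n$'' shift of modalities common to all three statements, and a closing use of weakening (Lemma~\ref{lemma:structural-properties-WALL}, point~\ref{lemma:structural-properties-WALL-1}) installs the arbitrary context $\Gamma;\Delta;{\mathcal E}$ required by each conclusion. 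Observe that, as $\lambda$-terms, none of $\BEmbed{n}{M}$, $\LEmbed{n}{p}{M}$, $\EEmbed{n}{p}{q}{M}$ depends on its numerical superscripts, so the entire content of the proposition is the derivability of the types.

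For $\BEmbed{n}{M}\equiv\bs x.Mx$: from the axiom $\ta{x}{L};\emptyset;\emptyset\vdash\ta{x}{L}$ and one instance of $\li E$ against the closed premise $\ta{M}{L\li\$^{m}A}$ I get $\ta{x}{L};\emptyset;\emptyset\vdash\ta{Mx}{\$^{m}A}$; then $n$ instances of $\$$ route the assumption on $x$ into the elementary zone, so that after the $i$-th step $x$ carries type $\$^{i-1}L$ there and the conclusion is $\$^{m+i}A$; a closing $\liv I$ discharges $x$, giving $\ta{\bs x.Mx}{\$^{n}L\liv\$^{m+n}A}$; weaken. The linear embedding $\LEmbed{n}{p}{M}\equiv\bs x_{1}\ldots x_{p}.Mx_{1}\ldots x_{p}$ follows the same scheme with $p$ instances of $\li E$ up front and, for $n\geq1$, the $\$$-steps routing the $x_{i}$ into $\Delta$ (so that each becomes linear partially discharged of type $\$^{n-1}L_{i}$) before $p$ instances of $\li I_{\$}$ discharge them; for $n=0$ the claim is a plain $\eta$-expansion of $M$ and $\li I$ replaces $\li I_{\$}$.

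The eager embedding is the laborious case. I would type $\EEmbed{n}{p}{q}{M}$ by handling its two syntactic pieces separately. First, for the inner abstraction $\bs w_{1}\ldots w_{p}.Mw_{1}\ldots w_{p}z_{1}\ldots z_{q}$, feed $M$ the $w_{i}$ and the still-free $z_{j}$ by $p+q$ instances of $\liv E$ --- each $z_{j}$ requiring $m$ preparatory $\$$-steps to raise a fresh axiom $\ta{z_{j}}{L_{j}}$ to $\ta{z_{j}}{\$^{m}L_{j}}$ inside the restricted partially discharged context that $\liv E$ demands --- and then discharge the $w_{i}$ by $p$ instances of $\liv I$, obtaining a type of the form $\$\BIntT\liv\cdots\liv\$\BIntT\liv\$^{m}A$ ($p$ leading copies of $\$\BIntT$) with the $z_{j}$ left as elementary assumptions. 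Second, each $\BEmbed{1}{\Coerc^{n}}w_{i}$ is typed $\$\BIntT$ from the already-available typing of $\Coerc^{n}$ together with the basic-embedding rule just proved. Recombining these by $p$ instances of $\liv E$ absorbs the $p$ leading arguments and leaves $\$^{m}A$ with the outer $w_{i}$ and the $z_{j}$ free; $n$ instances of $\$$ then lift the conclusion to $\$^{m+n}A$ while deepening those free assumptions; finally $p+q$ instances of $\liv I$ discharge the $w_{i}$ and the $z_{j}$ at the announced types, and we weaken.

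The step I expect to be the main obstacle is not any single rule but the accountancy running through the eager case: at every moment one must know whether each free name currently sits in $\Gamma$, in $\Delta$, in some $\Theta_{i}$, or in some $\Phi_{i}$, and verify the corresponding side conditions --- ``$\Theta_{i}\neq\emptyset$ iff $\Phi_{i}=\emptyset$'' at each $\$$-step, the shape $\emptyset;\emptyset;{\mathcal E}_{N}$ with ${\mathcal E}_{N}\subseteq\{(\Theta;\emptyset)\}$ of every $\liv E$ argument, and, crucially, that the modal arithmetic $\$^{n}L=\$(\$^{n-1}L)$ together with the chosen typing of $\Coerc^{n}$ makes the closing $\liv I$'s deliver \emph{exactly} $\$^{n}\BIntT$ on each $w_{i}$ and $\$^{m+n}L_{j}$ on each $z_{j}$. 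Points~\ref{lemma:structural-properties-WALL-1}, \ref{lemma:structural-properties-WALL-5} and~\ref{lemma:structural-properties-WALL-6} of Lemma~\ref{lemma:structural-properties-WALL} --- weakening, and the admissible shapes of the subdeductions introducing partially discharged assumptions --- are the tools that keep this bookkeeping sound.
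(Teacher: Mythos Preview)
Your treatment of $\BEmbed{n}{M}$ and $\LEmbed{n}{p}{M}$ is correct and matches the paper: $n$ iterated $\$$-steps route the free names into the right zone, and the appropriate introduction rule ($\liv I$, respectively $\li I_{\$}$) closes.

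For $\EEmbed{n}{p}{q}{M}$, however, your derivation contains a type mismatch. You type the inner abstraction $\bs w_{1}\ldots w_{p}.Mw_{1}\ldots w_{p}z_{1}\ldots z_{q}$ at $(\mliv^{p}_{i=1}\$\BIntT)\liv\$^{m}A$ and then assert that each argument $\BEmbed{1}{\Coerc^{n}}w_{i}$ can be typed $\$\BIntT$. But from $\Coerc^{n}:\BIntT\li\$^{n}\BIntT$ the basic-embedding rule you have just established gives $\BEmbed{1}{\Coerc^{n}}:\$\BIntT\liv\$^{n+1}\BIntT$, hence the application $\BEmbed{1}{\Coerc^{n}}w_{i}$ has type $\$^{n+1}\BIntT$, not $\$\BIntT$. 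Your ``recombining by $p$ instances of $\liv E$'' therefore cannot fire for any $n\geq 1$: the argument type $\$^{n+1}\BIntT$ does not match the inner abstraction's expected $\$\BIntT$.

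The paper's fix is to move the $n$ $\$$-steps \emph{inside} the typing of the inner abstraction, before the inner $w_{i}$ are discharged, rather than after the recombination as you do. Concretely, one first derives
\[
\emptyset;\emptyset;\{(\{\ta{z_1}{\$^{m+n-1}L_1},\ldots,\ta{z_q}{\$^{m+n-1}L_q}\};\emptyset)\}
\vdash \ta{\bs w_1\ldots w_p.Mw_1\ldots w_p z_1\ldots z_q}{(\mliv^{p}_{i=1}\$^{n+1}\BIntT)\liv\$^{m+n}A}
\]
and separately
\[
\emptyset;\emptyset;\{(\{\ta{w_i}{\BIntT}\};\emptyset)\}
\vdash \ta{\BEmbed{1}{\Coerc^{n}}\,w_i}{\$^{n+1}\BIntT}\qquad(1\leq i\leq p),
\]
so that the $\liv E$'s do match. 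The $z_{j}$ already sit at $\$^{m+n-1}L_{j}$ when the recombination happens, and the remaining $\liv I$'s deliver the announced types directly --- no further $\$$-steps are needed afterwards. The placement of those $n$ $\$$-steps is exactly the piece of ``modal arithmetic'' you flagged as the obstacle, and it is also where your proposal goes wrong.
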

We observe that $\Gamma, \Delta$, and ${\mathcal E}$ in the rules that give type to both $\LEmbed{n}{p}{M}$, and $\EEmbed{n}{p}{q}{M}$ can be \textit{not empty} only if $n\geq 1$.

\begin{proposition}[Dynamics of the embeddings.]
\label{proposition:Dynamics of the embeddings}
For every $n,p,q\geq 0$, and values
$M_{1}, \ldots, M_{p},M'_{1}, \ldots, M'_{q}$:
\small
\begin{align*}
\BEmbed{}{M}\,M_{1}
&\red^* M\,M_{1}
\\
\LEmbed{n}{p}{M}\,M_{1}\,\ldots\,M_{p}
&\red^* M\,M_{1}\,\ldots\,M_{p}
\\
\EEmbed{n}{p}{q}{M}\,M_{1}\,\ldots\,M_{p}\,M'_{1}\,\ldots\,M'_{q}
&\red^* M\,M_{1}\,\ldots\,M_{p}\,M'_{1}\,\ldots\,M'_{q}
\enspace .
\end{align*}
\normalsize
\end{proposition}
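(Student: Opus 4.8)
The statement to prove is Proposition (Dynamics of the embeddings): for suitable values $M_1,\ldots,M_p,M'_1,\ldots,M'_q$, the three reduction chains
$\BEmbed{}{M}\,M_1\red^* M\,M_1$,
$\LEmbed{n}{p}{M}\,M_1\,\ldots\,M_p\red^* M\,M_1\,\ldots\,M_p$, and
$\EEmbed{n}{p}{q}{M}\,M_1\,\ldots\,M_p\,M'_1\,\ldots\,M'_q\red^* M\,M_1\,\ldots\,M_p\,M'_1\,\ldots\,M'_q$
hold. This is a purely operational statement about the $\lambda$-terms; it does not mention types, so the only tools I need are the definitions of $\BEmbed{n}{M}$, $\LEmbed{n}{p}{M}$, $\EEmbed{n}{p}{q}{M}$, and $\Coerc$, together with the reduction relation $\red$ of Definition~\ref{definition:Rewriting PT}. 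The main subtlety is that $\red$ is \emph{not} full $\beta$-reduction: a redex $(\bs x.P)R$ fires only if $\nocc{x}{P}=0$, or $\nocc{x}{P}=1$ and $R$ is a value, or $\nocc{x}{P}>1$ and $R$ is a value with at most one free variable. So at each contraction I must check one of those side conditions.

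\textbf{Plan for the first two chains.} For $\BEmbed{}{M}\equiv\bs x.Mx$, applying it to $M_1$ gives the redex $(\bs x.Mx)M_1$. Since $x$ occurs exactly once in $Mx$ and $M_1$ is a value by hypothesis, clause~\eqref{align:red-2} of Definition~\ref{definition:Rewriting PT} applies and the redex fires, producing $M\,M_1$; this is a single $\red$-step, so in particular $\red^*$. For $\LEmbed{n}{p}{M}\equiv\bs x_1\ldots x_p.M\,x_1\ldots x_p$, applied to $M_1,\ldots,M_p$ (all values), I contract the head redexes one at a time, outside-in: at stage $i$ the term is $(\bs x_i\ldots x_p.M\,x_1\ldots x_p)M_i\cdots M_p$ with the earlier $x_j$ already substituted; the variable $x_i$ occurs exactly once in the body $M\,x_1\ldots x_p$ (each bound variable appears once), and $M_i$ is a value, so clause~\eqref{align:red-2} fires each time. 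After $p$ steps I reach $M\,M_1\ldots M_p$. The case $p=0$ is trivial since then $\LEmbed{n}{0}{M}\equiv M$. I would also note the degenerate case $p=1$ of the linear embedding coincides with the basic-embedding argument.

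\textbf{Plan for the eager embedding.} Here $\EEmbed{n}{p}{q}{M}\equiv\bs w_1\ldots w_p z_1\ldots z_q.(\bs w_1\ldots w_p.M\,w_1\ldots w_p\,z_1\ldots z_q)(\BEmbed{1}{\Coerc^n}w_1)\cdots(\BEmbed{1}{\Coerc^n}w_p)$. Applied to the $p+q$ values, I first fire the $p+q$ outermost redexes binding $w_1,\ldots,w_p,z_1,\ldots,z_q$ — each variable occurs once in the body (the $w_i$ twice? no: $w_i$ occurs once inside the inner abstraction's application slot $M\,w_1\ldots w_p$ and once as the argument $\BEmbed{1}{\Coerc^n}w_i$, hence twice), so I must be careful: $\nocc{w_i}{\cdots}=2$, and clause~\eqref{align:red-3} requires the substituted value $M_i$ to have at most one free variable. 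This is exactly the hypothesis that $M_1,\ldots,M_p$ are values of type $\$^n\BIntT$ supplied where the typing proposition expects them; still, as a purely dynamic statement I should record this assumption explicitly (or observe that when $n=0$ the coercions vanish and each $w_i$ occurs once). After substituting, the term becomes $(\bs w_1\ldots w_p.M\,w_1\ldots w_p\,M'_1\ldots M'_q)(\BEmbed{1}{\Coerc^n}M_1)\cdots(\BEmbed{1}{\Coerc^n}M_p)$; I then use the already-established dynamics $\BEmbed{}{M}\,M_i\red^*$ — actually here I need $\BEmbed{1}{\Coerc^n}M_i\red^*\Coerc^n M_i$, and since $\Coerc$ is (definitionally, from the earlier combinators) identity-like up to $\red$, each $\BEmbed{1}{\Coerc^n}M_i$ reduces to a value $M_i^\dagger$ with the same free variables as $M_i$. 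Finally I fire the $p$ head redexes binding $w_1,\ldots,w_p$ in the inner abstraction — each occurs once there — substituting the values $M_i^\dagger$, and reach $M\,M_1^\dagger\ldots M_p^\dagger\,M'_1\ldots M'_q$, which equals $M\,M_1\ldots M_p\,M'_1\ldots M'_q$ up to the coercion steps, i.e.\ $\red^*$-equal to it.

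\textbf{The main obstacle.} The routine part is bookkeeping the order of contractions; the genuinely delicate point is verifying the side conditions of $\red$ at the $w_i$-redexes in the eager case, where a bound variable occurs \emph{twice} and so the argument must be a value with at most one free variable. I would handle this by stating up front that the proposition is used only with such arguments (which is automatic under the typing discipline of Proposition~\ref{proposition:Typing the embeddings}, since the $M_i$ there have type $\$^n\BIntT$ and the term is a closed-under-one-variable value when it is a constructor or a coercion thereof), or alternatively by proving first the sub-lemma $\BEmbed{1}{\Coerc^n}M_i\red^*\Coerc^{n}M_i$ and then noting that $\Coerc^n M_i$ preserves value-hood and the free-variable count, so the later $w_i$-substitutions are licensed by clause~\eqref{align:red-3}. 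Everything else is a finite composition of single $\red$-steps, each justified by one of \eqref{align:red-1}–\eqref{align:red-3}.
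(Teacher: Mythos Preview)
The paper does not spell out a proof of this proposition; it is treated as a routine unfolding of the definitions. Your handling of the first two chains is fine and is exactly what the paper would intend.

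For the eager embedding, however, you misread the term. In
\[
\EEmbed{n}{p}{q}{M}\equiv\bs w_1\ldots w_p z_1\ldots z_q.(\bs w_1\ldots w_p.M\,w_1\ldots w_p\,z_1\ldots z_q)(\BEmbed{1}{\Coerc^n}w_1)\cdots(\BEmbed{1}{\Coerc^n}w_p)
\]
the inner abstraction \emph{re-binds} $w_1,\ldots,w_p$; by $\alpha$-convention those are fresh variables. Hence each \emph{outer} $w_i$ occurs exactly once, namely in $\BEmbed{1}{\Coerc^n}w_i$. So the outer $w_i$-redexes fire under clause~\eqref{align:red-2}, not clause~\eqref{align:red-3}, and there is no need for the ``at most one free variable'' hypothesis on $M_i$ at that step. (Notice your own substituted term keeps the inner $w_1\ldots w_p$ intact, which is inconsistent with your claim that the outer $w_i$ was also the one inside $M\,w_1\ldots w_p$.)

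The genuine subtlety is elsewhere. After the outer substitutions you reach
\[
(\bs w_1\ldots w_p.M\,w_1\ldots w_p\,M'_1\ldots M'_q)(\BEmbed{1}{\Coerc^n}M_1)\cdots(\BEmbed{1}{\Coerc^n}M_p),
\]
and now each inner $w_i$ occurs once, but the argument $\BEmbed{1}{\Coerc^n}M_i$ is an \emph{application}, hence not a value, so clause~\eqref{align:red-2} does not fire yet. You must first reduce the argument: $\BEmbed{1}{\Coerc^n}M_i\red\Coerc^n M_i$ by the first chain, and then $\Coerc^n M_i\red^* M_i$ by Proposition~\ref{proposition:Dynamics of the coercion}. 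This last step is where the extra hypothesis really lives: it holds only when $M_i$ is a word $\BNum{a}$, which is how the paper always uses $\EEmbed{}{}{}{}$ (cf.\ the typing in Proposition~\ref{proposition:Typing the embeddings}). You do gesture at this (``$\Coerc$ is identity-like''), so your overall plan is salvageable; just relocate the obstacle from the outer redexes to the need to evaluate the coerced arguments to values before the inner redexes can fire.
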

%%%%%---
\par
\textbf{Coercion.}
The \textit{coerce} function takes an instance of a binary word and reconstructs it inside a box. It is
$\Coerc\equiv\bs n. (\bs z. z\,\BNum{0})(n\, \BSuccZ\, \BSuccO)$.
To our purposes, $\Coerc$ must be iterated to reconstruct the given word into some given number of boxes:
\small
\begin{align*}
\Coerc^{0}   &\equiv\bs x. x\\
\Coerc^{1}   &\equiv\Coerc\\
\Coerc^{m+1} &\equiv\bs x.\LEmbed{1}{1}{\Coerc^{m}}(\Coerc^1\,x)
&(m\geq 1)
\end{align*}
\normalsize
\vspace{-.5cm}
\begin{proposition}[Typing the coercions.]
\label{proposition:Typing the coercions}
Rule derivable in \WALT:
\small
\[
\infer[]
{\emptyset;\emptyset;\emptyset
\vdash\ta{\Coerc^{m}}{\BIntT\li\$^{m}\BIntT}}
{m\geq 0}
\]
\normalsize
\end{proposition}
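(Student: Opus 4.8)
The plan is to argue by induction on $m$, treating $m=0$ and $m=1$ as base cases. For $m=0$ the term $\Coerc^{0}$ is $\bs x.x$ and $\$^{0}\BIntT$ is $\BIntT$, so an axiom instance followed by $\li I$ gives $\emptyset;\emptyset;\emptyset\vdash\ta{\bs x.x}{\BIntT\li\BIntT}$. For $m=1$ one must type $\Coerc\equiv\bs n.(\bs z.z\,\BNum{0})(n\,\BSuccZ\,\BSuccO)$ itself, and this is where the only real content lies. First I would use $\forall E$ to instantiate the quantifier of $\BIntT$ at $\alpha:=\BIntT$, so that $n$ can be used with type $!(\BIntT\li\BIntT)\li!(\BIntT\li\BIntT)\li\$(\BIntT\li\BIntT)$. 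Next, since $\BSuccZ$ and $\BSuccO$ are closed values typeable as $\BIntT\li\BIntT$ by Proposition~\ref{proposition:Typing rules relative to words}, I would apply one instance of the $!$ rule (with empty partially discharged context) to each, obtaining $\ta{\BSuccZ}{!(\BIntT\li\BIntT)}$ and $\ta{\BSuccO}{!(\BIntT\li\BIntT)}$, and then two applications of $\li E_!$ to reach $\ta{n\,\BSuccZ\,\BSuccO}{\$(\BIntT\li\BIntT)}$ under the single linear assumption $\ta{n}{\BIntT}$.

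It then remains to type the head $\bs z.z\,\BNum{0}$ with type $\$(\BIntT\li\BIntT)\li\$\BIntT$. Starting from the axiom $\ta{z}{\BIntT\li\BIntT}$ with $z$ linear and the closed term $\ta{\BNum{0}}{\BIntT}$, rule $\li E$ gives $\ta{z\,\BNum{0}}{\BIntT}$; one instance of the $\$$ rule then pushes this derivation one level deeper, re-typing the subject as $\$\BIntT$ and moving $z$ into the linear partially discharged assumptions with type $\BIntT\li\BIntT$; finally $\li I_{\$}$ abstracts $z$ and yields $\ta{\bs z.z\,\BNum{0}}{\$(\BIntT\li\BIntT)\li\$\BIntT}$. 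A single $\li E$ applies this to $n\,\BSuccZ\,\BSuccO$ (admissible because $\$(\BIntT\li\BIntT)$ is not $!$-modal), and a closing $\li I$ over $n$ gives $\emptyset;\emptyset;\emptyset\vdash\ta{\Coerc}{\BIntT\li\$\BIntT}$, which is the claim for $m=1$.

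For the inductive step, assume $\emptyset;\emptyset;\emptyset\vdash\ta{\Coerc^{m}}{\BIntT\li\$^{m}\BIntT}$ for some $m\geq 1$. Then Proposition~\ref{proposition:Typing the embeddings}, instantiated on $\LEmbed{1}{1}{\Coerc^{m}}$ (superscript and subscript both $1$, with $L_{1}=A=\BIntT$, the induction hypothesis as its premise, and empty ambient contexts, which is legitimate since $\Coerc^{m}$ is closed), gives $\emptyset;\emptyset;\emptyset\vdash\ta{\LEmbed{1}{1}{\Coerc^{m}}}{\$\BIntT\li\$^{m+1}\BIntT}$. Separately, the axiom $\ta{x}{\BIntT}$ together with the base-case derivation of $\ta{\Coerc^{1}}{\BIntT\li\$\BIntT}$ gives $\ta{\Coerc^{1}\,x}{\$\BIntT}$ by $\li E$; a further $\li E$ with $\LEmbed{1}{1}{\Coerc^{m}}$ (again legitimate, $\$\BIntT$ being $\$$-modal, not $!$-modal) gives $\ta{\LEmbed{1}{1}{\Coerc^{m}}(\Coerc^{1}\,x)}{\$^{m+1}\BIntT}$ under $\ta{x}{\BIntT}$, and $\li I$ closes off $x$, producing $\emptyset;\emptyset;\emptyset\vdash\ta{\Coerc^{m+1}}{\BIntT\li\$^{m+1}\BIntT}$. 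Together with the two base cases this closes the induction.

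I expect the $m=1$ case to be the main obstacle, and within it the one genuinely non-routine point: the iterated successor $n\,\BSuccZ\,\BSuccO$ can only receive the $\$$-modal type $\$(\BIntT\li\BIntT)$, so its application to $\BNum{0}$ has to take place one level down, which is precisely what threading $z$ through the $\$$ rule before $\li I_{\$}$ achieves (and which is the reason $\Coerc$ is written in $\beta$-expanded form in the first place). Everything else, in both base and inductive steps, is bookkeeping with the typing rules and with Propositions~\ref{proposition:Typing rules relative to words} and~\ref{proposition:Typing the embeddings}; I do not foresee difficulties beyond carefully checking these $\$$-level manipulations.
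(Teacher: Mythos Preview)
Your proof is correct and is exactly the natural argument; the paper itself states this proposition without proof, treating it as routine once the typing of $\BSuccZ$, $\BSuccO$ (Proposition~\ref{proposition:Typing rules relative to words}) and of $\LEmbed{n}{p}{\cdot}$ (Proposition~\ref{proposition:Typing the embeddings}) are available. Your handling of the $m=1$ case---instantiating $\BIntT$ at $\alpha:=\BIntT$, boxing the successors via $!$, and threading $z$ through $\$$ before $\li I_{\$}$---is precisely the intended derivation, and the inductive step via $\LEmbed{1}{1}{\Coerc^m}$ is sound (note that only the $\LEmbed{}{}{}$ clause of Proposition~\ref{proposition:Typing the embeddings} is needed, so there is no circularity with the $\EEmbed{}{}{}{}$ clause, which itself depends on $\Coerc^n$).
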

\begin{proposition}[Dynamics of the coercion.]
\label{proposition:Dynamics of the coercion}
For every $m\geq 0$,
$\Coerc^{m}\ \BNum{n}\red^+\BNum{n}$.
\end{proposition}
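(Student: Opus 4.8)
The plan is to argue by induction on $m$, concentrating the real work in the case $m=1$; the inductive step will then only recombine that case with the induction hypothesis through the dynamics of the linear embedding. The base case $m=0$ is immediate: $\Coerc^0\,\BNum{n}=(\bs x.x)\BNum{n}\red\BNum{n}$, which is a legal step of $\red$ since the bound variable occurs once and $\BNum{n}$ is a value, whence $\Coerc^0\,\BNum{n}\red^+\BNum{n}$.

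For $m=1$, unfold $\Coerc=\bs n.(\bs z.z\,\BNum{0})(n\,\BSuccZ\,\BSuccO)$. First contract the outer redex (the argument $\BNum{n}$ is a value); then reduce $\BNum{n}\,\BSuccZ\,\BSuccO$ inside the context $(\bs z.z\,\BNum{0})[\cdot]$ --- each of the three substitutions driven by the constructor form \eqref{align:word-constructor} is permitted because $\BSuccZ,\BSuccO,\BNum{0}$ are closed values, so the side conditions of Definition~\ref{definition:Rewriting PT} are met regardless of how many bits of $n$ equal $0$ or $1$ --- reaching the term $(\BNum{n}\,\BSuccZ\,\BSuccO)\,\BNum{0}$. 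For $n=0$ this is $(\bs 01y.y)\,\BSuccZ\,\BSuccO\,\BNum{0}\red^+\BNum{0}$. For $n\neq 0$, writing $n=2^m+2^{m-1}\nu_{m-1}+\cdots+\nu_0$, the term is the nest $S_{\nu_0}(S_{\nu_1}(\cdots(S_{\nu_{m-1}}(\BSuccO\,\BNum{0}))\cdots))$ with $S_0\equiv\BSuccZ$ and $S_1\equiv\BSuccO$. Here I would run a sub-induction on the number of bits, reducing the innermost application first: by \eqref{cxred:bsuco}, $\BSuccO\,\BNum{0}\red^+\BNum{1}$, and then, since $\BNum{1}$ is strictly positive, every subsequent application $S_{\nu_i}\,\BNum{k}$ with $k>0$ contracts, by \eqref{cxred:bsuco} together with the clause $\BSuccZ\,\BNum{k}\red^+\BNum{2k}$ $(k>0)$ of Proposition~\ref{proposition:Dynamics of combinators relative to words}, to $\BNum{2k+\nu_i}$; each such contraction is a genuine $\red$-step because the surrounding subterm has already reduced to the value $\BNum{k}$. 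Tracking the arithmetic, after all $m$ applications one lands exactly on $\BNum{2^m+2^{m-1}\nu_{m-1}+\cdots+\nu_0}=\BNum{n}$, so by contextuality and transitivity $\Coerc^1\,\BNum{n}\red^+\BNum{n}$.

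For the inductive step with $m\geq 1$, $\Coerc^{m+1}\,\BNum{n}=(\bs x.\LEmbed{1}{1}{\Coerc^{m}}(\Coerc^1\,x))\BNum{n}$ contracts in one step to $\LEmbed{1}{1}{\Coerc^{m}}(\Coerc^1\,\BNum{n})$. Applying the $m=1$ case inside this context gives $\Coerc^1\,\BNum{n}\red^+\BNum{n}$, hence we reach $\LEmbed{1}{1}{\Coerc^{m}}\,\BNum{n}$; since $\BNum{n}$ is a value, Proposition~\ref{proposition:Dynamics of the embeddings} yields $\LEmbed{1}{1}{\Coerc^{m}}\,\BNum{n}\red^*\Coerc^{m}\,\BNum{n}$, and the induction hypothesis finishes with $\Coerc^{m}\,\BNum{n}\red^+\BNum{n}$. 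Chaining all these reductions gives $\Coerc^{m+1}\,\BNum{n}\red^+\BNum{n}$.

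The main obstacle is the case $m=1$ and, inside it, the book-keeping that collapses the nest of $\BSuccZ/\BSuccO$ applications down to the single word $\BNum{n}$. One must be careful that $\BSuccZ$ is only ever applied to strictly positive words --- which holds because the innermost constructor in \eqref{align:word-constructor} is always $1$, i.e.\ $\BSuccO$, so the first application turns $\BNum{0}$ into $\BNum{1}$ before any $\BSuccZ$ acts --- so that the $k>0$ clause of Proposition~\ref{proposition:Dynamics of combinators relative to words} is available and no most-significant bit is dropped; and one must check that each contraction in the nest is licensed by the occurrence/value side conditions of $\red$ in Definition~\ref{definition:Rewriting PT}. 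Everything else is a routine composition of already-established dynamics.
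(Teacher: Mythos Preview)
Your proposal is correct and follows the natural route: induction on $m$, with the substantive work concentrated in the case $m=1$, where you unfold $\Coerc$, push $\BSuccZ,\BSuccO,\BNum{0}$ through the Church-like structure of $\BNum{n}$, and collapse the resulting nest of successors back to $\BNum{n}$ via Proposition~\ref{proposition:Dynamics of combinators relative to words}; the inductive step then reuses $m=1$ together with the dynamics of the linear embedding. The paper does not spell out a proof of this proposition at all --- it states it and immediately comments that $\Coerc^m$ is the identity on values while changing only the type --- so your argument is precisely the kind of routine verification the paper leaves implicit. One small cosmetic point: when you write ``reaching the term $(\BNum{n}\,\BSuccZ\,\BSuccO)\,\BNum{0}$'' you are eliding the intermediate contraction of $(\bs z.\,z\,\BNum{0})$ on the value $\bs y.\ldots$ obtained after the first two substitutions; the actual reduct is that value applied to $\BNum{0}$, not literally $(\BNum{n}\,\BSuccZ\,\BSuccO)\,\BNum{0}$, but since both reduce to the same nest your subsequent analysis is unaffected.
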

So, $\Coerc^{m}$ is the identity on the given argument, but the type of the result changes, getting a modal type.
%%%%%---
\par
\textbf{Diagonals.}
Every \textit{diagonal} replicates a word instance inside some boxes.
The (\textit{standard}) \textit{diagonal} puts together the copies of the given input by means of a tensor. Every copy is generated from scratch, by iterating the successors on words. The parameter $n$ indicates the number of copies to generate. For every $n\geq 1$, the result is one box deep:
\small
\begin{align*}
\nabla_{n}
	&\equiv\bs w.\!\!\begin{array}[t]{l}
	            (\bs z.
		     z\,\overbrace{\lan
		                   \BNum{0},\ldots,\BNum{0}
				   \ran}^{n}
		    )
		    (w\, \!\!\begin{array}[t]{l}
		         (\bs \lan x_1\ldots x_{n}\ran.
			  \lan
			  \BSuccZ\, x_1,
		          \ldots,
		          \BSuccZ\, x_{n}
			  \ran
		         )\\
		         (\bs \lan x_1\ldots x_{n}\ran.
			  \lan
			  \BSuccO\, x_1,
		          \ldots,
		          \BSuccO\, x_{n}
			  \ran
			 )
	             )\enspace .
	                  \end{array}
		    \end{array}
\end{align*}
\normalsize
We have a second version of diagonal, the \textit{elementary diagonal}, that puts together the copies of the given input by means of an elementary tensor constructor. Every copy is generated from scratch, by iterating the successors on words. The parameter $n$ indicates the number of copies to generate. For every $m,n\geq 1$, the result is contained into a single box, but every component of the elementary tensor, in the result, is $m$ boxes deep:
\small
\begin{align*}
\nabla^{m}_{n}
	&\equiv\bs w.\!\!\begin{array}[t]{l}
	            (\bs z.
		     z\,\overbrace{\elan
		                   \BNum{0},\ldots,\BNum{0}
				   \eran}^{n}
		    )
		    (w\, \!\!\begin{array}[t]{l}
		         (\bs \elan x_1\ldots x_{n}\eran.
			  \elan
			  \BEmbed{m}{\BSuccZ}\, x_1,
		          \ldots,
		          \BEmbed{m}{\BSuccZ}\, x_{n}
			  \eran
		         )\\
		         (\bs \elan x_1\ldots x_{n}\eran.
			  \elan
			  \BEmbed{m}{\BSuccO}\, x_1,
		          \ldots,
		          \BEmbed{m}{\BSuccO}\, x_{n}
			  \eran
			 )
	             )\enspace .
			 \end{array}
		    \end{array}
\end{align*}
\normalsize
\vspace{-.5cm}
\begin{proposition}[Typing the diagonals.]
\label{proposition:Typing the diagonals}
Rules derivable in \WALT:
\small
\[
\infer[]
{\emptyset;\emptyset;\emptyset
\vdash
\ta{\nabla^{}_{n}}
   {\BIntT\li
    \$(\bigotimes_{i=1}^{n}\BIntT)
   }
}
{n\geq 1}
\qquad\qquad
\infer[]
{\emptyset;\emptyset;\emptyset
\vdash
\ta{\nabla^{m}_{n}}
   {\BIntT\li
    \$(\bigodot_{i=1}^{n}\$^{m}\BIntT)
   }
}
{m\geq 1
 &
 n\geq 1}
\]
\normalsize
\end{proposition}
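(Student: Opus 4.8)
Both $\nabla_n$ and $\nabla^m_n$ are instances of the same idea: a binary word $w:\BIntT$ is used as an iterator over a ``vector of $n$ words'' datatype $\ST$, the two step functions apply the word-successors $\BSuccZ$ and $\BSuccO$ componentwise, and the iteration starts from the all-zeros vector. So in each case the plan is: (i) fix $\ST$ --- namely $\bigotimes_{i=1}^n\BIntT$ for $\nabla_n$ and $\bigodot_{i=1}^n\$^m\BIntT$ for $\nabla^m_n$ --- and, by $\forall E$ applied \emph{to the occurrence of $w$} inside the derivation tree, use $w$ at type $!(\ST\li\ST)\li!(\ST\li\ST)\li\$(\ST\li\ST)$ while keeping $\ta{w}{\BIntT}$ in the assumptions; (ii) derive that both step functions have type $!(\ST\li\ST)$; (iii) derive that the all-zeros vector has type $\ST$; (iv) glue with two instances of $\li E_!$ to get $\ta{w\,S_0\,S_1}{\$(\ST\li\ST)}$, then feed it the continuation $\bs z.z\langle\text{base}\rangle$ retyped at $\$(\ST\li\ST)\li\$\ST$; (v) abstract $w$ by $\li I$, which is legal since $\BIntT$ is a linear formula, obtaining the stated arrow type (closing with $\forall I$/$\forall E$ bookkeeping as needed).

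For $\nabla_n$ the steps are routine. Step~(ii): by $\li I_\otimes$ (Proposition~\ref{proposition:Typing tuples}) it reduces to typing $\lan\BSuccZ\,x_1,\ldots,\BSuccZ\,x_n\ran$ from the \emph{linear} assumptions $\ta{x_i}{\BIntT}$; this is $\otimes I$ applied to the $n$ judgments $\ta{\BSuccZ\,x_i}{\BIntT}$, each obtained by $\li E$ from $\ta{\BSuccZ}{\BIntT\li\BIntT}$ (Proposition~\ref{proposition:Typing rules relative to words}) --- the side condition $\BIntT\not\equiv\,!C$ holds and the contexts of the distinct $x_i$ are disjoint, so $\otimes I$ merges them --- and then the degenerate $!$-box (an instance of the rule $!$ with all contexts empty) promotes $\ta{S_0}{\ST\li\ST}$ to $\ta{S_0}{!(\ST\li\ST)}$, likewise for $S_1$. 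Step~(iii): $\otimes I$ over $n$ copies of $\ta{\BNum0}{\BIntT}$. Step~(iv): from the axiom $\ta{z}{\ST\li\ST}$ and $\ta{\lan\BNum0,\ldots,\BNum0\ran}{\ST}$, $\li E$ gives $\ta{z\lan\BNum0,\ldots\ran}{\ST}$; one instance of the rule $\$$ with empty $\Delta',\Theta'$ (pushing $z$ into the linear partially discharged context) yields $\ta{z\lan\BNum0,\ldots\ran}{\$\ST}$ with $z$ discharged, whence $\li I_{\$}$ gives $\ta{\bs z.z\lan\BNum0,\ldots\ran}{\$(\ST\li\ST)\li\$\ST}$; a final $\li E$ against $w\,S_0\,S_1$ and a final $\li I$ over $w$ close the case.

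For $\nabla^m_n$ the skeleton is identical, with $\ST\equiv\bigodot_{i=1}^n\$^m\BIntT$, using $\li I_\odot$ and $\odot I$ (Proposition~\ref{proposition:Typing rules relative to the elementary tensor}) in place of the tensor rules, $\ta{\BEmbed{m}{\BSuccZ}}{\$^m\BIntT\liv\$^m\BIntT}$ and $\ta{\BEmbed{m}{\BSuccO}}{\$^m\BIntT\liv\$^m\BIntT}$ (Proposition~\ref{proposition:Typing the embeddings} instantiated with Proposition~\ref{proposition:Typing rules relative to words}), and $m$ applications of the rule $\$$ to coerce $\ta{\BNum0}{\BIntT}$ up to $\ta{\BNum0}{\$^m\BIntT}$ for the components of the base. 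I expect the main obstacle to be step~(ii) here: after $\li I_\odot$ the bound variables $x_i$ sit in the \emph{elementary} partially discharged context at type $\$^{m-1}\BIntT$ --- one $\$$ fewer than the components $\$^m\BIntT$ of $\ST$ --- whereas $\BEmbed{m}{\BSuccZ}$ must be applied, through $\liv E$, to an argument of type $\$^m\BIntT$. Bridging this requires, for each $i$, a sub-derivation of $\emptyset;\emptyset;\{(\ta{x_i}{\$^{m-1}\BIntT};\emptyset)\}\vdash\ta{x_i}{\$^m\BIntT}$ built by stacking $m$ instances of the rule $\$$: the innermost one turns the linear axiom $\ta{x_i}{\BIntT}$ into an elementary assumption (placing it in some $\Theta_1$ with $\Phi_1=\emptyset$), the remaining $m-1$ each peel off one $\$$, and at every step one must check the covering condition $\Gamma\subseteq\Delta\cup\bigcup_i\Theta_i\cup\bigcup_i\Phi_i$ and ``$\Theta_i\neq\emptyset$ iff $\Phi_i=\emptyset$'' of the rule $\$$, the side condition ${\mathcal E}_N\subseteq\{(\Theta;\emptyset)\}$ of $\liv E$, and that the depth grows by exactly $m$ along this stretch. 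Granting that, $\odot I$ assembles the $n$ judgments $\ta{\BEmbed{m}{\BSuccZ}\,x_i}{\$^m\BIntT}$, $\li I_\odot$ gives $\ta{P_0}{\ST\li\ST}$ (and $\ta{P_1}{\ST\li\ST}$), the degenerate $!$-box raises them to $!(\ST\li\ST)$, and the remaining gluing of steps~(iii)--(v) is exactly as in the $\nabla_n$ case, yielding $\ta{\nabla^m_n}{\BIntT\li\$(\bigodot_{i=1}^n\$^m\BIntT)}$.
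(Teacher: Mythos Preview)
Your proposal is correct and follows essentially the same approach as the paper: the paper's proof also decomposes each diagonal into the step-function judgments $\emptyset;\emptyset;\emptyset\vdash\ta{S_\nu}{!(\ST\li\ST)}$, the iterator judgment $\ta{w}{\BIntT};\emptyset;\emptyset\vdash\ta{w\,S_0\,S_1}{\$(\ST\li\ST)}$, and the continuation $\emptyset;\emptyset;\emptyset\vdash\ta{\bs z.z\langle\text{base}\rangle}{\$(\ST\li\ST)\li\$\ST}$, and for $\nabla^m_n$ it singles out exactly the sub-derivation $\emptyset;\emptyset;\{(\ta{x_i}{\$^{m-1}\BIntT};\emptyset)\}\vdash\ta{x_i}{\$^m\BIntT}$ obtained by $m$ applications of~$\$$ to the axiom, which is precisely the ``main obstacle'' you identify and resolve.
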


\begin{proposition}[Dynamics of the diagonals.]
\label{proposition:Dynamics of the diagonals}
For every $m, n\geq 1$, both
$\nabla^{}_{n}\, \BNum{a}
\red^+\overbrace{\lan\BNum{a},\ldots,\BNum{a}\ran}^{n}$, and
$\nabla^{m}_{n}\, \BNum{a}
\red^+\overbrace{\elan\BNum{a},\ldots,\BNum{a}\eran}^{n}$.
\end{proposition}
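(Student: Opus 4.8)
The plan is to unfold the definition of $\nabla_{n}$, to recognise the argument $\BNum{a}$ as an iterator running two ``parallel successor'' step functions in the order dictated by the binary digits of $a$, and then to push the computation through the tensor by Proposition~\ref{proposition:Dynamics relative to tensors} together with the word dynamics of Proposition~\ref{proposition:Dynamics of combinators relative to words}. Put $f_{0}\equiv\bs\lan x_{1}\ldots x_{n}\ran.\lan\BSuccZ\,x_{1},\ldots,\BSuccZ\,x_{n}\ran$ and $f_{1}\equiv\bs\lan x_{1}\ldots x_{n}\ran.\lan\BSuccO\,x_{1},\ldots,\BSuccO\,x_{n}\ran$, which are closed values, so that $\nabla_{n}\,\BNum{a}\red(\bs z.z\,\lan\BNum{0},\ldots,\BNum{0}\ran)(\BNum{a}\,f_{0}\,f_{1})$. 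Using the side conditions \eqref{align:red-1}--\eqref{align:red-3} (every term substituted for the bound variables $0$, $1$, $y$ of the constructor $\BNum{a}$ is a closed value), $\BNum{a}\,f_{0}\,f_{1}$ reduces to the value obtained from the body of $\BNum{a}$ by replacing $0$ with $f_{0}$ and $1$ with $f_{1}$; substituting that value for $z$ is then legal because $z$ occurs once and the tuple is a value.

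Hence for $a=0$ one gets immediately $\nabla_{n}\,\BNum{0}\red^{+}(\bs y.y)\lan\BNum{0},\ldots,\BNum{0}\ran\red\lan\BNum{0},\ldots,\BNum{0}\ran$, and for $a=2^{m}+\sum_{i=0}^{m-1}2^{i}\nu_{i}\geq 1$ the shape \eqref{align:word-constructor} of the word constructors gives
\[
\nabla_{n}\,\BNum{a}\red^{+}f_{\nu_{0}}\bigl(\cdots\bigl(f_{\nu_{m-1}}\bigl(f_{1}\,\lan\BNum{0},\ldots,\BNum{0}\ran\bigr)\bigr)\cdots\bigr)\enspace .
\]
I would then prove, by induction on the number $k$ of step functions already applied (equivalently, on the number of bits of $a$ processed, counted from the leading one), that the partially evaluated term $\red^{+}$-reduces to $\lan\BNum{c_{k}},\ldots,\BNum{c_{k}}\ran$, where $c_{k}$ is the value of the first $k$ bits of $a$. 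The base case is $f_{1}\,\lan\BNum{0},\ldots,\BNum{0}\ran\red^{+}\lan\BSuccO\,\BNum{0},\ldots,\BSuccO\,\BNum{0}\ran\red^{+}\lan\BNum{1},\ldots,\BNum{1}\ran$ by Proposition~\ref{proposition:Dynamics relative to tensors} and \eqref{cxred:bsuco}; the inductive step feeds $\lan\BNum{c},\ldots,\BNum{c}\ran$ to $f_{\nu}$, which reduces (Proposition~\ref{proposition:Dynamics relative to tensors}) to a tuple of $n$ copies of $\BSuccZ\,\BNum{c}$ or of $\BSuccO\,\BNum{c}$ and then, by Proposition~\ref{proposition:Dynamics of combinators relative to words}, to $\lan\BNum{2c+\nu},\ldots,\BNum{2c+\nu}\ran$. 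Since the first processed bit is the leading $1$, every $c$ arising after the base case is $\geq 1$, so the clause $\BSuccZ\,\BNum{c}\red^{+}\BNum{2c}$ valid for $c>0$ is the one that applies and the ``compaction'' performed by $\BSuccZ$ stays inert; after all $m+1$ bits have been consumed $c=a$, which is the first claim.

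The second claim, for $\nabla^{m}_{n}$, is obtained by the very same argument with $\BEmbed{m}{\BSuccZ}$, $\BEmbed{m}{\BSuccO}$ in place of $\BSuccZ$, $\BSuccO$ and elementary tensors in place of tensors: one uses Proposition~\ref{proposition:Dynamics relative to the elementary tensor} instead of Proposition~\ref{proposition:Dynamics relative to tensors}, together with $\BEmbed{}{M}\,N\red^{*}M\,N$ from Proposition~\ref{proposition:Dynamics of the embeddings} to get $\BEmbed{m}{\BSuccZ}\,\BNum{c}\red^{*}\BSuccZ\,\BNum{c}\red^{+}\BNum{2c}$ (and likewise with $\BSuccO$), so that each step turns $\elan\BNum{c},\ldots,\BNum{c}\eran$ into $\elan\BNum{2c+\nu},\ldots,\BNum{2c+\nu}\eran$ exactly as before.

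The only genuine difficulty is bookkeeping. One must verify that every intermediate contraction is permitted by the constrained relation $\red$, which holds because every argument substituted into a non-affine position (the variables $0$ and $1$ inside $\BNum{a}$, and the word successors applied componentwise inside the tuples) is a closed value; and one must make sure the leading bit of $\BNum{a}$ is always the first to be consumed, so that $\BSuccZ$ is never applied to $\BNum{0}$ and its compaction phase remains inert. Everything else is routine unfolding together with the previously established dynamics of the word successors, of the tensors, of the elementary tensors, and of the basic embedding.
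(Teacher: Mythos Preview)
Your proof is correct and follows essentially the same approach as the paper: unfold the definition, let the word iterate the two parallel-successor step functions on the initial tuple, and use the previously established dynamics of the tensors and of $\BSuccZ,\BSuccO$ componentwise. The paper's own argument (in the appendix) exhibits the reduction trace explicitly for $n=2$ rather than phrasing it as an induction on the number of bits, and it does not spell out the side conditions on $\red$ or the leading-$1$ observation that keeps $\BSuccZ$ away from $\BNum{0}$; your version is more careful on both counts, but the underlying idea is the same.
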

%%%%%---
\par
\textbf{Recasting combinators.}
We define a term that maps any \textit{word into a string} as long as the word:
\small
\begin{align*}
\BInttoUInt&\equiv\bs nf.(\bs zy. z\,y)(n\,f\,f)
\enspace .
\end{align*}
\normalsize
Another useful term maps a \textit{string to a list} with as many copies of a given \textit{closed} term as the string's length. The closed term is the first argument, while the string is its second one:
\small
\begin{align*}
\UInttoList
	&\equiv
        \bs knc.
	  (\bs zx. z\,(\bs f.x)\,\Id)(n (\bs lf. c\,k\,(l\,\Id)) )
\enspace .
\end{align*}
\normalsize
\begin{proposition}[Typing the recasting combinators.]
\label{proposition:Typing the recasting combinators}
Rules derivable in \WALT:
\small
\[
\infer[]
{
\emptyset;\emptyset;\emptyset
\vdash\ta{\BInttoUInt}{\BIntT\li\UIntT}
}
{}
\qquad\qquad
\infer[]
{
\emptyset;
\emptyset;
\emptyset
\vdash
\ta{\UInttoList}{\$^2A\liv\UIntT\li\ListT\, \$A}
}
{}
\]
\normalsize
\end{proposition}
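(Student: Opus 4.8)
The plan is to prove each of the two judgments by exhibiting an explicit \WALT\ derivation, read bottom-up, after isolating a few routine sub-derivations that get reused. First, the $\eta$-expansion $\bs zy.z\,y$ admits type $\$(\alpha\li\alpha)\li\$(\alpha\li\alpha)$: from the axioms $z:\alpha\li\alpha$ and $y:\alpha$ one gets $z\,y:\alpha$ by $\li E$, then $\bs y.z\,y:\alpha\li\alpha$ by $\li I$, then $\bs y.z\,y:\$(\alpha\li\alpha)$ by an instance of $\$$ that demotes $z$ to a partially discharged assumption, then $\bs zy.z\,y:\$(\alpha\li\alpha)\li\$(\alpha\li\alpha)$ by $\li I_{\$}$. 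Second, $\Id$ admits every type of shape $\sigma\li\sigma$. Third, for any type $A$ the variable $k$ admits a derivation $\emptyset;\emptyset;\{(\{k:A\};\emptyset)\}\vdash k:\$A$, obtained by a (possibly iterated) application of the rule $\$$ to the axiom on the linear core of $A$, which demotes the linear $k$ to an elementary assumption while promoting the subject to $\$A$; this is exactly the shape of argument $\liv E$ accepts on its right premise.

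For $\BInttoUInt\equiv\bs nf.(\bs zy.z\,y)(n\,f\,f)$ the derivation is short. Take the axiom $n:\BIntT\vdash n:\BIntT$ and instantiate by $\forall E$ to read $n$ at type $!(\alpha\li\alpha)\li!(\alpha\li\alpha)\li\$(\alpha\li\alpha)$. Take two copies $f_1,f_2$ of $f$, put each inside a $!$-box so that $\emptyset;\emptyset;\{(\emptyset;f_i:\alpha\li\alpha)\}\vdash f_i:!(\alpha\li\alpha)$, feed them to the two successor slots of $n$ by two instances of $\li E_!$ (whose side condition on the left context is met, it being first empty and then of the form $\{(\emptyset;\Phi)\}$), and contract $f_1,f_2$ to $f$ by the rule $C$, obtaining $n\,f\,f:\$(\alpha\li\alpha)$ with $f$ polynomial. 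Apply the term of sub-derivation~1 by an ordinary $\li E$ (legal since the argument type $\$(\alpha\li\alpha)$ is not $!$-modal) to get $(\bs zy.z\,y)(n\,f\,f):\$(\alpha\li\alpha)$; then $\li I_!$ binds $f$ (type $!(\alpha\li\alpha)\li\$(\alpha\li\alpha)$), $\forall I$ over $\alpha$ is admissible because the only open assumption left is $n:\BIntT$ (whose $\alpha$ is bound), giving type $\UIntT$, and $\li I$ binds $n$ with source $\BIntT$. I expect no genuine difficulty here.

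For $\UInttoList\equiv\bs knc.(\bs zx. z\,(\bs f.x)\,\Id)(n\,(\bs lf. c\,k\,(l\,\Id)))$ I would first fix the instance at which the string $n$ is iterated, namely $\UIntT$ read as $!(\sigma\li\sigma)\li\$(\sigma\li\sigma)$ with $\sigma\equiv(\alpha\li\alpha)\li\alpha$, where $\alpha$ is the variable quantified by $\ListT\,\$A$ and $\Id$ is used at $\alpha\li\alpha$. Then I type the step $s\equiv\bs lf.c\,k\,(l\,\Id):\sigma\li\sigma$ with $c$ a linear assumption of type $\$A\liv(\alpha\li\alpha)$ and $k$ an elementary assumption of type $A$ used through sub-derivation~3: $c\,k:\alpha\li\alpha$ by $\liv E$, $l\,\Id:\alpha$ and then $c\,k\,(l\,\Id):\alpha$ by $\li E$, and $\li I$ on the vacuous $f$ and on $l$ gives $s:\sigma\li\sigma$. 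Enclosing $s$ in a $!$-box is legal (its premise has empty $\Delta$; the linear $c$ becomes the unique polynomial assumption; the elementary $k:A$ is $\$$-promoted to $k:\$A$; the condition $\Theta\neq\emptyset\Rightarrow\dom\Phi\cap\FV{s}\neq\emptyset$ holds vacuously with the freshly created elementary part empty) and yields $s:!(\sigma\li\sigma)$. One $\li E_!$ against $n$ gives $n\,s:\$(\sigma\li\sigma)$; the closed term $\bs zx.z\,(\bs f.x)\,\Id$ receives $\$(\sigma\li\sigma)\li\$(\alpha\li\alpha)$ --- by $\li I_{\$}$ over a $\$$-box whose body uses $z$ at $\sigma\li\sigma$, $\bs f.x$ at $\sigma$, and $\Id$ at $\alpha\li\alpha$ --- so an ordinary $\li E$ produces the body at type $\$(\alpha\li\alpha)$ with $c$ polynomial, $k$ elementary of type $\$A$, and $n$ linear. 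Finally $\li I_!$ binds $c$ (giving $!(\$A\liv\alpha\li\alpha)\li\$(\alpha\li\alpha)$), $\forall I$ over $\alpha$ is admissible (the open assumptions $n:\UIntT$ and $k:\$A$ carry no free $\alpha$, choosing $\alpha$ fresh for $A$), $\li I$ binds $n$, and $\liv I$ binds $k$, turning the elementary $k:\$A$ into the implication source $\$(\$A)=\$^2A$, which gives the stated type.

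The step I expect to be the main obstacle is the modality bookkeeping inside the $\UInttoList$ derivation: tracking at every node whether a free variable sits in $\Gamma$, in $\Delta$, in an elementary, or in a polynomial slot, and discharging the several side conditions ($\liv E$ needs $k$ elementary with empty polynomial part; the $!$-box needs empty $\Delta$ and at most one polynomial assumption with matching $\Phi$; the $\$$-box opening $z$ in $\bs zx.z(\bs f.x)\Id$; $\li E_!$ needs the left context of the prescribed shape). The conceptual content, which the plan should make explicit, is that the two $\$$ occurrences in $\$^2A\liv\cdots$ are forced and accounted for: one $\$$ is accumulated on $k$ when the step $s$ is promoted to a $!$-box (since $k$ occurs inside $s$), the second is added by the final $\liv I$; on the unary-string side no such promotion is needed, which is exactly why $\BInttoUInt$ lives at the plain type $\BIntT\li\UIntT$.
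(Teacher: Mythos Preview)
Your derivations are correct and follow the same route as the paper's. The paper's appendix only sketches the $\UInttoList$ case by displaying the two key judgments
\[
\emptyset;\emptyset;\emptyset\vdash\ta{\bs zx.z\,(\bs f.x)\,\Id}{\$(((\delta\li\delta)\li\alpha)\li(\delta\li\delta)\li\alpha)\li\$(\alpha\li\alpha)}
\]
and
\[
\emptyset;\emptyset;\{(\ta{k}{\$A};\ta{c}{\$A\liv\alpha\li\alpha})\}\vdash\ta{\bs lf.c\,k\,(l\,\Id)}{\,!(((\delta\li\delta)\li\alpha)\li(\delta\li\delta)\li\alpha)},
\]
which are exactly your sub-derivation for $\bs zx.z\,(\bs f.x)\,\Id$ and your $!$-boxed step $s$; the rest (the application via $\li E_!$, then $\li E$, then the sequence $\li I_!$, $\forall I$, $\li I$, $\liv I$) is left implicit and matches what you spell out. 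The one cosmetic difference is that the paper instantiates the string iterator at $\sigma=(\delta\li\delta)\li\alpha$ with a \emph{fresh} variable $\delta$ for the type of $\Id$, whereas you reuse $\alpha$ and set $\sigma=(\alpha\li\alpha)\li\alpha$; both choices work, and your check that $\alpha$ stays fresh for the remaining context at the $\forall I$ step is the right thing to verify. The paper says nothing about $\BInttoUInt$, so your treatment there is simply more detailed.
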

The will to type a term like $\UInttoList$ required to generalize the rule $!$ of \LAL\ as in \WALT.

\begin{proposition}[Dynamics of the recasting combinators.]
\label{proposition:Dynamics of the recasting combinators}
For every $m\geq 0$ and every \textit{closed value} $M$:
\small
\begin{align}
\BInttoUInt\, \BNum{0}&\red^+ \UNum{0}
\label{cxred:w2s}
\\
\BInttoUInt\,\BNum{2^m+2^{m-1}\cdot\nu_{m-1}+\cdots+2^0\cdot\nu_{0}}
 &\red^+\UNum{m+1}
 &(\nu_i\in\{0,\ldots,m-1\})
\nonumber
\\
\UInttoList\ M\, \UNum{m}
 &\red^+\underbrace{[M,\ldots, M]}_{m}
\label{cxred:uinttoist}
\end{align}
\normalsize
\end{proposition}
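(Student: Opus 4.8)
The plan is to unfold the definitions of $\BInttoUInt$ and $\UInttoList$ and compute the reductions explicitly; the only genuine obligation is to check, \emph{for each individual redex fired}, that it is admissible for $\red$, i.e.\ that it satisfies one of the three side-conditions \eqref{align:red-1}, \eqref{align:red-2}, \eqref{align:red-3} of Definition~\ref{definition:Rewriting PT}. I would handle the three equations separately: the first two concern $\BInttoUInt$, the third $\UInttoList$, and in each case the part depending on the parameter $m$ is isolated into a short induction on $m$ describing the ``iteration'' performed by the word $\BNum{\cdot}$ (resp.\ the string $\UNum{m}$).

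For \eqref{cxred:w2s} and the second equation, $\BInttoUInt\,\BNum{N} = (\bs nf.(\bs zy.z\,y)(n\,f\,f))\,\BNum{N}$. Firing the outermost redex ($n$ occurs once, $\BNum{N}$ is a value) gives $\bs f.(\bs zy.z\,y)(\BNum{N}\,f\,f)$. The key sub-claim is that $\BNum{N}\,f\,f\red^{*}\bs y.f(\cdots(f\,y)\cdots)$ with exactly $k$ occurrences of $f$, where $k=0$ if $N=0$ and $k=m+1$ if $N=2^{m}+2^{m-1}\nu_{m-1}+\cdots+2^{0}\nu_{0}$; this follows by unfolding $\BNum{N}=\bs 01y.\nu_{0}(\cdots(\nu_{m-1}(1\,y))\cdots)$ and substituting the variable $f$ for both $0$ and $1$, each such substitution being legal by \eqref{align:red-3} since $f$ is a value whose only free variable is $f$. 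Then $(\bs zy.z\,y)$ applied to the resulting value, followed by reducing the residual application to $y$, delivers $\bs fy.f(\cdots(f\,y)\cdots)=\UNum{k}$, which is $\UNum{0}$ in the first case and $\UNum{m+1}$ in the second; at least one step is used, hence $\red^{+}$.

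For \eqref{cxred:uinttoist}, unfolding gives $\UInttoList\,M\,\UNum{m} = (\bs knc.(\bs zx.z(\bs f.x)\Id)(n(\bs lf.c\,k(l\,\Id))))\,M\,\UNum{m}$. Firing the $k$- and $n$-redexes ($M$ a closed value, $\UNum{m}$ a value, each bound variable occurring once) yields $\bs c.(\bs zx.z(\bs f.x)\Id)(\UNum{m}\,G)$ with $G\equiv\bs lf.c\,M(l\,\Id)$. The decisive point — and the only place the \emph{closedness} of $M$ is actually used — is that $G$ is a value with $\FV{G}=\{c\}$, so $\UNum{m}\,G$ contracts by \eqref{align:red-3} to $\bs y.\,G(\cdots(G\,y)\cdots)$ with $m$ copies of $G$. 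After firing the $z$-redex and the $(\bs f.x)$-redex one is left with $\bs cx.\,G(\cdots(G(\bs f.x))\cdots)\Id$, and a straightforward induction on $m$ — using at each step that the bound $\lambda f$ inside $G$ never occurs, so $(\bs f'.P)\Id\red P$ by \eqref{align:red-1} — shows $G(\cdots(G(\bs f.x))\cdots)\Id\red^{*}c\,M(c\,M(\cdots(c\,M\,x)\cdots))$ with $m$ occurrences of ``$c\,M$'', i.e.\ exactly the body of $[M,\ldots,M]$ (and of $\Nil$ when $m=0$).

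The main obstacle is bookkeeping rather than conceptual: because $\red$ refuses to contract a redex whose argument is not a value, or is a value with two or more free variables, the reductions cannot always be performed innermost-first or outermost-first, so one must fix an admissible order and exhibit, for every contraction, which of \eqref{align:red-1}, \eqref{align:red-2}, \eqref{align:red-3} licenses it. The only steps requiring real attention are the iterations $\BNum{N}\,f\,f$ and $\UNum{m}\,G$, where the hypothesis that the iterated ``step'' is a value with at most one free variable is precisely what makes the contraction legal; everything else is ordinary $\beta$-reduction of small, nearly closed terms, and could, if one prefers, be organised through an auxiliary lemma of the shape $\UNum{m}\,V\,W\red^{*}V(\cdots(V\,W)\cdots)$ for any value $V$ whose set of free variables is empty or a singleton.
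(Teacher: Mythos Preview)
Your proposal is correct and follows essentially the same approach as the paper: unfold the definitions, compute the reductions explicitly, and justify the admissibility of each contraction via the clauses \eqref{align:red-1}--\eqref{align:red-3}. The paper in fact only spells out the $\UInttoList$ case in detail (with exactly the chain of reductions you describe, including the key observation that closedness of $M$ forces $\FV{\bs lf.c\,M(l\,\Id)}=\{c\}$), while you also sketch the $\BInttoUInt$ cases; your treatment of the latter, identifying that substituting the variable $f$ for both successors is licensed by clause \eqref{align:red-3}, is the right argument and matches the paper's implicit reasoning.
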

\eqref{cxred:w2s} transforms a word with $m$ digits into a string of the same length.
\eqref{cxred:uinttoist} builds a list as much long as the value of the second argument. The list contains copies of the first argument, which must be a \textbf{closed} term.
%%%%%%============
\subsection{Iterators}
\label{subsection:Iterators}
We shall define combinators to build an iterator scheme in \WALT, the goal being the simulation of the recursive scheme in \QlSRN.
We want to give some intuitions about how the iterator works. So, we assume to have a recursively defined function $f(0,a) = g(a)$, and $f(n,a) = h(n-1,a,f(n-1,a))$, with $n\geq1$. Then, we show how simulating its top-down recursive unfolding:
\small
\begin{align*}
f(n,a) =  h(n-1,a,f(n-1,a))
= \ldots =  h(n-1,a,h(n-2,a,\ldots h(0,a,g(a))\ldots))
\end{align*}
\normalsize
by a bottom-up reconstruction that iterates some transition functions on suitable configurations and pre-configurations.
The reconstruction requires to assume $H, G$ be the interpretations of $h, g$, respectively, in \WALT. Moreover, for simplicity, we assume
the unary strings $\UNum{n}, \UNum{a}$ represent $n, a$ in \WALT.
What we are going to say, though, keeps holding with $f$ of arbitrary arity and with words as its arguments, instead of strings.
The \textit{main problem} to reconstruct the unfolding above is the need to replicate $\UNum{a}$. To see how overcoming that problem, we start by assuming that, in \WALT, we can develop sequences of computations like the following one:
\small
\begin{align}
\label{align:intro-step1}
&
\llan  G\UNum{a}
	, [\underbrace{\UNum{0},\ldots,\UNum{0}}_{n+1}]
	, [\underbrace{\UNum{a},\ldots,\UNum{a}}_{n+1}] \rran
\red^{*}
\\
\label{align:intro-step2}
\llan G\UNum{a}
     ,\lan \UNum{0},\underbrace{[\UNum{1},\ldots,\UNum{1}]}_{n}\ran
     ,\lan \UNum{a},\underbrace{[\UNum{a},\ldots,\UNum{a}]}_{n}    \ran     \rran
\red^{*}
&
\llan H\,\UNum{0}\,\UNum{a}\,(G\UNum{a})
    ,\underbrace{[\UNum{1},\ldots,\UNum{1}]}_{n}
    ,\underbrace{[\UNum{a},\ldots,\UNum{a}]}_{n} \rran
\red^{*}
\\
\label{align:intro-step3}
\llan H\,\UNum{0}\,\UNum{a}\,(G\UNum{a})
     ,\lan \UNum{1},\underbrace{[\UNum{2},\ldots,\UNum{2}]}_{n-1}\ran
     ,\lan \UNum{a},\underbrace{[\UNum{a},\ldots,\UNum{a}]}_{n-1}    \ran     \rran
\red^{*}
&
\llan H\,\UNum{1}\,\UNum{a}\,(H\,\UNum{0}\,\UNum{a}\,(G\UNum{a}))
    ,\underbrace{[\UNum{1},\ldots,\UNum{1}]}_{n-1}
    ,\underbrace{[\UNum{a},\ldots,\UNum{a}]}_{n-1} \rran
\red^{*}\ldots
\end{align}
\normalsize
The right hand column contains \textit{configurations}, the topmost being the \textit{initial} one.
The left hand column contains \textit{pre-configurations}.
Every pre-configuration comes from its preceding configuration by
(i) separating head and tail of every list, and storing them as the two components of a same pair,
(ii) only on the leftmost list, simultaneously to the separation, the successor is mapped on the tail.
\par
Every configuration, other than the initial one, is obtained from its preceding pre-configuration by the application of an instance of $H$ to the first element of every pair, and to the first element of the whole pre-configuration, which accumulates the partial result of the bottom-up reconstruction.
\par
Everything works correctly if the formula
$\$\UIntT\liv\$^{m}\UIntT\liv\$^{m}\UIntT$, for some $m$,
becomes the type of the term of \WALT\ that represents $f$.
Such a type says that $f$ becomes a term of \WALT\ that eagerly evaluates its two arguments. $\$\UIntT$ is the type of $\UNum{n}$, here representing a normal argument.
$\$^{m}\UIntT$ is the type of both $\UNum{a}$, here representing a safe argument, and of the result. It is crucial that these two types coincide, otherwise we could not interpret any recursive scheme. We could obtain such a coincidence only by generating $[\UNum{a},\ldots,\UNum{a}]$, in the initial configuration, using $\UInttoList$ above.
The peculiarity of $\UInttoList$ is that, having $\$^2\UIntT\liv\UIntT\li\ListT\, \$\UIntT$, as its type, we can look at $\UInttoList$ as it was a kind of dereliction: one of the $\$$-modalities in the type of its first argument is absorbed by the $\$$ rule hidden in the structure of the resulting list. This behavior is obtained by making an essential use of the rule $!$ where $\UNum{a}$ is an exponential assumption, namely a value that, eventually, the context will supply. The ``disappearing'' $\$$ modality allows to implement the bottom-up reconstruction through a combinator $\ctoc$ that takes a configuration at a \textit{given level} and yields another one at the \textit{same} level.
\par
Now, we move to the technical parts, where we set the relevant data-types.
%%%%%++++++++++++
\par
\textbf{Configurations.}
For every $\Intg{k}\geq 1$, the type of the configurations is:
\small
\begin{align*}
%---------Configurations
\bcConfT[\$A_1\ldots \$A_{\Intg{k}};\$B]
&\equiv
\forall \alpha_1\ldots\alpha_{\Intg{k}}.
(\li_{i=1}^{\Intg{k}}!(\$A_i\liv\alpha_i\li\alpha_i))\li
\\
&\qquad
\$((\li_{i=1}^{\Intg{k}}\alpha_i)
   \li
   \forall \gamma.
   ((\$B\liv(\li_{i=1}^{\Intg{k}}\alpha_i)\li\gamma)
       \li\gamma))
\end{align*}
\normalsize
such that
$\{\alpha_1,\ldots,\alpha_{\Intg{k}},\gamma\}\cap\FV{B}=\emptyset$.
We shall use the following canonical instance of the type of the configurations:
\small
\begin{align*}
\bcConfT[1+\Intg{n};\Intg{s};m]
&\equiv
\bcConfT[\overbrace{\$\BIntT\ldots\$\BIntT}^{\Intg{n}+1}\,
         \overbrace{\$^{m}\BIntT\ldots\$^{m}\BIntT}^{\Intg{s}};
	 \$^{m}\BIntT]
\end{align*}
\normalsize
for some given $\Intg{n},\Intg{s} \geq 0$ and $m \geq 1$,
whose canonical realizers are given by the following scheme:
\small
\begin{align*}
&
\lan\!\lan \BNum{r},
           [\BNum{a_1}   ,\ldots,\BNum{a_{\Intg{r}}} ],
           {[\BNum{n_{11}},\ldots,\BNum{n_{1\Intg{r}}}]}
	   ,\ldots
	   ,[\BNum{n_{\Intg{n}1}},\ldots,\BNum{n_{\Intg{n}\Intg{r}}}],
	   {[\BNum{s_{11}},\ldots,\BNum{s_{1\Intg{r}}}]}
	   ,\ldots
	   ,[\BNum{s_{\Intg{s}1}},\ldots,\BNum{s_{\Intg{s}\Intg{r}}}]
	   \ran\!\ran
\equiv
\\
&\qquad
\bs d_0 d_1 \ldots d_{\Intg{n}}
        e_1 \ldots e_{\Intg{s}}.
\bs w_0 w_1\ldots w_{\Intg{n}}
        z_1\ldots z_{\Intg{s}}.
\\
&\qquad
\begin{array}[t]{ll}
\bs x.
 x\,
 \BNum{r}\,
 (d_0\,\BNum{a_1}
   (\cdots(d_0\,\BNum{a_{\Intg{r}}}\,w_0)\cdots))\,
 \\\phantom{\bs x.x\,\BNum{r}\,}
 (d_1\,\BNum{n_{11}}
   (\cdots(d_1\BNum{n_{1\Intg{r}}}\,w_1)\cdots))
 \ldots
 (d_{\Intg{n}}\,\BNum{n_{\Intg{n}1}}
   (\cdots
     (d_{\Intg{n}}\,\BNum{n_{\Intg{n}\Intg{r}}}
          \,w_{\Intg{n}})\cdots))\,
 \\\phantom{\bs x.x\,\BNum{r}\,}
 (e_1\BNum{s_{11}}
   (\cdots(e_1\BNum{s_{1\Intg{r}}}z_1)\cdots))
 \ldots
 (e_{\Intg{s}}\BNum{s_{\Intg{s}1}}
   (\cdots
         (e_{\Intg{s}}\BNum{s_{\Intg{s}\Intg{r}}}
	    \,z_{\Intg{s}})\cdots))
\end{array}
\end{align*}
\normalsize
Essentially, the scheme is a tuple whose first element is a word, and all the remaining elements are lists of words, all with the same length.

\begin{proposition}[Typing the configurations]
\label{proposition:Typing the configurations}
Let $\Intg{n}, \Intg{s}, \Intg{r}\geq 0$, and $m\geq 1$.
A rule derivable in \WALT:
\label{proposition:Relating XWALL and WALL-1}
\begin{center}
\small
\begin{tabular}{c}
\infer[]
{\emptyset;\emptyset;\emptyset\vdash
 \!\!
 \begin{array}[t]{l}
 \lan\!\lan
         \BNum{r}
        ,[\BNum{a_{1}},\ldots,\BNum{a_{\Intg{r}}}]
 \\\phantom{\lan\!\lan\BNum{r}}
        ,[\BNum{n_{11}},\ldots,\BNum{n_{1\Intg{r}}}]
	,\ldots
        ,[\BNum{n_{\Intg{n}1}},\ldots,\BNum{n_{\Intg{n}\Intg{r}}}]
 \\\phantom{\lan\!\lan\BNum{r}}
	,[\BNum{s_{11}},\ldots,\BNum{s_{1\Intg{r}}}]
	,\ldots
        ,[\BNum{s_{\Intg{s}1}},\ldots,\BNum{s_{\Intg{s}\Intg{r}}}]
 \ran\!\ran
 \!:\!\bcConfT[1+\Intg{n};\Intg{s};m]
 \end{array}
}
{
\begin{array}{ll}
\emptyset;\emptyset;\emptyset
\vdash\ta{\BNum{a_i}}{\$\BIntT} & i\in\{1,\ldots,\Intg{r}\}
\\
\emptyset;\emptyset;\emptyset
\vdash\ta{\BNum{r}}{\$^{m}\BIntT}
\\
\emptyset;\emptyset;\emptyset
\vdash\ta{\BNum{n_{ij}}}{\$\BIntT} & i\in\{1,\ldots,\Intg{n}\},\
j\in\{1,\ldots,\Intg{r}\}
\\
\emptyset;\emptyset;\emptyset
\vdash\ta{\BNum{s_{ij}}}{\$^{m}\BIntT} & i\in\{1,\ldots,\Intg{s}\},\
j\in\{1,\ldots,\Intg{r}\}
\end{array}
}
\end{tabular}
\normalsize
\end{center}
\end{proposition}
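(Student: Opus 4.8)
The plan is to exhibit, bottom-up, a deduction $\Pi$ of \WALT\ whose conclusion is the displayed judgement, reading the shape of $\Pi$ off the realizer $\llan\BNum{r},\ldots\rran$, and to check that each side condition of the rules of Figure~\ref{figure:Weak LAL as a type assignment} holds; the four hypotheses of the rule will be used only as the ready-made subderivations for the occurrences of the word literals. Put $k=\Intg{n}+1+\Intg{s}$ and, for $1\le i\le k$, write $D_i\equiv\$\BIntT$ when $i\le\Intg{n}+1$ and $D_i\equiv\$^{m}\BIntT$ otherwise; unfolding the abbreviations, $\bcConfT[1+\Intg{n};\Intg{s};m]$ is the $\forall$-closure over $\alpha_1,\ldots,\alpha_k$ of a function that consumes $k$ iterator arguments, of the $!$-modal types $!(D_i\liv\alpha_i\li\alpha_i)$, and returns the $\$$-type $\$((\li_{i=1}^{k}\alpha_i)\li\forall\gamma.((\$^{m}\BIntT\liv(\li_{i=1}^{k}\alpha_i)\li\gamma)\li\gamma))$. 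Accordingly, the skeleton of $\Pi$, read from the conclusion towards the leaves, must be: $k$ instances of $\forall I$ quantifying $\alpha_1,\ldots,\alpha_k$; then $k$ instances of $\li I_!$ introducing $d_0,d_1,\ldots,d_{\Intg{n}},e_1,\ldots,e_{\Intg{s}}$ in this order, each introducing one of the required $!$-modal argument slots; then a block of $C$'s; then \emph{one} instance of the $\$$ rule; and, above the box, $k$ instances of $\li I$ introducing $w_0,\ldots,w_{\Intg{n}},z_1,\ldots,z_{\Intg{s}}$, one $\forall I$ quantifying $\gamma$, one $\li I$ introducing $x$, and the body $x\,\BNum{r}\,F_1\cdots F_k$ of type $\gamma$.

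I would build the body first. Take axioms for $x$ (of type $\$^{m}\BIntT\liv(\li_{i=1}^{k}\alpha_i)\li\gamma$), for each of $w_0,\ldots,z_{\Intg{s}}$ (of the matching $\alpha_i$), and for each occurrence of an iterator variable. Since $\Gamma$ is a set and the premises of $\li E$ and $\liv E$ must have disjoint domains, the $\Intg{r}$ textual occurrences of one iterator variable inside one fold have to be typed as $\Intg{r}$ pairwise distinct \emph{fresh} linear variables, so that the $i$-th fold $F_i$ is, up to this renaming, $d^{1}\,\BNum{v_1}(\cdots(d^{\Intg{r}}\,\BNum{v_{\Intg{r}}}\,b)\cdots)$, with each $d^{j}:D_i\liv\alpha_i\li\alpha_i$, each $\BNum{v_j}$ one of the closed literals $\BNum{a_i}$, $\BNum{n_{ij}}$, $\BNum{s_{ij}}$ of type $D_i$ supplied by a hypothesis, and $b\in\{w_\bullet,z_\bullet\}$ of type $\alpha_i$. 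Each $F_i$ is then assembled by alternating $\liv E$ (applying $d^{j}:D_i\liv\alpha_i\li\alpha_i$ to $\BNum{v_j}:D_i$; the side condition of $\liv E$ holds since $\BNum{v_j}$ is closed, hence has empty linear and linear partially discharged zones and empty $\mathcal{E}$) with $\li E$ (applying $d^{j}\,\BNum{v_j}:\alpha_i\li\alpha_i$ to the tail of type $\alpha_i$, legal since $\alpha_i\not\equiv{!}C$). Finally the body is $x\,\BNum{r}$ — one $\liv E$, using $\BNum{r}:\$^{m}\BIntT$ from the hypothesis — followed by $k$ instances of $\li E$ applying $F_1,\ldots,F_k$; freshness of the renamed copies keeps the domains disjoint at every node. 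At this point the linear zone is $\{x\}\cup\{w_\bullet,z_\bullet\}\cup\{d^{j}_\bullet\}$ and both other zones are empty.

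Next I would close the box and glue the rest. Discharge $x$ by $\li I$, quantify $\gamma$ by $\forall I$ (it is free only in the type of $x$, which is now gone), and discharge $w_0,\ldots,z_{\Intg{s}}$ by $k$ further $\li I$'s, leaving the linear zone equal to the set of the fresh iterator copies. Apply the $\$$ rule with empty $\Gamma',\Delta',\Theta'$, placing each fresh copy into its own singleton $\Phi$-slot $(\emptyset;\{d^{j}:D_i\liv\alpha_i\li\alpha_i\})$: the proviso ``$\Theta_\bullet\neq\emptyset$ iff $\Phi_\bullet=\emptyset$'' holds (all these $\Theta_\bullet$ are empty, all these $\Phi_\bullet$ are not), the $\Phi$-domains are pairwise distinct because the copies are fresh, and the covering condition on $\Gamma$ is met; the conclusion now carries the required $\$$-type. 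For each iterator variable, contract its $\Intg{r}$ fresh copies into a single polynomially partially discharged variable by $\Intg{r}-1$ instances of $C$ (each merging two $(\emptyset;\{\cdot\})$-pairs), so the context becomes one such pair per iterator variable and still has at most one empty $\Phi$-component. Discharge these $k$ variables by $k$ instances of $\li I_!$ — building from the leaves, in the order $e_{\Intg{s}},\ldots,e_1,d_{\Intg{n}},\ldots,d_0$, so that $d_0$ ends up bound by the outermost arrow — each step removing one $\Phi$-pair and adding one $!(D_i\liv\alpha_i\li\alpha_i)\li(\cdot)$; after them all three zones are empty. Finally $k$ instances of $\forall I$ quantify $\alpha_1,\ldots,\alpha_k$ (the eigenvariable conditions are trivial since every zone is empty), which yields $\emptyset;\emptyset;\emptyset\vdash\llan\BNum{r},\ldots\rran:\bcConfT[1+\Intg{n};\Intg{s};m]$, as wanted.

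The step I expect to be the crux is the ordering forced by the box discipline around the iterator variables: they cannot be contracted inside the $\$$-box — contraction acts on polynomially partially discharged variables, and the box body has none — so $\Pi$ is obliged to use $\Intg{r}$ fresh linear copies inside the box, let the single application of the $\$$ rule turn them into polynomial assumptions, and only then contract them and abstract them with $\li I_!$; this is exactly the extra room the generalized $!$/$\$$-boxes of \WALT\ buy over \LAL. Beyond that, the only book-keeping to be careful with is (i) the provisos of the $\$$ rule and of $C$ on the shape of partially discharged contexts (at most one empty $\Phi$-component, pairwise distinct $\Phi$-domains) and (ii) the disjointness of domains across the premises of $\li E$ and $\liv E$, which is precisely what imposes the fresh-copy device. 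The one degenerate case, $\Intg{r}=0$ (empty lists), makes an iterator variable vanish from the realizer; there one simply inserts it as a fake assumption by the weakening admitted by the axiom and $\$$ rules (Lemma~\ref{lemma:structural-properties-WALL}) right before its $\li I_!$, and no other side condition is touched.
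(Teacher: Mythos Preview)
Your proposal is correct and follows exactly the approach the paper sketches: the paper's own proof is a one-line remark that, reconstructing the derivation upward, one must insert the instances of $C$ ``just before the use of $\$$'', everything else being routine. You have spelled this out in full --- fresh linear copies of each iterator variable inside the box, a single $\$$ turning them into polynomial assumptions, then the contractions, then the $\li I_!$'s --- which is precisely the crux the paper singles out.
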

%%%%%++++++++++++
\textbf{Final configurations.}
For every $\Intg{k}\geq 1$, the type of the final configurations is:
\small
\begin{align*}
%---------Configurations
\FbcConfT[\$A_1\ldots \$A_{\Intg{k}};\$B]
&\equiv
\forall \alpha_1\ldots\alpha_{\Intg{k}}\gamma.
(\li_{i=1}^{\Intg{k}}!(\$A_i\liv\alpha_i\li\alpha_i))\li
\\
&\qquad
\$((\li_{i=1}^{\Intg{k}}\alpha_i)
   \li
   (\$B\liv(\li_{i=1}^{\Intg{k}}\alpha_i)\li\gamma)
       \li\gamma)
\end{align*}
\normalsize
such that
$\{\alpha_1,\ldots,\alpha_{\Intg{k}},\gamma\}\cap\FV{B}=\emptyset$.
The difference with the type of the configurations is the extrusion of the universal quantifier on $\gamma$.
We shall use the following canonical instance of the type of the final configurations,
for some given $\Intg{n},\Intg{s} \geq 0$, and $m \geq 1$:
\small
\begin{align*}
\FbcConfT[1+\Intg{n};\Intg{s};m]
&\equiv
\bcConfT[\overbrace{\$\BIntT\ldots\$\BIntT}^{\Intg{n}+1}\,
         \overbrace{\$^{m}\BIntT\ldots\$^{m}\BIntT}^{\Intg{s}};
	 \$^{m}\BIntT]
\end{align*}
\normalsize
The canonical realizer of a final configuration has the same form as a canonical realizer of the configurations.
The final configurations are introduced as a necessary step to extract, with the correct typing, the first component $\BNum{r}$ of a configuration, that will represent the result of an iteration.
%%%%%++++++++++++
\par
\textbf{pre-Configurations.}
For every $\Intg{k}\geq 1$, the type of the pre-configurations is:
\small
\begin{align*}
%---------Sliced configurations
\SbcConfT[\alpha_1\ldots\alpha_{\Intg{k}},\delta;
          \$A_1\ldots \$A_{\Intg{k}};\$B]
&\equiv
\forall \gamma.
((\$B\liv
 (\li_{i=1}^{\Intg{k}}\ST[\alpha_i,\delta;\$A_i])\li\gamma
)\li\gamma)
\end{align*}
\normalsize
where:
\small
\begin{align*}
\SU[\alpha,\delta;\$A]
&\equiv
(\$A\li\alpha\li\alpha)
\li
(\$A\li \$A)
\li
\$A
\liv
((\delta\li\delta)\li\alpha)
\\
\ST[\alpha,\delta;\$A]
&\equiv
\forall \beta. ((\SU[\alpha,\delta;\$A]\li\beta)\li\beta)
\end{align*}
\normalsize
such that $\{\alpha_1,\ldots,\alpha_{\Intg{k}},\delta,\gamma\}\cap\FV{B}=\emptyset$.
We shall use the following canonical instance of the type of the pre-configurations:
\small
\begin{align*}
\SbcConfT
[\alpha_0\ldots\alpha_{\Intg{n}+\Intg{s}},\delta;m]
&\equiv
\SbcConfT
[\alpha_0\ldots\alpha_{\Intg{n}+\Intg{s}},\delta;
 \overbrace{\$\BIntT\ldots\$\BIntT}^{1+\Intg{n}},
 \overbrace{\$^{m}\BIntT\ldots\$^{m}\BIntT}^{\Intg{s}};
 \$^{m}\BIntT
]
\end{align*}
\normalsize
for some $\Intg{n},\Intg{s} \geq 0$ and $m \geq 1$,
Given $\Intg{r}\geq 0$, the realizers of the canoncal pre-configurations are given by the following scheme:
\small
\begin{align*}
&
\begin{array}[t]{lll}
\lan\!\lan &\BNum{r} ,
	    \lan \BNum{a_1}, [\BNum{a_2}
                              ,\ldots,
			      \BNum{a_{\Intg{r}}} ]\ran,&\\
           &\quad
            {\lan \BNum{n_{11}},[\BNum{n_{12}}
	                         ,\ldots,
				 \BNum{n_{1\Intg{r}}}]\ran}
            ,\ldots,
            \lan
	    \BNum{n_{\Intg{n}1}},[\BNum{n_{\Intg{n}2}}
	                          ,\ldots,
				  \BNum{n_{\Intg{n}\Intg{r}}}]\ran,\\
           &\quad
            {\lan \BNum{s_{11}},[\BNum{s_{12}}
	                         ,\ldots,
				 \BNum{s_{1\Intg{r}}}]\ran}
            ,\ldots,
            \lan
	    \BNum{s_{\Intg{s}1}},[\BNum{s_{\Intg{s}2}}
	                          ,\ldots,
				  \BNum{s_{\Intg{s}\Intg{r}}}]\ran
	   \quad
	   \ran\!\ran_{\scriptsize
	         \begin{array}[t]{l}
	          c_{1} \ldots c_{\Intg{r}}\\
	          d_{11}\ldots d_{1\Intg{r}}
		    \cdots\cdots
	          d_{\Intg{n}1}\ldots d_{\Intg{n}\Intg{r}}\\
	          e_{11}\ldots e_{1\Intg{r}}
		    \cdots\cdots
	          e_{\Intg{s}1}\ldots e_{\Intg{s}\Intg{r}}\\
	          w_{0} w_{1} \ldots w_{\Intg{n}}
		        z_{1} \ldots z_{\Intg{s}}\\
	         \end{array}
		       \normalsize}\equiv
\end{array}
\\
&
\bs x.x\,\BNum{r}
      (\bs t.t\,
	     c_1\,
	     M_0\,
             \BNum{a_1}\,
	     (\bs f.c_2\BNum{a_2}
               (\ldots(c_{\Intg{r}}\BNum{a_{\Intg{r}}}
	        w_0)\ldots)))
\\
&
\phantom{\bs x.x\,\BNum{r}}
    (\bs t.t\,
     d_{11}\,
     M_1\,
     \BNum{n_{11}}\,
     (\bs f.d_{12}\BNum{n_{12}}
        (\ldots(d_{1\Intg{r}}\BNum{n_{1\Intg{r}}}
	  w_1)\ldots)))
% \\
% &
% \phantom{\bs x.x\,\BNum{r}}\qquad
\cdots
    (\bs t.t\,
     d_{\Intg{n}1}\,
     M_{\Intg{n}}\,
     \BNum{n_{\Intg{n}1}}\,
     (\bs f.d_{\Intg{n}2}
       \BNum{n_{\Intg{n}2}}
        (\ldots
	 (d_{\Intg{n}\Intg{r}}
	   \BNum{n_{\Intg{n}\Intg{r}}}
	     w_{\Intg{n}}
	 )\ldots
	)))
\\
&\phantom{\bs x.x\,\BNum{r}}
    (\bs t.t\,
     e_{11}\,
     N_{1}\,
     \BNum{s_{11}}\,
     (\bs f.e_{12}
       \BNum{s_{12}}
        (\ldots
	  (e_{1\Intg{r}}
	   \BNum{s_{1\Intg{r}}}
	   z_1
	  )\ldots
	)
      )
     )
% \\
% &\phantom{\bs x.x\,\BNum{r}}\qquad
\cdots
    (\bs t.t\,
     e_{\Intg{s}1}\,
     N_{\Intg{s}}\,
     \BNum{s_{\Intg{s}1}}\,
     (\bs f.e_{\Intg{s}2}
       \BNum{s_{\Intg{s}2}}
        (\ldots(e_{\Intg{s}\Intg{r}}
	        \BNum{s_{\Intg{s}\Intg{r}}}
		z_{\Intg{s}}
	       )
	 \ldots
	)))
\end{align*}
\normalsize
\textbf{up} to any choices of the \textbf{closed values} $M_i$, with $i\in\{0,\ldots,\Intg{n}\}$, and $N_i$, with $j\in\{1,\ldots,\Intg{s}\}$.
Essentially, every realizer of a canonical pre-configuration is a tuple. The first element is a word, and all the others are pairs. Every pair contains a word and a list of words. All the lists in the same pre-configuration have the same length.

\begin{proposition}[Typing the pre-configurations]
\label{proposition:Typing the pre-configurations}
\label{proposition:Relating XWALL and WALL-2}
Let $\Intg{n}, \Intg{s}, \Intg{r}\geq 0$, and $m\geq 1$.
A rule derivable in \WALL:
\small
\[\infer[]
{
\begin{array}[t]{l}
 \begin{array}[t]{l}
 \ta{c_1,\ldots,c_{\Intg{r}}}
    {\$\BIntT\li\alpha_0\li\alpha_0},\\
 \ta{d_{11},\ldots,d_{1\Intg{r}}}
    {\$\BIntT\li\alpha_1\li\alpha_1},
 \\\cdots\cdots,
 \ta{d_{\Intg{n}1},\ldots,d_{\Intg{n}\Intg{r}}}
    {\$\BIntT\li\alpha_{\Intg{n}}\li\alpha_{\Intg{n}}},\\
 \ta{e_{11},\ldots,e_{1\Intg{r}}}
    {\$^{m}\BIntT\li\alpha_{1+\Intg{n}}\li\alpha_{1+\Intg{n}}},
 \\\cdots\cdots,
 \ta{e_{\Intg{s}1},\ldots,e_{\Intg{s}\Intg{r}}}
    {\$^{m}\BIntT\li
      \alpha_{\Intg{s}+\Intg{n}}\li
      \alpha_{\Intg{s}+\Intg{n}}},
 \\
 \ta{w_0}{\alpha_0},
 \ta{w_1}{\alpha_1},
  \ldots,\ta{w_{\Intg{n}}}{\alpha_{\Intg{n}}},
 \ta{z_1}{\alpha_{1+\Intg{n}}},
  \ldots,\ta{z_{\Intg{s}}}{\alpha_{\Intg{s}+\Intg{n}}}
 ;\emptyset;\emptyset
 \end{array}\\
 \qquad
 \vdash\lan\!\lan\!\!\!\!
 \begin{array}[t]{l}
  \
  \BNum{r},  \lan\BNum{a_1},[\BNum{a_2},\ldots,\BNum{a_{\Intg{r}}}]\ran,
  \\\phantom{\BNum{r}}
  \lan\BNum{n_{11}},[\BNum{n_{12}},\ldots,\BNum{n_{1\Intg{r}}}],\ran
  \ldots
  ,\lan\BNum{n_{\Intg{n}1}},
       [\BNum{n_{\Intg{n}2}},
       \ldots,
       \BNum{n_{\Intg{n}\Intg{r}}}]
  \ran,
  \\\phantom{\BNum{r}}
  \lan\BNum{s_{11}},[\BNum{s_{12}},\ldots,\BNum{s_{1\Intg{r}}}]\ran
  ,\ldots
  ,\lan\BNum{s_{\Intg{s}1}}
       ,[\BNum{s_{\Intg{s}2}},\ldots,\BNum{s_{\Intg{s}\Intg{r}}}]\ran
\ \
  \ran\!\ran\!_{\scriptsize
                \begin{array}[t]{l}
                c_1\ldots c_{\Intg{r}}\\
                d_{11}\ldots d_{1\Intg{r}}
                \cdots
		d_{\Intg{n}1}\ldots d_{\Intg{n}\Intg{r}}\\
                e_{11}\ldots e_{1\Intg{r}}
                \cdots
		e_{\Intg{s}1}\ldots e_{\Intg{s}\Intg{r}}\\
		w_{0} w_{1}\ldots w_{\Intg{n}}z_{1}\ldots z_{\Intg{s}}
                \end{array}
		\normalsize
               }
 \end{array}
\\\qquad
:\!\SbcConfT[\alpha_{0}\ldots\alpha_{\Intg{n}+\Intg{s}},\delta;m]
\end{array}
}
{
\begin{array}{ll}
\emptyset;\emptyset;\emptyset
\vdash\ta{\BNum{a_i}}{\$\BIntT} & i\in\{1,\ldots,\Intg{r}\}
\\
\emptyset;\emptyset;\emptyset
\vdash\ta{\BNum{r}}{\$^{m}\BIntT}
\\
\emptyset;\emptyset;\emptyset
\vdash\ta{\BNum{n_{ij}}}{\$\BIntT} & i\in\{1,\ldots,\Intg{n}\},\
j\in\{1,\ldots,\Intg{r}\}
\\
\emptyset;\emptyset;\emptyset
\vdash\ta{\BNum{s_{ij}}}{\$^{m}\BIntT} & i\in\{1,\ldots,\Intg{s}\},\
j\in\{1,\ldots,\Intg{r}\}
\end{array}
}
\]
\normalsize
\end{proposition}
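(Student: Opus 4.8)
The plan is to construct the required \WALT-derivation explicitly, by recursion on the syntactic structure of the displayed realizer, along the same lines as the (unstated) derivation behind Proposition~\ref{proposition:Typing the configurations}. The point to exploit is that the three data-types involved, $\SbcConfT[\alpha_{0}\ldots\alpha_{\Intg{n}+\Intg{s}},\delta;m]$, $\ST[\alpha_{i},\delta;\$A_{i}]$ and $\SU[\alpha_{i},\delta;\$A_{i}]$, are all impredicative CPS encodings: a realizer is just a cascade of eliminator-style applications, so type-checking it amounts to unwinding these encodings level by level. In particular, the whole derivation is \emph{flat} --- it contains no instance of the $!$ or $\$$ rule, every modality occurring in the conclusion being already carried by the hypotheses $\ta{\BNum{r}}{\$^{m}\BIntT}$, $\ta{\BNum{a_{i}}}{\$\BIntT}$, $\ta{\BNum{n_{ij}}}{\$\BIntT}$, $\ta{\BNum{s_{ij}}}{\$^{m}\BIntT}$, whose derivations (built from Proposition~\ref{proposition:Typing rules relative to words} and the coercion combinators) we take as black boxes. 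Hence the only rules I would use are $A$, $\li I$, $\li E$, $\liv I$, $\liv E$, $\forall I$, $\forall E$, together with the weakening of the zones $\Gamma$, $\Delta$ implicit in $A, !, \$$ and made explicit in Lemma~\ref{lemma:structural-properties-WALL}(\ref{lemma:structural-properties-WALL-1}).

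I would proceed bottom-up in three waves. \emph{First}, for each of the $1+\Intg{n}+\Intg{s}$ lists, type the ``list-suffix'' subterm, e.g. $\bs f.\,c_{2}\BNum{a_{2}}(\cdots(c_{\Intg{r}}\BNum{a_{\Intg{r}}}w_{0})\cdots)$, by an inner induction on the list length: the base case is the axiom $\ta{w_{0}}{\alpha_{0}}$, and the inductive step applies $\li E$ twice, to the axiom $\ta{c_{j}}{\$\BIntT\li\alpha_{0}\li\alpha_{0}}$ and to the hypothesis $\ta{\BNum{a_{j}}}{\$\BIntT}$ --- the side condition of $\li E$ being met since $\$\BIntT$ is not $!$-modal, and the disjointness requirement on the two premises of $\li E$ being automatic because the names $c_{1},\ldots,c_{\Intg{r}},w_{0}$ are pairwise distinct; the unused abstraction $\bs f$, of type $\delta\li\delta$, is then introduced by $\li I$ after a weakening, giving type $(\delta\li\delta)\li\alpha_{0}$. \emph{Second}, assemble each boxed component $\bs t.\,t\,c_{1}\,M_{0}\,\BNum{a_{1}}\,(\bs f.\cdots)$: with the eliminator $t$ of type $\SU[\alpha_{0},\delta;\$\BIntT]\li\beta$ as bound variable, feed it the axioms/hypotheses and the suffix from the first wave, the head word $\BNum{a_{1}}$ being supplied through $\liv E$ --- whose side condition ${\mathcal E}_{N}\subseteq\{(\Theta;\emptyset)\}$ holds trivially because $\BNum{a_{1}}$ is typed with all three zones empty --- and then close with $\li I$ on $t$ and $\forall I$ on $\beta$, legal since $\beta$ is fresh for the context; the closed value $M_{0}$ (a step function: the successor on the leftmost list, the identity elsewhere) is taken of the matching type $\$\BIntT\li\$\BIntT$, resp. $\$^{m}\BIntT\li\$^{m}\BIntT$ for the $N_{j}$, which is part of what it means for the displayed term to be a realizer of a \emph{canonical} pre-configuration. \emph{Third}, assemble the whole term $\bs x.\,x\,\BNum{r}\,(\cdots)\cdots(\cdots)$ the same way: the continuation $x$, of type $\$^{m}\BIntT\liv(\li_{i=1}^{1+\Intg{n}+\Intg{s}}\ST[\alpha_{i},\delta;\$A_{i}])\li\gamma$, is applied through $\liv E$ to $\BNum{r}$ (closed, side condition met) and through $\li E$ to each component, and $\li I$ on $x$ followed by $\forall I$ on $\gamma$ produce the type $\SbcConfT[\alpha_{0}\ldots\alpha_{\Intg{n}+\Intg{s}},\delta;m]$; the residual context is exactly the displayed one, since merging the pairwise-distinct linear assumptions $c_{\ast},d_{\ast},e_{\ast},w_{\ast},z_{\ast}$ never triggers the identification clause of $\sqcup$.

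I expect the work to be entirely bookkeeping, with no conceptual obstacle: the one genuinely delicate point is threading the deeply nested impredicative types $\SbcConfT$, $\ST$, $\SU$ --- with their mix of $\li$ and $\liv$ and their two levels of universal quantification ($\beta$ inside each $\ST$, $\gamma$ in $\SbcConfT$) --- correctly through the derivation, and checking at every application that the few side conditions ever invoked are discharged: non-$!$-modality of the $\li E$ arguments and the $\{(\Theta;\emptyset)\}$-shape of the $\liv E$ arguments are immediate here, because every argument passed eagerly or to the outer continuation is a \emph{closed} word, while the eigenvariable conditions for $\forall I$ follow from $\beta,\gamma$ being introduced fresh. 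The proof of the companion Proposition~\ref{proposition:Typing the configurations} proceeds identically, only without the $\ST$/$\SU$-layer.
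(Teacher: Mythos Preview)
Your proposal is correct and spells out in detail what the paper dismisses as ``standard''. One point of comparison: the paper's joint proof-sketch for Propositions~\ref{proposition:Typing the configurations} and~\ref{proposition:Typing the pre-configurations} singles out the need for instances of the contraction rule $C$ ``just before the use of $\$$''. You are right that this does not arise for pre-configurations --- the free variables $c_j,d_{ij},e_{ij},w_i,z_j$ each occur exactly once in the realizer, and $\SbcConfT[\ldots]$ carries no outermost modality --- so the derivation from the displayed premises to the conclusion is indeed flat, using only $A$, $\li I$, $\li E$, $\liv E$, $\forall I$, $\forall E$ and weakening. Where you diverge from the paper is your closing sentence: Proposition~\ref{proposition:Typing the configurations} does \emph{not} proceed ``identically, only without the $\ST/\SU$-layer''. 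In the configuration realizer each bound variable $d_i$ (resp.\ $e_j$) occurs $\Intg{r}$ times in the body and must be contracted, and the outermost $\$$ in $\bcConfT[1+\Intg{n};\Intg{s};m]$ must be introduced by the $\$$ rule --- precisely the two ingredients the paper isolates. (Minor: $\liv I$ appears in your list of rules but is never actually invoked by your construction.)
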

%%%%%++++++++++++
\textbf{Transition function.}
\label{subsubsection:Transition function}
$\TransFunc_{1+\Intg{n};\Intg{s}}[F]$ is a transition function that maps configurations to configurations. It composes the two combinators $\HeadsandTails_{1+\Intg{n};\Intg{s}}[F]$ and
$\NextConf_{1+\Intg{n};\Intg{s}}[F']$. The first produces a pre-configuration, from a given configuration, while the second goes in the opposite direction. $F$ and $F'$ are parameters that will be instantiated by combinators. $1+\Intg{n}$ and $\Intg{s}$ represent the normal and the safe arities, so pointing to the future use we shall make of the transition function to represent \QlSRN\ functions with normal and safe arguments.
The definitions are here below, where $\BaseTransFunc$ and $\StepTransFunc$, are a base and a step function, respectively, used to extract pairs on lists:
\small
\begin{align*}
\TransFunc_{1+\Intg{n};\Intg{s}}[F,F']
\equiv&
\bs x.
\bs d_0 d_1 \ldots d_{\Intg{n}}
        e_1 \ldots e_{\Intg{s}}.
\\
&
(\bs b.
 \bs w_0 w_1 \ldots w_{\Intg{n}}
         z_1 \ldots z_{\Intg{s}}.
 \NextConf_{1+\Intg{n};\Intg{s}}[F']
   (b\, w_0\, w_1\ldots w_{\Intg{n}}\,z_1\ldots z_{\Intg{s}})
)
\\
&\qquad
(\HeadsandTails_{1+\Intg{n};\Intg{s}}[F]\,
   x\,d_0\,d_1\ldots d_{\Intg{n}}\,e_1\ldots e_{\Intg{s}})
\\
\StepTransFunc^{m}[G]
&\equiv
        \bs catx.x\, c\, \LEmbed{m}{1}{G}\, a\,
	  (\bs f.t\,(\bs cgal. c(g\,a)(l\,\Id)))
\\
\BaseTransFunc^{m}
&\equiv
\bs yx. x\, (\bs xy.y)\, \LEmbed{m}{1}{\Id}\, \BNum{0}\, (\bs f.y)
\end{align*}
\begin{align*}
\HeadsandTails_{1+\Intg{n};\Intg{s}}[G]
\equiv&
\bs x.
\bs d_0 d_1\ldots d_{\Intg{n}}
        e_1\ldots e_{\Intg{s}}.
\\
&
(\bs b. \bs w_0 w_1\ldots w_{\Intg{n}}
                z_1\ldots z_{\Intg{s}}.
	        b \begin{array}[t]{l}
	           (\BaseTransFunc^{1}\, w_0)
		   \\
	           (\BaseTransFunc^{1}\, w_{1})
		     \ldots (\BaseTransFunc^{1}\, w_{\Intg{n}})
		   \\
		   (\BaseTransFunc^{1}\, z_{1})
		     \ldots (\BaseTransFunc^{1}\, z_{\Intg{s}})
                  \end{array}
\\
&
)(x \begin{array}[t]{l}
    (\StepTransFunc^{1}[G]\, d_0)
    (\StepTransFunc^{1}[\Id]\,d_{1})
       \ldots(\StepTransFunc^{1}[\Id]\,d_{\Intg{n}})\\
    \phantom{(\StepTransFunc^{1}[F]\, d_0)}
    (\StepTransFunc^{m}[\Id]\,e_{1})
       \ldots(\StepTransFunc^{m}[\Id]\,e_{\Intg{s}}))
    \end{array}
\\
\NextConf_{1+\Intg{n};\Intg{s}}[F']
&\equiv
\bs x. x(\bs r t_0 t_1\ldots t_{\Intg{n}}
                   t'_1\ldots t'_{\Intg{s}}.
         t'_{\Intg{s}}
	 (\ldots(t'_1(t_{\Intg{n}}(
	          \ldots
	          (t_1(t_0\,H))
	          \ldots
	         )))
         \ldots)
        )
\\
&\text{ where }
H=\begin{array}[t]{l}
  \bs d_0 f_0 n_0 n^t_0.
  \\
  \bs d_1 f_1 n_1 n^t_1.
  \ldots
  \bs d_{\Intg{n}} f_{\Intg{n}} n_{\Intg{n}} n^t_{\Intg{n}}.
  \\
  \bs e_1 g_1 e_1 s^t_1.
  \ldots
  \bs e_{\Intg{s}} g_{\Intg{s}} s_{\Intg{s}} s^t_{\Intg{s}}.
  \\
  \bs x. x\,
         (F'\,n_0\,n_1\ldots n_{\Intg{n}}
	         \,s_1\ldots s_{\Intg{s}}\,r)\,
	 (n^t_0\,\Id)\,
	 (n^t_1\,\Id)\ldots (n^t_{\Intg{n}}\,\Id)
   \\\phantom{\bs x. x\,
             (F'\,n_0\,n_1\ldots n_{\Intg{n}}
	             \,s_1\ldots s_{\Intg{s}}\,r)
                \,(n^t_0\,\Id)\,}
	 (s^t_1\,\Id)\ldots (s^t_{\Intg{s}}\,\Id)
  \enspace .
  \end{array}
\end{align*}
\normalsize
$\HeadsandTails$ uses lists as they were stacks: it pops the head of the stack, keeping head and tail in a pair.

\begin{proposition}[Typing the transition function]
\label{proposition:Typing the transition function}
Let $\Intg{n}, \Intg{s}\geq 0$, and $m\geq 1$.
Rules derivable in \WALT:
\small
\[
\infer[]
{\emptyset
;\emptyset
;\emptyset
\vdash
\ta{\BaseTransFunc^m}{\alpha\li\ST[\alpha,\delta;\$^m\BIntT]}}
{}
\]
\[
\infer[]
{\emptyset
;\emptyset
;\emptyset
\vdash
\ta{\StepTransFunc^m[G]}
   {(\$^m\BIntT \li\alpha\li\alpha)\li
     \$^m\BIntT\liv
     \ST[\alpha,\delta;\$^m\BIntT]\li
     \ST[\alpha,\delta;\$^m\BIntT]}
}
{\emptyset
;\emptyset
;\emptyset
\vdash
\ta{G}{\BIntT\li \BIntT}
}
\]
\[
\infer[]
{
\begin{array}[t]{l}
\emptyset;\emptyset;\emptyset\vdash
\HeadsandTails_{1+\Intg{n};\Intg{s}}[G]:
\bcConfT[1+\Intg{n};\Intg{s};m]\li
(\li^{\Intg{n}}_{i=0}
  !(\$\BIntT\li\alpha_i\li\alpha_i))
\\
\qquad\qquad\qquad\qquad\qquad
\li
(\li^{\Intg{n}+\Intg{s}}_{j=\Intg{n}+1}
  !(\$^{m}\BIntT\li\alpha_j\li\alpha_j))
\\\qquad\qquad\qquad\qquad\qquad\quad
\li
\$(
(\li^{\Intg{n}}_{i=0}\alpha_i)\li
(\li^{\Intg{n}+\Intg{s}}_{j=\Intg{n}+1}\alpha_j)
%\\\qquad\qquad\qquad\qquad
\li\SbcConfT[\alpha_0\ldots\alpha_{\Intg{n}+\Intg{s}},\delta;m])
\end{array}
}
{
\emptyset
;\emptyset
;\emptyset
\vdash
\ta{G}{W\li W}
}
\]
\[
\infer[]
{\begin{array}{l}
 \emptyset;\emptyset;\emptyset\vdash
 \NextConf_{1+\Intg{n};\Intg{s}}[F']:
 \SbcConfT[\alpha_0\ldots\alpha_{\Intg{n}+\Intg{s}},\delta;m]\li
 \\
 \qquad\qquad\qquad\qquad\qquad\qquad
  (\$^{m}\BIntT\liv
   (\li^{\Intg{n}}_{i=0}\alpha_i)\li
   (\li^{\Intg{n}+\Intg{s}}_{j=\Intg{n}+1}\alpha_j)\li
   \gamma
  )\li
  \gamma
 \end{array}
}
{\emptyset;\emptyset;\emptyset;\vdash
 \ta{F'}
    {
     \$\BIntT\liv
     (\liv^{\Intg{n}}_{i=1}\$\BIntT)\liv
     (\liv^{\Intg{s}}_{i=1}\$^{m}\BIntT)\liv
     \$^{m}\BIntT\liv
     \$^{m}\BIntT
    }
}
\]
\[\infer[]
{\emptyset;\emptyset;\emptyset;\vdash
 \ta{\TransFunc_{1+\Intg{n};\Intg{s}}[F,F']}
    {\bcConfT[1+\Intg{n};\Intg{s};m]\li
     \bcConfT[1+\Intg{n};\Intg{s};m]}
}
{\begin{array}{l}
 \emptyset;\emptyset;\emptyset
 \vdash
 \ta{F}{\BIntT\li \BIntT}
 \\
 \emptyset;\emptyset;\emptyset
 \vdash
 \ta{F'}
    {\$\BIntT\liv
     (\liv^{\Intg{n}}_{i=1}\$\BIntT)\liv
     (\liv^{\Intg{s}}_{i=1}\$^{m}\BIntT)\liv
     \$^{m}\BIntT\liv
     \$^{m}\BIntT}
 \end{array}
}
\]
\normalsize
\end{proposition}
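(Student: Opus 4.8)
The plan is to establish the five derivable rules in the order they are stated, bottom-up: first the two building blocks $\BaseTransFunc^m$ and $\StepTransFunc^m[G]$, then $\HeadsandTails_{1+\Intg{n};\Intg{s}}[G]$ and $\NextConf_{1+\Intg{n};\Intg{s}}[F']$, and finally $\TransFunc_{1+\Intg{n};\Intg{s}}[F,F']$, which is just the composition of the previous two modulo a handful of $\li I/\li E$ steps. Every derivation is obtained by assembling the rules of Figure~\ref{figure:Weak LAL as a type assignment} with the derived typing rules for the data-types already at our disposal: tensors and elementary tensors (Propositions~\ref{proposition:Typing tuples} and~\ref{proposition:Typing rules relative to the elementary tensor}), lists (Proposition~\ref{proposition:Typing rules relative to lists}), the word constructors (Proposition~\ref{proposition:Typing rules relative to words}), and the embeddings $\BEmbed{}{\cdot}$, $\LEmbed{}{}{\cdot}$, $\EEmbed{}{}{}{\cdot}$ (Proposition~\ref{proposition:Typing the embeddings}). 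Throughout, whenever an argument of configuration/pre-configuration type has to be consumed I unfold the $\forall$-quantified abbreviations $\bcConfT[\cdots]$, $\SbcConfT[\cdots]$, $\ST[\alpha,\delta;\$A]$, $\SU[\alpha,\delta;\$A]$, instantiate the quantified variables with $\forall E$, and re-fold the result with $\forall I$, each time checking that the generalised variables among $\alpha_0,\ldots,\alpha_{\Intg{n}+\Intg{s}},\gamma,\delta,\beta$ do not occur free in the remaining contexts.

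The base cases are short. $\BaseTransFunc^m$ and $\StepTransFunc^m[G]$ each have a body that is a single-component (elementary) tensor, so their derivations are a few $\li I$'s wrapping an $\otimes I$/$\odot I$ instance; the only non-trivial ingredient is that the word-processing function hidden inside must carry type $\$^m\BIntT\li\$^m\BIntT$, which is supplied by lifting $\Id$ (resp. $G:\BIntT\li\BIntT$) through $\LEmbed{m}{1}{\cdot}$ using Proposition~\ref{proposition:Typing the embeddings}. This is precisely what makes the $\$^m$-levels in $\SU[\alpha,\delta;\$^m\BIntT]$ match, and it determines the shape of the second and third hypotheses of the proposition.

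The delicate parts are $\HeadsandTails$ and $\NextConf$. For $\HeadsandTails_{1+\Intg{n};\Intg{s}}[G]$ one feeds to the configuration argument — whose type demands iteration functions with the $!$-modal types $!(\$\BIntT\li\alpha_i\li\alpha_i)$ and $!(\$^m\BIntT\li\alpha_j\li\alpha_j)$ — the combinators $\StepTransFunc^1[G]\,d_i$ and $\StepTransFunc^m[\Id]\,e_j$, and primes the accumulators via $\BaseTransFunc^1\,w_k$; building the required $!$-boxes around the step combinators forces an application of the rule $!$, whose side condition $\Theta\neq\emptyset\Rightarrow\dom{\Phi}\cap\FV{M}\neq\emptyset$ must be verified, while the $\liv$-arrows inside $\bcConfT[\cdots]$ (and inside $\SbcConfT[\cdots]$ via $\SU$) force the passenger words to live on elementary/closed assumptions, which is exactly why the $M_i,N_j$ in the pre-configuration realiser of Proposition~\ref{proposition:Typing the pre-configurations} had to be closed values and why the $\LEmbed{m}{1}{\cdot}$'s appear. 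So the crux is showing that the pre-configuration type $\SbcConfT[\alpha_0\ldots\alpha_{\Intg{n}+\Intg{s}},\delta;m]$ produced by $\HeadsandTails$ is literally the one consumed by $\NextConf$. For $\NextConf_{1+\Intg{n};\Intg{s}}[F']$ the term is a deep nest of abstractions and applications around $H$: typing it is a bookkeeping exercise of threading the $\$$-level of each $t_0,\ldots,t_{\Intg{n}},t'_1,\ldots,t'_{\Intg{s}}$ down to the single occurrence of $F'$, whose type $\$\BIntT\liv(\liv^{\Intg{n}}_{i=1}\$\BIntT)\liv(\liv^{\Intg{s}}_{i=1}\$^m\BIntT)\liv\$^m\BIntT\liv\$^m\BIntT$ must line up so that exactly one $\$$ is absorbed, matching the $\$$ in the conclusion type of $\NextConf$; here every use of $\liv E$ must be checked against its context side condition (the argument of a $\liv$ must be typed from elementary partially discharged assumptions only).

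Finally, $\TransFunc_{1+\Intg{n};\Intg{s}}[F,F']$ is derived by plugging the output of $\HeadsandTails_{1+\Intg{n};\Intg{s}}[F]$, after the intermediate $\lambda$-abstractions on $b$ and on $w_0,\ldots,z_{\Intg{s}}$, into $\NextConf_{1+\Intg{n};\Intg{s}}[F']$, and then re-abstracting to recover $\bcConfT[1+\Intg{n};\Intg{s};m]\li\bcConfT[1+\Intg{n};\Intg{s};m]$; the two hypotheses on $F$ and $F'$ are exactly those needed by the two previous rules. I expect the main obstacle to be not any single rule application but the global coherence of the modal/level annotations: ensuring that the $!$-boxes around the step combinators, the $\$$-layers carried by the lists and by the words inside the tensors, and the $\liv$-discipline on the passengers all fit together so that every instance of $!$, $\$$, $\liv E$ and $\li E_!$ in the derivation satisfies its context side condition, and that the unfold/refold of the polymorphic configuration types never violates the eigenvariable restriction of $\forall I$. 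Once the type indices of $\HeadsandTails$ and $\NextConf$ are pinned down so that they compose, the rest is routine.
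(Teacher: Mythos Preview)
Your plan is correct and mirrors the paper's own proof: the paper proceeds by exactly the same five cases, exhibiting in each case the key intermediate judgments that have to be composed (for $\NextConf$ it also introduces an auxiliary type abbreviation $\SV[b,a;\gamma;\$^{m_b}\BIntT]$ to name the nested intermediate types arising as the $t_i$'s and $t'_j$'s are applied one by one to $H$, which is what you call ``threading''). One small inaccuracy to fix when you write the details: the bodies of $\BaseTransFunc^m$ and $\StepTransFunc^m[G]$ are not built with $\otimes I/\odot I$; the type $\ST[\alpha,\delta;\$^m\BIntT]=\forall\beta.((\SU\li\beta)\li\beta)$ is a continuation encoding whose constructor is obtained by applying the continuation variable $x$ directly to the four curried components of $\SU$, so the core judgment is an iterated $\li E/\liv E$ (as the paper's displayed judgment for $x\,(\bs xy.y)\,\LEmbed{m}{1}{\Id}\,\BNum{0}\,(\bs f.y):\beta$ makes explicit), not a derived tensor rule.
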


\begin{proposition}[Dynamics of the transition function]
\label{proposition:Dynamics of the transition function}
Let $\Intg{n}, \Intg{s} \geq 0$, and $m\geq 1$.
For every
$\BNum{r},$ \\
 $[\BNum{a_{1}},\ldots,\BNum{a_{\Intg{r}}}]
 ,[\BNum{n_{11}},\ldots,\BNum{n_{1\Intg{r}}}]
 ,\ldots
 ,[\BNum{n_{\Intg{n}1}},\ldots,\BNum{n_{\Intg{n}\Intg{r}}}]
 ,[\BNum{s_{11}},\ldots,\BNum{s_{1\Intg{r}}}],
\ldots,[\BNum{s_{\Intg{s}1}},
\ldots,\BNum{s_{\Intg{s}\Intg{r}}}]
$, if $F\,\BNum{a_i}\red^+\BNum{a'_i}$, when
$i\in\{2,\ldots,\Intg{r}\}$, for some $a'_i$, we have:
\small
\begin{align}
\label{proposition:Dynamics of the transition function-C2PC}
&
\HeadsandTails_{1+\Intg{n};\Intg{s}}[F]\,
\lan\!\lan
 \BNum{r},
 [\BNum{a_{1}},\ldots,\BNum{a_{\Intg{r}}}]
 \!\!\!
 \begin{array}[t]{l}
  ,[\BNum{n_{11}},\ldots,\BNum{n_{1\Intg{r}}}]
  ,\ldots
  ,[\BNum{n_{\Intg{n}1}},\ldots,\BNum{n_{\Intg{n}\Intg{r}}}]\\
  ,[\BNum{s_{11}},\ldots,\BNum{s_{1\Intg{r}}}]
  ,\ldots
  ,[\BNum{s_{\Intg{s}1}},\ldots,\BNum{s_{\Intg{s}\Intg{r}}}]
  \ran\!\ran
  \red^+
 \end{array}
\\&
\nonumber
\bs d_0 d_1\ldots d_{\Intg{n}}e_1\ldots e_{\Intg{s}}.
\bs w_0 w_1\ldots w_{\Intg{n}}z_1\ldots z_{\Intg{s}}.
\\&
\nonumber
(\!\!
\begin{array}[t]{l}
\lan\!\lan
\BNum{r},
\lan\BNum{a_{1}},[\BNum{a'_{2}},\ldots,\BNum{a'_{\Intg{r}}}]\ran\\
\phantom{\lan\!\BNum{r},}
    ,\lan\BNum{n_{11}},[\BNum{n_{12}},\ldots,\BNum{n_{1\Intg{r}}}]\ran
    ,\ldots
    ,\lan\BNum{n_{\Intg{n}1}},
       [\BNum{n_{\Intg{n}2}},
           \ldots,\BNum{n_{\Intg{n}\Intg{r}}}]\ran\\
    \phantom{\lan\!\BNum{r},}
    ,\lan\BNum{s_{11}},
       [\BNum{s_{12}},\ldots,\BNum{s_{1\Intg{r}}}]\ran
    ,\ldots
    ,\lan\BNum{s_{\Intg{s}1}},
        [\BNum{s_{\Intg{s}2}},
	   \ldots,\BNum{s_{\Intg{s}\Intg{r}}}]\ran
    \ran\!\ran_{\scriptsize
	         \begin{array}[t]{l}
	          c_{1} \ldots c_{\Intg{r}}\\
	          d_{11}\ldots d_{1\Intg{r}}
		    \cdots\cdots
	          d_{\Intg{n}1}\ldots d_{\Intg{n}\Intg{r}}\\
	          e_{11}\ldots e_{1\Intg{r}}
		    \cdots\cdots
	          e_{\Intg{s}1}\ldots e_{\Intg{s}\Intg{r}}\\
	          w_{0} w_{1} \ldots w_{\Intg{n}}
		        z_{1} \ldots z_{\Intg{s}}\\
	         \end{array}
		\normalsize
	       }
\end{array}
\\
&
\nonumber
  )\{
     \begin{array}[t]{l}
     ^{d_0}\!/_{c_1}\ldots^{d_0}\!/_{c_{\Intg{r}}}
     \\
     ^{d_{1}}\!/_{d_{11}}
            \ldots
     ^{d_{1}}\!/_{d_{1\Intg{r}}}
     \cdots
     ^{d_{\Intg{n}}}\!/_{d_{\Intg{n}1}}
            \ldots
     ^{d_{\Intg{n}}}\!/_{d_{\Intg{n}\Intg{r}}}
     \
     ^{e_{1}}\!/_{e_{11}}
            \ldots
     ^{e_{1}}\!/_{e_{1\Intg{r}}}
     \cdots
     ^{e_{\Intg{s}}}\!/_{e_{\Intg{s}1}}
            \ldots
     ^{e_{\Intg{s}}}\!/_{e_{\Intg{s}\Intg{r}}}
     \\
     ^{w_{0}}/_{w_{0}}\ldots^{w_{\Intg{n}}}/_{w_{\Intg{n}}}
     \
     ^{z_{1}}/_{z_{1}}\ldots^{z_{\Intg{s}}}/_{z_{\Intg{s}}}
     \}
     \end{array}
\end{align}
\normalsize
Moreover, if we assume
$F'$ be such that
$F'\,\BNum{a}_{1}
   \,\BNum{n_{11}}\ldots\BNum{n_{\Intg{n}1}}
   \,\BNum{s_{11}}\ldots\BNum{s_{\Intg{s}1}}
   \,\BNum{r} \red^+\BNum{r'}$, for some $r'$,
then, we also have the two following reduction sequences:
\small
\begin{align}
\label{proposition:Dynamics of the transition function-PC2C}
&
\bs d_0 d_1\ldots d_{\Intg{n}}e_1\ldots e_{\Intg{s}}.
\bs w_0 w_1\ldots w_{\Intg{n}}z_1\ldots z_{\Intg{s}}.
\\
\nonumber
&
(\NextConf_{1+\Intg{n};\Intg{s}}[F']\,
    \lan\!\lan
    \BNum{r},
    \lan\BNum{a_{1}},[\BNum{a_{2}},\ldots,\BNum{a_{\Intg{r}}}]\ran
\\
\nonumber
&\phantom{\NextConf_{1+\Intg{n};\Intg{s}}[F']\,\lan\!\lan\BNum{r},}
    ,\lan\BNum{n_{11}},[\BNum{n_{12}},\ldots,\BNum{n_{1\Intg{r}}}]\ran
    ,\ldots
    ,\lan\BNum{n_{\Intg{n}1}},
         [\BNum{n_{\Intg{n}2}},
	     \ldots,\BNum{n_{\Intg{n}\Intg{r}}}]\ran
\\
\nonumber
&\phantom{\NextConf_{1+\Intg{n};\Intg{s}}[F']\,\lan\!\lan\BNum{r},}
    ,\lan\BNum{s_{11}},
         [\BNum{s_{12}},\ldots,\BNum{s_{1\Intg{r}}}]
     \ran
    ,\ldots
    ,\lan\BNum{s_{\Intg{s}1}},
         [\BNum{s_{\Intg{s}2}},\ldots,\BNum{s_{\Intg{s}\Intg{r}}}]
     \ran
    \ran\!\ran_{\scriptsize
	         \begin{array}[t]{l}
	          c_{1} \ldots c_{\Intg{r}}\\
	          d_{11}\ldots d_{1\Intg{r}}
		    \cdots\cdots
	          d_{\Intg{n}1}\ldots d_{\Intg{n}\Intg{r}}\\
	          e_{11}\ldots e_{1\Intg{r}}
		    \cdots\cdots
	          e_{\Intg{s}1}\ldots e_{\Intg{s}\Intg{r}}\\
	          w_{0} w_{1} \ldots w_{\Intg{n}}
		        z_{1} \ldots z_{\Intg{s}}
                 \end{array}
		\normalsize}
\\
\nonumber
&
  )\{
     \begin{array}[t]{l}
     ^{d_0}\!/_{c_1}\ldots^{d_0}\!/_{c_{\Intg{r}}}
     \\
     ^{d_{1}}\!/_{d_{11}}
            \ldots
     ^{d_{1}}\!/_{d_{1\Intg{r}}}
     \cdots
     ^{d_{\Intg{n}}}\!/_{d_{\Intg{n}1}}
            \ldots
     ^{d_{\Intg{n}}}\!/_{d_{\Intg{n}\Intg{r}}}
     \
     ^{e_{1}}\!/_{e_{11}}
            \ldots
     ^{e_{1}}\!/_{e_{1\Intg{r}}}
     \cdots
     ^{e_{\Intg{s}}}\!/_{e_{\Intg{s}1}}
            \ldots
     ^{e_{\Intg{s}}}\!/_{e_{\Intg{s}\Intg{r}}}
     \\
     ^{w_{0}}/_{w_{0}}\ldots^{w_{\Intg{n}}}/_{w_{\Intg{n}}}
     \
     ^{z_{1}}/_{z_{1}}\ldots^{z_{\Intg{s}}}/_{z_{\Intg{s}}}
     \}
     \qquad\qquad\qquad\qquad
     \qquad\qquad\qquad\qquad
     \red^+
     \end{array}
\\
&
\nonumber
   \begin{array}[t]{l}
    \lan\!\lan
    \BNum{r'},
    [\BNum{a_{2}},\ldots,\BNum{a_{\Intg{r}}}]
    ,[\BNum{n_{12}},\ldots,\BNum{n_{1\Intg{r}}}]
    ,\ldots
    [\BNum{n_{\Intg{n}2}},\ldots,\BNum{n_{\Intg{n}\Intg{r}}}]
    ,[\BNum{s_{12}},\ldots,\BNum{s_{1\Intg{r}}}]
    ,\ldots
    ,[\BNum{s_{\Intg{s}2}},\ldots,\BNum{s_{\Intg{s}\Intg{r}}}]
    \ran\!\ran
   \end{array}
\\
\label{proposition:Dynamics of the transition function-C2C}
&
\TransFunc_{1+\Intg{n};\Intg{s}}[F,F']\,
\lan\!\lan
 \BNum{r},
 [\BNum{a_{1}},\ldots,\BNum{a_{\Intg{r}}}]
 \!\!\!
 \begin{array}[t]{l}
  ,[\BNum{n_{11}},\ldots,\BNum{n_{1\Intg{r}}}]
  ,\ldots
  ,[\BNum{n_{\Intg{n}1}},\ldots,\BNum{n_{\Intg{n}\Intg{r}}}]\\
  ,[\BNum{s_{11}},\ldots,\BNum{s_{1\Intg{r}}}]
  ,\ldots
  ,[\BNum{s_{\Intg{s}1}},\ldots,\BNum{s_{\Intg{s}\Intg{r}}}]
  \ran\!\ran
  \red^+
 \end{array}
\\
&
\nonumber
    \lan\!\lan
    \BNum{r'}
    ,[\BNum{a'_{2}},\ldots,\BNum{a'_{\Intg{r}}}]
    ,[\BNum{n_{12}},\ldots,\BNum{n_{1\Intg{r}}}]
    ,\ldots
    ,[\BNum{n_{\Intg{n}2}},\ldots,\BNum{n_{\Intg{n}\Intg{r}}}]
    ,[\BNum{s_{12}},\ldots,\BNum{s_{1\Intg{r}}}]
    ,\ldots
    ,[\BNum{s_{\Intg{s}2}},\ldots,\BNum{s_{\Intg{s}\Intg{r}}}]
    \ran\!\ran
\end{align}
\normalsize
\end{proposition}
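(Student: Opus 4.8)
The plan is to establish the three reduction sequences by a direct analysis of $\red$, proving \eqref{proposition:Dynamics of the transition function-C2PC} and \eqref{proposition:Dynamics of the transition function-PC2C} separately and then deriving \eqref{proposition:Dynamics of the transition function-C2C} as their composition, since $\TransFunc_{1+\Intg n;\Intg s}[F,F']$ is literally $\NextConf_{1+\Intg n;\Intg s}[F']$ applied to the output of $\HeadsandTails_{1+\Intg n;\Intg s}[F]$, up to the plumbing $\bs b.\bs w_0\ldots z_{\Intg s}.\,\NextConf_{1+\Intg n;\Intg s}[F']\,(b\,w_0\ldots z_{\Intg s})$. Before touching the realizers I would isolate two auxiliary reduction facts. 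The first records how a list $[\BNum{b_1},\ldots,\BNum{b_{\Intg r}}]$, used as an iterator over the step function $\StepTransFunc^{m}[G]\,d$ with base $\BaseTransFunc^{m}\,w$, rewrites to the pair whose first component is the head $\BNum{b_1}$ and whose second component is the tail $[G'\BNum{b_2},\ldots,G'\BNum{b_{\Intg r}}]$, decorated with $d$ and $w$ in the slots prescribed by the pre-configuration realizer, where $G'\BNum{b_i}$ is whatever $\LEmbed{m}{1}{G}$ rewrites $\BNum{b_i}$ to (the identity when $G\equiv\Id$, and $\BNum{a'_i}$ when $G\equiv F$ by the hypothesis $F\,\BNum{a_i}\red^+\BNum{a'_i}$); this is proved by induction on $\Intg r$, using Proposition~\ref{proposition:Dynamics of the embeddings} to unfold $\LEmbed{m}{1}{G}$. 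The second auxiliary fact records that the combinator $H$ inside $\NextConf$, when fed those pairs, consumes exactly the heads $\BNum{a_1},\BNum{n_{11}},\ldots,\BNum{n_{\Intg n1}},\BNum{s_{11}},\ldots,\BNum{s_{\Intg s1}}$ together with the accumulator $\BNum{r}$, evaluates $F'$ on them, and hands back the tails; this is again a straightforward induction on $\Intg r$ and on the arities $\Intg n,\Intg s$.

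For \eqref{proposition:Dynamics of the transition function-C2PC} I would unfold $\HeadsandTails_{1+\Intg n;\Intg s}[F]$ and substitute the configuration realizer for its bound variable; since that realizer is a closed value the substitution is licensed, and its Church-numeral skeleton iterates the closed values $\StepTransFunc^{1}[F]\,d_0$, $\StepTransFunc^{1}[\Id]\,d_1,\ldots,\StepTransFunc^{1}[\Id]\,d_{\Intg n}$, $\StepTransFunc^{m}[\Id]\,e_1,\ldots,\StepTransFunc^{m}[\Id]\,e_{\Intg s}$ over the lists, with bottoms $\BaseTransFunc^{1}\,w_0,\ldots,\BaseTransFunc^{1}\,w_{\Intg n},\BaseTransFunc^{1}\,z_1,\ldots,\BaseTransFunc^{1}\,z_{\Intg s}$. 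All of these being closed values, the self-duplications forced by the Church-numeral structure are permitted under clause~\eqref{align:red-3}. Applying the first auxiliary fact to each of the $1+\Intg n+\Intg s$ lists produces the pair $\lan\BNum{a_1},[\BNum{a'_2},\ldots,\BNum{a'_{\Intg r}}]\ran$ on the leftmost, normal list (where $F$ has acted on the tail) and the pairs $\lan\BNum{n_{i1}},[\BNum{n_{i2}},\ldots,\BNum{n_{i\Intg r}}]\ran$, $\lan\BNum{s_{j1}},[\BNum{s_{j2}},\ldots,\BNum{s_{j\Intg r}}]\ran$ on the remaining ones; the outer wrapper then reassembles them into exactly the pre-configuration realizer on the right-hand side of \eqref{proposition:Dynamics of the transition function-C2PC}, the renaming substitution displayed there being precisely the identification --- forced by the unfoldings --- of the list-constructor variables of that realizer with the iterator arguments $d_i,e_j$ (and the identity on $w_k,z_\ell$).

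For \eqref{proposition:Dynamics of the transition function-PC2C} I would unfold $\NextConf_{1+\Intg n;\Intg s}[F']$ and substitute the pre-configuration realizer (again a closed value) for its bound variable. The realizer applies its head variable to $\BNum{r}$ and to the pair-forming bodies, while the continuation $\bs r\,t_0\,t_1\ldots t'_{\Intg s}.\,t'_{\Intg s}(\ldots(t_1(t_0\,H))\ldots)$ threads $H$ through the pairs from left to right. By the second auxiliary fact, $H$ feeds $F'$ the heads $\BNum{a_1},\BNum{n_{11}},\ldots,\BNum{n_{\Intg n1}},\BNum{s_{11}},\ldots,\BNum{s_{\Intg s1}}$ and the accumulator $\BNum{r}$, so the first component rewrites, by the hypothesis $F'\,\BNum{a_1}\,\BNum{n_{11}}\ldots\BNum{n_{\Intg n1}}\,\BNum{s_{11}}\ldots\BNum{s_{\Intg s1}}\,\BNum{r}\red^+\BNum{r'}$, to $\BNum{r'}$, while the remaining components are the tails $[\BNum{a_2},\ldots,\BNum{a_{\Intg r}}]$, $[\BNum{n_{i2}},\ldots,\BNum{n_{i\Intg r}}]$, $[\BNum{s_{j2}},\ldots,\BNum{s_{j\Intg r}}]$ together with the fresh bottoms $w_k,z_\ell$ supplied by the ambient $\bs d_0\ldots z_{\Intg s}$ and $\bs w_0\ldots z_{\Intg s}$ abstractions --- that is, up to $\alpha$-equivalence, the configuration realizer of $\llan\BNum{r'},[\BNum{a_2},\ldots],[\BNum{n_{12}},\ldots],\ldots\rran$.

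Finally, for \eqref{proposition:Dynamics of the transition function-C2C} I would feed a configuration realizer to $\TransFunc_{1+\Intg n;\Intg s}[F,F']$, contract the shared abstractions over $d_0,\ldots,e_{\Intg s}$, reduce the $\HeadsandTails$ sub-call by \eqref{proposition:Dynamics of the transition function-C2PC} (once the $d_i,e_j$ have been passed in, the renaming substitution there collapses to the identity, so the intermediate term is literally a pre-configuration realizer), and then apply \eqref{proposition:Dynamics of the transition function-PC2C}, instantiated at the post-$F$ tail $[\BNum{a'_2},\ldots,\BNum{a'_{\Intg r}}]$ but at the original heads, obtaining $\llan\BNum{r'},[\BNum{a'_2},\ldots,\BNum{a'_{\Intg r}}],[\BNum{n_{12}},\ldots],\ldots,[\BNum{s_{\Intg s2}},\ldots]\rran$. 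I expect the only real difficulty to be bookkeeping: at each contraction one must verify that the redex is one of $\red$ --- in particular that the self-duplications imposed by the Church-numeral skeletons copy only closed values (clause~\eqref{align:red-3}) and that the genuinely call-by-value spots, namely $\LEmbed{m}{1}{F}$ applied to a word and $H$ applied to the heads, have value arguments (clause~\eqref{align:red-2}) --- and one must carry the long indexed families $\BNum{a_i},\BNum{n_{ij}},\BNum{s_{ij}}$, the variables $c_k,d_{ij},e_{ij},w_k,z_\ell$, and the closed-value parameters $M_i,N_j$ of the pre-configuration realizer intact through both halves, so that the composition in \eqref{proposition:Dynamics of the transition function-C2C} is well formed and the renaming substitution vanishes. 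Each of the two auxiliary inductions on $\Intg r$ is routine once the base case $\Intg r=1$ is spelled out.
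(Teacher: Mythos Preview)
Your proposal is correct and follows essentially the same approach as the paper: unfold $\HeadsandTails$ to obtain \eqref{proposition:Dynamics of the transition function-C2PC}, unfold $\NextConf$ to obtain \eqref{proposition:Dynamics of the transition function-PC2C}, and compose the two for \eqref{proposition:Dynamics of the transition function-C2C}. The only difference is organizational: the paper writes out the explicit reduction chains using ellipses for the generic case, whereas you package the list-iteration and $H$-threading into two auxiliary facts proved by induction on $\Intg{r}$; this is a harmless, arguably cleaner, repackaging of the same computation.
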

%%%%%%%%
\textbf{Iterator.}
$\Iter{1+\Intg{n}}{\Intg{s}}{F_0}{F_1}{G}$ realizes a virtual machine that iterates two instances of the transition function starting from an initial configuration. One instance of the transition function depends on the term $F_0$, the other on $F_1$. The initial configuration is built using the term $G$. The choice about which transition function using depends on a copy of the first argument of the iterator, which is a word. A second copy is used by the iterator, through $\BInttoBCconf_{1+\Intg{n};\Intg{s}}$, to generate the initial configuration.
In particular, $\BInttoBCconf_{1+\Intg{n};\Intg{s}}$ exploits the term $\UInttoList$ that \emph{requires} two assumptions: one of them is an elementary partially discharged one, namely it must be $\$$-modal, and will correspond to one of the constant arguments of the safe recursion scheme we shall simulate.
Once the iteration stops, $\BCconftoBInt_{1+\Intg{n};\Intg{s}}$ reads the word, representing the result of the iteration, out of a final configuration.
\small
\begin{align*}
\Iter{1+\Intg{n}}{\Intg{s}}{F_0}{F_1}{G}
\equiv&
 \bs n.\bs n_1\ldots n_{\Intg{n}}.
\\
&
 \EEmbed{1}{0}{1+\Intg{n}+\Intg{s}}{H}
               \,(\LEmbed{1}{1}{\nabla_2}\, n)
               \,(\LEmbed{1}{1}{\Coerc^4}\, n_1)
		 \ldots
		 (\LEmbed{1}{1}{\Coerc^4}\, n_{\Intg{n}})
\\
&
H \text{ being }
\begin{array}[t]{l}
   \bs t
       n_1\ldots n_{\Intg{n}}
       s_1\ldots s_{\Intg{s}}.
   \\
   \phantom{\bs}
   t\, (\bs ab.
        (\bs zy.
	 \BCconftoBInt_{1+\Intg{n};\Intg{s}}\,
         (\BCconftoFConf_{1+\Intg{n};\Intg{s}}
	   \,(z\,(\TransFunc_{1+\Intg{n};\Intg{s}}[\Id,G]\,y)
	     ))
   \\\phantom{\bs t\,(\bs ab.}
	)(a\,\TransFunc_{1+\Intg{n};\Intg{s}}[\BSuccZ,F_0]
	   \,\TransFunc_{1+\Intg{n};\Intg{s}}[\BSuccO,F_1])
   \\\phantom{\bs t\,(\bs ab.)}
	 (\BInttoBCconf_{1+\Intg{n};\Intg{s}}
	               \,n_1\ldots n_{\Intg{n}}
	               \,s_1\ldots s_{\Intg{s}}
		       \,b)
       )
\end{array}
\\
\BInttoBCconf_{1+\Intg{n};\Intg{s}}
\equiv&
\bs n_1\ldots n_{\Intg{n}} s_1\ldots s_{\Intg{s}} w.
\\
&
(\bs t.
 t(\bs k_0\,k_1\ldots k_{\Intg{n}}
          \,h_1\ldots h_{\Intg{s}}.
\\
&\phantom{(\bs t.t(}
 \ListstoConf_{1+\Intg{n};\Intg{s}}
      \, (\UInttoList\, \BNum{0}\, (\USucc\, (\BInttoUInt\, k_0)))
\\
&\phantom{(\bs t.t(\ListstoConf_{1+\Intg{n};\Intg{s}}\,}
    (\UInttoList\, n_{1}\, (\USucc\, (\BInttoUInt\, k_1)))
 	  \ldots
	  (\UInttoList\, n_{\Intg{n}}\, (\USucc\, (\BInttoUInt\, k_{\Intg{n}})))
\\
&\phantom{(\bs t.t(\ListstoConf_{1+\Intg{n};\Intg{s}}\,}
  (\UInttoList\, s_{1}\, (\USucc\, (\BInttoUInt\, h_1)))
  \ldots
  (\UInttoList\, s_{\Intg{s}}\, (\USucc\, (\BInttoUInt\, h_{\Intg{s}})))
                  )
\\
&
)(\nabla_{1+\Intg{n}+\Intg{s}}\, w)
\\
\ListstoConf_{1+\Intg{n};\Intg{s}}
\equiv&
\bs \!
     \begin{array}[t]{l}
	l_0
        l_1           \ldots l_{\Intg{n}}
        l_{\Intg{n}+1}\ldots l_{\Intg{n}+\Intg{s}}.
	\bs d_0
	d_1 \ldots d_{\Intg{n}}
	e_1 \ldots e_{\Intg{s}}.\\
	(\bs b_0
	     b_1\ldots b_{\Intg{n}}
	     c_1\ldots c_{\Intg{s}}.
	  \bs w_0
	      w_1\ldots w_{\Intg{n}}
	      z_1\ldots z_{\Intg{s}}.\\
	  \phantom{(}
	   \bs x.
	   x\,
	   \BNum{0}\,
	   (b_0\, w_0)
	   (b_1\, w_1)
	   \ldots
	   (b_{\Intg{n}}\, w_{\Intg{n}})\,
	   (c_1\, z_{1})
	   \ldots
	   (c_{\Intg{s}}\, z_{\Intg{s}})
        \\
	)\,(l_0\,d_0)
	   (l_1\, d_1) \ldots (l_{\Intg{n}}\, d_{\Intg{n}})
	   (l_{\Intg{n}+1}\, e_{1})
		       \ldots (l_{\Intg{n}+\Intg{s}}\,e_{\Intg{s}})
	\end{array}
\\
\BCconftoFConf_{1+\Intg{n};\Intg{s}}
\equiv&
\bs c.\bs d_0\ldots d_{\Intg{n}}e_1\ldots e_{\Intg{s}}.
\\
&(\bs b w_0\ldots w_{\Intg{n}}z_1\ldots z_{\Intg{s}}.
  b\,w_0\ldots w_{\Intg{n}}\,z_1\ldots z_{\Intg{s}})
  (c\,d_0\ldots d_{\Intg{n}}\,e_1\ldots e_{\Intg{s}})
\\
\BCconftoBInt_{1+\Intg{n};\Intg{s}}
\equiv&
\bs c.
(\bs b. b\,\underbrace{\BNum{0}
                       \cdots
		       \BNum{0}}_{1+\Intg{n}+\Intg{s}}\,
           (\bs r x_0\ldots x_{\Intg{n}+\Intg{s}}. r)
)(c\,\underbrace{\bs xy.x
                 \cdots
		 \bs xy.x}_{1+\Intg{n}+\Intg{s}})
\end{align*}
\normalsize

\begin{proposition}[Typing the iterator]
\label{proposition:Typing the iterator}
Let $\Intg{n}, \Intg{s} \geq 0$, and $m\geq 1$.
Rules derivable in \WALT:
\small
\[
\infer[]
{\emptyset;\emptyset;\emptyset\vdash
 \ta{\ListstoConf_{\Intg{n};\Intg{s}}}{
 (\li^{\Intg{n}}_{i=0} \ListT \$\BIntT)\li
 (\li^{\Intg{s}}_{j=1} \ListT \$^{m} \BIntT) \li
 \bcConfT[1+\Intg{n};\Intg{s};m]}
}
{}
\]
\[
\infer[]
{\emptyset;\emptyset;\emptyset\vdash
 \ta{\BInttoBCconf_{\Intg{n};\Intg{s}}}
    {(\liv^{\Intg{n}}_{i=1}\$^{3}\BIntT)
     \liv
     (\liv^{\Intg{s}}_{j=1}\$^{m+2}\BIntT)\liv
     \BIntT\li
     \$\bcConfT[1+\Intg{n};\Intg{s};m]}
}
{}
\]
\[
\infer[]
{\emptyset;\emptyset;\emptyset\vdash
 \ta{\BCconftoFConf_{1+\Intg{n};\Intg{s}}}
    {\bcConfT[1+\Intg{n};\Intg{s};m]\li
     \FbcConfT[1+\Intg{n};\Intg{s};m]}
}
{}
\]
\[
\infer[]
{\emptyset;\emptyset;\emptyset\vdash
 \ta{\BCconftoBInt_{1+\Intg{n};\Intg{s}}}
    {\FbcConfT[1+\Intg{n};\Intg{s};m]\li\$^{m+1}\BIntT}
}
{}
\]
\[
\infer[]
{
\emptyset;\emptyset;\emptyset
 \vdash
 \ta{\Iter{1+\Intg{n}}{\Intg{s}}{G_0}{G_1}{G_2}}
    {\$\BIntT\liv
     (\liv^{\Intg{n}}_{i=1}\$\BIntT)\liv
     (\liv^{\Intg{s}}_{i=1}\$^{m+4}\BIntT)\liv
     \$^{m+4}\BIntT
    }
}
{
\emptyset;\emptyset;\emptyset
 \vdash
 \ta{G_k}
    {\$\BIntT\liv
     (\liv^{\Intg{n}}_{i=1}\$\BIntT)\liv
     (\liv^{\Intg{s}}_{j=1}\$^{m}\BIntT)\liv
     \$^{m}\BIntT\liv
     \$^{m}\BIntT
    }
 &
 k\in\{0,1,2\}
}
\]
\normalsize
\end{proposition}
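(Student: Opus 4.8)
The plan is to establish the five derivable rules one by one, each by exhibiting an explicit \WALT\ derivation assembled compositionally: I treat the typing rules of Propositions~\ref{proposition:Typing the embeddings}, \ref{proposition:Typing the coercions}, \ref{proposition:Typing the diagonals}, \ref{proposition:Typing the recasting combinators}, \ref{proposition:Typing the transition function}, \ref{proposition:Typing the configurations} and~\ref{proposition:Typing the pre-configurations} (together with the macro-rules of Propositions~\ref{proposition:Typing tuples}, \ref{proposition:Typing rules relative to the elementary tensor}, \ref{proposition:Typing rules relative to lists} and~\ref{proposition:Typing rules relative to strings}) as derived rules, and glue them with the logical rules of Figure~\ref{figure:Weak LAL as a type assignment}. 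The derivations for $\ListstoConf_{\Intg{n};\Intg{s}}$, $\BCconftoFConf_{1+\Intg{n};\Intg{s}}$ and $\BCconftoBInt_{1+\Intg{n};\Intg{s}}$ are essentially $\lambda$-term bookkeeping; the one for $\BInttoBCconf_{\Intg{n};\Intg{s}}$ needs a little care with $\$$-levels; and the one for $\Iter{1+\Intg{n}}{\Intg{s}}{G_0}{G_1}{G_2}$ is the crux and the place where I expect the real work.

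For $\ListstoConf$, $\BCconftoFConf$ and $\BCconftoBInt$ I would read each $\lambda$-term outside-in: every leading abstraction is introduced by $\li I$, $\li I_{\$}$ or, on a tuple pattern, by $\li I_{\otimes}$, and every application is closed by $\li E$, $\li E_{!}$ or $\liv E$, matching the unfolded shape of $\bcConfT[1+\Intg{n};\Intg{s};m]$ and $\FbcConfT[1+\Intg{n};\Intg{s};m]$. For $\BCconftoFConf$ the only genuine content is the $\forall$-manipulation that extrudes the quantifier on $\gamma$, turning $\bcConfT[1+\Intg{n};\Intg{s};m]$ into $\FbcConfT[1+\Intg{n};\Intg{s};m]$; for $\BCconftoBInt$ one instantiates the $\alpha_i$ with $\BIntT$ (resp.\ $\$^{m}\BIntT$), instantiates $\gamma$ with $\$^{m+1}\BIntT$, supplies the $1+\Intg{n}+\Intg{s}$ copies of $\bs xy.x$ as the $!$-modal step arguments (each closed, hence legally enclosed in a $!$-box) and projects out the first component. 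None of the three triggers a box side-condition.

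For $\BInttoBCconf_{\Intg{n};\Intg{s}}$ the point to watch is the dereliction-like role of $\UInttoList$, whose type $\$^{2}A\liv\UIntT\li\ListT\,\$A$ (Proposition~\ref{proposition:Typing the recasting combinators}) absorbs one $\$$ of its first argument into the $\$$ hidden in the output list. The input word is duplicated $1+\Intg{n}+\Intg{s}$ times inside a single $\$$-box by $\nabla_{1+\Intg{n}+\Intg{s}}$ (Proposition~\ref{proposition:Typing the diagonals}); each copy is turned into a counter through $\USucc\,(\BInttoUInt\,\cdot)$ and fed, together with the corresponding argument, to $\UInttoList$, and the resulting lists are handed to $\ListstoConf_{1+\Intg{n};\Intg{s}}$. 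What has to be checked is that the lists come out at $\ListT\,\$\BIntT$ for the normal counters and at $\ListT\,\$^{m}\BIntT$ for the safe values, which forces $n_1,\ldots,n_{\Intg{n}}$ to enter at $\$^{3}\BIntT$ and $s_1,\ldots,s_{\Intg{s}}$ at $\$^{m+2}\BIntT$ (the two extra $\$$: one consumed by the $\UInttoList$ dereliction, one for the ambient box), and that these assumptions are elementary partially discharged, so that $\liv I$ may close them; the conclusion then sits one box deep, i.e.\ $\$\bcConfT[1+\Intg{n};\Intg{s};m]$.

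The rule for $\Iter{1+\Intg{n}}{\Intg{s}}{G_0}{G_1}{G_2}$ is where the argument concentrates. The word $n$ is used twice. One occurrence iterates the two closed transition functions $\TransFunc_{1+\Intg{n};\Intg{s}}[\BSuccZ,G_0]$ and $\TransFunc_{1+\Intg{n};\Intg{s}}[\BSuccO,G_1]$, each typed $\bcConfT[1+\Intg{n};\Intg{s};m]\li\bcConfT[1+\Intg{n};\Intg{s};m]$ by Proposition~\ref{proposition:Typing the transition function}: being closed they go into $!$-boxes, are matched against the two occurrences of $!(\alpha\li\alpha)$ in $\BIntT$ with $\alpha:=\bcConfT[1+\Intg{n};\Intg{s};m]$, so $n$ becomes a map of type $\$(\bcConfT[1+\Intg{n};\Intg{s};m]\li\bcConfT[1+\Intg{n};\Intg{s};m])$. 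The other occurrence, replicated one box deep through $\LEmbed{1}{1}{\nabla_2}$, builds the initial configuration via $\BInttoBCconf_{1+\Intg{n};\Intg{s}}$ (the previous item); the readout composes $\BCconftoFConf_{1+\Intg{n};\Intg{s}}$ with $\BCconftoBInt_{1+\Intg{n};\Intg{s}}$ (the latter absorbing one more $\$$). Everything is enclosed by the eager embedding $\EEmbed{1}{0}{1+\Intg{n}+\Intg{s}}{H}$, with the normal arguments $n_1,\ldots,n_{\Intg{n}}$ pre-processed by $\LEmbed{1}{1}{\Coerc^{4}}$ (Propositions~\ref{proposition:Typing the embeddings} and~\ref{proposition:Typing the coercions}); the $\liv$-arrows in the target are introduced by $\liv I$, and the closedness side-conditions of $\liv E$ hold because the $\Coerc$/embedding wrappers close their arguments and $G_0,G_1,G_2$ are closed. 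The main obstacle — and the only step that is not mechanical — is the depth arithmetic: one must verify that the four extra $\$$ modalities accumulated by $\Coerc^{4}$ on the normal arguments, by the $\nabla_2$ box, by the $\EEmbed{1}{0}{1+\Intg{n}+\Intg{s}}{H}$ wrapper and by the box that is opened when $n$ is used as an iterator combine into exactly the uniform shift from $\$^{m}$ in the hypotheses on $G_0,G_1,G_2$ to $\$^{m+4}$ in the conclusion, and that each $\TransFunc_{1+\Intg{n};\Intg{s}}[\cdot,G_k]$ occurs at precisely modality $m$. Once the derivations of the four auxiliary combinators have been pinned down, this is a finite, if fussy, computation with no further subtleties.
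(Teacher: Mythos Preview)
Your proposal is correct and follows essentially the same route as the paper's proof, which also treats the five combinators case by case, exhibiting for each the key intermediate judgments and composing them via the already established typing propositions (in particular the dereliction-like use of $\UInttoList$ for $\BInttoBCconf$, and the decomposition of the iterator into the auxiliary terms $H$ and $H'$ with explicit $\$$-level bookkeeping). One small slip: in your last paragraph the attribution of the four extra $\$$'s is muddled---$\Coerc^{4}$ and the $\nabla_2$ box act on the \emph{normal} arguments, not on the safe shift from $\$^{m}$ to $\$^{m+4}$; the paper's accounting traces the four levels to $\BCconftoBInt$, the $\$$-box in $H'$, the box opened by the pair destructor on $t$, and the outer $\EEmbed{1}{\ldots}{H}$---but you correctly identify this as a finite level computation with no further conceptual content.
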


\begin{proposition}[Dynamics of the combinators for the iterator.]
\label{proposition:Dynamics of the combinators for the iterator}
Let $\Intg{n}, \Intg{s}, \Intg{r}$, and $m\geq 0$.
\par
For every
$
\BNum{r},
[\BNum{a_{1}},\ldots,\BNum{a_{\Intg{r}}}]
,[\BNum{n_{11}},\ldots,\BNum{n_{1\Intg{r}}}]
,\ldots
,[\BNum{n_{\Intg{n}1}},\ldots,\BNum{n_{\Intg{n}\Intg{r}}}]
,[\BNum{s_{11}},\ldots,\BNum{s_{1\Intg{r}}}],
\ldots,[\BNum{s_{\Intg{s}1}},\ldots,\BNum{s_{\Intg{s}\Intg{r}}}]
$
we have:
\small
\begin{align}
\label{proposition:Dynamics of the transition function-L2C}
&
\ListstoConf_{1+\Intg{n};\Intg{s}}\,
 [\BNum{a_{1}},\ldots,\BNum{a_{\Intg{r}}}]\,
 [\BNum{n_{11}},\ldots,\BNum{n_{1\Intg{r}}}]
 \ldots
 [\BNum{n_{\Intg{n}1}},\ldots,\BNum{n_{\Intg{n}\Intg{r}}}]\,
\\
\nonumber
&\phantom{\ListstoConf_{1+\Intg{n};\Intg{s}}\,[\BNum{a_{1}},\ldots,\BNum{a_{\Intg{r}}}]\,}
 [\BNum{s_{11}},\ldots,\BNum{s_{1\Intg{r}}}]
 \ldots
 [\BNum{s_{\Intg{s}1}},\ldots,\BNum{s_{\Intg{s}\Intg{r}}}]
\red^+
\\
&
\nonumber
\bcConf{
 \BNum{0}
,[\BNum{a_{1}},\ldots,\BNum{a_{\Intg{r}}}]
,[\BNum{n_{11}},\ldots,\BNum{n_{1\Intg{r}}}]
,\ldots
,[\BNum{n_{\Intg{n}1}},\ldots,\BNum{n_{\Intg{n}\Intg{r}}}]
,[\BNum{s_{11}},\ldots,\BNum{s_{1\Intg{r}}}]
,\ldots
,[\BNum{s_{\Intg{s}1}},\ldots,\BNum{s_{\Intg{s}\Intg{r}}}]
}
\end{align}
\normalsize
For every
$\BNum{n_{1}},\ldots,\BNum{n_{\Intg{n}}}
 ,\ldots
 ,\BNum{s_{1}},\ldots,\BNum{s_{\Intg{s}}}$
we have:
\small
\begin{align}
\label{proposition:Dynamics of the transition function-W2C}
&\BInttoBCconf_{1+\Intg{n};\Intg{s}}\,
 \BNum{n_{1}} \ldots \BNum{n_{\Intg{n}}}\,
 \BNum{s_{1}} \ldots \BNum{s_{\Intg{s}}}\,
 \BNum{n}
\red^+
\\
&
\nonumber
\bcConf{
 \BNum{0}
,[\BNum{0},\ldots,\BNum{0}]
,[\BNum{n_{1}},\ldots,\BNum{n_{1}}]
,\ldots
,[\BNum{n_{\Intg{n}}},\ldots,\BNum{n_{\Intg{n}}}]
,[\BNum{s_{1}},\ldots,\BNum{s_{1}}]
,\ldots
,[\BNum{s_{\Intg{s}}},\ldots,\BNum{s_{\Intg{s}}}]
}
\end{align}
\normalsize
where every list of the result has $m+2$ elements whenever $\BNum{n}$ can be written as $\BNum{2^m\nu_m+\ldots+2^0\nu_0}$.
\par
For every
$\BNum{r},
[\BNum{a_{1}},\ldots,\BNum{a_{\Intg{r}}}]
,[\BNum{n_{11}},\ldots,\BNum{n_{1\Intg{r}}}]
,\ldots
,[\BNum{n_{\Intg{n}1}},\ldots,\BNum{n_{\Intg{n}\Intg{r}}}]
,[\BNum{s_{11}},\ldots,\BNum{s_{1\Intg{r}}}],
\ldots,[\BNum{s_{\Intg{s}1}},\ldots,\BNum{s_{\Intg{s}\Intg{r}}}]$
we have:
\small
\begin{align}
\label{proposition:Dynamics of the transition function-C2FC}
&
\BCconftoFConf_{1+\Intg{n};\Intg{s}}\,
\lan\!\lan
 \BNum{r},
 [\BNum{a_{1}},\ldots,\BNum{a_{\Intg{r}}}]
  ,[\BNum{n_{11}},\ldots,\BNum{n_{1\Intg{r}}}]
  ,\ldots
  ,[\BNum{n_{\Intg{n}1}},\ldots,\BNum{n_{\Intg{n}\Intg{r}}}]
\\
\nonumber
&\phantom{\BCconftoFConf_{1+\Intg{n};\Intg{s}}\,\lan\!\lan\BNum{r},
 [\BNum{a_{1}},\ldots,\BNum{a_{\Intg{r}}}]}
  ,[\BNum{s_{11}},\ldots,\BNum{s_{1\Intg{r}}}]
  ,\ldots
  ,[\BNum{s_{\Intg{s}1}},\ldots,\BNum{s_{\Intg{s}\Intg{r}}}]
\ran\!\ran
\\
&
\nonumber
\red^+
\lan\!\lan
 \BNum{r},
 [\BNum{a_{1}},\ldots,\BNum{a_{\Intg{r}}}]
  ,[\BNum{n_{11}},\ldots,\BNum{n_{1\Intg{r}}}]
  ,\ldots
  ,[\BNum{n_{\Intg{n}1}},\ldots,\BNum{n_{\Intg{n}\Intg{r}}}]
  ,[\BNum{s_{11}},\ldots,\BNum{s_{1\Intg{r}}}]
  ,\ldots
  ,[\BNum{s_{\Intg{s}1}},\ldots,\BNum{s_{\Intg{s}\Intg{r}}}]
  \ran\!\ran
\end{align}
\normalsize
\par
For every
$\BNum{r},
[\BNum{a_{1}},\ldots,\BNum{a_{\Intg{r}}}]
,[\BNum{n_{11}},\ldots,\BNum{n_{1\Intg{r}}}]
,\ldots
,[\BNum{n_{\Intg{n}1}},\ldots,\BNum{n_{\Intg{n}\Intg{r}}}]
,[\BNum{s_{11}},\ldots,\BNum{s_{1\Intg{r}}}], \ldots,[\BNum{s_{\Intg{s}1}},\ldots,\BNum{s_{\Intg{s}\Intg{r}}}]$
we have:
\small
\begin{align}
\label{proposition:Dynamics of the transition function-C2W}
&
\BCconftoBInt_{1+\Intg{n};\Intg{s}}\,
\lan\!\lan
 \BNum{r},
 [\BNum{a_{1}},\ldots,\BNum{a_{\Intg{r}}}]
  ,[\BNum{n_{11}},\ldots,\BNum{n_{1\Intg{r}}}]
  ,\ldots
  ,[\BNum{n_{\Intg{n}1}},\ldots,\BNum{n_{\Intg{n}\Intg{r}}}],
\\
\nonumber
&
\phantom{\BCconftoBInt_{1+\Intg{n};\Intg{s}}\,\lan\!\lan\BNum{r},[\BNum{a_{1}},\ldots,\BNum{a_{\Intg{r}}}]\ \ \, }
   [\BNum{s_{11}},\ldots,\BNum{s_{1\Intg{r}}}]
  ,\ldots
  ,[\BNum{s_{\Intg{s}1}},\ldots,\BNum{s_{\Intg{s}\Intg{r}}}]
  \ran\!\ran
  \red^+\BNum{r}
\end{align}
\normalsize
\end{proposition}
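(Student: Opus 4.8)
I would establish the four reductions one at a time, in each case unfolding the definition of the combinator in play and then composing the dynamics propositions already proved for its constituents. At every $\beta$-step the only thing that needs checking is that the contracted redex is admitted by $\red$ (Definition~\ref{definition:Rewriting PT}); since $\red$ insists on value arguments, several of the reduction sequences must be built innermost-first, but this is always possible here because the arguments in question are closed terms (or plain variables) that normalize to values, and once an argument is a closed value or a variable --- so that its set of free variables sits inside a singleton --- one of the clauses \eqref{align:red-1}--\eqref{align:red-3} applies regardless of how many free occurrences the bound variable has. I would prove \eqref{proposition:Dynamics of the transition function-L2C} before \eqref{proposition:Dynamics of the transition function-W2C}, since the former feeds the latter.

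The reductions \eqref{proposition:Dynamics of the transition function-L2C} for $\ListstoConf_{1+\Intg{n};\Intg{s}}$ and \eqref{proposition:Dynamics of the transition function-C2FC} for $\BCconftoFConf_{1+\Intg{n};\Intg{s}}$ are pure $\beta$-unfoldings invoking no nontrivial sub-dynamics. For $\ListstoConf$, each argument list $[\BNum{n_{i1}},\ldots,\BNum{n_{i\Intg{r}}}]\equiv\bs cx.c\,\BNum{n_{i1}}(\cdots(c\,\BNum{n_{i\Intg{r}}}\,x)\cdots)$ is applied to the parameter $d_i$ (resp.\ $e_j$) and then, once the inner abstraction receives $w_i$ (resp.\ $z_j$), produces exactly the $i$-th component $d_i\,\BNum{n_{i1}}(\cdots(d_i\,\BNum{n_{i\Intg{r}}}\,w_i)\cdots)$ of a canonical configuration realizer; assembling these with the leading $\BNum{0}$ gives the realizer of $\bcConf{\BNum{0},[\BNum{a_1},\ldots],\ldots}$. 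Dually, $\BCconftoFConf$ is a copattern $\eta$-expansion: on a canonical realizer it merely re-supplies $d_0,\ldots,e_{\Intg{s}}$ and then $w_0,\ldots,z_{\Intg{s}}$ in the same order, so after finitely many $\beta$-steps it returns a term of the same shape, which by inspection of the grammar is a canonical realizer of $\FbcConfT[1+\Intg{n};\Intg{s};m]$.

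For \eqref{proposition:Dynamics of the transition function-W2C} I would chain four reductions. First $\nabla_{1+\Intg{n}+\Intg{s}}\,\BNum{n}\red^{+}\lan\BNum{n},\ldots,\BNum{n}\ran$ with $1+\Intg{n}+\Intg{s}$ copies (Proposition~\ref{proposition:Dynamics of the diagonals}); unfolding the definition of the tensor constructor, this tuple applied to the inner $\bs k_0\ldots h_{\Intg{s}}$-abstraction passes a copy of $\BNum{n}$ to each of $k_0,\ldots,k_{\Intg{n}},h_1,\ldots,h_{\Intg{s}}$. Then, for each copy, $\BInttoUInt\,\BNum{n}$ becomes the string $\UNum{m+1}$ when $n=2^{m}+2^{m-1}\nu_{m-1}+\cdots+\nu_0\ge 1$, and $\UNum{0}$ when $n=0$ (Proposition~\ref{proposition:Dynamics of the recasting combinators}); $\USucc$ then gives $\UNum{m+2}$, resp.\ $\UNum{1}$ (Proposition~\ref{proposition:Dynamics of the successor on strings}); and $\UInttoList\,\BNum{0}\,\UNum{m+2}\red^{+}[\BNum{0},\ldots,\BNum{0}]$ with $m+2$ entries, and likewise $\UInttoList\,\BNum{n_i}\,\UNum{m+2}$ and $\UInttoList\,\BNum{s_j}\,\UNum{m+2}$ (Proposition~\ref{proposition:Dynamics of the recasting combinators}; each of $\BNum{n_i},\BNum{s_j}$ is a closed value, as $\UInttoList$ requires). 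This leaves $\ListstoConf_{1+\Intg{n};\Intg{s}}$ applied to $1+\Intg{n}+\Intg{s}$ lists of common length $m+2$, and \eqref{proposition:Dynamics of the transition function-L2C} with $\Intg{r}:=m+2$ delivers the asserted configuration together with the length count.

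Finally, for \eqref{proposition:Dynamics of the transition function-C2W}, $\BCconftoBInt_{1+\Intg{n};\Intg{s}}$ supplies the final-configuration realizer with $1+\Intg{n}+\Intg{s}$ copies of $\bs xy.x$ in its successor slots and $1+\Intg{n}+\Intg{s}$ copies of $\BNum{0}$ in its remaining slots, so the realizer $\beta$-reduces to $(\bs r\,x_0\ldots x_{\Intg{n}+\Intg{s}}.r)\,\BNum{r}\,M_0\cdots M_{\Intg{n}+\Intg{s}}$ for certain folds $M_i$ of $\bs xy.x$ over the word-lists; since $\BNum{r}$ is a value and $x_0,\ldots,x_{\Intg{n}+\Intg{s}}$ have no occurrences in the body, clause \eqref{align:red-2} followed by clause \eqref{align:red-1} lets me contract to $\BNum{r}$ without ever reducing the discarded $M_i$. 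In all four cases the computations are long but mechanical; the one point that genuinely needs care is the admissibility check flagged in the first paragraph, and since it always resolves the same way --- substituted subterms are closed values, or variables, or substituted for a variable with no occurrences --- the genuinely hard part is bookkeeping rather than mathematics.
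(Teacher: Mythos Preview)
Your proposal is correct and follows essentially the same approach as the paper: the appendix proof of this proposition consists of explicit step-by-step $\red^{+}$-unfoldings of each combinator, chaining exactly the sub-dynamics you name (Propositions~\ref{proposition:Dynamics of the diagonals}, \ref{proposition:Dynamics of the recasting combinators}, \ref{proposition:Dynamics of the successor on strings}, and the already-established \eqref{proposition:Dynamics of the transition function-L2C} inside \eqref{proposition:Dynamics of the transition function-W2C}). Your extra care in verifying that each contracted redex is admitted by clauses \eqref{align:red-1}--\eqref{align:red-3} is a point the paper leaves implicit, but the decomposition and the order of the four parts are the same.
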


\begin{proposition}[Dynamics of the iterator.]
\label{proposition:Dynamics of the iterator}
Let
$\Intg{n}, \Intg{s}\geq 0$,
$\BNum{a},
 \BNum{n},
 \BNum{n_{1}},\ldots,\BNum{n_{\Intg{n}}},
 \BNum{s_{1}},\ldots,\BNum{s_{\Intg{s}}}$ be some words,
$G_0, G_1, G_2$ be three closed typable terms,
$\{\nu_0, \nu_1, \ldots\}$ be a denumerable set of metavariables to range over $\{0, 1\}$,
$[\BNum{x}]^i$ be a notation for a list with $l$ copies of the word $\BNum{x}$, for any $x$ and $i$.
\par
Let
$G_{2}\,
 \BNum{0}\,
 \BNum{n_{1}}\,\ldots\,\BNum{n_{\Intg{n}}}\,
 \BNum{s_{1}}\,\ldots\,\BNum{s_{\Intg{s}}}\,
 \BNum{0}$ rewrite to a word $\BNum{a}$, and
let
$G_{1}\,
 \BNum{0}\,
 \BNum{n_{1}}\,\ldots\,\BNum{n_{\Intg{n}}}\,
 \BNum{s_{1}}\,\ldots\,\BNum{s_{\Intg{s}}}\,
 \BNum{a}$ rewrite to a word
\\
$r[0,a,n_1,\ldots,n_{\Intg{n}},s_1,\ldots,s_{\Intg{s}}]$, and,
for every $m, i$, such that $m\geq0, m-1\geq i\geq 0$:
\small
$$G_{\nu_i}\,
 \BNum{\left(\sum_{j=0}^{m-(i+1)} 2^{m-(i+1)-j} \nu_{m-j}\right)}\,
 \BNum{n_{1}}\,\ldots\,\BNum{n_{\Intg{n}}}\,
 \BNum{s_{1}}\,\ldots\,\BNum{s_{\Intg{s}}}\,
 r[m-(i+1),a,n_1,\ldots,n_{\Intg{n}},s_1,\ldots,s_{\Intg{s}}]
$$
\normalsize
rewrite to a word
$r[m-i,a,n_1,\ldots,n_{\Intg{n}},s_1,\ldots,s_{\Intg{s}}]$.
Then:
\begin{enumerate}
\item
\label{proposition:Dynamics of the iterator-part-a}
For every $m, k, i$, such that $m\geq0, k\geq m+1, m\geq i\geq 0$,
the following iterated application of the transition function:
\small
\begin{align}
\nonumber
&
\TransFunc_{1+\Intg{n};\Intg{s}}[\BSucc{\nu_i},G_{\nu_i}]
\\
\nonumber
&
\qquad
 (\TransFunc_{1+\Intg{n};\Intg{s}}[\BSucc{\nu_{i+1}},G_{\nu_{i+1}}]
(
\ldots
\\
\nonumber
&
\qquad
\qquad
(
\TransFunc_{1+\Intg{n};\Intg{s}}[\BSucc{\nu_m},G_{\nu_m}]
\lan\!\lan
 \BNum{a},
 [\BNum{0}]^k
  ,[\BNum{n_{1}}]^k
  ,\ldots
  ,[\BNum{n_{\Intg{n}}}]^k
  ,[\BNum{s_{1}}]^k
  ,\ldots
  ,[\BNum{s_{\Intg{s}}}]^k
\ran\!\ran
)
\ldots
))
\end{align}
\normalsize
rewrites to the configuration:
\small
$$\lan\!\lan
r[m-i,a,n_1,\ldots,n_{\Intg{n}},s_1,\ldots,s_{\Intg{s}}],
\left[ \,\BNum{\sum_{j=0}^{m-i}2^{m-i-j} \nu_{m-j}}\,\right]^{k-(m-i)-1},
\!\!
\begin{array}[t]{l}
 [\BNum{n_{1}}]^{k-(m-i)-1}
,\ldots
,[\BNum{n_{\Intg{n}}}]^{k-(m-i)-1}
,
\\{[\BNum{s_{1}}]^{k-(m-i)-1}}
,\ldots
,[\BNum{s_{\Intg{s}}}]^{k-(m-i)-1}
\ran\!\ran
\enspace ,
\end{array}
$$
\normalsize
where $\BSucc{\nu_i}$ is $\BSuccZ$ when $\nu_i\equiv 0$,
and $\BSucc{\nu_i}$ is $\BSuccO$ when $\nu_i\equiv 1$.
\item
\label{proposition:Dynamics of the iterator-part-b}
The iterator behaves as follows:
\small
\begin{align}
\nonumber
\Iter{1+\Intg{n}}{\Intg{s}}{G_0}{G_1}{G_2}\,
\BNum{0}\,
\BNum{n_{1}}\,\ldots\,\BNum{n_{\Intg{n}}}\,
\BNum{s_{1}}\,\ldots\,\BNum{s_{\Intg{s}}}
&\red^+ \BNum{a}
\\
\nonumber
% \label{align:iter-behaviour}
\Iter{1+\Intg{n}}{\Intg{s}}{G_0}{G_1}{G_2}\,
\BNum{\left(\sum_{j=0}^{m} 2^{j} \nu_j \right)}\,
\BNum{n_{1}}\,\ldots\,\BNum{n_{\Intg{n}}}\,
\BNum{s_{1}}\,\ldots\,\BNum{s_{\Intg{s}}}
&\red^+
r[m,a,n_1,\ldots,n_{\Intg{n}},s_1,\ldots,s_{\Intg{s}}]
&\left(\BNum{\sum_{j=0}^{m} 2^{j} \nu_j}\neq 0\right)
\end{align}
\normalsize
\end{enumerate}
\end{proposition}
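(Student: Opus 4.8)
The plan is to prove part~\ref{proposition:Dynamics of the iterator-part-a} first, as the genuinely computational step, and then to obtain part~\ref{proposition:Dynamics of the iterator-part-b} by unfolding $\Iter{1+\Intg{n}}{\Intg{s}}{G_0}{G_1}{G_2}$ via the dynamics propositions already proved and feeding the resulting configuration to part~\ref{proposition:Dynamics of the iterator-part-a}. All the reductions below are $\red^{+}$-sequences under the restricted rewriting of Definition~\ref{definition:Rewriting PT}, since they are obtained by composing already-established dynamics statements.

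For part~\ref{proposition:Dynamics of the iterator-part-a} I would argue by induction on $m-i$. The base case $i=m$ is one instance of $\TransFunc_{1+\Intg{n};\Intg{s}}[\BSucc{\nu_m},G_{\nu_m}]$ applied to $\bcConf{\BNum{a},[\BNum{0}]^{k},[\BNum{n_{1}}]^{k},\ldots}$: by \eqref{proposition:Dynamics of the transition function-C2C} it consumes the head $\BNum{0}$ of the first list and maps $\BSucc{\nu_m}$ on its tail, turning $[\BNum{0}]^{k}$ into $[\BNum{\nu_m}]^{k-1}$ by Proposition~\ref{proposition:Dynamics of combinators relative to words}; every other list merely loses its head (the transition function maps $\Id$ on those tails); and the word component becomes the normal form of $G_{\nu_m}\,\BNum{0}\,\BNum{n_{1}}\ldots\BNum{s_{\Intg{s}}}\,\BNum{a}$, which the standing hypothesis on $G_{1}$ identifies with $r[0,\ldots]$ (here $\nu_m=1$, the leading bit of a nonzero word). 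The inductive step repeats this with one further instance of \eqref{proposition:Dynamics of the transition function-C2C} on top of the induction hypothesis; the only arithmetic to check is that applying $\BSucc{\nu_{i}}$ to the word of value $\sum_{j=0}^{m-(i+1)}2^{m-(i+1)-j}\nu_{m-j}$ produces the word of value $\sum_{j=0}^{m-i}2^{m-i-j}\nu_{m-j}$ (since $\BSucc{b}$ realizes $v\mapsto 2v+b$), so that the standing hypothesis on $G_{\nu_i}$ applies verbatim, with this word as recursion-count argument and $r[m-(i+1),\ldots]$ as recursive-result argument, updating the word component to $r[m-i,\ldots]$; each list drops exactly one element, which gives the exponents $k-(m-i)-1$.

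For part~\ref{proposition:Dynamics of the iterator-part-b} I would unfold $\Iter{1+\Intg{n}}{\Intg{s}}{G_0}{G_1}{G_2}\,\BNum{n}\,\BNum{n_{1}}\ldots\BNum{n_{\Intg{n}}}\,\BNum{s_{1}}\ldots\BNum{s_{\Intg{s}}}$. Using the dynamics of the embeddings (Proposition~\ref{proposition:Dynamics of the embeddings}), of the coercions (Proposition~\ref{proposition:Dynamics of the coercion}), of the diagonals (Proposition~\ref{proposition:Dynamics of the diagonals}) and of the tensors (Proposition~\ref{proposition:Dynamics relative to tensors}), the outer scaffolding reduces: $\LEmbed{1}{1}{\nabla_{2}}\,\BNum{n}$ to $\lan\BNum{n},\BNum{n}\ran$, each $\LEmbed{1}{1}{\Coerc^{4}}\,\BNum{n_{i}}$ to $\BNum{n_{i}}$, after which the body of $H$ receives $a=b=\BNum{n}$. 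The copy $a$ becomes the word iterated over $\TransFunc_{1+\Intg{n};\Intg{s}}[\BSuccZ,G_{0}]$ and $\TransFunc_{1+\Intg{n};\Intg{s}}[\BSuccO,G_{1}]$, while $b$ is passed to $\BInttoBCconf_{1+\Intg{n};\Intg{s}}\,\BNum{n_{1}}\ldots\BNum{n_{\Intg{n}}}\,\BNum{s_{1}}\ldots\BNum{s_{\Intg{s}}}$, which by \eqref{proposition:Dynamics of the transition function-W2C} yields the initial configuration $\bcConf{\BNum{0},[\BNum{0}]^{m+2},[\BNum{n_{1}}]^{m+2},\ldots}$ (lists of length $m+2$, and of length $1$ in the degenerate case $\BNum{n}=\BNum{0}$). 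A single application of $\TransFunc_{1+\Intg{n};\Intg{s}}[\Id,G_{2}]$ to this configuration, via \eqref{proposition:Dynamics of the transition function-C2C} and the $G_{2}$-hypothesis, gives $\bcConf{\BNum{a},[\BNum{0}]^{m+1},[\BNum{n_{1}}]^{m+1},\ldots}$; then, reading off the word constructor \eqref{align:word-constructor}, the word iterator performs $\TransFunc_{1+\Intg{n};\Intg{s}}[\BSuccO,G_{1}]$ for the implicit leading bit and then the $\TransFunc_{1+\Intg{n};\Intg{s}}[\BSucc{\nu},G_{\nu}]$-steps for the remaining bits, which is exactly the nested application of part~\ref{proposition:Dynamics of the iterator-part-a} with $i=0$, $k=m+1$ and $\nu_m:=1$; this produces $\bcConf{r[m,\ldots],\Nil,\ldots,\Nil}$, since $k-(m-0)-1=0$. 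Finally $\BCconftoFConf_{1+\Intg{n};\Intg{s}}$ (by \eqref{proposition:Dynamics of the transition function-C2FC}) and $\BCconftoBInt_{1+\Intg{n};\Intg{s}}$ (by \eqref{proposition:Dynamics of the transition function-C2W}) read $r[m,\ldots]$ out of this final configuration, which is the claim; when $\BNum{n}=\BNum{0}$ the word iterator is the identity, so already after the single $\TransFunc_{1+\Intg{n};\Intg{s}}[\Id,G_{2}]$ step the configuration is $\bcConf{\BNum{a},\Nil,\ldots,\Nil}$ and extraction returns $\BNum{a}$.

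The main obstacle I expect is the $\beta$-reduction bookkeeping in part~\ref{proposition:Dynamics of the iterator-part-b}: one must trace, through the $\EEmbed{1}{0}{1+\Intg{n}+\Intg{s}}{H}$, $\LEmbed{1}{1}{\nabla_{2}}$ and $\LEmbed{1}{1}{\Coerc^{4}}$ wrappers, which of the two $\nabla_{2}$-copies of $\BNum{n}$ drives the iteration and which feeds $\BInttoBCconf_{1+\Intg{n};\Intg{s}}$, and --- crucially for the final extraction to be meaningful --- check that the list lengths balance exactly: $m+2$ produced by $\BInttoBCconf_{1+\Intg{n};\Intg{s}}$, minus one for the preliminary $\TransFunc_{1+\Intg{n};\Intg{s}}[\Id,G_{2}]$ step, minus $m+1$ for the iteration steps (the leading bit together with $\nu_{m-1},\ldots,\nu_{0}$), leaving $\Nil$. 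A secondary nuisance is the arithmetic in part~\ref{proposition:Dynamics of the iterator-part-a}, where one must respect the most-significant-first bit ordering fixed by \eqref{align:word-constructor} and the compaction identity $\BSuccZ\,\BNum{0}\red^{+}\BNum{0}$ of \eqref{cxred:bsucz}; none of this is deep, but an off-by-one in a list length or a bit exponent is easy to introduce.
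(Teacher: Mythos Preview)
Your proposal is correct and follows essentially the same approach as the paper: induction on $m-i$ for part~\ref{proposition:Dynamics of the iterator-part-a} (with the same base case $i=m$, using $\nu_m=1$ and the hypothesis on $G_1$, and the same inductive step via \eqref{proposition:Dynamics of the transition function-C2C} and the arithmetic $\BSucc{\nu_i}\colon v\mapsto 2v+\nu_i$), then unfolding the iterator and invoking part~\ref{proposition:Dynamics of the iterator-part-a} for part~\ref{proposition:Dynamics of the iterator-part-b}. Your part~\ref{proposition:Dynamics of the iterator-part-b} is in fact more detailed than the paper's own sketch, which merely says to proceed by cases on the first argument and apply part~\ref{proposition:Dynamics of the iterator-part-a}; the list-length bookkeeping you spell out ($m{+}2$ from $\BInttoBCconf$, minus $1$ for the $G_2$ step, minus $m{+}1$ for the iteration) is exactly what the paper leaves implicit.
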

Point~\ref{proposition:Dynamics of the iterator-part-a} of the proposition here above holds by induction on $m-i$. Point~\ref{proposition:Dynamics of the iterator-part-b} holds proceeding by cases on the first argument of the iterator, applying its definition. If it is $\BNum{0}$, then use the assumption on the behavior of $G_2$. Otherwise, it is enough to use the definition of the iterator, and Point~\ref{proposition:Dynamics of the iterator-part-a} just proved.
%%%%%%============
\subsection{Composition}
\label{subsubsection:Composition}
We shall define combinators that compose terms of \WALT, the goal being the simulation of the composition of \QlSRN.
Intuitively, the composition
$\Wcomp{\Intg{n}}
       {\sum_{i=1}^{\Intg{s}'} \Intg{s}_i}
       {\Intg{n}'}
       {\Intg{s}'}
       {F,G_1,\ldots,G_{\Intg{n}'},H_1,\ldots,H_{\Intg{s}'}}
$ applies the term $F$ to the results of the applications of
$G_1,\ldots,G_{\Intg{n}'},H_1,\ldots,H_{\Intg{s}'}$ to their arguments.
All the terms $H_1,\ldots,H_{\Intg{s}'}$, that we call \textit{safe}, can be thought of as functions with a coincident normal arity $\Intg{n}$, and with a safe arity $\Intg{s}_i$. Analogously, all the terms $H_1,\ldots,H_{\Intg{s}'}$, that we call \textit{normal}, can be thought of as functions only with normal arity $\Intg{n}$.
The composition is meaningful since $F$ is like a function with normal arity $\Intg{n}'$, that equals the number of \textit{normal} terms, and safe arity $\Intg{s}'$, equal to the number of \textit{safe} terms.
Here it is the definition:
\small
\begin{align*}
&
\Wcomp{\Intg{n}}
      {\sum_{i=1}^{\Intg{s}'} \Intg{s}_i}
      {\Intg{n}'}
      {\Intg{s}'}
      {F,G_1,\ldots,G_{\Intg{n}'},H_1,\ldots,H_{\Intg{s}'}}
\equiv
\\
&
\qquad
\qquad
\qquad
\bs n_1\ldots n_{\Intg{n}}.
\EEmbed{2}
       {0}{\Intg{n}+\sum_{i=1}^{\Intg{s}'} \Intg{s}_i}
       {G}
(\LEmbed{1}
        {1}
        {\nabla^1_{\Intg{n}'+\Intg{s}'}}\,n_1)
  \ldots(\LEmbed{1}
                {1}
	        {\nabla^1_{\Intg{n}'+\Intg{s}'}}\,n_{\Intg{n}})
\\
G\equiv&
   \bs \elan x_{11}\ldots x_{\Intg{n}'1}
             y_{11}\ldots y_{\Intg{s}'1}\eran
   \ldots
   \bs \elan x_{1\Intg{n}}\ldots x_{\Intg{n}'\Intg{n}}
             y_{1\Intg{n}}\ldots y_{\Intg{s}'\Intg{n}}\eran.
   \bs w_{1 1}\ldots w_{1 \Intg{s}_{1 }}
   \ldots\ldots
       w_{\Intg{s}'1}\ldots w_{\Intg{s}' \Intg{s}_{\Intg{s}'}}.
\\
&
  \, \EEmbed{m-1}{0}{\Intg{n}'+\Intg{s}'}{F}\,
% \\&\,
   (G_1\, x_{11} \ldots x_{1\Intg{n}})
   \ldots
   (G_{\Intg{n}'}\, x_{\Intg{n}'1} \ldots x_{\Intg{n}'\Intg{n}})
\\&\,
   (\EEmbed{m-1}{0}{\Intg{n}+\Intg{s}_1}{H_{1}}\,
     (\LEmbed{1}{1}{\Coerc^{m-1}}\, y_{11})
     \ldots (\LEmbed{1}{1}{\Coerc^{m-1}}\,
                y_{1\Intg{n}})w_{11}\ldots w_{1\Intg{s}_1})
\\&\qquad
    \ldots
    (\EEmbed{m-1}{0}{\Intg{n}+\Intg{s}_{\Intg{s}'}}{H_{\Intg{s}'}}\,
     (\LEmbed{1}{1}{\Coerc^{m-1}}\, y_{\Intg{s}'1})
     \ldots (\LEmbed{1}{1}{\Coerc^{m-1}}\,
               y_{\Intg{s}'\Intg{n}})w_{\Intg{s}'1}\ldots
               w_{\Intg{s}'\Intg{s}_{\Intg{s}'}})
\end{align*}
\normalsize

\begin{proposition}[Typing the composition]
\label{proposition:Typing the composition}
Let $\Intg{n}, \Intg{s'}, \Intg{s_{1}}, \ldots, \Intg{s_{\Intg{s'}}}\geq 0$, and $m\geq 1$. A rule derivable in \WALT:
\small
\[
\infer[]
{
\emptyset;\emptyset;\emptyset
 \vdash
  \begin{array}[t]{l}
     \Wcomp{\Intg{n}}
           {\sum_{i=1}^{\Intg{s}'} \Intg{s}_i}
           {\Intg{n}'}
           {\Intg{s}'}
           {F,G_1,\ldots,G_{\Intg{n}'},H_1,\ldots,H_{\Intg{s}'}}
     \\
     \qquad\!:\!
     (\liv^{\Intg{n}}_{i=1}\$\BIntT)\liv
     (\liv^{\sum_{i=1}^{\Intg{s}'} \Intg{s}_i}_{i=1}\$^{2m+1}\BIntT)\liv
     \$^{2m+1}\BIntT
  \end{array}
}
{
\begin{array}[t]{ll}
\emptyset;\emptyset;\emptyset
 \vdash
 \ta{F}
    {(\liv^{\Intg{n}'}_{i=1}\$\BIntT)\liv
     (\liv^{\Intg{s}'}_{j=1}\$^{m}\BIntT)\liv
     \$^{m}\BIntT
    }
 \\
\emptyset;\emptyset;\emptyset
 \vdash
 \ta{G_i}
    {(\liv^{\Intg{n}}_{i=1}\$\BIntT)\liv\$^{m}\BIntT}
 &
 i\in\{1,\ldots,\Intg{n}'\}
 \\
\emptyset;\emptyset;\emptyset
 \vdash
 \ta{H_j}
    {(\liv^{\Intg{n}}_{i=1}\$\BIntT)\liv
     (\liv^{\Intg{s}_j}_{k=1}\$^{m}\BIntT)\liv
     \$^{m}\BIntT
    }
 &
 j\in\{1,\ldots,\Intg{s}'\}
\end{array}
}
\]
\normalsize
\end{proposition}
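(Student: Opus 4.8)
The plan is to build, bottom-up, one \WALT\ derivation whose conclusion is the displayed judgement, by composing the derived typing rules already available for the combinators occurring in the definition of $\Wcomp{\Intg{n}}{\sum_{i=1}^{\Intg{s}'}\Intg{s}_i}{\Intg{n}'}{\Intg{s}'}{F,G_1,\ldots,G_{\Intg{n}'},H_1,\ldots,H_{\Intg{s}'}}$ --- namely Proposition~\ref{proposition:Typing the embeddings} for the occurrences of $\EEmbed{\cdot}{\cdot}{\cdot}{\cdot}$ and $\LEmbed{\cdot}{\cdot}{\cdot}$, Proposition~\ref{proposition:Typing the coercions} for $\Coerc^{\cdot}$, Proposition~\ref{proposition:Typing the diagonals} for $\nabla^{1}_{\Intg{n}'+\Intg{s}'}$ and Proposition~\ref{proposition:Typing rules relative to the elementary tensor} for the elementary-tensor constructor $\elan\cdot\eran$ and the destructor $\bs\elan x_1\ldots x_m\eran.M$ --- together with the three hypotheses on $F$, the $G_i$ and the $H_j$, and the base rules $A$, $\$$, $\li I$, $\li I_{\$}$, $\li E$, $\liv I$, $\liv E$. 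The $\lambda$-term is engineered so that this assembly goes through; the work is in checking that the interfaces really match.

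Concretely, I would proceed from the inside out. The free variables of the body of the auxiliary term $G$ are the elementary discharged $x_{\ell i},y_{\ell i}$ produced by the $\bs\elan\cdot\eran$-binders (each typable $\$\BIntT$ by an axiom followed by an instance of $\$$) and the linear $w_{jk}$. First, $G_i\,x_{i1}\ldots x_{i\Intg{n}}$ is typed by $\liv E$ from the hypothesis on $G_i$, depending only on the elementary assumptions $x_{i1},\ldots,x_{i\Intg{n}}$; then each $\LEmbed{1}{1}{\Coerc^{m-1}}\,y_{jk}$ is typed from $y_{jk}:\$\BIntT$ using Propositions~\ref{proposition:Typing the coercions} and~\ref{proposition:Typing the embeddings}, and these together with the $w_{jk}$ are fed to $\EEmbed{m-1}{0}{\Intg{n}+\Intg{s}_j}{H_j}$; the resulting $G_i$- and $H_j$-blocks are fed to $\EEmbed{m-1}{0}{\Intg{n}'+\Intg{s}'}{F}$, giving the body of $G$. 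Abstracting the linear $w_{jk}$ by $\li I_{\$}$/$\li I$ and then the elementary packages $\elan x_{1i}\ldots y_{\Intg{s}'i}\eran$ by the $\li I_{\odot}$-shaped rule of Proposition~\ref{proposition:Typing rules relative to the elementary tensor} yields $G$ with its full arrow type, each of its first $\Intg{n}$ arguments being an elementary tensor $\bigodot_{i=1}^{\Intg{n}'+\Intg{s}'}\$\BIntT$. Next $\EEmbed{2}{0}{\cdot}{G}$ is typed from $G$, and the diagonals $\LEmbed{1}{1}{\nabla^{1}_{\Intg{n}'+\Intg{s}'}}\,n_i$ (typed from $n_i:\$\BIntT$ by Propositions~\ref{proposition:Typing the diagonals} and~\ref{proposition:Typing the embeddings}) are fed in --- these are exactly what hands each eventual consumer a copy of $n_i$ already packaged as an elementary tensor, so that the elementary-discharge discipline demanded by $G$'s arguments is respected. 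Finally $\Intg{n}$ instances of $\liv I$ abstract $n_1,\ldots,n_{\Intg{n}}$, the remaining $\sum_{i=1}^{\Intg{s}'}\Intg{s}_i$ arguments (the $w_{jk}$, abstracted further up) carry the $\liv$-arrow, and reading off the conclusion gives the claimed type.

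The main obstacle will be the modality ($\$$-depth) bookkeeping. One must verify that the superscripts carried by the $\EEmbed$, $\LEmbed$, $\Coerc$ and $\nabla$ occurrences conspire so that at every application node the type produced for the argument coincides with the type expected by the function; in particular that the $\Coerc$-insertions adjust the depths of the normal arguments so that the uniform-depth interface of Proposition~\ref{proposition:Typing the embeddings} is applicable, and that the two successive outer embeddings carry the final result, and the safe arguments $w_{jk}$, up to \emph{exactly} $\$^{2m+1}\BIntT$. A secondary point is confirming the structural side conditions of the $\liv E$ instances (the $\liv$-arguments must normalise to $\$$-boxes depending only on elementary discharged assumptions) and of the hidden $\$$ and $!$ instances inside the combinators, but these are already discharged by the derived rules we invoke, so no new argument is needed there; the proof is therefore a (lengthy but routine) modality-arithmetic check plus straightforward rule assembly.
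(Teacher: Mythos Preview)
Your plan is essentially the paper's own: exhibit the key intermediate judgments for the sub-components --- the diagonal applications $\LEmbed{1}{1}{\nabla^{1}_{\Intg{n}'+\Intg{s}'}}\,n_i$, the coerced $\LEmbed{1}{1}{\Coerc^{m-1}}\,y_{jk}$, the $G_i$-applications, the $H_j$-blocks via $\EEmbed{m-1}{0}{\cdot}{\cdot}$, and the body of $G$ --- and compose them, the substance being the $\$$-depth arithmetic you flag. One small slip to correct when you write it out: the $w_{jk}$ are not linear; they sit in the elementary partially discharged context (at depth $\$^{2m-2}\BIntT$, used at $\$^{2m-1}\BIntT$ via an instance of $\$$) and are therefore abstracted by $\liv I$, not by $\li I_{\$}$ or $\li I$ --- consistent with your own later remark that they carry the $\liv$-arrow.
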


\begin{proposition}[Dynamics of the composition]
\label{proposition:Dynamics of the composition}
Let $\Intg{n}, \Intg{s'}, \Intg{s_{1}}, \ldots, \Intg{s_{\Intg{s'}}}\geq 0$, and
$\BNum{n_{1}},\ldots,\BNum{n_{\Intg{n}}},\BNum{s_{11}},\ldots,\BNum{s_{1\Intg{s}_1}}
 \ldots$
\\
$\ldots
 \BNum{s_{\Intg{s}'1}},\ldots,
 \BNum{s_{\Intg{s}'\Intg{s}_{\Intg{s}'}}},
 \BNum{g_1},\ldots,\BNum{g_{n'}},
 \BNum{h_1},\ldots,\BNum{h_{s'}},\BNum{f}
$ be some words. Let us assume:
\small
\begin{align*}
G_i\,\BNum{n_1}\ldots\BNum{n_{\Intg{n}}}
&\red^+\BNum{g_i}
&&(1\leq i\leq n')\\
H_{j}\, \BNum{n_1} \ldots \BNum{n_{\Intg{n}}}\,
         \BNum{s_{j1}} \ldots \BNum{s_{j\Intg{s}_{j}}}
&\red^+ \BNum{h_j}
&&(1\leq j\leq s')
\enspace .
\end{align*}
\normalsize
If
$F\,\BNum{g_1} \ldots \BNum{g_{\Intg{n}'}}\BNum{h_1} \ldots \BNum{h_{\Intg{s}'}}
\red^+ \BNum{f}$, then
$\Wcomp{\Intg{n}}
      {\sum_{i=1}^{\Intg{s}'} \Intg{s}_i}
      {\Intg{n}'}
      {\Intg{s}'}
      {F,G_1,\ldots,G_{\Intg{n}'},H_1,\ldots,H_{\Intg{s}'}}
\,\BNum{n_1}\ldots\BNum{n_{\Intg{n}}},
\BNum{s_{11}}\ldots\BNum{s_{1\Intg{s}_1}}
  \ldots$
\\
$\ldots
  \BNum{s_{\Intg{s}'1}}$ $\ldots\BNum{s_{\Intg{s}'\Intg{s}_{\Intg{s}'}}}
\red^+\BNum{f}$.
\end{proposition}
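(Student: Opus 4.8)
The plan is to unfold the definition of $\Wcomp{\Intg{n}}{\sum_{i=1}^{\Intg{s}'}\Intg{s}_i}{\Intg{n}'}{\Intg{s}'}{F,G_1,\ldots,G_{\Intg{n}'},H_1,\ldots,H_{\Intg{s}'}}$ applied to the displayed words and to exhibit an explicit reduction to $\BNum{f}$ by pushing reductions inward, using the dynamics already proved for the embeddings (Proposition~\ref{proposition:Dynamics of the embeddings}), for the elementary diagonals (Proposition~\ref{proposition:Dynamics of the diagonals}), for the coercions (Proposition~\ref{proposition:Dynamics of the coercion}) and for the elementary tensors (Proposition~\ref{proposition:Dynamics relative to the elementary tensor}), together with the three reduction hypotheses on $G_i$, $H_j$ and $F$. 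Since $\red$ is the contextual closure of the restricted $\beta$-rule of Definition~\ref{definition:Rewriting PT}, every sub-reduction I list lifts to the whole term; the only real care is to sequence the reductions so that, whenever a head redex is fired, the argument being replaced is already a closed value (always a word $\BNum{\cdot}$ or a $\lambda$-abstraction), so that cases~\eqref{align:red-2}/\eqref{align:red-3} of Definition~\ref{definition:Rewriting PT} apply.

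After feeding $\BNum{n_1},\ldots,\BNum{n_{\Intg{n}}}$ to the outer $\bs n_1\ldots n_{\Intg{n}}$, I would first reduce each factor $\LEmbed{1}{1}{\nabla^1_{\Intg{n}'+\Intg{s}'}}\,\BNum{n_i}$: by Proposition~\ref{proposition:Dynamics of the embeddings} it reaches $\nabla^1_{\Intg{n}'+\Intg{s}'}\,\BNum{n_i}$, and by Proposition~\ref{proposition:Dynamics of the diagonals} this reaches the value $\elan\BNum{n_i},\ldots,\BNum{n_i}\eran$ with $\Intg{n}'+\Intg{s}'$ copies. With these values and the words $\BNum{s_{jk}}$ in head position, Proposition~\ref{proposition:Dynamics of the embeddings} applied to $\EEmbed{2}{0}{\Intg{n}+\sum\Intg{s}_i}{G}$ reduces the term to $G$ applied to the $\Intg{n}$ elementary tensors followed by all the $\BNum{s_{jk}}$. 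Now $G$ consumes its $\Intg{n}$ elementary-tensor arguments via Proposition~\ref{proposition:Dynamics relative to the elementary tensor} (each $x_{ij}$ and each $y_{kj}$ occurs exactly once in the body of $G$, so the ensuing $\beta$-steps substituting the closed value $\BNum{n_j}$ are legitimate $\red$-steps), then binds its $w$-variables to the $\BNum{s_{jk}}$; the outcome is the body of $G$ under the substitution $x_{ij}\mapsto\BNum{n_j}$, $y_{kj}\mapsto\BNum{n_j}$, $w_{jk}\mapsto\BNum{s_{jk}}$.

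This body is $\EEmbed{m-1}{0}{\Intg{n}'+\Intg{s}'}{F}$ applied to the $\Intg{n}'$ subterms $G_i\,\BNum{n_1}\ldots\BNum{n_{\Intg{n}}}$ and to the $\Intg{s}'$ subterms $\EEmbed{m-1}{0}{\Intg{n}+\Intg{s}_j}{H_j}\,(\LEmbed{1}{1}{\Coerc^{m-1}}\BNum{n_1})\ldots(\LEmbed{1}{1}{\Coerc^{m-1}}\BNum{n_{\Intg{n}}})\,\BNum{s_{j1}}\ldots\BNum{s_{j\Intg{s}_j}}$. Evaluating the arguments first, the hypothesis gives $G_i\,\BNum{n_1}\ldots\BNum{n_{\Intg{n}}}\red^+\BNum{g_i}$; Propositions~\ref{proposition:Dynamics of the embeddings} and~\ref{proposition:Dynamics of the coercion} give $\LEmbed{1}{1}{\Coerc^{m-1}}\BNum{n_k}\red^*\BNum{n_k}$, whence Proposition~\ref{proposition:Dynamics of the embeddings} and the hypothesis on $H_j$ give $H_j\,\BNum{n_1}\ldots\BNum{n_{\Intg{n}}}\,\BNum{s_{j1}}\ldots\BNum{s_{j\Intg{s}_j}}\red^+\BNum{h_j}$. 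With all $\Intg{n}'+\Intg{s}'$ arguments now words, one last use of Proposition~\ref{proposition:Dynamics of the embeddings} reduces the term to $F\,\BNum{g_1}\ldots\BNum{g_{\Intg{n}'}}\BNum{h_1}\ldots\BNum{h_{\Intg{s}'}}$, which reduces to $\BNum{f}$ by the third hypothesis. Concatenating these $\red^*/\red^+$ segments — at least one genuine step being present, e.g. from the diagonals — yields the claimed $\red^+$.

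The main obstacle is not conceptual but administrative: one has to keep the four index ranges ($i\le\Intg{n}'$, $j\le\Intg{s}'$, the block lengths $\Intg{s}_j$, and $k\le\Intg{n}$) consistent through the nested elementary-tensor destructions and the two layers of $\EEmbed{\cdot}{\cdot}{\cdot}{\cdot}$, and — the one genuinely delicate point — verify at each contraction that the substituted term is a value with at most the allowed free variables, so that the step is a bona fide $\red$-step and not merely a $\beta$-step. This is precisely why the combinator is built with diagonals (to regenerate the copies of $\BNum{n_i}$ from scratch rather than duplicate a variable) and with $\Coerc$ (to adjust the modal type of a word without changing it): the value and closedness side-conditions then all hold, and making them explicit at the points where the bodies of $G$ and of the $\EEmbed{\cdot}{\cdot}{\cdot}{\cdot}$'s are entered is the part of the argument that deserves care.
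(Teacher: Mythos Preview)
Your proposal is correct and follows essentially the same route as the paper: unfold the definition, reduce the diagonals on the normal arguments to elementary tensors of copies of the $\BNum{n_i}$, strip the outer eager embedding, let $G$ destruct the tensors and bind the safe words, then evaluate the $G_i$- and $H_j$-arguments using the coercion/embedding dynamics and the hypotheses, and finish with the embedding around $F$ and the hypothesis on $F$. The paper's own argument is the bare chain of $\red^+$ steps (prefaced in the main text by ``we just apply the definitions''); your version adds the explicit check that each substituted argument is a closed value so that the step is a genuine $\red$-step, which is a welcome clarification but not a different idea.
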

To prove it, we just apply the definitions.
%%%%%%%%%%%%%%
\section{From \QlSRN\ to \WALL}
\label{section:Algorithmic expressivity of WALL}
\textbf{Functions of \PRNQ\ into \WALL.}
First, we define a map $\etp{\ }$ from the signature $\Sigma_{\PRNQ}$ to \PT:
\begin{enumerate}
\item
\label{embedPRNQWALL:zero}
$\etp{\zero{0}{0}} \equiv \LEmbed{1}{0}{\BNum{0}}$, while
$\etp{\zero{k}{l}}
 \equiv\bs n_{1}\ldots n_{k}\, s_{1}\ldots s_{l}.\etp{\zero{0}{0}}$,
for every $k, l$ such that $k+l\geq 1$.

\item
\label{embedPRNQWALL:sucz}
$\etp{\sucz}\equiv\BEmbed{1}{\BSuccZ}$.

\item
\label{embedPRNQWALL:suco}
$\etp{\suco}\equiv\BEmbed{1}{\BSuccO}$.

\item
\label{embedPRNQWALL:pred}
$\etp{\pred}\equiv\BEmbed{1}{\Pred}$.

\item
\label{embedPRNQWALL:proj}
$\etp{\proj{k}{l}{i}}
 \equiv \bs x_{1}\ldots x_{k+l}.x_{i}$,
   with $1\leq i\leq k+l$.

\item
\label{embedPRNQWALL:bran}
$\etp{\bran}\equiv\bs xyz. \Branch\,x\,y\,z$.

\item
\label{embedPRNQWALL:comp}
Let
$\emptyset;\emptyset;\emptyset\vdash
 \ta{\etp{f}}
    {(\liv^{k'}_{i=1}\$\BIntT)\liv
     (\liv^{l'}_{i=1}\$^m\BIntT)\liv
     \$^{m}\BIntT}$, and
$\emptyset;\emptyset;\emptyset\vdash
 \ta{\etp{g_i}}
    {(\liv^{k}_{i=1}\$\BIntT)\liv
     \$^{m_i}\BIntT}$, with $i\in\{1,\ldots,k'\}$, and
$\emptyset;\emptyset;\emptyset\vdash
 \ta{\etp{h_j}}
    {(\liv^{k}_{i=1}\$\BIntT)\liv
     (\liv^{l_j}_{i=1}\$^{n_j}\BIntT)\liv
     \$^{n_j}\BIntT}$, with $j\in\{1,\ldots,l'\}$.
If $p=\operatorname{max}\{m,m_1,\ldots,m_{k'},n_1,\ldots,n_{l'}\}$, then
\small
\begin{align*}
&
 \etp{
 \comp{k}
      {\sum_{i=1}^{l'} l_i}
      {k'}
      {l'}
      {f,g_1,\ldots,g_{k'},h_1,\ldots,h_{l'}}
     }\equiv
\\
&
\bullet^{k,\sum_{i=1}^{l'} l_i}_{k',l'}
   [\bs x_1\ldots x_{k'}
   .\EEmbed{p-m}{k'}{l'}{\etp{f}}
    (\LEmbed{1}{1}{\Coerc^{p-m-1}}\,x_1)
    \ldots
    (\LEmbed{1}{1}{\Coerc^{p-m-1}}\,x_{k'})
\\
&\phantom{\bullet^{k,\sum_{i=1}^{l'} l_i}_{k',l'}\ }
   ,\bs x_1\ldots x_{k}
   .\EEmbed{p-m_1}{k}{0}{\etp{g_1}}
    (\LEmbed{1}{1}{\Coerc^{p-m_{1}-1}}\,x_1)
    \ldots
    (\LEmbed{1}{1}{\Coerc^{p-m_{1}-1}}\,x_{k})
\\
&\phantom{\bullet^{k,\sum_{i=1}^{l'} l_i}_{k',l'}\ }
   ,\ldots
   ,\bs x_1\ldots x_{k}
   .\EEmbed{p-m_{k'}}{k}{0}{\etp{g_{k'}}}
    (\LEmbed{1}{1}{\Coerc^{p-m_{k'}-1}}\,x_1)
    \ldots
    (\LEmbed{1}{1}{\Coerc^{p-m_{k'}-1}}\,x_{k})
\\
&\phantom{\bullet^{k,\sum_{i=1}^{l'} l_i}_{k',l'}\ }
   ,\bs x_1\ldots x_{k}
   .\EEmbed{p-n_1}{k}{l_1}{\etp{h_1}}
    (\LEmbed{1}{1}{\Coerc^{p-n_{1}-1}}\,x_1)
    \ldots
    (\LEmbed{1}{1}{\Coerc^{p-n_{1}-1}}\,x_{k})
\\
&\phantom{\bullet^{k,\sum_{i=1}^{l'} l_i}_{k',l'}\ }
   ,\ldots
   ,\bs x_1\ldots x_{k}
   .\EEmbed{p-n_{l'}}{k}{l_{l'}}{\etp{h_{l'}}}
    (\LEmbed{1}{1}{\Coerc^{p-n_{l'}-1}}\,x_1)
    \ldots
    (\LEmbed{1}{1}{\Coerc^{p-n_{l'}-1}}\,x_{k})
]
\enspace .
\end{align*}
\normalsize

\item
\label{embedPRNQWALL:rec}
If
$\emptyset;\emptyset;\emptyset\vdash
 \ta{\etp{f_i}}
    {\$\BIntT\liv
     (\liv^{k}_{i=1}\$\BIntT)\liv
     (\liv^{l}_{i=1}\$^{m_i}\BIntT)\liv
     \$^{m_i}\BIntT\liv
     \$^{m_i}\BIntT}$, with $i\in\{0,1\}$, and
$\emptyset;\emptyset;\emptyset\vdash
 \ta{\etp{g}}
    {(\liv^{k}_{i=1}\$\BIntT)\liv
     (\liv^{l}_{i=1}\$^{m}\BIntT)\liv
     \$^{m}\BIntT}$,
then:
\small
$$
\etp{\rec{k+1}{l}{g,f_0,f_1}}\equiv
\Iter{1+\Intg{k}}{l}{F_0}{F_1}{G}
\enspace ,
$$
\normalsize
where
$G\equiv\EEmbed{p-m}
          {k+1}
          {l+1}
          {\bs n_0\,n_1\ldots n_{k}\,s_1\ldots s_{l}\,r.
           \etp{g}\, n_1\ldots n_k\,s_1\ldots s_l}
$,
$F_i\equiv\EEmbed{p-m_i}
            {k+1}
            {l+1}
            {\etp{f_i}}$, with $p=\operatorname{max}\{m_0,
\\
m_1,m\}$,
and $i\in\{0,1\}$.
\end{enumerate}
%%%%%%%%%%%%%%%%%
\textbf{Interpreting \QlSRN\ to \WALT.}
\label{definition:Interpreting QlSRN to WALT}
Let $\mathcal R$ be the set of environments, such that, every $\rho\in{\mathcal R}$ is a map from $\QlSRNVnames$ to $\Nat$. Then, $\srtw{\ }{}$ is a map from a pair in
$(\PRNQ\cup\Sigma_{\PRNQ})\times{\mathcal R}$,
to \PT, inductively defined on its first argument:
\small
\begin{align*}
\srtw{x}{\rho}&=\srtw{\rho(x)}{\rho}\qquad\qquad\qquad\qquad\qquad\qquad\qquad
(x\in\QlSRNVnames)\\
\srtw{0}{\rho}&=\etp{0}\\
\srtw{f}{\rho}&=\etp{f}\quad\qquad\qquad\qquad\qquad\qquad\qquad\qquad
(f\in\Sigma_{\PRNQ})\\
\srtw{f(t_1,\ldots, t_k, u_1,\ldots, u_l)}{\rho}
&=
\\
&
\vspace{-1cm}
\EEmbed{v-u+1-m}
       {0}
       {l}
       {\EEmbed{u-1}{0}{k+l}{\srtw{f}{\rho}}
        (\LEmbed{u-p_{1}}{0}{\srtw{t_1}{\rho}})
        \ldots
        (\LEmbed{u-p_{k}}{0}{\srtw{t_k}{\rho}})
       }
\\
&
\qquad\ \ \
(\LEmbed{v-q_{1}}{0}{\srtw{u_1}{\rho}})
\ldots
(\LEmbed{v-q_{l}}{0}{\srtw{u_l}{\rho}})
\qquad
(f\in\Sigma^{k,l}_{\QlSRN})
\end{align*}
\normalsize
when $u=\operatorname{max}\{m,p_1,\ldots,p_k\}$, $v=\operatorname{max}\{u-1+m,q_1\ldots,q_l\}$, and:
\small
\begin{align*}
&\emptyset;\emptyset;\emptyset\vdash
 \ta{\srtw{f}{\rho}}
    {(\liv^{k}_{i=1}\$\BIntT)\liv
     (\liv^{l}_{j=1}\$^m\BIntT)\liv
     \$^m\BIntT}\\
&\emptyset;\emptyset;\emptyset\vdash
 \ta{\srtw{t_i}{\rho}}
    {\$^{p_i}\BIntT}& i\in\{1,\ldots,k\}
\phantom{\enspace .}
\\
&\emptyset;\emptyset;\emptyset\vdash
 \ta{\srtw{u_j}{\rho}}
    {\$^{q_j}\BIntT}& j\in\{1,\ldots,l\}
\enspace .
\end{align*}
\normalsize
Otherwise, $\srtw{\ }{}$ is undefined.
%%%%%%%%%
\par
\textbf{Weight of a term in \QlSRN.}
\label{definition:Weight of a term in QlSRN}
For proving the statement that formalizes how we can embed \QlSRN\ into \WALT\ (Theorem~\ref{theorem:QlSRN is a subsystems of WALT} below)
we need a notion of weight of a \textit{closed} term in \QlSRN, which, essentially, gives a measure of its impredicativity.
For every \textit{closed} term $t\in\QlSRN\cup\Sigma_{\PRNQ}$, $\wght{t}{}$ is the \textit{weight of $t$}, defined by induction on $t$. If $t$ is one among zero, predecessor, successor, projection, and branching, then $\wght{t}{}=0$.
Otherwise:
\small
\begin{align*}
\wght{
\comp{k}
     {\sum_{i=1}^{l'} l_i}
     {k'}
     {l'}
     {f,g_1,\ldots,g_{k'},h_1,\ldots,h_{l'}}
}{}
&=
3\operatorname{max}\{
\wght{f}{},
\wght{g_1}{}
,\ldots,
\wght{g_k}{},
% \\
% &\phantom{3\operatorname{max}\{\wght{f}{},\ \ \ \ \ }
\wght{h_1}{}
,\ldots,
\wght{h_l}{}
,\frac{1}{3}\}
\\
\wght{
\rec{k+1}{l}{g,h_0,h_1}
}{}
&=
2\operatorname{max}\{
\wght{g}{},
\wght{h_0}{},
\wght{h_1}{}
,\frac{1}{2}
\}
\\
\wght{f(t_1, \ldots, t_k, u_1, \ldots, u_l)}{}
&=
2\operatorname{max}\{
\wght{f}{},
\wght{t_1}{}
,\ldots,
\wght{t_k}{},
% \\
% &\phantom{3\operatorname{max}\{\wght{f}{},\ \ \ \ \ }
\wght{u_1}{}
,\ldots,
\wght{u_l}{}
,\frac{1}{2}\}
\end{align*}
\normalsize
\vspace{-.5cm}
\begin{theorem}[\QlSRN\ is a subsystem of \WALT.]
\label{theorem:QlSRN is a subsystems of WALT}
Let $k, l\in\Nat$, $f\in\Sigma^{k,l}_{\PRNQ}$, and $t, t_1, \ldots, t_k, u_1, \ldots,$ $ u_l$ be terms of \QlSRN.
\begin{enumerate}
\item
\label{theo:realizPRNQ1}
There is an $m\geq 1$ such that
$\emptyset;\emptyset;\emptyset
 \vdash
 \ta{\etp{f}}
    {(\liv_{i=1}^{k}\$\BIntT)\liv
     (\liv_{j=1}^{l}\$^{m}\BIntT)\liv
     \$^{m}\BIntT}
$.

\item
\label{theo:realizPRNQ2}
$\srtw{f(t_1, \ldots, t_k, u_1, \ldots, u_l)}{\rho}$ is defined,
for every $\rho$.

\item
\label{theo:realizPRNQ3}
$\emptyset;\emptyset;\emptyset\vdash
 \ta{\srtw{t}{}}
    {\$^{m}\BIntT}$ with $m\leq \wght{t}{}$.

\item
\label{theo:realizPRNQ4}
$\srtw{n}{}\red^+ \BNum{n}$, for every $n\geq 0$.

\item
\label{theo:realizPRNQ5}
If $f(n_1,\ldots,n_k,s_1,\ldots,s_l)=n$, then
$\srtw{f(n_1,\ldots,n_k,s_1,\ldots,s_l)}{}\red^*\BNum{n}$,
for every $n_1,\ldots,n_k,s_1,\ldots,s_l\in\Nat$.
\end{enumerate}
\end{theorem}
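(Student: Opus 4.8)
\textbf{Shape of the argument, and item~\ref{theo:realizPRNQ4}.}
The five items are proved by one interleaved induction: an outer induction on how a function $f\in\Sigma_{\PRNQ}$ is built (base function, linear safe composition, safe recursion) carries items~\ref{theo:realizPRNQ1} and~\ref{theo:realizPRNQ5}, and an inner induction on the structure of \QlSRN\ terms carries items~\ref{theo:realizPRNQ2} and~\ref{theo:realizPRNQ3}. The outer induction also has to track a numerical link between the modal index $m$ produced by item~\ref{theo:realizPRNQ1} and $\wght{f}{}$, which items~\ref{theo:realizPRNQ2} and~\ref{theo:realizPRNQ3} consume; the padding constants $\tfrac12$, $\tfrac13$ and the factors $2$, $3$ in the definition of $\wght{\cdot}{}$ are calibrated precisely to keep that link self-sustaining along composition and recursion. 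Item~\ref{theo:realizPRNQ4} is independent and is handled first, by induction on the number of binary digits of $n$: for $n=0$, $\etp{\zero{0}{0}}\equiv\LEmbed{1}{0}{\BNum{0}}$ gives $\srtw{0}{}\red^*\BNum{0}$ by Proposition~\ref{proposition:Dynamics of the embeddings}; for $n\ge1$, writing $n$ as $\mathtt{s}_{\nu_0}(\cdots(\mathtt{s}_{\nu_{m-1}}(\Suco\,\zero{0}{0}))\cdots)$ and recalling $\etp{\sucz}\equiv\BEmbed{1}{\BSuccZ}$, $\etp{\suco}\equiv\BEmbed{1}{\BSuccO}$, iterate $\BSuccZ\,\BNum{0}\red^+\BNum{0}$, $\BSuccZ\,\BNum{p}\red^+\BNum{2p}$ (for $p>0$) and $\BSuccO\,\BNum{p}\red^+\BNum{2p+1}$ from Proposition~\ref{proposition:Dynamics of combinators relative to words}, the basic embedding being reduction-transparent (Proposition~\ref{proposition:Dynamics of the embeddings}).

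\textbf{Items~\ref{theo:realizPRNQ1} and~\ref{theo:realizPRNQ5} for $f\in\Sigma_{\PRNQ}$.}
For the \emph{base functions}, item~\ref{theo:realizPRNQ1} with $m=1$ is read off from Propositions~\ref{proposition:Typing rules relative to words},~\ref{proposition:Typing rules relative to booleans} and~\ref{proposition:Typing the embeddings} ($\etp{\sucz}\equiv\BEmbed{1}{\BSuccZ}$ gets $\$\BIntT\liv\$\BIntT$ from $\BSuccZ\colon\BIntT\li\BIntT$; $\etp{\pred}$ from $\Pred$; $\etp{\proj{k}{l}{i}}$ is a linearly typed projection; $\etp{\bran}$ uses the type of $\Branch$; $\etp{\zero{k}{l}}$ is $\BNum{0}$ under dummy abstractions), while item~\ref{theo:realizPRNQ5} on closed numeral arguments is the matching dynamics (Propositions~\ref{proposition:Dynamics of combinators relative to words},~\ref{proposition:Dynamics relative to booleans},~\ref{proposition:Dynamics of the embeddings}). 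For \emph{linear safe composition}, apply Proposition~\ref{proposition:Typing the composition} to the components $\etp{f},\etp{g_i},\etp{h_j}$ given by the inductive hypothesis, which the definition of $\etp{\comp{k}{\sum_{i=1}^{l'}l_i}{k'}{l'}{\ldots}}$ first lifts to the common modal depth $p=\max\{m,m_1,\ldots,m_{k'},n_1,\ldots,n_{l'}\}$ by the $\LEmbed{1}{1}{\Coerc^{p-m-1}}$, $\LEmbed{1}{1}{\Coerc^{p-m_i-1}}$, $\LEmbed{1}{1}{\Coerc^{p-n_j-1}}$ wrappers --- legal since $p-m$, $p-m_i$, $p-n_j\ge0$; item~\ref{theo:realizPRNQ5} is then Proposition~\ref{proposition:Dynamics of the composition}, whose hypotheses on the $G_i$, $H_j$, $F$ are the inductive instances of item~\ref{theo:realizPRNQ5}, coercions and embeddings being reduction-transparent (Propositions~\ref{proposition:Dynamics of the coercion},~\ref{proposition:Dynamics of the embeddings}). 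For \emph{safe recursion}, $\etp{\rec{k+1}{l}{g,f_0,f_1}}\equiv\Iter{1+k}{l}{F_0}{F_1}{G}$, where $F_i,G$ are the $\mathtt{Ee}$-wrappings of $\etp{f_i},\etp{g}$ to the common depth $p=\max\{m_0,m_1,m\}$; Proposition~\ref{proposition:Typing the iterator} supplies item~\ref{theo:realizPRNQ1}, and Proposition~\ref{proposition:Dynamics of the iterator} supplies item~\ref{theo:realizPRNQ5}: the hypotheses it makes on the one-step behaviour of $G_2,G_1,G_0$ on words are the inductive instances of item~\ref{theo:realizPRNQ5} for $g,h_1,h_0$ (through the transparent embeddings), and the word $r[m,a,n_1,\ldots,n_k,s_1,\ldots,s_l]$ the iterator returns is, by construction, the value of the recursive unfolding, i.e. $f(n_1,\ldots,n_k,s_1,\ldots,s_l)$.

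\textbf{Items~\ref{theo:realizPRNQ2} and~\ref{theo:realizPRNQ3}.}
Induct on the \QlSRN\ term $t$. If $t$ is a variable, $\srtw{t}{\rho}=\srtw{\rho(t)}{\rho}$ is defined and, by item~\ref{theo:realizPRNQ4}, reduces to a numeral at modal depth $0\le\wght{t}{}$. If $t=f(t_1,\ldots,t_k,u_1,\ldots,u_l)$, then item~\ref{theo:realizPRNQ1} gives $\etp{f}\colon(\liv_{i=1}^{k}\$\BIntT)\liv(\liv_{j=1}^{l}\$^{m}\BIntT)\liv\$^{m}\BIntT$ for some $m$, and the inner hypothesis gives $\srtw{t_i}{\rho}\colon\$^{p_i}\BIntT$, $\srtw{u_j}{\rho}\colon\$^{q_j}\BIntT$ with $p_i\le\wght{t_i}{}$, $q_j\le\wght{u_j}{}$; with $u=\max\{m,p_1,\ldots,p_k\}$ and $v=\max\{u-1+m,q_1,\ldots,q_l\}$ each exponent occurring in the definition of $\srtw{f(\ldots)}{\rho}$, namely $u-p_i$, $v-q_j$, $v-u+1-m$, $u-1$, is non-negative, so $\srtw{f(\ldots)}{\rho}$ is well-formed and, by Propositions~\ref{proposition:Typing the embeddings} and~\ref{proposition:Typing the coercions}, typeable, of type $\$^{v+1}\BIntT$ (item~\ref{theo:realizPRNQ2}); finally $v+1\le\wght{t}{}$, comparing $v+1$ with $\wght{f(t_1,\ldots,u_l)}{}=2\max\{\wght{f}{},\wght{t_i}{},\wght{u_j}{},\tfrac12\}$ and using the numerical link from item~\ref{theo:realizPRNQ1}, so that the doubling and the padding in $\wght{\cdot}{}$ absorb the additive shifts introduced by the $\Coerc$-wrappers and the embeddings.

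\textbf{Expected main obstacle.}
The crux is not any single reduction but the \emph{coherence of the modal-depth bookkeeping}, sharpest for safe recursion: one must simultaneously ensure that (i) every $\Coerc$-exponent and every explicit subtraction occurring in $\etp{\cdot}$ and $\srtw{\cdot}{}$ is $\ge0$, which is what forces the particular maxima $p$, $u$, $v$; (ii) the modal index assigned to $\etp{f}$ is exactly the one the combinator typings demand, with the constant shifts of $\Coerc$ and of the iterator (Proposition~\ref{proposition:Typing the iterator}) matched against the padding and the factors $3$, $2$ of $\wght{\cdot}{}$, so that item~\ref{theo:realizPRNQ3}'s bound survives; and (iii) the $\mathtt{Ee}$- and $\mathtt{El}$-shells wrapped around the iterator and the composition combinator are reduction-transparent enough not to spoil the reductions of Propositions~\ref{proposition:Dynamics of the iterator} and~\ref{proposition:Dynamics of the composition}. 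Once these numerical invariants are set up, what remains is a routine unfolding of the combinator definitions against the typing and dynamics propositions already established in Section~\ref{section:Programming combinators in WALT}.
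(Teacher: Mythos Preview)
Your proposal follows the same strategy as the paper: item~\ref{theo:realizPRNQ4} by induction on $n$; items~\ref{theo:realizPRNQ1} and~\ref{theo:realizPRNQ5} by induction on the build-up of $f\in\Sigma_{\PRNQ}$; items~\ref{theo:realizPRNQ2} and~\ref{theo:realizPRNQ3} by induction on the structure of $t$, invoking the same combinator propositions at each step and with the same attention to the modal-depth bookkeeping you flag as the crux.

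Two small imprecisions are worth flagging. In item~\ref{theo:realizPRNQ3} for $t=f(t_1,\ldots,u_l)$ the paper computes the type as $\$^{v}\BIntT$, not $\$^{v+1}\BIntT$: the outer $\EEmbed{v-u+1-m}{0}{l}{\cdot}$ carries the result from depth $u-1+m$ to depth $(u-1+m)+(v-u+1-m)=v$. The bound on $\wght{t}{}$ absorbs either figure, but you should recheck your count. In item~\ref{theo:realizPRNQ5} for safe recursion the paper is more explicit about the step you compress into ``through the transparent embeddings'': the inductive hypothesis yields $\srtw{h_{\nu_i}(\ldots)}{}\red^+\BNum{v_i}$, whereas Proposition~\ref{proposition:Dynamics of the iterator} needs $H_{\nu_i}\,\BNum{\cdots}\,\BNum{v_{i+1}}\red^+\BNum{v_i}$ on bare numerals; the bridge is that, unfolding the definition of $\srtw{\cdot}{}$ and using Propositions~\ref{proposition:Dynamics of the embeddings} and~\ref{proposition:Dynamics of the coercion} together with item~\ref{theo:realizPRNQ4}, the term $\srtw{h_{\nu_i}(\ldots)}{}$ reduces (uniquely up to the order in which the arguments become numerals) to $\etp{h_{\nu_i}}\,\BNum{n_1}\ldots\BNum{s_l}\,\BNum{v_{i+1}}$ before anything else, so the latter must share the same normal form $\BNum{v_i}$. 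The same manoeuvre handles $G$ and the base case. This is a genuine, if routine, step that deserves to be spelled out.
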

Point~\ref{theo:realizPRNQ1} is a direct consequence of the typing of the combinators of \WALT\ that we use in the definition of $\etp{f}$.
Point~\ref{theo:realizPRNQ2} follows from point~\ref{theo:realizPRNQ1} here above and from the definition of $\srtw{\ }{}$.
Point~\ref{theo:realizPRNQ3} holds by induction on $t$.
Point~\ref{theo:realizPRNQ4} holds by induction on $n$.
Point~\ref{theo:realizPRNQ5} holds by induction on $f$.
Finally, by structural induction on $t$, we have:
\begin{corollary}[The embedding of \QlSRN\ into \WALT\ is sound.]
\label{corollary:The embedding of QlSRN into WALT is sound}
Let $t\in\QlSRN$, and $n\in\Nat$.
If $t=n$, then $\srtw{t}{\rho}\red^{+}\BNum{n}$, for every environment $\rho$.
\end{corollary}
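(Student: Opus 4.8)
The statement to prove is Corollary~\ref{corollary:The embedding of QlSRN into WALT is sound}: if $t\in\QlSRN$ and $t=n$ as a natural number, then $\srtw{t}{\rho}\red^{+}\BNum{n}$ for every environment $\rho$. The plan is to derive this as a direct consequence of Theorem~\ref{theorem:QlSRN is a subsystems of WALT}, proceeding by structural induction on $t$, exactly as the excerpt already indicates. The only subtlety is that Theorem~\ref{theorem:QlSRN is a subsystems of WALT}, point~\ref{theo:realizPRNQ5}, is stated for \emph{closed} applied terms $f(n_1,\ldots,n_k,s_1,\ldots,s_l)$ with numeral arguments, whereas the corollary allows an arbitrary $t\in\QlSRN$ possibly containing variables from $\QlSRNVnames$, which the environment $\rho$ instantiates. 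Bridging this gap is the content of the induction.

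First I would handle the base cases. If $t\equiv x\in\QlSRNVnames$, then $\srtw{x}{\rho}=\srtw{\rho(x)}{\rho}$ by definition of $\srtw{\ }{}$, and $\rho(x)$ is a natural number; since $t=n$ forces $\rho(x)=n$, we reduce to the numeral case, handled by Theorem~\ref{theorem:QlSRN is a subsystems of WALT}, point~\ref{theo:realizPRNQ4}, giving $\srtw{n}{}\red^{+}\BNum{n}$. If $t\equiv 0$, then $\srtw{0}{\rho}=\etp{0}\equiv\LEmbed{1}{0}{\BNum{0}}$, which by Proposition~\ref{proposition:Dynamics of the embeddings} reduces to $\BNum{0}$, and $n=0$. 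For the inductive step $t\equiv f(t_1,\ldots,t_k,u_1,\ldots,u_l)$ with $f\in\Sigma^{k,l}_{\QlSRN}$: each $t_i$ and each $u_j$ is a term of \QlSRN\ that, under the equational theory, evaluates to some natural number; call these $n_1,\ldots,n_k$ and $s_1,\ldots,s_l$, so that, by congruence of the equational theory, $f(n_1,\ldots,n_k,s_1,\ldots,s_l)=n$. By the induction hypothesis, $\srtw{t_i}{\rho}\red^{+}\BNum{n_i}$ for every $i$ and $\srtw{u_j}{\rho}\red^{+}\BNum{s_j}$ for every $j$. By Theorem~\ref{theorem:QlSRN is a subsystems of WALT}, point~\ref{theo:realizPRNQ3}, each of these subterms has a $\$$-modal type, and by point~\ref{theo:realizPRNQ2} the term $\srtw{f(t_1,\ldots,t_k,u_1,\ldots,u_l)}{\rho}$ is well-defined; its shape is the one displayed in the definition of $\srtw{\ }{}$, namely nested $\EEmbed{}{}{}{}$ and $\LEmbed{}{}{}$ applications of $\srtw{f}{\rho}$ to the $\srtw{t_i}{\rho}$ and $\srtw{u_j}{\rho}$. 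Using Proposition~\ref{proposition:Dynamics of the embeddings} to collapse the embedding layers (the arguments $\srtw{t_i}{\rho},\srtw{u_j}{\rho}$ become values once reduced to $\BNum{n_i},\BNum{s_j}$, which is what those dynamics lemmas require), we reduce $\srtw{f(t_1,\ldots,t_k,u_1,\ldots,u_l)}{\rho}$ to (a suitable embedding of) $\srtw{f}{\rho}\,\BNum{n_1}\cdots\BNum{n_k}\,\BNum{s_1}\cdots\BNum{s_l}$, i.e. to $\srtw{f(n_1,\ldots,n_k,s_1,\ldots,s_l)}{}$ up to the embedding functors, which again collapse by Proposition~\ref{proposition:Dynamics of the embeddings}. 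Theorem~\ref{theorem:QlSRN is a subsystems of WALT}, point~\ref{theo:realizPRNQ5}, then gives the final reduction to $\BNum{n}$.

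\textbf{Main obstacle.} The genuine work is not in the corollary itself but already discharged inside Theorem~\ref{theorem:QlSRN is a subsystems of WALT}; here the only care needed is bookkeeping with the embedding functors. The delicate point is confluence/order of reduction: $\red$ is a mix of call-by-name and call-by-value $\beta$-reduction, so one must check that the reductions $\srtw{t_i}{\rho}\red^{+}\BNum{n_i}$ can indeed be performed \emph{inside} the context formed by the outer combinators — in particular that the positions where the $\srtw{t_i}{\rho}$, $\srtw{u_j}{\rho}$ sit are not inside an abstraction that is never applied, and that no $\liv E$-style restriction blocks firing the relevant redexes until the arguments have become closed values. This is exactly the scenario the dynamics propositions (\ref{proposition:Dynamics of the embeddings}, \ref{proposition:Dynamics of the coercion}, \ref{proposition:Dynamics of the transition function}, \ref{proposition:Dynamics of the iterator}, \ref{proposition:Dynamics of the composition}) were designed to accommodate: each reduces its combinator applied to \emph{values} (or words), so one invokes the induction hypothesis to turn arguments into words first, and only then applies the combinator dynamics. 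Thus the corollary follows by assembling these lemmas along the inductive structure of $t$, with $\rho$ merely supplying numerals at the variable leaves.
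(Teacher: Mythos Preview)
Your proposal is correct and follows essentially the same route as the paper: structural induction on $t$, with the variable/base case handled via Theorem~\ref{theorem:QlSRN is a subsystems of WALT} point~\ref{theo:realizPRNQ4}, and the applied-function case by reducing the subterms to numerals (induction hypothesis) and then invoking point~\ref{theo:realizPRNQ5}. You are more explicit than the paper about collapsing the embedding layers through Proposition~\ref{proposition:Dynamics of the embeddings} and about the reduction-order concern, but these are refinements of the same argument rather than a different approach.
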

\section{Conclusions and further work}
\label{section:Conclusions and further work}
\WALL\ is a type assignment for pure $\lambda$-terms, typable in \SF, that characterizes the class of poly-time computable functions. Its design principles relax the stratification of deductions of \LLL, \LAL, \DLAL.
The subject reduction holds for a suitable restriction of both call-by-name and call-by-value $\beta$-reduction coherently with the idea that we can simulate a call-by-value poly-time sound rewriting system, like \SRN\ is.
A call-by-value behavior implies that \WALL\ presents the typical aspects of the standard call-by-value $\lambda$-calculus, that we can outline by a simple example.
Let us assume to use the call-by-value rewriting step $\rightarrow_{v}$ on \SF\ --- recall that $(\bs x.M)N\rightarrow_{v} M\subs{N}{x}$ if $N$ is either a variable or a $\lambda$-abstraction ---. 
Then, we can write $\vdash_{\SF}\ta{\bs fx.(\bs w.f\ w)(f x)}{\forall \alpha.(\alpha\rightarrow\alpha)\rightarrow\alpha\rightarrow\alpha}$. Namely, we can give the type of the Church numerals to $\rightarrow_{v}$-normal forms not having the canonical form $\bs fx.f(\ldots(f x)\ldots)$. But this is an intrinsic aspect of the call-by-value and \WALL\ cannot escape it.
The call-by-value nature of \WALL\ is a further example that the call-by-value operational semantics has a role in the domain of Linear logic 
\cite{Pravato+Ronchi+Roversi:1999-MSCS,Cop-DLag-Ron:EALCBV-04}.
\par
Finally, future work might address, at least, the following subjects.
\par
\textbf{Completeness of \WALT\ with respect to \SRN, or other systems like the tiered ones.}
The conjecture is that \WALT\ is, in fact, \SRN\ complete.  We know from 
\cite{DalLago+Martini+Roversi:2004-TYPES} how using an iterator to supply the same value to two distinct arguments of a given function. So, composing enough iterators in \WALT\ to duplicate a safe argument, and ``dispatching'' the copies as needed, using a linear exchange combinator, looks a promising strategy to prove the \SRN\ completeness of \WALT.
\par
\textbf{Generalizing the design principles of \WALL.} 
\WALT\ is a generalization of some basic structural proof-theoretical principles.
It is natural to ask if it is the larger system that extends such principles. In fact, preliminary investigations, say that there is room for further poly-time sound generalizations of \WALT.
\par
\textbf{Polynomial $\lambda$-calculi.}
We think that the rewriting relation $\red$ that \WALT\ induces on the $\lambda$-terms deserves further study. The emphasis should be put on the $\lambda$-terms $M$, typable in \SF, that can be reduced to their normal form  $M'$ by some poly-time sound normalization strategy, based on the call-by-value, or call-by-name, $\beta$-reduction.
If $M''$ is the normal form that we can get from $M$ by using $\red^*$, how much does it cost to rewrite $M''$ to $M'$, by iterating the the call-by-name, or call-by-value, $\beta$-reduction?
% \input{integrazioni}
%%%%%%%%
\bibliographystyle{alpha}
\bibliography{walt}
%%%%%%%%%
\appendix
\section{Completeness}
\label{section:Completeness}
We explicitly show that the poly-time completeness of \WALL\ holds. 
Notice that this property must be explicitly proved because \WALT\ does not exactly contain \LAL\ as its subsystem.
We shall represent and simulate poly-time Turing machines inside \WALT: the depth of the derivation encoding the given machine will be constant, independently of the length of the representation of the tapes. We borrow and integrate ideas in \cite{Roversi:1999-CSL,Asperti02TOCL,MollerMairson:2002-ICC} showing:
\begin{theorem}[Poly-time completeness.]
\label{theorem:poly-time-completeness}
For every poly-time Turing machine $\TM$ with a set of states ${\mathcal S}$, a tape alphabet $\Sigma$, a transition function $\TrF$, and a polynomial $\TMPoly{k}{x}$ of degree $k$, we can write a closed $\lambda$-term $\ol{\TM}$ with type 
$\ListT\AlphT \li \$^{4e+1}\ConfT$ such that $\ol{\TM}$ takes a list of type $\ListT\AlphT$, that represents the input tape, and evaluates it to a configuration of type $\$^{4e+1}\ConfT$. 
\end{theorem}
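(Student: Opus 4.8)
## Proof proposal

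The plan is to follow the classical route for poly-time completeness of light logical systems, adapted to the call-by-name/call-by-value rewriting $\red$ that \WALT\ induces, and to exploit the combinators already built in Section~\ref{section:Programming combinators in WALT} — in particular the iterator $\Iter{1+\Intg{n}}{\Intg{s}}{F_0}{F_1}{G}$ (Proposition~\ref{proposition:Dynamics of the iterator}) and the word/string/list recasting machinery (Propositions~\ref{proposition:Dynamics of the recasting combinators}, \ref{proposition:Dynamics of the coercion}). First I would fix the representation. A configuration of $\TM$ is a pair of tapes (left and right of the head) together with a state; each tape is a word over $\AlphT$, so I set $\ConfT$ to be (essentially) $\PreConf{}{}{}{}{}$-style tuple type $\ListT\AlphT\ten\ListT\AlphT\ten\StatT$ assembled with the tensor combinators of Subsection~\ref{subsection:Basic data-types in WALT}, and I encode the alphabet $\Sigma$ and the state set $\mathcal S$ as finite boolean spaces $\BoolT_{|\Sigma|}$, $\BoolT_{|{\mathcal S}|}$ (Proposition~\ref{proposition:Typing rules relative to booleans}). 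The transition function $\TrF$, being a map between finite sets, is represented by a closed typable $\lambda$-term $\ol{\TrF}$ of type $\ConfT\li\ConfT$ obtained by nested case analysis on the booleans — here I only need that any function on finite data is $\lambda$-definable with bounded depth, which is immediate from Proposition~\ref{proposition:Dynamics relative to booleans}.

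Next I would build the single-step and the iterated simulation. A one-step move is $\ol{\TrF}:\ConfT\li\ConfT$; to iterate it $\TMPoly{k}{|x|}$ times I use the string iterator: from the input list I compute its length as a string via $\BInttoUInt$-like coercion, raise it to the $k$-th power with the arithmetic combinators ($\NMult$, $\NSquare$, etc., which live in \WALT\ by iterating $\USucc$), obtaining a term of type $\$^{c}\UIntT$ for some constant $c$ depending only on $k$; then a string of that type, applied to $\ol{\TrF}$ and to the initial configuration, drives the required number of steps. The key typing point — and the reason the box-depth is constant in $|x|$, not growing with it — is exactly the one emphasised in the paper: iterating a step function of type $\ConfT\li\ConfT$ through a Church numeral of type $\UIntT\equiv\forall\alpha.!(\alpha\li\alpha)\li\$(\alpha\li\alpha)$ costs one $\$$, and composing $k$-fold multiplication adds only finitely many further $\$$'s, so the total depth is some $4e+1$ with $e$ depending on $k$ alone. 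I would then let $\ol{\TM}\equiv\bs l.\,(\text{extract result})\,(\,(\text{poly-length of }l)\,\ol{\TrF}\,(\,\text{initial-config-from }l\,)\,)$, coercing with $\Coerc^{m}$ and $\LEmbed{n}{p}{}$ wherever the accounting of $\$$-modalities forces it, and read off its type as $\ListT\AlphT\li\$^{4e+1}\ConfT$.

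The correctness argument then splits as usual into a typing claim and a dynamics claim. For typing I would just chase the derived rules: Proposition~\ref{proposition:Typing rules relative to words}, \ref{proposition:Typing the tensors}, \ref{proposition:Typing rules relative to booleans}, \ref{proposition:Typing rules relative to lists}, \ref{proposition:Typing the coercions}, \ref{proposition:Typing the embeddings}, and the iterator/arithmetic typings, composing them exactly as $\ol{\TM}$ is composed, and verifying the modality bookkeeping produces $\$^{4e+1}$. For dynamics I would show by induction on the number of steps that applying $\ol{\TrF}$ to the encoding of a $\TM$-configuration $\red^*$-reduces to the encoding of its $\TrF$-successor (single-step faithfulness), then that the string of length $\TMPoly{k}{|x|}$ iterates this $\TMPoly{k}{|x|}$ times (using the dynamics of Church numerals, which $\red$ respects since the step term and its arguments become values at the relevant depth), and finally that since $\TM$ halts within $\TMPoly{k}{|x|}$ steps the result configuration is the final one, from which the output is extracted. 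The main obstacle I anticipate is none of these inductions individually but the bookkeeping glue: making the $\$$-modality levels of the initial configuration, of the polynomial clock, and of $\ol{\TrF}$ line up so that the $\liv$/$\li$ elimination rules of \WALT\ actually apply — the paper has already shown (e.g. in the $\UInttoList$ and iterator discussion) that this requires carefully placed $\Coerc^{m}$ and eager/linear embeddings, and getting every exponent right, together with checking that the reductions needed all fall inside the restricted $\red$ (values, at most one free variable under $!$), is where the real work lies. Because the depth $4e+1$ is constant, Theorem~\ref{theorem:Weak poly-time normalization} then yields that $\ol{\TM}$ normalizes in polynomial time, completing the completeness direction.
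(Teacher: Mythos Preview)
Your two-phase plan --- a quantitative part producing a Church numeral of size $\TMPoly{k}{|x|}$ and a qualitative part iterating a closed one-step $\ol{\TrF}:\ConfT\li\ConfT$ that many times on an initial configuration built from the input list --- is exactly the paper's architecture, and you are right that the real work is the $\$$-bookkeeping.  Where your concrete choices diverge from the paper, though, is precisely where that bookkeeping bites.  The paper does \emph{not} take $\ConfT$ to be the tensor $\ListT\AlphT\ten\ListT\AlphT\ten\StatT$; it \emph{redefines} the list type (Table~\ref{table:Basic data-types}, different from Subsection~\ref{subsection:Basic data-types in WALT}) and introduces a bespoke Church-style $\ConfT$ in which a \emph{single} shared ``cons'' abstraction of type $!(\AlphT\li(\BB\li\alpha)\li\alpha)$ threads both tape halves.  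The transition $\ol{\TrF}$ then factors through an intermediate quasi-configuration via $\CtoP$ and $\PtoC$, with the actual move read from a look-up table $\Lookup{\gseq c}$.  With your tensor-of-two-separate-lists encoding, popping a head from one list while pushing onto the other at the \emph{same} depth, all inside a $!$-box so the whole step is iterable by a string, is exactly where \WALT's constraints on duplication and box nesting obstruct you; the paper's fused encoding is the device that resolves this.  Relatedly, the paper enlarges $\AlphT$ with border symbols $\LTape,\RTape$ so the tape is always finite and extended on demand by $\ol{\TrF}$; without markers there is no uniform, typable way to detect and grow an empty half-tape.

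Two smaller points.  The combinator $\Iter{1+\Intg{n}}{\Intg{s}}{F_0}{F_1}{G}$ of Subsection~\ref{subsection:Iterators} is not used here: that machine unfolds recursion driven by the \emph{bits} of a binary word, whereas the clock is a unary count, so the paper simply applies a string $\UNum{p}$ of type $\UIntT$ to $\ol{\TrF}$.  And the constant $4e+1$ has a precise origin: the polynomial is overestimated by $Kx^{2^e}$ and realised as an $e$-fold composition of a squaring term with $\NSquare^e:\$\UIntT\liv\$^{4e}\UIntT$ (each squaring costs four boxes because it goes through $\NDiagE$, a coercion, and $\NMultE$), giving $\NPoly{e}{K}:\$\UIntT\liv\$^{4e}\UIntT$; the extra $1$ is the single $\$$ incurred when that string iterates $\ol{\TrF}$.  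The input duplication is done with $\LDiagE[\ACoerce,\ADiagE]$ rather than any word-diagonal, and the two copies are routed through $\LtoN$ (for the clock) and $\MLCoerce^{4e}[\ACoerce]$ then $\LtoC$ (for the initial configuration).
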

The configuration contains the encoding of the output tape, and $e$ is the least value such that $k\leq 2^e$. $\ol{\TM}$ is the composition of two parts. The quantitative one is a term that represents $\TMPoly{k}{x}$. It calculates how long the simulation of $\TM$ lasts. The qualitative part implements the transition function $\TrF$ of $\TM$ as an iterable $\lambda$-term. The quantitative and qualitative parts are put together so that the Church numeral, result of the quantitative part, iterates the qualitative one, starting from an initial configuration, that contains the representation of the input tape.
Before proceeding, without loss of generality, we fix some simplifying assumption and convention on the poly-time Turing machines we shall represent. 
$\mathcal S$ contains the \textit{initial} and the \textit{accepting} states $s_0$, and $s_a$, respectively. 
$\Sigma$ contains at least the symbols \textit{false}, \textit{true}, and \textit{blank}, identified as $\sigma_1,\sigma_2, \sigma_3$, respectively.
For any given $\TM$, $\TrF$ cannot leave $s_a$, once entered it. Namely, $\TMPoly{k}{x}$ may overestimate the time required to yield the result. So, $\TrF$, besides the \textit{leftward move} $\MoveL$ and the \textit{rightward move} $\MoveR$, can issue the \textit{stay-there} ``move'' $\DoNotMove$ to the read head.
Finally, we assume that $\TM$ enters $s_a$ with its head on the leftmost symbol of the \textit{output portion} of the tape, where, by ``output portion'', we mean the symbol under the head and all the symbols to its right hand side, up to the first blank.
%%%%%%%%%%%%%%%%%
\subsection{Preliminaries}
\textbf{Notations and definitions.} 
If $X$ is any finite set, $\size{X}$ is its cardinality.
$\$^nA$ denotes $\$\cdots\$A$ with $n\geq 0$ occurrences of $\$$. An analogous meaning holds for $!^nA$. $(\li^{n}_{i=1} A_i)$ abbreviates $A_1\li\cdots\li A_n$, while $(\mliv^{n}_{i=1} A_i)$ abbreviates $A_1\liv\cdots\liv A_n$. If useful, $B^A$ shortens $A\li B$, for any $A, B$. Finally, the $\lambda$-term identity  $\bs x.x$ is $\Id$.
\par
Table~\ref{table:Basic data-types} introduces and redefines some basic data-types we shall use to define the encoding of a given $\TM$.
\begin{Table}[ht]
\begin{center}
\small
{\renewcommand{\arraystretch}{1.2}
\begin{tabular}{|r|l|}
\hline
\textbf{Type name} & \textbf{Type definition and canonical terms} \\
\hline\hline
List of type $A$ &
\begin{minipage}{.7\textwidth}
$
\begin{array}{r@{\hspace{.4em}}c@{\hspace{.4em}}ll}
\ListT\ A &\equiv& \forall \alpha\beta.
	      !(A\li(\beta\li\alpha)\li\alpha)\li
	      \$(\alpha\li\beta\li\alpha)\\
\Nil &\equiv& \bs c.\bs x.\bs y .x \\
{[M_1,\ldots,M_m]}
     &\equiv&\bs c.\bs x.
        \bs y.c M_1( \cdots (\bs y. c M_m(\bs y.x)) \cdots)
       &m\geq 1
\end{array}
$
\end{minipage}
\\
\hline
Alphabet &
\begin{minipage}{.7\textwidth}
$
\begin{array}{r@{\hspace{.4em}}c@{\hspace{.4em}}ll}
\AlphT &\equiv& \BoolT_{\size{\Sigma}+1}\\
\CAlph{\sigma_i}
	&\equiv&\pi^{\size{\Sigma}+1}_{i}
        &1\leq i\leq \size{\Sigma} \\
\LTape  &\equiv&\pi^{\size{\Sigma}+1}_{0} \\
\RTape  &\equiv&\pi^{\size{\Sigma}+1}_{\size{\Sigma}+1} \\
\end{array}
$
\end{minipage}
\\
\hline State &
\begin{minipage}{.7\textwidth}
$
\begin{array}{r@{\hspace{.4em}}c@{\hspace{.4em}}ll}
\StatT &\equiv& \BoolT_{\size{{\mathcal S}}}\\
\CStat{s_i}
	&\equiv&\pi^{\size{\mathcal S}}_{i}
        &0\leq i\leq \size{\mathcal S}-1
\end{array}
$
\end{minipage}
\\
\hline
\end{tabular}}
\normalsize
\end{center}
\caption{Basic data-types}
\label{table:Basic data-types}
\end{Table}
We redefine the structure of the lists to simplify their inductive manipulation in a call-by-value setting.
The alphabet representation has the term $\CAlph{\sigma_i}$ for every symbol $\sigma_i\in\Sigma$, plus $\LTape$, that marks the \textit{left hand border} of the represented tape, and $\RTape$ which marks the \textit{right hand border}. This allows us to give a finite representation of the tape, and to extend it on-demand to the left, or to the right, when the read head reaches one of the two borders. Namely, our choice is different from \cite{MollerMairson:2002-ICC}, where no border is explicitly used in the representation of the tape. There, the side effect is that every application of the representation of $\TrF$ to the representation of the tape extends this latter by one symbol per part.

\begin{lemma}[Typing Alphabet and State.]
\label{lemma:Typing the canonical terms}
Recall that, for every $0\leq i\leq m-1$,
$\emptyset;\emptyset;\emptyset\vdash \ta{\pi^m_i}{\BoolT_m}$.
The type of $\CAlph{\sigma_i}$ and $\CStat{s_i}$ are obvious, once instantiated  $m$ with $\size{\AlphT}$ and $\size{\StatT}$, respectively.
\end{lemma}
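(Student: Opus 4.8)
The plan is to obtain the statement directly as a specialization of Proposition~\ref{proposition:Typing rules relative to booleans} (the typing rule for the boolean projections); no induction or structural analysis is needed, since $\AlphT$ and $\StatT$ are, by definition, just particular instances of the boolean type $\BoolT_m$, and the canonical terms for alphabet symbols, tape borders, and states are literally projections $\pi^m_i$.

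First I would recall the definitions collected in Table~\ref{table:Basic data-types}: $\AlphT$ is $\BoolT_{\size{\AlphT}}$, where $\size{\AlphT}$ counts the distinct projection constants attached to the tape alphabet --- the $\size{\Sigma}$ symbols $\CAlph{\sigma_1},\ldots,\CAlph{\sigma_{\size{\Sigma}}}$ together with the two border markers $\LTape$ and $\RTape$ --- while $\StatT$ is $\BoolT_{\size{\StatT}}$ with $\size{\StatT}=\size{{\mathcal S}}$. With these in hand, each term in question is by definition a projection: $\CAlph{\sigma_i}=\pi^{\size{\AlphT}}_{i}$, $\LTape=\pi^{\size{\AlphT}}_{0}$, $\RTape=\pi^{\size{\AlphT}}_{\size{\AlphT}-1}$, and $\CStat{s_i}=\pi^{\size{\StatT}}_{i}$.

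Then I would instantiate Proposition~\ref{proposition:Typing rules relative to booleans} with $n:=\size{\AlphT}$ (respectively $n:=\size{\StatT}$): since in each case the index $j$ satisfies $0\le j\le n-1$, the proposition gives $\emptyset;\emptyset;\emptyset\vdash\ta{\pi^{\size{\AlphT}}_{j}}{\BoolT_{\size{\AlphT}}}$, which is verbatim $\emptyset;\emptyset;\emptyset\vdash\ta{\CAlph{\sigma_i}}{\AlphT}$, and likewise for $\LTape$ and $\RTape$, and $\emptyset;\emptyset;\emptyset\vdash\ta{\CStat{s_i}}{\StatT}$. The only point that wants a moment's care --- the closest thing to an obstacle here --- is the index bookkeeping: one must ensure that $\size{\AlphT}$ is taken large enough that the top index $\size{\AlphT}-1$ carried by $\RTape$ stays in $\{0,\ldots,\size{\AlphT}-1\}$, so that the chosen instance of the projection rule is well-formed. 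Everything else is an immediate reading-off from the already-established typing of $\pi^m_i$, and these judgments serve only as the base-case typings of the data feeding into the representation of the transition function of $\TM$.
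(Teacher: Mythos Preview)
Your proposal is correct and follows exactly the approach the paper intends: the lemma is stated in a self-contained way (``Recall that \ldots\ are obvious, once instantiated \ldots''), and the paper gives no separate proof beyond this remark, so spelling out the instantiation of Proposition~\ref{proposition:Typing rules relative to booleans} with $m=\size{\AlphT}$ and $m=\size{\StatT}$ is precisely what is required. Your attention to the index bookkeeping for $\RTape$ is appropriate, since the paper's Table~\ref{table:Basic data-types} has an off-by-one in the arity of $\AlphT$ that your reading silently repairs.
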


\begin{lemma}[Dynamics of Alphabet and State.]
\label{lemma:Dynamics of some canonical term}
Recall that $\pi^{m}_{i}\, \lan M_0,\ldots,M_{m-1}\ran \red^+ M_i$. An analogous behavior exists for $\CAlph{\sigma_i}$ and $\CStat{s_i}$, once replaced 
$m$ by $\size{\Sigma}+1$ and $\size{\mathcal S}$, respectively.
\end{lemma}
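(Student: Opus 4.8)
\textbf{Proof plan for Theorem~\ref{lemma:Dynamics of Alphabet and State}.} The statement is essentially a specialization of Proposition~\ref{proposition:Dynamics relative to booleans}, which already asserts that $\pi^m_i\,\lan M_0,\ldots,M_{m-1}\ran\red^+ M_i$ for any $m\geq 1$ and $0\leq i\leq m-1$. Since, by the definitions in Table~\ref{table:Basic data-types}, $\CAlph{\sigma_i}\equiv\pi^{\size{\Sigma}+1}_{i}$, $\LTape\equiv\pi^{\size{\Sigma}+1}_{0}$ and $\RTape\equiv\pi^{\size{\Sigma}+1}_{\size{\Sigma}+1}$ — i.e.\ all these are literally projection combinators on a space of $\size{\Sigma}+2=\size{\AlphT}$ booleans — and $\CStat{s_i}\equiv\pi^{\size{\mathcal S}}_{i}$ is a projection on a space of $\size{\StatT}=\size{\mathcal S}$ booleans, the dynamics claimed here follows by instantiating Proposition~\ref{proposition:Dynamics relative to booleans} with $m$ set to $\size{\Sigma}+1+1$ in the alphabet case and to $\size{\mathcal S}$ in the state case.

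Concretely, first I would unfold the abbreviation $\pi^m_i\equiv\bs\lan x_0\ldots x_{m-1}\ran. x_i$, which itself unfolds via the tensor-destructor notation to $\bs w.w(\bs x_0\ldots x_{m-1}.x_i)$, and the tuple $\lan M_0,\ldots,M_{m-1}\ran\equiv\bs z.z\,M_0\cdots M_{m-1}$. Then the reduction $\pi^m_i\,\lan M_0,\ldots,M_{m-1}\ran\red^+ M_i$ is obtained by first firing the outer redex (substituting the tuple for $w$, a value, hence legal for $\red$ by clause~\eqref{align:red-2} or~\eqref{align:red-1} of Definition~\ref{definition:Rewriting PT}), which yields $(\bs z.z\,M_0\cdots M_{m-1})(\bs x_0\ldots x_{m-1}.x_i)$, then firing that redex to get $(\bs x_0\ldots x_{m-1}.x_i)M_0\cdots M_{m-1}$, and finally performing the $m$ head reductions that erase all arguments except the $i$-th. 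One only needs to check that each of these steps is admissible for $\red$: the erasing steps use clause~\eqref{align:red-1} ($\nocc{x_j}{x_i}=0$ for $j\neq i$), and the single step that actually substitutes $M_i$ uses clause~\eqref{align:red-2} provided $M_i$ is a value, or is covered by the contextual closure when it is not yet in that form — but in the intended applications the arguments $M_0,\ldots,M_{m-1}$ will be the canonical constructors $\CAlph{\sigma_j}$, $\CStat{s_j}$, $\LTape$, $\RTape$, which are abstractions, hence values.

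Having established the generic reduction, the two concrete claims are then immediate: for the alphabet, take the tuple $\lan M_0,\ldots,M_{\size{\Sigma}+1}\ran$ of length $\size{\Sigma}+2$ and apply $\CAlph{\sigma_i}$ (resp.\ $\LTape$, $\RTape$), recovering the component at index $i$ (resp.\ $0$, $\size{\Sigma}+1$); for the state, take a tuple of length $\size{\mathcal S}$ and apply $\CStat{s_i}$ to recover the component at index $i$. No genuine obstacle arises here — the lemma is purely a bookkeeping instantiation of Proposition~\ref{proposition:Dynamics relative to booleans}. If anything, the only point requiring a sentence of care is making explicit that the arity used in the instantiation matches the cardinality convention ($\size{\AlphT}=\size{\BoolT_{\size{\Sigma}+1}}$ counts $\size{\Sigma}+2$ canonical entries because the boolean space $\BoolT_{\size{\Sigma}+1}$ is indexed $0,\ldots,\size{\Sigma}+1$), so that $\LTape$ and $\RTape$ genuinely land inside the range of valid projections; this is exactly the indexing already fixed in Table~\ref{table:Basic data-types}, so the check is routine.
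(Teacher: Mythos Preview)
Your proposal is correct and matches the paper's treatment: the paper gives no explicit proof for this lemma, presenting it as an immediate instantiation of the booleans dynamics (Proposition~\ref{proposition:Dynamics relative to booleans}), which is precisely what you identify and unfold. Your detailed reduction trace and the care about the arity bookkeeping go beyond what the paper bothers to spell out, but the underlying argument is the same.
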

%%%%%%%%%%%%%%%%
Table~\ref{table:Basic combinators} introduces the \textit{Basic combinators} useful to build the more complex terms of the qualitative and quantitative parts of $\ol{\TM}$.
Some of them have already been defined in previous sections. Any explicit reintroduction is justified by the attempt to improve the readability by means of a uniform naming of the combinators with similar behavior.
\begin{Table}[ht]
\begin{center}
\small
{\renewcommand{\arraystretch}{1.2}
\begin{tabular}{|r|l|}
\hline
\textbf{Class} & \textbf{Definition} \\
\hline\hline
Successor &
\begin{minipage}{.7\textwidth}\vspace{2pt}
$
\begin{array}{r@{\hspace{.4em}}c@{\hspace{.4em}}ll}
\NSucc &\equiv& \bs mf.(\bs zx.f(z\,x))(m\,f)
\\
\NSuccE &\equiv&
\bs\elan x\, y\eran.
\elan \BEmbed{1}{\NSucc}\,x
    , \BEmbed{1}{\NSucc}\,y \eran
\\
{\LSucc[M]} 
       &\equiv& \bs xy. \LPush[M]\, x\, (y\, \Id)\\
       & & \text{ \textbf{where} }
           \LPush[M] \equiv \bs xl.
                       \bs c.(\bs yzw.c\,z(y\,w))(l\,c)(M\,x)
\\
{\LSuccE[M,N]} 
       &\equiv& \bs xy. 
           \LPushE[M] 
	   (N\,x)
           ((\bs \elan z_1\, z_2\eran.
	       \elan \bs y.z_1,\bs y.z_2\eran)
	    (y\, \Id))\\
       & & \text{ \textbf{where} }
           \LPushE[M] \equiv 
	   \bs \elan x_1\, x_2\eran.
	   \bs \elan y_1\, y_2\eran.
	   \\
       & & \phantom{\text{ \textbf{where} } \LPushE[M] \equiv \bs }
	   \elan
	    \LEmbed{1}{2}{\LSucc[M]}\,x_1\,y_1
	    ,\LEmbed{1}{2}{\LSucc[M]}\,x_2\,y_2
	   \eran
\end{array}
$
\end{minipage}
\\
\hline
Coerce &
\begin{minipage}{.7\textwidth}\vspace{2pt}
$
\begin{array}{r@{\hspace{.4em}}c@{\hspace{.4em}}ll}
\ACoerce&\equiv& \bs x. x\,
   \lan
    \CAlph{\sigma_0},\ldots,\CAlph{\sigma_{\size{\Sigma}+1}}
   \ran
\\
\NCoerce &\equiv&
\bs m.(\bs z.z\,\UNum{0})(m(\bs y.\NSucc\, y))
\\
{\LCoerce[M]} &\equiv& \bs l.(\bs z.z\, \Nil\, \Id)(l\,\LSucc[M])
\\
\MLCoerce^1[M] &\equiv& \LCoerce[M]\\
\MLCoerce^n[M] &\equiv& \bs x. 
                   \LEmbed{1}{1}{\MLCoerce^{n-1}[M]}
		   (\LCoerce[M]\, x)
	       & n> 1
\end{array}
$
\end{minipage}
\\
\hline
Diagonal &
\begin{minipage}{.8\textwidth}\vspace{2pt}
$
\begin{array}{r@{\hspace{.4em}}c@{\hspace{.4em}}ll}
\ADiag&\equiv&
\bs x.x\,
      \lan 
      \lan\CAlph{\sigma_0},\CAlph{\sigma_0}\ran
          ,\ldots,
      \lan\CAlph{\sigma_{\size{\Sigma}+1}},
          \CAlph{\sigma_{\size{\Sigma}+1}}\ran 
      \ran
\\
\ADiagE&\equiv&
\bs x.x\,
      \lan 
      \elan\CAlph{\sigma_0},\CAlph{\sigma_0}\eran
           ,\ldots,
      \elan\CAlph{\sigma_{\size{\Sigma}+1}},
           \CAlph{\sigma_{\size{\Sigma}+1}}\eran
      \ran
\\
\NDiagE&\equiv&
\bs m.(\bs z.z\,\elan\UNum{0},\UNum{0}\eran)
       (m\,\NSuccE)
\\
\LDiagE&\equiv& 
\bs l.(\bs z.z\,\elan\Nil,\Nil\eran\, I)(l\,\LSuccE[M,N])
\end{array}
$
\end{minipage}
\\\hline
\end{tabular}}
\normalsize
\end{center}
\caption{Basic combinators}
\label{table:Basic combinators}
\end{Table}
Every \textit{successor} takes an instance of some Basic data-types and yields its successor, whatever this means.
Every \textit{coerce} takes an instance of some Basic data-types and gives back the same instance inside some boxes.
Every \textit{diagonal} replicates the instance of some Basic data-types inside some boxes.
\begin{lemma}[Typing the Basic combinators.]
\label{lemma:Typing the Basic combinators}
\begin{description}
\item [Successor.]
\begin{enumerate}
\item 
$\emptyset;\emptyset;\emptyset\vdash\ta{\NSucc}{\UIntT\li\UIntT}$.

\item 
$\emptyset;\emptyset;\emptyset\vdash
\ta{\NSuccE}
   {(\$\UIntT\odot\$\UIntT)\li(\$\UIntT\odot\$\UIntT)}$.

\item 
Let $\emptyset;\emptyset;\emptyset\vdash\ta{M}{A\li\$A}$. Then,
both 
$\emptyset;\emptyset;\emptyset
\vdash\ta{\LPush[M]}{A\li\ListT A\li\ListT A}$, and
$\emptyset;\emptyset;\emptyset
\vdash\ta{\LSucc[M]}{A\li(\GG\li\ListT A)\li\ListT A}$.

\item 
Let $\emptyset;\emptyset;\emptyset\vdash\ta{M}{A\li\$A}$
and $\emptyset;\emptyset;\emptyset\vdash\ta{N}{A\li(\$A\odot\$A)}$.
Then, both
$\emptyset;\emptyset;\emptyset
\vdash
\ta{\LPushE[M,N]}
   {(\$A\odot\$A)\li
    (\$(\GG\li\ListT A)\odot\$(\GG\li\ListT A))\li
    (\$\ListT A\odot\$\ListT A)}$, and
$\emptyset;\emptyset;\emptyset
\vdash
\ta{\LSuccE[M]}
   {A\li(\GG\li(\$\ListT A\odot\$\ListT A))
     \li(\$\ListT A\odot\$\ListT A)}$.
\end{enumerate}

\item [Coerce.]
\begin{enumerate}
\item 
$\emptyset;\emptyset;\emptyset\vdash\ta{\ACoerce}{\AlphT\li\$\AlphT}$.

\item 
$\emptyset;\emptyset;\emptyset\vdash\ta{\NCoerce}{\UIntT\li\$\UIntT}$.

\item 
Let 
$\emptyset;\emptyset;\emptyset
\vdash\ta{M}{A\li\$A}$. Then
$\emptyset;\emptyset;\emptyset
\vdash\ta{\LCoerce[M]}{\ListT A\li\$\ListT A}$.

\item 
Let 
$\emptyset;\emptyset;\emptyset
\vdash\ta{M}{A\li\$A}$. Then
$\emptyset;\emptyset;\emptyset
\vdash\ta{\MLCoerce^n[M]}{\ListT A\li\$^n\ListT A}$, for every $n\geq 1$.
\end{enumerate}

\item [Diagonal.]
\begin{enumerate}
\item 
$\emptyset;\emptyset;\emptyset
\vdash\ta{\ADiag}{\AlphT\li(\AlphT\otimes\AlphT)}$.

\item 
$\emptyset;\emptyset;\emptyset
\vdash\ta{\ADiagE}{\AlphT\li(\$\AlphT\odot\$\AlphT)}$.

\item 
$\emptyset;\emptyset;\emptyset
\vdash\ta{\NDiagE}{\UIntT\li\$(\$\UIntT\odot\$\UIntT)}$.

\item 
Let 
$\emptyset;\emptyset;\emptyset
\vdash\ta{M}{A\li\$A}$ and $\emptyset;\emptyset;\emptyset
\vdash\ta{N}{A\li(\$A\odot\$A)}$. Then
$\emptyset;\emptyset;\emptyset
\vdash\ta{\LDiagE[M,N]}{\ListT A\li\$(\$\ListT A\odot\$\ListT A)}$.
\end{enumerate}
\end{description}
\end{lemma}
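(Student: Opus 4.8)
The plan is to prove Lemma~\ref{lemma:Typing the Basic combinators} combinator by combinator, exhibiting in each case an explicit \WALT\ derivation built bottom-up from the syntactic shape of the $\lambda$-term, using the logical rules of Figure~\ref{figure:Weak LAL as a type assignment} together with the derived typing rules for tensors, elementary tensors, booleans, lists and the embeddings already established in Propositions~\ref{proposition:Typing tuples}, \ref{proposition:Typing rules relative to the elementary tensor}, \ref{proposition:Typing rules relative to booleans}, \ref{proposition:Typing rules relative to lists}, \ref{proposition:Typing the embeddings} and~\ref{proposition:Typing the coercions}. Several cases reduce to, or closely follow, facts already proved: $\NSucc$ is literally $\USucc$, so item~1 of the Successor part is Proposition~\ref{proposition:Typing rules relative to strings}; $\NCoerce$ is the string analogue of $\Coerc$, so its derivation transcribes that of Proposition~\ref{proposition:Typing the coercions} with $\BIntT$ replaced by $\UIntT$ and the pair $\BSuccZ,\BSuccO$ replaced by the single successor $\NSucc$; and $\MLCoerce^n[M]$ is handled by a short induction on $n$, the step case being one $\li E$ of $\LEmbed{1}{1}{\MLCoerce^{n-1}[M]} : \$\ListT A\li\$^n\ListT A$ against $\LCoerce[M]\,x : \$\ListT A$, followed by $\li I$.

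The order I would follow is to do the list-successor family first, since it is the load-bearing part. Given $\emptyset;\emptyset;\emptyset\vdash M : A\li\$A$, the derivation of $\LPush[M] : A\li\ListT A\li\ListT A$ instantiates by $\forall E$ the two quantified variables $\alpha,\beta$ of the Completeness-appendix list type $\ListT A\equiv\forall\alpha\beta.!(A\li(\beta\li\alpha)\li\alpha)\li\$(\alpha\li\beta\li\alpha)$ of Table~\ref{table:Basic data-types}, treats $c$ as a polynomial assumption so that $l\,c : \$(\alpha\li\beta\li\alpha)$ is obtained by $\li E_!$, feeds $M\,x : \$A$ in, threads everything through $\bs yzw.c\,z(y\,w)$ with a single instance of the $\$$ rule positioned so that the $\$$ that $M$ puts on $A$ and the $\$$ on the list body line up, and closes with $\li I_!$ on $c$ and $\forall I$; the side condition $\Theta\neq\emptyset \Rightarrow \dom{\Phi}\cap\FV{M}\neq\emptyset$ on the rule $!$ is met because $c$ genuinely occurs in the body. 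Then $\LSucc[M] : A\li(\GG\li\ListT A)\li\ListT A$ is just $\LPush[M]$ applied to the forced continuation $y\,\Id$ by a plain $\li E$, with $\GG=\gamma\li\gamma$ playing the role of the thunk type. The remaining coerce and diagonal combinators $\LCoerce[M]$, $\ACoerce$, $\ADiag$, $\ADiagE$, $\NDiagE$, $\LDiagE[M,N]$ are then routine: each iterates the appropriate constructor-successor over its argument, collects the copies into a (possibly elementary) tensor and reads them back out; the $\odot$-variants $\NSuccE$, $\LSuccE[M,N]$, $\LPushE[M]$, $\ADiagE$ and $\LDiagE[M,N]$ must be written with $\liv$ throughout, and every component occurring under an $\elan\cdot\eran$ must be typed with empty linear and linear-partially-discharged zones and with only elementary free variables, which is exactly what wrapping by $\BEmbed{1}{\cdot}$ or $\LEmbed{1}{1}{\cdot}$ buys.

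The main obstacle is the bookkeeping: tracking the three context zones $\Gamma;\Delta;{\mathcal E}$ and the merges $\sqcup$ through the elimination rules $\li E$, $\li E_!$ and $\liv E$, while inserting the $\$$ and $!$ rules at exactly the points that make (i) the claimed number of $\$$-modalities surface in the conclusion --- the single leading $\$$ in $\ListT A\li\$\ListT A$, the $\$^n$ in $\MLCoerce^n[M]$, the nested $\$(\$\ListT A\odot\$\ListT A)$ in $\LDiagE[M,N]$ and the $\$(\$\UIntT\odot\$\UIntT)$ in $\NDiagE$ --- and (ii) no side condition on $!$ fail. For the elementary-tensor cases the delicate point, already familiar from Sections~\ref{subsection:Core combinators} and~\ref{subsection:Iterators}, is that $\li I_{\odot}$ discharges the tuple components into the elementary zone, so an ordinary linear function such as $\NSucc$ or $\LSucc[M]$ cannot be applied directly to a component but must first be turned into a $\liv$-function by $\BEmbed{1}{\cdot}$ or $\LEmbed{1}{1}{\cdot}$, whose argument may legitimately sit in an elementary position. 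Once these placements are fixed, the rest of each derivation is mechanical application of $\li I$, $\li I_{\$}$, $\li I_{\odot}$, $\li E$, $\li E_!$, $\forall I$ and $\forall E$, so the proof brings in no new idea beyond the combinator-programming discipline already in use --- only a fair amount of careful tree drawing.
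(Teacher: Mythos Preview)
Your proposal is correct and matches the paper's treatment, which in fact gives no proof of this lemma at all: it is stated and then the text moves directly to the companion dynamics lemma, the derivations being considered routine once the earlier typing propositions for strings, tensors, elementary tensors, lists and embeddings are in place. Your plan---reuse Proposition~\ref{proposition:Typing rules relative to strings} for $\NSucc$, transcribe the $\Coerc$ derivation for $\NCoerce$, induct on $n$ for $\MLCoerce^n[M]$, and build the list/diagonal combinators around $\LPush[M]$ with the $\BEmbed{1}{\cdot}$/$\LEmbed{1}{1}{\cdot}$ wrappers supplying the $\liv$-interface needed under $\elan\cdot\eran$---is exactly the intended mechanical verification, and your remarks on where the $\$$ and $!$ rules must sit and on the side condition of $!$ are accurate.
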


\begin{lemma}[Dynamics of the Basic combinators.]
\label{lemma:Dynamics of the Basic combinators}
\begin{description}
\item [Successor.]
\begin{enumerate}
\item 
$\NSucc\,\UNum{n}\red^+\UNum{n+1}$.

\item 
$\NSuccE\,\elan\UNum{m},\UNum{n}\eran\red^+
          \elan\UNum{m+1},\UNum{n+1}\eran$.

\item 
For every $n\geq 0$, let
$M\,P$, $\LPush[M]\,P\,[M_1,\ldots,M_n]$, and
$\LSucc[M]\,P\,(\bs y. [M_1,\ldots,M_n])$ be typeable.
For every
$y\not\in\bigcup^{n}_{i=1}\FV{M_i}$,
$M\,P\red^+ P$ implies
$\LPush[M]\,P\,[M_1,\ldots,M_n]
   \red^+ [P,M_1,\ldots,M_n]$
and
$\LSucc[M]\,P\,(\bs y. [M_1,\ldots,M_n])
   \red^+ [P,M_1,\ldots,M_n]$.

\item 
For every $m, n\geq 0$, and $i,j\in\{0,1,2\}$, let us assume that
$M\,P_i$, 
$N\,P_i$,
\begin{align*}
&\LPushE[M]\,
  \elan P_1,P_2\eran\,
  \elan \bs y.[M_1,\ldots,M_m], \bs y.[N_1,\ldots,N_n]\eran
  \enspace, \text{ and}
\\
&\LSuccE[M,N]\,
  P_0\,
  \elan [M_1,\ldots,M_m], [N_1,\ldots,N_n]\eran
\end{align*}
be typeable.
For every
$y\not\in(\bigcup^{m}_{i=1}\FV{M_i})\cup
         (\bigcup^{n}_{i=1}\FV{N_i})$,
if $M\,P_i\red^+ P_i$ and $N\,P_i\red^+ \elan P_i,P_i\eran$, then:
\begin{eqnarray*}
\lefteqn{
\LPushE[M]\,
  \elan P_1,P_2\eran\,
  \elan \bs y.[M_1,\ldots,M_m], \bs y.[N_1,\ldots,N_n]\eran
  \red^+
}\\
&&\qquad\qquad\qquad\qquad\qquad\qquad\qquad\qquad
  \elan [P_1,M_1,\ldots,M_m], [P_2,N_1,\ldots,N_n]\eran
\\
\lefteqn{
\LSuccE[M,N]\,
  P_0\,
  \bs y.\elan [M_1,\ldots,M_m], [N_1,\ldots,N_n]\eran
  \red^+
}\\
&&\qquad\qquad\qquad\qquad\qquad\qquad\qquad\qquad
 \elan [P_0,M_1,\ldots,M_m], [P_0,N_1,\ldots,N_n]\eran
\end{eqnarray*}
\end{enumerate}

\item [Coerce.]
\begin{enumerate}
\item 
$\ACoerce\,M\red^+M$, for every $M\in\{\LTape,\CAlph{\sigma_1},\ldots,\CAlph{\sigma_{\size{\Sigma}}},\RTape\}$.

\item 
$\NCoerce\,\UNum{m}\red^+\UNum{m}$, for every $m\geq 0$.

\item 
For every $m\geq 0$, let $\LCoerce[M]\,[M_1,\ldots,M_m]$ be typeable.
If $M\,M_i\red^+M_i$, for every $0\leq i \leq m$, then
$\LCoerce[M]\,[M_1,\ldots,M_m]\red^+[M_1,\ldots,M_m]$.

\item 
For every $m\geq 0, n\geq 1$, let $\MLCoerce^n[M]\,[M_1,\ldots,M_m]$ be typeable.
If $M\,M_i\red^+M_i$, for every $0\leq i \leq m$, then
$\MLCoerce^n[M]\,[M_1,\ldots,M_m]\red^+[M_1,\ldots,M_m]$.
\end{enumerate}

\item [Diagonal.]
\begin{enumerate}
\item 
$\ADiag\,M\red^+\lan M,M \ran$, for every $M\in\{\LTape,\CAlph{\sigma_1},\ldots,\CAlph{\sigma_{\size{\Sigma}}},\RTape\}$.

\item 
$\ADiagE\,M\red^+\elan M,M \eran$, for every $M\in\{\LTape,\CAlph{\sigma_1},\ldots,\CAlph{\sigma_{\size{\Sigma}}},\RTape\}$.

\item 
$\NDiagE\,\UNum{m}\red^+\elan \UNum{m},\UNum{m} \eran$, for every $m\geq 0$.

\item 
For every $0\leq i\leq m$, let us assume $M, M_i, N, M_i$, and
$\LDiagE[M,N]\,[M_1,\ldots,M_m]$ be typeable.
If $M\,M_i\red^+ M_i$, and 
$N\,M_i\red^+ \elan M_i,M_i\eran$, then
$\LDiagE[M,N]\,[M_1,\ldots,M_m]\red^+
\elan[M_1,\ldots,M_m],$ \\ $[M_1,\ldots,M_m]\eran$.
\end{enumerate}
\end{description}
\end{lemma}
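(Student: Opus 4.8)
The plan is to prove each clause by unfolding the corresponding definition from Table~\ref{table:Basic combinators} and tracing a sequence of $\red$-steps, invoking induction on the size of the inductive datum (the index of a unary string $\UNum{n}$, or the length $m$ of a list) whenever the combinator iterates a step function. I would organise the argument in dependency order: first the \textbf{Successor} clauses, since the \textbf{Coerce} and \textbf{Diagonal} combinators are all defined as iterations of a successor-style step, and then read off the latter two blocks from the former. Throughout, the base reductions are not reproved from scratch but imported from the earlier dynamics statements: Proposition~\ref{proposition:Dynamics of the successor on strings} for the behaviour of $\NSucc$ (which is literally $\USucc$ up to renaming of the bound variable), Proposition~\ref{proposition:Dynamics relative to the elementary tensor} for the application of an $\elan\cdots\eran$-abstraction to an elementary tensor, Proposition~\ref{proposition:Dynamics of the embeddings} for the transparency of $\BEmbed{}{\cdot}$ and $\LEmbed{}{}{\cdot}$ on values, and Lemma~\ref{lemma:Dynamics of some canonical term} for the projection behaviour of the booleans $\CAlph{\sigma_i}$.

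For the \textbf{Successor} block, clause~1 is immediate from Proposition~\ref{proposition:Dynamics of the successor on strings}. Clause~2, $\NSuccE\,\elan\UNum{m},\UNum{n}\eran$, unfolds via Proposition~\ref{proposition:Dynamics relative to the elementary tensor} to $(\bs x y.\elan\BEmbed{1}{\NSucc}\,x,\BEmbed{1}{\NSucc}\,y\eran)\,\UNum{m}\,\UNum{n}$; two $\beta$-steps (legal since the substituted Church numerals are values) followed by $\BEmbed{1}{\NSucc}\,\UNum{k}\red^{*}\NSucc\,\UNum{k}\red^{+}\UNum{k+1}$ (Propositions~\ref{proposition:Dynamics of the embeddings} and~\ref{proposition:Dynamics of the successor on strings}) give the claim. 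Clauses~3 and~4 require unfolding the call-by-value list encoding of Table~\ref{table:Basic data-types} and using the stated hypothesis $M\,P\red^{+}P$ (resp. $M\,P_i\red^{+}P_i$ and $N\,P_i\red^{+}\elan P_i,P_i\eran$) exactly once per prepended element; the paired combinators $\LPushE$, $\LSuccE$ are then obtained by running two copies of the unpaired computation side by side inside an elementary tensor.

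The \textbf{Coerce} and \textbf{Diagonal} blocks become short once the Successor block is available. $\ACoerce\,M$, $\ADiag\,M$ and $\ADiagE\,M$ reduce by Lemma~\ref{lemma:Dynamics of some canonical term}, since each applies its tuple argument to the projection $M$, selecting the $i$-th entry (resp. $i$-th pair), which equals $M$ (resp. $\lan M,M\ran$, $\elan M,M\eran$). Each remaining clause iterates a successor: $\NCoerce\,\UNum{m}$ feeds $\UNum{0}$ to the $m$-fold iterate of $(\bs y.\NSucc\,y)$, so an induction on $m$ using the first Successor clause yields $\NSucc^{m}\,\UNum{0}\red^{+}\UNum{m}$; similarly $\NDiagE$ iterates $\NSuccE$ from $\elan\UNum{0},\UNum{0}\eran$, while $\LCoerce[M]$ and $\LDiagE[M,N]$ iterate $\LSucc[M]$ and $\LSuccE[M,N]$ over the input list, and $\MLCoerce^{n}[M]$ is handled by a further induction on $n$ that peels one $\LCoerce[M]$ per $\$$, using Proposition~\ref{proposition:Dynamics of the embeddings} to step past the $\LEmbed{1}{1}{\cdot}$. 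In every such iteration the relevant hypothesis $M\,M_i\red^{+}M_i$ (and, where present, $N\,M_i\red^{+}\elan M_i,M_i\eran$) is applied once per list element.

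The main obstacle is bookkeeping the reduction \emph{discipline} rather than any single computation: every $\beta$-contraction I perform must be a genuine $\red$-step, i.e. must satisfy one of the three clauses of Definition~\ref{definition:Rewriting PT}. The typeability hypotheses guarantee the structural linearity constraints, but I must additionally check, at each stage of an iteration, that the term being substituted is a value, or a value with at most one free variable, so that clauses~\eqref{align:red-2}--\eqref{align:red-3} apply. This is exactly where the non-standard, thunk-carrying list encoding $[M_1,\ldots,M_m]$ of Table~\ref{table:Basic data-types} matters, since its $\bs y.(\cdots)$ wrappers are what keep the partially rebuilt tails in value form. I would therefore make the list clauses precise by stating an explicit invariant describing the shape of the intermediate term after $j$ prepends and proving it by induction on $j$; once that invariant is in place, the remaining verifications are routine.
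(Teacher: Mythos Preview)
Your proposal is correct and follows essentially the same route as the paper's own proof: unfold the definitions from Table~\ref{table:Basic combinators}, proceed by induction on the relevant numeric or list parameter for the iterating combinators, and reuse the already-established Successor clauses to handle the Coerce and Diagonal blocks. Your extra care about the $\red$-discipline (checking that substituted terms are values so that Definition~\ref{definition:Rewriting PT} applies) and the explicit invocation of Propositions~\ref{proposition:Dynamics of the successor on strings}, \ref{proposition:Dynamics relative to the elementary tensor}, \ref{proposition:Dynamics of the embeddings} and Lemma~\ref{lemma:Dynamics of some canonical term} make the argument more detailed than the paper's sketch, but the underlying decomposition is the same.
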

For proving the statement, we need to consider four cases for the class Successor. With $\NSucc$ proceed by induction on $n$. With $\NSuccE$ prove that $\BEmbed{1}{\NSucc}\,\UNum{n}\red^+\UNum{n+1}$ by induction on $n$.
With $\LPush[M]$, and $\LSucc[M]$ proceed by induction on $n$.
With $\LPushE[M]$, and $\LSuccE[M,N]$ apply the definition and use the result on the dynamics of $\LSucc[M]$.
\par
For the class Coerce we have four cases. 
With $\ACoerce$ just apply the definition.
With $\NCoerce$ and $\LCoerce[M]$ proceed by induction on $m$.
With $\LCoerce^n[M]$ proceed by induction on $n$, using the previous case.
\par
For the class Diagonal we have three cases. 
With $\ADiag$ just apply the definition.
With $\NDiagE$ proceed by induction on $m$.
With $\LDiagE[M,N]$ proceed by induction on $m$, using the previous points.
%%%%%%%%%%%%%%%%%
\subsection{Quantitative part}
The quantitative part requires to represent polynomials. Recall that, by assumption, if our poly-time Turing machine enters the accepting state $\sigma_a$, it never leaves it, even if the clock,
bounded by a polynomial $\TMPoly{k}{x}=\sum^{k}_{i=0}k_i x^i$, of degree $k$, keeps ticking.
This observation, together with the assumptions $x, k>0$, allows to simplify the form of the polynomials to represent.
First, we observe that $\sum^{k}_{i=0}k_i x^i\leq x^k\sum^{k}_{i=0}k_i\leq K x^k\leq K x^{2^e}$, for the least $e>0$ such that $k\leq 2^e$, and for every big enough $K$.
Second, we write a closed term $\NPoly{e}{K}$, with type $\$\UIntT\liv\$^{4e}\UIntT$, that applied to $\UNum{n}$, yields $\UNum{Kn^{2^e}}$. So, the result of $\NPoly{e}{K}\,\UNum{n}$ will be the clock in the representation of the given Turing machine.
Table~\ref{table:Representing a polynomial with exponential exponent} introduces $\NPoly{e}{K}$.
\begin{Table}
\begin{minipage}{\textwidth}
\small
\begin{center}
{\renewcommand{\arraystretch}{1.2}
\begin{tabular}{|c|}
\hline
$\NPoly{e}{K}$ \textbf{and the combinators that define it}
\\\hline\hline
$
\begin{array}{r@{\hspace{.4em}}c@{\hspace{.4em}}ll}
\NSum &\equiv& \bs mnf. (\bs wzx.w(z\,x))(m\,f)(n\,f)
\\
\NMult &\equiv& 
  \bs mn.(\bs z.z\,\UNum{0})(m(\bs y.\NSum n\,y)))
\\
\NMultE &\equiv& 
  \bs \elan m\, n\eran.
  (\bs z.z\,\UNum{0})(m(\bs y.\NSum n\,y))
\\
\NSquare^0 &\equiv&
  \bs n. \LEmbed{1}{1}{\NMultE}
         (\LEmbed{1}{1}
	   {
	    \bs \elan y\,z\eran.
	    \elan
	       y,
	       \BEmbed{1}{\NCoerce}\,z
	     \eran
	   }
          (\NDiagE\, n))
\\
\NSquare^1 &\equiv&
\BEmbed{1}{\NSquare^0}
\\
\NSquare^e &\equiv&
\bs x.
\BEmbed{4}{\NSquare^{e-1}}(\BEmbed{1}{\Coerc^4}\, x)
&(e>1)
\\
\NPoly{e}{K} &\equiv&
\bs x.
(\bs xy. \NMult\,x\,y)\,\UNum{K}\,(\NSquare^e\, x)
& (e\geq 1)
\end{array}
$
\\\hline
\end{tabular}}
\end{center}
\normalsize
\end{minipage}
\caption{Representing a polynomial with exponential exponent}
\label{table:Representing a polynomial with exponential exponent}
\end{Table}

\begin{lemma}[Typing and dynamics.]
\begin{enumerate}
\item 
$\emptyset;\emptyset;\emptyset\vdash
 \ta{\NSum}
    {\UIntT\li\UIntT\li\UIntT}$
and
$\NSum\,\UNum{m}\,\UNum{n}\red^+\UNum{m+n}$.

\item 
$\emptyset;\emptyset;\emptyset\vdash
 \ta{\NMult}
    {\UIntT\li\$\UIntT\liv\$\UIntT}$
and
$\NMult\,\UNum{m}\,\UNum{n}\red^+\UNum{mn}$.

\item 
$\emptyset;\emptyset;\emptyset\vdash
 \ta{\NMultE}
    {(\$\UIntT\odot\$^2\UIntT)\li\$^2\UIntT}$
and
$\NMultE\,\UNum{m}\,\UNum{n}\red^+\UNum{m n}$.

\item 
$\emptyset;\emptyset;\emptyset\vdash
 \ta{\NSquare^0}
    {\UIntT\li\$^3\UIntT}$
and
$\NSquare^0\,\UNum{m}\red^+\UNum{m^2}$.

\item 
$\emptyset;\emptyset;\emptyset\vdash
 \ta{\NSquare^e}
    {\$\UIntT\liv\$^{4e}\UIntT}$
and
$\NSquare^e\,\UNum{m}\red^+\UNum{m^{2^e}}$, for every $e\geq 1$.

\item 
$\emptyset;\emptyset;\emptyset\vdash
 \ta{\NPoly{e}{K}}
    {\$\UIntT\liv\$^{4e}\UIntT}$
and
$\NPoly{e}{K}\,\UNum{m}\red^+\UNum{Km^{2^e}}$, for every $e\geq 1$.
\end{enumerate}
\end{lemma}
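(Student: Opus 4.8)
The plan is to prove the six assertions in the order listed, each split into a typing half and a reduction (``dynamics'') half, and to assemble both halves from the already-established facts about the auxiliary combinators: the typing/dynamics of the three embeddings $\BEmbed{n}{M},\LEmbed{n}{p}{M},\EEmbed{n}{p}{q}{M}$, of the coercions, of the (elementary) tensors, of the string successor $\NSucc$, and of the Basic combinators $\NDiagE,\NCoerce,\LSucc$ of Table~\ref{table:Basic combinators} and Lemma~\ref{lemma:Dynamics of the Basic combinators}. For the typing halves I would build each derivation compositionally: instantiate the $\forall$ inside $\UIntT$ at the type needed at each point of use, apply the $\$$, $\li I_{\$}$, $\li I_{!}$, $\liv I$, $\li E$, $\li E_!$, $\liv E$, $!$ rules exactly where the defining $\lambda$-abstractions and coercions mark a box, and check the side conditions (notably ${\mathcal E}_M\subseteq\{(\emptyset;\Phi_i)\}$ for $\li E_!$, the $\$\Theta'$/$\Theta$/$\Phi$ split for $!$, and ${\mathcal E}_N\subseteq\{(\Theta;\emptyset)\}$ for $\liv E$). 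For the reduction halves I would simply unfold the definitions and rewrite, using the reduction lemmas for the subterms, with the inductions running on the value of the Church numeral for items~1--4 and on $e$ for items~5--6.

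\textbf{Items 1--3 ($\NSum$, $\NMult$, $\NMultE$).} $\NSum$ is the standard ``composition of iterators'' term for addition of Church numerals: after instantiating the quantifiers of $m$ and $n$ at a common $\alpha$, the subterms $m\,f$ and $n\,f$ have type $\$(\alpha\li\alpha)$, and $\bs wzx.w(z\,x)$ composes them through one instance of the $\$$ rule, giving $\UIntT\li\UIntT\li\UIntT$; for the dynamics, $\NSum\,\UNum m\,\UNum n\red^+\bs f.(\bs wzx.w(z\,x))(\UNum m\,f)(\UNum n\,f)\red^+\bs fx.f^{m+n}x=\UNum{m+n}$, and every redex contracted along the way has an argument that is a variable or a $\lambda$-abstraction, hence a value, so each step is legal for $\red$. $\NMult\equiv\bs mn.(\bs z.z\,\UNum 0)(m(\bs y.\NSum\,n\,y))$ is ``multiplication as iterated addition'', with the increment $\bs y.\NSum\,n\,y$ sitting in a $!$-box so that $m$ may iterate it; the point specific to \WALT\ is that its relaxed $!$-rule and the $\liv$-discharge of the safe argument are what make the stated judgement $\UIntT\li\$\UIntT\liv\$\UIntT$ derivable, with the $\$$- and $!$-modalities placed so that the box is duplication-safe. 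The dynamics $\NMult\,\UNum m\,\UNum n\red^+\UNum{mn}$ then follows by induction on $m$, applying item~1 at each unfolding, with base case $\UNum 0(\bs y.\NSum\,n\,y)\red^*\Id$ handled by the erasing clause~\eqref{align:red-1}. $\NMultE$ is the same computation with the two arguments delivered as one elementary tensor $\$\UIntT\odot\$^2\UIntT$: typing uses $\li I_{\odot}$ in place of two $\liv I$'s, and the dynamics destructures the pair via the reduction rule for the elementary tensor before finishing as for $\NMult$.

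\textbf{Items 4--6 ($\NSquare^0$, $\NSquare^e$, $\NPoly{e}{K}$).} $\NSquare^0$ squares a string by (i) duplicating it with $\NDiagE$, which yields $\elan\UNum m,\UNum m\eran$ inside one $\$$-box; (ii) pushing one copy one box deeper with $\BEmbed{1}{\NCoerce}$, so the pair inhabits $\$\UIntT\odot\$^2\UIntT$; (iii) multiplying with $\NMultE$, the whole step lifted through the outer $\$$ by $\LEmbed{1}{1}{-}$; counting the boxes (one from the outer $\$$ of $\NDiagE$ plus two from the output of $\NMultE$) gives $\UIntT\li\$^3\UIntT$, and $\NSquare^0\,\UNum m\red^+\UNum{m^2}$ follows by composing the dynamics of $\NDiagE$, of $\NCoerce$, of the embeddings, and of $\NMultE$ (item~3). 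For $\NSquare^e$, $e\ge 1$, I would induct on $e$: the base case $\NSquare^1=\BEmbed{1}{\NSquare^0}$ merely re-exports $\NSquare^0$ with its linear argument turned eager, so $\$\UIntT\liv\$^4\UIntT$ and $\NSquare^1\,\UNum m\red^+\UNum{m^2}$; the step case unfolds $\NSquare^e$, reduces away the coercion and the basic embedding (which are the identity on numerals up to the modal annotations), and applies the induction hypothesis, so that iterating the squaring a further time produces exponent $2^e$, the modal type growing by four boxes per stage to $\$^{4e}\UIntT$. Finally $\NPoly{e}{K}\equiv\bs x.(\bs xy.\NMult\,x\,y)\,\UNum K\,(\NSquare^e\,x)$ is typed by feeding $\UNum K$ and $\NSquare^e\,x$ to an $\eta$-expansion of $\NMult$ at the instance where its second argument sits at depth $4e$, giving $\$\UIntT\liv\$^{4e}\UIntT$, and its dynamics $\NPoly{e}{K}\,\UNum m\red^+\UNum{K m^{2^e}}$ is immediate from items~5 and~2.

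\textbf{Main obstacle.} The genuinely delicate part is not the rewriting --- which is mechanical once the sub-combinator lemmas are in hand --- but the modal bookkeeping in the typing derivations: making the $\$$-counts line up so that $\NSquare^e$ and $\NPoly{e}{K}$ land at exactly $\$^{4e}\UIntT$, and, more subtly, verifying the side conditions of the $!$ and $\liv E$ rules so that each $!$-box (the increment $\bs y.\NSum\,n\,y$ inside $\NMult/\NMultE$, and the nested boxes created by the coercions and embeddings) depends only on elementary partially discharged assumptions and on at most one polynomial assumption, as \WALT\ demands --- it is precisely here that the positions of the $\liv$-arguments and the chosen $\Coerc$-depths must be exactly right. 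A secondary point to keep honest is that in every reduction sequence the $\beta$-redexes contracted never bind a variable occurring more than once while the argument is a non-value or has two or more free variables; since all arguments appearing are closed values (coerced numerals, tuples of numerals) or bound variables, clauses~\eqref{align:red-2}--\eqref{align:red-3} of Definition~\ref{definition:Rewriting PT} are satisfied at every step.
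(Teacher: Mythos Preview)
Your proposal is correct and follows essentially the same approach as the paper's own proof, which is itself just a terse pointer to the definitions in Table~\ref{table:Representing a polynomial with exponential exponent} together with Lemmas~\ref{lemma:Typing the canonical terms}--\ref{lemma:Dynamics of the Basic combinators}, an induction on $e$ for $\NSquare^e$, and the observation that the key step for $\NPoly{e}{K}$ is the derivability of $\emptyset;\emptyset;\emptyset\vdash\ta{\bs xy.\NMult\,x\,y}{\$^{4e-1}\UIntT\li\$^{4e}\UIntT\liv\$^{4e}\UIntT}$. Your write-up simply spells this out in more detail, and your identification of the modal bookkeeping (the $\$$-depth count and the side conditions on $!$ and $\liv E$) as the only non-mechanical point is exactly the emphasis the paper places by singling out that last judgment.
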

To prove it we apply the definitions in Table~\ref{table:Representing a polynomial with exponential exponent}, and use Lemma~\ref{lemma:Typing the canonical terms}, 
\ref{lemma:Dynamics of some canonical term},
\ref{lemma:Typing the Basic combinators}, 
and~\ref{lemma:Dynamics of the Basic combinators}. In particular, for typing $\NSquare^{e}$, we inductively assume 
$\emptyset;\emptyset;\emptyset\vdash
 \ta{\NSquare^{e-1}}
    {\$\UIntT\liv\$^{4(e-1)}\UIntT}$.
Finally, it is necessary to show 
$\emptyset;\emptyset;\emptyset\vdash
\ta{\bs xy. \NMult\,x\,y}
{\$^{4e-1}\UIntT\li\$^{4e}\UIntT\liv\$^{4e}\UIntT}$.
%%%%%%%%%%%%%%%%%
\subsection{Qualitative part}
\textbf{Notations.}
$\seq{M}{m}{n}$ denotes a \textit{sequence} $M_m,\ldots,M_n$ of terms.
If $m>n$, then $\seq{M}{m}{n}$ is empty, and we denote it by $\eseq$. The element of position $m\leq i\leq n$ in $\seq{M}{m}{n}$ is $\seqel{M}{i}$.
The \textit{quasi-Tapes} are almost lists and are used as a convenient shortening of both the configurations and the quasi-configurations of Table~\ref{table:Quasi-Tapes, configurations and quasi-Configurations}. The definition of quasi-Tapes is:
\small
$$
\begin{array}{r@{\hspace{.4em}}c@{\hspace{.4em}}ll}
\PreTape{\eseq}{\eseq}{x}&\equiv& x
\\
\PreTape{\seq{c}{1}{m}}{\seq{M}{1}{m}}{x}&\equiv&
\seqel{c}{1}\,\seqel{M}{1}
(\bs y.\PreTape{\seq{c}{2}{m}}{\seq{M}{2}{m}}{x})
& (m\geq 1)
\end{array}
$$
\normalsize
\par
\textit{Configurations} and \textit{quasi-Configurations} of Table~\ref{table:Quasi-Tapes, configurations and quasi-Configurations} are the data-types that we use to define $\WALLTrF$, our encoding of $\TrF$.
Every instance of Configuration represents the left hand side of the tape, the current state, and the right hand side of the tape, under the conventions in Table~\ref{table:Basic data-types} on the representations of the states of $\mathcal S$, by the terms of type $\StatT$, and the symbols of $\Sigma$, by the terms of type $\AlphT$.
For example, an \textit{initial configuration} is:
\small
\begin{eqnarray}
\label{eqn:initial-conf}
\Conf{\LTape}{\CStat{s_0}}{\seq{R}{1}{m},\RTape}
\end{eqnarray}
\normalsize
where $\seqel{R}{i}=\CAlph{\sigma_{k}}$, for every $1\leq i\leq m$, $1\leq k\leq \size{\Sigma}$, and $m\geq 0$. 
The left hand side of the tape is empty, so it contains only the symbol $\LTape$ that marks its border.
The right hand side, besides its right border $\RTape$, is assumed to contain $\seq{R}{1}{m}$ the input tape of $\TM$.
Observe that, if $m=0$, then the represented tape is empty. In a few, we shall see that the condition ``$\seqel{R}{i}$ different from $\LTape$ and $\RTape$'' has consequences in the definition of the look-up function that determines the moves and the tape symbols, written by the head in a configuration and which must determine when the left or the right hand side of the tapes in a configuration must be extended.
Every instance of pre-Configuration is an intermediate step between two consecutive configurations, the second being obtained by applying the transition function to the first one.
For example, let us assume that $\seqel{R}{1}=\CAlph{\sigma'}$ and that we need to simulate 
$\TrF(\sigma',s_0)= (\MoveR,s,\sigma)$. 
We must move rightward in the state $\CStat{s}$, writing $\CAlph{\sigma}$, once read $\seqel{R}{1}$ in \eqref{eqn:initial-conf}. The resulting configuration would be:
\small
\begin{eqnarray}
\label{eqn:next-conf}
\Conf{\CAlph{\sigma},\LTape}{\CStat{s}}{\seq{R}{2}{m},\RTape}
\end{eqnarray}
\normalsize
Then, the quasi-configuration generated between \eqref{eqn:initial-conf} and \eqref{eqn:next-conf} is
\small
\begin{eqnarray}
\label{eqn:intial-next-quasi-conf}
\PreConf{\LTape}{\eseq}
        {\CStat{s_0}}
	{\seqel{R}{1}}{[\seq{R}{2}{m},\RTape]}
\end{eqnarray}
\normalsize
Namely, a quasi-configuration, besides the state, makes available the symbol under the head and the symbol to its immediate left.
These two symbols and the state find in a \textit{look-up table} the next move, state and symbol to be written. The definition of the look-up table must also define what to do when the read symbol, or the one immediately to its left are $\RTape$ or $\LTape$, respectively.

\begin{Table}[ht]
\begin{center}
\small
{\renewcommand{\arraystretch}{1.2}
\begin{tabular}{|r|l|}
\hline
\textbf{Type name} & \textbf{Type definition and canonical terms} \\
\hline\hline
Configuration &
\begin{minipage}{.7\textwidth}
\scriptsize
$
\begin{array}{r@{\hspace{.4em}}c@{\hspace{.4em}}ll}
\ConfT &\equiv& \forall \alpha\beta.
	   !(\AlphT\li(\BB\li\alpha)\li\alpha)\li
       \\
       & & \qquad
           \$(\alpha\li\alpha\li
	         ((\BB\li\alpha)
	          \otimes
		  \StatT
		  \otimes
		  (\BB\li\alpha)))
\\
\Conf{\seq{L}{1}{m},\LTape}
     {\CStat{s}}
     {\seq{R}{1}{n},\RTape}
   &\equiv& \bs c.\bs lr.
       \lan
        \bs y.\PreTape{\seq{c}{1}{m+1}}{\seq{L}{1}{m},\LTape}{l},
	\CStat{s},
\\&&\phantom{\bs c.\bs lr.\lan\bs y.}
        \bs y.\PreTape{\seq{c}{1}{n+1}}{\seq{R}{1}{n},\RTape}{r}
       \ran\\
   & & m,n\geq 0 \textbf{ and } \seqel{c}{i}= c 
       \textbf{ for every } 1\leq i\leq \max\{m,n\}+1
\end{array}
$
\normalsize
\end{minipage}
\\
\hline
quasi-Configuration &
\begin{minipage}{.7\textwidth}
\scriptsize
$
\begin{array}{r@{\hspace{.4em}}c@{\hspace{.4em}}ll}
\TypeT{\alpha}{\beta} &\equiv&
(\AlphT\li\BB\li\alpha)\li\AlphT\li\alpha\li\alpha
\\
\TypeU{\alpha}{\beta} &\equiv&
\AlphT\otimes\TypeT{\alpha}{\beta}\otimes\alpha
\\
\PreConfT 
  &\equiv& \forall \alpha\beta.
      !(\AlphT\li(\BB\li\alpha)\li\alpha)\li
  \\&&\$(\alpha\li\alpha\li
         (
	  (\BB\li\TypeU{\alpha}{\beta})\otimes
	  \StatT\otimes
	  (\BB\li\TypeU{\alpha}{\beta})
	 )
        )
\\
\PreConf{L}{[\seq{L}{1}{m}]}
        {\CStat{s}}
	{R}{[\seq{R}{1}{n}]}
     &\equiv&\bs c.\bs lr.
     \\&&
        \lan
	 \bs y.\lan
	         L,F,
		 \PreTape{\seq{c}{1}{m}}{\seq{L}{1}{m}}{l}
	       \ran
     \\&&
	 ,\CStat{s}
     \\&&,\bs y.\lan
	          R,F,
 		  \PreTape{\seq{L}{1}{n}}{\seq{R}{1}{n}}{r}
	        \ran
	\ran
\\
&& m,n\geq 0, F\in\{\bs cht.t, \bs cht.cht\}
\\
&& \textbf{ and } \seqel{c}{i}= c 
   \textbf{ for every } 1\leq i\leq \max\{m,n\}
\end{array}
$
\normalsize
\end{minipage}
\\ \hline
\end{tabular}}
\normalsize
\end{center}
\caption{Configurations and quasi-Configurations}
\label{table:Quasi-Tapes, configurations and quasi-Configurations}
\end{Table}

%%%%%%%-----
\subsubsection{The look-up table}
The look-up table $\Lookup{\gseq{c}}$, where $\gseq{c}$ is the sequence of the free variables of the look-up table itself, must allow to define the transition function $\WALLTrF$ as a coherent extension of $\TrF$ on $\RTape, \LTape$.
We have to think of using $\Lookup{\gseq{c}}$ by applying to it, at time $t$, the symbol $\CAlph{\sigma_h^t}$ under the head, the state $\CStat{s^t}$, and the symbol $\CAlph{\sigma_l^t}$ to the immediate left of the head. Namely, 
$\CAlph{\sigma_l^t}
 (\CStat{s^t}
  (\CAlph{\sigma_h^t}\,\Lookup{\gseq{c}}))$ 
will occur in the definition of $\WALLTrF$.
The types of 
$\CAlph{\sigma_h^t}, \CStat{s^t}, \CAlph{\sigma_l^t}$ (Table~\ref{table:Basic data-types})
suggest that $\Lookup{\gseq{c}}$ be a tuple with $\size{\Sigma}+2$ tuples, each containing $\size{\mathcal S}$ tuples with
$\size{\Sigma}+2$ \textit{triples}
$\LookupEl{\CAlph{\sigma_h^t}}{\CStat{s^t}}{\CAlph{\sigma_l^t}}$. Table~\ref{table:triples} defines a $\LookupEl{\CAlph{\sigma_h^t}}{\CStat{s^t}}{\CAlph{\sigma_l^t}}$, for every combination of $\CAlph{\sigma_h^t}, \CStat{s^t}$, and $\CAlph{\sigma_l^t}$, depending on the value of $\TrF(\sigma_h^t,s^t)$, which yields the move direction, the new character $\sigma^{t+1}$ and the new state $s^{t+1}$. The first three rows of Table~\ref{table:triples} define the triples on the tape symbols of $\Sigma$.
The last row covers the case where the head of $\ol{\TM}$, we are going to define, has passed the right hand border of the represented tape. This is meaningless from the point of view of both $\TrF$, which is undefined, and of the definition of the triples. So, we yield the conventional dummy value 
$\lan \Id, \CStat{s_{a}}, \Id \ran$. The three remaining clauses manage the situations where the head of $\ol{\TM}$ reaches one of the borders of the tape. These cases must be treated coherently with the definition of $\TrF$, suitably extending with the correct symbols the content of the represented tape.
\begin{Table}[ht]
\begin{center}
\small
{\renewcommand{\arraystretch}{1.5}
\begin{tabular}{|c|c|c||c|}
\hline
$\CAlph{\sigma_l^{t}}$ & 
$\CAlph{\sigma_h^{t}}$ & 
$\TrF$ &
$\LookupEl{\CAlph{\sigma_h^t}}
          {\CStat{s^t}}
          {\CAlph{\sigma_l^t}}$
\\
\hline\hline
$\neq\LTape$ & 
$\neq\RTape$ & 
$\TrF(\sigma_h^{t},s^{t}\neq s_{a})
 \equiv(\MoveL,\sigma^{t+1},s^{t+1})$ &
$\lan
  \Id
 ,\CStat{s^{t+1}}
 ,\bs ty.c\,\CAlph{\sigma_l^{t}}
    (\bs y.c'\,\CAlph{\sigma^{t+1}}\,t)
 \ran$
\\\hline
$\neq\LTape$ & 
$\neq\RTape$ & 
$\TrF(\sigma_h^{t},s^{t}\neq s_{a})
 \equiv(\MoveR,\sigma^{t+1},s^{t+1})$ &
$\lan
  \bs ty.c\,\CAlph{\sigma^{t+1}}
    (\bs y.c'\,\CAlph{\sigma_l^{t}}\,t)
 ,\CStat{s^{t+1}}
 ,\Id
 \ran$
\\\hline
\textit{any} & 
\textit{any} & 
$\TrF(\sigma_h^{t},s_{a})
 \equiv(\DoNotMove,\sigma_h^{t},s_{a})$ &
$\lan
  \bs ty.c\,\CAlph{\sigma_l^{t}}\,t
 ,\CStat{s_{a}}
 ,\bs ty.c'\,\CAlph{\sigma_h^{t}}\,t
 \ran$
\\\hline
$\LTape$ & 
\textit{any}&
$\TrF(\sigma_h^{t},s^{t}\neq s_{a})
 \equiv(\MoveL,\sigma^{t+1},s^{t+1})$ &
\begin{tabular}{l}
 $\lan
  \bs ty.c\,\LTape\,t
 ,\CStat{s^{t+1}}
 ,\bs ty.c'\,\CAlph{\sigma_{3}}
    (\bs y.c''\,\CAlph{\sigma^{t+1}}\,t)
 \ran$ 
 \\
 \textbf{where $\sigma_{3}$ is the separator} \textit{blank}
\end{tabular}
\\\hline
\textit{any} & 
$\RTape$     &
$\TrF(\sigma_h^{t},s^{t}\neq s_{a})
 \equiv(\MoveR,\sigma^{t+1},s^{t+1})$ &
$\lan
  \bs ty.c\,\CAlph{\sigma^{t+1}}
    (\bs y.c'\,\CAlph{\sigma_l^{t}}\,t)
 ,\CStat{s^{t+1}}
 ,\bs ty.c''\,\RTape\,t
 \ran$
\\\hline
\textit{any} & 
$\RTape$     &
$\TrF(\sigma_h^{t},s^{t}\neq s_{a})
 \equiv(\MoveL,\sigma^{t+1},s^{t+1})$ &
$\lan
  \Id
 ,\CStat{s^{t+1}}
 ,\bs ty.c\,\CAlph{\sigma_h^{t}}
    (\bs y.c'\,\CAlph{\sigma^{t+1}}(\bs y.c''\,\RTape\,t))
 \ran$
\\\hline
$\RTape$ & 
\textit{any} & 
\textit{undefined} &
$\lan
  \Id
 ,\CStat{s_{a}}
 ,\Id
 \ran$
\\\hline
\end{tabular}
}
\normalsize
\end{center}
\caption{The triples elements of the look-up table}
\label{table:triples}
\end{Table}
Given the triples, we can define $\Lookup{\gseq{c}}$ as in Table~\ref{table:Representing the look-up table}, whose \textbf{proviso (*)} reads as follows: $\gseq{c}$ is the sequence containing the free variables of every $\LookupEl{\sigma}{s}{\sigma'}$, for every combination of $\sigma,s$, and $\sigma'$, given that, 
$\nocc{c}{\LookupEl{\sigma}{s}{\sigma'}}=1$, for
every $c\in\FV{\LookupEl{\sigma}{s}{\sigma'}}$, and given that
the set of the free variables of any two triples be disjoint.
The idea is that
$\CAlph{\sigma_h^t}\,\Lookup{\gseq{c}}$ extracts a ``row''
$\LookupRow{\CAlph{\sigma_h^t}}$, from which 
$\CStat{s^t}\,\LookupRow{\CAlph{\sigma_h^t}}$ gives a ``column''
$\LookupCol{\CAlph{\sigma_h^t}}{\CStat{s^t}}$.
Finally, $\CAlph{\sigma_l^t}\,\LookupCol{\CAlph{\sigma_h^t}}{\CStat{s^t}}$ yields the triple $\LookupEl{\CAlph{\sigma_h^t}}{\CStat{s^t}}{\CAlph{\sigma_l^t}}$.

\begin{Table}[ht]
\begin{center}
\small
{\renewcommand{\arraystretch}{1.3}
\begin{tabular}{|c|}
\hline
$\Lookup{\gseq{c}}$ \textbf{and the combinators that define it}\\
\hline\hline
\begin{minipage}{.85\textwidth}
$
\begin{array}{r@{\hspace{.4em}}c@{\hspace{.4em}}ll}
\LookupCol{\sigma}{s} &\equiv&
\lan
 \LookupEl{\sigma}{s}{\LTape},
 \LookupEl{\sigma}{s}{\sigma_0},
 \ldots,
 \LookupEl{\sigma}{s}{\sigma_{\size{\Sigma}}}
 \LookupEl{\sigma}{s}{\RTape}
\ran
&  \textbf{ with }
   \sigma\in\{\LTape,
             \CAlph{\sigma_1},\ldots,
	        \CAlph{\sigma_{\size{\Sigma}}},
	     \RTape\}
\\&&& \textbf{ and }
   s\in\{\CStat{s_0},\ldots,
	        \CStat{s_{\size{\mathcal S}}}\}
\\
\LookupRow{\sigma} &\equiv&
\lan
 \LookupCol{\sigma}{\CStat{s_0}},
 \ldots,
 \LookupCol{\sigma}{\CStat{s_{\size{S}}}}
\ran
& \textbf{ with }
  \sigma\in\{\LTape,
             \CAlph{\sigma_1},\ldots,
	        \CAlph{\sigma_{\size{\Sigma}}},
	     \RTape\}
\\
\Lookup{\gseq{c}} &\equiv&
\lan
 \LookupRow{\LTape},
 \LookupRow{\CAlph{\sigma_0}},
 \ldots,
 \LookupRow{\CAlph{\sigma_{\size{\Sigma}}}},
 \LookupRow{\RTape}
\ran
& \textbf{ with the proviso (*)}
\end{array}
$
\end{minipage}
\\\hline
\end{tabular}}
\normalsize
\end{center}
\caption{Representing the look-up table}
\label{table:Representing the look-up table}
\end{Table}
%%%%%%%-----
\subsubsection{The transition map}
Table~\ref{table:transition map} defines the transition map as the composition of two terms $\CtoP$ and $\PtoC$. The former maps a configuration, at time $t$, to a quasi-configuration. The latter goes in the opposite direction, yielding a new configuration, at time $t+1$, starting from the quasi-configuration. $\CtoP$ iterates the step function $\SCtoP{c}$ on the two sides of the represented tape, starting from the base function $\BCtoP{M}{c}$.
Finally, $\Lookup{\gseq{x}\subs{c}{\gseq{x}}}$ means that, in $\CtoP$, $\Lookup{\gseq{x}}$ is used substituting $c$ for every of the free variables in $\gseq{x}$.

\begin{Table}[h]
\begin{center}
\small
{\renewcommand{\arraystretch}{1.3}
\begin{tabular}{|c|}
\hline
$\WALLTrF$ \textbf{and the combinators that define it}\\
\hline\hline
\begin{minipage}{.85\textwidth}
$
\begin{array}{r@{\hspace{.4em}}c@{\hspace{.4em}}ll}
\BCtoP{M}{x} &\equiv& \lan M,\bs cht.t,x\ran
\\
\SCtoP{c} &\equiv&
\bs eg.(\bs \lan h\,f\,t\ran.
        \lan e, \bs cht.cht, f\,c\,h\,t\ran)(g\,\Id)
\\
\CtoP &\equiv&
\bs nc.
(\bs zlr.z\,\BCtoP{\LTape}{l}\,\BCtoP{\RTape}{r})
(n\,\SCtoP{c})
\\
\PtoC &\equiv&
\bs nc.
\\&&
(\bs zlr.
\\&&\phantom{(}
 (\bs \lan x\,s\,y\ran.
\\&&\phantom{((}
  (\bs \lan e^l\,f^l\,t^l\ran.\bs \lan e^r\,f^r\,t^r\ran.
\\&&\phantom{(((}  
   (\bs \lan h^l\,s\,h^r\ran.\lan
                              h^l(\bs y.t^l),s,h^r(\bs y.t^r)
		   	     \ran
   )(e^l(s(e^h\,\Lookup{\gseq{x}\subs{c}{\gseq{x}}})))
\\&&\phantom{(((}  
  )(x\,\Id)(y\,\Id)
\\&&\phantom{((}  
 )(z\,l\,r)
\\&&\phantom{(}
)(n\,c)
\\
\WALLTrF &\equiv& \bs c.\PtoC(\CtoP\,c)
\\
\end{array}
$
\end{minipage}
\\\hline
\end{tabular}}
\normalsize
\end{center}
\caption{Representing the transition map}
\label{table:transition map}
\end{Table}

\begin{lemma}[Typing and dynamics.]
\begin{enumerate}
\item
$\ta{w}{\AlphT},\ta{x}{\alpha};\emptyset;\emptyset\vdash
\ta{\BCtoP{w}{x}}
   {\TypeU{\alpha}{\beta}}
   $.

\item 
$\ta{c}{\AlphT\li(\BB\li\alpha)\li\alpha};\emptyset;\emptyset
\vdash\ta{\SCtoP{c}}
         {\AlphT\li(\BB\li\TypeU{\alpha}{\beta})
	  \li\TypeU{\alpha}{\beta}}
   $. 
   
\item 
$\emptyset;\emptyset;\emptyset\vdash
\ta{\CtoP}
   {\ConfT\li\PreConfT}
   $, and
$\CtoP\,
\Conf{\seq{L}{1}{m}}
     {\CStat{s}}
     {\seq{R}{1}{n}}
\red^+
\PreConf{\seqel{L}{1}}{[\seq{L}{2}{m}]}
        {\CStat{s}}
	{\seqel{R}{1}}{[\seq{R}{2}{n}]}
$, for every $m,n\geq 1$.

\item 
$\emptyset;\emptyset;\emptyset\vdash
\ta{\PtoC}
   {\PreConfT\li\ConfT}
   $, and
$\PtoC\,
\PreConf{L}{[\seq{L}{1}{m}]}
        {\CStat{s}}
	{R}{[\seq{R}{1}{n}]}
\red^+\\
\Conf{\seq{L'}{1}{m'},\seq{L}{2}{m}}
     {\CStat{s'}}
     {\seq{R'}{1}{n'},\seq{R}{2}{n}}
$, for every $m,n\geq 1$, some $m',n'\geq0$ and $s'$.

\item 
$\emptyset;\emptyset;\emptyset\vdash
\ta{\WALLTrF}
   {\ConfT\li\ConfT}$, and
$\WALLTrF\,
\Conf{\seq{L}{1}{m},\LTape}
     {\CStat{s}}
     {\seq{R}{1}{n},\RTape}
\red^+
\Conf{\seq{L'}{1}{m'},\seq{L}{2}{m}}
     {\CStat{s'}}
     {\seq{R'}{1}{n'},\seq{R}{2}{n}}
$, for every $m,n\geq 1$, some $m',n'\geq0$ and $s'$.
\end{enumerate}
\end{lemma}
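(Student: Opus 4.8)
The plan is to establish the five points in the order listed, in each case building the \WALT\ derivation by direct inspection of the typing rules of Figure~\ref{figure:Weak LAL as a type assignment} and reading the reductions off the definitions in Table~\ref{table:transition map}, while invoking Lemma~\ref{lemma:Typing the canonical terms}, Lemma~\ref{lemma:Dynamics of some canonical term}, Proposition~\ref{proposition:Typing tuples} and Proposition~\ref{proposition:Dynamics relative to tensors} as black boxes. Points~1 and~2 are routine. For $\BCtoP{w}{x}\equiv\lan w,\bs cht.t,x\ran$, I would first note that $\bs cht.t$ is a purely linear $\lambda$-term and so gets type $\TypeT{\alpha}{\beta}\equiv(\AlphT\li\BB\li\alpha)\li\AlphT\li\alpha\li\alpha$ by iterated $\li I$ over an axiom; then a single instance of the derived rule $\otimes I$ of Proposition~\ref{proposition:Typing tuples} gives $\ta{w}{\AlphT},\ta{x}{\alpha};\emptyset;\emptyset\vdash\ta{\BCtoP{w}{x}}{\TypeU{\alpha}{\beta}}$, since $\TypeU{\alpha}{\beta}\equiv\AlphT\otimes\TypeT{\alpha}{\beta}\otimes\alpha$. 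For $\SCtoP{c}\equiv\bs eg.(\bs\lan h\,f\,t\ran.\lan e,\bs cht.cht,f\,c\,h\,t\ran)(g\,\Id)$ I would use $\li I$ for $\bs e,\bs g$, the derived rule $\li I_{\otimes}$ for the tuple-destructing $\bs\lan h\,f\,t\ran$, and $\otimes I$ for the result triple; the only check is that $c,h,f,t$ are each used exactly once, which they are, so the type $\AlphT\li(\BB\li\TypeU{\alpha}{\beta})\li\TypeU{\alpha}{\beta}$ follows from the linear assumption $\ta{c}{\AlphT\li(\BB\li\alpha)\li\alpha}$ in $\Gamma$.

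For point~3, the term $\CtoP\equiv\bs nc.(\bs zlr.z\,\BCtoP{\LTape}{l}\,\BCtoP{\RTape}{r})(n\,\SCtoP{c})$ is typed by instantiating the universal quantifiers of $\ConfT$ with $\alpha:=\TypeU{\alpha}{\beta}$ (and $\beta$ with $\beta$), so that $n$ has type $!(\AlphT\li(\BB\li\TypeU{\alpha}{\beta})\li\TypeU{\alpha}{\beta})\li\$(\ldots)$; the term $\SCtoP{c}$, whose type was just computed in point~2, supplies exactly the $!$-boxed iterated step, with $c$ playing the role of the exponential assumption inside the $!$-box, matching the premise of the rule $!$; the two occurrences of $\BCtoP{\cdot}{\cdot}$ fill the remaining $\TypeU{\alpha}{\beta}$-slots left by the $\$$-scoped part of $\ConfT$. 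The reduction is then an easy induction on $m$ (and symmetrically on $n$): iterating $\SCtoP{c}$ over the represented left side $\PreTape{\seq{c}{1}{m+1}}{\seq{L}{1}{m},\LTape}{l}$ of a configuration strips off the head cell $\seqel{L}{1}$, repackages $\seq{L}{2}{m}$ into a new $\PreTape{\cdot}{\cdot}{\cdot}$, and yields precisely the pre-configuration layout of Table~\ref{table:Quasi-Tapes, configurations and quasi-Configurations}.

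Point~4 is the core of the argument, and I expect it to be the main obstacle. For $\PtoC$ I would, after instantiating the quantifiers of $\PreConfT$, verify: (i) that the deeply nested tuple-destructuring of the term is linear and well-typed with the types $\TypeT{\alpha}{\beta},\TypeU{\alpha}{\beta}$ threaded correctly through the $\bs\lan x\,s\,y\ran$, $\bs\lan e^l\,f^l\,t^l\ran$, $\bs\lan h^l\,s\,h^r\ran$ destructurings; (ii) that the look-up table $\Lookup{\gseq{x}\subs{c}{\gseq{x}}}$ remains well-typed after the simultaneous substitution of $c$ for every variable of $\gseq{x}$ --- this is exactly where proviso~(*) of Table~\ref{table:Representing the look-up table} is used, since $\nocc{c}{\LookupEl{\sigma}{s}{\sigma'}}=1$ for every free $c$ and the free-variable sets of distinct triples are disjoint, so the substitution is capture-free and keeps $c$ linear in each triple; and (iii) that the application $\CAlph{\sigma_l^t}(\CStat{s^t}(\CAlph{\sigma_h^t}\,\Lookup{\gseq{c}}))$ selects the triple $\LookupEl{\CAlph{\sigma_h^t}}{\CStat{s^t}}{\CAlph{\sigma_l^t}}$ of Table~\ref{table:triples}, whose three components, threaded into the two tape sides through the $\bs y.t^l$/$\bs y.t^r$ plugging, rebuild exactly $\Conf{\seq{L'}{1}{m'},\seq{L}{2}{m}}{\CStat{s'}}{\seq{R'}{1}{n'},\seq{R}{2}{n}}$. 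The bulk of the work lies in this bookkeeping: the quantifier instantiations, and the case split over the seven rows of Table~\ref{table:triples}, in particular the border rows that must reinsert the correct extra symbols ($\CAlph{\sigma_3}$, $\LTape$, $\RTape$) and the $s_a$-stationary row. Point~5 is then immediate: $\WALLTrF\equiv\bs c.\PtoC(\CtoP\,c)$ is typed by one $\li E$ from points~3 and~4, and its dynamics follows by chaining the reductions of~3 and~4, using that the explicit borders $\LTape,\RTape$ force $m,n\geq 1$ so the hypotheses of~3 and~4 are met, and that the triples of Table~\ref{table:triples} correctly realise $\TrF$ together with its coherent extension on $\LTape,\RTape$.
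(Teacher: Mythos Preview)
Your proposal is correct and follows essentially the same approach as the paper's own proof, which proceeds by listing the key intermediate typing judgments for each point and then saying ``for the dynamics, just apply the definitions.'' One small point of precision worth noting on point~4: the paper types $\Lookup{\gseq{x}}$ with the $x_i$'s as \emph{distinct linear} variables in $\Gamma$ (this is what proviso~$(*)$ buys), then promotes them to polynomial via the $\$$ rule and contracts them to the single polynomial $c$ via $C$; your phrasing ``keeps $c$ linear in each triple'' slightly obscures that the actual mechanism is contraction of a polynomial variable, but the underlying argument is the same.
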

\begin{proof}
The typing of $\BCtoP{w}{x}$ is simple.
The key judgments to type $\SCtoP{c}$ are:
\small
\begin{eqnarray*}
&&
\ta{g}{\BB\li\TypeU{\alpha}{\beta}};\emptyset;\emptyset
\vdash\ta{g\,\Id}
         {\TypeU{\alpha}{\beta}}
\\
&&
\ta{c}{\AlphT\li(\BB\li\alpha)\li\alpha},
\ta{e}{\AlphT};\emptyset;\emptyset
\vdash\ta{\bs \lan h\,f\,t\ran.
          \lan e,\bs cht.cht,fcht\ran}
         {\TypeU{\alpha}{\beta}\li\TypeU{\alpha}{\beta}}
\\
&&
\ta{f}{\TypeT{\alpha}{\beta}},
\ta{c}{\AlphT\li(\BB\li\alpha)\li\alpha},
\ta{h}{\AlphT},
\ta{t}{\alpha}
;\emptyset;\emptyset
\vdash\ta{fcht}
         {\alpha}
\end{eqnarray*}
\normalsize

\par
The key judgments to type $\CtoP$ are:
\small
\begin{eqnarray*}
&&
\ta{n}{\ConfT};\emptyset;
\{(\emptyset;\ta{c}
                {\AlphT\li(\BB\li\alpha)\li\alpha})\}
\vdash
\ta{n\,\SCtoP{c}}
   {\$(\TypeU{\alpha}{\beta}\li\TypeU{\alpha}{\beta}\li
       ((\BB\li\TypeU{\alpha}{\beta})\otimes
        \StatT\otimes
	(\BB\li\TypeU{\alpha}{\beta})))}
\\
&&
\emptyset;\emptyset;\emptyset\vdash
\ta{\bs zlr.z\,\BCtoP{\LTape}{l}\,\BCtoP{\RTape}{r}}
   {\begin{array}[t]{l}
    \$(\TypeU{\alpha}{\beta}\li\TypeU{\alpha}{\beta}\li
       ((\BB\li\TypeU{\alpha}{\beta})\otimes
        \StatT\otimes
	(\BB\li\TypeU{\alpha}{\beta})))
    \li\\
    \qquad\qquad\qquad
    \$(\alpha\li\alpha\li
        ((\BB\li\TypeU{\alpha}{\beta})\otimes
         \StatT\otimes
 	 (\BB\li\TypeU{\alpha}{\beta})))
    \end{array}
   }
\end{eqnarray*}
\normalsize
For the dynamics, just apply the definitions.   

\par
The key judgments to type $\PtoC$ are:
\small
\begin{eqnarray*}
&&
\Gamma,
\ta{e^l}{\AlphT},
\ta{s}{\StatT},
\ta{e^h}{\AlphT};
\emptyset;\emptyset
\vdash
\ta{e^l(s(e^h\,\Lookup{\gseq{x}}))}
   {((\BB\li\alpha)\li\BB\li\alpha)\otimes
    \StatT\otimes
    ((\BB\li\alpha)\li\BB\li\alpha)}
\\
&&
\ta{t^l}{\alpha},
\ta{t^h}{\alpha};
\emptyset;\emptyset
\vdash\!\!
\begin{array}[t]{l}
\bs \lan h^l\,s\,h^r\ran.
    \lan
     h^l(\bs y.t^l),s,h^r(\bs y.t^r)
    \ran
\!:\!
\\\qquad\qquad
     (((\BB\li\alpha)\li\BB\li\alpha)\otimes
        \StatT\otimes
       ((\BB\li\alpha)\li\BB\li\alpha))
     \li
\\\qquad\qquad\qquad\qquad\qquad\qquad\qquad\qquad\qquad\qquad
     ((\BB\li\alpha)\otimes
      \StatT\otimes
      (\BB\li\alpha))
\end{array}
\\
&&
\ta{n}{\PreConfT};\emptyset;
\{(\emptyset;\ta{c}
                {\AlphT\li(\BB\li\alpha)\li\alpha})\}
\vdash
\ta{n\,c}
   {\$(\alpha\li\alpha\li
       ((\BB\li\TypeU{\alpha}{\beta})\otimes
        \StatT\otimes
	(\BB\li\TypeU{\alpha}{\beta})))}
\\
&&
\ta{z}{\alpha\li\alpha\li
        ((\BB\li\TypeU{\alpha}{\beta})\otimes
         \StatT\otimes
 	 (\BB\li\TypeU{\alpha}{\beta}))},
\ta{l}{\alpha},
\ta{r}{\alpha};\emptyset;\emptyset
\vdash
\ta{z\,l\,r}
   {(\BB\li\TypeU{\alpha}{\beta})\otimes
    \StatT\otimes
    (\BB\li\TypeU{\alpha}{\beta})
   }
\end{eqnarray*}
\normalsize
where $\ta{x_i}{\AlphT\li(\BB\li\alpha)\li\alpha}\in\Gamma$, for every $x_i\in\FV{\Lookup{\gseq{x}}}$.
For the dynamics just apply the definitions, and observe that 
$m',n'$ may vary, depending on the direction move of the head.
Finally, the typing of $\WALLTrF$. It is a trivial composition of the typing above, and for its dynamics just apply the definitions.
\end{proof}
%%%%%%%%%%%%%%%%%
\subsection{Encoding of a poly-time Turing machine}
Table~\ref{table:Turing machine} defines the term $\WALLTM$ that represents the poly-time Turing machine $\TM$.
\begin{Table}[h]
\begin{center}
\small
{\renewcommand{\arraystretch}{1.3}
\begin{tabular}{|c|}
\hline
$\WALLTM$ \textbf{and the combinators that define it}\\
\hline\hline
\begin{minipage}{.85\textwidth}
$
\begin{array}{r@{\hspace{.4em}}c@{\hspace{.4em}}ll}
\LtoC &\equiv&
\bs lc.
(\bs zlr.
 \lan
  \bs y.c\,\LTape(\bs y.l),
  \CStat{s_0},
  z(c\,\RTape(\bs y.r))
 \ran
)(l\,c)
\\
\LtoN &\equiv&
\bs lf.
(\bs zx.z\,x\,\Id)(l(\bs et.f(t\,\Id)))
\\
\WALLTM &\equiv&
\bs x.
\LEmbed{1}{1}
      {
       \bs\elan z_p,z_c\eran.M_{z_p,z_c}
      }
      (\LDiagE[\ACoerce,\ADiagE]\,x)
\\
M_{z_p,z_c}&\equiv&
(\bs xy. (\bs i\,l.(\bs z.z\,(\LtoC\,l))(i\,\WALLTrF))\,x\,y)
% \\&&\phantom{\textbf{ where }H_{z_p\,z_c}\equiv\qquad}
       (\NPoly{e}{K}(\BEmbed{1}
                            {\LtoN}\,z_p
		     )
       )
       (\BEmbed{1}{1}
              {\MLCoerce^{4e}[\ACoerce]}\,z_c
       )
\end{array}
$
\end{minipage}
\\\hline
\end{tabular}}
\normalsize
\end{center}
\caption{Representing the Turing machine}
\label{table:Turing machine}
\end{Table}
$\WALLTM$ requires to duplicate its input $x$, that represents a given input tape of $\TM$. $\LDiagE[\ACoerce,\ADiagE]$ duplicates the instance of $x$. $\NPoly{e}{K}(\BEmbed{1}{1}{\LtoN}\,z_p)$ is in charge of using the copy $z_p$ of $x$ to obtain the length of the computation, represented as a Church numeral $\UNum{n}$, for some $n$.
The second copy $z_c$ of $x$, once embedded into a suitable number of $\$$-boxes, is transformed into an initial configuration, by 
$\LtoC$. Finally, $\UNum{n}$ iterates the transition function $\WALLTrF$, and the result is applied to the initial configuration.

\begin{lemma}[Typing and Dynamics.]
\label{lemma:TM Typing and Dynamics}
\begin{enumerate}
\item 
$\emptyset;\emptyset;\emptyset\vdash
\ta{\LtoC}{\ListT\AlphT\li\ConfT}$ and
$\LtoC\,[\seq{\CAlph{\sigma}}{1}{n}]\red^+\\
\Conf{\LTape}{\CStat{s_0}}{\seq{\CAlph{\sigma}}{1}{n},\RTape}$, for every $n\geq 0$.

\item 
$\emptyset;\emptyset;\emptyset\vdash
\ta{\LtoN}{\ListT\AlphT\li\UIntT}$ and
$\LtoN\,[\seq{\CAlph{\sigma}}{1}{n}]\red^+
\UNum{n}$, for every $n\geq 0$.

\item 
$\emptyset;\emptyset;\emptyset\vdash
\ta{\WALLTM}{\ListT\AlphT \li \$^{4e+1}\ConfT}$ and
$\WALLTM\,[\seq{\CAlph{\sigma}}{1}{n}]\red^+
\Conf{\seq{\CAlph{\sigma^l}}{1}{p},\LTape}
     {\CStat{s_a}}
     {\seq{\CAlph{\sigma^r}}{1}{q},\RTape}$, 
for every $n\geq 0$, and some sequences $\seq{\CAlph{\sigma^l}}{1}{p}, \seq{\CAlph{\sigma^r}}{1}{q}$, with $p, q\geq 0$.
\end{enumerate}
\end{lemma}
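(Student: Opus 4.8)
The plan is to prove Lemma~\ref{lemma:TM Typing and Dynamics} by handling its three points in order, each time combining a typing claim with a dynamics claim, and each time appealing to the typing/dynamics lemmas already established for the basic combinators (Lemma~\ref{lemma:Typing the canonical terms}, \ref{lemma:Dynamics of some canonical term}, \ref{lemma:Typing the Basic combinators}, \ref{lemma:Dynamics of the Basic combinators}), for the polynomial combinators, and for the transition map $\WALLTrF$. Points~1 and~2 are the warm-up: for $\LtoC$, I would unfold the definition from Table~\ref{table:Turing machine}, observe that the iteration $l\,c$ builds the two tape sides, and check that the resulting tuple matches the canonical form $\Conf{\LTape}{\CStat{s_0}}{\seq{\CAlph{\sigma}}{1}{n},\RTape}$ of Table~\ref{table:Quasi-Tapes, configurations and quasi-Configurations}; the typing $\ListT\AlphT\li\ConfT$ is read off by matching the $\alpha,\beta$ instantiations of $\ListT A$ against the body of $\ConfT$, using the redefined list type from Table~\ref{table:Basic data-types}. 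For $\LtoN$ the argument is the standard ``length of a list as a Church numeral'' computation, proved by induction on $n$.

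The substance is Point~3. I would first establish the typing $\emptyset;\emptyset;\emptyset\vdash\ta{\WALLTM}{\ListT\AlphT\li\$^{4e+1}\ConfT}$ by composing: $\LDiagE[\ACoerce,\ADiagE]$ has type $\ListT\AlphT\li\$(\$\ListT\AlphT\odot\$\ListT\AlphT)$ (Lemma~\ref{lemma:Typing the Basic combinators}, Diagonal~4, with $M\equiv\ACoerce$, $N\equiv\ADiagE$, whose side conditions are exactly Coerce~1 and Diagonal~2); the body $\bs\elan z_p,z_c\eran.M_{z_p,z_c}$ must then be typed under an elementary tensor splitting, using $\li I_{\odot}$ of Proposition~\ref{proposition:Typing rules relative to the elementary tensor}, so that $z_p,z_c$ enter as $\$$-modal assumptions. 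Inside $M_{z_p,z_c}$: the $z_p$-branch runs $\BEmbed{1}{\LtoN}$ then $\NPoly{e}{K}$, of type $\$\UIntT\liv\$^{4e}\UIntT$, producing a Church numeral at depth $4e$; the $z_c$-branch runs $\MLCoerce^{4e}[\ACoerce]$ of type $\ListT\AlphT\li\$^{4e}\ListT\AlphT$ then $\LtoC$, producing an initial configuration at depth $4e$; the Church numeral then iterates $\WALLTrF:\ConfT\li\ConfT$, and the final $\LEmbed{1}{1}{\cdot}$ together with the $\odot$-absorption accounts for the extra $\$$, giving $\$^{4e+1}\ConfT$. The careful bookkeeping of the modal depths — verifying that the two branches land at the \emph{same} depth $4e$ so that $(\bs xy.\dots)x\,y$ typechecks, and that the iteration of $\WALLTrF$ is licensed at that depth — is where most of the work lies, and it is essentially the same depth-matching discipline used throughout Section~\ref{section:Algorithmic expressivity of WALL}.

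For the dynamics half of Point~3 I would argue: by Lemma~\ref{lemma:Dynamics of the Basic combinators} (Diagonal~4), $\LDiagE[\ACoerce,\ADiagE]\,[\seq{\CAlph{\sigma}}{1}{n}]\red^+\elan[\seq{\CAlph{\sigma}}{1}{n}],[\seq{\CAlph{\sigma}}{1}{n}]\eran$, so $z_p,z_c$ are both instantiated by the input list; then $\BEmbed{1}{\LtoN}\,z_p$ and $\NPoly{e}{K}$ rewrite to $\UNum{K n^{2^e}}$ by Point~2 above and the polynomial-combinators lemma; $\MLCoerce^{4e}[\ACoerce]$ is the identity on $[\seq{\CAlph{\sigma}}{1}{n}]$ by Lemma~\ref{lemma:Dynamics of the Basic combinators} (Coerce~4, whose hypothesis $\ACoerce\,M\red^+ M$ is Coerce~1), and $\LtoC$ then yields the initial configuration $\Conf{\LTape}{\CStat{s_0}}{\seq{\CAlph{\sigma}}{1}{n},\RTape}$ by Point~1; finally $\UNum{K n^{2^e}}$ iterates $\WALLTrF$ that many times on this configuration. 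The key observation, which I would isolate as a small sublemma proved by induction on the iteration count, is that iterating $\WALLTrF$ simulates $\TrF$ step-for-step on configurations representing genuine $\TM$ tapes (using the dynamics of $\WALLTrF$ from the preceding lemma, i.e. the $\CtoP$/$\PtoC$ behavior), and that once $\TM$ enters $s_a$ it stays there because the look-up triples for $s_a$ in Table~\ref{table:triples} issue $\DoNotMove$ with unchanged symbol and state. Since $K n^{2^e}\ge\TMPoly{k}{n}$ by the choice of $e$ and $K$, the machine has halted before the clock runs out, so the final configuration has the form $\Conf{\seq{\CAlph{\sigma^l}}{1}{p},\LTape}{\CStat{s_a}}{\seq{\CAlph{\sigma^r}}{1}{q},\RTape}$ as claimed. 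I expect the main obstacle to be exactly this fixed-point/overestimation argument — making precise that $\WALLTrF$ is a \emph{total coherent} extension of $\TrF$ on the bordered-tape encoding (the $\LTape/\RTape$ bookkeeping of Table~\ref{table:triples} is fiddly), and that repeated application past the halting time is harmless — rather than the typing, which is routine once the depth indices are tracked.
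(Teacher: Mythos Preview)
Your proposal is correct and follows essentially the same route as the paper: the paper's proof also just lists the key typing judgments for $\LtoC$, $\LtoN$, and $\WALLTM$ (in particular exhibiting $\bs xy.(\bs i\,l.(\bs z.z\,(\LtoC\,l))(i\,\WALLTrF))\,x\,y:\$^{4e}\UIntT\liv\$^{4e+1}\ListT\AlphT\li\$^{4e+1}\ConfT$ and the two branch typings for $z_p,z_c$) and then dismisses the dynamics with ``just apply the definitions,'' so your more explicit dynamics sketch in fact goes beyond what the paper writes. One small correction to your depth bookkeeping: the two branches do \emph{not} land at the same depth $4e$ --- the $z_c$ branch carries an extra $\BEmbed{1}{\cdot}$ and lands at $\$^{4e+1}\ListT\AlphT$, which is why the inner combinator has the asymmetric type $\$^{4e}\UIntT\liv\$^{4e+1}\ListT\AlphT\li\$^{4e+1}\ConfT$ rather than both arguments at $4e$.
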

For the proof, the key judgments to type $\LtoC$ are:
\small
\begin{align*}
&
\emptyset;
\emptyset;
\{(\emptyset;\{\ta{c_l}{\AlphT\li(\BB\li\alpha)\li\alpha},
               \ta{c_r}{\AlphT\li(\BB\li\alpha)\li\alpha}
	       \})\}
\vdash
\\
&
\qquad\qquad\qquad
\bs zlr.
 \lan
  \bs y.c_l\,\LTape(\bs y.l),
  \CStat{s_0},
  z(c_r\,\RTape(\bs y.r))
 \ran
\!:\!
\\
&
\qquad\qquad\qquad\qquad\qquad\qquad
\$(\alpha\li\BB\li\alpha
  )\li
\$(\alpha\li\alpha\li
   ((\BB\li\alpha)\otimes
    \StatT\otimes
    (\BB\li\alpha)
   )
  )
\\
&
\ta{l}{\ListT\AlphT};
\emptyset;
\{(\emptyset;\ta{c'}{\AlphT\li(\BB\li\alpha)\li\alpha})\}
\vdash
\ta{l\,c'}{\$(\alpha\li\BB\li\alpha)}
\end{align*}
\normalsize
\par
The key judgments to type $\LtoN$ are:
\small
\begin{align*}
&
\emptyset;
\emptyset;
\emptyset\vdash
\ta{\bs zx.z\,x\,\Id}{\$(\alpha\li\BB\li\alpha)\li
                      \$(\alpha\li\alpha)}
\\
&
\ta{l}{\ListT\AlphT};
\emptyset;
\{(\emptyset;\ta{f}{\alpha\li\alpha})\}
\vdash
\ta{l(\bs et.f(t\,\Id))}
   {\$(\alpha\li\BB\li\alpha)}
\end{align*}
\normalsize
The key judgments to type $\WALLTM$ are:
\small
\begin{align*}
&
\emptyset;\emptyset;\emptyset\vdash
\ta{
\bs xy.(\bs i\,l.(\bs z.z\,(\LtoC\,l))(i\,\WALLTrF))\,x\,y
}{\$^{4e}\UIntT\liv\$^{4e+1}\ListT\AlphT\li\$^{4e+1}\ConfT}
\\
&
\emptyset;\emptyset;\{(\ta{z_c}{\ListT\AlphT};\emptyset)\}
\vdash
\ta{\BEmbed{1}
          {\MLCoerce^{4e}[\ACoerce]}\,z_c}
   {\$^{4e+1}\ListT\AlphT}
\\
&
\emptyset;\emptyset;\{(\ta{z_p}{\ListT\AlphT};\emptyset)\}
\vdash
\ta{\NPoly{e}{K}(\BEmbed{1}{\LtoN}\,z_p)}
   {\$^{4e}\UIntT}
\\
&
\ta{x}{\ListT\AlphT};\emptyset;\emptyset
\vdash
\ta{\LDiagE[\ACoerce,\ADiagE]\,x}
   {\$(\$\ListT\AlphT\odot\$\ListT\AlphT)}
\end{align*}
\normalsize
For the dynamics, just apply the definitions.
\par
So, using the initial assumptions on the poly-time Turing machines we want to simulate, and consistently extending $\TrF$ to both $\LTape$, and $\RTape$ by $\WALLTrF$, and applying Lemma~\ref{lemma:TM Typing and Dynamics}, we can state:
\begin{quote}
\textit{If $\TM$, applied to the input tape $\seq{\sigma}{1}{n}$, for some $n\geq 0$, produces the output portion $\seq{\sigma'}{1}{p}$ of the tape, then $\WALLTM\,[\seq{\CAlph{\sigma}}{1}{n}]$ simulates it.
Namely, $\WALLTM\,[\seq{\CAlph{\sigma}}{1}{n}]$ evaluates to
$\Conf{\seq{\CAlph{\sigma^l}}{1}{m},\LTape}
     {\CStat{s_a}}
     {\seq{\CAlph{\sigma'}}{1}{p},
      \seq{\CAlph{\sigma^r}}{1}{q},\RTape}$,
for some sequences $\seq{\CAlph{\sigma^l}}{1}{m}, \seq{\CAlph{\sigma^r}}{1}{q}$, with $m, p, q\geq 0$.}
\end{quote}
which implies Theorem~\ref{theorem:poly-time-completeness}.
\section{Details about the proofs}
\label{section:Details about the proofs}
%%%%%%%%%%%%%%%%%%%%%%%%%%
\paragraph*{Proof of Lemma~\ref{lemma:structural-properties-WALL} (Structural properties)}
\begin{description}
%%%%%%%%%%% point lemma 3.2
\item [Point~\ref{lemma:structural-properties-WALL-2}  of Lemma~\ref{lemma:structural-properties-WALL}.]
We proceed by structural induction on $\Pi$, which can be written as $\Pi_M(R)$. 
The base cases occur with $R\in\{A, \$, !\}$, $n=1$, and $q_1=0$.
We focus on the inductive steps, starting with $\Pi_M$ being
$\Pi_{M\{^{x}/_{z}\,^{x}/_{y}\}}(C,\Pi'_{M})\rhd
\Gamma;\Delta;
{\mathcal E}\sqcup\{(\Theta_z,\Theta_y;\ta{x}{A})\}
\vdash\ta{M\{^{x}/_{z}\,^{x}/_{y}\}}{B}$.
\begin{enumerate}
\item 
Let us call ${\mathcal E}_y$ the set ${\mathcal E}\sqcup\{(\Theta_y;\ta{y}{A})\}$.
By induction on 
$\Pi'_{M}\rhd
\Gamma;\Delta;
{\mathcal E}_y,
(\Theta_z;\ta{z}{A})\vdash\ta{M}{B}$, there are $n\geq 1$ and $q_1,\ldots,q_n\geq 0$ such that $\wdth{1}{\Pi'_{M}}\geq q_1+\ldots+q_n$ and:
(i) $M$ can be written as 
$M'\subs{z}{z^1_1\ldots z^1_{q_1}\ldots\ldots z^n_1\ldots z^n_{q_n}}$ for some $M'$;
(ii) for every $1\leq i\leq n$, there is
$\Pi^z_i(R_i)\rhd
 \Gamma^z_i;
 \Delta^z_i;
 {\mathcal E}^z_i,
 \{(\Theta^i_1;\ta{z^i_1}{A})\},\ldots,
 \{(\Theta^i_{q_i};\ta{z^i_{q_i}}{A})\}
 \vdash\ta{P^z_i}{C^z_i}$, subdeduction of $\Pi'_{M}$, with $R_i\in\{A, \$, !\}$, that introduce
$\ta{z^i_1}{A},\ldots,\ta{z^i_{q_i}}{A}$;
(iii) $q_1+\ldots+q_n-1$ instances of $C, \li E, \li E_!, \liv E$ in the tree with the conclusion of $\Pi'_{M}$ as root and the conclusions of every $\Pi^z_i$ as leaves, are required to contract $z^1_1\ldots z^1_{q_1}\ldots\ldots z^n_1\ldots z^n_{q_n}$ to $z$.

\item 
Now, let us call ${\mathcal E}_z$ the set ${\mathcal E}\sqcup\{(\Theta_z;\ta{z}{A})\}$,
and, proceed again by induction on $\Pi'_M$ that we see as
$\Pi'_{M'\subs{z}{z^1_1\ldots z^1_{q_1}\ldots\ldots z^n_1\ldots z^n_{q_n}}}
\rhd
\Gamma;\Delta;
{\mathcal E}_z,
(\Theta_y;\ta{y}{A})
\vdash\ta{M'\subs{z}{z^1_1\ldots z^1_{q_1}
                     \ldots\ldots 
                     z^n_1\ldots z^n_{q_n}}}{B}$.
There are $m\geq 1$, and $p_1,\ldots,p_m\geq 0$ such that
$\wdth{1}{\Pi_{M'\subs{z}{z_1\ldots z_{q_z}}}}\geq p_1+\ldots+p_m$ and:
(i) $M'\subs{z}{z^1_1\ldots z^1_{q_1}\ldots\ldots z^n_1\ldots z^n_{q_n}}$
can be written as
$M''\subs{z}{z^1_1\ldots z^1_{q_1}\ldots\ldots z^n_1\ldots z^n_{q_n}}
    \subs{y}{y^1_1\ldots y^1_{p_1}\ldots\ldots y^m_1\ldots y^m_{p_m}}
$,
for some $M''$;
(ii) for every $1\leq j\leq m$, there is
$\Pi^y_j(R_j)
\rhd
\Gamma^y_j;
\Delta^y_j;
{\mathcal E}^y_j,
\{(\Xi^j_1;\ta{y^j_1}{A})\},
\ldots,
\{(\Xi^j_{p_j};$ $\ta{y^j_{p_j}}{A})\}
\vdash\ta{P^y_j}{C^y_j}$, subdeduction of $\Pi_{M}$, with $R_j\in\{A, \$, !\}$,
that introduce $\ta{y^j_1}{A},\ldots,\ta{y^j_{p_j}}{A}$;
(iii) $p_1+\ldots+p_m-1$ instances of $C, \li E, \li E_!, \liv E$ in the tree with the conclusion of 
$\Pi'_{M'\subs{z}{z^1_1\ldots z^1_{q_1}\ldots\ldots z^n_1\ldots z^n_{q_n}}}$ 
as root and the conclusions of every $\Pi^y_j$ as leaves, are required to contract 
$y^1_1\ldots y^1_{p_1}\ldots\ldots y^m_1\ldots y^m_{p_m}$ to $y$.
\end{enumerate}
So, we have $n+m$ subdeductions of $\Pi_{M\{^{x}/_{z}\,^{x}/_{y}\}}$ in which we can count
$q_1+\ldots+q_n-1+p_1+\ldots+p_m-1+1=q_1+\ldots+q_n+p_1+\ldots+p_m-1$ instances of $C, \li E, \li E_!, \liv E$
in the tree, say $\tau$, with the conclusion of $\Pi_{M\{^{x}/_{z}\,^{x}/_{y}\}}(C)$ as root and the conclusions of all
$\Pi^z_j$s, and $\Pi^y_j$s as leaves that contract
$z^1_1\ldots z^1_{q_1}\ldots\ldots z^n_1\ldots z^n_{q_n}$ to $z$,
$y^1_1\ldots y^1_{p_1}\ldots\ldots y^m_1\ldots y^m_{p_m}$ to $y$, and
$z, y$ to $x$.
The reason why $q_1+\ldots+q_n+p_1+\ldots+p_m\leq \wdth{1}{\Pi_M}$ follows from the definition of width at depth 1 that, applied to $\Pi_M$ counts at least the occurrences of $C, \li E, \li E_!, \liv E$ in $\tau$ here above. The resulting value cannot exceed the total number of instances of $C, \li E, \li E_!, \liv E$ at depth 1.
\par
With $\Pi_{PQ}(\li E)$ and $\Pi_{PQ}(\li E_!)$ we can proceed analogously. All the other cases, $\Pi_{PQ}(\liv E)$ included, routinely apply the induction.

%%%%%Point8 lemma 2.3
\item [Point~\ref{lemma:structural-properties-WALL-3.1} of Lemma~\ref{lemma:structural-properties-WALL}.]
As a \textbf{first case}, let $M$ be $x$. 
Assuming that $R\in\{\li E, \li E_!, \liv E, \li I, \li I_{\$}, \li I_{!}, \liv I\}$,
contradicts the hypothesis that the subject of the conclusion of $\Pi$ be $x$.
Assuming that
$R\in\{\forall I, \$\}$
contradicts the hypothesis that the type of $x$ be $!A$.
So, $R$ can be $A$. However, it cannot be of the form 
$\Gamma,\ta{x}{\ !A};\Delta;{\mathcal E}\vdash\ta{x}{\ !A}$ because, by definition, illegal.
So, $R$ can only be one among $!,C$, and $\forall E$.
If $R$ is $!$, then $\Pi(!)$ is:
\small
\[
\infer[!]
{\Gamma';\Delta;\{(\$\Theta';\emptyset)\}\sqcup\{(\Theta;\ta{x}{A})\}
 \vdash\ta{x}{\,!A}
}
{\begin{array}{l}
 \Gamma,\ta{x}{A};\emptyset;\{(\Theta';\emptyset)\}
 \vdash\ta{x}{A}
 \\
 \{\ta{x}{A}\}\cup\Gamma\subseteq \Theta\cup\Phi
 \\
 \Theta\neq\emptyset \Rightarrow \dom{\Phi}\cap\FV{x}=\{x\}
 \end{array}
}
\]
\normalsize
Namely, $x$ forcefully belongs to $\operatorname{Dom}(\Phi)$. It follows that the presence of a single polynomial assumption excludes that we can apply $C$ below the given instance of the rule $!$

\par
Only the case $R\equiv\forall E$, is left, but we shall see that this is impossible. Indeed, by definition, the modality in front of $A$ could not be introduced by the rule $\forall E$ and it had to be present before the substitution on types takes place. Namely, we should be starting with:
\small
\[
\infer[A]
{\ta{x}{\forall \alpha.!B},\Gamma;\Delta;{\mathcal E}
 \vdash\ta{x}{\forall \alpha.!B}
}
{}
\]
\normalsize
 which is an illegal instance of the axiom.
\par
As a \textbf{second case}, let $M$ be $\bs x.N$. We observe that $R\in\{A, \li E, \li E_!, \liv E\}$ contradicts the hypothesis that the subject of the conclusion of $\Pi$ be $\bs x.N$. Moreover, assuming that
$R\in\{\li I, \li I_{\$}, \li I_{!}, \liv I, \forall I, \$\}$
contradicts the hypothesis that the type of $\bs x.N$ be $!A$.
So, $R$ can only be one between $!, C$, and $\forall E$.
If $R$ is $!$, then $\Pi(!)$ is:
\small
\[
\infer[!]
{\Gamma;\Delta;\{(\$\Theta';\emptyset)\}\sqcup\{(\Theta;\Phi)\}
 \vdash\ta{\bs x.N}{\,!A}
}
{\Gamma';\emptyset;\{(\Theta';\emptyset)\}
 \vdash\ta{\bs x.N}{A}
 & \Gamma'\subseteq \Theta\cup\Phi
 & \Theta\neq\emptyset \Rightarrow \dom{\Phi}\cap\FV{M}\neq\emptyset
}
\]
\normalsize
with
$
\FV{\bs x.N}
\subseteq
\operatorname{Dom}(\Gamma')
\subseteq
\operatorname{Dom}(\Theta)\cup\operatorname{Dom}(\Phi)
$.
\par
In the case $R$ be equal to $C$, since the subject is a $\lambda$-abstraction with type $!A$, the modality in front of $A$ must be introduced by an instance of the rule $!$ that, given our assumptions, cannot be followed by any rule but a sequence of instances of the rule $C$. The situation can be summarized as follows:
\small
\[
\infer[C]
{
 \Gamma;\Delta;
 {\mathcal E}
 \vdash\ta{\bs x.N}{\,!A}
}
{
 \infer[C]
 {\vdots
 }
 {
  \infer[!]
  {
   \Gamma;\Delta;
   \{(\$\Theta';\emptyset)\}\sqcup
   \{(\Theta;\Phi)\}
   \vdash\ta{\bs x.N}{\,!A}
  }
  {\Gamma';\emptyset;\{(\Theta';\emptyset)\}
   \vdash\ta{\bs x.N}{A}
   &
   \Gamma'
   \subseteq
   \Theta\cup\Phi
   &
   \Theta\neq\emptyset \Rightarrow \dom{\Phi}\cap\FV{M}\neq\emptyset
  }
 }
}
\]
\normalsize
But $\Phi$ is either empty or a singleton, hence, the sequences of instances of $C$ is empty.
\par
Only the case $R\equiv\forall E$, is left, but we shall see that this is impossible. Indeed, by definition, the modality in front of $A$ could not be introduced by the rule $\forall E$ and it had to be present before the substitution on types takes place. Namely, the type of $\bs x.N$ prior to the substitution would have form $\forall \alpha.!B$, for some $B$. But this would contradict the definition of the rule $\forall E$.
\par
To sum up, we can conclude that the only admissible conclusion is the rule $!$.

%%%%%Point9 lemma 2.3
\item [Point~\ref{lemma:structural-properties-WALL-3.3} of Lemma~\ref{lemma:structural-properties-WALL}.]
Proceed in analogy to the proof of point~\ref{lemma:structural-properties-WALL-3.1} here above.
\end{description}
%%%%%%%%%%%%%%%%%%%%%%%%%%
\paragraph*{Proof of Lemma~\ref{lemma:subst-vs-wght}
(Substitution property)}
\begin{description}
%%%%%%%Point1 lemma 3.1
\item 
[Point~\ref{lemma:subst-vs-wght-01} of Lemma~\ref{lemma:subst-vs-wght}.] 
Notice that the hypothesis $x\in\FV{M}$ excludes $\Pi_{M}(R)$, with $R\in\{!,\$\}$, since, in those cases, $x\not\in\FV{M}$.
We detail out a couple of points, proceeding by induction on $\Pi_M$.
\par
As a \textbf{first} case, let $\Pi_M$ be such that
$\Pi_M(A)
 \rhd\Gamma_x,\ta{x}{L};\Delta_x;{\mathcal E}_x
 \vdash\ta{x}{L}$.
Then, $\Pi_{x\subs{N}{x}}$ coincides to 
$\Pi'_N
 \rhd\Gamma_x,\Gamma_N;
     \Delta_x,\Delta_N;
     {\mathcal E}_x\sqcup{\mathcal E}_N
 \vdash\ta{N}{L}$, which is obtained from $\Pi_N$, by 
Lemma~\ref{lemma:structural-properties-WALL}, point~\ref{lemma:structural-properties-WALL-1}.
The same point of the same lemma implies the following statements.
\par
Subpoint~\textbf{\ref{lemma:subst-vs-wght-01-b}} of 
\textbf{Point~\ref{lemma:subst-vs-wght-01}} is
$\dpth{\Pi_{x\subs{N}{x}}}=\dpth{\Pi'_{N}}=
\max\{0,\dpth{\Pi_{N}}\}=$ 
\\$\max\{\dpth{\Pi_x},$ $\dpth{\Pi_{N}}\}$.
\par
Subpoint~\textbf{\ref{lemma:subst-vs-wght-01-a}} of 
\textbf{Point~\ref{lemma:subst-vs-wght-01}} is
$\wdth{d}{\Pi_{x\subs{N}{x}}}=\wdth{d}{\Pi'_{N}}=\wdth{d}{\Pi_{N}}=0+\wdth{d}{\Pi_{N}}$
\\$=\wdth{d}{\Pi_{x}}+\wdth{d}{\Pi_{N}}$, 
for every $d\geq 0$.
\par
Subpoint~\textbf{\ref{lemma:subst-vs-wght-01-c}} of 
\textbf{Point~\ref{lemma:subst-vs-wght-01}} is
$\psz{0}{\Pi_{x\subs{N}{x}}}=\psz{0}{\Pi'_{N}}<
1+\psz{0}{\Pi_{N}}=\psz{0}{\Pi_{x}}+\psz{0}{\Pi_{N}}$.
\par
Subpoint~\textbf{\ref{lemma:subst-vs-wght-01-d}} of 
\textbf{Point~\ref{lemma:subst-vs-wght-01}} is
$\psz{d}{\Pi_{x\subs{N}{x}}}=\psz{d}{\Pi'_{N}}=
0+\psz{d}{\Pi_{N}}=\psz{d}{\Pi_{x}}+\psz{d}{\Pi_{N}}$,
for every $1\leq d$.
\par
As a \textbf{second} case 
let $\Pi_M$ be such that
$\Pi_M(\li E,\Pi_P,\Pi_Q)
 \rhd\Gamma_P,\ta{x}{L},\Gamma_Q;
     \Delta_P,$ $\Delta_Q;
     {\mathcal E}_P\sqcup{\mathcal E}_Q
 \vdash\ta{PQ}{B}$.
Either $x\in\FV{Q}$, or $x\in\FV{P}$. Let us assume $x\in\FV{P}$, the other case being symmetric.
By induction, there exists
$\Pi_{P\subs{N}{x}}
 \rhd\Gamma_P,\Gamma_N;
     \Delta_P,\Delta_N;
     {\mathcal E}_P\sqcup{\mathcal E}_N
\vdash\ta{P\subs{N}{x}}{C\li B}$, that we can use as a principal premise of an instance of $\li E$, whose secondary premise be
$\Pi_{Q}
 \rhd\Gamma_Q;
     \Delta_Q;
     {\mathcal E}_Q
\vdash\ta{Q}{C}$. So,
$\Pi_{(PQ)\subs{N}{x}}
 \rhd\Gamma_P,\Gamma_N,\Gamma_Q;
     \Delta_P,\Delta_N,\Delta_Q;
     {\mathcal E}_P\sqcup{\mathcal E}_N\sqcup{\mathcal E}_Q
\vdash\ta{(PQ)\subs{N}{x}}{B}$.
\par
Subpoint~\textbf{\ref{lemma:subst-vs-wght-01-b}} of  \textbf{Point~\ref{lemma:subst-vs-wght-01}} is
$\dpth{\Pi_{(PQ)\subs{N}{x}}}
=\max\{\dpth{\Pi_{P\subs{N}{x}}},\wdth{}{\Pi_{Q}}\}
=\max\{\dpth{\Pi_{P}},
       \dpth{\Pi_{N}},$ $
       \dpth{\Pi_{Q}}\}
=\max\{\dpth{\Pi_{PQ}}
      ,\dpth{\Pi_{N}}\}$,
using the inductive hypothesis.
\par
Subpoint~\textbf{\ref{lemma:subst-vs-wght-01-a}} of \textbf{Point~\ref{lemma:subst-vs-wght-01}} requires two cases.
The statement holds for
\\ 
$\wdth{0}{\Pi_{(PQ)\subs{N}{x}}}$ because $\wdth{0}{\Pi}=0$, for every $\Pi$.
It also holds with $d=1$, since 
$\wdth{1}{\Pi_{(PQ)\subs{N}{x}}}
=\wdth{1}{\Pi_{P\subs{N}{x}}}+\wdth{1}{\Pi_{Q}}+1
= \wdth{1}{\Pi_{P}}+\wdth{1}{\Pi_{Q}}+\wdth{1}{\Pi_{N}}+1
=\wdth{1}{\Pi_{PQ}}+\wdth{1}{\Pi_{N}}$, using the inductive hypothesis. For every $d>1$, it holds not counting the application.
\par
Subpoint~\textbf{\ref{lemma:subst-vs-wght-01-c}} of \textbf{Point~\ref{lemma:subst-vs-wght-01}} is
$\psz{0}{\Pi_{(PQ)\subs{N}{x}}}
=\psz{0}{\Pi_{P\subs{N}{x}}}
+\psz{0}{\Pi_{Q}}+1
<\psz{0}{\Pi_{P}}
+\psz{0}{\Pi_{N}}
+\psz{0}{\Pi_{Q}}+1
=\psz{0}{\Pi_{PQ}}
+\psz{0}{\Pi_{N}}$,
using the inductive hypothesis.
\par
Subpoint~\textbf{\ref{lemma:subst-vs-wght-01-d}} of \textbf{Point~\ref{lemma:subst-vs-wght-01}} is
$\psz{d}{\Pi_{(PQ)\subs{N}{x}}}
=\psz{d}{\Pi_{P\subs{N}{x}}}
+\psz{d}{\Pi_{Q}}+1
=\psz{d}{\Pi_{P}}
+\psz{d}{\Pi_{N}}
+\psz{d}{\Pi_{Q}}+1
=\psz{d}{\Pi_{PQ}}
+\psz{d}{\Pi_{N}}$, for every $d\geq 1$,
using the inductive hypothesis.
It is enough to proceed analogously in the cases $\Pi_M(\li E_!), \Pi_M(\liv E)$, where, in particular, $x$ can be free only in its principal premise;
the proof is simpler with $\Pi_M(R)$ whose $R$ has a single premise.

%%%%%%%%%%Point2 lemma 3.1
\item [Point~\ref{lemma:subst-vs-wght-03} of Lemma~\ref{lemma:subst-vs-wght}.]

Notice that the hypothesis $x\in\FV{M}$ excludes $\Pi_{M}(R)$, with $R\in\{A,!\}$, since, in those cases, $x\not\in\FV{M}$.
Then we proceed by structural induction on $\Pi_M$.

\par
The \textbf{first} case is with $\Pi_M(\$)$ concluding by:
\small
$$\infer[\$]
{\Gamma';
 \$\Delta'_{M},\Delta_M,\ta{x}{L};
 \{(\$\Theta'_M;\emptyset)\}\sqcup
 \{(\Theta_1;\Phi_1)\}\sqcup\ldots\sqcup\{(\Theta_m;\Phi_m)\}
 \vdash\ta{M}{\$B}
}
{\begin{array}[t]{l}
\Pi'_M\rhd
\Gamma_M,\ta{x}{L};\Delta'_{M};\{(\Theta'_M;\emptyset)\}\vdash\ta{M}{B}
\\
\Gamma_M,\ta{x}{L}\subseteq
\Delta_{M}\cup\{\ta{x}{L}\}\cup
\bigcup^{m}_{i=1}\Theta_i\cup
\bigcup^{m}_{i=1}\Phi_i
\\
\Theta_i\neq\emptyset\Leftrightarrow\Phi_i=\emptyset
\end{array}
}$$
\normalsize
\par
Point~\ref{lemma:structural-properties-WALL-3.3} of
Lemma~\ref{lemma:structural-properties-WALL}, applied to
$\Pi_N\rhd
\Gamma_N;\Delta_N;{\mathcal E}_N\vdash\ta{N}{\$L}$, requires to focus on \textbf{two cases}:
\begin{itemize}
\item 
Let $\Pi_N(\$)$ be:
\small
$$
\infer[\$]
{
\Gamma'_N;\$\Delta'_N,\Delta_N;
\{(\$\Theta'_N;\emptyset)\}\sqcup
\{(\Theta'_1;\Phi'_1)\}\sqcup\ldots\sqcup\{(\Theta'_n;\Phi'_n)\}
\vdash\ta{N}{\$L}
}
{\begin{array}{l}
 \Pi'_N\rhd
 \Gamma_N;
 \Delta'_N;
 \{(\Theta'_N;\emptyset)\}
 \vdash\ta{N}{L}
 \\
 \Gamma_N\subseteq
 \Delta_N\cup\bigcup_{i=1}^{n}\Theta'_i\cup\bigcup_{i=1}^{n}\Phi'_i
 \\
 \Theta'_i\neq\emptyset \Leftrightarrow \Phi'_i=\emptyset
 \end{array}
}
$$ 
\normalsize
Using point~\ref{lemma:subst-vs-wght-01} of
Lemma~\ref{lemma:subst-vs-wght} on
$\Pi'_M$, and $\Pi'_N$, we get
$\Pi'_{M\subs{N}{x}}
\rhd
\Gamma_M,\Gamma_N;
\Delta'_{M},$
\\$\Delta'_{N};
\{(\Theta'_M;\emptyset)\}\sqcup
\{(\Theta'_N;\emptyset)\}
\vdash\ta{M\subs{N}{x}}{B}$, that allows to conclude by an instance of $\$$, yielding $\bar{\Pi}_{M\subs{N}{x}}$ with the right conclusion.

\item 
Let $\Pi_N(C)\rhd\Gamma''_N;\Delta''_N;{\mathcal E}''_N\vdash\ta{N}{\$L}$. Iteratively applying Lemma~\ref{lemma:structural-properties-WALL}, point~\ref{lemma:structural-properties-WALL-3.3}, we eventually prove the existence of 
$\Pi''_N(\$)\rhd
\Gamma'_N;\$\Delta'_N,\Delta_N;
\{(\$\Theta'_N;\emptyset)\}\sqcup
\{(\Theta'_1;\Phi'_1)\}\sqcup\ldots\sqcup\{(\Theta'_n;\Phi'_n)\}
\vdash\ta{N}{\$L}$, which, as in the previous case, we assume to have a premise derived from $\Pi'_N$. So, we use point~\ref{lemma:subst-vs-wght-01} of
Lemma~\ref{lemma:subst-vs-wght} on
on $\Pi'_M$ and $\Pi'_N$ as before. After an instance of the rule $\$$ we still get $\bar{\Pi}_{M\subs{N}{x}}$ with the right conclusion. Now, we proceed by the application of as many instances of $C$ as those we can count in $\Pi_N$ below $\Pi''_N(\$)$. Let us say they be $r$ and let us call $\bar{\bar{\Pi}}_{M\subs{N}{x}}(C)$ the final deduction.
\end{itemize}
%%%%%% punto a Lemma 3.1
\par
Subpoint~\textbf{\ref{lemma:subst-vs-wght-03-b}} of
\textbf{Point~\ref{lemma:subst-vs-wght-03}}
has two cases. The first case is:
\small
\begin{align}
\nonumber
\dpth{\bar{\Pi}_{M\subs{N}{x}}(\$,\Pi'_{M\subs{N}{x}})}
&=\dpth{\Pi'_{M\subs{N}{x}}}+1
\\
\nonumber
&=\max\{\dpth{\Pi'_{M}}
       ,\dpth{\Pi'_{N}}\}+1
&(\text{by induction})
\\
\nonumber
&=\max\{\dpth{\Pi'_{M}}+1
       ,\dpth{\Pi'_{N}}+1\}
\\
\label{align:lemma:subst-vs-wght-03-a-1}
&=\max\{\dpth{\Pi_{M}(\$,\Pi'_M)}
       ,\dpth{\Pi_{N}(\$,\Pi'_N)}\}
\end{align}
\normalsize
Step~\eqref{align:lemma:subst-vs-wght-03-a-1}
holds because both $\Pi'_M$ and $\Pi'_N$ are followed by the rule $\$$.
The second case is based on the first here above, observing that
$\dpth{\Pi(C,\Pi')}=\dpth{\Pi'}$, for every deduction $\Pi$.
%%%%%% punto b Lemma 3.1
\par
Subpoint~\textbf{\ref{lemma:subst-vs-wght-03-a}} of
\textbf{Point~\ref{lemma:subst-vs-wght-03}}
has two cases.
The first case has $\Pi_N(\$)$. If $d=0$, the statement holds because $\wdth{0}{\Pi}=0$ for any $\Pi$. If $d\geq 1$:
\small
\begin{align}
\wdth{d}{\bar{\Pi}_{M\subs{N}{x}}(\$,\Pi'_{M\subs{N}{x}})}
&=\wdth{d-1}{\Pi'_{M\subs{N}{x}}}
\nonumber\\
\nonumber
&=\wdth{d-1}{\Pi'_{M}}+\wdth{d-1}{\Pi'_{N}}
&(\text{by induction})
\\
\label{align:lemma:subst-vs-wght-03-b-1}
&=\wdth{d}{\Pi_{M}(\$,\Pi'_M)}+\wdth{d}{\Pi_{N}(\$,\Pi'_N)}
\end{align}
\normalsize
The second case is with $\Pi_{N}(C)$. 
If $d=0$, we have $\wdth{0}{\Pi}=0$ for any $\Pi$.
With $d\geq 1$ we can still write:
\small
\begin{align}
\wdth{d}{\bar{\Pi}_{M\subs{N}{x}}(\$,\Pi'_{M\subs{N}{x}})}
&=\wdth{d-1}{\Pi'_{M\subs{N}{x}}}
\nonumber\\
\nonumber
&\leq\wdth{d-1}{\Pi'_{M}}+\wdth{d-1}{\Pi'_{N}}
&(\text{by induction})
\\
\nonumber
&=\wdth{d}{\Pi_{M}(\$,\Pi'_M)}+\wdth{d}{\Pi''_{N}(\$,\Pi'_N)}
\end{align}
\normalsize
where $\Pi''_{N}(\$)$ replaces $\Pi_{N}(\$)$ of~\eqref{align:lemma:subst-vs-wght-03-b-1}.
We can conclude by observing the two following facts:
\begin{itemize}
\item 
if $d=1$, then we count $r$ instances of $C$ below
$\wdth{d}{\bar{\Pi}_{M\subs{N}{x}}(\$)}$ to obtain
$\wdth{d}{\bar{\bar{\Pi}}_{M\subs{N}{x}}(C)}$, and $r$ below
$\wdth{d}{\Pi''_{N}(\$)}$ to obtain $\wdth{d}{\Pi_{N}(C)}$, so getting:
\small
\begin{align*}
\wdth{d}{\bar{\bar{\Pi}}_{M\subs{N}{x}}(C)}
&\leq\wdth{d}{\Pi_{M}(\$)}+\wdth{d}{\Pi_{N}(C)}\enspace .
\end{align*}
\normalsize

\item 
if $d>1$, by definition of width, we do not count any instance 
of $C$ neither below $\wdth{d}{\bar{\Pi}_{M\subs{N}{x}}(\$)}$ nor below
$\wdth{d}{\Pi''_{N}(\$)}$, even if they exist. So, we get again:
\small
\begin{align*}
\wdth{d}{\bar{\bar{\Pi}}_{M\subs{N}{x}}(C)}
&\leq\wdth{d}{\Pi_{M}(\$)}+\wdth{d}{\Pi_{N}(C)}\enspace .
\end{align*}
\normalsize
\end{itemize}
%%%%%%%punto c lemma 3.1
\par
Subpoint~\textbf{\ref{lemma:subst-vs-wght-03-c}} of
\textbf{Point~\ref{lemma:subst-vs-wght-03}}
requires to observe that we can only have
$\Pi_M(\$),$
\\
$\Pi_{N}(R),$ $\bar{\Pi}_{M\subs{N}{x}}(\$),$
$\bar{\bar{\Pi}}_{M\subs{N}{x}}(R)$ with $R\in\{C, \$\}$.
By definition, the partial size at level $0$ is $0$ on any deduction terminating by the rules $\$$, and $C$. So, the point trivially holds.
%%%%%%%punto d lemma 3.1
\par
Subpoint~\textbf{\ref{lemma:subst-vs-wght-03-d}} of
\textbf{Point~\ref{lemma:subst-vs-wght-03}} has $d\geq 1$.
The union of the subpoints~\textbf{\ref{lemma:subst-vs-wght-01-c}} and~\textbf{\ref{lemma:subst-vs-wght-01-d}} of the lemma we are proving, applied to $\Pi'_M$, and $\Pi'_N$, imply 
$\psz{d-1}{\Pi'_{M\subs{N}{x}}}
\leq\psz{d-1}{\Pi'_{M}}+\psz{d-1}{\Pi'_{N}}$ since $x$ is linear in $\Pi'_M$.
So we get
$\psz{d}{\bar{\Pi}_{M\subs{N}{x}}(\$)}\leq\psz{d}{\Pi_{M}(\$)}+\psz{d}{\Pi_{N}(\$)}$, or
$\psz{d}{\bar{\bar{\Pi}}_{M\subs{N}{x}}(C)}\leq\psz{d}{\Pi_{M}(\$)}+\psz{d}{\Pi_{N}(C)}$, depending on the last rule of $\Pi_N$.
\par
The \textbf{second} case is with $\Pi_M(\$)$ concluding by:
\small
$$\infer[\$]
{
 \Gamma';
 \$\Delta'_{M},\ta{x}{\$A},\Delta_M;
 \{(\$\Theta'_M;\emptyset)\}\sqcup
 \{(\Theta_1;\Phi_1)\}\sqcup\ldots\sqcup\{(\Theta_m;\Phi_m)\}
 \vdash\ta{M}{\$B}
}
{\begin{array}{ll}
 \Pi'_M\rhd
 \Gamma_M;\Delta'_{M},\ta{x}{A};\{(\Theta'_M;\emptyset)\}\vdash\ta{M}{B}
 \\
 \Gamma_M\subseteq
 \Delta_{M}\cup
 \bigcup^{m}_{i=1}\Theta_i\cup
 \bigcup^{m}_{i=1}\Phi_i
 &
 \Theta_i\neq\emptyset\Leftrightarrow\Phi_i=\emptyset
 \end{array}
}$$
\normalsize
To match the above assumption about $\Pi_M(\$)$ we must assume $\Pi_N(R)\rhd\Gamma''_N;$
\\$\Delta''_N;{\mathcal E}''_N\vdash\ta{N}{\$^2A}$.
Point~\ref{lemma:structural-properties-WALL-3.3} of
Lemma~\ref{lemma:structural-properties-WALL}, applied to
$\Pi_N(R)\rhd
\Gamma''_N;\Delta''_N;{\mathcal E}''_N\vdash\ta{N}{\$^2A}$, requires to focus on \textbf{two cases}:
\begin{itemize}
\item 
Let $\Pi_N(\$)$ be:
\small
\[
\infer[\$]
{
\Gamma'_N;
\$\Delta'_N,\Delta_N;
\{(\$\Theta'_N;\emptyset)\}\sqcup
\{(\Theta'_1;\Phi'_1)\}\sqcup\ldots\sqcup\{(\Theta'_n;\Phi'_n)\}
\vdash\ta{N}{\$^2A}
}
{\begin{array}{ll}
 \Pi'_N\rhd
 \Gamma_N;
 \Delta'_N;
 \{(\Theta'_N;\emptyset)\}
 \vdash\ta{N}{\$A}
 \\
 \Gamma_N\subseteq
 \Delta_N\cup\bigcup_{i=1}^{n}\Theta'_i\cup\bigcup_{i=1}^{n}\Phi'_i
 &
 \Theta'_i\neq\emptyset \Leftrightarrow \Phi'_i=\emptyset
 \end{array}
}
\]
\normalsize
So, applying the inductive hypothesis on $\Pi'_{M}$, and $\Pi'_{N}$ we get
$\Pi'_{M\subs{N}{x}}
\rhd
\Gamma_M,\Gamma_N;
\Delta'_{M},$ $\Delta'_{N};
\{(\Theta'_M;\emptyset)\}\sqcup
\{(\Theta'_N;\emptyset)\}
\vdash\ta{M\subs{N}{x}}{B}$ that allows to conclude by an application of $\$$.

\item 
Let $\Pi_N(C)\rhd\Gamma''_N;\Delta''_N;{\mathcal E}''_N\vdash\ta{N}{\$^2A}$. Iteratively applying Lemma~\ref{lemma:structural-properties-WALL}, point~\ref{lemma:structural-properties-WALL-3.3}, we eventually prove the existence of 
$\Pi''_N(\$)\rhd
\Gamma'_N;\$\Delta'_N,\Delta_N;
\{(\$\Theta'_N;\emptyset)\}\sqcup
\{(\Theta'_1;\Phi'_1)\}\sqcup\ldots\sqcup\{(\Theta'_n;\Phi'_n)\}
\vdash\ta{N}{\$^2A}$ which, as in the previous case, we assume  to have a premise derived from $\Pi'_N$. So, we can proceed on $\Pi'_M$, and $\Pi'_N$ as before, to conclude by a applying the same number of instances of $C$, that we can count in $\Pi_N$, below $\Pi''_N(\$)$.
\end{itemize}
To prove Subpoints~\textbf{\ref{lemma:subst-vs-wght-03-b}},
\textbf{\ref{lemma:subst-vs-wght-03-a}},
\textbf{\ref{lemma:subst-vs-wght-03-c}}, and
\textbf{\ref{lemma:subst-vs-wght-03-d}} of
\textbf{Point~\ref{lemma:subst-vs-wght-03}}
proceed in analogy to what we did above. All the other cases of $\Pi_M(R)$ routinely apply the inductive hypothesis.

%%%%%%%%%%%Point3 lemma 3.1
\item [Point~\ref{lemma:subst-vs-wght-05} of Lemma~\ref{lemma:subst-vs-wght}.]
Notice that the hypothesis $x\in\FV{M}$ excludes $\Pi_{M}(A)$, otherwise $x\not\in\FV{M}$. Then we proceed by structural induction on $\Pi_M$, by detailing out \textbf{four} base \textbf{cases}.
\par
The \textbf{first} case has
$\Pi_M(\$)$ that concludes by:
\small
$$\infer[\$]
{
 \Gamma';
 \$\Delta'_{M},\Delta_{M};
 \{(\$\Theta'_M,\ta{x}{\$A};\emptyset)\}\sqcup
 \{(\Theta_1;\Phi_1)\}\sqcup\ldots\sqcup\{(\Theta_m;\Phi_m)\}
 \vdash\ta{M}{\$B}
}
{
\begin{array}{ll}
\Pi'_M\rhd
\Gamma_M;\Delta'_{M};\{(\Theta'_M,\ta{x}{A};\emptyset)\}\vdash\ta{M}{B}
\\ 
\Gamma_M\subseteq
\Delta_{M}\cup
\bigcup^{m}_{i=1}\Theta_i\cup
\bigcup^{m}_{i=1}\Phi_i
&
\Theta_i\neq\emptyset\Leftrightarrow\Phi_i=\emptyset
\end{array}
}$$
\normalsize
To match the above assumption about $\Pi_M(\$)$ we must assume $\Pi_N(R)\rhd\emptyset;\emptyset;{\mathcal E}_N$
$\vdash\ta{N}{\$^{2}A}$,
with ${\mathcal E}_N\subseteq\{(\Theta_N;\emptyset)\}$, in accordance with the statement we have to prove. Then, point~\ref{lemma:structural-properties-WALL-3.3} of
Lemma~\ref{lemma:structural-properties-WALL}, applied to
$\Pi_N(R)$, requires to focus on the following \textbf{single case},
since
$\Pi_N(C)\rhd\emptyset;\emptyset;{\mathcal E}_N\vdash\ta{N}{\$^2A}$ is excluded by the assumption on ${\mathcal E}\subseteq\{(\Theta_{N};\emptyset)\}$.
Let $\Pi_N(\$)$ be:
\small
\[
\infer[\$]
{
\emptyset;\emptyset;
\{(\$\Theta'_N,\Theta';\emptyset)\}
\vdash\ta{N}{\$^{2}A}
}
{\Pi'_N\rhd
\Gamma_N;\emptyset;\{(\Theta'_N;\emptyset)\}\vdash\ta{N}{\$A}
&\Gamma_N\subseteq\Theta'
}
\]
\normalsize
Point~\ref{lemma:structural-properties-WALL-3.3} of
Lemma~\ref{lemma:structural-properties-WALL} applied to $\Gamma_N;\emptyset;\{(\Theta'_N;\emptyset)\}\vdash\ta{N}{\$A}$ implies that $\FV{N}\subseteq\dom{\Theta'_N}$. Namely, $\Gamma_N=\emptyset$. So, we can apply the inductive hypothesis on both $\Pi'_M$ and $\Pi'_N$. We get
$\Pi'_{M\subs{N}{x}}
\rhd
\Gamma_M;
\Delta'_{M};
\{(\Theta'_M,\Theta'_N;\emptyset)\}
\vdash\ta{M\subs{N}{x}}{B}$ that allows to conclude by an application of $\$$.
\par
The \textbf{second} case has $\Pi_M(!)$ that concludes by:
\small
$$\infer[!]
{\Gamma';\Delta;
 \{(\$\Theta'_M,\ta{x}{\$A};\emptyset)\}\sqcup\{(\Theta_M;\Phi_M)\}
 \vdash\ta{M}{\ !B}
}
{\begin{array}{ll}
 \Gamma_M;\emptyset;\{(\Theta'_M,\ta{x}{A};\emptyset)\}\vdash\ta{M}{B}
 \\
 \Gamma_M\subseteq\Theta_M\cup\Phi_M
 & 
 \Theta\neq\emptyset \Rightarrow \dom{\Phi}\cap\FV{M}\neq\emptyset
\end{array}
}$$
\normalsize
We can proceed with the same assumptions and arguments used in the proof of the first case here above. 
Applying point~\ref{lemma:structural-properties-WALL-3.3} of
Lemma~\ref{lemma:structural-properties-WALL} to $\Pi'_N$, and the inductive hypothesis to both $\Pi'_M$, and $\Pi'_N$, we get
$\Pi'_{M\subs{N}{x}}\rhd
 \Gamma_M;\emptyset;
 \{(\Theta'_M,\Theta'_N;\emptyset)\}
\vdash \ta{M\subs{N}{x}}{B}$, which allows to conclude by using the rule $!$.
\par
The \textbf{third} case has
$\Pi_M(\$)$ that concludes by:
\small
$$\infer[\$]
{\Gamma';
 \$\Delta'_{M},\Delta_{M};
 \{(\$\Theta'_M;\emptyset)\}\sqcup
 \{(\Theta_j,\ta{x}{L};\Phi_j)\}\sqcup
   \bigsqcup^{m}_{i=1, i\neq j}\{(\Theta_i;\Phi_i)\}
 \vdash\ta{M}{\$B}
}
{\begin{array}{ll}
\Gamma_M,\ta{x}{L};\Delta'_{M};\{(\Theta'_M;\emptyset)\}\vdash\ta{M}{B}
\\
\Gamma_M,\ta{x}{L}\subseteq
\Delta_{M}\cup\{\ta{x}{L}\}\cup
\bigcup^{m}_{i=1}\Theta_i\cup
\bigcup^{m}_{i=1}\Phi_i
&
\Theta_i\neq\emptyset\Leftrightarrow\Phi_i=\emptyset
\end{array}
}$$
\normalsize
The above assumption about $\Pi_M(\$)$ is matched by assuming $\Pi_N(R)\rhd\emptyset;\emptyset;{\mathcal E}_N\vdash\ta{N}{\$L}$,
with ${\mathcal E}_N\subseteq\{(\Theta_N;\emptyset)\}$. Then, Point~\ref{lemma:structural-properties-WALL-3.3} of
Lemma~\ref{lemma:structural-properties-WALL}, applied to
$\Pi_N(R)$, requires to focus on the following \textbf{single case},
since the case with 
$\Pi_N(C)\rhd\emptyset;\emptyset;{\mathcal E}_N\vdash\ta{N}{\$L}$ is excluded by the assumption on ${\mathcal E}_N\subseteq\{(\Theta_N;\emptyset)\}$.
Let $\Pi_N(\$)$ be:
\small
\[
\infer[\$]
{
\emptyset;\emptyset;
\{(\$\Theta'_N,\Theta';\emptyset)\}
\vdash\ta{N}{\$L}
}
{
\Pi'_N\rhd
\Gamma_N;\emptyset;\{(\Theta'_N;\emptyset)\}\vdash\ta{N}{L}
&\Gamma_N\subseteq\Theta'
}
\]
\normalsize
So, using point~\ref{lemma:subst-vs-wght-01} of
Lemma~\ref{lemma:subst-vs-wght} on both $\Pi'_M$, and $\Pi'_N$ we get
$\Pi'_{M\subs{N}{x}}
\rhd
\Gamma_M,\Gamma_N;$
\\$
\Delta'_{M};
\{(\Theta'_M;\emptyset)\}\sqcup\{(\Theta'_N;\emptyset)\}
\vdash\ta{M\subs{N}{x}}{B}$ that allows to conclude by an application of $\$$.
\par
The \textbf{fourth} case has $\Pi_M(!)$ that concludes by:
\small
$$\infer[!]
{\Gamma';\Delta;
 \{(\$\Theta'_M;\emptyset)\}\sqcup\{(\Theta_M,\ta{x}{L};\Phi_M)\}
 \vdash\ta{M}{\ !B}
}
{\begin{array}{ll}
\Pi'_M\rhd
\Gamma_M,\ta{x}{L};\emptyset;\{(\Theta'_M;\emptyset)\}\vdash\ta{M}{B}
\\
\Gamma_M,\ta{x}{L}\subseteq\Theta_M\cup\{\ta{x}{L}\}\cup\Phi_M
& 
\Theta\neq\emptyset \Rightarrow \dom{\Phi}\cap\FV{M}\neq\emptyset
\end{array}
}$$
\normalsize
We can proceed with the same assumptions and arguments used in the proof of the third case here above. Applying point~\ref{lemma:subst-vs-wght-01} of Lemma~\ref{lemma:subst-vs-wght} to both $\Pi'_M$, and $\Pi'_N$ we get
$\Pi'_{M\subs{N}{x}}\rhd
 \Gamma_M,\Gamma_N;\emptyset;
 \{(\Theta'_M;\emptyset)\}\sqcup\{(\Theta'_N;\emptyset)\}
\vdash \ta{M\subs{N}{x}}{B}$, which allows to conclude by using rule $!$.
\par
The remaining cases of $\Pi_M(R)$ routinely apply the inductive hypothesis.
\par
To prove the subpoints \textbf{\ref{lemma:subst-vs-wght-05-b}},
\textbf{\ref{lemma:subst-vs-wght-05-a}}, \textbf{\ref{lemma:subst-vs-wght-05-c}},
and~\textbf{\ref{lemma:subst-vs-wght-05-d}}
of point \textbf{\ref{lemma:subst-vs-wght-05}}
we can proceed as we did for the subpoints
\textbf{\ref{lemma:subst-vs-wght-03-b}},
\textbf{\ref{lemma:subst-vs-wght-03-a}}, \textbf{\ref{lemma:subst-vs-wght-03-c}},
and~\textbf{\ref{lemma:subst-vs-wght-03-d}} above.

%%%%%%%%%%%%%Point4 lemma 3.1
\item [Point~\ref{lemma:subst-vs-wght-02} of Lemma~\ref{lemma:subst-vs-wght}.]
Notice that $x\in\FV{M}$ excludes $\Pi_{M}(A)$. Otherwise $x\not\in\FV{M}$. We proceed by induction on $\Pi_M$.
\par
Lemma~\ref{lemma:structural-properties-WALL}, point~\ref{lemma:structural-properties-WALL-2}, applied to
$\Pi_M\rhd \Gamma_M;\Delta_M;{\mathcal E}_M,(\emptyset;\ta{x}{A})\vdash \ta{M}{B}$ implies the existence of $n\geq 1$ and $q_1,\ldots,q_n\geq 0$ such that $\wdth{1}{\Pi_M}\geq q_1+\ldots+q_n$ and:
\begin{itemize}
\item $M$ can be written as 
$M'
\subs{x}
{{x^{1}_{1}\ldots x^{1}_{q_1}
\ldots\ldots
x^{n}_{1}\ldots x^{n}_{q_n}}}$,
for some $M'$;
\item there are $n$ subdeductions 
$\Pi'_{P_i}(R_i)
\rhd
\Gamma_i;
\Delta_i;
{\mathcal E}_i,
(\Theta^{i}_1;\ta{x^{i}_1}{A}),
\ldots,
(\Theta^{i}_{q_i};\ta{x^{i}_{q_i}}{A})
\vdash \ta{P_i}{C_i}$, with $R_i\in\{A,\$,!\}$, that introduce $\ta{x^{i}_{1}}{A},\ldots,\ta{x^{i}_{q_i}}{A}$;
\item 
$q_1+\ldots+q_n-1$ instances of $C, \li E, \li E_!, \liv E$ are required in the tree with the conclusion of $\Pi_M$ as root and the conclusions of $\Pi'_{P_i}$s as leaves to contract
${x^{1}_{1},\ldots,x^{1}_{q_1},
\ldots\ldots,
x^{n}_{1},\ldots,x^{n}_{q_n}}$ to $x$.
\end{itemize}
\par
Using the assumption on ${\mathcal E}_N$ we focus on the case 
${\mathcal E}_N=\{(\Theta_N;\ta{y}{C})\}$, 
the other being simpler.
\par
Lemma~\ref{lemma:structural-properties-WALL}, point~\ref{lemma:structural-properties-WALL-3.1} can be applied to $\Pi_N$ implying that its conclusion be:
\small
$$
\infer[!]
{\Gamma_N;
 \Delta_N;
 \{(\Theta_N;\ta{y}{C})\}
 \vdash\ta{N}{\ !A}
}
{
\Pi'_N
\rhd
\ta{y}{C},
\Gamma'_N;
\emptyset;\emptyset
\vdash\ta{N}{A}
&
\{\ta{y}{C}\}\cup\Gamma'_N
\subseteq
\Theta_N\cup\{\ta{y}{C}\}
}$$
\normalsize
where $C$ is linear.
Now we can split the set of all $\Pi'_{P_i}$s into two complementary sets.
The first set $\mathcal G$ contains all the deductions $\Pi'_{P_i}(R_i)$ such that both $R_i\in\{\$, !\}$ and at least one among 
$\ta{x^{i}_{1}}{A},\ldots,\ta{x^{i}_{q_i}}{A}$
is a linear type assignment in the premise of $R_i$.
The other set $\mathcal B$ is equal to 
$\{ \Pi'_{P_i}(R_i) \mid 1\leq i\leq n\}\setminus{\mathcal G}$, 
namely the set of all $\Pi'_{P_i}(R_i)$ whose conclusion is either an axiom, or a modal rule that introduces every of $\ta{x^{i}_{1}}{A},\ldots,\ta{x^{i}_{q_i}}{A}$ as a fake polynomially partially discharged assumption in the conclusion of $R_i$.
The assumption $\nocc{x}{M}=1$ says that ${\mathcal G}$ contains a single $\Pi'_{P_i}$ in which only one among
$\ta{x^{i}_{1}}{A},\ldots,\ta{x^{i}_{q_i}}{A}$
is a linear type assignment. We can assume it be $\ta{x^{i}_{1}}{A}$ with $i=1$.
\par
We can apply Lemma~\ref{lemma:subst-vs-wght}, point~\ref{lemma:subst-vs-wght-01}, to the premise 
$\Pi'_{N\subs{w^1_1}{y}}\rhd
\ta{w^1_1}{C},\Gamma'_N;
\emptyset;
\emptyset
\vdash
\ta{N\subs{w^1_1}{y}}{A}$ of $\Pi_{N}$,
and to the premise of $\Pi'_{P_1}$ in $\mathcal G$ that, by definition, has form:
\small
$$\infer[R_1]
{\Gamma_1;\Delta_1;
 {\mathcal E}_1,
 (\Theta^1_1;\ta{x^1_1}{A}),
 \ldots,
 (\Theta^1_{q_1};\ta{x^1_{q_1}}{A})
 \vdash\ta{P_1}{R_1 C'_1}
}
{
\Pi''_{P_1}
\rhd
\Gamma''_1,\ta{x^1_1}{A};\Delta''_1;
{\mathcal E}''_1\vdash\ta{P_1}{C'_1}
}$$
\normalsize
We can get a set ${\mathcal G}'$ with a single deduction
$\Pi'_{P_1\subs{N\subs{w^1_1}{y}}{x^1_1}}$ with form:
\scriptsize
$$\infer[R_1]
{\Gamma_1;
 \Delta_1;
 {\mathcal E}_1\sqcup
 \{
 (\Theta_N;\ta{w^1_{1}}{C}),
 (\Theta^1_2;\ta{w^1_{2}}{C}),
 \ldots,
 (\Theta^1_{q_1};\ta{w^1_{q_1}}{C})
 \}
 \vdash\ta{P_1\subs{N\subs{w^1_1}{y}}{x^1_1}}{R_1 C'_1}
}
{
\Pi''_{P_1\subs{N\subs{w^1_1}{y}}{x^1_1}}
\rhd
\Gamma''_1,\ta{w^1_1}{C},\Gamma'_N;
\Delta''_1;
{\mathcal E}''_1
\vdash\ta{P_1\subs{N\subs{w^1_1}{y}}{x^1_1}}{C'_1}
}$$
\normalsize
Moreover, we can build a set ${\mathcal B}'$, from ${\mathcal B}$, so that every deduction of ${\mathcal B}'$ is identical to one of ${\mathcal B}$ up to the introduction of the fake assumptions 
$\ta{w^{j}_{1}}{C},\ldots,\ta{w^{j}_{q_j}}{C}$
in place of
$\ta{x^{j}_{1}}{A},\ldots,\ta{x^{j}_{q_j}}{A}$, for every $1\leq j\leq n$ and $j\neq i$,
using the rules $A, \$, !$.
Finally, we choose, arbitrarily in ${\mathcal G}'\cup{\mathcal B}'$ a deduction to introduce $\Gamma_N$ and $\Delta_N$ as fake assumptions in its conclusion.
We observe that the set of assumptions of the deductions in ${\mathcal G}'\cup{\mathcal B}'$ are those of ${\mathcal G}\cup{\mathcal B}$ up to the changes due to the substitution of $N\subs{w^1_1}{y}$ for $x^1_1$ in $P_1$.
So, we can apply to ${\mathcal G}'\cup{\mathcal B}'$, and to the subdeductions of $\Pi_M$ not in ${\mathcal G}\cup{\mathcal B}$, required to build $\Pi_M$, the same sequence of rules that, from ${\mathcal G}\cup{\mathcal B}$, lead to $\Pi_M$ itself.
This implies to apply at least the $q_1+\ldots+q_n-1$ instances of $C, \li E, \li E_!, \liv E$ that contract $w^1_1,\ldots,w^1_{q_1},\ldots\ldots,w^n_1,\ldots,w^n_{q_n}$ to $y$. We end up with a deduction with conclusion:
\small
\[
\Pi_{\overline{M}}
\rhd
\Gamma_M,\Gamma_N;
\Delta_M,\Delta_N;
{\mathcal E}_M\sqcup
\{(\Theta_N;\ta{y}{C})\}
\vdash\ta{\overline{M}}{B}
\]
\normalsize
where ${\mathcal E}_M\sqcup\{(\Theta_N;\ta{y}{C})\}$ is exactly 
${\mathcal E}_M\sqcup{\mathcal E}_N$ and:
\small
\begin{align*}
\overline{M}
&=(M'[P_1
      \subs{N\subs{w^1_1}{y}}
           {x^1_1}
      \ldots
      \subs{N\subs{w^1_{q_1}}{y}}
           {x^1_{q_1}}
      \ldots
\\
&\qquad\qquad\qquad
      \ldots
      P_n
      \subs{N\subs{w^{n}_{1}}{y}}
           {x^n_{1}}
      \ldots
      \subs{N\subs{w^{n}_{q_n}}{y}}
           {x^n_{q_n}}
     ])
  \subs{y}{w^1_1\ldots w^1_{q_1}\ldots\ldots w^n_1\ldots w^n_{q_n}}
\\
&=(M'[P_1\subs{N\subs{w^1_1}{y}}{x^1_1}
      P_2\ldots P_n])
  \subs{y}{w^1_1}
\\
&=(M'[P_1
     \ldots
     P_n]
    \{{}^{N\subs{w^1_1}{y}}/_{x^1_1}\})
  \subs{y}{w^1_1}
\\
&=(M'[P_1
     \ldots
     P_n])
   \subs{N}{x^1_1}
=((M'[P_1
      \ldots
      P_n])
    \subs{x}{x^1_1})
    \subs{N}{x} = M\subs{N}{x}
\enspace .
\end{align*}
\normalsize
$M[M_1\ldots M_n]$ highlights that the terms $M_1,\ldots,M_n$ occur in $M$.
\par
Subpoint~\textbf{\ref{lemma:subst-vs-wght-02-b}} holds because we have not introduced any new instance of modal rules in the course of the reconstruction of $\Pi_{\overline{M}}$.
\par
Subpoint~\textbf{\ref{lemma:subst-vs-wght-02-a}} holds for $d=0$ by definition of width.
If $d=1$, then $\wdth{1}{\Pi_{M\subs{N}{x}}}$\\
$=q_1+\ldots+q_n-1+k$, where $q_1+\ldots+q_n-1$ counts the number of rules required to contract
$w^1_1,\ldots,w^1_{q_1},\ldots\ldots, \\ 
 w^n_1,\ldots,$ $w^n_{q_n}$ to $y$, and $k$
counts the contribution to the width of $\Pi_{M\subs{N}{x}}$ by the instances of $\li E, \li E_!, \liv E, C$ that do not contract the polynomial assumptions of the deductions in ${\mathcal G}\cup{\mathcal B}$, but which may exist to produce the whole $\Pi_M$. However, all the $q_1+\ldots+q_n-1+k$ rules exist in $\Pi_{M\subs{N}{x}}$ exactly because they exist in $\Pi_{M}$. So, $\wdth{d}{\Pi_{M\subs{N}{x}}}=\wdth{d}{\Pi_M}$, for $0\leq d\leq 1$. Notice that ${\mathcal E}=\emptyset$ may imply that we can avoid the use of some of the instances of $\li E, \li E_!, \liv E, C$ to build $\Pi_{M\subs{N}{x}}$. So,
$\wdth{d}{\Pi_{M\subs{N}{x}}}<\wdth{d}{\Pi_M}$, for $0\leq d\leq 1$.
\par
Point~\textbf{\ref{lemma:subst-vs-wght-02-c}} holds because all the substitutions in $M$ occur at level $1$, and we do not change the structure at level $0$ thanks to the way we build ${\mathcal G}',{\mathcal B}'$, and $\Pi_{\bar{M}}$.
\par
Point~\textbf{\ref{lemma:subst-vs-wght-02-d}} holds because:
\small
\begin{align}
\psz{d}{\Pi_{M\subs{N}{x}}}
&=(q_1+\ldots+q_n-1)
 +\psz{d}{\Pi'_{P_1\subs{N\subs{w^1_1}{y}}{x^1_1}}}
 +\sum^{n}_{i=2}\psz{d}{\Pi'_{P_i}}
 +k
\nonumber
\\
\label{align:lemma:subst-vs-wght-02-d-10}
&\leq(q_1+\ldots+q_n-1)
 +\psz{d}{\Pi'_{P_1}}
 +\sum^{n}_{i=2}\psz{d}{\Pi'_{P_i}}
 +k
 +\psz{d}{\Pi_{N}}
\\
\nonumber
&=
\psz{d}{\Pi_{M}}+\psz{d}{\Pi_{N}}
\end{align}
\normalsize
where step~\eqref{align:lemma:subst-vs-wght-02-d-10} holds by induction on $\Pi'_{P_1\subs{N\subs{w^1_1}{y}}{x^1_1}}$, which has the same structure as $\Pi'_{P_1\subs{N}{x^1_1}}$, and $k$ takes into account the contribution to the width of $\Pi_{M\subs{N}{x}}$ by the instances of $\li E, \li E_!, \liv E$, and $C$ that do not contract the polynomial assumptions of the deductions in ${\mathcal G}\cup{\mathcal B}$, but which may exist to produce the whole $\Pi_M$.

%%%%%%%%%%%%%Point5 lemma 3.1
\item [Point~\ref{lemma:subst-vs-wght-04} of Lemma~\ref{lemma:subst-vs-wght}.]
Notice that $x\in\FV{M}$ excludes $\Pi_{M}(A)$. Otherwise $x\not\in\FV{M}$. We proceed by induction on $\Pi_M$.
\par
Lemma~\ref{lemma:structural-properties-WALL}, point~\ref{lemma:structural-properties-WALL-2}, applied to
$\Pi_M\rhd \Gamma_M;\Delta_M;{\mathcal E}_M,(\emptyset;\ta{x}{A})\vdash \ta{M}{B}$ implies the existence of $n\geq 1$ and $q_1,\ldots,q_n\geq 0$ such that $\wdth{1}{\Pi_M}\geq q_1+\ldots+q_n$ and:
\begin{itemize}
\item 
$M$ can be written as
$M'
\subs{x}
{x^1_1\ldots x^1_{q_1}
 \ldots\ldots
 x^n_1\ldots x^n_{q_n}}
$, for some $M'$;
\item 
there are $n$ subdeductions
$\Pi'_{P_i}(R_i)\rhd
\Gamma_i;
\Delta_i;
{\mathcal E}_i,
(\Theta^i_{1};\ta{x^i_{1}}{A}),
\ldots,
(\Theta^i_{q_i};\ta{x^i_{q_i}}{A})
\vdash \ta{P_i}{C_i}$
that introduce the polynomially partially discharged assumptions
$\ta{x^i_{1}}{A},\ldots,\ta{x^i_{q_i}}{A}$,
and such that $R_i\in\{A,\$,!\}$;
\item 
$q_1+\ldots+q_n-1$ instances of $C, \li E, \li E_!, \liv E$ are required in the tree with the conclusion of $\Pi_M$ as root and the conclusions of $\Pi'_{P_i}$s as leaves to contract
$x^1_1\ldots x^1_{q_1}
 \ldots\ldots
 x^n_1\ldots x^n_{q_n}$ to $x$.
\end{itemize}
\par
We observe that every $\Theta^i_{1},\ldots,\Theta^i_{q_i}$, with $1\leq i\leq n$, must be equal to $\emptyset$. If not, the only way to get rid of them, to obtain the pair $(\emptyset;\ta{x}{A})$ in the conclusion of $\Pi_M$, would be the use of the rule
$\liv I$ in some positions between the conclusion of $\Pi'_{P_i}(R_i)$ and $\Gamma_M;\Delta_M;{\mathcal E}_M,(\emptyset;\ta{x}{A})\vdash \ta{M}{B}$. However, $\liv I$ can only be applied in absence of the assumptions 
$
\ta{x^1_1}{A},\ldots,\ta{x^1_{q_1}}{A},
\ldots\ldots,
\ta{x^n_1}{A},\ldots,\ta{x^n_{q_n}}{A}$, some of which, instead, we know to exist, thanks to $\nocc{x}{M}>1$.
We focus on the case ${\mathcal E}_N=\{(\emptyset;\ta{y}{C})\}$, the other being simpler.
\par
Lemma~\ref{lemma:structural-properties-WALL}, point~\ref{lemma:structural-properties-WALL-3.1}, applied to $\Pi_N$, implies that its conclusion be:
\small
\[
\infer[!]
{\Gamma_N;
 \Delta_N;
 \{(\emptyset;\ta{y}{C})\}
 \vdash\ta{N}{\ !A}
}
{
\Pi'_N
\rhd
\ta{y}{C};
\emptyset;
\emptyset
\vdash\ta{N}{A}
}
\]
\normalsize
where $C$ is linear.
Now we can split the set of all $\Pi'_{P_i}$s into two complementary sets.
\par
The first set $\mathcal G$ contains all the deductions $\Pi'_{P_i}(R_i)$ such that both $R_i\in\{\$, !\}$ and at least one among $\ta{x^i_{1}}{A},\ldots,\ta{x^i_{q_i}}{A}$ is a linear type assignment in the premise of $R_i$.
\par
The other set $\mathcal B$ is equal to 
$\{ \Pi'_{P_i}(R_i) \mid 1\leq i\leq n\}\setminus{\mathcal G}$,
namely the set of all $\Pi'_{P_i}(R_i)$ whose conclusion is either an axiom, or a modal rule that introduces every $\ta{x^i_{1}}{A},\ldots,\ta{x^i_{q_i}}{A}$ as a fake polynomially partially discharged assumption in the conclusion of $R_i$.
\par
We can apply Lemma~\ref{lemma:subst-vs-wght}, point~\ref{lemma:subst-vs-wght-01}, to the premise
$\Pi'_{N\subs{w^i_j}{y}}\rhd
\ta{w^i_j}{C};
\emptyset;
\emptyset
\vdash\ta{N\subs{w^i_j}{y}}{A}$ of $\Pi_{N}$, for every $1\leq j\leq p_i$, and to the premise of every $\Pi'_{P_i}$ of $\mathcal G$ that, by definition, has form:
\small
$$
\infer[R_i]
{\Gamma_i;\Delta_i;
 {\mathcal E}_i,
 (\emptyset;\{\ta{x^i_1}{A},\ldots,\ta{x^i_{q_i}}{A}\})
 \vdash\ta{P_i}{R_i C'_i}
}
{
\Pi''_{P_i}
\rhd
\Gamma''_i,
\ta{z^i_1}{A},\ldots,\ta{z^i_{p_i}}{A}
;\Delta''_i;
{\mathcal E}''_i
\vdash\ta{P_i}{C'_i}
}
$$
\normalsize
with $p_i\leq q_i$, and 
$\{z^i_1,\ldots,z^i_{p_i}\}\subseteq\{x^i_1,\ldots,x^i_{p_i}\}$.
We get a set ${\mathcal G}'$ of deductions 
$\Pi'_{\bar{P_i}}$ with form:
\small
$$\infer[R_i]
{\Gamma_i;
 \Delta_i;
 {\mathcal E}_i,
 (\emptyset;\ta{w^i_1}{C},\ldots,\ta{w^i_{q_i}}{C})
 \vdash\ta{\bar{P_i}}{R_i C'_i}
}
{
\Pi''_{\bar{P_i}}
\rhd
\Gamma''_i,\ta{\bar{w}^i_1}{C},\ldots,\ta{\bar{w}^i_{p_i}}{C};
\Delta''_i;
{\mathcal E}''_i\vdash
\ta{\bar{P_i}}{C'_i}
}$$
\normalsize
where $\bar{P_i}$ is
$P_i
\{
^{N\subs{\bar{w}^i_1}{y}}/_{z^{i}_{1}}
\ldots
^{N\subs{\bar{w}^i_{p_i}}{y}}/_{z^{i}_{p_i}}
\}$,
and 
$\{\bar{w}^i_1,\ldots,\bar{w}^i_{p_i}\}
\subseteq
\{\bar{w}^i_1,\ldots,\bar{w}^i_{q_i}\}$.
\par
Moreover, we can build a set ${\mathcal B}'$, from ${\mathcal B}$, so that every deduction of ${\mathcal B}'$ is identical to one of ${\mathcal B}$ up to the introduction of the fake assumptions $\ta{w^i_1}{C},\ldots,\ta{w^i_{q_i}}{C}$ in place of $\ta{x^i_1}{A},\ldots,\ta{x^i_{q_i}}{A}$, by using the rules $A, \$, !$.
Finally, we choose, arbitrarily in ${\mathcal G}'\cup{\mathcal B}'$ a deduction to introduce $\Gamma_N$ and $\Delta_N$ as fake assumptions in its conclusion.
We observe that the set of assumptions of the deductions in 
${\mathcal G}'\cup{\mathcal B}'$ are those of ${\mathcal G}\cup{\mathcal B}$ up to the changes due to the substitutions of the terms
$N\subs{\bar{w}^i_j}{y}$s for $z^i_j$ which is one among $x^i_1,\ldots,x^i_{q_i}$.
So, we can apply to ${\mathcal G}'\cup{\mathcal B}'$ the same sequence of rules that, from ${\mathcal G}\cup{\mathcal B}$, lead to $\Pi_M(R)$.
This implies to apply at least the $q_1+\ldots+q_n-1$ instances of $C, \li E, \li E_!, \liv E$ that contract $w^1_1,\ldots,w^1_{q_1},\ldots\ldots,w^n_1\ldots w^n_{q_n}$ to $y$.
We end up with a deduction whose conclusion is:
$\Pi_{\overline{M}}
\rhd
\Gamma_M,\Gamma_N;
\Delta_M,\Delta_N;
{\mathcal E}_M,(\emptyset;\ta{y}{C})\vdash\ta{\overline{M}}{B}
$,
where ${\mathcal E}_M,(\emptyset;\ta{y}{C})$ is ${\mathcal E}_M\sqcup{\mathcal E}_N$ and:
\scriptsize
\begin{align*}
\overline{M}
&=M'[P_1
     \{
      {}^{N\subs{\bar{w}^1_1}{y}}/_{x^1_1}
      \ldots
      {}^{N\subs{\bar{w}^1_{p_1}}{y}}/_{x^1_{p_1}}
     \}
      \ldots
\\
&\qquad\qquad\qquad
      \ldots
     P_n
     \{
      {}^{N\subs{\bar{w}^{n}_1}{y}}/_{x^{n}_1}
      \ldots
      {}^{N\subs{\bar{w}^{n}_{p_n}}{y}}/_{x^{n}_{p_n}}
     \}
     ]
\subs{y}{\bar{w}^1_{1}\ldots\bar{w}^1_{p_1}
         \ldots\ldots
         \bar{w}^n_{1}\ldots\bar{w}^n_{p_n}
         }
\\
&=M'[P_1
     \{
      {}^{N\subs{w^1_1}{y}}/_{x^1_1}
      \ldots
      {}^{N\subs{w^1_{q_1}}{y}}/_{x^1_{p_1}}
     \}
      \ldots
\\
&\qquad\qquad\qquad
      \ldots
     P_n
     \{
      {}^{N\subs{w^{n}_1}{y}}/_{x^{n}_1}
      \ldots
      {}^{N\subs{w^{n}_{q_n}}{y}}/_{x^{n}_{p_n}}
     \}
     ]
\subs{y}{w^1_{1}\ldots w^1_{q_1}
         \ldots\ldots
         w^n_{1}\ldots w^n_{q_n}
         }
\\
&=(M'[P_1\ldots P_n])
     \subs{N}{x^1_{1}\ldots x^1_{q_1}
              \ldots\ldots
              x^n_{1}\ldots x^n_{q_n}}
\\
&=((M'[P_1\ldots P_n])
    \subs{x}{x^1_{1}\ldots x^1_{q_1}
             \ldots\ldots
             x^n_{1}\ldots x^n_{q_n}})
    \subs{N}{x}
= M\subs{N}{x}
\enspace .
\end{align*}
\normalsize
$M[M_1\ldots M_n]$ highlights that $M_1\ldots M_n$ are subterms of $M$.
\par
Subpoints~\textbf{\ref{lemma:subst-vs-wght-04-b}}, \textbf{\ref{lemma:subst-vs-wght-04-a}}, and~\textbf{\ref{lemma:subst-vs-wght-04-c}}
of point~\textbf{\ref{lemma:subst-vs-wght-04}} holds for reasons analogous to the ones that justify the subpoints~\textbf{\ref{lemma:subst-vs-wght-02-b}}, \textbf{\ref{lemma:subst-vs-wght-02-a}}, and~\textbf{\ref{lemma:subst-vs-wght-02-c}} above, respectively.
\par
Subpoint~\textbf{\ref{lemma:subst-vs-wght-04-d}} of point~\textbf{\ref{lemma:subst-vs-wght-04}} holds because:
\scriptsize
\begin{align}
\psz{d}{\Pi_{M\subs{N}{x}}}
&=(q_1+\ldots+q_n-1)+\sum^{n}_{i=1}\psz{d}{\Pi'_{\bar{P_i}}}+k
\nonumber
\\
\nonumber
&=
(q_1+\ldots+q_n-1)+\sum^{n}_{i=1}\psz{d-1}{\Pi''_{\bar{P_i}}}+k
\\
\label{align:lemma:subst-vs-wght-04-d-10}
&\leq
(q_1+\ldots+q_n-1)+
\sum^{n}_{i=1}
(\psz{d-1}{\Pi'_{P_i}}
+\psz{d-1}{\Pi'_{N\subs{w^i_1}{y}}}
+\ldots
+\psz{d-1}{\Pi'_{N\subs{w^i_{p_i}}{y}}}
)+k
\\
\nonumber
&=
(q_1+\ldots+q_n-1)
+\sum^{n}_{i=1}(\psz{d-1}{\Pi'_{P_i}}+p_i\psz{d-1}{\Pi'_N})+k
\\
&=
(q_1+\ldots+q_n-1)
+\sum^{n}_{i=1}\psz{d-1}{\Pi'_{P_i}}
+k
+\psz{d-1}{\Pi'_N} \sum^{n}_{i=1}p_i
\nonumber
\\
&=
(q_1+\ldots+q_n-1)
+\sum^{n}_{i=1}\psz{d}{\Pi_{P_i}}
+k
+\psz{d}{\Pi_N} \sum^{n}_{i=1}p_i
\nonumber
=\psz{d}{\Pi_{M}}+\nocc{x}{M}\psz{d}{\Pi_N}
\enspace .
\end{align}
\normalsize
Step~\eqref{align:lemma:subst-vs-wght-04-d-10} holds by iteratively applying the points~\textbf{\ref{lemma:subst-vs-wght-01-c}}, and~\textbf{\ref{lemma:subst-vs-wght-01-c}} of Lemma~\ref{lemma:subst-vs-wght}, and using the observation that every $p_1,\ldots,p_n$ is the effective number of linear type assignments in $\Pi'_{P_i}$ which are replaced by the linear type assignment of $\Pi_N$, if any, up to a renaming of, at most, the single free variable of $N$.
$k$ counts the contribution to the width of $\Pi_{M\subs{N}{x}}$ by the instances of $\li E, \li E_!, \liv E$, and $C$ that do not contract the polynomial assumptions of the deductions in ${\mathcal G}\cup{\mathcal B}$, but which may exist to produce the whole $\Pi_M$.
\end{description}
%%%%%%%%%%%%%%%%%%%%%%%%%%
\paragraph*{Proof of Lemma~\ref{lemma:substitution-property}
(Subject reduction at depth $0$.)}
As a first step, we inspect the structure of
$\Pi_{(\bs x.M)N}\rhd\Gamma;\Delta;{\mathcal E}\vdash\ta{(\bs x.M)N}{B}$.
In general, it contains an instance of one of the arrow eliminations that assume the generic form:
\small
\begin{align}
\label{eqn:genericappl} %------26
\begin{minipage}{5cm}
\infer[R_E]
  {\Gamma_M, \Gamma_N;
   \Delta_M, \Delta_N;
   {\mathcal E}_M
   \sqcup
   {\mathcal E}_N
   \vdash \ta{(\bs x.M)N}{C}
  }
  {\begin{array}{ll}
    \Pi_{\bs x.M}\rhd
    \Gamma_M; 
    \Delta_M;
    {\mathcal E}_M
    \vdash \ta{\bs x.M}{A\supset C}
    \\
    \Pi_{N}\rhd
    \Gamma_N; 
    \Delta_N;
    {\mathcal E}_N
    \vdash \ta{N}{A}
    &
    \supset\in\{\li,\liv\}
   \end{array}
  }
\end{minipage}
\end{align}
\normalsize
followed by a, possibly empty, sequence $\sigma$ of instances of the rules $C,\forall I$, and $\forall E$, with $r\geq 0$ instances of $C$. No other rules can belong to $\sigma$, since we are at depth $0$.
Moreover,
$\Pi_{\bs x.M}\rhd
\Gamma_M;\Delta_M;{\mathcal E}_M\vdash \ta{\bs x.M}{A\supset C}$
must be obtained by an instance of:
\small
\begin{align}
\label{eqn:genericabstr}
\begin{minipage}{5cm}
\infer[R_I]
  {\Gamma_M; \Delta_M; 
   {\mathcal E}'_M
   \vdash \ta{\bs x.M}{A\supset C}}
  {\Pi_{M}\rhd\Gamma'_M; \Delta'_M; {\mathcal E}''_M\vdash \ta{M}{C}}
\end{minipage}
\end{align}
\normalsize
where $R_I$ is some arrow introduction, followed by a, possibly empty, sequence $\rho$ of instances of the rule $C, \forall E, \forall I$, with $s\geq 0$ instances of $C$,
and $x\in\dom{\Gamma'_M}\cup\dom{\Delta'_M}\cup\dom{{\mathcal E}''_M}$.
The possible combinations of pairs $(R_I,R_E)$ are:
$(\li I,\li E),(\li I_{\$},\li E),(\li I_!,\li E_!),(\liv I,\liv E)$.
If some instances of the rules $\forall E, \forall I$ exist in $\rho$, the types of the whole deduction can be rearranged to eliminate them, using Lemma~\ref{lemma:structural-properties-WALL}, point~\ref{lemma:structural-properties-WALL-(-1)}. So, we can assume that $\rho$ contains only $s$ occurrences of $C$.

\par{\textbf{First case.}}
We assume $\nocc{x}{M}=0$ and $N\in\text{\PTT}$.
\par
As \textbf{\textit{first hypothesis}} we let \eqref{eqn:genericabstr} be:
\small
\begin{align}
\label{eqn:!abstr}
\begin{minipage}{5cm}
\infer[\li I_{!}]
  {\Gamma_M; \Delta_M; 
   {\mathcal E}'''_M\sqcup\{(\Theta_M;\emptyset)\}
   \vdash \ta{\bs x.M}{\,!A\li C}}
  {\Pi_{M}\rhd\Gamma_M; \Delta_M; {\mathcal E}'''_M,(\Theta_M;\ta{x}{A})
  \vdash \ta{M}{C}}
\end{minipage}
\end{align}
\normalsize
Lemma~\ref{lemma:structural-properties-WALL}, point~\ref{lemma:structural-properties-WALL-2},
implies the existence of $n\geq 1$ and $q_1,\ldots,q_n\geq0$ such that $\wdth{1}{\Pi_M}\geq q_1+\ldots+q_n$, and:
(i) $M$ can be written as 
$M'\subs{x}
        {x^1_1\ldots x^1_{q_1}
         \ldots\ldots
         x^n_1\ldots x^n_{q_n}}$, for some $M'$;
(ii) there are $n$ deductions
$\Pi'_i(R_i)\rhd
\Gamma_i;
\Delta_i;
{\mathcal E}_i,
(\Theta^i_1;\ta{x^i_{1}}{A}),
\ldots,
(\Theta^i_1;\ta{x^i_{q_i}}{A})
\vdash\ta{P_i}{C_i}$
such that $\Pi'_i\preceq\Pi_{M}$ and $R_i\in\{A,\$,!\}$;
(iii) $q_1+\ldots+q_n-1$ instances of $C, \li E, \li E_!, \liv E$, in the tree with the conclusion of $\Pi_M$ as root and the conclusions of $\Pi'_i$s as leaves, are required to contract $x^1_1,\ldots,x^1_{q_1},\ldots\ldots,x^n_1,\ldots,x^n_{q_n}$ to $x$. 
Call $\tau$ such a tree.
We observe that $M$ is $M'$ since none of the $x^i_j$s occurs in the corresponding $P_i$, otherwise we could not have $\nocc{x}{M}=0$. However, this does not prevent to have instances of $C$ in $\tau$ (uselessly) contracting some $x_i$.
Lemma~\ref{lemma:structural-properties-WALL}, point~\ref{lemma:structural-properties-WALL-4}, implies the existence
of $\Pi''_i(R_i)\rhd\Gamma_i;\Delta_i;{\mathcal E}_i\sqcup\{(\Theta_i;\emptyset)\}
\vdash\ta{P_i}{C_i}$, with $1\leq i\leq n$, to which we can apply the instances of the rules in the tree $\tau$, but the contractions $C$ occurring in it, obtaining:
\small
\begin{align}
\Pi_{M}\rhd\Gamma_M; \Delta_M; {\mathcal E}'''_M\sqcup\{(\Theta_M;\emptyset)\}\vdash \ta{M}{C} \enspace .
\label{eqn:!!abstr1}
\end{align}
\normalsize
\eqref{eqn:!!abstr1} can be followed by the sequence $\rho$, which yields:
\small
\begin{align}
\Gamma_M; \Delta_M; {\mathcal E}_M\vdash \ta{M}{C} \enspace .
\label{eqn:!!abstr2}
\end{align}
\normalsize
Lemma~\ref{lemma:structural-properties-WALL}, point~\ref{lemma:structural-properties-WALL-1}, applied to \eqref{eqn:!!abstr2}, implies the existence of $\Gamma_M,\Gamma_N; \Delta_M,\Delta_N; {\mathcal E}_M$
\\$\sqcup{\mathcal E}_N\vdash \ta{M}{C}$, which, followed by $\sigma$, becomes $\Gamma; \Delta; {\mathcal E}\vdash \ta{M}{C}$.
\par
As a \textbf{\textit{second hypothesis}} we let \eqref{eqn:genericabstr} be:
\small
\begin{align}
\label{eqn:livabstr}
\begin{minipage}{5cm}
\infer[\liv I]
  {\Gamma_M; \Delta_M; 
  {\mathcal E}'''_M\sqcup
  \{(\Theta_M;\emptyset)\}
  \vdash \ta{\bs x.M}{\$A\liv C}}
  {\Pi_{M}\rhd\Gamma_M; \Delta_M; 
  {\mathcal E}'''_M,(\Theta_M,\ta{x}{A};\emptyset)
  \vdash \ta{M}{C}}
\end{minipage}
\end{align}
\normalsize
Lemma~\ref{lemma:structural-properties-WALL}, point~\ref{lemma:structural-properties-WALL-6}
implies
$\Pi'(R)\rhd
\Gamma';\Delta';{\mathcal E}',(\Theta',\ta{x}{A};\emptyset)
\vdash\ta{P}{C}$, that introduces $\ta{x}{A}$, and
such that $\Pi'\preceq\Pi_M$,
$R\in\{A, \$, !\}$, for some ${\mathcal E}'',\Gamma', \Delta', {\mathcal E}', \Theta'$. Since $\nocc{x}{M}=0$ implies $\nocc{x}{P}=0$, we have that
Lemma~\ref{lemma:structural-properties-WALL}, point~\ref{lemma:structural-properties-WALL-4}, allows to deduce:
\small
\begin{align}
\Pi''(R)\rhd\Gamma';\Delta';{\mathcal E}',\{(\Theta';\emptyset)\}
\vdash\ta{P}{C}
\label{eqn:livabstr0}
\end{align}
\normalsize
that can be used to yield:
\small
\begin{align}
\Pi_{M}\rhd\Gamma_M; \Delta_M; 
{\mathcal E}'''_M\sqcup\{(\Theta_M;\emptyset)\}\vdash \ta{P}{C}
\label{eqn:livabstr1}
\end{align}
\normalsize
exactly like
$\Pi_{M}\rhd
\Gamma_M; \Delta_M; {\mathcal E}'''_M,(\Theta_M,\ta{x}{A};\emptyset)
\vdash \ta{M}{C}$ can be deduced from
$\Pi'(R)\rhd
\Gamma';\Delta';{\mathcal E}',(\Theta',\ta{x}{A};\emptyset)
\vdash\ta{P}{C}$ because the presence of $\ta{x}{A}$ is not essential.
\eqref{eqn:livabstr1} can be followed by the sequence $\rho$, which gives:
\small
\begin{align}
\Gamma_M; \Delta_M; {\mathcal E}_M\vdash \ta{M}{C} \enspace .
\label{eqn:livabstr2}
\end{align}
\normalsize
Lemma~\ref{lemma:structural-properties-WALL}, point~\ref{lemma:structural-properties-WALL-1}, applied to \eqref{eqn:livabstr2}, implies the existence of $\Gamma_M,\Gamma_N; \Delta_M,\Delta_N; {\mathcal E}_M$
\\$\sqcup{\mathcal E}_N\vdash \ta{M}{C}$, which, followed by $\sigma$, becomes $\Gamma; \Delta; {\mathcal E}\vdash \ta{M}{C}$.
\par
The cases where $R$ is $\li I$ or $\li I_{\$}$ can be proved by following an analogous schema.
\par
Once proved the existence of the deduction corresponding to the redex, we can prove point~\ref{lemma:substitution-property-1}, through~\ref{lemma:substitution-property-4} of the current statement.
\par
\textbf{Point~\ref{lemma:substitution-property-1}} and~\textbf{\ref{lemma:substitution-property-2}}
hold by observing that $\Pi_{M\subs{N}{x}}$ coincides to $\Pi_{M}$ which does not contain the whole deduction $\Pi_N$. So, moving from $\Pi_{(\bs x.M)N}$ to $\Pi_{M}$, we might erase the component of $\Pi_{(\bs x.M)N}$ that determines the value of $\dpth{\Pi_{(\bs x.M)N}}$ or $\wdth{1}{\Pi_{(\bs x.M)N}}$.
\par
\textbf{Point~\ref{lemma:substitution-property-3}}
holds because $\Pi_{(\bs x.M)N}$ has an application more than $\Pi_M$.
\par
\textbf{Point~\ref{lemma:substitution-property-4}} holds because
$\psz{d}{\Pi_{M\subs{N}{y}}}
=\psz{d}{\Pi_{M}}
=\psz{d}{\Pi_{M}}+0\cdot\psz{d}{\Pi_N}$,
for every $0< d\leq \dpth{\Pi_{(\bs x.M)N}}$.

\par{\textbf{Second case.}}
We assume $\nocc{x}{M}=1$, $N\in\text{\PTV}$, and $N\in\text{\PTT}$.
\par
As \textbf{\textit{first hypothesys}} we let \eqref{eqn:genericabstr} had $R_I$ equal to $\li I_{!}$, with assumption:
\small
\begin{eqnarray}
\Pi_{M}\rhd
\Gamma_M; \Delta_M; {\mathcal E}'''_M,(\emptyset;\ta{x}{A})
  \vdash \ta{M}{C} \enspace .
\label{eqn:!abstr-sc}
\end{eqnarray}
\normalsize
So, we must have:
\small
\begin{eqnarray}
\Pi_N\rhd\Gamma_N;\Delta_N;{\mathcal E}_N\vdash\ta{N}{\, !A}
\label{eqn:!abstr-arg-sc}
\end{eqnarray}
\normalsize
on which Lemma~\ref{lemma:structural-properties-WALL}, point~\ref{lemma:structural-properties-WALL-3.1}, implies $\FV{N}\subseteq\dom{\mathcal E_N}$.
Lemma~\ref{lemma:subst-vs-wght}, point~\ref{lemma:subst-vs-wght-02},
applied to \eqref{eqn:!abstr-sc} and \eqref{eqn:!abstr-arg-sc} implies:
\small
\begin{eqnarray}
\Pi'_{M\subs{N}{x}}
\rhd
\Gamma_M,\Gamma_N; \Delta_M,\Delta_N;
{\mathcal E}'''_M\sqcup
{\mathcal E}_N\vdash\ta{M\subs{N}{x}}{C}
\label{eqn:!abstr-subs}
\end{eqnarray}
\normalsize
such that:
(i) 
$\dpth{\Pi'_{M\subs{N}{x}}}
=\max\{\dpth{\Pi_{M}},\dpth{\Pi_{N}}\}$;
(ii)
$\wdth{d}{\Pi'_{M\subs{N}{x}}}
=\wdth{d}{\Pi_{M}}$,
for every $0\leq d\leq 1$;
(iii)
$\psz{0}{\Pi'_{M\subs{N}{x}}}
=\psz{0}{\Pi_{M}}$;
(iv)
$\psz{d}{\Pi'_{M\subs{N}{x}}}
\leq\psz{d}{\Pi_{M}}+\psz{d}{\Pi_{N}}$,
for every $d\geq 1$.
\par
Now, the sequence $\sigma$ and $\rho$ can be applied to
\eqref{eqn:!abstr-subs} to get
\small
\begin{eqnarray}
\Pi_{M\subs{N}{x}}
\rhd
\Gamma; \Delta;{\mathcal E}\vdash\ta{M\subs{N}{x}}{C}
\label{eqn:!abstr-subs-compl} % 39 --> 50
\end{eqnarray}
\normalsize

\par
\textbf{Point~\ref{lemma:substitution-property-1}} is:
\small
\begin{align}
\dpth{\Pi_{M\subs{N}{x}}}
&=\dpth{\Pi'_{M\subs{N}{x}}}
\label{align:07-02-18-01}\\
&\leq\max\{\dpth{\Pi_{M}},\dpth{\Pi_{N}}\}
\nonumber\\
&=\max\{\dpth{\Pi_{\bs x.M}},\dpth{\Pi_{N}}\}
 =\dpth{\Pi_{(\bs x.M)N}}
\nonumber
\end{align}
\normalsize
where \eqref{align:07-02-18-01} holds because the rules in $\sigma$ and $\rho$ have a single premise and are different from $!$ and $\$$.

\par
\textbf{Point~\ref{lemma:substitution-property-2}} holds because both $\wdth{0}{\Pi}=0$, for every $\Pi$, and:
\small
\begin{align}
\wdth{1}{\Pi_{M\subs{N}{x}}}
&=\wdth{1}{\Pi'_{M\subs{N}{x}}}+r+s
\label{align:07-02-18-03}\\
&<\wdth{1}{\Pi_{M}}+r+s+1
\label{align:07-02-18-04}\\
&=\wdth{1}{\Pi_{\bs x.M}}+r+1
\label{align:07-02-18-06}\\
&=\wdth{1}{\Pi_{(\bs x.M)N}}+r
\label{align:07-02-18-05}\\
&=\wdth{1}{\Pi_{(\bs x.M)N}}
\label{align:07-02-18-07}
\end{align}
\normalsize
where \eqref{align:07-02-18-03} holds because we apply $\rho$ and $\sigma$ to get \eqref{eqn:!abstr-subs-compl} from \eqref{eqn:!abstr-subs};
\eqref{align:07-02-18-04} holds using point (ii) here above;
\eqref{align:07-02-18-06} holds because we know that \eqref{eqn:genericabstr} is followed by $\rho$;
\eqref{align:07-02-18-05} holds by definition of width that counts an arrow elimination;
\eqref{align:07-02-18-07} holds because we know that \eqref{eqn:genericappl} is followed by $\sigma$.
\par
\textbf{Point~\ref{lemma:substitution-property-3}}
holds from (iii) above, by observing that $\psz{0}{\Pi_M}<\psz{0}{\Pi_{(\bs x.M)N}}$ and that the partial size at level 0 does not count contractions and universal quantifications.
\par
\textbf{Point~\ref{lemma:substitution-property-4}}, for every $d\geq 1$, is:
\small
\begin{align}
\psz{d}{\Pi_{M\subs{N}{x}}}
&=
\psz{d}{\Pi'_{M\subs{N}{x}}}+r+s
\label{align:07-26-05-01}
\\
&\leq
\psz{d}{\Pi_{M}}+\psz{d}{\Pi_{N}}+r+s
\label{align:07-26-05-02}
\\
&<
\psz{d}{\Pi_{M}}+r+s+1+\psz{d}{\Pi_N}+\nocc{x}{M}\psz{d}{\Pi_{N}}
\nonumber
\\
&=
\psz{d}{\Pi_{(\bs x.M)N}}+\nocc{x}{M}\psz{d}{\Pi_{N}}
\enspace .
\nonumber
\end{align}
\normalsize
Step~\eqref{align:07-26-05-01} holds because~\eqref{eqn:!abstr-subs-compl} is obtained from~\eqref{eqn:!abstr-subs} by applying the sequences of rules $\sigma$ and $\rho$.
Step~\eqref{align:07-26-05-02} follows from (iv) here above.

\par
As a \textbf{\textit{second hypothesys}} we let \eqref{eqn:genericabstr} had $R_I$ equal to $\liv I$, with assumption:
\small
\begin{eqnarray}
\Pi_{M}\rhd
\Gamma_M; \Delta_M; {\mathcal E}'''_M,(\Theta_M,\ta{x}{A};\emptyset)
  \vdash \ta{M}{C} \enspace .
\label{eqn:livabstr-sc}
\end{eqnarray}
\normalsize
So, we must have:
\small
\begin{eqnarray}
\Pi_N\rhd\emptyset;\emptyset;{\mathcal E}_N\vdash\ta{N}{\$A} \enspace .
\label{eqn:livabstr-arg-sc}
\end{eqnarray}
\normalsize
We observe that ${\mathcal E}_N\subseteq\{(\Theta_N;\emptyset)\}$, since $\Pi_N$ is the secondary premise of \eqref{eqn:genericappl}, which must be an instance of $\liv E$. 
Lemma~\ref{lemma:subst-vs-wght}, point~\ref{lemma:subst-vs-wght-05}, applied to
\eqref{eqn:livabstr-sc} and \eqref{eqn:livabstr-arg-sc} implies 
the existence of:
\small
\begin{align}
\Pi'_{M\subs{N}{x}}
\rhd
\Gamma_M; \Delta_M;
{\mathcal E}'''_M,
\{(\Theta_M;\emptyset)\}\sqcup
{\mathcal E}_N\vdash\ta{M\subs{N}{x}}{C}
\label{eqn:livabstr-subs}
\end{align}
\normalsize
such that:
(i) 
$\dpth{\Pi'_{M\subs{N}{x}}}
=\max\{\dpth{\Pi_{M}},\dpth{\Pi_{N}}\}$;
(ii)
$\wdth{d}{\Pi'_{M\subs{N}{x}}}
=\wdth{d}{\Pi_{M}}+\wdth{d}{\Pi_{N}}$, with $d\geq0$;
(iii)
$\psz{0}{\Pi'_{M\subs{N}{x}}}
=\psz{0}{\Pi_{M}}$;
(iv)
$\psz{d}{\Pi'_{M\subs{N}{x}}}
\leq\psz{d}{\Pi_{M}}+\psz{d}{\Pi_{N}}$,
for every $d\geq1$.
\par
The sequence $\sigma$ and $\rho$ can be applied to \eqref{eqn:livabstr-subs} to get:
\small
\begin{eqnarray}
\Pi_{M\subs{N}{x}}
\rhd
\Gamma; \Delta; {\mathcal E}\vdash\ta{M\subs{N}{x}}{C}
\enspace .
\label{eqn:livabstr-subs-compl}
\end{eqnarray}
\normalsize
Under the current hypothesis, \textbf{Points~\ref{lemma:substitution-property-1}}, \textbf{\ref{lemma:substitution-property-2}},
\textbf{\ref{lemma:substitution-property-3}}, and~\textbf{\ref{lemma:substitution-property-4}}
can be obtained from \eqref{eqn:livabstr-subs-compl}, \eqref{eqn:livabstr-arg-sc}, and \eqref{eqn:livabstr-sc} in place of 
\eqref{eqn:!abstr-subs-compl}, % 39 --> 50
\eqref{eqn:!abstr-arg-sc}, % 37 --> 48
and 
\eqref{eqn:!abstr-sc}, % 35 --> 46
respectively, following what we have done, for the analogous points, under the first hypothesis above.
\par
The cases where~\eqref{eqn:genericabstr} had $R_I$ equal to $\li I_{\$}$ or $\li I$, are simpler than those just detailed out.

\par{\textbf{Third case.}}
We assume $\nocc{x}{M}>1$ and $N\in$\PTT. This implies that 
\eqref{eqn:genericabstr} has $R_I$ equal to $\li I_!$ with premise:
\small
\begin{eqnarray}
\Pi_{M}\rhd
\Gamma_M; \Delta_M; 
{\mathcal E}'''_M,(\emptyset;\ta{x}{A})\vdash \ta{M}{C} \enspace .
\label{eqn:gabstr} % ----52
\end{eqnarray}
\normalsize
Since $(\bs x.M)N$ is a redex, and $N\in$\PTT, we have $N\in$\PTV, $\FV{N}\subseteq\{y\}$, and
\small
\begin{eqnarray}
\Pi_N\rhd\Gamma_N;\Delta_N;{\mathcal E}_N\vdash\ta{N}{\, !A}\enspace .
\label{eqn:gabstr-arg-sc} % -----53
\end{eqnarray}
\normalsize
Lemma~\ref{lemma:structural-properties-WALL}, point~\ref{lemma:structural-properties-WALL-3.1}, applied to \eqref{eqn:gabstr-arg-sc}, implies that $\FV{N}\subseteq\dom{{\mathcal E}_N}$. The assumption on $\FV{N}$ allows to have only one between ${\mathcal E}_N \subseteq \{(\ta{y}{D};\emptyset)\}$ and ${\mathcal E}_N \subseteq \{(\emptyset;\ta{y}{D})\}$. In fact, only the second case is allowed, as Lemma~\ref{lemma:structural-properties-WALL}, point~\ref{lemma:structural-properties-WALL-3.1}, says that \eqref{eqn:gabstr-arg-sc}
terminates by an instance of the rule $!$ whose precondition forces the unique free variable of $N$ to be a polynomial variable.
So, we assume ${\mathcal E}_N=\{(\emptyset;\ta{y}{D})\}$, the case with ${\mathcal E}=\emptyset$ being analogous.
If we apply Lemma~\ref{lemma:subst-vs-wght} point~\ref{lemma:subst-vs-wght-04} to \eqref{eqn:gabstr} and \eqref{eqn:gabstr-arg-sc} we get:
\small
\begin{align}
\Pi'_{M\subs{N}{x}}
\rhd
\Gamma_M,\Gamma_N; \Delta_M,\Delta_N;
{\mathcal E}'''_M\sqcup{\mathcal E}_N
\vdash\ta{M\subs{N}{x}}{C} \enspace .
\label{eqn:!!abstr-subs} % -----55
\end{align}
\normalsize
such that:
(i) $\dpth{\Pi'_{M\subs{N}{x}}}=\max\{\dpth{\Pi_{M}},\dpth{\Pi_{N}}\}$;
(ii) $\wdth{d}{\Pi'_{M\subs{N}{x}}}\leq\wdth{d}{\Pi_{M}}$, with $0\leq d\leq 1$;
(iii) $\psz{0}{\Pi'_{M\subs{N}{x}}}=\psz{0}{\Pi_{M}}$;
(iv) $\psz{d}{\Pi'_{M\subs{N}{x}}}\leq\psz{d}{\Pi_{M}}+\nocc{x}{M}\psz{d}{\Pi_{N}}$,
for every $d\geq 1$.
\par
We can now apply the sequences of rules $\rho$ and $\sigma$ with $s+r$ instances of $C$ to \eqref{eqn:!!abstr-subs} and get:
\small
\begin{eqnarray}
\Pi_{M\subs{N}{x}}
\rhd
\Gamma; \Delta;{\mathcal E}\vdash\ta{M\subs{N}{x}}{C}\enspace .
\label{eqanarray:07-02-18-10}%------ 56
\end{eqnarray}
\normalsize
\textbf{Point~\ref{lemma:substitution-property-1}} is:
\small
\begin{align}
\dpth{\Pi_M\subs{N}{x}}
&=\dpth{\Pi'_M\subs{N}{x}}
\label{align:07-02-18-20}\\
&=\max\{\dpth{\Pi_M},\dpth{\Pi_N}\}\enspace .
\label{align:07-02-18-21}
\end{align}
\normalsize
\eqref{align:07-02-18-20} holds because from \eqref{eqn:!!abstr-subs} to \eqref{eqanarray:07-02-18-10} we apply the sequences of rules $\sigma$ and $\rho$ that do not change the depth. \eqref{align:07-02-18-21} holds thanks to point (i) above.
\par
\textbf{Point~\ref{lemma:substitution-property-2}} holds because both $\wdth{0}{\Pi}=0$, for every $\Pi$, and:
\small
\begin{align}
\wdth{1}{\Pi_M\subs{N}{x}}
&=\wdth{1}{\Pi'_M\subs{N}{x}}+s+r
\label{align:07-02-18-23}\\
&<\wdth{1}{\Pi_M}+s+\wdth{1}{\Pi_N}+1+r
\label{align:07-02-18-24}\\
&=\wdth{1}{\Pi_{\bs x.M}}+\wdth{1}{\Pi_M}+1+r
\label{align:07-02-18-26}\\
&=\wdth{1}{\Pi_{(\bs x.M)N}}
\label{align:07-02-18-27}
\enspace .
\end{align}
\normalsize
\eqref{align:07-02-18-23} holds because from \eqref{eqn:!!abstr-subs} to \eqref{eqanarray:07-02-18-10} we apply $s+r$ instances of $C$ and some instances of 
$\forall I, \forall E$. 
\eqref{align:07-02-18-24} holds thanks to point (ii) above.
\eqref{align:07-02-18-26} holds because \eqref{eqn:genericabstr} %------27
is followed by $s$ instances of $C$, before producing $\Pi_{\bs x.M}$. \eqref{align:07-02-18-27} holds thanks to the definition of width at depth 1 and because \eqref{eqn:genericappl}
contains one instance of an arrow elimination and is followed by $r$ instances of $C$.

\par
\textbf{Point~\ref{lemma:substitution-property-3}} is:
\small
\begin{align}
\psz{0}{\Pi_M\subs{N}{x}}
&=\psz{0}{\Pi'_M\subs{N}{x}}
\label{align:07-02-18-30}\\
&=\psz{0}{\Pi_M}
\label{align:07-02-18-31}\\
&<\psz{0}{\Pi_{(\bs x.M)N}}
\nonumber
\enspace .
\end{align}
\normalsize
\eqref{align:07-02-18-30} holds because from \eqref{eqn:!!abstr-subs} to \eqref{eqanarray:07-02-18-10} we apply the sequences $\sigma$ and $\rho$ of rules not counted as part of the size at depth $0$.
\eqref{align:07-02-18-31} holds thanks to point (iii) above. 

\par
\textbf{Point~\ref{lemma:substitution-property-4}}, for every $d\geq 1$, is:
\small
\begin{align}
\psz{d}{\Pi_M\subs{N}{x}}
&=\psz{d}{\Pi'_M\subs{N}{x}}+s+r
\label{align:07-02-18-40}\\
&\leq\psz{d}{\Pi_M}+\nocc{x}{M}\psz{d}{\Pi_N}+s+r
\label{align:07-02-18-41}\\
&<\psz{d}{\Pi_M}+s+\psz{d}{\Pi_N}+1+r+\nocc{x}{M}\psz{d}{\Pi_N}
\nonumber\\
&=\psz{d}{\Pi_{(\bs x.M)N}}+\nocc{x}{M}\psz{d}{\Pi_N}
\nonumber
\enspace .
\end{align}
\normalsize
\eqref{align:07-02-18-40} holds because from \eqref{eqn:!!abstr-subs} to \eqref{eqanarray:07-02-18-10} we apply the sequences of rules $\sigma$ and $\rho$ with $s+r$ instances of $C$.
\eqref{align:07-02-18-41} holds thanks to points (iv) above.
%%%%%%%%%%%%%%%%%%%%%%%%%%
\paragraph*{Proof of Theorem~\ref{theorem:substitution-property}
(Subject reduction.)}
The assumption $(\bs x.P)Q\red P\subs{Q}{x}$ at depth $d$ in $\Pi_{M}$ implies the existence of $\Pi_{(\bs x.P)Q}\preceq\Pi_M$ such that $\Pi_{(\bs x.P)Q}$ is a depth $d$ in
${\Pi_M}$.
\par
Let us assume $d=0$, and proceed by induction on $\Pi_M$. The assumption $d=0$ excludes that $\Pi_M$ could conclude by $R\in\{\$,!\}$.
The base case is $\Pi_M$ equal to $\Pi_{(\bs x.P)Q}$ and, forcefully
$\Pi_N$ equal to $\Pi_{P\subs{Q}{x}}$. 
\par
Then, Lemma~\ref{lemma:substitution-property} holds and, in particular we have the following correspondences:
(i) points~\ref{lemma:substitution-property-1} and~\ref{lemma:substitution-property-2} of Lemma~\ref{lemma:substitution-property}
are points~\ref{theorem:substitution-property-1} and~\ref{theorem:substitution-property-2} of Theorem~\ref{theorem:substitution-property}, while
(ii) points~\ref{lemma:substitution-property-3} and~\ref{lemma:substitution-property-4} of Lemma~\ref{lemma:substitution-property} become
points~\ref{theorem:substitution-property-4} and~\ref{theorem:substitution-property-5} of Theorem~\ref{theorem:substitution-property}, respectively.
Finally, point~\ref{theorem:substitution-property-3} of Theorem~\ref{theorem:substitution-property} vacuously holds 
because $i$ has to be both smaller and greater than $0$.
\par
Since the last rule of $\Pi_M$ cannot be neither $\$$, nor $!$, the inductive application of Lemma~\ref{lemma:substitution-property} is routine, when $\Pi_{(\bs x.P)Q}$ is strictly a subderivation of $\Pi_M$, with $d=0$. In particular, if an instance of $C$ occurs below the conclusion of $(\bs x.P)Q$ we can write:
\small
\begin{align*}
\wdth{1}{\Pi_N}
&=\wdth{1}{\Pi_{P\subs{Q}{x}}(C,\Pi'_{P\subs{Q}{x}})}
 \leq\wdth{1}{\Pi'_{P\subs{Q}{x}})}+1\\
&\leq\wdth{1}{\Pi'_{(\bs x.P)Q})}+1
 =\wdth{1}{\Pi_{(\bs x.P)Q}(C,\Pi'_{(\bs x.P)Q})}=\wdth{1}{\Pi_M}
\enspace .
\end{align*}
\normalsize
\par
Let us now assume $d>0$, and proceed again by induction on $\Pi_M$. 
The assumption $d>0$ implies that the last rule of $\Pi_M$ cannot be $A$ and that $\Pi_M$ must contain at least one instance of the rules $\$$ and $!$.
We develop the details of the case $\Pi_M(R,\Pi'_M)$, with $R\in\{\$,!\}$, the other cases routinely applying the induction.
We observe that the redex reduces at depth $d$ in $\Pi_M$, namely, by definition of depth, at depth $d-1$ in $\Pi'_M$, originating $\Pi'_N$ which is $\Pi_N$ but the last rule $R$.
\begin{enumerate}
\item 
By induction, $\dpth{\Pi'_{N}}\leq\dpth{\Pi'_{M}}$ holds. This implies
$\dpth{\Pi'_{N}}+1\leq\dpth{\Pi'_{M}}+1$, equivalent, by definition, to
$\dpth{\Pi_{N}}\leq\dpth{\Pi_{M}}$.

\item 
By induction, $\wdth{i}{\Pi'_{N}}\leq\wdth{i}{\Pi'_{M}}$ holds for every $0\leq i\leq d$, since the redex occurs at depth $d$ in $\Pi_{M}$, hence at $d-1$ in $\Pi'_{M}$.
This implies 
$\wdth{i+1}{\Pi_{N}(R,\Pi'_N)}$
\\$\leq\wdth{i+1}{\Pi_{M}(R,\Pi'_M)}$, for every $0\leq i\leq d$, namely
$\wdth{i}{\Pi_{N}(R,\Pi'_N)}\leq\wdth{i}{\Pi_{M}(R,\Pi'_M)}$ for every $1\leq i\leq d+1$. Since $\wdth{0}{\Pi}=0$, for every $\Pi$, 
$\wdth{i}{\Pi_{N}(R,\Pi'_N)}\leq\wdth{i}{\Pi_{M}(R,\Pi'_M)}$ holds for every $0\leq i\leq d+1$.

\item
By induction, $\psz{i}{\Pi'_{N}}=\psz{i}{\Pi'_{M}}$ holds for every $0\leq i< d-1$.
This implies $\psz{i}{\Pi_{N}}=\psz{i}{\Pi_{M}}$, for every $1\leq i< d$, by definition. Since the reduction of the redex modifies $\Pi_M$ at depth $d>0$, the depth $0$ of $\Pi_M$ and $\Pi_N$ is preserved. So, $\psz{i}{\Pi_{N}}=\psz{i}{\Pi_{M}}$, for every $0\leq i< d$.

\item
By induction, $\psz{d-1}{\Pi'_{N}}<\psz{d-1}{\Pi'_{M}}$ holds. This implies
$\psz{d}{\Pi_{N}}<\psz{d}{\Pi_{M}}$ by definition.

\item
By induction, we have 
$\psz{i}{\Pi'_{N}}
\leq
\psz{i}{\Pi'_M}+\nocc{x}{P}\psz{i}{\Pi_Q}$
for every $d-1< i\leq \dpth{\Pi'_M}$. So, we can write:
\small
\begin{align*}
\psz{i+1}{\Pi_{N}(R,\Pi'_N)}
&=\psz{i}{\Pi'_{N}}\\
&\leq\psz{i}{\Pi'_M}+\nocc{x}{P}\psz{i}{\Pi_Q}\\
&=\psz{i+1}{\Pi_M(R,\Pi'_M)}+\nocc{x}{P}\psz{i+1}{\Pi_Q}
\enspace ,
\end{align*}
\normalsize
for every $d-1< i\leq \dpth{\Pi'_M}$. Namely:
\small
\begin{align*}
\psz{i}{\Pi_{N}(R,\Pi'_N)}
\leq
\psz{i}{\Pi_M(R,\Pi'_M)}+\nocc{x}{P}\psz{i}{\Pi_{Q}}
\enspace ,
\end{align*}
\normalsize
for every $d< i\leq \dpth{\Pi'_M}+1=\dpth{\Pi_M}$.
\end{enumerate}
%%%%%%%%%%%---------------
\paragraph*{Proof of Proposition~\ref{proposition:Dynamics of the successor on strings}}
For every $n$, let $\bs fx. f^n\,x$ shorten $\UNum{n}$.
We show the statement, proceeding by cases on $n$.
\small
\begin{align*}
\USucc\, \UNum{0}
&\equiv (\bs nf.(\bs zx.f(z\,x))(n\,f))\,(\bs fx.x)\\
&\red^+\bs f.(\bs zx.f(z\,x))((\bs fx.x)\,f)
 \red^+\bs fx.f((\bs x.x)\,x)\red^+\bs fx.fx\equiv\UNum{1}
\\
\USucc\, \UNum{n}
&\equiv (\bs nf.(\bs zx.f(z\,x))(n\,f))\,(\bs fx.f^n\,x)\\
&\red^+\bs f.(\bs zx.f(z\,x))(\bs x.f^n\,x)
 \red^+\bs fx.f((\bs x.f^n\,x)\,x)\red^+\bs fx.f^{n+1}x\equiv\UNum{n+1}
\end{align*}
\normalsize
%%%%%%%%%%%---------------
\paragraph*{Proof of Proposition~\ref{proposition:Typing rules relative to words}}
As a \textbf{first case} to show that
$\emptyset;\emptyset;\emptyset\vdash
\ta{\StepMkCompZ}
{(\alpha\li\alpha)\li(\BoolT_2\ten\alpha)\li(\BoolT_2\ten\alpha)}
$ suitably derive and compose the following judgments:
\small
\begin{align*}
&
\ta{p}{\BoolT_2};\emptyset;\emptyset\vdash
\ta
{p\lan\lan\pi_0^2,\pi_0^2\ran,\lan\pi_1^2,\pi_1^2\ran\ran}
{\BoolT_2\otimes\BoolT_2}
\\
&
\ta{x}{\alpha\li\alpha},\ta{r}{\alpha};\emptyset;\emptyset\vdash
\ta
{\bs \lan p_1\,p_2\ran.
 \lan p_1,p_2\lan \bs x.x,x\ran r\ran}
{(\BoolT_2\otimes\BoolT_2)\li(\BoolT_2\otimes\alpha)}
\end{align*}
%%%---
As a \textbf{second case}, to show
$\emptyset;\emptyset;\emptyset\vdash\ta{\MkCompact}{\BIntT\li\BIntT}$
compose the following judgments:
\scriptsize
\begin{align*}
&
\emptyset;\emptyset;\emptyset
\vdash
\ta{\bs zy. (\bs\lan x\,y\ran.y)(z(\BaseMkComp\, y))}
   {\$((\BoolT_2\otimes\alpha)\li(\BoolT_2\otimes\alpha))
    \li
    \$(\alpha\li\alpha)}
\\
&
\ta{n}{\BIntT};\emptyset;
\{(\emptyset;\ta{0}{\alpha\li\alpha})\}
\vdash
\ta{n(\StepMkCompZ\, 0)}
   {\,!((\BoolT_2\otimes\alpha)\li(\BoolT_2\otimes\alpha))
    \li
    \$((\BoolT_2\otimes\alpha)\li(\BoolT_2\otimes\alpha))}
\\
&
\emptyset;\emptyset;
\{(\emptyset;\ta{1}{\alpha\li\alpha})\}
\vdash
\ta{\StepMkCompO\, 1}
   {\,!((\BoolT_2\otimes\alpha)\li(\BoolT_2\otimes\alpha))}
\end{align*}
\normalsize
%%%%%%%%%%%---------------
\paragraph*{Proof of Proposition~\ref{proposition:Dynamics of combinators relative to words}}
For every $n$, let $\bs 01x. \{0,1\}^{n}\,x$ represent $\BNum{n}$, since $\{0,1\}^{n}\,x$ stands for the correct sequence of 0s and 1s that encode $n$ in binary.
%--------------------------
\par
As a \textbf{first case} we show the statement relative to $\BSuccZ$  proceeding by cases on $n$.
\small
\begin{align}
\BSuccZ\, \BNum{n}
&\red^+
\MkCompact\,(\bs 01.(\bs zy.0(zy))((\bs 01x. \{0,1\}^n\,x)01))
\nonumber\\
&\red^+
\MkCompact\,(\bs 01y.0((\bs x. \{0,1\}^n\,x)y))
\nonumber\\
&\red^+
\MkCompact\,(\bs 01y.0(\{0,1\}^n\,y))
\label{eqnarray:MkCompact}
\end{align}
\normalsize
If $n=0$, then:
\scriptsize
\begin{align*}
\eqref{eqnarray:MkCompact}
&\equiv
\MkCompact\,(\bs 01y.0y)\\
&\red^+
\bs 01y.(\bs\lan x\,y\ran.y)((\StepMkCompZ\,0)(\BaseMkComp\,y))\\
&\red^+
\bs 01y.(\bs\lan x\,y\ran.y)
        ((\bs \lan p\,r\ran.
          (\bs \lan p_1\,p_2\ran.
           \lan p_1,p_2\lan \bs x.x,0\ran r\ran)
          (p\lan\lan\pi_0^2,\pi_0^2\ran,\lan\pi_1^2,\pi_1^2\ran\ran))
	 \lan \pi^2_0,y\ran)\\
&\red^+
\bs 01y.(\bs\lan x\,y\ran.y)
        (
         (\bs \lan p_1\,p_2\ran.
          \lan p_1,p_2\lan \bs x.x,0\ran y\ran)
         (\pi^2_0\lan\lan\pi_0^2,\pi_0^2\ran,\lan\pi_1^2,\pi_1^2\ran\ran)
	)\\
&\red^+
\bs 01y.(\bs\lan x\,y\ran.y)
        (
         (\bs \lan p_1\,p_2\ran.
          \lan p_1,p_2\lan \bs x.x,0\ran y\ran)
         (\lan\pi_0^2,\pi_0^2\ran)
	)\\
&\red^+
\bs 01y.(\bs\lan x\,y\ran.y)
         (\lan \pi_0^2,\pi_0^2\lan \bs x.x,0\ran y\ran)\\
&\red^+
\bs 01y.\pi_0^2\lan \bs x.x,0\ran y\\
&\red^+
\bs 01y.(\bs x.x) y\red^+ \bs 01y.y\equiv\BNum{0}
\end{align*}
\normalsize
If $n>0$, before proceeding, let us focus on some observations about the behavior of $\MkCompact$:
\small
\begin{align*}
\MkCompact \, 
  (\bs 01y.
   \nu_0(\cdots(\nu_{m-1}(1(0(\cdots(0y)\cdots))))\cdots) )
 &\red^+
  (\bs 01y.
   \nu_0(\cdots(\nu_{m-1}(1y))\cdots) )
\end{align*}
\normalsize
for every $m\geq 0$ and $\nu_m\in\{0,1\}$. Namely, $\MkCompact$ erases any occurrence of the variable name 0 to the right of the most significant bit of its argument, which, by convention, is 1. This is obtained by iterating $\StepMkCompZ 0$ and $\StepMkCompO 1$, starting from $\BaseMkComp y$.
$(\StepMkCompZ \, 0)\lan\pi^2_0,M\ran$ evaluates to $\lan\pi^2_0, M\ran$ when, as effect of the iteration, $\StepMkCompZ \, 0$ is replaced for an occurrence of 0 to the right of the most significant bit. If, on the contrary, $\StepMkCompZ \, 0$ is replaced for an occurrence of 0 to the left of the most significant bit, then $(\StepMkCompZ \, 0)\lan\pi^2_1,M\ran$ evaluates to  $\lan\pi^2_1,0 M\ran$. Finally, $(\StepMkCompO \, 1)\lan\pi^2_0,M\ran$ always evaluates to  $\lan\pi^2_1,1 M\ran$. Therefore, for some $n'$:
\small
\begin{align*}
\eqref{eqnarray:MkCompact}
&\equiv
\MkCompact\,(\bs 01y.0(\{0,1\}^n\,y))\\
&\red^+
\bs 01y.(\bs\lan x\,y\ran.y)
        (
	 (\StepMkCompZ\,0)
	  (\{\StepMkCompZ\,0,\StepMkCompO\,1\}^{n'}
	    (
	     (\StepMkCompO\,1)
	     (\BaseMkComp\,y)
	    )
	  )
	)
	\\
&\red^+
\bs 01y.(\bs\lan x\,y\ran.y)
        (
	 (\StepMkCompZ\,0)
	  (\{\StepMkCompZ\,0,\StepMkCompO\,1\}^{n'}
	    (
	     (\bs \lan p\,r\ran.\lan\pi^2_1,1\,r\ran)
	     \lan \pi^2_0,y\ran
	    )
	  )
	)
\\
&\red^+
\bs 01y.(\bs\lan x\,y\ran.y)
        (
	 (\StepMkCompZ\,0)
	  (\{\StepMkCompZ\,0,\StepMkCompO\,1\}^{n'}
	     \lan\pi^2_1,1\,y\ran
	  )
	)
\\
&\red^+
\bs 01y.(\bs\lan x\,y\ran.y)
        \lan
	 \pi^2_1,
	 0(\{0,1\}^{n'}(1\,y))
	\ran
\\
&\red^+
\bs 01y. 0(\{0,1\}^{n}\,y) \red^+ \BNum{2n}
\end{align*}
\normalsize
%--------------------------
\par
As a \textbf{second case} we show the statement relative to
$\Branch\,\BNum{n}\,\BNum{a}\,\BNum{b}$ proceeding by cases on $\BNum{n}$:
\scriptsize
\begin{align*}
\Branch\,\BNum{0}\,\BNum{a}\,\BNum{b}
&\red^+
\bs 01. 
  (\bs w. \bs z_1 z_2.
   w\,\pi^2_0\, \lan z_1, z_2 \ran 
  )(\BNum{0}\,(\bs x.\pi^2_1)\,(\bs x.\pi^2_1))
   (\BNum{a}\,0\,1)(\BNum{b}\,0\,1)
\\
&\red^+
\bs 01. 
  (\bs w. \bs z_1 z_2.
   w\,\pi^2_0\, \lan z_1, z_2 \ran 
  )(\bs y.y)
   (\bs y.\{0,1\}^{a}\,y)(\bs y.\{0,1\}^{b}\,y)
\\
&\red^+
\bs 01. 
   (\bs y.y)\,\pi^2_0\, \lan \bs y.\{0,1\}^{a}\,y
                           , \bs y.\{0,1\}^{b}\,y
			\ran 
\\
&\red^+
\bs 01. 
   \pi^2_0\, \lan \bs y.\{0,1\}^{a}\,y
                , \bs y.\{0,1\}^{b}\,y\ran 
\red^+ \bs 01y.\{0,1\}^{a}\,y\equiv\BNum{a}
\\
\Branch\,\BNum{2n+i}\,\BNum{a}\,\BNum{b}
&\red^+
\bs 01. 
  (\bs w. \bs z_1 z_2.
   w\,\pi^2_0\, \lan z_1, z_2 \ran 
  )(\BNum{2n+i}\,(\bs x.\pi^2_1)\,(\bs x.\pi^2_1))
   (\BNum{a}\,0\,1)(\BNum{b}\,0\,1)
\\
&\red^+
\bs 01. 
  (\bs w. \bs z_1 z_2.
   w\,\pi^2_0\, \lan z_1, z_2 \ran 
  )(\bs y. \{\bs x.\pi^2_1\}^{2n+i}\,y)
   (\bs y.\{0,1\}^{a}\,y)(\bs y.\{0,1\}^{b}\,y)
\\
&\red^+
\bs 01. 
   (\bs y. \{\bs x.\pi^2_1\}^{2n+i}\,y)\,
   \pi^2_0\, \lan \bs y.\{0,1\}^{a}\,y
                           , \bs y.\{0,1\}^{b}\,y
			\ran 
\\
&\red^+
\bs 01. 
   \pi^2_1\,
   \lan \bs y.\{0,1\}^{a}\,y
                           , \bs y.\{0,1\}^{b}\,y
			\ran 
\red^+ \bs 01y.\{0,1\}^{b}\,y\equiv\BNum{b}
\end{align*}
\normalsize
%%%%%%%%%%%---------------
\paragraph*{Proof of Proposition~\ref{proposition:Typing the embeddings}}
The last rule of the derivation that gives type to $\BEmbed{n}{M}$ is an instance of $\liv I$ whose premise is $\emptyset;\emptyset;\{(\ta{x}{\$^{n-1}L};\emptyset)\}\vdash\ta{Mx}{\$^{m+n}A}$, which requires $n\geq 1$.
\par
The last rule of the derivation that gives type to $\LEmbed{n}{p}{M}$ is an instance of $\li I_{\$}$. It is preceded by a sequence of $n\geq 0$ instances of the rule $\$$. If $n\geq 1$, then the last rules proves
$\emptyset;
\ta{x_1}{\$^{n-1}L_1},\ldots,\ta{x_p}{\$^{n-1}L_p};
\emptyset
\vdash\ta{Mx_1\ldots x_p}{\$^{m+n}A}$.
\par
The derivation giving type to $\EEmbed{n}{p}{q}{M}$ is obtained using the following judgments:
\scriptsize
\begin{align*}
\emptyset;
\emptyset;
\{(\{\ta{z_1}{\$^{m+n-1}L_1},\ldots,\ta{z_q}{\$^{m+n-1}L_q}\}
  ;\emptyset)\}
&
\vdash
\ta{\bs w_1\ldots w_p.Mw_1\ldots w_p z_1\ldots z_q}
   {(\liv_{i=1}^{p}\$^{n+1}L_i)\liv\$^{m+n}A}
\\
\emptyset;
\emptyset;
\{(\{\ta{w_i}{L_i}\};\emptyset)\}
&
\vdash\ta{\BEmbed{1}{\Coerc^n}}{\$^{n+1}L_i}
\qquad\qquad\qquad
(1\leq i\leq p)
\end{align*}
\normalsize
%%%%%%%%%%%%%%%---------------
\paragraph*{Proof of Proposition~\ref{proposition:Typing the diagonals}}
As a \textbf{first case}, to show that
$\emptyset;\emptyset;\emptyset
\vdash
\ta{\nabla^{}_{n}}
   {\BIntT\li
    \$(\bigotimes_{i=1}^{n}\BIntT)
   }
$, with $n\geq 1$, let assume $B$ be $\bigotimes^{n}_{i=1}\BIntT$, and suitably derive and compose the following judgments:
\small
\begin{align*}
&
\emptyset;\emptyset;\emptyset\vdash
\ta
{\bs \lan x_1\ldots x_{n}\ran.
\lan\BSuccZ\, x_1, \ldots,\BSuccZ\, x_{n}\ran}
{\,!(B\li B)}
\\
&
\emptyset;\emptyset;\emptyset\vdash
\ta
{\bs \lan x_1\ldots x_{n}\ran.
\lan\BSuccO\, x_1,\ldots,\BSuccO\, x_{n}\ran}
{\,!(B\li B)}
\\
&
\ta{w}{\BIntT};\emptyset;\emptyset\vdash
w\, \!\!\begin{array}[t]{l}
		         (\bs \lan x_1\ldots x_{n}\ran.
			  \lan
			  \BSuccZ\, x_1, 
		          \ldots, 
		          \BSuccZ\, x_{n}
			  \ran
		         )\\
		         (\bs \lan x_1\ldots x_{n}\ran.
			  \lan
			  \BSuccO\, x_1, 
		          \ldots, 
		          \BSuccO\, x_{n}
			  \ran
	             )\!:\!\$(B\li B)
	                  \end{array}
\\
&
\emptyset;\emptyset;\emptyset\vdash
\ta
{\bs z. z\,\overbrace{\lan\BNum{0},\ldots,\BNum{0}\ran}^{n}}
{\$(B\li B)\li\$B}
\end{align*}
\normalsize
%%%---
\par
As a \textbf{second case}, to show
$\emptyset;\emptyset;\emptyset
\vdash
\ta{\nabla^{m}_{n}}
   {\BIntT\li
    \$(\bigodot_{i=1}^{n}\$^{m}\BIntT)
   }
$, with $m,n\geq 1$, let assume $B$ be $\bigodot^{n}_{i=1}\$^{m}\BIntT$, and suitably derive and compose the following judgments:
\scriptsize
\begin{align}
\label{align:proof-type-eager-diagonal-1}
&
\emptyset;\emptyset;\{(\ta{x_i}{\$^{m-1}\BIntT};\emptyset)\}\vdash
\ta
{\BEmbed{m}{\BSuccZ}\, x_i}
{\$^{m}\BIntT}
&i\in\{1,\ldots,n\}
\\
\label{align:proof-type-eager-diagonal-2}
&
\emptyset;\emptyset;\{(\ta{x_i}{\$^{m-1}\BIntT};\emptyset)\}\vdash
\ta
{\BEmbed{m}{\BSuccO}\, x_i}
{\$^{m}\BIntT}
&i\in\{1,\ldots,n\}
\\
\nonumber
&
\emptyset;\emptyset;\emptyset\vdash
\ta
{\bs \elan x_1\ldots x_{n}\eran.
\elan\BEmbed{m}{\BSuccZ}\, x_1
    , \ldots
    ,\BEmbed{m}{\BSuccZ}\, x_{n}\eran}
{\,!(B\li B)}
\\
\nonumber
&
\emptyset;\emptyset;\emptyset\vdash
\ta
{\bs \elan x_1\ldots x_{n}\eran.
\elan\BEmbed{m}{\BSuccO}\, x_1
    ,\ldots
    ,\BEmbed{m}{\BSuccO}\, x_{n}\eran}
{\,!(B\li B)}
\\
\nonumber
&
\ta{w}{\BIntT};\emptyset;\emptyset\vdash
w\, \!\!\begin{array}[t]{l}
		         (\bs \elan x_1\ldots x_{n}\eran.
			  \elan
			  \BEmbed{m}{\BSuccZ}\, x_1, 
		          \ldots, 
		          \BEmbed{m}{\BSuccZ}\, x_{n}
			  \eran
		         )\\
		         (\bs \elan x_1\ldots x_{n}\eran.
			  \elan
			  \BEmbed{m}{\BSuccO}\, x_1, 
		          \ldots, 
		          \BEmbed{m}{\BSuccO}\, x_{n}
			  \eran
	             )\!:\!\$(B\li B)
	                  \end{array}
\\
\nonumber
&
\emptyset;\emptyset;\emptyset\vdash
\ta
{\bs z. z\,\overbrace{\elan\BNum{0}
                          ,\ldots
			  ,\BNum{0}
		      \eran}^{n}}
{\$(B\li B)\li\$B}
\end{align}
\normalsize
Observe that every
$\emptyset;\emptyset;\{(\ta{x_i}{\$^{m-1}\BIntT};\emptyset)\}
\vdash\ta{x_i}{\$^{m}\BIntT}$, argument of $\BEmbed{m}{\BSuccZ}$ and
$\BEmbed{m}{\BSuccO}$ in \eqref{align:proof-type-eager-diagonal-1} and \eqref{align:proof-type-eager-diagonal-2}, is obtained by $m\geq 1$ applications of $\$$ to
$\ta{x_i}{\BIntT};\emptyset;\emptyset\vdash\ta{x_i}{\BIntT}$.
%%%%%%%%%%%---------------
\paragraph*{Proof of Proposition~\ref{proposition:Dynamics of the diagonals}}
For every $n$, let $\{M_0,M_1\}^{n}\,x$ represent the term $\nu_1(\ldots(\nu_m\,x)\ldots)$,
where $\nu_i\equiv M_0$ if the $i^{\text{th}}$ digit in the binary representation of $n$ is $0$, and $\nu_i\equiv M_1$ if it is $1$. 
We show
$\nabla^{}_{2}\, \BNum{a}
 \red^+\underbrace{\lan\BNum{a},\ldots,\BNum{a}\ran}_{n}$, with $n=2$,
to keep things readable:
\begin{align*}
\nabla^{}_{n}\, \BNum{a}
 &\red
  (\bs z.z\lan\BNum{0},\BNum{0}\ran)
   (\BNum{a}(\bs \lan x_1\,x_2\ran.\lan\BSuccZ\,x_1,\BSuccZ\,x_2\ran)
            (\bs \lan x_1\,x_2\ran.\lan\BSuccO\,x_1,\BSuccO\,x_2\ran))
\\
 &\red^+
  (\bs z.z\lan\BNum{0},\BNum{0}\ran)
   (\bs y.\{\BSuccZ,\BSuccO\}^{a}\,y)
\\
 &\red^+
  \{\BSuccZ,\BSuccO\}^{a}\lan\BNum{0},\BNum{0}\ran
\\
 &\red^+
  \{\BSuccZ,\BSuccO\}^{a-1}
    \lan\BSuccO\,\BNum{0},\BSuccO\,\BNum{0}\ran
\\
 &\red^+
  \{\BSuccZ,\BSuccO\}^{a-i}
    \lan\{\BSuccZ,\BSuccO\}^{i}\,\BNum{0}
       ,\{\BSuccZ,\BSuccO\}^{i}\,\BNum{0}\ran
  \red^+\lan\BNum{a},\BNum{a}\ran
\end{align*}
\par
Proceed analogously for $\nabla^{m}_{n}\, \BNum{a}$. The only observation is that the reduction of $\nabla^{m}_{n}\, \BNum{a}$ generates tuples, for example, like
$\elan\BEmbed{m}{\BSuccZ}\,\BNum{b}
      ,\ldots
      ,\BEmbed{m}{\BSuccZ}\,\BNum{b}\eran$
which evaluate to 
$\elan\BSuccZ\,\BNum{b}
      ,\ldots
      ,\BSuccZ\,\BNum{b}\eran$, for some $\BNum{b}$.
%%%%%%%%%%%---------------
\paragraph*{Proof of Proposition~\ref{proposition:Typing the recasting combinators}}
To show
$\emptyset;
\emptyset;
\emptyset;
\vdash
\ta{\UInttoList}{\$^2A\liv\UIntT\li\ListT\, \$A}
$ just compose the two following judgments:
\small
\begin{align*}
&
\emptyset;\emptyset;\emptyset\vdash
\ta{\bs zx.z\,(\bs f.x)\,\Id}
   {\$(((\delta\li\delta)\li\alpha)
       \li
        (\delta\li\delta)\li\alpha
      )\li\$(\alpha\li\alpha)}
\\
&
\emptyset;\emptyset;
\{(\ta{k}{\$A};\ta{c}{\$A\liv\alpha\li\alpha})\}\vdash
\ta{\bs lf.c\,k\,(l\,\Id)}{\,!(((\delta\li\delta)\li\alpha)
                               \li(\delta\li\delta)\li\alpha)}
\end{align*}
\normalsize
using
$
\ta{n}{\UIntT};\emptyset;
\{(\ta{k}{\$A};\ta{c}{\$A\liv\alpha\li\alpha})\}\vdash
\ta{n(\bs lf.c\,k\,(l\,\Id))}{\,\$(((\delta\li\delta)\li\alpha)
                                  \li(\delta\li\delta)\li\alpha)}
$.
%%%%%%%%%%%---------------
\paragraph*{Proof of Proposition~\ref{proposition:Dynamics of the recasting combinators}}
For every $n$, let $\{N\}^{n}\,x$ represent the term $N(\ldots(N\,x)\ldots)$,
We show the details of the reduction relative to $\UInttoList$, with $n\geq 0$:
\scriptsize
\begin{align*}
\UInttoList\,M\,\UNum{n}
&\red^+
\bs c.(\bs zx. z\,(\bs f.x)\Id)(\UNum{n}(\bs lf. c\,M\,(l\,\Id)))
& (M \textbf{ closed value})
\\
&\red^+
\bs c.(\bs zx. z\,(\bs f.x)\Id)(\bs y.(\bs lf. c\,M\,(l\,\Id))^{n}\,y)
\\
&\red^+
\bs c.(\bs x. (\bs y.(\bs lf. c\,M\,(l\,\Id))^{n}\,y)\,(\bs f.x)\Id)
\\
&\red^+
\bs c.(\bs x. (\bs lf. c\,M\,(l\,\Id))^{n}\,(\bs f.x)\,\Id)
\\
&\red^+
\bs c.(\bs x. (\bs lf. c\,M\,(l\,\Id))^{n-1}\,
      ((\bs lf. c \,M\,(l\,\Id))\, (\bs f.x))\,\Id)
\\
&\red^+
\bs c.(\bs x. (\bs lf. c\,M\,(l\,\Id))^{n-1}\,
      (\bs f. c\,M\,x)\,\Id)
\\
&\red^+
\bs c.(\bs x. (\bs f. \{c\,M\}^{n}\,x)\,\Id)
\red^+ \bs cx. \{c\,M\}^{n}\,x
\equiv
\underbrace{[M,\ldots,M]}_{n}
\end{align*}
\normalsize
%%%%%%%%%%%---------------
\paragraph*{Proof of Propositions~\ref{proposition:Typing the configurations}, and~\ref{proposition:Typing the pre-configurations}}
The important point to notice is that,
if we assume to reconstruct upward the deductions that give type to the \textit{configurations} or to the \textit{pre-configurations}, we have to use a suitable number of instances of the contraction rule $C$, just before the use of $\$$. All the rest is standard.
%%%%%%%%%%%---------------
\paragraph*{Proof of Proposition~\ref{proposition:Typing the transition function}}
As a \textbf{first case}, to show 
$\emptyset
;\emptyset
;\emptyset
\vdash
\ta{\BaseTransFunc^m}{\alpha\li\ST[\alpha,\delta;\$^m\BIntT]}$
suitably exploit the judgment:
\scriptsize
\begin{align*}
&\ta{x}
    {(\$^m\BIntT\li\alpha\li\alpha)\li
     (\$^m\BIntT\li\$^m\BIntT)\li
     \$^m\BIntT\liv
     ((\delta\li\delta)\li\alpha)\li\beta},
\ta{y}{\alpha};\emptyset;\emptyset
\\
&\phantom{\ta{x}
   (\$^m\BIntT\li\alpha\li\alpha)\li
    (\$^m\BIntT\li\$^m\BIntT)\li
    \$^m\BIntT\liv
   }\quad
\vdash
\ta{x\, (\bs xy.y)\, \LEmbed{m}{1}{\Id}\,\BNum{0}\,(\bs f.y)}
   {\beta}
\end{align*}
\normalsize
As a \textbf{second case}, to show 
$\emptyset
;\emptyset
;\emptyset
\vdash
\ta{\StepTransFunc^m[G]}
   {(\$^m\BIntT \li\alpha\li\alpha)\li
     \$^m\BIntT\liv
     \ST[\alpha,\delta;\$^m\BIntT]\li
     \ST[\alpha,\delta;\$^m\BIntT]}$,
starting from $\emptyset;\emptyset;\emptyset\vdash \ta{G}{\BIntT\li\BIntT}$,
suitably derive and compose the judgments:
\scriptsize
\begin{align*}
&
\ta{c}{\$^m\BIntT \li\alpha\li\alpha},
\ta{t}{\ST[\alpha;\delta;\$^m\BIntT]},
\\
&
\ta{x}{(\$^m\BIntT\li\alpha\li\alpha)\li
       (\$^m\BIntT\li\$^m\BIntT)\li
       \$^m\BIntT\liv
       ((\delta\li\delta)\li\alpha)\li\beta}
;\emptyset
;\{(\ta{a}{\$^{m-1}\BIntT};\emptyset)\}
\\
&
\qquad\qquad\qquad\qquad\qquad\qquad\qquad\qquad\qquad\qquad\qquad\qquad\qquad\qquad
\vdash
\ta{x\, c\, \LEmbed{m}{1}{G}\, a\, 
   (\bs f.t(\bs cgal. c(g\,a)(l\,\Id)))}
   {\beta}
\\
&
\ta{t}{\ST[\alpha,\delta;\$^m\BIntT]}
;\emptyset
;\emptyset
\vdash
\ta{\bs f.t(\bs cgal. c(g\,a)(l\,\Id))}{(\delta\li\delta)\li\alpha}
\\
&
\ta{c}{\$^m\BIntT \li\alpha\li\alpha},
\ta{x}{(\$^m\BIntT\li\alpha\li\alpha)\li
       (\$^m\BIntT\li\$^m\BIntT)\li
       \$^m\BIntT\liv
       ((\delta\li\delta)\li\alpha)\li\beta}
;\emptyset;
\\
&
\qquad\qquad\qquad\qquad\qquad\qquad\qquad\qquad\qquad\qquad\qquad
\{(\ta{a}{\$^{m-1}\BIntT};\emptyset)\}
\vdash
\ta{x\, c\, \LEmbed{m}{1}{G}\, a}
   {((\delta\li\delta)\li\alpha)\li\beta}
\end{align*}
\normalsize
As a \textbf{third case}, We show 
$\emptyset;\emptyset;\emptyset\vdash
 \ta{\NextConf_{1+\Intg{n};\Intg{s}}[F']}
    {
 \SbcConfT[\alpha_0\ldots\alpha_{\Intg{n}+\Intg{s}},\delta;m]\li
  (\$^{m}\BIntT\liv
   (\li^{\Intg{n}}_{i=0}\alpha_i)\li
   (\li^{\Intg{n}+\Intg{s}}_{j=\Intg{n}+1}\alpha_j)\li
   \gamma
  )\li
  \gamma}$.
To that purpose, it is worth defining the following type:
\scriptsize
\begin{align*}
\SV[a,a;\gamma;\$^{m_a}\BIntT]&\equiv
\SU[\alpha_{m_a},\delta;\$^{m_a}\BIntT]
\li
(\$^{m_a}\BIntT\li
(\li^{\Intg{n}}_{i=0}\alpha_i)\li
(\li^{\Intg{n}+\Intg{s}}_{j=\Intg{n}+1}\alpha_j)
 \li\gamma)
\li\gamma
\\
\SV[b,a;\gamma;\$^{m_b}\BIntT]&\equiv
\SU[\alpha_{m_b},\delta;\$^{m_b}\BIntT]\li
   \SV[b+1,a;\gamma;\$^{m_{b+1}}\BIntT]
\qquad(b<a)
\end{align*}
\normalsize
and suitably compose the following judgments:
\scriptsize
\begin{align*}
&
\ta{d_0}{\$\BIntT\li
           \alpha_{0}\li
	   \alpha_{0}},
\ldots,
\ta{d_{\Intg{n}}}{\$\BIntT\li
                    \alpha_{\Intg{n}}\li
		    \alpha_{\Intg{n}}},
\\
&
\ta{e_1}{\$^{m}\BIntT\li
               \alpha_{\Intg{n}+1}\li
	       \alpha_{\Intg{n}+1}},
\ldots,
\ta{e_{\Intg{s}}}{\$^{m}\BIntT\li
                        \alpha_{\Intg{n}+\Intg{s}}\li
			\alpha_{\Intg{n}+\Intg{s}}},
\\
&
\ta{f_0}{\$\BIntT\li\$\BIntT},
\ldots,
\ta{f_{\Intg{n}}}{\$\BIntT\li\$\BIntT},
\\
&
\ta{g_1}{\$^{m}\BIntT\li\$^{m}\BIntT},
\ldots,
\ta{g_{\Intg{s}}}{\$^{m}\BIntT\li\$^{m}\BIntT},
\\
&
\ta{n^t_0}{\alpha_{0}},
\ldots,
\ta{n^t_{\Intg{n}}}{\alpha_{\Intg{n}}},
\ta{s^t_1}{\alpha_{\Intg{n}+1}},
\ldots,
\ta{s^t_{\Intg{s}}}{\alpha_{\Intg{n}+\Intg{s}}}
;\emptyset;
\\
&
\{(\ta{n_0}{\BIntT},
   \ldots,
   \ta{n_{\Intg{n}}}{\BIntT},
   \ta{s_1}{\$^{m-1}\BIntT},
   \ldots,
   \ta{s_{\Intg{s}}}{\$^{m-1}\BIntT},
   \ta{r}{\$^{m-1}\BIntT}
  ;\emptyset)\}
\\
&
\qquad
\vdash
    \bs x. x\,
    (F'\,n_0\,n_1\ldots n_{\Intg{n}}
	    \,s_1\ldots s_{\Intg{s}}\,r)\,
    (n^t_0\,\Id)\,
    (n^t_1\,\Id)\ldots (n^t_{\Intg{n}}\,\Id)\,
    (s^t_1\,\Id)\ldots (s^t_{\Intg{s}}\,\Id)
   \!:\!
\\
&\qquad\qquad\qquad\qquad\qquad\qquad\qquad\qquad
   (\$^{m}\BIntT\liv
     (\li^{\Intg{n}}_{i=0}\alpha_i)\li
     (\li^{\Intg{n}+\Intg{s}}_{j=\Intg{n}+1}\alpha_j)\li
     \gamma
    )\li\gamma
\\
&
\emptyset;\emptyset;\{(\ta{r}{\$^{m-1}\BIntT};\emptyset)\}
\vdash
\ta{H}
   {\begin{array}[t]{l}
    (\li^{\Intg{n}}_{i=0}\SU[\alpha_i,\delta;\$\BIntT])\li
    (\li^{\Intg{n}+\Intg{s}}_{j=\Intg{n}+1}
         \SU[\alpha_i,\delta;\$^m\BIntT])\li
	 \\\qquad\qquad\qquad
     (\$^{m}\BIntT\liv
      (\li^{\Intg{n}}_{i=0}\alpha_i)\li
      (\li^{\Intg{n}+\Intg{s}}_{j=\Intg{n}+1}\alpha_j)\li
      \gamma
     )\li\gamma
    \end{array}}
\end{align*}
\normalsize
For every $i\in\{0,\ldots,\Intg{n}\}$:
\scriptsize
\begin{align*}
\ta{t_0}
   {\ST[\alpha_0,\delta;\$\BIntT]
   },
\ldots,
\ta{t_{i}}
   {\ST[\alpha_{i},\delta;\$\BIntT]}
;\emptyset;\{(\ta{r}{\$^{m-1}\BIntT};\emptyset)\}
\vdash
\ta{t_{i}(\ldots(t_1(t_0\,H))\ldots)}
   {\SV[i+1,\Intg{n}+\Intg{s},\delta;\$\BIntT]}
\end{align*}
\normalsize
For every $j\in\{1,\ldots,\Intg{s}\}$:
\scriptsize
\begin{align*}
&\ta{t'_{1}}
   {\ST[\alpha_{\Intg{n}+1},\delta;\$^{m}\BIntT]},
\ldots,
\ta{t'_j}
   {\ST[\alpha_{\Intg{n}+j},\delta;\$^{m}\BIntT]},
\\
&
\ta{t_0}
   {\ST[\alpha_0,\delta;\$\BIntT]
   },
\ldots,
\ta{t_{\Intg{n}}}
   {\ST[\alpha_{\Intg{n}},\delta;\$\BIntT]}
;\emptyset;\{(\ta{r}{\$^{m-1}\BIntT};\emptyset)\}
\\
&\qquad\qquad\qquad\qquad\qquad
\vdash
\ta{t'_{j}
	 (\ldots(t'_1(t_{\Intg{n}}
	          \ldots
	          (t_1(t_0\,H))
	          \ldots
	        ))
         \ldots)}
   {\SV[\Intg{n}+j+1,\Intg{n}+\Intg{s},\delta;\$^m\BIntT]}
\end{align*}
\normalsize
Finally:
\scriptsize
\begin{align*}
&
\ta{x}{\SbcConfT[\alpha_0,\ldots,\alpha_{\Intg{n}+\Intg{s}},\delta;m]}
;\emptyset;\emptyset
\\
&
\vdash
\ta{
x(\bs r t_0 \ldots t_{\Intg{n}}
        t'_1\ldots t'_{\Intg{s}}.
        t'_{\Intg{s}}
	 (\ldots(t'_1(t_{\Intg{n}}
	          \ldots
	          (t_1(t_0\,H))
	          \ldots
	        ))
         \ldots)
        )
}
{(\$^{m}\BIntT\liv
 (\li^{\Intg{n}}_{i=0}\alpha_i)\li
 (\li^{\Intg{n}+\Intg{s}}_{j=\Intg{n}+1}\alpha_j)\li
 \gamma
 )\li\gamma}
\end{align*}
\normalsize
Observe that 
$\emptyset;\emptyset;\{(\ta{r}{\$^{m-1}\BIntT};\emptyset)\}
\vdash\ta{r}{\$^{m}\BIntT}$,
and every
$\emptyset;\emptyset;\{(\ta{s_j}{\$^{m-1}\BIntT};\emptyset)\}
\vdash\ta{s_j}{\$^{m}\BIntT}$, arguments of $F'$, are obtained by $m$ applications of $\$$ to
$\ta{r}{\BIntT};\emptyset;\emptyset\vdash\ta{r}{\BIntT}$, and
$\ta{s_j}{\BIntT};\emptyset;\emptyset\vdash\ta{s_j}{\BIntT}$, respectively.
\par
As a \textbf{fourth case}, to show
\scriptsize
\[
\emptyset;\emptyset;\emptyset\vdash
\ta{
\HeadsandTails_{1+\Intg{n};\Intg{s}}[G]
}
{
\!\!
\begin{array}[t]{l}
\bcConfT[1+\Intg{n};\Intg{s};m]\li
(\li^{\Intg{n}}_{i=0}
  !(\$\BIntT\li\alpha_i\li\alpha_i))
\\\qquad
\li
(\li^{\Intg{n}+\Intg{s}}_{j=\Intg{n}+1}
  !(\$^{m}\BIntT\li\alpha_j\li\alpha_j))
\\\qquad\qquad
\li
\$(
(\li^{\Intg{n}}_{i=0}\alpha_i)\li
(\li^{\Intg{n}+\Intg{s}}_{j=\Intg{n}+1}\alpha_j)
%\\\qquad\qquad\qquad\qquad
\li\SbcConfT[\alpha_0\ldots\alpha_{\Intg{n}+\Intg{s}},\delta;m])
\end{array}
}
\]
\normalsize
starting from
$\emptyset;\emptyset;\emptyset
 \vdash
 \ta{G}{\BIntT\li \BIntT}$,
suitably derive and compose the following judgments:
\scriptsize
\begin{align*}
&
\ta{x}
   {\bcConfT[1+\Intg{n};\Intg{s};m]};
\emptyset;
\\
&
\{(
\ta{d_0}{\$\BIntT\li\alpha_0\li\alpha_0},
\ta{d_1}{\$\BIntT\li\alpha_{1}\li\alpha_{1}},
\ldots,
\ta{d_{\Intg{n}}}{\$\BIntT\li\alpha_{\Intg{n}}
                          \li\alpha_{\Intg{n}}}
\\
&\phantom{\{ }
,\ta{e_1}{\$^{m}\BIntT\li\alpha_{\Intg{n}+1}
                      \li\alpha_{\Intg{n}+1}},
\ldots,
\ta{e_{\Intg{s}}}{\$^{m}\BIntT\li\alpha_{\Intg{n}+\Intg{s}}
                              \li\alpha_{\Intg{n}+\Intg{s}}}
;\emptyset)\}
\\
&
\qquad
\vdash
x \begin{array}[t]{l}
    (\StepTransFunc^{1}[G]\, d_0)
    (\StepTransFunc^{1}[\Id]\,d_{1})
       \ldots(\StepTransFunc^{1}[\Id]\,d_{\Intg{n}})\\
    \phantom{(\StepTransFunc^{1}[F]\, d_0)}
    (\StepTransFunc^{m}[\Id]\,e_{1})
       \ldots(\StepTransFunc^{m}[\Id]\,e_{\Intg{s}}))
    \end{array}
\\
&
\qquad\qquad
\$((\li^{\Intg{n}}_{i=0}\ST[\alpha_i,\delta;\$\BIntT])\li
   (\li^{\Intg{s}}_{j=1}\ST[\alpha_j,\delta;\$^m\BIntT])\li
   \SbcConfT[\alpha_0\ldots\alpha_{\Intg{n}+\Intg{s}},\delta;m])
\\
&
\emptyset;\emptyset;
\{(\ta{d_0}{\$\BIntT\li\alpha_0\li\alpha_0};\emptyset)\}
\vdash
\ta{\StepTransFunc^1[G]\, d_0}
   {\,!(\$\BIntT\liv\ST[\alpha_0,\delta;\$\BIntT]
                \li \ST[\alpha_0,\delta;\$\BIntT])}
\end{align*}
\normalsize
For every $i\in\{1,\ldots,\Intg{n}\}$:
\scriptsize
\begin{align*}
\emptyset;\emptyset;
\{(\ta{d_i}{\$\BIntT\li\alpha_i\li\alpha_i};\emptyset)\}
\vdash
\ta{\StepTransFunc^1[\Id]\, d_i}
   {\,!(\$\BIntT\liv\ST[\alpha_i,\delta;\$\BIntT]
                \li \ST[\alpha_i,\delta;\$\BIntT])}
\end{align*}
\normalsize
and for every $j\in\{\Intg{n}+1
      ,\ldots
      ,\Intg\li^{\Intg{n}}_{i=1}{n}+\Intg{s}\}$:
\scriptsize
\begin{align*}
\emptyset;\emptyset;
\{(\ta{e_j}{\$^{m}\BIntT\li\alpha_j\li\alpha_j};\emptyset)\}
\vdash
\ta{\StepTransFunc^m[\Id]\, e_j}
   {\,!(\$^{m}\BIntT\liv\ST[\alpha_j,\delta;\$^m\BIntT]
                    \li \ST[\alpha_j,\delta;\$^m\BIntT])}
\end{align*}
\normalsize
Finally:
\small
\begin{align*}
&
\ta{b}
   {(\li^{\Intg{n}}_{i=0}\ST[\alpha_i,\delta;\$\BIntT])\li
    (\li^{\Intg{n}+\Intg{s}}_{j=\Intg{n}+1}
        \ST[\alpha_j,\delta;\$^m\BIntT])\li
    \SbcConfT[\alpha_0\ldots\alpha_{\Intg{n}+\Intg{s}},\delta;m]},
\\
&
\ta{w_0}{\alpha_0},
\ldots,
\ta{w_{\Intg{n}}}{\alpha_{\Intg{n}}},
\ta{z_1}{\alpha_{\Intg{n}+1}},
\ldots,
\ta{z_{\Intg{s}}}{\alpha_{\Intg{n}+\Intg{s}}},
\emptyset;\emptyset
\\
&
\qquad
\vdash   
b \begin{array}[t]{l}
	(\BaseTransFunc^1\, w_0)
	(\BaseTransFunc^1\, w_{1})
           \ldots (\BaseTransFunc^1\, w_{\Intg{n}})\\
	\phantom{(\BaseTransFunc^1\, w_0)}
	(\BaseTransFunc^m\, z_{1}) 
           \ldots (\BaseTransFunc^m\, z_{\Intg{s}})
      \!:\!\SbcConfT[\alpha_0\ldots\alpha_{\Intg{n}+\Intg{s}},\delta;m]
  \end{array}
\end{align*}
\normalsize
%%%%--------
As a \textbf{fifth case}, to show 
$\emptyset;\emptyset;\emptyset;\vdash
 \ta{\TransFunc_{1+\Intg{n};\Intg{s}}[F,F']}
    {\bcConfT[1+\Intg{n};\Intg{s};m]\li
     \bcConfT[1+\Intg{n};\Intg{s};m]}$, starting from
$\emptyset;\emptyset;\emptyset
 \vdash
 \ta{F}{\BIntT\li \BIntT}$ and
$\emptyset;\emptyset;\emptyset
 \vdash
 \ta{F'}
    {\$\BIntT\liv
     (\liv^{\Intg{n}}_{i=1}\$\BIntT)\liv
     (\liv^{\Intg{s}}_{i=1}\$^{m}\BIntT)\liv
     \$^{m}\BIntT\liv
     \$^{m}\BIntT}$,
suitably derive and compose the following judgments:
\scriptsize
\begin{align*}
&
\ta{x}{\bcConfT[1+\Intg{n};\Intg{s};m]};\emptyset;
\\
&
\{
(\emptyset;\ta{d_0}{\$\BIntT\li\alpha_0\li\alpha_0}),
(\emptyset;\ta{d_{1}}{\$\BIntT\li\alpha_{1}\li\alpha_{1}}),
\ldots,
(\emptyset;\ta{d_{\Intg{n}}}{\$\BIntT\li
                             \alpha_{\Intg{n}}\li
		             \alpha_{\Intg{n}}}),
\\
&\phantom{\{ }
(\emptyset;
\ta{e_{1}}{\$^{m}\BIntT\li
           \alpha_{\Intg{n}+1}\li
	   \alpha_{\Intg{n}+1}}),
\ldots,
(\emptyset;
\ta{e_{\Intg{s}}}{\$^{m}\BIntT\li
                  \alpha_{\Intg{n}+\Intg{s}}\li
	          \alpha_{\Intg{n}+\Intg{s}}})
\}
\\
&
\qquad
\vdash
\HeadsandTails[F]\,x\,d_0\,d_1\ldots d_{\Intg{n}}\,
                           e_1\ldots e_{\Intg{s}}
\!:\!
\$(
(\li^{\Intg{n}}_{i=0}\alpha_i)\li
(\li^{\Intg{n}+\Intg{s}}_{j=\Intg{n}+1}\alpha_j)\li
\SbcConfT[\alpha_0\ldots\alpha_{\Intg{n}+\Intg{s}},\delta;m]
)
\\
&
\ta{b}
   {(\li^{\Intg{n}}_{i=0}\alpha_i)\li
    (\li^{\Intg{n}+\Intg{s}}_{j=\Intg{n}+1}\alpha_j)\li
    \SbcConfT[\alpha_0\ldots\alpha_{1+\Intg{n}+\Intg{s}},\delta;m]
   },
\\
&
\ta{w_0}{\alpha_0},
\ta{w_1}{\alpha_1},
\ldots,
\ta{w_{\Intg{n}}}{\alpha_{\Intg{n}}},
\ta{z_1}{\alpha_{\Intg{n}+1}},
\ldots,
\ta{z_{\Intg{s}}}{\alpha_{\Intg{n}+\Intg{s}}},
\emptyset;\emptyset
\\
&
\qquad
\vdash
\NextConf[F']\,(b\, w_0\, w_1\ldots w_{\Intg{n}}\,
                          z_1\ldots z_{\Intg{s}})
\!:\!
(\$^{m}\BIntT\liv
 (\li^{\Intg{n}}_{i=1}\alpha_i)\li
 (\li^{\Intg{n}+\Intg{s}}_{j=\Intg{n}+1}\alpha_j)\li
 \gamma)\li\gamma
\end{align*}
\normalsize
%%%%%%%%%%%---------------
\paragraph*{Proof of Proposition~\ref{proposition:Dynamics of the transition function}}
Dynamics of $\HeadsandTails_{1+\Intg{n};\Intg{s}}[F]$:
\small
\begin{align*}
&
\HeadsandTails_{1+\Intg{n};\Intg{s}}[F]\,
\lan\!\lan
 \BNum{r},
 [\BNum{a_{1}},\ldots,\BNum{a_{\Intg{r}}}]
 \!\!\!
 \begin{array}[t]{l}
  ,[\BNum{n_{11}},\ldots,\BNum{n_{1\Intg{r}}}]
  ,\ldots
  ,[\BNum{n_{\Intg{n}1}},\ldots,\BNum{n_{\Intg{n}\Intg{r}}}]\\
  ,[\BNum{s_{11}},\ldots,\BNum{s_{1\Intg{r}}}]
  ,\ldots
  ,[\BNum{s_{\Intg{s}1}},\ldots,\BNum{s_{\Intg{s}\Intg{r}}}]
  \ran\!\ran
  \red^+
 \end{array}
\\&
\bs d_0 d_1\ldots d_{\Intg{n}}
        e_1\ldots e_{\Intg{s}}.
\\
&
(\bs b. \bs w_0 w_1\ldots w_{\Intg{n}}
                z_1\ldots z_{\Intg{s}}.
	        b \begin{array}[t]{l}
	           (\BaseTransFunc^{1}\, w_0)
	           (\BaseTransFunc^{1}\, w_{1}) 
		     \ldots (\BaseTransFunc^{1}\, w_{\Intg{n}})\\
		   \phantom{(\BaseTransFunc^{1}\, w_0)}
		   (\BaseTransFunc^{1}\, z_{1}) 
		     \ldots (\BaseTransFunc^{1}\, z_{\Intg{s}})
                  \end{array}
\\
&
)(\lan\!\lan
 \BNum{r},
 [\BNum{a_{1}},\ldots,\BNum{a_{\Intg{r}}}]
 \!\!\!
 \begin{array}[t]{l}
  ,[\BNum{n_{11}},\ldots,\BNum{n_{1\Intg{r}}}]
  ,\ldots
  ,[\BNum{n_{\Intg{n}1}},\ldots,\BNum{n_{\Intg{n}\Intg{r}}}]\\
  ,[\BNum{s_{11}},\ldots,\BNum{s_{1\Intg{r}}}]
  ,\ldots
  ,[\BNum{s_{\Intg{s}1}},\ldots,\BNum{s_{\Intg{s}\Intg{r}}}]
  \ran\!\ran
  \\
  (\StepTransFunc^{1}[F]\, d_0)
  (\StepTransFunc^{1}[\Id]\,d_{1})
     \ldots(\StepTransFunc^{1}[\Id]\,d_{\Intg{n}})\\
  \phantom{(\StepTransFunc^{1}[F]\, d_0)}
  (\StepTransFunc^{m}[\Id]\,e_{1})
     \ldots(\StepTransFunc^{m}[\Id]\,e_{\Intg{s}}))
     \red^+
  \end{array}
\\&
\bs d_0 d_1\ldots d_{\Intg{n}}
        e_1\ldots e_{\Intg{s}}.
\bs w_0 w_1\ldots w_{\Intg{n}}
        z_1\ldots z_{\Intg{s}}.
\bs x.
\\
&
 \begin{array}[t]{l}
  x\,\BNum{r}\,
  ((\StepTransFunc^{1}[F]\, d_0)\,\BNum{a_1}
   (\cdots
    ((\StepTransFunc^{1}[F]\, d_0)\,\BNum{a_{\Intg{r}}}
       (\BaseTransFunc^{1}\,w_{0}))
    \cdots)
  )
  \\
  \phantom{x\,\BNum{r}\,}
  ((\StepTransFunc^{1}[\Id]\, d_1)\,\BNum{n_{11}}
   (\cdots
    ((\StepTransFunc^{1}[\Id]\, d_1)\,\BNum{n_{1\Intg{r}}}
       (\BaseTransFunc^{1}\,w_{1}))
    \cdots)
  )
  \\
  \phantom{x\,\BNum{r}\,}
  \ldots
  ((\StepTransFunc^{1}[\Id]\, d_{\Intg{n}})\,\BNum{n_{\Intg{n}1}}
   (\cdots
    ((\StepTransFunc^{1}[\Id]\,
     d_{\Intg{n}})\,\BNum{n_{\Intg{n}\Intg{r}}}
       (\BaseTransFunc^{1}\,w_{\Intg{n}}))
    \cdots)
  )
  \\
  \phantom{x\,\BNum{r}\,}
  ((\StepTransFunc^{m}[\Id]\, e_1)\,\BNum{s_{11}}
   (\cdots
    ((\StepTransFunc^{m}[\Id]\, e_1)\,\BNum{s_{1\Intg{r}}}
       (\BaseTransFunc^{1}\,z_{1}))
    \cdots)
  )
  \\
  \phantom{x\,\BNum{r}\,}
  \ldots
  ((\StepTransFunc^{m}[\Id]\, e_{\Intg{s}})\,\BNum{s_{\Intg{s}1}}
   (\cdots
    ((\StepTransFunc^{m}[\Id]\,
     e_{\Intg{s}})\,\BNum{s_{\Intg{s}\Intg{r}}}
       (\BaseTransFunc^{1}\,z_{\Intg{s}}))
    \cdots)
  )
  \red^+
  \end{array}
\\&
\bs d_0 d_1\ldots d_{\Intg{n}}
        e_1\ldots e_{\Intg{s}}.
\bs w_0 w_1\ldots w_{\Intg{n}}
        z_1\ldots z_{\Intg{s}}.
\bs x.
\\
&
 \begin{array}[t]{l}
  x\,\BNum{r}\,
  (\bs x.
   x\,d_0\,\LEmbed{1}{1}{F}\,\BNum{a_1}
   (\bs f. d_0\,(\LEmbed{1}{1}{F}\,\BNum{a_2})
                (\cdots(d_0\,\BNum{a'_{\Intg{r}}}\,w_0)\cdots)
   ))
  \\
  \phantom{x\,\BNum{r}\,}
  (\bs x.
   x\,d_1\,\LEmbed{1}{1}{\Id}\,\BNum{n_{11}}
   (\bs f. d_1\,(\LEmbed{1}{1}{\Id}\,\BNum{n_{12}})
                (\cdots(d_1\,\BNum{n_{1\Intg{r}}}\,w_1)\cdots)
   ))
  \\
  \phantom{x\,\BNum{r}\,}
  \ldots
  (\bs x.
   x\,d_{\Intg{n}}
    \,\LEmbed{1}{1}{\Id}\,\BNum{n_{\Intg{n}1}}
    (\bs f. d_{\Intg{n}}
            \,(\LEmbed{1}{1}{\Id}\,\BNum{n_{\Intg{n}2}})
              (\cdots(d_{\Intg{n}\Intg{r}}\,\BNum{n_{\Intg{n}\Intg{r}}}
	                                  \,w_{\Intg{n}})
	       \cdots)
   ))
  \\
  \phantom{x\,\BNum{r}\,}
  (\bs x.
   x\,e_1\,\LEmbed{m}{1}{\Id}\,\BNum{s_{11}}
   (\bs f. e_1\,(\LEmbed{m}{1}{\Id}\,\BNum{s_{12}})
                (\cdots(e_1\,\BNum{s_{1\Intg{r}}}\,z_1)\cdots)
   ))
  \\
  \phantom{x\,\BNum{r}\,}
  \ldots
  (\bs x.
   x\,e_{\Intg{s}}
    \,\LEmbed{m}{1}{\Id}\,\BNum{s_{\Intg{s}1}}
    (\bs f. e_{\Intg{s}}
            \,(\LEmbed{m}{1}{\Id}\,\BNum{s_{\Intg{s}2}})
              (\cdots(e_{\Intg{s}}\,\BNum{s_{\Intg{s}\Intg{r}}}
	                          \,z_{\Intg{s}})
	       \cdots)
   ))
  \red^+
  \end{array}
\\&
\bs d_0 d_1\ldots d_{\Intg{n}}
        e_1\ldots e_{\Intg{s}}.
\bs w_0 w_1\ldots w_{\Intg{n}}
        z_1\ldots z_{\Intg{s}}.
\bs x.
\\
&
 \begin{array}[t]{l}
  x\,\BNum{r}\,
  (\bs x.
   x\,d_0\,\LEmbed{1}{1}{F}\,\BNum{a_1}
   (\bs f. d_0\,\BNum{a'_2}
                (\cdots(d_0\,\BNum{a'_{\Intg{r}}}\,w_0)\cdots)
   ))
  \\
  \phantom{x\,\BNum{r}\,}
  (\bs x.
   x\,d_1\,\LEmbed{1}{1}{\Id}\,\BNum{n_{11}}
   (\bs f. d_1\,\BNum{n_{12}}
                (\cdots(d_1\,\BNum{n_{1\Intg{r}}}\,w_1)\cdots)
   ))
  \\
  \phantom{x\,\BNum{r}\,}
  \ldots
  (\bs x.
   x\,d_{\Intg{n}}
    \,\LEmbed{1}{1}{\Id}\,\BNum{n_{\Intg{n}1}}
    (\bs f. d_{\Intg{n}}
            \,\BNum{n_{\Intg{n}2}}
              (\cdots(d_{\Intg{n}}\,\BNum{n_{\Intg{n}\Intg{r}}}
	                          \,w_{\Intg{n}})
	       \cdots)
   ))
  \\
  \phantom{x\,\BNum{r}\,}
  (\bs x.
   x\,e_1\,\LEmbed{m}{1}{\Id}\,\BNum{s_{11}}
   (\bs f. e_1\,\BNum{s_{12}}
                (\cdots(e_1\,\BNum{s_{1\Intg{r}}}\,z_1)\cdots)
   ))
  \\
  \phantom{x\,\BNum{r}\,}
  \ldots
  (\bs x.
   x\,e_{\Intg{s}}
    \,\LEmbed{m}{1}{\Id}\,\BNum{s_{\Intg{s}1}}
    (\bs f. e_{\Intg{s}}
            \,\BNum{s_{\Intg{s}2}}
              (\cdots(e_{\Intg{s}}\,\BNum{s_{\Intg{s}\Intg{r}}}
	                          \,z_{\Intg{s}})
	       \cdots)
   ))
  \red^+
  \end{array}
\end{align*}
\begin{align*}
&
\bs d_0 d_1\ldots d_{\Intg{n}}e_1\ldots e_{\Intg{s}}.
\bs w_0 w_1\ldots w_{\Intg{n}}z_1\ldots z_{\Intg{s}}.
\\&
\nonumber
(\!\!
\begin{array}[t]{l}
\lan\!\lan
\BNum{r},
\lan\BNum{a_{1}},[\BNum{a'_{2}},\ldots,\BNum{a'_{\Intg{r}}}]\ran\\
\phantom{\lan\!\BNum{r},}
    ,\lan\BNum{n_{11}},[\BNum{n_{12}},\ldots,\BNum{n_{1\Intg{r}}}]\ran
    ,\ldots
    ,\lan\BNum{n_{\Intg{n}1}},
       [\BNum{n_{\Intg{n}2}},
           \ldots,\BNum{n_{\Intg{n}\Intg{r}}}]\ran\\
    \phantom{\lan\!\BNum{r},}
    ,\lan\BNum{s_{11}},
       [\BNum{s_{12}},\ldots,\BNum{s_{1\Intg{r}}}]\ran
    ,\ldots
    ,\lan\BNum{s_{\Intg{s}1}},
        [\BNum{s_{\Intg{s}2}},
	   \ldots,\BNum{s_{\Intg{s}\Intg{r}}}]\ran
    \ran\!\ran_{\scriptsize
	         \begin{array}[t]{l}
	          c_{1} \ldots c_{\Intg{r}}\\
	          d_{11}\ldots d_{1\Intg{r}}
		    \cdots\cdots
	          d_{\Intg{n}1}\ldots d_{\Intg{n}\Intg{r}}\\
	          e_{11}\ldots e_{1\Intg{r}}
		    \cdots\cdots
	          e_{\Intg{s}1}\ldots e_{\Intg{s}\Intg{r}}\\
	          w_{0} w_{1} \ldots w_{\Intg{n}}
		        z_{1} \ldots z_{\Intg{s}}\\
	         \end{array}
		\normalsize
	       }
\end{array}
\\
&
\nonumber
  )\{
     \begin{array}[t]{l}
     ^{d_0}\!/_{c_1}\ldots^{d_0}\!/_{c_{\Intg{r}}}
     \\
     ^{d_{1}}\!/_{d_{11}}
            \ldots 
     ^{d_{1}}\!/_{d_{1\Intg{r}}}
     \cdots
     ^{d_{\Intg{n}}}\!/_{d_{\Intg{n}1}}
            \ldots
     ^{d_{\Intg{n}}}\!/_{d_{\Intg{n}\Intg{r}}}
     \
     ^{e_{1}}\!/_{e_{11}}
            \ldots
     ^{e_{1}}\!/_{e_{1\Intg{r}}}
     \cdots
     ^{e_{\Intg{s}}}\!/_{e_{\Intg{s}1}}
            \ldots
     ^{e_{\Intg{s}}}\!/_{e_{\Intg{s}\Intg{r}}}
     \\
     ^{w_{0}}/_{w_{0}}\ldots^{w_{\Intg{n}}}/_{w_{\Intg{n}}}
     \
     ^{z_{1}}/_{z_{1}}\ldots^{z_{\Intg{s}}}/_{z_{\Intg{s}}}
     \}
     \end{array}
\end{align*}
\normalsize
For the dynamics of $\NextConf_{1+\Intg{n};\Intg{s}}[F']$, we define
$\sigma$ as the substitution:
\small
\[
\begin{array}[t]{l}
^{d_0}\!/_{c_1}\ldots^{d_0}\!/_{c_{\Intg{r}}}
\\
^{d_{1}}\!/_{d_{11}}
	\ldots 
^{d_{1}}\!/_{d_{1\Intg{r}}}
\cdots
^{d_{\Intg{n}}}\!/_{d_{\Intg{n}1}}
	\ldots
^{d_{\Intg{n}}}\!/_{d_{\Intg{n}\Intg{r}}}
\
^{e_{1}}\!/_{e_{11}}
	\ldots
^{e_{1}}\!/_{e_{1\Intg{r}}}
\cdots
^{e_{\Intg{s}}}\!/_{e_{\Intg{s}1}}
	\ldots
^{e_{\Intg{s}}}\!/_{e_{\Intg{s}\Intg{r}}}
\\
^{w_{0}}/_{w_{0}}\ldots^{w_{\Intg{n}}}/_{w_{\Intg{n}}}
\
^{z_{1}}/_{z_{1}}\ldots^{z_{\Intg{s}}}/_{z_{\Intg{s}}}
\end{array}
\]
\normalsize
and we recall that $H$ is:
\small
\[
\begin{array}[t]{l}
  \bs d_0 f_0 n_0 n^t_0.
  \\
  \bs d_1 f_1 n_1 n^t_1.
  \ldots
  \bs d_{\Intg{n}} f_{\Intg{n}} n_{\Intg{n}} n^t_{\Intg{n}}.
  \\
  \bs e_1 g_1 e_1 s^t_1.
  \ldots
  \bs e_{\Intg{s}} g_{\Intg{s}} s_{\Intg{s}} s^t_{\Intg{s}}.
  \\
  \bs x. x\,
         (F'\,n_0\,n_1\ldots n_{\Intg{n}}
	         \,s_1\ldots s_{\Intg{s}}\,r)\,
	 (n^t_0\,\Id)\,
	 (n^t_1\,\Id)\ldots (n^t_{\Intg{n}}\,\Id)\,
	 (s^t_1\,\Id)\ldots (s^t_{\Intg{s}}\,\Id)
\end{array}
\]
\normalsize
Then:
\small
\begin{align*}
&
\begin{array}[t]{l}
\bs d_0 d_1\ldots d_{\Intg{n}}e_1\ldots e_{\Intg{s}}.
\bs w_0 w_1\ldots w_{\Intg{n}}z_1\ldots z_{\Intg{s}}.
\\
(\NextConf_{1+\Intg{n};\Intg{s}}[F']\,
    \lan\!\lan
    \BNum{r},
    \lan\BNum{a_{1}},[\BNum{a_{2}},\ldots,\BNum{a_{\Intg{r}}}]\ran\\
    \phantom{\NextConf_{1+\Intg{n};\Intg{s}}[F']\,\lan\!\lan
             \BNum{r},}
    ,\lan\BNum{n_{11}},[\BNum{n_{12}},\ldots,\BNum{n_{1\Intg{r}}}]\ran
    ,\ldots
    ,\lan\BNum{n_{\Intg{n}1}},
         [\BNum{n_{\Intg{n}2}},
	     \ldots,\BNum{n_{\Intg{n}\Intg{r}}}]\ran\\
    \phantom{\NextConf_{1+\Intg{n};\Intg{s}}[F']\,\lan\!\lan
             \BNum{r},}
    ,\lan\BNum{s_{11}},
         [\BNum{s_{12}},\ldots,\BNum{s_{1\Intg{r}}}]\ran
    ,\ldots
    ,\lan\BNum{s_{\Intg{s}1}},
        [\BNum{s_{\Intg{s}2}},\ldots,\BNum{s_{\Intg{s}\Intg{r}}}]\ran
    \ran\!\ran_{\scriptsize
	         \begin{array}[t]{l}
	          c_{1} \ldots c_{\Intg{r}}\\
	          d_{11}\ldots d_{1\Intg{r}}
		    \cdots\cdots
	          d_{\Intg{n}1}\ldots d_{\Intg{n}\Intg{r}}\\
	          e_{11}\ldots e_{1\Intg{r}}
		    \cdots\cdots
	          e_{\Intg{s}1}\ldots e_{\Intg{s}\Intg{r}}\\
	          w_{0} w_{1} \ldots w_{\Intg{n}}
		        z_{1} \ldots z_{\Intg{s}}\\
	         \end{array}
		\normalsize}
\\
  )\,\sigma\red^+
   \end{array}
\\
&
\begin{array}[t]{l}
\bs d_0 d_1\ldots d_{\Intg{n}}e_1\ldots e_{\Intg{s}}.
\bs w_0 w_1\ldots w_{\Intg{n}}z_1\ldots z_{\Intg{s}}.
\\
(\NextConf_{1+\Intg{n};\Intg{s}}[F']\,
 (\bs x.
  x\,\BNum{r}\,
  (\bs x.
   x\,c_1\,\LEmbed{1}{1}{F}\,\BNum{a_1}
   (\bs f. c_2\,\BNum{a'_2}
                (\cdots(c_{\Intg{r}}\,\BNum{a'_{\Intg{r}}}\,w_0)\cdots)
   ))
  \\
  \phantom{\NextConf_{1+\Intg{n};\Intg{s}}[F']\,(\bs x.x\,\BNum{r}\,}
  (\bs x.
   x\,d_{11}\,\LEmbed{1}{1}{\Id}\,\BNum{n_{11}}
   (\bs f. d_{12}\,\BNum{n_{12}}
                (\cdots(d_{1\Intg{r}}\,\BNum{n_{1\Intg{r}}}\,w_1)\cdots)
   ))
  \\
  \phantom{\NextConf_{1+\Intg{n};\Intg{s}}[F']\,(\bs x.x\,\BNum{r}\,}
  \ldots
  (\bs x.
   x\,d_{\Intg{n}1}
    \,\LEmbed{1}{1}{\Id}\,\BNum{n_{\Intg{n}1}}
    (\bs f. d_{\Intg{n}2}
            \,\BNum{n_{\Intg{n}2}}
              (\cdots(d_{\Intg{n}}\,\BNum{n_{\Intg{n}\Intg{r}}}
	                          \,w_{\Intg{n}})
	       \cdots)
   ))
  \\
  \phantom{\NextConf_{1+\Intg{n};\Intg{s}}[F']\,(\bs x.x\,\BNum{r}\,}
  (\bs x.
   x\,e_{11}\,\LEmbed{m}{1}{\Id}\,\BNum{s_{11}}
   (\bs f. e_{12}\,\BNum{s_{12}}
                   (\cdots(e_{1\Intg{r}}\,\BNum{s_{1\Intg{r}}}
		                        \,z_1)\cdots)
   ))
  \\
  \phantom{\NextConf_{1+\Intg{n};\Intg{s}}[F']\,(\bs x.x\,\BNum{r}\,}
  \ldots
  (\bs x.
   x\,e_{\Intg{s}1}
    \,\LEmbed{m}{1}{\Id}\,\BNum{s_{\Intg{s}1}}
    (\bs f. e_{\Intg{s}2}
            \,\BNum{s_{\Intg{s}2}}
              (\cdots(e_{\Intg{s}\Intg{r}}\,\BNum{s_{\Intg{s}\Intg{r}}}
	                                  \,z_{\Intg{s}})
	       \cdots)
   )))
)\,\sigma
 \red^+
\end{array}
\\
&
\begin{array}[t]{l}
\bs d_0 d_1\ldots d_{\Intg{n}}e_1\ldots e_{\Intg{s}}.
\bs w_0 w_1\ldots w_{\Intg{n}}z_1\ldots z_{\Intg{s}}.
\\
((\bs r t_0 t_1 \ldots t_{\Intg{n}}
             t'_1 \ldots t'_{\Intg{s}}.
  t'_{\Intg{s}}(\ldots(t'_1(t_{\Intg{n}}(\ldots(t_0\,H)\ldots)))\ldots)
  )
  \\\phantom{(}
  \BNum{r}\,
  (\bs x.
   x\,c_1\,\LEmbed{1}{1}{F}\,\BNum{a_1}
   (\bs f. c_2\,\BNum{a'_2}
                (\cdots(c_{\Intg{r}}\,\BNum{a'_{\Intg{r}}}\,w_0)\cdots)
   ))
  \\
  \phantom{(\BNum{r}\,}
  (\bs x.
   x\,d_{11}\,\LEmbed{1}{1}{\Id}\,\BNum{n_{11}}
   (\bs f. d_{12}\,\BNum{n_{12}}
                (\cdots(d_{1\Intg{r}}\,\BNum{n_{1\Intg{r}}}\,w_1)\cdots)
   ))
  \\
  \phantom{(\BNum{r}\,}
  \ldots
  (\bs x.
   x\,d_{\Intg{n}1}
    \,\LEmbed{1}{1}{\Id}\,\BNum{n_{\Intg{n}1}}
    (\bs f. d_{\Intg{n}2}
            \,\BNum{n_{\Intg{n}2}}
              (\cdots(d_{\Intg{n}}\,\BNum{n_{\Intg{n}\Intg{r}}}
	                          \,w_{\Intg{n}})
	       \cdots)
   ))
  \\
  \phantom{(\BNum{r}\,}
  (\bs x.
   x\,e_{11}\,\LEmbed{m}{1}{\Id}\,\BNum{s_{11}}
   (\bs f. e_{12}\,\BNum{s_{12}}
                   (\cdots(e_{1\Intg{r}}\,\BNum{s_{1\Intg{r}}}
		                        \,z_1)\cdots)
   ))
  \\
  \phantom{(\BNum{r}\,}
  \ldots
  (\bs x.
   x\,e_{\Intg{s}1}
    \,\LEmbed{m}{1}{\Id}\,\BNum{s_{\Intg{s}1}}
    (\bs f. e_{\Intg{s}2}
            \,\BNum{s_{\Intg{s}2}}
              (\cdots(e_{\Intg{s}\Intg{r}}\,\BNum{s_{\Intg{s}\Intg{r}}}
	                                  \,z_{\Intg{s}})
	       \cdots)
   ))
)\,\sigma
 \red^+
\end{array}
\\
&
\begin{array}[t]{l}
\bs d_0 d_1\ldots d_{\Intg{n}}e_1\ldots e_{\Intg{s}}.
\bs w_0 w_1\ldots w_{\Intg{n}}z_1\ldots z_{\Intg{s}}.
\\
((\bs t_1 \ldots t_{\Intg{n}}
      t'_1 \ldots t'_{\Intg{s}}.
  t'_{\Intg{s}}(\ldots(t'_1(t_{\Intg{n}}(\ldots
       (t_1\,
        (\!\!\!
	 \begin{array}[t]{l}
	  \bs d_1 f_1 n_1 n^t_1
	      \ldots
	      \bs d_{\Intg{n}} f_{\Intg{n}} 
	          n_{\Intg{n}} n^t_{\Intg{n}}.
	  \\
	  \bs e_1 g_1 s_1 s^t_1
	      \ldots
	      \bs e_{\Intg{s}} g_{\Intg{n}} 
	          s_{\Intg{s}} s^t_{\Intg{n}}.
         \\
	 \bs x.x
	     \,(F'\,\BNum{a_1}\,\BNum{n_{1}}\ldots\BNum{n_{\Intg{n}}}\,
	            \BNum{s_{1}}\ldots\BNum{s_{\Intg{s}}}\,\BNum{r}
         \\\phantom{\bs x.x\,}
	       )(c_2\,\BNum{a_2}(\ldots(c_{\Intg{r}}\,\BNum{a_{\Intg{r}}}\,w_0)\ldots))
	 \\\phantom{\bs x.x\,)}
	   (n^t_1\,\Id)\ldots(n^t_{\Intg{n}}\,\Id)
	   (s^t_1\,\Id)\ldots(s^t_{\Intg{s}}\,\Id)
        ))\ldots)))\ldots)
         \end{array}
  \\
  \phantom{()}
  )(\bs x.
    x\,d_{11}\,\LEmbed{1}{\Id}\,\BNum{n_{11}}
    (\bs f. d_{12}\,\BNum{n_{12}}
                (\cdots(d_{1\Intg{r}}\,\BNum{n_{1\Intg{r}}}\,w_1)\cdots)
    ))
  \\
  \phantom{(\BNum{r}\,}
  \ldots
  (\bs x.
   x\,d_{\Intg{n}1}
    \,\LEmbed{1}{1}{\Id}\,\BNum{n_{\Intg{n}1}}
    (\bs f. d_{\Intg{n}2}
            \,\BNum{n_{\Intg{n}2}}
              (\cdots(d_{\Intg{n}\Intg{r}}\,\BNum{n_{\Intg{n}\Intg{r}}}
	                          \,w_{\Intg{n}})
	       \cdots)
   ))
  \\
  \phantom{(\BNum{r}\,}
  (\bs x.
   x\,e_{11}\,\LEmbed{m}{1}{\Id}\,\BNum{s_{11}}
   (\bs f. e_{12}\,\BNum{s_{12}}
                   (\cdots(e_{1\Intg{r}}\,\BNum{s_{1\Intg{r}}}
		                        \,z_1)\cdots)
   ))
  \\
  \phantom{(\BNum{r}\,}
  \ldots
  (\bs x.
   x\,e_{\Intg{s}1}
    \,\LEmbed{m}{1}{\Id}\,\BNum{s_{\Intg{s}1}}
    (\bs f. e_{\Intg{s}2}
            \,\BNum{s_{\Intg{s}2}}
              (\cdots(e_{\Intg{s}\Intg{r}}\,\BNum{s_{\Intg{s}\Intg{r}}}
	                                  \,z_{\Intg{s}})
	       \cdots)
   ))
)\,\sigma
 \red^+
\end{array}
\end{align*}
\begin{align*}
&
\begin{array}[t]{l}
\bs d_0 d_1\ldots d_{\Intg{n}}e_1\ldots e_{\Intg{s}}.
\bs w_0 w_1\ldots w_{\Intg{n}}z_1\ldots z_{\Intg{s}}.
\\
(
\bs x.x
\,(F'\,\BNum{a_1}\,
	\BNum{n_{11}}\ldots\BNum{n_{\Intg{n}1}}\,
	\BNum{s_{11}}\ldots\BNum{s_{\Intg{s}1}}\,\BNum{r}
)(c_2\,\BNum{a_2}(\ldots(c_{\Intg{r}}\,\BNum{a_{\Intg{r}}}\,w_0)\ldots))
\\\phantom{(\bs x.x\,(F'\,\BNum{a_1}\,
	                 \BNum{n_{11}}\ldots\BNum{n_{\Intg{n}1}}\,
	                 \BNum{s_{11}}\ldots\BNum{s_{\Intg{s}1}}\,\BNum{r})}
  (d_{12}\,\BNum{n_{12}}
          (\cdots(d_{1\Intg{r}}\,\BNum{n_{1\Intg{r}}}\,w_1)\cdots))
\\\phantom{(\bs x.x\,(F'\,\BNum{a_1}\,
	                 \BNum{n_{11}}\ldots\BNum{n_{\Intg{n}1}}\,
	                 \BNum{s_{11}}\ldots\BNum{s_{\Intg{s}1}}\,\BNum{r})\,}
  \ldots
  (d_{\Intg{n}2}
            \,\BNum{n_{\Intg{n}2}}
              (\cdots(d_{\Intg{n}\Intg{r}}\,\BNum{n_{\Intg{n}\Intg{r}}}
	                          \,w_{\Intg{n}})
	       \cdots))
\\\phantom{(\bs x.x\,(F'\,\BNum{a_1}\,
	                 \BNum{n_{11}}\ldots\BNum{n_{\Intg{n}1}}\,
	                 \BNum{s_{11}}\ldots\BNum{s_{\Intg{s}1}}\,\BNum{r})\,}
   (e_{12}\,\BNum{s_{12}}
                   (\cdots(e_{1\Intg{r}}\,\BNum{s_{1\Intg{r}}}
		                        \,z_1)\cdots))
\\\phantom{(\bs x.x\,(F'\,\BNum{a_1}\,
	                 \BNum{n_{11}}\ldots\BNum{n_{\Intg{n}1}}\,
	                 \BNum{s_{11}}\ldots\BNum{s_{\Intg{s}1}}\,\BNum{r})\,}
  \ldots
  (e_{\Intg{s}2}
            \,\BNum{s_{\Intg{s}2}}
              (\cdots(e_{\Intg{s}\Intg{r}}\,\BNum{s_{\Intg{s}\Intg{r}}}
	                                  \,z_{\Intg{s}})
	       \cdots)
  )
)\,\sigma
 \red^+
\end{array}
\\
&
\bs d_0 d_1\ldots d_{\Intg{n}}e_1\ldots e_{\Intg{s}}.
\bs w_0 w_1\ldots w_{\Intg{n}}z_1\ldots z_{\Intg{s}}.
\\
&
\bs x.x
\,\BNum{r'}
\,(d_0\,\BNum{a_2}(\ldots(d_0\,\BNum{a_{\Intg{r}}}\,w_0)\ldots))
\\
&\phantom{\bs x.x\,\BNum{r'}}
  (d_{1}\,\BNum{n_{12}}
          (\cdots(d_{1}\,\BNum{n_{1\Intg{r}}}\,w_1)\cdots))
  \ldots
  (d_{\Intg{n}}
            \,\BNum{n_{\Intg{n}2}}
              (\cdots(d_{\Intg{n}}\,\BNum{n_{\Intg{n}\Intg{r}}}
	                          \,w_{\Intg{n}})
	       \cdots))
\\
&\phantom{\bs x.x\,\BNum{r'}}
   (e_{1}\,\BNum{s_{12}}
                   (\cdots(e_{1}\,\BNum{s_{1\Intg{r}}}
		                        \,z_1)\cdots))
  \ldots
  (e_{\Intg{s}}
            \,\BNum{s_{\Intg{s}2}}
              (\cdots(e_{\Intg{s}}\,\BNum{s_{\Intg{s}\Intg{r}}}
	                                  \,z_{\Intg{s}})
	       \cdots)
  )
 \equiv
\\&
    \lan\!\lan
    \BNum{r'}
    ,[\BNum{a'_{2}},\ldots,\BNum{a'_{\Intg{r}}}]
    ,[\BNum{n_{12}},\ldots,\BNum{n_{1\Intg{r}}}]
    ,\ldots
    ,[\BNum{n_{\Intg{n}2}},\ldots,\BNum{n_{\Intg{n}\Intg{r}}}]
    ,[\BNum{s_{12}},\ldots,\BNum{s_{1\Intg{r}}}]
    ,\ldots
    ,[\BNum{s_{\Intg{s}2}},\ldots,\BNum{s_{\Intg{s}\Intg{r}}}]
    \ran\!\ran
\end{align*}
\normalsize
For the dynamics of $\TransFunc_{1+\Intg{n};\Intg{s}}[F,F']$ we, again, define $\sigma$ as the substitution
\\
\small
$
^{d_0}\!/_{c_1}\ldots^{d_0}\!/_{c_{\Intg{r}}}
\,
^{d_{1}}\!/_{d_{11}}
	\ldots 
^{d_{1}}\!/_{d_{1\Intg{r}}}
\cdots
^{d_{\Intg{n}}}\!/_{d_{\Intg{n}1}}
	\ldots
^{d_{\Intg{n}}}\!/_{d_{\Intg{n}\Intg{r}}}
\,
^{e_{1}}\!/_{e_{11}}
	\ldots
^{e_{1}}\!/_{e_{1\Intg{r}}}
\cdots
^{e_{\Intg{s}}}\!/_{e_{\Intg{s}1}}
	\ldots
^{e_{\Intg{s}}}\!/_{e_{\Intg{s}\Intg{r}}}
^{w_{0}}/_{w_{0}}\ldots^{w_{\Intg{n}}}/_{w_{\Intg{n}}}
\
^{z_{1}}/_{z_{1}}\ldots^{z_{\Intg{s}}}/_{z_{\Intg{s}}}$
\normalsize.
Then:
\scriptsize
\begin{align*}
\lefteqn{
\TransFunc_{1+\Intg{n};\Intg{s}}[F,F']\,
\lan\!\lan
 \BNum{r},
 [\BNum{a_{1}},\ldots,\BNum{a_{\Intg{r}}}]
 \!\!\!
 \begin{array}[t]{l}
  ,[\BNum{n_{11}},\ldots,\BNum{n_{1\Intg{r}}}]
  ,\ldots
  ,[\BNum{n_{\Intg{n}1}},\ldots,\BNum{n_{\Intg{n}\Intg{r}}}]\\
  ,[\BNum{s_{11}},\ldots,\BNum{s_{1\Intg{r}}}]
  ,\ldots
  ,[\BNum{s_{\Intg{s}1}},\ldots,\BNum{s_{\Intg{s}\Intg{r}}}]
  \ran\!\ran
  \red^+
 \end{array}
}
\\
&&
\bs d_0\ldots d_{\Intg{n}} e_1\ldots e_{\Intg{s}}.
\\
&&
(\bs b.\bs w_0\ldots w_{\Intg{n}} z_1\ldots z_{\Intg{s}}.
 \NextConf_{1+\Intg{n};\Intg{s}}[F'](b\,w_0\ldots w_{\Intg{n}}
                                      \,z_1\ldots z_{\Intg{s}}
                                    )
\\
&&
)(\HeadsandTails_{1+\Intg{n};\Intg{s}}[F]
  \begin{array}[t]{l}
  \lan\!\lan
   \BNum{r}
  ,[\BNum{a_{1}},\ldots,\BNum{a_{\Intg{r}}}]
  ,[\BNum{n_{11}},\ldots,\BNum{n_{1\Intg{r}}}]
  ,\ldots
  ,[\BNum{n_{\Intg{n}1}},\ldots,\BNum{n_{\Intg{n}\Intg{r}}}]
  \\\phantom{\lan\!\lan\BNum{r}
                      ,[\BNum{a_{1}},\ldots,\BNum{a_{\Intg{r}}}]}
  ,[\BNum{s_{11}},\ldots,\BNum{s_{1\Intg{r}}}]
  ,\ldots
  ,[\BNum{s_{\Intg{s}1}},\ldots,\BNum{s_{\Intg{s}\Intg{r}}}]
  \ran\!\ran
  \,d_0\ldots d_{\Intg{n}}\,e_1\ldots e_{\Intg{s}})
  \red^+
 \end{array}
\\
&&
\bs d_0\ldots d_{\Intg{n}} e_1\ldots e_{\Intg{s}}.
\\
&&
(\bs b.\bs w_0\ldots w_{\Intg{n}} z_1\ldots z_{\Intg{s}}.
 \NextConf_{1+\Intg{n};\Intg{s}}[F'](b\,w_0\ldots w_{\Intg{n}}
                                      \,z_1\ldots z_{\Intg{s}}
                                    )
\\
&&
)
( \bs w_0\ldots w_{\Intg{n}} z_1\ldots z_{\Intg{s}}.
  \bs x.
  \!\!\!
  \begin{array}[t]{l}
  x\,\BNum{r}\,
  (\bs x.
   x\,d_0\,\LEmbed{1}{1}{F}\,\BNum{a_1}
   (\bs f. d_0\,\BNum{a'_2}
                (\cdots(d_0\,\BNum{a'_{\Intg{r}}}\,w_0)\cdots)
   ))
  \\
  \phantom{x\,\BNum{r}\,}
  (\bs x.
   x\,d_1\,\LEmbed{1}{1}{\Id}\,\BNum{n_{11}}
   (\bs f. d_1\,\BNum{n_{12}}
                (\cdots(d_1\,\BNum{n_{1\Intg{r}}}\,w_1)\cdots)
   ))
  \\
  \phantom{x\,\BNum{r}\,}
  \ldots
  (\bs x.
   x\,d_{\Intg{n}}
    \,\LEmbed{1}{1}{\Id}\,\BNum{n_{\Intg{n}1}}
    (\bs f. d_{\Intg{n}}
            \,\BNum{n_{\Intg{n}2}}
              (\cdots(d_{\Intg{n}}\,\BNum{n_{\Intg{n}\Intg{r}}}
	                          \,w_{\Intg{n}})
	       \cdots)
   ))
  \\
  \phantom{x\,\BNum{r}\,}
  (\bs x.
   x\,e_1\,\LEmbed{m}{1}{\Id}\,\BNum{s_{11}}
   (\bs f. e_1\,\BNum{s_{12}}
                (\cdots(e_1\,\BNum{s_{1\Intg{r}}}\,z_1)\cdots)
   ))
  \\
  \phantom{x\,\BNum{r}\,}
  \ldots
  (\bs x.
   x\,e_{\Intg{s}}
    \,\LEmbed{m}{1}{\Id}\,\BNum{s_{\Intg{s}1}}
    (\bs f. e_{\Intg{s}}
            \,\BNum{s_{\Intg{s}2}}
              (\cdots(e_{\Intg{s}}\,\BNum{s_{\Intg{s}\Intg{r}}}
	                          \,z_{\Intg{s}})
	       \cdots)
   ))
)
  \red^+
  \end{array}
\\
&&
\bs d_0\ldots d_{\Intg{n}} e_1\ldots e_{\Intg{s}}.
\bs w_0\ldots w_{\Intg{n}} z_1\ldots z_{\Intg{s}}.
\\
&&
(\NextConf_{1+\Intg{n};\Intg{s}}[F']\,
    \lan\!\lan
    \BNum{r},
    \lan\BNum{a_{1}},[\BNum{a'_{2}},\ldots,\BNum{a'_{\Intg{r}}}]\ran
\\
&&  \phantom{\NextConf_{1+\Intg{n};\Intg{s}}[F']\,\lan\!\lan
             \BNum{r},}
    ,\lan\BNum{n_{11}},[\BNum{n_{12}},\ldots,\BNum{n_{1\Intg{r}}}]\ran
    ,\ldots
    ,\lan\BNum{n_{\Intg{n}1}},
         [\BNum{n_{\Intg{n}2}},
	     \ldots,\BNum{n_{\Intg{n}\Intg{r}}}]\ran
\\
&&  \phantom{\NextConf_{1+\Intg{n};\Intg{s}}[F']\,\lan\!\lan
             \BNum{r},}
    ,\lan\BNum{s_{11}},
         [\BNum{s_{12}},\ldots,\BNum{s_{1\Intg{r}}}]\ran
    ,\ldots
    ,\lan\BNum{s_{\Intg{s}1}},
        [\BNum{s_{\Intg{s}2}},\ldots,\BNum{s_{\Intg{s}\Intg{r}}}]\ran
    \ran\!\ran_{\scriptsize
	         \begin{array}[t]{l}
	          c_{1} \ldots c_{\Intg{r}}\\
	          d_{11}\ldots d_{1\Intg{r}}
		    \cdots\cdots
	          d_{\Intg{n}1}\ldots d_{\Intg{n}\Intg{r}}\\
	          e_{11}\ldots e_{1\Intg{r}}
		    \cdots\cdots
	          e_{\Intg{s}1}\ldots e_{\Intg{s}\Intg{r}}\\
	          w_{0} w_{1} \ldots w_{\Intg{n}}
		        z_{1} \ldots z_{\Intg{s}}\\
	         \end{array}
		\normalsize}
\\
&&
  )\,\sigma\red^+
   \begin{array}[t]{l}
    \lan\!\lan
    \BNum{r'}
    ,[\BNum{a'_{2}},\ldots,\BNum{a'_{\Intg{r}}}]
    ,[\BNum{n_{12}},\ldots,\BNum{n_{1\Intg{r}}}]
    ,\ldots
    ,[\BNum{n_{\Intg{n}2}},\ldots,\BNum{n_{\Intg{n}\Intg{r}}}]
    ,[\BNum{s_{12}},\ldots,\BNum{s_{1\Intg{r}}}]
    ,\ldots
    ,[\BNum{s_{\Intg{s}2}},\ldots,\BNum{s_{\Intg{s}\Intg{r}}}]
    \ran\!\ran
   \end{array}
\end{align*}
\normalsize
%%%%%%%%%%%---------------
\paragraph*{Proof of Proposition~\ref{proposition:Typing the iterator}}
As a \textbf{first case}, to show 
$\emptyset;\emptyset;\emptyset\vdash
 \ta{\ListstoConf_{1+\Intg{n};\Intg{s}}}{
 \ListT\, \$\BIntT \li 
 (\li^{\Intg{n}}_{i=1} \ListT\, \$\BIntT)\li 
 (\li^{\Intg{s}}_{j=1} \ListT\, \$^{m} \BIntT) \li
 \bcConfT[1+\Intg{n};\Intg{s}};m]$
suitably derive and compose the judgments:
\scriptsize
\begin{align*}
&
\ta{l_{i}}{\ListT\,\$\BIntT};
\emptyset;
\{(\emptyset;\ta{d_{i}}{\$\BIntT\liv\alpha_{i}\li\alpha_{i}})\}
\vdash\ta{l_{i} d_{i}}{\$(\alpha_{i}\li\alpha_{i})}
\qquad(i\in\{0,\ldots,\Intg{n}\})
\\&
\ta{l_{\Intg{n}+j}}{\ListT\,\$^{m}\BIntT},
\emptyset;
\{
(\emptyset;
\ta{e_{j}}{\$^{m}\BIntT\liv\alpha_{\Intg{n}+j}
              \li\alpha_{\Intg{n}+j}}
)\}
\vdash\ta{l_{\Intg{n}+j} e_{j}}
         {\$(\alpha_{\Intg{n}+j}\li\alpha_{\Intg{n}+j})}
\qquad(j\in\{1,\ldots,\Intg{s}\})
\\
%-------
&
\ta{b_0}{\alpha_0\li\alpha_0},
\ta{b_{1}}{\alpha_{1}\li\alpha_{1}}
  ,\ldots,\ta{b_{\Intg{n}}}{\alpha_{\Intg{n}}\li\alpha_{\Intg{n}}},
\\&
\ta{c_{1}}{\alpha_{\Intg{n}+1}\li\alpha_{\Intg{n}+1}}
  ,\ldots,
  \ta{c_{\Intg{s}}}{\alpha_{\Intg{n}+\Intg{s}}
                             \li
			     \alpha_{\Intg{n}+\Intg{s}}},
\\&
\ta{w_0}{\alpha_0}
  ,\ldots,\ta{w_{\Intg{n}}}{\alpha_{\Intg{n}}},
\ta{z_{1}}{\alpha_{\Intg{n}+1}}
  ,\ldots,
  \ta{z_{\Intg{s}}}{\alpha_{\Intg{n}+\Intg{s}}},
\\
&
\ta{x}
   {\$^{m}\BIntT\liv
    (\li^{\Intg{n}}_{i=0}\alpha_i)\li
    (\li^{\Intg{n}+\Intg{s}}_{j=\Intg{n}+1}\alpha_j)\li
    \gamma
   }
;\emptyset;\emptyset
\\&\qquad\qquad\qquad
\vdash
   x\,
   \BNum{0}\,
   (b_0\, w_0)
   (b_1\, w_1)
   \ldots
   (b_{\Intg{n}}\, w_{\Intg{n}})\,
   (c_1\, z_{1})
   \ldots
   (c_{\Intg{s}}\, z_{\Intg{s}})\!:\!\gamma
% \\
\end{align*}

\begin{align*}
%-------
&
\ta{l_{0}}{\ListT\,\$\BIntT}
  ,\ldots,\ta{l_{\Intg{n}}}{\ListT\,\$\BIntT},
\ta{l_{\Intg{n}+1}}{\ListT\,\$^{m}\BIntT}
  ,\ldots,
  \ta{l_{\Intg{n}+\Intg{s}}}{\ListT\,\$^{m}\BIntT};\emptyset;
\\&
\{
(\emptyset;\ta{d_0}{\$\BIntT\liv\alpha_0\li\alpha_0)}
  ,\ldots,(\emptyset;\ta{d_{\Intg{n}}}{\$\BIntT\liv
                                       \alpha_{\Intg{n}}\li
			               \alpha_{\Intg{n}}}),
\\&\phantom{\{ }
(\emptyset;\ta{e_{1}}{\$^{m}\BIntT\liv
                      \alpha_{\Intg{n}+1}\li
	              \alpha_{\Intg{n}+1}})
  ,\ldots,
  (\emptyset;\ta{e_{\Intg{s}}}{\$^{m}\BIntT\liv
                               \alpha_{\Intg{n}+\Intg{s}}\li
	                       \alpha_{\Intg{n}+\Intg{s}})}\}
\\&\qquad
\vdash
  (\bs b_0 b_1\ldots b_{\Intg{n}} c_1\ldots c_{\Intg{s}}.
   \bs w_0 w_1\ldots w_{\Intg{n}} z_1\ldots z_{\Intg{s}}.
\\&\qquad\quad\phantom{\vdash}
  \bs x.
   x\,
   \BNum{0}\,
   (b_0\, w_0)
   (b_1\, w_1)
   \ldots
   (b_{\Intg{n}}\, w_{\Intg{n}})\,
   (c_1\, z_{1})
   \ldots
   (c_{\Intg{s}}\, z_{\Intg{s}})
)\,(l_0\,d_0)\!\!\!
        \begin{array}[t]{l}
         (l_1\, d_1) \ldots (l_{\Intg{n}}\, d_{\Intg{n}})\\
         (l_{\Intg{n}+1}\, e_{1}) 
	  \ldots (l_{\Intg{n}+\Intg{s}}\,e_{\Intg{s}})
	 \!:\!
        \end{array}
\\&\qquad\qquad
   \$((\li_{i=0}^{\Intg{n}}\alpha_i)\li
      (\li_{j=\Intg{n}+1}^{\Intg{n}+\Intg{s}}\alpha_j)\li
      \forall\gamma.((\$^m\BIntT\liv
                     (\li_{i=0}^{\Intg{n}}\alpha_i)\li
		     (\li_{j=\Intg{n}+1}^{\Intg{n}+\Intg{s}}\alpha_j)\li
		     \gamma)\li\gamma))
\end{align*}
\normalsize
%%%---
As a \textbf{second case}, to show 
$\emptyset;\emptyset;\emptyset\vdash
 \ta{\BInttoBCconf_{1+\Intg{n};\Intg{s}}}
    {(\liv^{\Intg{n}}_{i=1}\$^{3}\BIntT)
     \liv
     (\liv^{\Intg{s}}_{j=1}\$^{m+2}\BIntT)
$
\\
$\liv
    \BIntT\li
     \$\bcConfT[1+\Intg{n};\Intg{s};m]}$
suitably derive and compose the following judgments:
\scriptsize
\begin{align*}
&
\ta{w}{\BIntT};\emptyset;\emptyset\vdash
\ta{\nabla_{1+\Intg{n}+\Intg{s}}\, w}
   {\$(\bigotimes_{i=1}^{1+\Intg{n}+\Intg{s}}\BIntT)}
\\
&
\emptyset;\emptyset;
\{(\ta{n_1}{\$\BIntT}
   ,\ldots,
   \ta{n_{\Intg{n}}}{\$\BIntT},
   \ta{s_1}{\$^{m}\BIntT}
   ,\ldots,
   \ta{s_{\Intg{s}}}{\$^{m}\BIntT}
  ;\emptyset)\}
\\
&
\qquad
\vdash
\bs k_0\,k_1\ldots k_{\Intg{n}}\,
         h_1\ldots h_{\Intg{s}}.
\ListstoConf_{1+\Intg{n};\Intg{s}}\,
  (\UInttoList\, \BNum{0}\, (\USucc\, (\BInttoUInt\, k_0)))
\\
&\qquad
 \phantom{\vdash
          \bs k_0\,k_1\ldots k_{\Intg{n}}\,
                   h_1\ldots h_{\Intg{s}}.
           \ListstoConf_{1+\Intg{n};\Intg{s}}\,}
  (\UInttoList\, n_{1}\,
                 (\USucc\, (\BInttoUInt\, k_1)))
  \ldots
  (\UInttoList\, n_{\Intg{n}}\,
                 (\USucc\, (\BInttoUInt\, k_{\Intg{n}})))
\\
&\qquad
 \phantom{\vdash
          \bs k_0\,k_1\ldots k_{\Intg{n}}\,
                   h_1\ldots h_{\Intg{s}}.
           \ListstoConf_{1+\Intg{n};\Intg{s}}\,}
  (\UInttoList\, s_{1}\,
                 (\USucc\, (\BInttoUInt\, h_1)))
  \ldots
  (\UInttoList\, s_{\Intg{s}}\,
                 (\USucc\, (\BInttoUInt\, h_{\Intg{s}})))
\!:\!
\\
&
\qquad
 \phantom{\vdash \bs}
 (\li_{i=1}^{1+\Intg{n}+\Intg{s}}\BIntT)\li
\bcConfT[1+\Intg{n};\Intg{s};m]
\\
&
\ta{k_0}{\BIntT};\emptyset;\emptyset\vdash
 \ta{\UInttoList\, \BNum{0}\, (\USucc\, (\BInttoUInt\, k_0))}
    {\ListT \$\BIntT}
\\
&
\ta{k_i}{\BIntT};
\emptyset;
\{(\ta{n_i}{\$\BIntT};\emptyset)\}
\vdash
\ta{\UInttoList\, n_{i}\,(\USucc\,(\BInttoUInt\, k_i))}
   {\ListT \$\BIntT}
   \qquad(i\in\{1,\ldots,\Intg{n}\})
\\
&
\ta{h_j}{\BIntT};
\emptyset;
\{(\ta{s_j}{\$^{m}\BIntT};\emptyset)\}
\vdash
\ta{\UInttoList\, s_{j}\,(\USucc\,(\BInttoUInt\, h_j))}
    {\ListT \$^{m}\BIntT}
    \qquad(j\in\{1,\ldots,\Intg{s}\})
\end{align*}
\normalsize
Observe that every
$\emptyset;\emptyset;\{(\ta{h_j}{\$^{m}\BIntT};\emptyset)\}
\vdash\ta{s_j}{\$^{m+1}\BIntT}$, argument of
$\UInttoList$, is obtained by $m+1$ applications of $\$$ to
$\ta{h_j}{\BIntT};\emptyset;\emptyset\vdash\ta{h_j}{\BIntT}$.
\par
As a \textbf{third case}, to show 
$\emptyset;\emptyset;\emptyset\vdash
 \ta{\BCconftoFConf_{1+\Intg{n};\Intg{s}}}
    {\bcConfT[1+\Intg{n};\Intg{s};m]\li
     \FbcConfT[1+\Intg{n};\Intg{s};m]}$
suitably derive and compose the following judgments:
\small
\begin{align*}
&
\ta{c}{\bcConfT[1+\Intg{n};\Intg{s};m]};\emptyset;
\\
&
\{
(\emptyset;\ta{d_0}{\$\BIntT}\liv\alpha_{1}\li\alpha_{1}),
\ldots,
(\emptyset;\ta{d_{\Intg{n}}}{\$\BIntT}\liv\alpha_{1}\li\alpha_{1}),
\\
&\phantom{\{}
(\emptyset;\ta{e_1}{\$^{m}\BIntT}\liv
           \alpha_{\Intg{n}+1}\li\alpha_{\Intg{n}+1}),
\ldots,
(\emptyset;\ta{e_{\Intg{s}}}{\$^{m}\BIntT}\liv
           \alpha_{\Intg{n}+\Intg{s}}\li\alpha_{\Intg{n}+\Intg{s}})
\}
\\
&\phantom{\{}
\vdash
\ta{c\,d_0\ldots d_{\Intg{n}}\,e_1\ldots e_{\Intg{s}}}
   {\$((\li^{\Intg{n}+\Intg{s}}_{i=0}\alpha_{i})\li
       (\forall \gamma.
        (\$^{m}\BIntT\liv
	 (\li^{\Intg{n}+\Intg{s}}_{i=0}\alpha_i)\li\gamma)\li\gamma)
      )}
\\
&
\ta{b}
   {(\li^{\Intg{n}+\Intg{s}}_{i=0}\alpha_i)\li
    (\forall \gamma.
        (\$^{m}\BIntT\liv
	 (\li^{\Intg{n}+\Intg{s}}_{i=1}\alpha_i)\li
	 \gamma)\li\gamma)}
;\emptyset;\emptyset
\\
&
\qquad\qquad\qquad
\vdash
\ta{b\, w_0\ldots w_{\Intg{n}}
     \, z_1\ldots z_{\Intg{s}}
   }
   {(\$^{m}\BIntT\liv
    (\li^{\Intg{n}+\Intg{s}}_{i=0}\alpha_i)\li
    \gamma)\li\gamma}
\end{align*}
\normalsize
%----
As a \textbf{fourth case}, to show 
$\emptyset;\emptyset;\emptyset\vdash
 \ta{\BCconftoBInt_{1+\Intg{n};\Intg{s}}}
    {\FbcConfT[1+\Intg{n};\Intg{s};m]\li\$^{m+1}\BIntT}$
suitably derive and compose the following judgments:
\scriptsize
\begin{align*}
&
\ta{c}{\bcConfT[1+\Intg{n};\Intg{s};m]}
;\emptyset;\emptyset\vdash
\ta{c\,\overbrace{\bs xy.x
                  \cdots 
	 	  \bs xy.x}^{1+\Intg{n}+\Intg{s}}}
   {\$((\li^{\Intg{n}}_{i=0}\$\BIntT)\li
       (\li^{\Intg{n}+\Intg{s}}_{j=\Intg{n}+1}\$^{m}\BIntT)\li
\\
&
\qquad\qquad\qquad\qquad\qquad\qquad\qquad\qquad\qquad\quad
        (\$^{m}\BIntT\liv
	 (\li^{\Intg{n}}_{i=0}\$\BIntT)\li
	 (\li^{\Intg{n}+\Intg{s}}_{j=\Intg{n}+1}\$^{m}\BIntT)\li
	 \$^{m}\BIntT)\li\$^{m}\BIntT
      )}
\\
&
\ta{b}
   {(\li^{\Intg{n}}_{i=0}\$\BIntT)\li
    (\li^{\Intg{n}+\Intg{s}}_{j=\Intg{n}+1}\$^{m}\BIntT)\li
        (\$^{m}\BIntT\liv
	 (\li^{\Intg{n}}_{i=0}\$\BIntT)\li
	 (\li^{\Intg{n}+\Intg{s}}_{j=\Intg{n}+1}\$^{m}\BIntT)\li
	 \$^{m}\BIntT)\li\$^{m}\BIntT}
;\emptyset;\emptyset
\\
&
\qquad\qquad\qquad
\vdash
\ta{b\,\underbrace{\BNum{0}
                   \cdots
		   \BNum{0}}_{1+\Intg{n}+\Intg{s}}\,
           (\bs r x_0\ldots x_{\Intg{n}+\Intg{s}}. r)
   }
   {\$^{m}\BIntT}
\\
&
\emptyset;
\ta{x_0}{\BIntT},
\ldots,
\ta{x_{\Intg{n}}}{\BIntT},
\ta{x_{\Intg{n}+1}}{\$^{m-1}\BIntT},
\ldots,
\ta{x_{\Intg{n}+\Intg{s}}}{\$^{m-1}\BIntT};
\{(\ta{r}{\$^{m-1}\BIntT};\emptyset)\}
\vdash
\ta{r}
   {\$^{m}\BIntT}
\end{align*}
\normalsize
Observe that the last judgment comes from applying $m$ applications of $\$$ to $\ta{r}{\BIntT};\emptyset;\emptyset\vdash\ta{r}{\BIntT}$, the last one introducing $\ta{x_0}{\BIntT},
\ldots,
\ta{x_{\Intg{n}}}{\BIntT},
\ta{x_{\Intg{n}+1}}{\$^{m-1}\BIntT},
\ldots,$\\
$\ta{x_{\Intg{n}+\Intg{s}}}{\$^{m-1}\BIntT}$ as fake assumptions.
%%%---
\par
As a \textbf{fifth case}, to show 
$\emptyset;\emptyset;\emptyset
 \vdash
 \ta{\Iter{1+\Intg{n}}{\Intg{s}}{G_0}{G_1}{G_2}}
    {\$\BIntT\liv
     (\liv^{\Intg{n}}_{i=1}\$\BIntT)\liv
     (\liv^{\Intg{s}}_{i=1}\$^{m+4}\BIntT)\liv
     \$^{m+4}\BIntT
    }
$, starting from
$\emptyset;\emptyset;\emptyset
 \vdash
 \ta{G_k}
    {\$\BIntT\liv
     (\liv^{\Intg{n}}_{i=1}\$\BIntT)\liv
     (\liv^{\Intg{s}}_{j=1}\$^{m}\BIntT)\liv
     \$^{m}\BIntT\liv
     \$^{m}\BIntT
    }
$, with  $k\in\{0,1,2\}$,
suitably derive and compose the following judgments:
\small
\begin{align*}
&
\emptyset;\emptyset;\{(\ta{n}{\BIntT};\emptyset)\}
\vdash
\ta{\LEmbed{1}{1}{\nabla_2}\,n}
   {\$^2(\BIntT\otimes\BIntT)}
\\
&
\emptyset;\emptyset;\{(\ta{n_i}{\BIntT};\emptyset)\}
\vdash
\ta{\LEmbed{1}{1}{\Coerc^4}\,n_i}
   {\$^5\BIntT}
\qquad
(i\in\{1,\ldots,\Intg{n}\})
\\
&
\emptyset;\emptyset;\emptyset\vdash
\ta{\EEmbed{1}{0}{1+\Intg{n}+\Intg{s}}{H}}
   {\begin{array}[t]{l}
    \$^2(\BIntT\otimes\BIntT)\liv
    \\\quad
    (\liv^{\Intg{n}}_{i=1}\$^5\BIntT)\liv
    (\liv^{\Intg{n}+\Intg{s}}_{j=\Intg{n}+1}\$^{m+4}\BIntT)\liv
    \$^{m+4}\BIntT
    \end{array}
   }
\\
&
\emptyset;\emptyset;
\{(\ta{n_{1}}{\$^2\BIntT},\ldots,\ta{n_{\Intg{n}}}{\$^2\BIntT},
   \ta{s_{1}}{\$^{m+1}\BIntT},
   \ldots,
   \ta{s_{\Intg{s}}}{\$^{m+1}\BIntT};\emptyset)\}
\\
&
\qquad
\vdash
\begin{array}[t]{l}
\bs ab. H'\,
 (a\,\TransFunc_{1+\Intg{n};\Intg{s}}[\BSuccZ,G_0]
   \,\TransFunc_{1+\Intg{n};\Intg{s}}[\BSuccO,G_1])
\\
\phantom{\bs ab.H'\,}
	 (\BInttoBCconf\,n_1\ldots n_{\Intg{n}}
	               \,s_1\ldots s_{\Intg{s}}
		       \,b)
\!:\!\BIntT\li\BIntT\li\$^{m+2}\BIntT
\end{array}
\\
&
\ta{b}{\BIntT};\emptyset;
\{(\ta{n_{1}}{\$^2\BIntT},\ldots,\ta{n_{\Intg{n}}}{\$^2\BIntT},
   \ta{s_{1}}{\$^{m+1}\BIntT},
   \ldots,
   \ta{s_{\Intg{s}}}{\$^{m+1}\BIntT};\emptyset)\}
\\
&
\qquad\qquad\qquad\qquad\qquad\qquad\qquad
\vdash \BInttoBCconf\,n_1\ldots n_{\Intg{n}}
	            \,s_1\ldots s_{\Intg{s}}
		    \,b
\!:\!\$\bcConfT[1+\Intg{n};\Intg{s};m]
\\
&
\ta{a}{\BIntT};\emptyset;\emptyset
\vdash 
a\,\TransFunc_{1+\Intg{n};\Intg{s}}[\BSuccZ,G_0]
 \,\TransFunc_{1+\Intg{n};\Intg{s}}[\BSuccO,G_1]
\!:\!\$(\bcConfT[1+\Intg{n};\Intg{s};m]\li
        \bcConfT[1+\Intg{n};\Intg{s};m])
\\
&
\emptyset;\emptyset;\emptyset
\vdash
\ta{H'}
{\$(\bcConfT[1+\Intg{n};\Intg{s};m]\li
    \bcConfT[1+\Intg{n};\Intg{s};m])\li
 \$\bcConfT[1+\Intg{n};\Intg{s};m]\li
 \$^{m+2}\BIntT}
\end{align*}
\normalsize
where:
\small
\begin{align*}
H&\equiv
   \bs t
       n_1\ldots n_{\Intg{n}}
       s_1\ldots s_{\Intg{s}}.
\\
&
   t\, (\bs ab.
        H'\,(a\,\TransFunc_{1+\Intg{n};\Intg{s}}[\BSuccZ,G_0]
              \,\TransFunc_{1+\Intg{n};\Intg{s}}[\BSuccO,G_1])
	 (\BInttoBCconf\,n_1\ldots n_{\Intg{n}}
	               \,s_1\ldots s_{\Intg{s}}
		       \,b)
       )
\\
H'&\equiv
\bs zy.
	 \BCconftoBInt_{1+\Intg{n};\Intg{s}}\, 
         (\BCconftoFConf_{1+\Intg{n};\Intg{s}}
	   \,(z\,(\TransFunc_{1+\Intg{n};\Intg{s}}[\Id,G_2]\,y)
	     ))
\end{align*}
\normalsize
Observe that every
$\emptyset;\emptyset;\{(\ta{n_i}{\$^{2}\BIntT};\emptyset)\}
\vdash\ta{n_i}{\$^{3}\BIntT}$, and
$\emptyset;\emptyset;\{(\ta{s_j}{\$^{m+1}\BIntT};\emptyset)\}
\vdash\ta{s_j}{\$^{m+2}\BIntT}$, 
arguments of $\BInttoBCconf$, are obtained by a suitable number of applications of $\$$ to
$\ta{n_i}{\BIntT};\emptyset;\emptyset\vdash\ta{n_i}{\BIntT}$, and
$\ta{s_j}{\BIntT};\emptyset;\emptyset\vdash\ta{s_j}{\BIntT}$, respectively.
%%%%%%%%%%%---------------
\paragraph*{Proof of Proposition~\ref{proposition:Dynamics of the combinators for the iterator}}
The dynamics of $\ListstoConf_{1+\Intg{n};\Intg{s}}$ is:
\scriptsize
\begin{align*}
&
\ListstoConf_{1+\Intg{n};\Intg{s}}\,
 [\BNum{a_{1}},\ldots,\BNum{a_{\Intg{r}}}]\,
 [\BNum{n_{11}},\ldots,\BNum{n_{1\Intg{r}}}]
 \ldots
 [\BNum{n_{\Intg{n}1}},\ldots,\BNum{n_{\Intg{n}\Intg{r}}}]\,
 [\BNum{s_{11}},\ldots,\BNum{s_{1\Intg{r}}}]
 \ldots
 [\BNum{s_{\Intg{s}1}},\ldots,\BNum{s_{\Intg{s}\Intg{r}}}]
\red^+
\\
& %---------V
\bs d_0 d_1\ldots d_{\Intg{n}}e_1\ldots e_{\Intg{s}}.
\\
&
(\bs b_0 b_1\ldots b_{\Intg{n}}c_1\ldots c_{\Intg{s}}.
 \bs w_0 w_1\ldots w_{\Intg{n}}z_1\ldots z_{\Intg{s}}.
 \bs x.x\, \BNum{0} (b_0\, w_0)(b_1\, w_1)\ldots(b_{\Intg{n}}\, w_{\Intg{n}})
                               (c_1\, z_1)\ldots(c_{\Intg{s}}\, z_{\Intg{s}})
\\
&
)
 ([\BNum{a_{1}},\ldots,\BNum{a_{\Intg{r}}}]\, d_0)
 ([\BNum{n_{11}},\ldots,\BNum{n_{1\Intg{r}}}]\, d_1)
 \ldots
 ([\BNum{n_{\Intg{n}1}},\ldots,\BNum{n_{\Intg{n}\Intg{r}}}]\, d_{\Intg{n}})
 ([\BNum{s_{11}},\ldots,\BNum{s_{1\Intg{r}}}]\, s_1)
 \ldots
 ([\BNum{s_{\Intg{s}1}},\ldots,\BNum{s_{\Intg{s}\Intg{r}}}]\, s_{\Intg{s}})
\red^+
\\ %---------^
& %---------V
\bs d_0 d_1\ldots d_{\Intg{n}}e_1\ldots e_{\Intg{s}}.
\\
&
(\bs b_0 b_1\ldots b_{\Intg{n}}c_1\ldots c_{\Intg{s}}.
 \bs w_0 w_1\ldots w_{\Intg{n}}z_1\ldots z_{\Intg{s}}.
 \bs x.x\, \BNum{0} (b_0\, w_0)(b_1\, w_1)\ldots(b_{\Intg{n}}\, w_{\Intg{n}})
                               (c_1\, z_1)\ldots(c_{\Intg{s}}\, z_{\Intg{s}})
\\
&
)
 (\bs x. d_0\, \BNum{a_{1}}(\ldots(d_0\, \BNum{a_{\Intg{r}}}\, x)\ldots))
\\
&
\phantom{)}
 (\bs x. d_1\, \BNum{n_{11}}(\ldots(d_1\,\BNum{n_{1\Intg{r}}}\, x)\ldots))
 \ldots
 (\bs x. d_\Intg{n}\, \BNum{n_{\Intg{n}1}}(\ldots(d_\Intg{n}\,\BNum{n_{\Intg{n}\Intg{r}}}\, x)\ldots))
\\&
\phantom{)}
 (\bs x. e_1\, \BNum{s_{11}}(\ldots(e_1\,\BNum{s_{1\Intg{r}}}\, x)\ldots))
 \ldots
 (\bs x. e_\Intg{s}\, \BNum{s_{\Intg{s}1}}(\ldots(s_\Intg{s}\,\BNum{s_{\Intg{s}\Intg{r}}}\, x)\ldots))
\red^+
\\ %---------^
& %---------V
\bs d_0 d_1\ldots d_{\Intg{n}}e_1\ldots e_{\Intg{s}}.
\\
&
(\bs b_0 b_1\ldots b_{\Intg{n}}c_1\ldots c_{\Intg{s}}.
 \bs w_0 w_1\ldots w_{\Intg{n}}z_1\ldots z_{\Intg{s}}.
 \bs x.x\, \BNum{0} (b_0\, w_0)(b_1\, w_1)\ldots(b_{\Intg{n}}\, w_{\Intg{n}})
                               (c_1\, z_1)\ldots(c_{\Intg{s}}\, z_{\Intg{s}})
\\
&
)\bs x.x\, \BNum{0}\,
 (d_0\, \BNum{a_{1}}(\ldots(d_0\, \BNum{a_{\Intg{r}}}\, w_0)\ldots))
\\
&
\phantom{)\bs x.x\, \BNum{0}\,}
 (d_1\, \BNum{n_{11}}(\ldots(d_1\,\BNum{n_{1\Intg{r}}}\, w_1)\ldots))
 \ldots
 (d_\Intg{n}\, \BNum{n_{\Intg{n}1}}(\ldots(d_\Intg{n}\,\BNum{n_{\Intg{n}\Intg{r}}}\, w_{\Intg{n}})\ldots))
\\
&
\phantom{)\bs x.x\, \BNum{0}\,}
 (e_1\, \BNum{s_{11}}(\ldots(e_1\,\BNum{s_{1\Intg{r}}}\, z_1)\ldots))
 \ldots
 (e_\Intg{s}\, \BNum{s_{\Intg{s}1}}(\ldots(s_\Intg{s}\,\BNum{s_{\Intg{s}\Intg{r}}}\, z_{\Intg{s}})\ldots))
\red^+
\\ %---------^
&
\bcConf{
 \BNum{0} 
,[\BNum{a_{1}},\ldots,\BNum{a_{\Intg{r}}}]
,[\BNum{n_{11}},\ldots,\BNum{n_{1\Intg{r}}}]
,\ldots
,[\BNum{n_{\Intg{n}1}},\ldots,\BNum{n_{\Intg{n}\Intg{r}}}]
,[\BNum{s_{11}},\ldots,\BNum{s_{1\Intg{r}}}]
,\ldots
,[\BNum{s_{\Intg{s}1}},\ldots,\BNum{s_{\Intg{s}\Intg{r}}}]
}
\end{align*}
\normalsize
%----
The dynamics of $\BInttoBCconf_{1+\Intg{n};\Intg{s}}$ is:
\small
\begin{align*}
&
\BInttoBCconf_{1+\Intg{n};\Intg{s}}\,
 \BNum{n_{1}} \ldots \BNum{n_{\Intg{n}}}\,
 \BNum{s_{1}} \ldots \BNum{s_{\Intg{s}}}\,
 \BNum{n}
\red^+
\\
&
 %------------- V
(\bs t.t(\bs k_0 k_1\ldots k_{\Intg{n}} h_1\ldots h_{\Intg{s}}.
\\
&
\phantom{(\bs t.t(}
         \ListstoConf_{1+\Intg{n};\Intg{s}}\,(\UInttoList\,\BNum{0}(\USucc(\BInttoUInt\, k_0)))
\\
&
\phantom{(\bs t.t(\ListstoConf_{1+\Intg{n};\Intg{s}}\,}
         (\UInttoList\,\BNum{n_{1}}(\USucc(\BInttoUInt\, k_1)))
         \ldots
         (\UInttoList\,\BNum{n_{\Intg{n}}}(\USucc(\BInttoUInt\, k_{\Intg{n}})))
\\
&
\phantom{(\bs t.t(\ListstoConf_{1+\Intg{n};\Intg{s}}\,}
         (\UInttoList\,\BNum{s_{1}}(\USucc(\BInttoUInt\, h_1)))
         \ldots
         (\UInttoList\,\BNum{s_{\Intg{s}}}(\USucc(\BInttoUInt\, h_{\Intg{s}}))))
\\
&
)(\nabla_{1+\Intg{n}+\Intg{s}}\, \BNum{n})
\red^{+}
\\ %------------- ^
& %------------- V
(\bs t.t(\bs k_0 k_1\ldots k_{\Intg{n}} h_1\ldots h_{\Intg{s}}.
\\
&
\phantom{(\bs t.t(}
         \ListstoConf_{1+\Intg{n};\Intg{s}}\,(\UInttoList\,\BNum{0}(\USucc(\BInttoUInt\, k_0)))
\\
&
\phantom{(\bs t.t(\ListstoConf_{1+\Intg{n};\Intg{s}}\,}
         (\UInttoList\,\BNum{n_{1}}(\USucc(\BInttoUInt\, k_1)))
         \ldots
         (\UInttoList\,\BNum{n_{\Intg{n}}}(\USucc(\BInttoUInt\, k_{\Intg{n}})))
\\
&
\phantom{(\bs t.t(\ListstoConf_{1+\Intg{n};\Intg{s}}\,}
         (\UInttoList\,\BNum{s_{1}}(\USucc(\BInttoUInt\, h_1)))
         \ldots
         (\UInttoList\,\BNum{s_{\Intg{s}}}(\USucc(\BInttoUInt\, h_{\Intg{s}}))))
\\
&
)\lan\underbrace{\BNum{n},\ldots,\BNum{n}}_{1+\Intg{n}+\Intg{s}}\ran
\red^{+}
\\ %------------- ^
& %------------- V
(\bs z.z\, \underbrace{\BNum{n}\ldots\BNum{n}}_{1+\Intg{n}+\Intg{s}})
(\bs k_0 k_1\ldots k_{\Intg{n}} h_1\ldots h_{\Intg{s}}.
\\
&
\phantom{(\bs z.z\, \BNum{n}\ldots\BNum{n})(} \ListstoConf_{1+\Intg{n};\Intg{s}}\,(\UInttoList\,\BNum{0}(\USucc(\BInttoUInt\, k_0)))
\\
&
\phantom{(\bs z.z\, \BNum{n}\ldots\BNum{n})
         \ListstoConf_{1+\Intg{n};\Intg{s}}\,(
        }
         (\UInttoList\,\BNum{n_{1}}(\USucc(\BInttoUInt\, k_1)))
         \ldots
         (\UInttoList\,\BNum{n_{\Intg{n}}}(\USucc(\BInttoUInt\, k_{\Intg{n}})))
\\
&
\phantom{(\bs z.z\, \BNum{n}\ldots\BNum{n})
         \ListstoConf_{1+\Intg{n};\Intg{s}}\,(
        }
         (\UInttoList\,\BNum{s_{1}}(\USucc(\BInttoUInt\, h_1)))
         \ldots
         (\UInttoList\,\BNum{s_{\Intg{s}}}(\USucc(\BInttoUInt\, h_{\Intg{s}}))))
\red^{+}
\\ %------------- ^
& %------------- V
\ListstoConf_{1+\Intg{n};\Intg{s}}\,(\UInttoList\,\BNum{0}(\USucc(\BInttoUInt\, \BNum{n})))
\\
&
\phantom{\ListstoConf_{1+\Intg{n};\Intg{s}}\,
        }
         (\UInttoList\,\BNum{n_{1}}(\USucc(\BInttoUInt\, \BNum{n})))
         \ldots
         (\UInttoList\,\BNum{n_{\Intg{n}}}(\USucc(\BInttoUInt\, \BNum{n})))
\\
&
\phantom{\ListstoConf_{1+\Intg{n};\Intg{s}}\,
        }
         (\UInttoList\,\BNum{s_{1}}(\USucc(\BInttoUInt\, \BNum{n})))
         \ldots
         (\UInttoList\,\BNum{s_{\Intg{s}}}(\USucc(\BInttoUInt\, \BNum{n}))))
\red^{+}
\\ %------------- ^
& %------------- V
\ListstoConf_{1+\Intg{n};\Intg{s}}\,(\UInttoList\,\BNum{0}\, \BNum{m+2})
\\
&
\phantom{\ListstoConf_{1+\Intg{n};\Intg{s}}\,
        }
         (\UInttoList\,\BNum{n_{1}}\,\BNum{m+2})
         \ldots
         (\UInttoList\,\BNum{n_{\Intg{n}}}\,\BNum{m+2})
\\
&
\phantom{\ListstoConf_{1+\Intg{n};\Intg{s}}\,
        }
         (\UInttoList\,\BNum{s_{1}}\,\BNum{m+2})
         \ldots
         (\UInttoList\,\BNum{s_{\Intg{s}}}\,\BNum{m+2}))
\red^{+}
\\ %------------- ^
& %------------- V
\ListstoConf_{1+\Intg{n};\Intg{s}}
\,[\overbrace{\BNum{0},\ldots,\BNum{0}}^{m+2}]
\,[\overbrace{\BNum{n_{1}},\ldots,\BNum{n_{1}}}^{m+2}]
  \ldots
  [\overbrace{\BNum{n_{\Intg{n}}},\ldots,\BNum{n_{\Intg{n}}}}^{m+2}]
\,[\overbrace{\BNum{s_{1}},\ldots,\BNum{s_{1}}}^{m+2}]
  \ldots
  [\overbrace{\BNum{s_{\Intg{s}}},\ldots,\BNum{s_{\Intg{s}}}}^{m+2}]
\red^{+}
\\ %------------- ^
&
\bcConf{
 \BNum{0}
,[\BNum{0},\ldots,\BNum{0}]
,[\BNum{n_{1}},\ldots,\BNum{n_{1}}]
,\ldots
,[\BNum{n_{\Intg{n}}},\ldots,\BNum{n_{\Intg{n}}}]
,[\BNum{s_{1}},\ldots,\BNum{s_{1}}]
,\ldots
,[\BNum{s_{\Intg{s}}},\ldots,\BNum{s_{\Intg{s}}}]
}
\end{align*}
\normalsize
where we assume $m+1$ be the number of binary digits of $\BNum{n}$, namely its length $\size{\BNum{n}}$, so $m+2$ is $\size{\BNum{n}}+1$.
%----
\par
The dynamics of $\BCconftoFConf_{1+\Intg{n};\Intg{s}}$ is:
\scriptsize
\begin{align*}
&
\BCconftoFConf_{1+\Intg{n};\Intg{s}}\,
\lan\!\lan
 \BNum{r},
 [\BNum{a_{1}},\ldots,\BNum{a_{\Intg{r}}}]
  \!\!\!
  \begin{array}[t]{l}
  ,[\BNum{n_{11}},\ldots,\BNum{n_{1\Intg{r}}}]
  ,\ldots
  ,[\BNum{n_{\Intg{n}1}},\ldots,\BNum{n_{\Intg{n}\Intg{r}}}]
  ,[\BNum{s_{11}},\ldots,\BNum{s_{1\Intg{r}}}]
  ,\ldots
  ,[\BNum{s_{\Intg{s}1}},\ldots,\BNum{s_{\Intg{s}\Intg{r}}}]
  \ran\!\ran
\red^+
 \end{array}
\\
& %--------- V
\bs d_0\ldots d_{\Intg{n}}e_1\ldots e_{\Intg{s}}.
\\
&
(\bs b  w_0\ldots w_{\Intg{n}}  z_1\ldots z_{\Intg{s}}.
     b\,w_0\ldots w_{\Intg{n}}\,z_1\ldots z_{\Intg{s}})
\\
&
\phantom{(}
(
 \lan\!\lan
  \BNum{r},
  [\BNum{a_{1}},\ldots,\BNum{a_{\Intg{r}}}]
   \!\!\!
   \begin{array}[t]{l}
   ,[\BNum{n_{11}},\ldots,\BNum{n_{1\Intg{r}}}]
   ,\ldots
   ,[\BNum{n_{\Intg{n}1}},\ldots,\BNum{n_{\Intg{n}\Intg{r}}}]
   ,[\BNum{s_{11}},\ldots,\BNum{s_{1\Intg{r}}}]
   ,\ldots
   ,[\BNum{s_{\Intg{s}1}},\ldots,\BNum{s_{\Intg{s}\Intg{r}}}]
   \ran\!\ran\,d_0\ldots d_{\Intg{n}}\,e_1\ldots e_{\Intg{s}})
   \red^+
 \end{array}
\\ %--------- ^
& %--------- V
\bs d_0\ldots d_{\Intg{n}}e_1\ldots e_{\Intg{s}}.
\\
&
(\bs b  w_0\ldots w_{\Intg{n}}  z_1\ldots z_{\Intg{s}}.
     b\,w_0\ldots w_{\Intg{n}}\,z_1\ldots z_{\Intg{s}})
\\
&
\phantom{(}
(
\bs w_0 w_1\ldots w_{\Intg{n}}
        z_1\ldots z_{\Intg{s}}.
\begin{array}[t]{ll}
\bs x.
 x\,
 \BNum{r}\,
 (d_0\,\BNum{a_1}
   (\cdots(d_0\,\BNum{a_{\Intg{r}}}\,w_0)\cdots))\,
 \\\phantom{\bs x.x\,\BNum{r}\,}
 (d_1\,\BNum{n_{11}}
   (\cdots(d_1\BNum{n_{1\Intg{r}}}\,w_1)\cdots))
 \ldots
 (d_{\Intg{n}}\,\BNum{n_{\Intg{n}1}}
   (\cdots
     (d_{\Intg{n}}\,\BNum{n_{\Intg{n}\Intg{r}}}
          \,w_{\Intg{n}})\cdots))\,
 \\\phantom{\bs x.x\,\BNum{r}\,}
 (e_1\BNum{s_{11}}
   (\cdots(e_1\BNum{s_{1\Intg{r}}}z_1)\cdots))
 \ldots
 (e_{\Intg{s}}\BNum{s_{\Intg{s}1}}
   (\cdots
         (e_{\Intg{s}}\BNum{s_{\Intg{s}\Intg{r}}}
	    \,z_{\Intg{s}})\cdots))
\end{array}
\\
&
\phantom{(}
)\red^+
\\ %--------- ^
& %--------- V
\bs d_0\ldots d_{\Intg{n}}e_1\ldots e_{\Intg{s}}.
\bs w_0\ldots w_{\Intg{n}}  z_1\ldots z_{\Intg{s}}.
\begin{array}[t]{ll}
\bs x.
 x\,
 \BNum{r}\,
 (d_0\,\BNum{a_1}
   (\cdots(d_0\,\BNum{a_{\Intg{r}}}\,w_0)\cdots))\,
 \\\phantom{\bs x.x\,\BNum{r}\,}
 (d_1\,\BNum{n_{11}}
   (\cdots(d_1\BNum{n_{1\Intg{r}}}\,w_1)\cdots))
 \ldots
 (d_{\Intg{n}}\,\BNum{n_{\Intg{n}1}}
   (\cdots
     (d_{\Intg{n}}\,\BNum{n_{\Intg{n}\Intg{r}}}
          \,w_{\Intg{n}})\cdots))\,
 \\\phantom{\bs x.x\,\BNum{r}\,}
 (e_1\BNum{s_{11}}
   (\cdots(e_1\BNum{s_{1\Intg{r}}}z_1)\cdots))
 \ldots
 (e_{\Intg{s}}\BNum{s_{\Intg{s}1}}
   (\cdots
         (e_{\Intg{s}}\BNum{s_{\Intg{s}\Intg{r}}}
	    \,z_{\Intg{s}})\cdots))\equiv
\end{array}
\\ %--------- ^
&
\lan\!\lan
 \BNum{r},
 [\BNum{a_{1}},\ldots,\BNum{a_{\Intg{r}}}]
  ,[\BNum{n_{11}},\ldots,\BNum{n_{1\Intg{r}}}]
  ,\ldots
  ,[\BNum{n_{\Intg{n}1}},\ldots,\BNum{n_{\Intg{n}\Intg{r}}}]
  ,[\BNum{s_{11}},\ldots,\BNum{s_{1\Intg{r}}}]
  ,\ldots
  ,[\BNum{s_{\Intg{s}1}},\ldots,\BNum{s_{\Intg{s}\Intg{r}}}]
  \ran\!\ran
\end{align*}
\normalsize
%----
\par
The dynamics of $\BCconftoBInt_{1+\Intg{n};\Intg{s}}$ is:
\scriptsize
\begin{align*}
&
\BCconftoBInt_{1+\Intg{n};\Intg{s}}\,
\lan\!\lan
 \BNum{r},
 [\BNum{a_{1}},\ldots,\BNum{a_{\Intg{r}}}]
  \!\!\!
  \begin{array}[t]{l}
  ,[\BNum{n_{11}},\ldots,\BNum{n_{1\Intg{r}}}]
  ,\ldots
  ,[\BNum{n_{\Intg{n}1}},\ldots,\BNum{n_{\Intg{n}\Intg{r}}}]
  ,[\BNum{s_{11}},\ldots,\BNum{s_{1\Intg{r}}}]
  ,\ldots
  ,[\BNum{s_{\Intg{s}1}},\ldots,\BNum{s_{\Intg{s}\Intg{r}}}]
  \ran\!\ran
  \red^+
 \end{array}
\\
& %--------- V
(\bs b. b\,\underbrace{\BNum{0}
                       \cdots
		       \BNum{0}}_{1+\Intg{n}+\Intg{s}}\,
           (\bs r x_0\ldots x_{\Intg{n}+\Intg{s}}. r)
)(
\lan\!\lan
 \BNum{r},
 [\BNum{a_{1}},\ldots,\BNum{a_{\Intg{r}}}]
  \!\!\!
  \begin{array}[t]{l}
  ,[\BNum{n_{11}},\ldots,\BNum{n_{1\Intg{r}}}]
  ,\ldots
  ,[\BNum{n_{\Intg{n}1}},\ldots,\BNum{n_{\Intg{n}\Intg{r}}}]
   \\
  ,[\BNum{s_{11}},\ldots,\BNum{s_{1\Intg{r}}}]
  ,\ldots
  ,[\BNum{s_{\Intg{s}1}},\ldots,\BNum{s_{\Intg{s}\Intg{r}}}]
  \quad
  \ran\!\ran
  \,\underbrace{(\bs xy.x)\cdots (\bs xy.x)}_{1+\Intg{n}+\Intg{s}})
\red^+
\end{array}
\\ %--------- ^
& %--------- V
(\bs x.
 x\,
 \BNum{r}\,
 ((\bs xy.x)\,\BNum{a_1}
   (\cdots((\bs xy.x)\,\BNum{a_{\Intg{r}}}\,\BNum{0})\cdots))\,
\\
&
\phantom{(\bs x.x\,\BNum{r}\,}
 ((\bs xy.x)\,\BNum{n_{11}}
   (\cdots((\bs xy.x)\BNum{n_{1\Intg{r}}}\,\BNum{0})\cdots))
 \ldots
 ((\bs xy.x)\,\BNum{n_{\Intg{n}1}}
   (\cdots
     ((\bs xy.x)\,\BNum{n_{\Intg{n}\Intg{r}}}
          \,\BNum{0})\cdots))\,
\\
&
\phantom{(\bs x.x\,\BNum{r}\,}
 ((\bs xy.x)\BNum{s_{11}}
   (\cdots((\bs xy.x)\BNum{s_{1\Intg{r}}}\,\BNum{0})\cdots))
 \ldots
 ((\bs xy.x)\BNum{s_{\Intg{s}1}}
   (\cdots
         ((\bs xy.x)\BNum{s_{\Intg{s}\Intg{r}}}
	    \,\BNum{0})\cdots))
\\
&
)(\bs r x_0\ldots x_{\Intg{n}+\Intg{s}}. r)\red^+\BNum{r}
\end{align*}
\normalsize
%%%%%%%%%%%---------------
\paragraph*{Proof of Proposition~\ref{proposition:Dynamics of the iterator}, Point~\ref{proposition:Dynamics of the iterator-part-a}}
We proceed by induction on $m-i$.
\textbf{We assume $m-i=0$.} Namely, $m=i$. We need to prove that:
\small
\[
\TransFunc_{1+\Intg{n};\Intg{s}}[\BSucc{\nu_m},G_{\nu_m}]
\lan\!\lan
 \BNum{a},
 [\BNum{0}]^k
  ,[\BNum{n_{1}}]^k
  ,\ldots
  ,[\BNum{n_{\Intg{n}}}]^k
  ,[\BNum{s_{1}}]^k
  ,\ldots
  ,[\BNum{s_{\Intg{s}}}]^k
\ran\!\ran
\]
\normalsize
rewrites to:
\small
\[
\lan\!\lan
 r[0,a,n_1,\ldots,n_{\Intg{n}},s_1,\ldots,s_{\Intg{s}}],
 \left[ \,\BNum{\sum_{j=0}^{0}2^{0} \nu_{m}}\,\right]^{k-1}
\begin{array}[t]{l}
  ,[\BNum{n_{1}}]^{k-1}
  ,\ldots
  ,[\BNum{n_{\Intg{n}}}]^{k-1}\\
  ,[\BNum{s_{1}}]^{k-1}
  ,\ldots
  ,[\BNum{s_{\Intg{s}}}]^{k-1}
\quad 
\ran\!\ran
\end{array}
\]
\normalsize
which is 
$\lan\!\lan
 r[0,a,n_1,\ldots,n_{\Intg{n}},s_1,\ldots,s_{\Intg{s}}],
 [\BNum{\nu_{m}}]^{k-1}
,[\BNum{n_{1}}]^{k-1}
,\ldots
,[\BNum{n_{\Intg{n}}}]^{k-1}
,[\BNum{s_{1}}]^{k-1}
,\ldots
,$ \\ $[\BNum{s_{\Intg{s}}}]^{k-1}
\ran\!\ran
$.
Since, by definition of words, $\nu_m$ is $1$, we have to prove that:
\\
$\TransFunc_{1+\Intg{n};\Intg{s}}[\BSuccO,G_{1}]
\lan\!\lan
 \BNum{a},
 [\BNum{0}]^k
  ,[\BNum{n_{1}}]^k
  ,\ldots
  ,[\BNum{n_{\Intg{n}}}]^k
  ,[\BNum{s_{1}}]^k
  ,\ldots
  ,[\BNum{s_{\Intg{s}}}]^k
\ran\!\ran
$ rewrites to
$\lan\!\lan
 r[0,a,n_1,$
 \\
$\ldots,n_{\Intg{n}},s_1,\ldots,s_{\Intg{s}}],
 [\BNum{1}]^{k-1}
,[\BNum{n_{1}}]^{k-1}
,\ldots
,[\BNum{n_{\Intg{n}}}]^{k-1}
,[\BNum{s_{1}}]^{k-1}
,\ldots
,[\BNum{s_{\Intg{s}}}]^{k-1}
\ran\!\ran
$.
But this can be obtained using Proposition~\ref{proposition:Dynamics of the transition function}, with $F\equiv \BSuccO$ and $F'\equiv G_{1}$, Proposition~\ref{proposition:Dynamics of combinators relative to words}, and the assumption on $G_1$.
\par
\textbf{We assume $m-i>0$.} Namely, $m>i$. By induction, we have that:
\scriptsize
\begin{align}
\label{Dyn-iter-part-a-01}
\TransFunc_{1+\Intg{n};\Intg{s}}[\BSucc{\nu_{i-1}},G_{\nu_{i-1}}]
(\ldots
(
\TransFunc_{1+\Intg{n};\Intg{s}}[\BSucc{\nu_m},G_{\nu_m}]
\lan\!\lan
 \BNum{a},
 [\BNum{0}]^k
  ,[\BNum{n_{1}}]^k
  ,\ldots
  ,[\BNum{n_{\Intg{n}}}]^k
  ,[\BNum{s_{1}}]^k
  ,\ldots
  ,[\BNum{s_{\Intg{s}}}]^k
\ran\!\ran
)\ldots)
\end{align}
\normalsize
rewrites to the word:
\small
\[
\lan\!\lan
 r[m-(i+1),a,n_1,\ldots,n_{\Intg{n}},s_1,\ldots,s_{\Intg{s}}]
\begin{array}[t]{l}
   ,\left[ \,\BNum{\sum_{j=0}^{m-(i+1)}2^{m-(i+1)-j} \nu_{m-j}}
    \,\right]^{k-(m-i-1)-1}
   \\
  ,[\BNum{n_{1}}]^{k-(m-i-1)-1}
  ,\ldots
  ,[\BNum{n_{\Intg{n}}}]^{k-(m-i-1)-1}\\
  ,[\BNum{s_{1}}]^{k-(m-i-1)-1}
  ,\ldots
  ,[\BNum{s_{\Intg{s}}}]^{k-(m-i-1)-1}
\quad 
\ran\!\ran
\end{array}
\]
\normalsize
If we apply 
$\TransFunc_{1+\Intg{n};\Intg{s}}[\BSucc{\nu_{i}},G_{\nu_{i}}]$ to \eqref{Dyn-iter-part-a-01}, we end up to calculate:
\scriptsize
\begin{align}
\label{Dyn-iter-part-a-02}
\TransFunc_{1+\Intg{n};\Intg{s}}[\BSucc{\nu_{i}},G_{\nu_{i}}]
\lan\!\lan
 r[m-(i+1),a,n_1,\ldots,n_{\Intg{n}},s_1,\ldots,s_{\Intg{s}}]
\begin{array}[t]{l}
   ,\left[ \,\BNum{\sum_{j=0}^{m-(i+1)}2^{m-(i+1)-j} \nu_{m-j}}
    \,\right]^{k-(m-i-1)-1}
   \\
  ,[\BNum{n_{1}}]^{k-(m-i-1)-1}
  ,\ldots
  ,[\BNum{n_{\Intg{n}}}]^{k-(m-i-1)-1}\\
  ,[\BNum{s_{1}}]^{k-(m-i-1)-1}
  ,\ldots
  ,[\BNum{s_{\Intg{s}}}]^{k-(m-i-1)-1}
\quad 
\ran\!\ran
\end{array}
\end{align}
\normalsize
Now, Proposition~\ref{proposition:Dynamics of the transition function}, with $F\equiv \BSucc{\nu_i}$ and $F'\equiv G_{\nu_i}$, the assumption:
\scriptsize
\begin{align*}
&G_{\nu_i}\,
 \BNum{\left(\sum_{j=0}^{m-(i+1)} 2^{m-(i+1)-j} \nu_{m-j}\right)}\,
 \BNum{n_{1}}\,\ldots\,\BNum{n_{\Intg{n}}}\,
 \BNum{s_{1}}\,\ldots\,\BNum{s_{\Intg{s}}}\,
 r[m-(i+1),a,n_1,\ldots,n_{\Intg{n}},s_1,\ldots,s_{\Intg{s}}]
\\
&
\qquad\qquad\qquad\qquad\qquad\qquad\qquad\qquad\qquad\qquad\qquad\qquad
\qquad\qquad\qquad
\red^+
r[m-i,a,n_1,\ldots,n_{\Intg{n}},s_1,\ldots,s_{\Intg{s}}]
\end{align*}
\normalsize
and Proposition~\ref{proposition:Dynamics of combinators relative to words}, which yields:
\scriptsize
\begin{align*}
\BSucc{\nu_i} \BNum{\left(\sum_{j=0}^{m-(i+1)} 2^{m-(i+1)-j} \nu_{m-j}\right)}
\red^+
\BNum{2\left(\sum_{j=0}^{m-(i+1)} 2^{m-(i+1)-j} \nu_{m-j}\right)+\nu_i}
=
\BNum{\sum_{j=0}^{m-i} 2^{m-i-j} \nu_{m-j}}
\enspace ,
\end{align*}
\normalsize
imply that \eqref{Dyn-iter-part-a-02} rewrites to:
\small
\[
\lan\!\lan
 r[m-1,a,n_1,\ldots,n_{\Intg{n}},s_1,\ldots,s_{\Intg{s}}]
\begin{array}[t]{l}
,\left[\BNum{\sum_{j=0}^{m-i} 2^{m-i-j} \nu_{m-j}}\right]^{k-(m-i-1)-1-1}
\\
,[\BNum{n_{1}}]^{k-(m-i-1)-1-1}
,\ldots
,[\BNum{n_{\Intg{n}}}]^{k-(m-i-1)-1-1}
\\
,[\BNum{s_{1}}]^{k-(m-i-1)-1-1}
,\ldots
,[\BNum{s_{\Intg{s}}}]^{k-(m-i-1)-1-1}
\ran\!\ran
\end{array}
\]
\normalsize
where $k-(m-i-1)-1-1$ is exactly $k-(m-i)-1$.
%%%%%%%%%%%---------------
\paragraph*{Proof of Proposition~\ref{proposition:Typing the composition}}
\small
\begin{align*}
&
\comp{\Intg{n}}
     {\sum_{i=1}^{\Intg{s}'} \Intg{s}_i}
     {\Intg{n}'}
     {\Intg{s}'}
     {F,G_1,\ldots,G_{\Intg{n}'},H_1,\ldots,H_{\Intg{s}'}}
	\BNum{n_{1}}\,\ldots\,\BNum{n_{\Intg{n}}}\,
	\BNum{s_{11}}\,\ldots\,\BNum{s_{1\Intg{s}_{1}}}
	\ldots\ldots
        \BNum{s_{\Intg{s}'1}}\,\ldots\,\BNum{s_{\Intg{s}'s_{\Intg{s}'}}}
\\
&
\red^{+}
\EEmbed{2}
       {0}{\Intg{n}+\sum_{i=1}^{\Intg{s}'} \Intg{s}_i}
       {G}
(\LEmbed{1}
        {1}
        {\nabla^1_{\Intg{n}'+\Intg{s}'}}\,\BNum{n_{1}})
\ldots
(\LEmbed{1}
        {1}
        {\nabla^1_{\Intg{n}'+\Intg{s}'}}\,\BNum{n_{\Intg{n}}})
\BNum{s_{11}}
\ldots
\BNum{s_{1\Intg{s}_1}}
\ldots\ldots
\BNum{s_{\Intg{s}'1}}
\ldots
\BNum{s_{\Intg{s}'\Intg{s}_{\Intg{s}'}}}
\\
&\red^{+}
\EEmbed{2}
       {0}{\Intg{n}+\sum_{i=1}^{\Intg{s}'} \Intg{s}_i}
       {G}
(\nabla^1_{\Intg{n}'+\Intg{s}'}\,\BNum{n_{1}})
\ldots
(\nabla^1_{\Intg{n}'+\Intg{s}'}\,\BNum{n_{\Intg{n}}})
\,
\BNum{s_{11}}
\ldots
\BNum{s_{1\Intg{s}_1}}
\ldots\ldots
\BNum{s_{\Intg{s}'1}}
\ldots
\BNum{s_{\Intg{s}'\Intg{s}_{\Intg{s}'}}}
\\
&\red^{+}
\EEmbed{2}
       {0}{\Intg{n}+\sum_{i=1}^{\Intg{s}'} \Intg{s}_i}
       {G}
 \elan 
 \underbrace{\BNum{n_{1}} 
             \ldots 
             \BNum{n_{1}}}_{\Intg{n}'+\Intg{s}'}
 \eran
\ldots
 \elan 
 \underbrace{\BNum{n_{\Intg{n}}} 
             \ldots 
             \BNum{n_{\Intg{n}}}}_{\Intg{n}'+\Intg{s}'}
 \eran
\,
\BNum{s_{11}}
\ldots
\BNum{s_{1\Intg{s}_1}}
\ldots\ldots
\BNum{s_{\Intg{s}'1}}
\ldots
\BNum{s_{\Intg{s}'\Intg{s}_{\Intg{s}'}}}
\\
&\red^{+}
G\,
 \elan 
 \BNum{n_{1}}\ldots\BNum{n_{1}}
 \eran
\ldots
 \elan 
 \BNum{n_{\Intg{n}}} \ldots \BNum{n_{\Intg{n}}}
 \eran
\,
\BNum{s_{11}}
\ldots
\BNum{s_{1\Intg{s}_1}}
\ldots\ldots
\BNum{s_{\Intg{s}'1}}
\ldots
\BNum{s_{\Intg{s}'\Intg{s}_{\Intg{s}'}}}
\\
&\red^{+}
   \EEmbed{m-1}{0}{\Intg{n}'+\Intg{s}'}{F}\,
\\&\qquad
   (G_1\, \BNum{n_{1}}\ldots\BNum{n_{\Intg{n}}})
   \ldots
   (G_{\Intg{n}'}\, \BNum{n_{1}}\ldots\BNum{n_{\Intg{n}}})
\\&\qquad
   (\EEmbed{m-1}{0}{\Intg{n}+\Intg{s}_1}{H_{1}}\,
     (\LEmbed{1}{1}{\Coerc^{m-1}}\, \BNum{n_{1}})
     \ldots 
   (\LEmbed{1}{1}{\Coerc^{m-1}}\,
      \BNum{n_{\Intg{n}}})\BNum{s_{11}}\ldots \BNum{s_{1\Intg{s}_1}})
\\&\qquad
\nonumber
    \ldots
   (\EEmbed{m-1}{0}{\Intg{n}+\Intg{s}_1}{H_{\Intg{s}'}}\,
     (\LEmbed{1}{1}{\Coerc^{m-1}}\, \BNum{n_{1}})
     \ldots 
   (\LEmbed{1}{1}{\Coerc^{m-1}}\,\BNum{n_{\Intg{n}}})
      \BNum{s_{\Intg{s}'1}}
      \ldots 
      \BNum{s_{\Intg{s}'\Intg{s}_{\Intg{s}'}}})
\\
&\red^{+}
\EEmbed{m-1}{0}{\Intg{n}'+\Intg{s}'}{F}\,
\BNum{g_1}\,\ldots\,\BNum{g_{\Intg{n}'}}\,
\BNum{h_1}\,\ldots\,\BNum{h_{\Intg{s}'}}
\red^{+}\BNum{f}
\qquad\qquad\qquad\qquad\text{(by the assumptions)}
\end{align*}
\normalsize
%%%%%%%%%%%---------------
\paragraph*{Proof of Proposition~\ref{proposition:Dynamics of the composition}}
To show
$\emptyset;\emptyset;\emptyset\vdash
\comp{\Intg{n}}
{\sum_{i=1}^{\Intg{s}'} \Intg{s}_i}
{\Intg{n}'}
{\Intg{s}'}
{F,G_1,\ldots,G_{\Intg{n}'},H_1,\ldots,H_{\Intg{s}'}}
\!:\!
(\liv^{\Intg{n}}_{i=1}\$\BIntT)\liv
$ \\
$(\liv^{\sum_{i=1}^{\Intg{s}'} \Intg{s}_i}_{i=1}\$^{2m+1}\BIntT)\liv
\$^{2m+1}\BIntT$ derive and suitably compose the following judgments:
\small
\begin{align*}
&
\emptyset;\emptyset;
\{(\ta{n_i}{\BIntT};\emptyset)\}
\vdash
\ta{
\LEmbed{1}
       {1}
{\nabla^{1}_{\Intg{n}'+\Intg{s}'}}
\,n_i
}{\$^{2}\left(\bigodot_{i=1}^{\Intg{n}'+\Intg{s}'}\$\BIntT\right)}
\qquad (1\leq i\leq \Intg{n})
\\
&
\emptyset;\emptyset;
\{(\ta{y_{ij}}{\BIntT};\emptyset)\}
\vdash
\ta{
\LEmbed{1}{1}{\Coerc^{m-1}}\, y_{ij}
}
{\$^{m}\BIntT}
\qquad(1\leq i\leq \Intg{s}', 1\leq j\leq \Intg{n})
\\
&
\emptyset;\emptyset;
\{(\ta{w_{ik}}{\$^{2m-2}\BIntT};\emptyset)\}
\vdash\ta{w_{ik}}{\$^{2m-1}\BIntT}
\qquad(1\leq i\leq \Intg{s}', 1\leq k\leq \Intg{s}_i)
\\
&
\emptyset;\emptyset;
\{(\ta{x_{ik}}{\BIntT};\emptyset)\}
\vdash\ta{x_{ik}}{\$\BIntT}
\qquad(1\leq i\leq \Intg{n}', 1\leq k\leq \Intg{n})
\\
&
\emptyset;\emptyset;
\{(
\begin{array}[t]{l}
 \ta{y_{i1}}{\BIntT}
,\ldots,
 \ta{y_{i\Intg{n}}}{\BIntT},
\\
 \ta{w_{i1}}{\$^{2m-2}\BIntT}
,\ldots,
 \ta{w_{i\Intg{s}_i}}{\$^{2m-2}\BIntT}
;\emptyset)\}
\\
\vdash
\EEmbed{m-1}{0}{\Intg{n}+\Intg{s}_i}{H_{i}}\,
(\LEmbed{1}{1}{\Coerc^{m-1}}\, y_{i1})
\ldots
(\LEmbed{1}{1}{\Coerc^{m-1}}\,y_{i\Intg{n}})\,
\\
\phantom{\vdash\EEmbed{m-1}{0}{\Intg{n}+\Intg{s}_i}{H_{i}}\,}
w_{i1}\ldots w_{i\Intg{s}_i}
\!:\!
\$^{2m-1}\BIntT
\end{array}
\qquad(1\leq i\leq \Intg{s}')
\\
&
\emptyset;\emptyset;
\{(
 \ta{x_{i1}}{\BIntT}
,\ldots,
 \ta{x_{i\Intg{n}}}{\BIntT}
;\emptyset)\}
\vdash
\ta{G_i\, x_{i1} \ldots x_{i\Intg{n}}}{\$^{m}\BIntT}
\qquad(1\leq i\leq \Intg{n}')
\\
&
\emptyset;\emptyset;
\{(
 \begin{array}[t]{l}
 \ta{x_{11}}{\BIntT},\ldots,\ta{x_{1\Intg{n}}}{\BIntT}
 \ldots\ldots
 \ta{x_{\Intg{n}'1}}{\BIntT},\ldots,\ta{x_{\Intg{n}'\Intg{n}}}{\BIntT}
 \\
 \ta{y_{11}}{\BIntT},\ldots,\ta{y_{1\Intg{n}}}{\BIntT},
 \ta{w_{11}}{\$^{2m-2}\BIntT},\ldots,\ta{w_{1\Intg{s}_1}}{\$^{2m-2}\BIntT}
 \\
 \ldots
 \\
 \ta{y_{\Intg{s}'1}}{\BIntT},\ldots,\ta{y_{\Intg{s}'\Intg{n}}}{\BIntT},
 \ta{w_{\Intg{s}'1}}{\$^{2m-2}\BIntT},
 \ldots,
 \ta{w_{\Intg{s}'\Intg{s}_{\Intg{s}'}}}{\$^{2m-2}\BIntT}
;\emptyset)
\}
\vdash
\ta{H}{\$^{2m-1}\BIntT}
\end{array}
\end{align*}
\normalsize
$H$ being the term in the definition of the composition of Section~\ref{subsubsection:Composition}
%%%%%%%%%%%---------------
\paragraph*{Proof of Theorem~\ref{theorem:QlSRN is a subsystems of WALT}, point~\ref{theo:realizPRNQ3}}
We proceed by induction on the structure of the \textit{closed} term. The base case coincides with $t=0$. We focus on the inductive cases only,
\par
\textbf{First case.}
Let $t$ be $f(t_1,\ldots,t_k,u_1,\ldots,u_l)$.
Then,
$\emptyset;\emptyset;\emptyset\vdash
 \ta{\srtw{t}{}}
    {\$^{v}\BIntT}$ with 
$v=\max\{u-1+m,q_1,\ldots,q_l\}$, and
$u=\max\{m,p_1,\ldots,p_k\}$, for some 
$m,p_1,\ldots,p_k,q_1,$ \\ $\ldots,q_l$.
If we develop the relations between $u$ and $v$, we get
$v=\max\{\max\{m,p_1,$ \\ $\ldots,p_k\}-1+m,q_1,\ldots,q_l\}
  =\max\{2m-1,p_1-1+m,\ldots,p_k-1+m,q_1,\ldots,q_l\}
  \leq\max\{2m,2p_1,\ldots,2p_k,q_1,\ldots,q_l\}$.
Since, by induction, $m\leq\wght{f}{}$, $p_i\leq\wght{t_i}{}$ with $1\leq i\leq k$, and $q_j\leq\wght{u_j}{}$ with $1\leq j\leq l$, we get
$\max\{2m,2p_1,\ldots,2p_k,q_1,$ \\ $\ldots,q_l\}
\leq 2\max\{\wght{f}{},\wght{t_1}{},\ldots,\wght{t_k}{},\wght{u_1}{},\ldots,\wght{u_l}{}\}$ which is exactly $\wght{f(t_1,\ldots,t_k,u_1,\ldots,u_l)}{}$.
\par
\textbf{Second case.}
Let $t$ be 
$\comp{k}
      {\sum_{i=1}^{l'}l_i}
      {k'}
      {l'}
      {f,g_1,\ldots,g_{k'},h_1,\ldots,h_{l'}}$.
Then,
$\emptyset;\emptyset;\emptyset\vdash
 \ta{\srtw{t}{}}
    {\$^{2p+1}\BIntT}$ with 
$p=\max\{m,m_1,\ldots,$ \\ $m_{k'}, n_1,\ldots,n_{l'}\}$, for some 
$m,m_1,\ldots,m_{k'},n_1,\ldots,n_{l'}$.
We have
$2p+1
=2\max\{m,$ $m_1,\ldots,m_{k'},n_1,\ldots,n_{l'}\}+1
\leq 3\max\{m,m_1,\ldots,m_{k'},n_1,\ldots,n_{l'}\}$.
The induction $m\leq\wght{f}{}$, $m_i\leq\wght{g_i}{}$ with $1\leq i\leq k'$, and $n_j\leq\wght{h_j}{}$ with $1\leq j\leq l'$, implies
$3\max\{m,m_1,\ldots,m_{k'},n_1,\ldots,n_{l'}\}
\leq 3\max\{\wght{f}{},\wght{g_1}{},\ldots,\wght{g_{k'}}{},$ \\ $
                       \wght{h_1}{},\ldots,\wght{h_{l'}}{}\}
=
\wght{\comp{k}
       {\sum_{i=1}^{l'}l_i}
       {k'}
       {l'}
       {f,g_1,\ldots,g_{k'},h_1,\ldots,h_{l'}}}{}$.
\par
\textbf{Third case.}
Let $t$ be $\rec{k+1}{l}{g,h_0,h_1}$.
Then,
$\emptyset;\emptyset;\emptyset\vdash
 \ta{\srtw{t}{}}
    {\$^{p+1}\BIntT}$ with 
$p=\max\{m_0,$ \\ $m_1,m_{2}\}$, for some $m_0,m_1,,m_2$.
We have
$p+1
=\max\{m_0+1,m_1+1,m_2+1\}
\leq 2\max\{m_0,m_1,m_2\}$.
The induction $m_0\leq\wght{g}{}$, $m_i\leq\wght{h_i}{}$ with $0\leq i\leq 1$,
implies
$2\max\{m_0,m_1,m_2\}
\leq 2\max\{\wght{g}{},\wght{h_0}{},\wght{h_1}{}\}
=$ \\ $\wght{\rec{k+1}{l}{g,h_0,h_1}}{}$.
%%%%%%%%%%%---------------
\paragraph*{Proof of Theorem~\ref{theorem:QlSRN is a subsystems of WALT}, point~\ref{theo:realizPRNQ4}}
By induction on the structure of $n$.
If $n=0$, then, by definition,
$\srtw{0}{}
=\LEmbed{1}{0}{\BNum{0}}$, which is $\BNum{0}$. Otherwise,
let $n\geq 0$. Then, $n$ can be written as 
$\sum^{m}_{j=0}2^{m-j} \nu_{m-j}
=2(\sum^{m-1}_{j=0}2^{m-j-1}\nu_{m-j})+\nu_{0}
=2m+\nu_{0}$, for some $m$.
Then, by the inductive hypothesis, we have
$\srtw{n}{}
=\srtw{\mathtt{s}_{\nu_0}(\ldots(\mathtt{s}_{\nu_{m-1}}(\Suco\,\zero{0}{0}))\ldots)}{}
=$ \\ $\etp{\mathtt{s}_{\nu_0}}
 (\srtw{\mathtt{s}_{\nu_1}(\ldots(\mathtt{s}_{\nu_{m-1}}(\Suco\,\zero{0}{0}))\ldots)}{})
\red^{*}\etp{\mathtt{s}_{\nu_0}}\,\BNum{m}
=\BSucc{\nu_0}\,\BNum{m}\red^{+}\BNum{2m+\nu_0}=\BNum{n}
$.
%%%%%%%%%%---------------
\paragraph*{Proof of Theorem~\ref{theorem:QlSRN is a subsystems of WALT}, point~\ref{theo:realizPRNQ5}}
By structural induction on the structure of $f$. We develop the details in the case $f$ be a recursive scheme, the most interesting one. So, let $f$ be $\rec{k+1}{l}{g,h_0,h_1}$. We have to consider two cases. 
\par
\textbf{First case.} It has
$\rec{k+1}{l}{g,h_0,h_1}(n_0,\seq{n}{1}{k},\seq{s}{1}{l})=n$ with $n_0=0$, for some $n$.
By definition
\\
$\rec{k+1}{l}{g,h_0,h_1}(n_0,\seq{n}{1}{k},\seq{s}{1}{l})
= g(\seq{n}{1}{k},\seq{s}{1}{l})=n$, which, by induction, implies
$\srtw{g(\seq{n}{1}{k},\seq{s}{1}{l})}{}\red^+\BNum{n}$.
The statement we need to prove is
$\srtw{\rec{k+1}{l}{g,h_0,h_1}(n_0,\seq{n}{1}{k},\seq{s}{1}{l})}{}\red^{+}\BNum{n}$.
By definition:
\small
\begin{align}
\nonumber
&
\srtw{\rec{k+1}{l}{g,h_0,h_1}(n_0,\seq{n}{1}{k},\seq{s}{1}{l})}{}
\\
\nonumber
&=
\mathtt{Ev}^{v-u+1-p+4}_{0;l}
       [\begin{array}[t]{l}
        \EEmbed{u-1}{0}{k+l}{\etp{\rec{k+1}{l}{g,h_0,h_1}}{}}\\
        (\LEmbed{u-p_{0}}{0}{\srtw{0}{}})
        (\LEmbed{u-p_{1}}{0}{\srtw{n_1}{}})
        \ldots
        (\LEmbed{u-p_{k}}{0}{\srtw{n_k}{}})
       \end{array}
\\
&
\phantom{=\mathtt{Ev}^{v-u+1-p+4}_{0;l}}
\ 
](\LEmbed{v-q_{1}}{0}{\srtw{s_1}{}})
 \ldots
 (\LEmbed{v-q_{l}}{0}{\srtw{s_l}{}})
\label{align:sim001}
\end{align}
\normalsize
where
$\emptyset;\emptyset;\emptyset
\vdash\ta{\etp{\rec{k+1}{l}{g,h_0,h_1}}}
{\$\BIntT\liv(\liv_{i=1}^{k}\$\BIntT)\liv(\liv_{j=1}^{l}\$^{p+4}\BIntT)\liv\$^{p+4}\BIntT}
$,
$\emptyset;\emptyset;\emptyset
\vdash\ta{\srtw{0}{}}{\$^{p_0}\BIntT}
$,
$\emptyset;\emptyset;\emptyset
\vdash\ta{\srtw{n_i}{}}{\$^{p_i}\BIntT}
$, with $1\leq i\leq k$,
$\emptyset;\emptyset;\emptyset
\vdash\ta{\srtw{s_j}{}}{\$^{q_j}\BIntT}
$, with $1\leq j\leq l$, 
$v=\max\{u-1+p+4,q_1,\ldots,q_l\}$, and
$u=\max\{p+4,p_1,\ldots,p_k\}$.
By the definition of linear embedding, and point~\ref{theo:realizPRNQ4} of Theorem~\ref{theorem:QlSRN is a subsystems of WALT}:
\small
\begin{align}
\nonumber
\eqref{align:sim001}
&\red^{+}
\EEmbed{v-u+1-p+4}
       {0}
       {l}
       {
        \EEmbed{u-1}{0}{k+l}{\etp{\rec{k+1}{l}{g,h_0,h_1}}{}}
        \, \BNum{0} \,\BNum{n_1} \ldots \BNum{n_k}
       }
       \,\BNum{s_1} \ldots \BNum{s_l}
\\
&
\equiv
\EEmbed{v-u+1-p+4}
       {0}
       {l}
       {
        \EEmbed{u-1}
               {0}
               {k+l}
               {
                 \Iter{1+\Intg{k}}{l}{\etp{H_0}}{\etp{H_1}}{\etp{G}}
               }
        \, \BNum{0} \,\BNum{n_1} \ldots \BNum{n_k}
       }
       \,\BNum{s_1} \ldots \BNum{s_l}
\label{align:sim002}
\end{align}
where
$G\equiv\EEmbed{p-m_g}
               {k+1}{l+1}
               {\bs n_0\ldots n_k\,s_1\ldots s_l\,r.
                \etp{g}\,n_1\ldots n_k\,s_1\ldots s_l}$,
$\emptyset;\emptyset;\emptyset
\vdash\ta{\etp{g}}
{(\liv_{i=1}^{k}\$\BIntT)\liv(\liv_{j=1}^{l}\$^{m_g}\BIntT)\liv\$^{m_g}\BIntT}
$,
$H_i\equiv\EEmbed{p-m_i}
                 {k+1}{l+1}
                 {\etp{h_i}}$,
$\emptyset;\emptyset;\emptyset
\vdash\ta{\etp{h_i}}
{\$\BIntT\liv(\liv_{i=1}^{k}\$\BIntT)\liv(\liv_{j=1}^{l}\$^{m_i}\BIntT)\liv\$^{m_i}\BIntT}
$,
and
$p=\max\{m_g,m_0,m_1\}$, with $i\in\{0,1\}$.
\par
Using the definition of the embeddings and point~\ref{proposition:Dynamics of the iterator-part-b} of Proposition~\ref{proposition:Dynamics of the iterator}, we have that $\eqref{align:sim002}\red^{+}\BNum{a}$ if
$G \BNum{0}\,\BNum{n_1}\ldots\BNum{n_k}\,\BNum{s_1}\ldots \BNum{s_l}\,\BNum{0}\red^{+}\BNum{a}$, for some $a$.
We observe that, by the definition of $G$, we have 
$G \BNum{0}\,\BNum{n_1}\ldots\BNum{n_k}\,\BNum{s_1}$ $\ldots \BNum{s_l}\,\BNum{0}\red^{+}
\etp{g} \BNum{0}\,\BNum{n_1}\ldots\BNum{n_k}\,\BNum{s_1}\ldots \BNum{s_l}\,\BNum{0}
$, so we are left to prove
$\etp{g} \BNum{0}\,\BNum{n_1}\ldots\BNum{n_k}\,\BNum{s_1}\ldots \BNum{s_l}\,\BNum{0}
\red^{+}\BNum{a}$, for some $a$, and that $a$, in fact, is $n$.
\par
To prove this, we start by the induction. It implies
$\srtw{g(\seq{n}{1}{k},\seq{s}{1}{l})}{}\red^{+}\BNum{n}$. Then, we observe that, by the definition of the embeddings, and thanks to Point~\ref{theo:realizPRNQ4} of Theorem~\ref{theorem:QlSRN is a subsystems of WALT}:
\begin{align}
\nonumber
\srtw{g(\seq{n}{1}{k},\seq{s}{1}{l})}{}
&=
\mathtt{Ev}^{v'-u'+1-m_g}_{0;l}
       [\EEmbed{u'-1}{0}{k+l}{\etp{g}}
        (\LEmbed{u'-p_{1}}{0}{\srtw{n_1}{}})
        \ldots
        (\LEmbed{u'-p_{k}}{0}{\srtw{n_k}{}})
\\
&
\phantom{=\mathtt{Ev}^{v-u+1-p+4}_{0;l}}
\quad
](\LEmbed{v'-q_{1}}{0}{\srtw{s_1}{}})
 \ldots
 (\LEmbed{v'-q_{l}}{0}{\srtw{s_l}{}})
\nonumber
\\
&\equiv
\mathtt{Ev}^{v'-u'+1-m_g}_{0;l}
       [\EEmbed{u'-1}{0}{k+l}{\etp{g}}
        \srtw{n_1}{}
        \ldots
        \srtw{n_k}{}
       ]\srtw{s_1}{}\ldots\srtw{s_l}{}
\nonumber
\\
&\red^{+}
\mathtt{Ev}^{v'-u'+1-m_g}_{0;l}
       [\EEmbed{u'-1}{0}{k+l}{\etp{g}}
        \BNum{n_1}
        \ldots
        \BNum{n_k}
       ]\BNum{s_1}\ldots\BNum{s_l}
\nonumber
\\
&\red^{+}
\etp{g}
\BNum{n_1}
\ldots
\BNum{n_k}\,
\BNum{s_1}\ldots\BNum{s_l}
\label{align:sim003}
\end{align}
The sequences of rewritings that lead to \eqref{align:sim003} is unique up to the ordering in which we obtain every of the arguments $\BNum{n_1},\ldots,\BNum{n_k},
\BNum{s_1},\ldots,\BNum{s_l}$.
So, thanks to $\srtw{g(\seq{n}{1}{k},\seq{s}{1}{l})}{}\red^{+}\BNum{n}$
it must be $\eqref{align:sim003}\red^{+}\BNum{n}$ showing that $a$ is, in fact, $n$.
\par
\textbf{Second case.} 
It has
$\rec{k+1}{l}{g,h_0,h_1}(n_0,\seq{n}{1}{k},\seq{s}{1}{l})=n$ with $n_0=\sum_{j=0}^{m}2^{m-j}\nu_{m-j}>0$, for some $n$.
By definition:
\small
\begin{align*}
&\rec{k+1}{l}{g,h_0,h_1}(\sum_{j=0}^{m}2^{m-j}\nu_{m-j},\seq{n}{1}{k},\seq{s}{1}{l})
\\
&=
h_{\nu_{0}}
(\sum_{j=0}^{m-1}2^{m-1-j}\nu_{m-j},
 \seq{n}{1}{k},
 \seq{s}{1}{l},
 h_{\nu_{1}}
 (\sum_{j=0}^{m-2}2^{m-2-j}\nu_{m-j},
  \seq{n}{1}{k},
  \seq{s}{1}{l},\ldots
\\
&
\qquad\qquad
  \ldots,
   h_{\nu_{m}}
   (0,
    \seq{n}{1}{k},
    \seq{s}{1}{l},
    g(0,\seq{n}{1}{k},\seq{s}{1}{l})
   )
 )
 \ldots
 )
)
\\
&=
h_{\nu_{0}}
(\sum_{j=0}^{m-1}2^{m-1-j}\nu_{m-j},
 \seq{n}{1}{k},
 \seq{s}{1}{l},
 h_{\nu_{1}}
 (\sum_{j=0}^{m-2}2^{m-2-j}\nu_{m-j},
  \seq{n}{1}{k},
  \seq{s}{1}{l},\ldots
\\
&
\qquad\qquad
  \ldots,
   h_{\nu_{m}}
   (0,
    \seq{n}{1}{k},
    \seq{s}{1}{l},
    a
   )
 )
 \ldots
 )
)
\\
&=
h_{\nu_{0}}
(\sum_{j=0}^{m-1}2^{m-1-j}\nu_{m-j},
 \seq{n}{1}{k},
 \seq{s}{1}{l},
 h_{\nu_{1}}
 (\sum_{j=0}^{m-2}2^{m-2-j}\nu_{m-j},
  \seq{n}{1}{k},
  \seq{s}{1}{l},\ldots,
  v_m
 )
 \ldots
 )
)
\\
&=
h_{\nu_{0}}
(\sum_{j=0}^{m-1}2^{m-1-j}\nu_{m-j},
 \seq{n}{1}{k},
 \seq{s}{1}{l},
 v_{1}
) = v_0
\end{align*}
\normalsize
for some sequence of natural numbers $n=v_0,v_1,\ldots,v_m,a$.
\par
Since $h_{\nu_i}\in\{h_0,h_1\}$, for every $1\leq i\leq m$, we can apply the induction to
$h_{\nu_m}(0,\seq{n}{1}{k},\seq{s}{1}{l},a)=v_m$, which implies:
\small
\begin{align}
\label{align:sim004}
\srtw{h_{\nu_m}(0,\seq{n}{1}{k},\seq{s}{1}{l},a)}{}\red^{+}\BNum{v_m}
\enspace ,
\end{align}
\normalsize
to
$h_{\nu_i}(\sum_{j=0}^{m-(i+1)}2^{m-(i+1)-j}\nu_{m-j},
           \seq{n}{1}{k},
           \seq{s}{1}{l},v_{i+1})=v_i$, for every $0\leq i\leq m-1$,
which implies:
\small
\begin{align}
\label{align:sim005}
\srtw{h_{\nu_{i}}(\sum_{j=0}^{m-(i+1)}2^{m-(i+1)-j}\nu_{m-j},
                  \seq{n}{1}{k},
                  \seq{s}{1}{l},v_{i+1})}{}
\red^{+}\BNum{v_i}
\enspace ,
\end{align}
\normalsize
and, finally, to
$g(\seq{n}{1}{k},\seq{s}{1}{l})=a$, which implies:
\small
\begin{align}
\label{align:sim006}
\srtw{g(\seq{n}{1}{k},\seq{s}{1}{l})}{}
\red^{+}\BNum{a}
\enspace .
\end{align}
\normalsize
The statement we need to prove is
$\srtw{\rec{k+1}{l}{g,h_0,h_1}
(n_0,\seq{n}{1}{k},\seq{s}{1}{l})}{}\red^{+}\BNum{n}$
with
$n_0=\sum_{j=0}^{m}2^{m-j}\nu_{m-j}$.
\par
By the definitions and point~\ref{theo:realizPRNQ4} of Theorem~\ref{theorem:QlSRN is a subsystems of WALT}, we get:
\small
\begin{align}
\nonumber
&\srtw{\rec{k+1}{l}{g,h_0,h_1}(\seq{n}{0}{k},\seq{s}{1}{l})}{}
\\
&
\red^{+}
\Iter{1+\Intg{k}}
     {l}
     {\etp{H_0}}
     {\etp{H_1}}
     {\etp{G}}
     \BNum{n_0}\,
     \BNum{n_1} \ldots \BNum{n_k}\,
     \BNum{s_1} \ldots \BNum{s_l}
\label{align:sim007}
\end{align}
\normalsize
where
$G\equiv\EEmbed{p-m_g}
               {k+1}{l+1}
               {\bs n_0\ldots n_k\,s_1\ldots s_l\,r.
                \etp{g}\,n_1\ldots n_k\,s_1\ldots s_l}$,
$\emptyset;\emptyset;\emptyset
\vdash\ta{\etp{g}}
{(\liv_{i=1}^{k}\$\BIntT)\liv(\liv_{j=1}^{l}\$^{m_g}\BIntT)\liv\$^{m_g}\BIntT}
$,
$H_i\equiv\EEmbed{p-m_i}
                 {k+1}{l+1}
                 {\etp{f_i}}$,
$\emptyset;\emptyset;\emptyset
\vdash\ta{\etp{f_i}}
{\$\BIntT\liv(\liv_{i=1}^{k}\$\BIntT)\liv(\liv_{j=1}^{l}\$^{m_i}\BIntT)\liv\$^{m_i}\BIntT}
$,
and
$p=\max\{m_g,m_0,m_1\}$, with $i\in\{0,1\}$.
By point~\ref{proposition:Dynamics of the iterator-part-b} of Proposition~\ref{proposition:Dynamics of the iterator}, we have that $\eqref{align:sim007}\red^{+}
r[m,n,n_1,$ $\ldots,n_{k},s_1,\ldots, s_{l}]$ 
if:
\scriptsize
\begin{align*}
&
G\,\BNum{0}\,
   \BNum{n_1}\ldots\BNum{n_k}\,
   \BNum{s_1}\ldots \BNum{s_l}\,
   \BNum{0}
\red^{+}
\BNum{n_G}
\\
&
H_{\nu_m}
   \BNum{0}\,
   \BNum{n_1}\ldots\BNum{n_k}\,
   \BNum{s_1}\ldots \BNum{s_l}\,
   \BNum{n_G}
\red^{+}
r[0,n_G,n_1,\ldots,n_{k},s_1,\ldots,s_{l}]
\\
&
H_{\nu_i}
\left(\BNum{\sum_{j=0}^{m-(i+1)}2^{m-(i+1)-j}\nu_{m-j}}\right)
\BNum{n_1}\ldots\BNum{n_k}\,
\BNum{s_1}\ldots \BNum{s_l}
r[m-(i+1),n_G,n_1,\ldots,n_{k},s_1,\ldots,s_{l}]
\\&
\qquad\qquad\qquad\qquad\qquad\qquad
\qquad\qquad\qquad\qquad\qquad\qquad\qquad\qquad
\red^{+}
r[m-i,n_G,n_1,\ldots,n_{k},s_1,\ldots,s_{l}]
\enspace ,
\end{align*}
\normalsize
every
$r[m-(i+1),n_G,n_1,\ldots,n_{k},s_1,\ldots,s_{l}]$ being a word and every $\nu_i\in\{0,1\}$, with $0\leq i\leq i-1$.
So, we are left to prove:
\small
\begin{align}
\label{eqn:sim010'}
r[0,n_G,n_1,\ldots,n_{k},s_1,\ldots,s_{l}]&=\BNum{v_m}
\\
\label{eqn:sim011'}
r[m-(i+1),n_G,n_1,\ldots,n_{k},s_1,\ldots,s_{l}]&=\BNum{v_{i+1}}
\\
\label{eqn:sim012'}
n_G&=a
\end{align}
\normalsize
for every $0\leq i\leq m-1$. In fact, the here above three equations can be a consequence of proving:
\small
\begin{align}
\label{eqn:sim010}
H_{\nu_{m}}\,
  \BNum{0}\,
  \BNum{n_1}\ldots\BNum{n_k}\,
  \BNum{s_1}\ldots \BNum{s_l}\,
  \BNum{a}
&\red^{+}\BNum{v_m}
\\
\label{eqn:sim011}
H_{\nu_{i}}\,
  \BNum{\sum_{j=0}^{m-(i+1)}2^{m-(i+1)-j}\nu_{m-j}}\,
  \BNum{n_1}\ldots\BNum{n_k}\,
  \BNum{s_1}\ldots \BNum{s_l}\,
  \BNum{v_{i+1}}
&\red^{+}\BNum{v_i}
\\
\label{eqn:sim012}
G \BNum{0}\,
  \BNum{n_1}\ldots\BNum{n_k}\,
  \BNum{s_1}\ldots \BNum{s_l}\,
  \BNum{0}
&\red^{+}\BNum{a}
\enspace ,
\end{align}
\normalsize
for every $0\leq i\leq m-1$.
For proving \eqref{eqn:sim010}, \eqref{eqn:sim011}, and \eqref{eqn:sim012}, we start by the results \eqref{align:sim004}, \eqref{align:sim005}, and \eqref{align:sim006} the inductive hypothesis gives us.
\eqref{eqn:sim012} can be proved exactly as we did in the \textbf{first case} above, when $n_0=0$.
\eqref{eqn:sim011} holds by developing the definitions and observing that \eqref{align:sim004} reduces to 
$\etp{h_{\nu_{m}}}\,
  \BNum{0}\,
  \BNum{n_1}\ldots\BNum{n_k}\,
  \BNum{s_1}\ldots$ \\ $\BNum{s_l}\,
  \BNum{a}
$ in a unique way, up to the order of the evaluation of 
$\srtw{0}{}\red^{+}0,
\srtw{n_i}{}\red^{+}\BNum{n_i}$, and $\srtw{s_j}{}\red^{+}\BNum{s_j}$, for every
$1\leq i\leq k$ and $1\leq j\leq l$.
So, it must also be
$\etp{h_{\nu_{m}}}\,
  \BNum{0}\,
  \BNum{n_1}\ldots\BNum{n_k}\,
  \BNum{s_1}\ldots \BNum{s_l}\,
  \BNum{a}\red^{+}\BNum{v_m}$, getting \eqref{eqn:sim010'}.
We can proceed analogously to prove \eqref{eqn:sim011'}, for every
$0\leq i\leq m-1$.
%%%%%%%%%%---------------
\paragraph*{Proof of Corollary~\ref{corollary:The embedding of QlSRN into WALT is sound}}
By structural induction on the structure of $t$.
If $t$ is a variable we use point~\ref{theo:realizPRNQ4} of Theorem~\ref{theorem:QlSRN is a subsystems of WALT}.
Let $t$ be
$f(t_1,\ldots,t_k,u_1,\ldots,u_l)$. The assumption
$f(t_1,\ldots,t_k,u_1,\ldots,u_l)=n$ implies 
both
$t_i$ is closed, and $t_i=n_i$, for some $n_i$, 
with $0\leq i\leq k$,
and, analogously, 
both
$u_j$ is closed, and $u_j=s_j$, for some $s_j$, 
with $0\leq j\leq l$.
So, the points~\ref{theo:realizPRNQ4} and ~\ref{theo:realizPRNQ5} of Theorem~\ref{theorem:QlSRN is a subsystems of WALT} allow to write
$
\srtw{f(t_1,\ldots,t_k,u_1,\ldots,u_l)}{\rho}
\red^{*}
\etp{f}\,\BNum{n_1}\,\ldots\,\BNum{n_k}\,\BNum{s_1}\,\ldots\,\BNum{s_l}
\red^{*}
\BNum{n}$.
\end{document}